\numberwithin{equation}{section}
	\newtheorem{thm}{Theorem}[section]
	\newtheorem{lem}[thm]{Lemma}
	\newtheorem{prop}[thm]{Proposition}
	\newtheorem{cor}[thm]{Corollary}
	\theoremstyle{definition}
	\newtheorem{defin}[thm]{Definition}
    \newtheorem{stm}[thm]{Statement}
\DeclareMathOperator{\R}{\mathbb{R}}
\DeclareMathOperator{\C}{\mathbb{C}}
\DeclareMathOperator{\N}{\mathbb{N}}
\DeclareMathOperator{\Z}{\mathbb{Z}}
\DeclareMathOperator{\cs}{\mathbb{S}}
\newcommand{\mz}{\mathcal{Z}}
\newcommand{\mf}{\mathcal{F}}
\newcommand{\mN}{\mathcal{N}}
\newcommand{\ms}{\mathcal{S}}
\newcommand{\bsym}{\beta_{\text{\rm\tiny SYM}}}
\newcommand{\bgau}{\beta_{\text{\rm\tiny g}}}
\newcommand{\dd}{\mathrm{d}}
\newcommand{\tr}{\mathrm{Tr}}
\newcommand{\DR}{\mathsf{G}}
\newcommand{\ai}{\mathsf{a}}
\newcommand{\si}{\mathsf{s}}
\newcommand{\lllang}{\langle\!\!\langle}
\newcommand{\rrrang}{\rangle\!\!\rangle}
\DeclareMathOperator{\atan}{\mathrm{arctan}}
\DeclareMathOperator{\aco}{\mathrm{arccos}}
\DeclareMathOperator{\supp}{\mathrm{supp}}
\definecolor{darkred}{rgb}{0.8,0.1,0.1}
\newcommand{\symbox}{\begin{ytableau} \ & \ \end{ytableau}}
\begin{document}
\bibliographystyle{myJHEP}
\captionsetup[figure]{labelfont={bf,small},labelformat={default},labelsep=period,font=small}

\title{\textbf{\huge Explicit large \texorpdfstring{$N$}{N} von Neumann algebras from matrix models}}

\author[$\spadesuit$]{Elliott Gesteau\footnote{egesteau@caltech.edu}}
\affil[$\spadesuit$]{\small Division of Physics, Mathematics, and Astronomy \protect\\ California Institute of Technology, Pasadena, CA 91125, U.S.A. \vspace{0.2cm}}

\author[$\clubsuit$]{Leonardo Santilli\footnote{santilli@tsinghua.edu.cn}\vspace{0.35cm}}
\affil[$\clubsuit$]{\small Yau Mathematical Sciences Center\protect\\ Tsinghua University, Haidian district, Beijing, 100084, China. \vspace{0.2cm}}

\date{\hspace{8pt}}

{
	\pagenumbering{arabic}
	\setcounter{page}{1}
	\renewcommand*{\thefootnote}{\fnsymbol{footnote}}
	\maketitle
	\thispagestyle{empty}
			
	\begin{abstract}
        We construct a large family of quantum mechanical systems that give rise to an emergent type III$_1$ von Neumann algebra in the large $N$ limit. Their partition functions are matrix integrals that appear in the study of various gauge theories. We calculate the real-time, finite temperature correlation functions in these systems and show that they are described by an emergent type III$_1$ von Neumann algebra at large $N$. The spectral density underlying this algebra is computed in closed form in terms of the eigenvalue density of a discrete matrix model. Furthermore, we explain how to systematically promote these theories to systems with a Hagedorn transition, and show that a type III$_1$ algebra only emerges above the Hagedorn temperature. Finally, we empirically observe in examples a correspondence between the space of states of the quantum mechanics and Calabi--Yau manifolds.
	\end{abstract}

	\clearpage
{
\tableofcontents 
\par\noindent\rule{\linewidth}{0.4pt}
}}
\renewcommand*{\thefootnote}{\arabic{footnote}}
\setcounter{footnote}{0}
\vspace{0.8cm}

\section{Introduction}

In the past two decades, our understanding of the emergence of spacetime in quantum gravity has immensely improved, in particular in the controlled setting of the AdS/CFT correspondence. This progress is in large part due to the study of the interplay between the emergent geometry of spacetime in the bulk and the entanglement structure of the boundary theory. Important developments include the discovery of the Ryu--Takayanagi formula \cite{Ryu:2006bv} and its covariant generalization \cite{Hubeny:2007xt}, of the quantum extremal surface formula \cite{Lewkowycz:2013nqa,Engelhardt:2014gca}, of the quantum error-correcting properties of holography \cite{Almheiri:2014lwa,Harlow:2016vwg}, of the ``island" prescription \cite{Almheiri:2019hni}, and of the replica wormhole configurations \cite{Penington:2019kki,Almheiri:2019qdq}. Perhaps most famously, these developments culminated in a derivation of the Page curve for black hole evaporation \cite{Penington:2019npb,Almheiri:2019psf}.\par

This plethora of results clarifies the emergence of \textit{space} in the bulk, as they relate the entanglement entropy of subregions of the boundary theory to the area of surfaces in the bulk. In contrast, the emergence of \textit{time} in the bulk is much less explored, especially inside the black hole interior. While time outside the black hole can be directly mapped to the boundary time \cite{Jafferis:2015del}, interior time is much more mysterious, and faces all sorts of paradoxes, including puzzles related to diffeomorphism invariance \cite{PhysRev.160.1113} and an apparent incompatibility with the axioms of conformal field theory \cite{Marolf:2012xe}. Importantly, signatures of the black hole singularity remain largely elusive.\par
\medskip
A new approach to the emergence of time in the AdS/CFT correspondence was recently put forward by Leutheusser and Liu \cite{Leutheusser:2021qhd,Leutheusser:2021frk}, where they argued that the interior time could only be sharply understood in the large $N$ limit of the boundary theory. After defining a von Neumann algebra of single trace operators at $N=\infty$ in the thermofield double state $\lvert \Psi_{\beta} \rangle$, they argued that above the Hawking--Page temperature, this algebra becomes type III$_1$, and that this highly nontrivial entanglement property between the two sides of the thermofield double is related to the emergence of a new interior time.

It was conjectured in \cite{Leutheusser:2021qhd,Leutheusser:2021frk} that the sudden transition to type III$_1$ can be diagnosed by the real-time two-point function of single trace operators in the boundary theory. More precisely, given a single trace operator $\phi$:
\begin{equation}
\label{Gtbeta}
G_+(t)=\langle \Psi_{\beta} \lvert \phi (t)\phi (0) \lvert \Psi_{\beta} \rangle .
\end{equation}
The Fourier transform of \eqref{Gtbeta} satisfies 
\begin{align}
\widetilde{G}_+(\omega)=\frac{\rho(\omega)}{1-e^{-\beta\omega}},
\end{align}
where $\rho(\omega)$ is the finite temperature  K\"all\'en--Lehmann spectral function at inverse temperature $\beta$. The conjecture of Leutheusser--Liu is then that the type of the von Neumann algebra of each single trace operator at finite temperature is related to the structure of the spectral density $\rho (\omega)$: if it consists in delta-functions, then the von Neumann algebra is type I, whereas if it consists in a continuum,\footnote{Originally the support was required to be the whole real axis, but we will show that this is not necessary.} then the von Neumann algebra is type III$_1$ (and so is the von Neumann algebra of the full theory). This conjecture therefore proposes to use each single trace operator as a ``probe", which, through its interactions with the whole system, will help diagnose the emergence of spacetime. We will take a similar viewpoint in this paper.  \par
The presence of a phase transition with this sort of structural change in the spectral function is expected in holographic theories, and it is associated to the emergence of a black hole horizon in the bulk. In particular, the mixing properties associated to the presence of a sharp horizon require that the two-point functions of single trace operators must vanish at late times \cite{Festuccia:2005pi,Festuccia:2006sa}. This is related to apparent information loss at large $N$ (see \cite{Festuccia:2005pi,Festuccia:2006sa}), and is only possible if the spectral function is not discrete, and hence if there is an emergent type III$_1$ algebra in the large $N$ limit. The relation between the chaotic properties of the two-point function and the type of the underlying von Neumann algebra was recently explored in \cite{Furuya:2023fei}, which in particular used these methods to obtain the type III$_1$ property from the continuity (or measurability) of the spectral density of generalized free fields. Note that type III$_1$ is only a necessary condition for this horizon structure, but that it is not obvious it is sufficient, in particular, there exists a whole hierarchy of chaotic properties that are expected for black hole horizons and that are strictly stronger than type III$_1$ \cite{Gesteau:2023rrx,Ouseph:2023juq}. \par
In the phase in which the large $N$ gauge theory is characterized by a type III$_1$ algebra, it is argued in \cite{Leutheusser:2021qhd,Leutheusser:2021frk} that there can exist a new, emergent notion of time defined from the extension of a \textit{half-sided modular inclusion}, which is the algebraic structure encoding the presence of a horizon in the bulk at large $N$. This extension allows to take operators outside the horizon to operators in the black hole interior, and allows to potentially ask sharp questions about the emergence of the interior and of the singularity. It also gives yet another clue that interior time should be intrinsically related to the large $N$ limit of the boundary theory.\par
\medskip
The proposal in \cite{Leutheusser:2021qhd,Leutheusser:2021frk} for the emergence of type III$_1$ factors at large $N$ is holographic in essence, and relies on subregion dualities as well as on the assumption of smoothness of the dual bulk spacetime in the large $N$ limit. In particular, the type III$_1$ nature of the von Neumann algebras is argued for in a rather indirect way, by resorting to results from algebraic quantum field theory. In this way an algebraic structure of \textit{half-sided modular inclusion} emerges, and it can be shown that this structure only exists for type III$_1$ von Neumann algebras. However, it is less clear how the emergence of nontrivial algebras arises directly at the level of the string/gauge theory of interest, and how the type of the algebra can be directly inferred from a gauge theory calculation.\par
\medskip
The purpose of this paper is threefold: 
\begin{itemize}
    \item First, to explain how to explicitly construct large $N$ algebras from explicit quantum mechanical theories and rigorously derive their type from considerations related to the spectral function. In particular, a statement similar to, but slightly different from, the conjectures of \cite{Leutheusser:2021qhd,Leutheusser:2021frk}, and that agrees with the relevant results in \cite{Furuya:2023fei} in the type III$_1$ case, will be rigorously shown and applied to our examples.
    \item Second, to propose a broad class of toy models inspired from gauge/string duality, of which \cite{Iizuka:2008eb} constitutes a particularly simple instance, and apply our formalism to them to show they give rise to an emergent type III$_1$ von Neumann algebra. This part, which is the most prominent both in terms of length and of technical developments, involves mathematics related to random matrix theory and combinatorial representation theory.
    \item In addition, we further elaborate on these examples and construct a family of models that undergo a first order phase transition in the large $N$ limit. This is a general result based on the study of the partition functions of the systems, and the prescription we provide is independent of the considerations on von Neumann algebras. Combining this with the analysis of the spectral density, we show that these modified models carry an associated type III$_1$ algebra only above the critical temperature.
\end{itemize}\par
Our strategy for the first goal will be to apply standard techniques from algebraic quantum field theory to associate a special kind of von Neumann algebra to any quantum dynamical system of harmonic oscillators, under the assumption that correlation functions factorize in the large $N$ limit. The von Neumann algebras describe a generalized free field theory, can be systematically studied, and their type turns out to be classified by the spectrum of an operator closely related to the spectral density (these techniques should be compared to \cite{Furuya:2023fei}).\par
The second goal will be met thanks to a general construction of quantum mechanical systems whose partition function can be computed from matrix integrals of a kind that often appears in the string theory literature. In this simple setup, we will be able to calculate thermal correlation functions entirely explicitly. In particular, the large $N$ value of these correlation functions can be expressed in terms of the saddle point eigenvalue density of a discrete matrix ensemble. From this, we will be able to rigorously derive the type of the von Neumann algebra describing these systems at large $N$. Figure \ref{fig:basics} sketches the basic ideas.\par
Finally, we will meet the third goal by enlarging our quantum systems into a direct sum of sectors that transform under different flavor symmetry groups. This will have the effect of promoting the third order phase transition present in our initial examples to a first order one, which will be sharp enough to induce a change at the level of the large $N$ algebra depending on the temperature. This procedure interestingly parallels and revisits early calculations in gauge theory \cite{Liu:2004vy}.\par
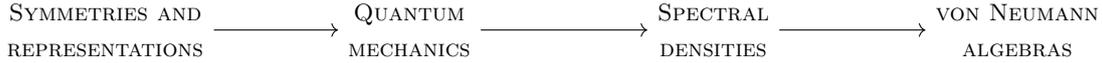
\begin{figure}[tb]
    \centering
    \begin{tikzpicture}
        \node[align=center] (b) at (-4,0) {\textsc{\footnotesize Symmetries and}\\ \textsc{\footnotesize representations}};
        \node[align=center] (c) at (0,0) {\textsc{\footnotesize Quantum}\\ \textsc{\footnotesize mechanics}};
        \node[align=center] (d) at (4,0) {\textsc{\footnotesize Spectral}\\ \textsc{\footnotesize densities}};
        \node[align=center] (e) at (8,0) {\textsc{\footnotesize von Neumann}\\ \textsc{\footnotesize algebras}};
        \draw[->] (b) -- (c);
        \draw[->] (c) -- (d);
        \draw[->] (d) -- (e);
    \end{tikzpicture}
    \caption{In this work we explore the implications of writing the finite temperature partition function of a theory in terms of representations of the global symmetries. We construct a quantum mechanics from these representations and determine the large $N$ von Neumann algebra of operators.}
    \label{fig:basics}
\end{figure}\par
\medskip

Effective descriptions of gauge/string dualities in terms of matrix models have a long history, and the incarnation we consider was pioneered in \cite{Sundborg:1999ue,Aharony:2003sx}. A particularly appealing feature of such descriptions is that matrix models often admit character expansions, which can be useful to carry calculations explicitly in a setup more akin to bulk variables. The most precise statements along these lines have been made in $\mN=4$ super-Yang--Mills theory, in the case of half-BPS states \cite{Lin:2004nb}, as well as in the more recent giant graviton expansion \cite{Gaiotto:2021xce,Lee:2022vig}. In our case, the states of the systems we will study will be packaged into such character expansions.\par

\medskip
\begin{figure}[thb]
    \centering
    \begin{tikzpicture}
        \draw[fill=gray,opacity=0.15] (-6,-2) -- (0,-2) -- (0,2) -- (-6,2) -- (-6,-2);
        \draw[fill=gray,opacity=0.5] (0,-2) -- (6,-2) -- (6,2) -- (0,2) -- (0,-2);
        \draw[black,thick,->] (-6,0) -- (6,0);
        \draw[black,thick,dashed] (0,-2) -- (0,2);
        \node[anchor=north east] at (0,0) {$T_H$};

        \node at (3,1) {\large Type III$_1$};
        \node at (-3,1) {\large Type I};
        \node at (-3,-1) {\large $\mz \sim O(1)$};
        \node at (3,-1) {\large $\mz \sim O(\exp N^2)$};

        \node[anchor=south west] at (-6,0) {\footnotesize low $T$};
        \node[anchor=south east] at (6,0) {\footnotesize high $T$};
    \end{tikzpicture}
\caption{Below the Hagedorn temperature ($T<T_H$) the partition function is finite and the algebra of single-trace operators is a type I von Neumann algebra. Above the Hagedorn temperature ($T>T_H$), the partition function diverges and the von Neumann algebra becomes type III$_1$.}
\label{fig:arrow}
\end{figure}
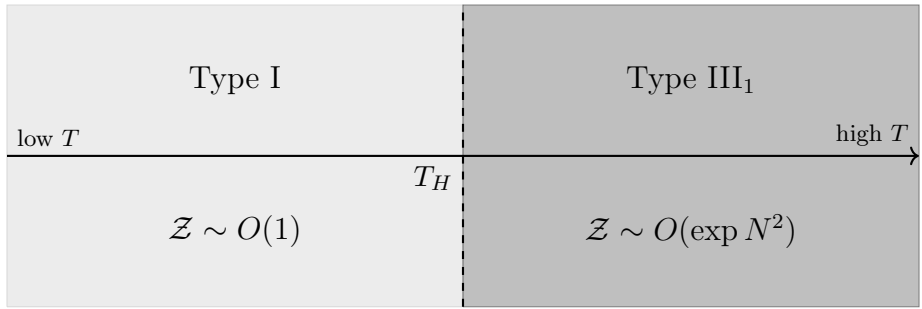\par
The main results of this work can be summarized as follows:
\begin{itemize}
   \item Building on the foundational work on operator algebras \cite{doi:10.1063/1.1704002,ArakiWyss} summarized in \cite{Derezinski} and its relation to algebraic quantum field theory (see for example \cite{Fewster:2019ixc}), we give a completely general and rigorous procedure to construct large $N$ von Neumann algebras from quantum systems satisfying large $N$ factorization, and to determine their types. In particular, we show that a more precise and entirely rigorous version of the conjecture of \cite{Leutheusser:2021qhd,Leutheusser:2021frk} relating the type of the von Neumann algebras to the support of the large $N$ spectral function directly follows from these results (see also \cite{Furuya:2023fei}).
    \item We put forward a general procedure to construct quantum systems with a third order phase transition in the large $N$ limit, and an extension of these systems to systems with a first order Hagedorn-like transition. An alternative description of these systems is given in terms of unitary matrix models, which often have a more direct gauge theory interpretation. This is ultimately a matrix model result, potentially of independent interest.
    \item For our class of models, we introduce a general notion of probe operator, analogous to a detector interacting with the gauge theory. We derive a \emph{universal relation} between the K\"all\'en--Lehmann spectral density associated to the real-time two point functions of such probes, and the eigenvalue density of an underlying matrix model. 
    \item From this K\"all\'en--Lehmann density, we apply our general method to construct an associated large $N$ von Neumann algebra. This algebra can be constructed from the correlation functions of the probe only. We exploit the relation between spectral density and eigenvalue density to understand the types of our algebras (Figure \ref{fig:arrow}).
    \item We illustrate our results by computing the partition functions and K\"all\'en--Lehmann densities entirely explicitly in selected examples inspired by various physical theories.
\end{itemize}

\subsection{Organization}

The contents of the paper are organized in two main parts plus appendices. Part \ref{part1}, which is more formal in nature, contains all general constructions, both of large $N$ algebras associated to quantum systems with large $N$ factorization, and of the quantum systems of interest in this paper. We introduce all the main theoretical results in this part. Part \ref{part2} is dedicated to explicit examples, where the machinery of Part \ref{part1} finds explicit realization. We tried to keep the two parts as independent as possible, so that the busy reader interested only in concrete calculations can directly jump to Part \ref{part2}, while the more mathematically minded reader can content themselves with the general and formal constructions of Part \ref{part1}. Figure \ref{fig:concepts} contains a sketch of the main concepts and ideas discussed in the text.\par
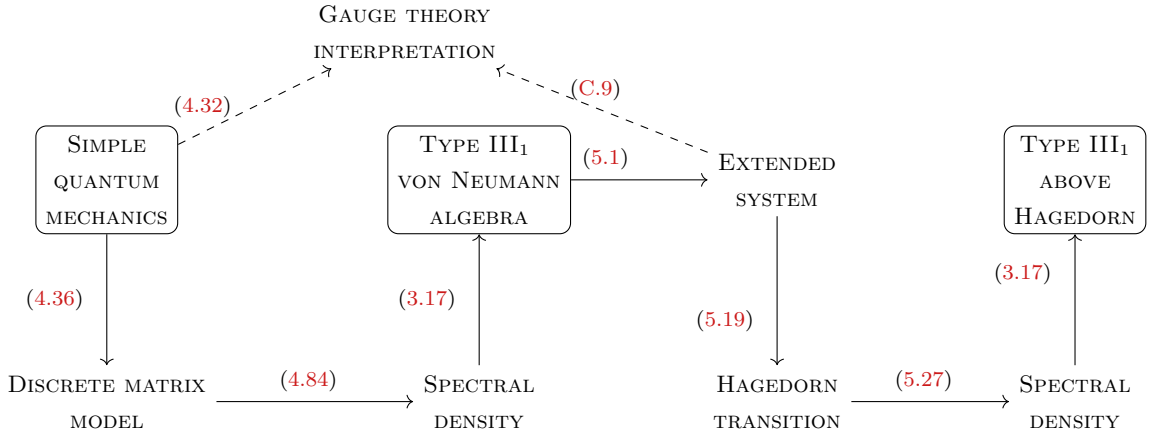
\begin{figure}[t]
\centering
\begin{tikzpicture}[scale=0.98]
\node[align=center,rectangle,rounded corners,draw] (a) at (-1,0) {\textsc{\footnotesize Simple}\\ \textsc{\footnotesize quantum}\\ \textsc{\footnotesize mechanics}};
\node[align=center] (gt) at (3,2) {\textsc{\footnotesize Gauge theory}\\ \textsc{\footnotesize interpretation}};
\node[align=center] (ex) at (8,0) {\textsc{\footnotesize Extended}\\ \textsc{\footnotesize system}};
\node[align=center] (h) at (8,-3) {\textsc{\footnotesize Hagedorn}\\ \textsc{\footnotesize transition}};
\node[align=center,rectangle,rounded corners,draw] (t) at (4,0) {\textsc{\footnotesize Type III$_1$ } \\ \textsc{\footnotesize von Neumann}\\ \textsc{\footnotesize algebra}};
\node[align=center] (s2) at (12,-3) {\textsc{\footnotesize Spectral}\\ \textsc{\footnotesize density}};
\path[->,dashed] (ex) edge node[anchor=south] {\scriptsize \eqref{eq:defHagMM} \footnotesize\hspace{2pt} } (gt);
\path[->] (t) edge node[anchor=south,pos=0.3] {\scriptsize \eqref{eq:ZFissumLUMM} \footnotesize\hspace{2pt} } (ex);
\path[->,dashed] (a) edge node[anchor=east] {\scriptsize \eqref{dictionaryrepQM} \footnotesize\hspace{2pt} } (gt);
\path[->] (ex) edge node[anchor=east,pos=0.7] {\scriptsize \eqref{eq:FermMMHagedorn} \footnotesize\hspace{2pt} } (h);
\path[->] (h) edge node[anchor=south] {\scriptsize \eqref{eq:rhofromHagedorn} \footnotesize\hspace{2pt} } (s2);
\node[align=center] (rho) at (-1,-3) {\textsc{\footnotesize Discrete matrix}\\ \textsc{\footnotesize model}};
\node[align=center] (vn) at (4,-3) {\textsc{\footnotesize Spectral} \\ \textsc{\footnotesize density}};
\path[->] (a) edge node[anchor=east] {\scriptsize \eqref{eq:genericdiscreteMM} \footnotesize\hspace{2pt} } (rho);
\path[->] (rho) edge node[anchor=south] {\scriptsize \eqref{eq:WigthmanVeneziano} \footnotesize\hspace{2pt} } (vn);
\path[->] (vn) edge node[anchor=east] {\scriptsize \eqref{eq:typefromrho} \footnotesize\hspace{2pt} } (t);
\node[align=center,rectangle,rounded corners,draw] (f) at (12,0) {\textsc{\footnotesize Type III$_1$}\\ \textsc{\footnotesize above}\\ \textsc{\footnotesize Hagedorn}};
\path[->] (s2) edge node[anchor=east,pos=0.7] {\scriptsize \eqref{eq:typefromrho} \footnotesize\hspace{2pt} } (f);
\end{tikzpicture}
\caption{The steps through which this work associates von Neumann algebras to quantum mechanical systems with large $N$ factorization.}
\label{fig:concepts}
\end{figure}

\medskip
In Section \ref{sec:reviewholo}, we begin with a brief review of the salient features of von Neumann algebras and their relevance for the emergence of interior time in holography. Our construction and results on von Neumann algebras are in Section \ref{sec:vNtot}. We explain how to construct general von Neumann algebras associated to large $N$ factorizing quantum systems, with a key role played by the K\"all\'en--Lehmann spectral density. In Section \ref{sec:QM} we change gears and introduce the class of systems we aim to study. The operators required as input for the machinery developed in Section \ref{sec:vNtot} are explicitly constructed in Subsection \ref{sec:probe}. In Subsection \ref{sec:spectral} we calculate the real-time two-point function of these operators using matrix models, and obtain the spectral density $\rho (\omega)$. This allows us to deduce the type of the associated von Neumann algebra.\par
Then, in Section \ref{sec:Fermi}, we extend the quantum systems by including a weighted sum over the flavor rank. In Subsection \ref{sec:FermionicMM} we investigate the asymptotic growth of the partition function of these extended models and argue for the presence of a large $N$, first order phase transition with Hagedorn behavior. Subsection \ref{sec:specAvg} replicates the analysis of the spectral density, and shows its different behavior on the two sides of the Hagedorn transition. Combining the three steps therein leads to a jump in the type of von Neumann algebra across the Hagedorn phase transition, which becomes type III$_1$, as indicated in Figure \ref{fig:arrow}.\par
\medskip
We then proceed to apply our construction to different examples of systems inspired by gauge/string duality, and perform explicit calculations. This is the content of Part \ref{part2}, consisting of Sections \ref{sec:variationsIOP} to \ref{sec:YM2Global}. In Section \ref{sec:variationsIOP} we consider a generalization of a model from \cite{Iizuka:2008eb} (see also \cite{Iizuka:2008hg,Michel:2016kwn}) and unveil its connection with supersymmetric QCD in 4d (SQCD$_4$) as well as with Calabi--Yau varieties. We consider a newly introduced ensemble based on a toy model of lattice QCD in 2d (QCD$_2$) \cite{Hallin:1998km} in Section \ref{sec:ExQCD2}. The third example, in Section \ref{sec:conifold}, is the generating function of Donaldson--Thomas invariants of the resolved conifold, and we comment on the geometric meaning of our construction along the way. The final example is based on the partition function of pure Yang--Mills theory in two dimensions, which also recently appeared in the description of generic holographic systems with global symmetry \cite{Kapec:2019ecr,Kang:2022orq}, as reported in Section \ref{sec:YM2Global}.\par
Section \ref{sec:discussion} contains a summary of our results and presents future research directions. The text is complemented by extensive appendices, of which \ref{sec:ccrcar} and \ref{sec:PTMM} may be of independent interest. Appendix \ref{sec:ccrcar} formally generalizes the construction of Section \ref{sec:vNtot} to any bosonic or fermionic system made of free oscillators.  We discuss an effective matrix model for four-dimensional $\mN=4$ super-Yang--Mills theory \cite{Liu:2004vy,Dutta:2007ws} in Appendix \ref{sec:effN4}, which is amenable to be analyzed by the same means as the examples in Part \ref{part2}, even though we lack a quantum mechanical interpretation. The scope of Appendix \ref{sec:effN4} is to showcase our techniques in a familiar matrix model, without attempting any claim concerning $\mathcal{N}=4$ super-Yang--Mills theory. Appendix \ref{sec:PTMM} discusses unitary matrix models at large $N$ and puts forward a general procedure to promote the characteristic third order phase transitions to first order ones. Certain long proofs are gathered in Appendix \ref{app:longproof}.\par
\medskip

\subsubsection*{Disclaimer on notation}
In order to facilitate the reading and the retrieval of the main ingredients, the fundamental steps in the derivation are framed as Theorems and Propositions. We will also informally summarize the main achievements as Statements, referring to the accompanying theorem for a precise version. Previously known results are accompanied by citation of the original works or other standard references, while propositions without citations are new. Besides, while aiming at a broad picture in Part \ref{part1}, we highlight with a symbol $\diamond$ caveats that may obstruct the direct application of certain steps to more involved physical models.\par
In order to make this work self-contained and easily reproducible for the black hole inclined audience, we sketch the proofs of a handful of statements that are likely known to the matrix model practitioner.\par

\subsection{The speed-reader's guide to the contents}
Given the length of the paper, we now point at some important parts that can be looked at independently. We also emphasize that the two parts of this work can be read largely independently, and that our examples can be understood without the general theory.

\begin{itemize}
    \item The reader interested in von Neumann algebras for systems with large $N$ factorization can directly look into Subsection \ref{sec:vNpivot}. The classification result is given in Subsection \ref{sec:vNalgebra}.
    \item For a hands-on definition of our quantum systems, we suggest looking at \eqref{eq:ZLNCharExp}, whose reinterpretation as a Hilbert space trace is explained in Subsection \ref{sec:RepsToQM}. Our definition of a probe and its interaction with the systems is in Subsection \ref{sec:probe} --- or one may jump to Corollary \ref{c:rholargeN} and Statement \ref{thm:istype3} for the properties of the large $N$ spectral density.
    \item The main technique to promote a third order transition to a first order one is explained in Subsections \ref{sec:QMLsum} and \ref{sec:SummaryFF}. The results about the corresponding new systems are summarized in Subsection \ref{sec:summary3steps}.
    \item The reader interested in concrete examples may directly look at Part \ref{part2}. Example 1 in Section \ref{sec:variationsIOP} displays all the features discussed abstractly in Part \ref{part1}.
\end{itemize}

\clearpage

\part{General theory}
\label{part1}

The goal of this first part is to introduce all our main results in the most general language possible. We will begin with a brief review of the increasingly important role von Neumann algebras have been playing in holography in recent years. We will then provide a general construction of von Neumann algebras capturing the finite temperature correlations of systems with large $N$ factorization, as well as an entirely explicit criterion to determine the type of these algebras based on the support of the spectral function. Then, we will introduce our family of toy models, their relationship to matrix integrals akin to those appearing in string theory, and a general prescription to determine the support of the spectral function of our models in terms of the asymptotic behavior of the underlying matrix integral. In particular, this formula will be enough to determine the types of the large $N$ algebras of our systems.

\section{Von Neumann algebras in holography}
\label{sec:reviewholo}

The role of von Neumann algebras in holography is ubiquitous, and has become increasingly important in the recent years (see also the earlier works \cite{Papadodimas:2012aq,Papadodimas:2013jku}). They are especially relevant to make sense of statements related to quantum error correction in AdS/CFT \cite{Kang:2018xqy,Kang:2019dfi,Gesteau:2020rtg,Faulkner:2020hzi,Gesteau:2021jzp,Chandrasekaran:2022eqq,Faulkner:2022ada,Gesteau:2023hbq}, which are the backbone of the quantum extremal surface formula \cite{Harlow:2016vwg}. They also appear in the study of the emergence of the bulk radial direction, through the notion of modular chaos \cite{Faulkner:2018faa,DeBoer:2019kdj}.\par
In this preliminary section, we will focus on one particular aspect of the appearance of von Neumann algebras in holography, that is related to the emergence of time in the black hole interior above the Hawking--Page temperature. We hence first review the particular aspects of the theory of von Neumann algebras that are relevant to this topic. Then, in Subsection \ref{sec:LLsummary}, we summarize the construction of emergent times due to Leutheusser--Liu \cite{Leutheusser:2021qhd,Leutheusser:2021frk}.

\subsection{Classification of factors}

Let us first recall the definition of a von Neumann algebra and of a factor on a Hilbert space.

\begin{defin}
Let $\mathscr{H}$ be a Hilbert space. The commutant $M^\prime$ of an algebra of bounded operators $M$ on $\mathscr{H}$ is the algebra of bounded operators on $\mathscr{H}$ that commute with all elements of $M$. The bicommutant $M^{\prime\prime}:=(M^{\prime})^{\prime}$ of $M$ is the commutant of the commutant of $M$.
\end{defin}

\begin{defin}
Let $\mathscr{H}$ be a Hilbert space. An algebra of bounded operators $M$ on $\mathscr{H}$ is said to be a von Neumann algebra if it is closed under Hermitian conjugation and is equal to its bicommutant. If the center of $M$ is reduced to multiples of the identity, one says that $M$ is a factor.
\end{defin}\par
It is an important problem to classify all von Neumann algebras up to isomorphism. As any von Neumann algebra can be written as a direct integral of factors, this problem reduces to the easier problem of classifying all \textit{factors}.\par
In general, this problem is very hard, and still wide open. However, it has been solved in a particular case by Connes \cite{Connes:1973} and Haagerup \cite{Haagerup:1987} in the 1970s, namely, the case of \textit{hyperfinite factors}, which can be approximated by finite-dimensional algebras.

\begin{defin}
A von Neumann algebra $M$ is hyperfinite if there exists an increasing sequence of finite-dimensional algebras $(M_n)_{n\in\mathbb{N}}$, such that $M$ is the weak operator closure of $\underset{n\in\mathbb{N}}{\bigcup}M_n$.
\end{defin}
It turns out that hyperfinite von Neumann algebras are most often the ones that appear in physics. In particular, under reasonable assumptions, one can show that in quantum field theory, local algebras are hyperfinite \cite{Haag_1992}.\par
The classification of von Neumann algebras relies on the structure of the set of projections inside the algebras \cite{Takesaki_1979}. This structure does not necessarily give a lot of physical insight for our purposes, and rather than studying it here, we will simply enumerate the possible types of factors and their main properties.

\begin{itemize}
    \item[I)] Type I factors are isomorphic to the algebra of operators $\mathcal{B}(\mathscr{H})$, where $\mathscr{H}$ is a separable Hilbert space. There is of course exactly one type I factor for each dimension $\dim (\mathscr{H}) = n\in\mathbb{N} \cup \left\{ \infty \right\}$. This unique factor is said to have type I$_n$.
    \item[II)] Type II factors can no longer be written in the form $\mathcal{B}(\mathscr{H})$, but they still possess a \textit{tracial weight}, namely a (potentially infinite) functional $\tau$ that satisfies \begin{equation}\tau(xy)=\tau(yx).\end{equation} This trace is \textit{not} the same as the trace in $\mathcal{B}(\mathscr{H})$, and can be thought of as a rescaled version of the trace, where an ``infinite amount of entanglement'' between the algebra and the rest of the system is subtracted \cite{Witten:2021unn}. Type II factors come in two types: type II$_1$ factors for which the trace is finite on all the algebra, and type II$_\infty$ factors, isomorphic to the tensor product of a type II$_1$ and a type I$_\infty$ factor, for which the trace is infinite on the identity. Both types are unique up to isomorphism.
    \item[III)] Type III factors are all those that remain.
\end{itemize}

The breakthrough of the work of Connes and Haagerup \cite{Connes:1973,Haagerup:1987} is the classification of type III factors in the hyperfinite case, which seemed completely out of reach beforehand. This classification was made possible by the modular theory of Tomita and Takesaki \cite{takesaki_2010}, whose main result we now briefly recall.\par

\subsubsection{Type III factors}

Let $M$ be a type III factor. To any faithful, normal state $\varphi$ on $M$ ({weak-${}^\ast$} continuous linear functional of norm one which is nonzero on nonzero positive elements), Tomita--Takesaki theory \cite{takesaki_2010} associates an unbounded self-adjoint operator\footnote{In the physics literature, $\Delta_\varphi$ corresponds to the exponential of the (full) modular Hamiltonian of the system.} $\Delta_\varphi$ such that for all $t\in\mathbb{R}$, \begin{align}\Delta_\varphi^{-it}M\Delta_\varphi^{it}=M.\end{align} This means that conjugation by imaginary powers of $\Delta_\varphi$ gives rise to a one-parameter automorphism group of $M$, called the \textit{modular automorphism group}. The state $\varphi$ satisfies the KMS condition with respect to that group, which is an infinite-dimensional version of a thermal equilibrium condition.\par
One fundamental difference between type III factors and the other factors is that in the former case, there is no way of implementing modular automorphisms of the algebra by conjugating by a unitary that pertains to the algebra. This means that modular flow is an \textit{outer} automorphism for type III factors, whereas it is inner in the other cases.\par
The way type III factors are classified then is the following: introduce the \textit{Connes invariant} \begin{align} S(M):=\underset{\varphi}{\bigcap}\;\mathrm{Sp}\;\Delta_\varphi,\end{align}where the intersection is taken over all faithful normal states $\varphi$ on $M$, and $\mathrm{Sp}$ denotes the spectrum of an unbounded operator.\par
It turns out that the Connes invariant can take just a few values for type III factors. Let $M$ be a type III factor,
\begin{itemize}
\item If $S(M)=\{0,1\}$, then $M$ is said to have type III$_0$.
\item If for $0<\lambda<1$, $S(M)=\{0\}\cup\{\lambda^n,n\in\mathbb{Z}\}$, then $M$ is said to have type III$_\lambda$.
\item If $S(M)=[0,\infty)$, then $M$ is said to have type III$_1$.\end{itemize}

From this, it is clear that the modular operator $\Delta_\varphi$ of the system has continuous spectrum for all faithful normal states on $M$ if and only if $M$ has type III$_1$. This is a property that is expected from the modular Hamiltonian of black hole states on one side of the horizon. It turns out that in the hyperfinite case, type is enough to completely classify type III factors. More precisely, there is a unique hyperfinite type III$_\lambda$ factor for $0<\lambda\leq 1$. The type III$_0$ case is more subtle.

In quantum field theory, one of the main applications of Tomita--Takesaki theory is the Bisognano--Wichmann theorem \cite{Haag_1992}, which asserts that the modular automorphisms of local algebras associated to Rindler wedges implement Lorentz boosts along the Rindler horizon. This can be seen as a statement of the Unruh effect, which asserts that the QFT vacuum is thermal with respect to Lorentz boosts. 

In fact, due to the horizon structure, the generator of null translations along a Rindler horizon can also be reconstructed from modular theory. Null translations of magnitude $a$ around the horizon are implemented by a semigroup of unitaries $U(a)$. If we denote by $\mathcal{M}$ the algebra of the wedge, and by $\mathcal{N}$ the translated algebra $\mathcal{N}:=U(1)\mathcal{M}U^\dagger(1)$, the following relations are satisfied in the vacuum state, for all $t \ge 0$:
\begin{equation}
\begin{aligned}
\Delta_\mathcal{M}^{it}\mathcal{N}\Delta_\mathcal{M}^{-it}& \subset \mathcal{N},\\
\Delta_\mathcal{M}^{it}U(a)\Delta_\mathcal{M}^{-it}&=\Delta_\mathcal{N}^{it}U(a)\Delta_\mathcal{N}^{-it}=U(e^{-2\pi t}a),\\
J_\mathcal{N}J_\mathcal{M}&=U(2),\\
\Delta_\mathcal{N}^{it}&=U(1)\Delta_\mathcal{M}^{it}U^\dagger(1).
\end{aligned}
\end{equation}\par
This structure is known in algebraic quantum field theory under the name of \textit{half-sided modular inclusion} \cite{cmp/1104253848}, and it is necessary to get a well-defined causal structure in the emergent spacetime. It turns out that under reasonable conditions on the vacuum state, the only von Neumann algebras that are consistent with the half-sided modular inclusion structure are hyperfinite factors of type III$_1$. This implies that in quantum field theory, local von Neumann algebras must have type III$_1$. This leads to the following result: 

\begin{thm}[\cite{Haag_1992}]
In a quantum field theory, the von Neumann algebras associated to Rindler wedges are type III$_1$ factors. 
\end{thm}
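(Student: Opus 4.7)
The plan is to assemble the theorem from two classical inputs: the Bisognano--Wichmann identification of the modular data of a wedge algebra with the boost subgroup, and the Borchers--Wiesbrock structural theorem characterising von Neumann algebras admitting a standard half-sided modular inclusion as type III$_1$ factors. With these two results in hand the proof reduces to a verification that the Rindler wedge setup realises the hypotheses of the latter.

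First I would fix a Rindler wedge $W$, let $\mathcal{M}$ be the algebra of observables localized in $W$, and let $\Omega$ be the Poincar\'e-invariant vacuum. Reeh--Schlieder ensures that $\Omega$ is cyclic and separating for $\mathcal{M}$, so Tomita--Takesaki theory produces the modular operator $\Delta_{\mathcal{M}}$ and the modular conjugation $J_{\mathcal{M}}$. The Bisognano--Wichmann theorem, derived from the Wightman axioms and the spectrum condition by an analytic continuation in the boost parameter, then identifies $\Delta_{\mathcal{M}}^{it}$ with the unitary implementation of the one-parameter family of boosts preserving $W$, and $J_{\mathcal{M}}$ with the CPT-like reflection mapping $W$ to its causal complement. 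This already fixes all the modular data in geometric terms.

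Next I would pick a null direction along the edge of $W$, denote by $U(a)$ the associated one-parameter translation group (whose generator is positive by the spectrum condition), and set $\mathcal{N} := U(1)\,\mathcal{M}\,U(1)^{\dagger}$. Because translating $W$ along this null direction moves the wedge into itself, isotony yields $\mathcal{N} \subseteq \mathcal{M}$, while Reeh--Schlieder applied to the translated wedge keeps $\Omega$ cyclic for $\mathcal{N}$. The dilation relation $\Delta_{\mathcal{M}}^{it}\, U(a)\, \Delta_{\mathcal{M}}^{-it} = U(e^{-2\pi t}a)$ and the companion identities displayed in the excerpt then follow from the commutators between boosts and lightlike translations together with the action of the CPT reflection. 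Altogether $(\mathcal{N} \subset \mathcal{M},\,\Omega)$ is a standard half-sided modular inclusion.

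The final and most delicate step is to invoke the Borchers--Wiesbrock theorem referenced in \cite{cmp/1104253848}: any factor equipped with a non-degenerate half-sided modular inclusion with common cyclic and separating vector must be of type III$_1$. This is where the real work lives, since one must extract $S(\mathcal{M}) = [0,\infty)$ from the algebraic relations between $\Delta_{\mathcal{M}}$, $\Delta_{\mathcal{N}}$ and $U(a)$; the mechanism is that these relations assemble into a positive-energy representation of the affine group of the real line, whose analysis forces $\log \Delta_{\mathcal{M}}$ to have full real spectrum. Non-triviality of the inclusion $\mathcal{N} \neq \mathcal{M}$, needed to rule out the degenerate case, is ensured by weak additivity, since null translations cannot act trivially on the net of local algebras without contradicting the separating property of wedge algebras. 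Granted this structural input, the Rindler wedge algebras of any Wightman QFT are type III$_1$ factors, and this is the main obstacle I would expect to confront if one wished to reprove the statement from scratch rather than cite it.
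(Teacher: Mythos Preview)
Your proposal is correct and follows exactly the argument the paper sketches in the discussion preceding the theorem: Bisognano--Wichmann identifies the modular flow with boosts, null translations along the horizon produce a half-sided modular inclusion satisfying the displayed relations, and the structural result of \cite{cmp/1104253848} then forces type III$_1$. The paper itself does not give a self-contained proof but cites \cite{Haag_1992}, so your write-up is in fact more detailed than what appears there.
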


\subsection{The Leutheusser--Liu half-sided modular inclusion}
\label{sec:LLsummary}

The recent, key insight of Leutheusser and Liu \cite{Leutheusser:2021qhd,Leutheusser:2021frk} is that a von Neumann algebra associated to a spacetime region outside of a \textit{general} horizon should also have type III$_1$, even in the case of a black hole. The reason is that there is also a half-sided modular inclusion along the black hole horizon.\par
This fact has strong consequences in holography. Consider two copies of strongly interacting $\mN=4$ super-Yang--Mills theory at large $N$, in the thermofield double state \cite{Maldacena:2001kr}. The AdS/CFT dictionary tells us that the bulk effective description of the theory is a theory of fields on an asymptotically AdS spacetime. The boundary theory admits a deconfinement phase transition at large $N$ \cite{Aharony:2003sx}, which is dual in the bulk to the Hawking--Page phase transition \cite{Witten:1998zw}, with the emergence of a wormhole and a horizon on both sides of the thermofield double \cite{Maldacena:2001kr}. For extensive discussion on Hagedorn, deconfinement and Hawking--Page transitions, see \cite{Aharony:2003sx,Alvarez-Gaume:2005dvb}.\par
The operator algebraic interpretation put forward in \cite{Leutheusser:2021qhd,Leutheusser:2021frk} is the following: the algebra of generalized free fields at large $N$ in the bulk is dual to the algebra generated by (appropriately normalized) single-trace operators at $N=\infty$. The emergence of a black hole horizon then implies that the type of this algebra goes from type I to type III$_1$ at the Hawking--Page temperature. \par
Actually, even more can be said. What is shown in \cite{Leutheusser:2021qhd,Leutheusser:2021frk} is that in the case where the large $N$ algebra corresponds to a generalized free field, this half-sided modular inclusion can be extended to positive values of the coordinate on the horizon. It is this extension that allows to cross the horizon and construct a new, emergent time inside the black hole. Therefore, the emergence of time in the black hole interior is very closely related to the type III$_1$ nature of the algebra of the exterior. This emergent time should furthermore break down at a finite value of the parameter, which would be a signature of the black hole singularity.\par
Let us note that these observations do not, in principle, rely on the strength of the 't Hooft coupling on the boundary. While bulk physics is no longer approximated by a supergravity theory in the weakly coupled regime, it has been argued \cite{Festuccia:2005pi,Festuccia:2006sa} that the K\"all\'en--Lehmann spectral density is still continuous at weak (but nonzero) coupling, which implies that the boundary algebra should still become type III$_1$ above the Hawking--Page temperature. If one could still make sense of the notion of half-sided modular inclusion in this regime, this would imply that there is a precise notion of ``stringy horizon" for holographic duals of weakly coupled gauge theories.\par

\section{Von Neumann algebras for systems with large \texorpdfstring{$N$}{N} factorization}
\label{sec:vNtot}
In this rather formal section, we describe a general procedure that allows to construct a von Neumann algebra describing the correlation functions of any system satisfying large $N$ factorization. In particular, this method would apply to $\mathcal{N}=4$ super-Yang--Mills if we were able to compute the large $N$ two-point functions of single trace operators rigorously. In order not to clutter the main body of the paper too much, and also because this will be the case in all our examples, we restrict the discussion here to the case of bosonic systems, however, the fermionic counterpart of the statements presented here can be derived in a completely analogous way thanks to the fermionic algebras introduced in Appendix \ref{sec:ccrcar}. More generally, Appendix \ref{sec:ccrcar} systematically introduces all the mathematical objects involved here in the language of \cite{Derezinski}. On top of this reference, this section also inspires itself from standard techniques of algebraic quantum field theory, see for example \cite{cmp/1103899044,Fewster:2019ixc}, as well as \cite{Longo:2021rag} and \cite[App.C]{Faulkner:2022ada} for recent discussions and comments.

\subsection{The construction}
\label{sec:vNpivot}

Consider a sequence of quantum systems indexed by an integer $N$, at finite inverse temperature $\beta$. Let $A$ be an operator defined at each $N$. We start by defining what we mean by large $N$ factorization.
 
\begin{defin}\label{def:largeNfactor}
A self-adjoint operator $\phi(t)$ is said to satisfy (bosonic) large $N$ factorization if all its large $N$, finite temperature correlation functions have a limit, and these limits satisfy:
\begin{equation}
\begin{aligned}
\braket{\phi (t_1)\dots \phi(t_{2m-1})} & =0, \\ 
\braket{\phi(t_1)\dots \phi(t_{2m})} & =\sum_{\text{Wick pairing }\varpi }\prod_{j=1}^m\braket{\phi (t_{\varpi (2j-1)}) \phi(t_{\varpi (2j)})},
\end{aligned}
\label{eq:Wick}
\end{equation}
where the sum runs over pairings of operators that respect the ordering, with the corresponding permutations of $2m$ objects denoted $\varpi$.
\end{defin}
A quantum field theory at finite temperature is completely determined by its correlation functions. In what follows, we shall be interested in quantum systems containing some observables that satisfy large $N$ factorization. Equation \eqref{eq:Wick} then teaches us that for these observables, all-point functions can be recovered from the datum of two-point functions. The question is now: how does one construct a net of von Neumann algebras and an expectation value functional (i.e. a state on this algebra) whose values reproduce the correlation functions of such a form \eqref{eq:Wick}, and determine its type? It turns out that this problem can be answered by a standard construction of operator algebra theory, namely, a representation of the \textit{canonical commutation relations} (CCR). Note that the algebra we construct here has no direct definition in terms of the finite $N$ observables of the quantum mechanical theory. The idea is to directly engineer it from the data of the correlation functions at large $N$, in such a way that it recovers them.\par
\medskip

\subsubsection{CCR and Araki--Woods theory}
More formally, we wish to construct an algebra of quantum fields in 0+1 dimensions, and a state on this algebra, such that this state captures the correlation functions of observables satisfying large $N$ factorization. There is a generic operator algebraic construction that allows to describe the correlations of bosonic creation and annihilation operators in states that factorize (often dubbed \textit{quasi-free states}). This construction is widely used in algebraic quantum field theory, and is precise enough to know about the type of von Neumann algebra when the correlation functions factorize, see for example \cite{Fewster:2019ixc} for a review. Here we will adapt the method of algebraic quantum field theory to the format of \cite{Derezinski}, which is well-adapted to compute the types of the algebras under consideration. \par

The algebra we will consider is given by a representation of the canonical commutation relations (CCR algebra) over a one-particle Hilbert space. This Hilbert space is a space of functions of spacetime: here we are in $0+1$ dimensions, so it simply is a space of functions of $\mathbb{R}$, interpreted as the time direction. In this manner, the one-particle Hilbert space knows about the creation of individual particles at different times. Equivalently, it will be more convenient to think of the Hilbert space as a space of functions indexed by the frequency $\omega$. More formally, recall the definition of the Wightman function
\begin{align}
G_+(t)=\braket{\phi(t)\phi(0)}_\beta,
\end{align}
and of the K\"all\'en--Lehmann spectral density
\begin{align}
\label{eq:RhoandGrel}
 \rho(\omega):=(1-e^{-\beta\omega})\widetilde{G}_+(\omega).
\end{align}
Throughout the paper, to make contact with the sign conventions in \cite{Festuccia:2006sa}, the definition used for the Fourier transform is \begin{align}\widetilde{f}(\omega):=\frac{1}{2\pi}\int_{\mathbb{R}}f(t)e^{+i\omega t}dt.\end{align}
One can now introduce the following definition.\footnote{We will assume $\rho(\omega)d\omega$ is a measure on $\mathbb{R}$ that induces a tempered distribution. If $\rho$ is not supported on the whole real axis, then one needs to quotient by the null space of the inner product: functions whose support is included in the space where $\rho$ is zero.}
\begin{defin}
The real Hilbert space $L^2(\mathbb{R},\rho)$ is the completion of the space of rapidly decaying real-valued functions $\mathcal{S}(\mathbb{R})$ on $\mathbb{R}$ whose Fourier transform is square-integrable against $\rho$, endowed with the inner product defined by
\begin{align}
    (f_1,f_2):=\int_\mathbb{R}d\omega \tilde{f}_1^\ast(\omega)\tilde{f}_2(\omega)\mathrm{sgn}(\omega)\rho(\omega).
\end{align}
\end{defin}
This is a real inner product because it can be checked that $\rho(\omega)$ is an odd function of $\omega$, positive for positive $\omega$.
Out of this Hilbert space and our inner product, we now construct a representation of the CCR (see Appendix \ref{sec:ccrcar} for more details) through a standard procedure known as the \textit{Araki--Woods construction}. The first step is to introduce the operation
\begin{align}
\widetilde{\ \mathsf{j}f}(\omega):=-i\,\mathrm{sgn}(\omega)\tilde{f}^\ast(-\omega).
\end{align}
This operation endows $L^2(\mathbb{R},\rho)$ with the structure of a real symplectic space, with symplectic structure given by 
\begin{align}
\sigma(f_1,f_2):=\int_\mathbb{R}d\omega \tilde{f}_1^\ast(\omega)\widetilde{\ \mathsf{j}f}_2(\omega)\mathrm{sgn}(\omega)\rho(\omega).
\end{align}
Moreover, note that $\mathsf{j}$ squares to $-1$ (minus the identity operator) and $\mathsf{j}^\ast=-\mathsf{j}$. This implies that one can promote the structure of $L^2(\mathbb{R},\rho)$ to that of a complex Hilbert space $Z$ \cite{Fewster:2019ixc}, on which $-\mathsf{j}$ implements the imaginary unit, with a sesquilinear inner product given by
\begin{align}
\braket{f_1,f_2}:=\int_\mathbb{R}d\omega \tilde{f}_1^\ast(\omega)\widetilde{f}_2(\omega)\mathrm{sgn}(\omega)\rho(\omega)+\int_\mathbb{R}d\omega \tilde{f}_1^\ast(\omega)\tilde{f}_2^\ast(-\omega)\rho(\omega)=(f_1,f_2)+i(f_1,\mathsf{j}f_2).
\end{align}\par
The construction is then given by the following procedure. Let $Z$ be defined as before, and let $\Gamma$ be the bosonic Fock space over $Z\oplus\bar{Z}$. For $f_1$, $f_2$ in the image of $\mathcal{S}(\mathbb{R})$ inside $L^2(\mathbb{R},\rho)$, we introduce the operators \begin{align}W(f_1,\bar{f}_2):=e^{i\phi(f_1,\bar{f}_2)},\end{align} where \begin{align}\phi(f_1,\bar{f}_2)=\frac{1}{\sqrt{2}}(a^\dagger(f_1,\bar{f}_2)+a(f_1,\bar{f}_2)),\end{align}$a$ and $a^\dagger$ being the usual raising and lowering operators on the Fock space $\Gamma$.\par
We then introduce the operator $\rho_{\beta}$ (not to be confused with the spectral density $\rho(\omega)$ --- we chose this notation for consistency with \cite{Derezinski}), defined by 
\begin{align}
\label{eq:rhoandgamab}
\rho_\beta:=\gamma_\beta(1-\gamma_\beta)^{-1},
\end{align} where 
\begin{align}
    \widetilde{(\gamma_\beta f)}(\omega):=e^{-\beta\vert\omega\vert}\tilde{f}(\omega).
\end{align}
Note that $\gamma_\beta$ is a self-adjoint operator satisfying $0\leq\gamma_\beta\leq 1$. Then, for $f\in\mathrm{Dom}(\rho_\beta^{\frac{1}{2}})$, we can define two unitary operators $W_{\gamma,l}$ and $W_{\gamma,r}$ on $\Gamma$ by
\begin{align}
    W_{\beta,l}(f) &:=W((\rho_\beta+1)^{\frac{1}{2}}f,\bar{\rho}_\beta^{\frac{1}{2}}\bar{f}), \\
    W_{\beta,r}(\bar{f})& :=W(\rho_\beta^{\frac{1}{2}}f,(\bar{\rho}_\beta+1)^{\frac{1}{2}}\bar{f}).
\end{align}
It can then be shown that
\begin{prop}[\cite{Derezinski}]
The operators $W_{\beta,l}(f)$ satisfy the canonical commutation relations
\begin{align}
    W_{\beta,l}(f_1)W_{\beta,l}(f_2)=e^{-\frac{i}{2}\mathrm{Im}\braket{f_1,f_2}}W_{\beta,l}(f_1+f_2).
\end{align}
\end{prop}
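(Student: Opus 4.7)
The plan is to reduce the claim to the standard Weyl form of the CCR on the bosonic Fock space $\Gamma$ over $Z\oplus\bar{Z}$, and then evaluate the resulting phase on the Araki--Woods substitution $f\mapsto\bigl((\rho_\beta+1)^{1/2}f,\bar{\rho}_\beta^{1/2}\bar f\bigr)$. First I would recall that on $\Gamma$ the field $\phi(g,\bar h)=\tfrac{1}{\sqrt{2}}\bigl(a^\dagger(g,\bar h)+a(g,\bar h)\bigr)$ has central commutator $[\phi(g_1,\bar h_1),\phi(g_2,\bar h_2)]=i\,\mathrm{Im}\braket{(g_1,\bar h_1),(g_2,\bar h_2)}_{Z\oplus\bar Z}\cdot\id$, so that Baker--Campbell--Hausdorff collapses after one step to the standard Weyl CCR
\begin{equation*}
W(g_1,\bar h_1)\,W(g_2,\bar h_2)=e^{-\frac{i}{2}\mathrm{Im}\braket{(g_1,\bar h_1),(g_2,\bar h_2)}_{Z\oplus\bar Z}}\,W(g_1+g_2,\bar h_1+\bar h_2).
\end{equation*}

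The remaining step is purely algebraic: I would substitute $g_j=(\rho_\beta+1)^{1/2}f_j$, $\bar h_j=\bar\rho_\beta^{1/2}\bar f_j$ and check that the phase reduces to $\mathrm{Im}\braket{f_1,f_2}$. Linearity of the substitution immediately produces $W_{\beta,l}(f_1+f_2)$ on the right-hand side. For the phase, decomposing the inner product on $Z\oplus\bar Z$ into the sum of the inner product on $Z$ and its conjugate on $\bar Z$, and using the self-adjointness of $\rho_\beta$ (hence of $(\rho_\beta+1)^{1/2}$ and $\rho_\beta^{1/2}$) granted by the Fourier-side functional calculus, one gets
\begin{equation*}
\braket{(\rho_\beta+1)^{1/2}f_1,(\rho_\beta+1)^{1/2}f_2}_Z+\overline{\braket{\rho_\beta^{1/2}f_1,\rho_\beta^{1/2}f_2}_Z}=\braket{f_1,f_2}_Z+2\,\mathrm{Re}\braket{f_1,\rho_\beta f_2}_Z.
\end{equation*}
The second summand on the right is real and therefore drops out under $\mathrm{Im}$, leaving precisely $\mathrm{Im}\braket{f_1,f_2}_Z$. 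Plugging this back into the Weyl CCR above yields the claimed identity.

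The main obstacle is technical rather than conceptual: both $\rho_\beta$ and $\rho_\beta+1$ are unbounded self-adjoint operators on $Z$ (indeed $\rho_\beta$ blows up near $\omega=0$, where $\gamma_\beta\to 1$), so one has to verify that the manipulations live on a common invariant dense domain. The natural choice, already made in the statement of the proposition, is $f_j\in\mathrm{Dom}(\rho_\beta^{1/2})$, in which case $(\rho_\beta+1)^{1/2}f_j$ is equally well-defined, the smeared fields $\phi\bigl((\rho_\beta+1)^{1/2}f_j,\bar\rho_\beta^{1/2}\bar f_j\bigr)$ are essentially self-adjoint on the dense subspace of finite-particle Fock vectors, and Stone's theorem makes the exponentials $W_{\beta,l}(f_j)$ unambiguous. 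These points are standard in the Araki--Woods machinery \cite{Derezinski}, and once granted the algebraic computation of the phase sketched above concludes the proof.
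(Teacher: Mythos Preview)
Your argument is correct and is the standard proof of this fact: apply the Weyl CCR on the Fock space $\Gamma$ over $Z\oplus\bar Z$, then verify that the Araki--Woods substitution leaves the symplectic phase unchanged because the extra piece $2\,\mathrm{Re}\braket{f_1,\rho_\beta f_2}_Z$ is real. The paper does not supply its own proof of this proposition---it simply cites it from \cite{Derezinski}---so there is nothing to compare against beyond noting that your derivation is precisely the one underlying the cited result.
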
\par
Importantly, the algebra of operators generated by the $W_{\beta,l}(f)$ is not yet a von Neumann algebra: it is only a formal algebra generated by unitary operators, with no extra topological structure. The missing ingredient in order to obtain a von Neumann algebra is to take its \textit{bicommutant} on the Fock space $\Gamma$.
\begin{defin}
The large $N$ von Neumann algebra $M$ associated to the large $N$ spectral density $\rho$ is the bicommutant of the algebra generated by the operators $W_{\beta,l}$ on the Fock space $\Gamma$.
\end{defin}

Although this last step of taking the bicommutant might seem like a mere technicality, it is a crucial one: without it, the $W_{\beta,l}(f)$ do not know anything about the underlying entanglement structure of the vacuum state. It is this last step that ``teaches" the algebra of the $W_{\beta,l}(f)$ about the entanglement structure of the underlying Hilbert space. In particular, the different types that the von Neumann algebra $M$ can have arise due to this bicommutant operation.

The interest of the construction above is that the correlation functions in the vacuum state $\ket{\Omega_\beta}$ of $\Gamma$ exactly reproduce those of the system of interest. In particular, while Dirac deltas are not necessarily in $L^2(\mathbb{R},\rho)$, one can still formally write down the correlation functions of the operators associated to  $\delta(\cdot-t)$ and $\delta(\cdot)$. Then, using \cite[Proposition 40, 1st point of item 4]{Derezinski}, one finds:
\begin{align}
\label{eq:rho2ptTFD}
\bra{\Omega_\beta}\phi(t)\phi(0)\ket{\Omega_\beta}&=\mathrm{Re}\braket{(\rho_\beta)\delta(\cdot-t),\delta(\cdot)}+\frac{1}{2}\braket{\delta(\cdot-t),\delta(\cdot)} \notag \\
    &=\int_\mathbb{R}d\omega\; e^{-i\omega t}\mathrm{sgn}(\omega)\rho(\omega)\left(\frac{e^{-\beta\vert\omega\vert}}{1-e^{-\beta\vert\omega\vert}}+\frac{1}{2}\right)+\frac{1}{2}\int_\mathbb{R}d\omega\; e^{-i\omega t}\rho(\omega) \notag \\
    &=\int_\mathbb{R}d\omega\;e^{-i\omega t}\rho(\omega)\left(\frac{1}{2}\mathrm{coth}\left(\frac{\beta\omega}{2}\right)+\frac{1}{2}\right) \notag \\
    &=\int_\mathbb{R}d\omega e^{-i\omega t}\frac{\rho(\omega)}{1-e^{-\beta\omega}},
\end{align}
which matches with the Wightman function $G_+(t)$ by Fourier inversion. Similar results can be found for correlations of two creation and/or annihilation operators. Moreover, the correlation functions factorize (this is a consequence of the Araki--Woods construction that directly follows from \cite{Derezinski}), which shows that the algebra $M$ satisfies the requirement of describing all correlation functions at finite temperature.

\subsection{Type of large \texorpdfstring{$N$}{N} algebras}
\label{sec:vNalgebra}
With this construction at hand, we can study the type of the von Neumann algebra generated by the $W_{\gamma_\beta,l}(f)$ thanks to the following result summarized by Derezi\'{n}ski \cite{Derezinski}:
\begin{thm}[\cite{Derezinski}]\label{thm:Derezinski}
Let $M$ be the bicommutant of the Araki--Woods representation of the CCR defined above. If $\gamma_\beta$ is trace-class, then $M$ has type I. If $\gamma_\beta$ has some continuous spectrum, then $M$ has type III$_1$. 
\end{thm}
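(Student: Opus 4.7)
My strategy is to reduce both statements to the classical Araki--Woods dichotomy for the type of a quasi-free factor, with the one-particle datum being the positive contraction $\gamma_\beta$ on the real Hilbert space $L^2(\mathbb{R},\rho)$. The common core of both cases is an explicit identification of the Tomita--Takesaki modular operator $\Delta_{\Omega_\beta}$ for the cyclic separating vector $\Omega_\beta$ on $M$. I would first verify the KMS condition for $\omega_\beta := \langle \Omega_\beta, \cdot\, \Omega_\beta \rangle$ on the dense $\ast$-subalgebra generated by the Weyl operators $W_{\beta,l}(f)$, using the two-point identity \eqref{eq:rho2ptTFD} together with the Wick factorization of higher-point functions built into the Araki--Woods construction. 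A standard computation on the one-particle space then gives
\begin{equation*}
\Delta_{\Omega_\beta} \;=\; \Gamma\bigl(\gamma_\beta \oplus \gamma_\beta^{-1}\bigr)
\end{equation*}
in the sense of second quantization on $\Gamma=\Gamma_s(Z\oplus \bar Z)$, so that $\mathrm{Sp}(\Delta_{\Omega_\beta})$ is the closure of the multiplicative semigroup generated by ratios $\lambda/\mu$ with $\lambda,\mu \in \mathrm{Sp}(\gamma_\beta)$, together with $0$ from the Fock vacuum sector.

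\emph{Type I direction.} If $\gamma_\beta$ is trace-class, then $-\log \gamma_\beta$ has discrete spectrum accumulating only at $+\infty$, and the Gibbs partition function $Z_\beta := \mathrm{Tr}\, e^{-\mathrm{d}\Gamma(-\log\gamma_\beta)}$ on the one-sided Fock space $\mathcal{F}(Z)$ is finite. I would then construct an explicit unitary $U:\Gamma \to \mathcal{F}(Z) \otimes \overline{\mathcal{F}(Z)}$ sending $\Omega_\beta$ to the canonical purification of the density matrix $Z_\beta^{-1} e^{-\mathrm{d}\Gamma(-\log\gamma_\beta)}$ and intertwining $W_{\beta,l}(f)$ with $W(f)\otimes 1$. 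Under $U$, the algebra $M$ is mapped to $B(\mathcal{F}(Z))\otimes \mathbb{C}1$, which is manifestly a type I$_\infty$ factor.

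\emph{Type III$_1$ direction.} If $\gamma_\beta$ has continuous spectrum on some nondegenerate interval $[a,b]\subset(0,1)$, then the ratios $\lambda/\mu$ with $\lambda,\mu\in[a,b]$ sweep an open neighbourhood of $1$ in $(0,\infty)$, and their multiplicative closure fills the whole of $(0,\infty)$. Combined with $0\in\mathrm{Sp}(\Delta_{\Omega_\beta})$, this yields $\mathrm{Sp}(\Delta_{\Omega_\beta})=[0,\infty)$. The Connes invariant then satisfies $S(M) \subseteq \bigcap_\varphi \mathrm{Sp}(\Delta_\varphi) \subseteq \mathrm{Sp}(\Delta_{\Omega_\beta}) = [0,\infty)$, and the reverse inclusion is automatic since $[0,\infty)$ is the maximal possible value. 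Factoriality of $M$ can be checked from non-degeneracy of $\gamma_\beta$ (restricting to $\ker\gamma_\beta^\perp$ if necessary), so the Connes classification recalled in Section \ref{sec:reviewholo} yields that $M$ is of type III$_1$.

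\emph{Main obstacle.} The genuinely delicate step is the identification of $\Delta_{\Omega_\beta}$ above: it requires careful bookkeeping of the interplay between the real inner product $(\cdot,\cdot)$ on $L^2(\mathbb{R},\rho)$, the complex structure $\mathsf{j}$, and the passage to the doubled space $Z\oplus\bar Z$, so that the precise form of the modular generator comes out correctly rather than a twisted variant. Once this identification is secured, Steps on each side become routine spectral arguments. A cleaner, more abstract finish is to directly invoke the Araki--Woods classification of quasi-free factors as packaged in \cite{Derezinski}; the only residual task is then to verify that the quasi-free state determined by the two-point function \eqref{eq:rho2ptTFD} is exactly the one with covariance $\rho_\beta$ defined in \eqref{eq:rhoandgamab}, which is built into the construction by design.
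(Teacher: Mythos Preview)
The paper does not prove this theorem at all: it is quoted verbatim from \cite{Derezinski} as a black box, with only a one-line pointer that ``a more detailed justification of the type III$_1$ part of this result based on the triviality of the centralizer can for example be found in \cite{Golodets_Neshveyev_1998,Furuya:2023fei}.'' So there is no ``paper's own proof'' to compare against; you are attempting to supply one.

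Your type I argument is essentially correct in outline and is the standard one.

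Your type III$_1$ argument, however, has a genuine logical gap. You compute $\mathrm{Sp}(\Delta_{\Omega_\beta})=[0,\infty)$ for the single state $\omega_\beta$ and then write $S(M)\subseteq \bigcap_\varphi\mathrm{Sp}(\Delta_\varphi)\subseteq \mathrm{Sp}(\Delta_{\Omega_\beta})=[0,\infty)$, declaring the reverse inclusion ``automatic since $[0,\infty)$ is the maximal possible value.'' This is backwards. The Connes invariant is the \emph{intersection} over all faithful normal states, so knowing one modular spectrum is $[0,\infty)$ gives only the trivial containment $S(M)\subseteq[0,\infty)$; it says nothing about the intersection being all of $[0,\infty)$. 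Indeed, type III$_\lambda$ factors with $0\leq\lambda<1$ also admit states whose modular spectrum is $[0,\infty)$; what distinguishes them is that other states shrink the intersection. To conclude III$_1$ from spectral data of a single state you need something extra --- e.g.\ that $\omega_\beta$ has trivial (relative) centralizer, or an ITPFI decomposition to which the Araki--Woods asymptotic-ratio-set classification applies directly, or a computation of the Connes spectrum $\Gamma(\sigma^{\omega_\beta})$ (which, unlike $\mathrm{Sp}(\Delta_{\Omega_\beta})$, is a state-independent invariant). The route the paper points to is precisely the centralizer-triviality one; your sketch does not contain the missing ingredient.

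A secondary issue: your formula $\Delta_{\Omega_\beta}=\Gamma(\gamma_\beta\oplus\gamma_\beta^{-1})$ is not quite right as stated (the one-particle modular operator involves $\gamma_\beta(1-\gamma_\beta)^{-1}$ and its inverse, reflecting that $\rho_\beta=\gamma_\beta(1-\gamma_\beta)^{-1}$), though this does not affect the qualitative spectral conclusion.
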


A more detailed justification of the type III$_1$ part of this result based on the triviality of the centralizer can for example be found in \cite{Golodets_Neshveyev_1998,Furuya:2023fei}. Using \eqref{eq:rhoandgamab}, this theorem implies in the context considered by Leutheusser--Liu (which will be enough for our cases as well):
\begin{equation}\boxed{ \hspace{0.5cm}
\begin{tabular}{l c l}
\hspace{8pt} & \hspace{8pt}\vspace{-8pt} \\
 Type I & \hspace{0.7cm} & $\text{if } \supp \rho = \bigsqcup \left\{ \text{ isolated pts } \right\}$ and $\sum_{\omega\in\mathrm{supp}\,\rho}e^{-\beta\vert\omega\vert}<\infty$ .\\
Type III$_1$ & \hspace{0.7cm} & $\text{if } \supp \rho  \subseteq \R \text{ continuous }$ . \vspace{-8pt} \\
     \hspace{8pt} & \hspace{8pt} 
\end{tabular} 
\hspace{0.5cm}  }
\label{eq:typefromrho}
\end{equation}\par
This follows because if $\rho$ is a continuous function (non-constant equal to zero), then $\gamma_\beta$ is just the multiplication operator that multiplies admissible functions by $e^{-\beta|\omega|}$, and has continuous spectrum by continuity of $\rho$. For the type III$_1$ case we are assuming $\rho$ is a nonzero continuous function, so the preimage of every interval around a point in the image of $e^{-\beta\vert\omega\vert}$, which contains an interval, has nonzero measure. 

The theory lurking behind the scenes here is that of \textit{quasi-free states} on CCR/CAR algebras. It is a very well-developed theory and an important tool in the proof of many results of algebraic quantum field theory. Appendix \ref{sec:ccrcar} of this paper summarizes the main properties of quasi-free states, following \cite{Derezinski}.

\subsubsection{Intermezzo: Comparison with Leutheusser--Liu}
At this stage, it is useful to make a remark on the difference between our findings and the conjecture of Leutheusser--Liu \cite{Leutheusser:2021frk,Leutheusser:2021qhd}. This discrepancy was also recently identified in \cite{Furuya:2023fei}. The initial conjecture was that a type III$_1$ algebra emerges in the large $N$ limit if and only if the spectral density not only gets a continuous support, but also gets a continuous support \textit{on the full real line}. This is in slight contradiction with our findings: as long as we are looking at a thermal state and the spectral function of the theory has \textit{some} continuous support, we find that the type III$_1$ property will be satisfied. The subtlety that explains this discrepancy is that the argument given in \cite{Leutheusser:2021frk,Leutheusser:2021qhd} for their conjecture was based on the fact that the modular Hamiltonian of the thermal state needs to have support on the whole real line in order for the algebra to have type III$_1$. This is certainly true. However, the spectral density considered here does not encode the spectrum of the (full) modular Hamiltonian itself, but rather, the spectrum of its counterpart on the one-particle Hilbert space. This means that there is no contradiction between the modular spectrum having to be supported on the whole real line and our result as long as the additive group generated by the one-particle spectrum is dense in $\mathbb{R}$. This is clearly true if the one-particle spectrum has any continuous component.\footnote{Similar arguments about the additive group generated by the one-particle spectrum can be found in the algebraic QFT literature, see for example \cite{10.2307/24714460}.}

\section{Quantum mechanical systems from matrix models}
\label{sec:QM}
Now that a general construction of large $N$ algebras has been presented, we are ready to introduce and investigate the basic properties of a large class of quantum mechanical models, which will be the object of study in the rest of the work. These models have the very convenient property of having a partition function which is a unitary matrix integral, which matches the form of effective descriptions of gauge theories on $\mathbb{S}^{d-1} \times \mathbb{S}^1 _{\bgau}$, where $\bgau$ is the inverse temperature of the gauge theory. It is the character expansion of the matrix integral that recasts it as a manifest finite temperature partition function of a quantum system.\par
Alongside with the definition of our models, we introduce the main ingredients that will play a part in our analysis. Our novel construction of a quantum mechanical system from the representations of the flavor symmetry of the theory under consideration, is laid out in Subsection \ref{sec:QMflavor}. We then introduce a formal notion of probe and its associated Hilbert space in Subsection \ref{sec:probe}, and use it to compute the spectral densities in the quantum mechanics of interest in Subsection \ref{sec:spectral}.\par
These ideas are the technical core of this work, and will be exploited at length in the rest of the paper. They will also be concretely realized in the examples we will discuss in Part \ref{part2}.

\subsection{Gauge theories, flavor symmetries, representations}

Before moving on to the construction of our models, we briefly review some important features of matrix integrals appearing in gauge theory, which will be instrumental in the rest of our analysis.

\subsubsection{Gauge theories as unitary matrix models}
\label{sec:GaugeUMMsetup}
Our starting point is the family of unitary one-matrix models: 
\begin{equation}
\label{eq:genericUMMf}
    \mz_{\text{UMM}} (Y,\tilde{Y}) := \oint_{SU(N+1)} [\dd U ]  ~ f(U,U^{\dagger} ; Y, \tilde{Y}) .
\end{equation}
In this expression, $[\dd U]$ is the normalized Haar measure on the compact Lie group $SU(N+1)$, and $Y,\tilde{Y} \in SU(L+1)$ or $Y,\tilde{Y}\in U(L)$ are matrices of parameters, whose eigenvalues will play the role of fugacities for the flavor symmetry. Besides, $f$ is some class function, whose details depend on the specifics of the model of interest. Finally, the integration $\oint_{SU(N+1)}$ projects the integrand onto its gauge-invariant part, assuming the interpretation of $SU(N+1)$ as a gauge group holds. It is sometimes referred to as Molien--Weyl projector.\par
An important aspect of \eqref{eq:genericUMMf} is that the above integral depends on two integer parameters:
\begin{itemize}
\item $N$, which is the rank of the gauge group of the theory, and 
\item $L$, which is (or scales with) the rank of the flavor symmetry group of the theory.
\end{itemize}
In this section, we will be interested in the limit of this model when $L$ and $N$ both become large with fixed ratio, cf. Definition \ref{def:VenezianoLim} below.\par
\begin{itemize}
\item[$\diamond$] For practical convenience, we will often assume that 
\begin{equation}
\label{eq:exchangeUY}
	\text{\eqref{eq:genericUMMf} is invariant under} \quad (U,Y) \  \longleftrightarrow \ (U^{\dagger},\tilde{Y}) .
\end{equation}
In a motivating example in Section \ref{sec:IOP}, this assumption will have the physical meaning of requiring the theory to be free of chiral anomalies.
\end{itemize}\par
Models of the generic form \eqref{eq:genericUMMf} have a long history in gauge theory, and include the prototypical Gross--Witten--Wadia model \cite{Gross:1980he,Wadia:1980cp,Wadia:2012fr}. In some instances, \eqref{eq:genericUMMf} is derived directly from a quantum gauge theory placed on the $d$-dimensional cylinder geometry $\mathbb{S}^{d-1} \times \mathbb{S}^1$ \cite{Sundborg:1999ue,Aharony:2003sx}. Denoting $r_{d-1}$ and $ \bgau$ the radii of $\mathbb{S}^{d-1} $ and $ \mathbb{S}^1$, respectively, in the limit $r_{d-1}/\bgau \ll 1$ all the fields are heavy and can be integrated out, except for the holonomy of the gauge field around the thermal $\mathbb{S}^1$. The model reduces to an effective (0+1)-dimensional QFT, i.e. a Euclidean quantum mechanics theory on $\mathbb{S}^1$. Denoting the gauge connection $A$ and 
\begin{equation}
    \exp \oint_{\mathbb{S}^1} A = U \in SU(N+1) ,
\end{equation}
one is left with a (fairly complicated) effective action depending on $U, U^{\dagger}$ and the external parameters $Y, \tilde{Y}$ \cite{Aharony:2003sx}. Taking a further simplifying limit of very weak gauge coupling, in which the contributions from massive modes attain a tractable form, one arrives at the expression \eqref{eq:genericUMMf}. It is in this class of the examples that \eqref{eq:exchangeUY} will be related to a condition on the fields participating in the gauge theory.\par
\medskip
One well known feature of gauge theories is that their states come arranged into irreducible representations of the global symmetry. This information can be naturally extracted from the unitary matrix models \eqref{eq:genericUMMf}, through their character expansion. We now proceed to explain this procedure. This picture will become very explicit in the concrete examples we will consider in Part \ref{part2}.\par
We begin by stating useful generalities about $SU(N+1)$ representations (see e.g. \cite{Macdonaldbook}) that we will use, and then move on to stating the result. The expert reader can skip this brief review and move on to Subsection \ref{sec:charexprep}.\par

\subsubsection{Intermezzo: Properties of irreducible representations}

Irreducible representations of $SU(N+1)$ are in one-to-one correspondence with Young diagrams of length at most $N$. For example, the fundamental and adjoint representations of $SU(5)$ correspond to the Young diagrams:
\begin{equation}
	\text{fund} \ : \ \Box , \qquad \qquad \text{adj} \ : \ \ytableausetup{centertableaux,smalltableaux}\begin{ytableau} \ & \ \\ \ \\ \ \\ \ \end{ytableau} \ .
\end{equation}
The diagrams, in turn, are in one-to-one correspondence with partitions of length at most $N$:
\begin{equation}
	R= (R_1, R_2, \dots, R_N) , \qquad R_1 \ge R_2 \ge \cdots \ge R_N \ge 0 .
\end{equation}
The non-negative integers $R_i$ in the partition specify the rows of the diagram. 
\begin{defin}
    Let $R$ be an irreducible representation of $SU(N+1)$ and identify it with a Young diagram $R$. The length of $R$ is the positive integer $\ell (R)$ such that 
\begin{equation}
	R_{\ell (R)} > 0 , \quad R_{\ell (R)+1} =0 .
\end{equation}
If $R= \emptyset $ is trivial, $\ell (\emptyset):=0$.
\end{defin}
We will also use the notation 
\begin{equation}
	\lvert R \rvert := \sum_{i=1}^{\infty} R_i .
\end{equation}\par
The Lie group $SU(N+1)$ is endowed with an involution 
\begin{equation}
\label{eq:chargeconjU}
	\mathsf{C} \ : \ U \ \mapsto \ U^{\dagger} 
\end{equation}
which has the physical meaning of charge conjugation. It acts on the irreducible representations as $\mathsf{C} (R) = \overline{R}$, with $\overline{R}$ the complex conjugate representation. The Young diagram for $\overline{R}$ is the complement to $R$ inside a rectangle of edge lengths $(R_1 , N+1)$, rotated by $180^{\circ}$. For example, the anti-fundamental of $SU(5)$ is 
\begin{equation}
\overline{\text{fund}}\ : \ \overline{\Box} = \ytableausetup{centertableaux,smalltableaux}\begin{ytableau} \ \\ \ \\ \ \\ \ \end{ytableau} \ .
\end{equation}
Another example is, in $SU(4)$
\begin{equation}
 \ytableausetup{centertableaux,smalltableaux}
 R=(5,3,1) \quad \begin{ytableau} \ & \ & \ & \ & \   \\ \ & \ & \ & *(gray) \ & *(gray) \ \\ \ & *(gray) \ & *(gray) \ & *(gray) \ & *(gray) \ \\ *(gray) \ & *(gray) \ & *(gray) \ & *(gray) \ & *(gray) \ \end{ytableau}  \qquad \Longrightarrow \ \begin{color}{gray} \overline{R} =(5,4,2) \end{color}
\end{equation}
whereas the same diagram $R=(5,3,1)$ seen as an $SU(5)$ representation of length $3$ gives 
\begin{equation}
 \ytableausetup{centertableaux,smalltableaux}
 R=(5,3,1) \quad \begin{ytableau} \ & \ & \ & \ & \   \\ \ & \ & \ & *(gray) \ & *(gray) \ \\ \ & *(gray) \ & *(gray) \ & *(gray) \ & *(gray) \ \\ *(gray) \ & *(gray) \ & *(gray) \ & *(gray) \ & *(gray) \ \\ *(gray) \ & *(gray) \ & *(gray) \ & *(gray) \ & *(gray) \ \end{ytableau}  \qquad \Longrightarrow \ \begin{color}{gray} \overline{R} =(5,5,4,2) \end{color} .
\end{equation}
The Young diagram consisting of a column with $N+1$ boxes corresponds to the determinant representation, which is isomorphic to the trivial representation of $SU(N+1)$.
\begin{prop}[\cite{Brocker1985}]
The set of equivalence classes of $SU(N+1)$ representations up to isomorphism, endowed with the direct sum $\oplus$ and tensor product $\otimes$, is a commutative ring, called the \emph{representation ring} of $SU(N+1)$. It is generated by the set $\mathfrak{R}^{SU(N+1)}$ of isomorphism classes of irreducible representations.
\end{prop}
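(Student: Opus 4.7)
The plan is to verify in turn the three ingredients: (i) $\oplus$ and $\otimes$ descend to well-defined binary operations on isomorphism classes of finite-dimensional $SU(N+1)$-representations, (ii) these operations satisfy the commutative-ring axioms, and (iii) every class decomposes as a finite $\oplus$-sum of irreducibles, which gives the generating statement. Since $SU(N+1)$ is a compact Lie group, all of this is classical; the point is simply to organize it cleanly.

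First I would check well-definedness of the operations. If $V\cong V'$ and $W\cong W'$ as $SU(N+1)$-modules via intertwiners $\phi,\psi$, then $\phi\oplus\psi$ and $\phi\otimes\psi$ are intertwiners witnessing $V\oplus W\cong V'\oplus W'$ and $V\otimes W\cong V'\otimes W'$. Hence $\oplus$ and $\otimes$ pass to the quotient by isomorphism. Commutativity and associativity of both operations on the quotient follow from the natural isomorphisms $V\oplus W\cong W\oplus V$, $V\otimes W\cong W\otimes V$ (the flip map, which is $SU(N+1)$-equivariant), and the analogous associators; distributivity follows from the canonical equivariant isomorphism $V\otimes(W\oplus W')\cong (V\otimes W)\oplus (V\otimes W')$. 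The zero object is the trivial $0$-dimensional representation, and the multiplicative identity is the trivial one-dimensional representation.

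Second I would invoke the Peter--Weyl theorem, or equivalently the unitarian trick: because $SU(N+1)$ is compact, any finite-dimensional complex representation carries an invariant Hermitian inner product (average any inner product against the Haar measure), so every invariant subspace has an invariant orthogonal complement. By induction on dimension this yields complete reducibility: every finite-dimensional representation $V$ is isomorphic to a finite direct sum $V\cong\bigoplus_i R_i^{\oplus m_i}$ with $R_i\in\mathfrak{R}^{SU(N+1)}$. Consequently the class of $V$ in the ring is $\sum_i m_i\,[R_i]$, which proves that $\mathfrak{R}^{SU(N+1)}$ generates the ring under $\oplus$. (The multiplicities $m_i$ are uniquely determined by Schur's lemma, which shows moreover that the ring is the free $\mathbb{N}$-semiring on $\mathfrak{R}^{SU(N+1)}$ with multiplication determined by the Littlewood--Richardson rule, though this uniqueness is not needed for the statement.)

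There is no serious obstacle: the only mild subtlety is being careful that ``set of equivalence classes'' makes sense, which is handled by restricting to finite-dimensional representations (so one may fix a representative on $\mathbb{C}^n$ for each $n$ and cut the collection down to a genuine set). Everything else is automatic from compactness of $SU(N+1)$ and standard equivariant natural isomorphisms of tensor and direct-sum constructions.
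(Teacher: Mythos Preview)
Your argument is correct and is the standard verification: well-definedness on isomorphism classes, the ring (strictly, semiring) axioms via the natural equivariant isomorphisms, and complete reducibility from compactness via the unitarian trick. There is nothing to compare against, however, because the paper does not supply its own proof of this proposition; it is stated as a known fact with a citation to the textbook of Br\"ocker--tom Dieck and used as background in the intermezzo on irreducible representations. Your write-up is precisely the kind of proof one finds in that reference.

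One small remark: as you yourself note parenthetically, with only $\oplus$ one obtains a commutative semiring rather than a ring, since additive inverses are absent. The object usually called the \emph{representation ring} is the Grothendieck completion, and the paper elsewhere refers to it as the Grothendieck ring. This does not affect the generating statement, which is what the paper actually uses.
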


\subsubsection{Unitary matrix models as ensembles of representations: Preliminaries}
\label{sec:charexprep}
Character expansions are a widely used tool.\footnote{Implications of characters expansions in 2d gravity have been explored in \cite{Kazakov:1995ae}, and more recently in \cite{Kimura:2021hph,Betzios:2022pji}.} We prove a general character expansion formula here in an abstract context, which encompasses the case-by-case studies present in the literature. While the outcome will certainly be familiar to the practitioners, to our knowledge a formulation of the character expansion at such a level of generality has not appeared previously.\par
We list our setup and working assumptions and then derive Lemma \ref{lemma:ChExp}. Let the notation be as in \eqref{eq:genericUMMf}, with $f$ a class function for both $U$ and $U^{\dagger}$ separately. The reason why this should be the case in a matrix model derived from a gauge theory was explained in Subsection \ref{sec:GaugeUMMsetup}. The same reasoning applied to connections in the flavor bundle tells us that $f$ should also be a class function with respect to $Y, \tilde{Y}$. Physically, this is to require that the partition function of the theory does not transform under flavor symmetry transformations (possibly up to an anomalous phase, which plays no role in our discussion thus we neglect it).\par
Being $f$ a class function by hypothesis it admits a character expansion, that is, an expansion in the basis of functions on $SU(N+1)$ invariant under the adjoint action of the group. With the hypothesis just stated we write $f$ in the character basis as
\begin{equation}
\label{eq:fcharexp}
	f(U,U^{\dagger} ; Y, \tilde{Y}) = \sum_{R \in \mathfrak{R}^{SU(N+1)}} c_R \chi_R (U) \chi_{\varphi(R)} (Y) \sum_{\tilde{R} \in \mathfrak{R}^{SU(N+1)}} \tilde{c}_{\tilde{R}} \chi_{\tilde{R}} (U^{\dagger}) \chi_{\tilde{\varphi}(\tilde{R})} (\tilde{Y}) 
\end{equation}
(we allow the coefficients $c, \tilde{c}$ to be different in general). Let us unpack the notation. The sums run over isomorphism classes of irreducible representations of $SU(N+1)$. The functions $\chi_R$ are the characters of the Lie group, in the representation $R$. $\varphi(R)$ is a $U(L)$ or $SU(L+1)$ representation, uniquely fixed by $R$ via an map $\varphi : \ \mathfrak{R}^{SU(N+1)} \to  \mathfrak{R}^{U(L)}$ (or $ \to  \mathfrak{R}^{SU(L+1)}$). The concrete form of $\varphi$ depends on $f$, hence on the specifics of the gauge theory under consideration. In many cases, $\varphi$ is just the pullback of a map $U(L) \rightarrow SU(N+1)$, seen as the natural embedding if $L \le N$ and as a projection if $L>N$. Likewise, $\tilde{\varphi} (\tilde{R})$ is a representation uniquely determined by $\tilde{R}$ via a map $\tilde{\varphi}$. These maps will have very explicit realizations in the examples.\par
The examples in Part \ref{part2} will manifestly satisfy these hypothesis. One may think of \eqref{eq:fcharexp} either as 
\begin{itemize}
    \item our working assumption on the matrix models \eqref{eq:genericUMMf}; or
    \item as a physics-motivated requirement for \eqref{eq:genericUMMf} to be a valid approximation of the gauge theory on $\mathbb{S}^{d-1} \times \mathbb{S}^1$.
\end{itemize}

\subsubsection{Unitary matrix models as ensembles of representations: Character expansion}
\begin{lem}\label{lemma:ChExp}
Consider the setup of Subsection \ref{sec:charexprep}. There exist a subset $\mathfrak{R}_{L} ^{(N)} \subseteq \mathfrak{R}^{U(L)}$ (respectively $\subseteq \mathfrak{R}^{SU(L+1)}$) of irreducible $U(L)$ (resp. $SU(L+1)$) representations, a bijection $\phi : \mathfrak{R}_{L} ^{(N)} \to \mathfrak{R}_{L} ^{(N)}$ and numbers $\mathfrak{d}_R, \tilde{\mathfrak{d}}_R \in \mathbb{R}$ labelled by elements $R \in \mathfrak{R}_{L} ^{(N)}$, such that 
\begin{equation}
\label{eq:characterexp}
	\mz_{\text{\rm UMM}} (Y,\tilde{Y})  = \sum_{R \in \mathfrak{R}_{L} ^{(N)}} \mathfrak{d}_R \tilde{\mathfrak{d}}_{R} \chi_R (Y) \chi_{\phi (R)} (\tilde{Y}) .
\end{equation}
\end{lem}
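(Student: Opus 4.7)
The strategy is to substitute the character expansion \eqref{eq:fcharexp} of $f$ into the definition \eqref{eq:genericUMMf} and reduce the $U$-integral by orthogonality of irreducible characters on the compact group $SU(N+1)$. First, I would exchange the integral with the (formal) sums over $R,\tilde{R}\in \mathfrak{R}^{SU(N+1)}$, a step that requires only mild regularity of $f$ as a class function; since $\chi_{\varphi(R)}(Y)$ and $\chi_{\tilde{\varphi}(\tilde R)}(\tilde Y)$ do not depend on $U$, they factor out. This yields
\begin{equation*}
  \mz_{\text{UMM}}(Y,\tilde Y) = \sum_{R,\tilde{R}\in\mathfrak{R}^{SU(N+1)}} c_R\,\tilde{c}_{\tilde R}\, \chi_{\varphi(R)}(Y)\,\chi_{\tilde\varphi(\tilde R)}(\tilde Y)\, \oint_{SU(N+1)}[\dd U]\,\chi_R(U)\chi_{\tilde R}(U^\dagger).
\end{equation*}

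Next I would apply Schur orthogonality in the form $\oint[\dd U]\,\chi_R(U)\chi_{\tilde R}(U^\dagger)=\delta_{\tilde R,\bar R}$, using $\chi_{\tilde R}(U^\dagger)=\overline{\chi_{\tilde R}(U)}=\chi_{\mathsf{C}(\tilde R)}(U)$ and the fact that irreducible characters form an orthonormal basis of the class functions on $SU(N+1)$ with respect to the Haar measure. This collapses the double sum into a single sum:
\begin{equation*}
  \mz_{\text{UMM}}(Y,\tilde Y) = \sum_{R\in \mathfrak{R}^{SU(N+1)}} c_R\,\tilde{c}_{\mathsf{C}(R)}\, \chi_{\varphi(R)}(Y)\,\chi_{\tilde\varphi(\mathsf{C}(R))}(\tilde Y).
\end{equation*}

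To cast this in the form \eqref{eq:characterexp}, I would let $\mathfrak{R}^{(N)}_L\subseteq \mathfrak{R}^{U(L)}$ (resp.\ $\mathfrak{R}^{SU(L+1)}$) be the image of $\varphi$ restricted to those $R$ for which both $c_R$ and $\tilde{c}_{\mathsf{C}(R)}$ are nonzero, and define the bijection $\phi:=\tilde\varphi\circ \mathsf{C}\circ\varphi^{-1}$. With this choice one reads off $\mathfrak{d}_R:=c_{\varphi^{-1}(R)}$ and $\tilde{\mathfrak{d}}_R:=\tilde{c}_{\mathsf{C}(\varphi^{-1}(R))}$, and the identity \eqref{eq:characterexp} is recovered by relabeling the summation index.

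The main obstacle is verifying that $\phi$ is genuinely a bijection from $\mathfrak{R}^{(N)}_L$ to itself: this requires both that $\varphi$ be injective on the relevant index set, and that the image of $\tilde\varphi\circ \mathsf{C}$ coincide with that of $\varphi$. Under the symmetry \eqref{eq:exchangeUY}, which forces $\tilde\varphi=\varphi$ and $\tilde{c}=c$, the map $\phi$ reduces to the conjugation involution on the image of $\varphi$ and the bijectivity is automatic; without \eqref{eq:exchangeUY} one needs the mild model-dependent compatibility between $\varphi$ and $\tilde\varphi$ just stated, which will be transparent in all examples of Part \ref{part2}. A minor additional point worth checking is that only finitely many $R$ contribute at fixed $N$, so that the exchange of integral and sum is unproblematic; this follows because $\chi_R(U)=0$ on $SU(N+1)$ whenever $\ell(R)>N$, truncating the sum to Young diagrams of length at most $N$.
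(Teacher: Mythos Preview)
Your overall strategy matches the paper's: plug the character expansion \eqref{eq:fcharexp} into \eqref{eq:genericUMMf}, use orthogonality of characters to collapse the double sum, then relabel via $\varphi^{-1}$. However, there is a genuine error in your orthogonality step. You claim
\[
\oint_{SU(N+1)}[\dd U]\,\chi_R(U)\,\chi_{\tilde R}(U^\dagger)=\delta_{\tilde R,\bar R},
\]
but in fact this integral equals $\delta_{R,\tilde R}$. Your own intermediate identity $\chi_{\tilde R}(U^\dagger)=\overline{\chi_{\tilde R}(U)}$ is correct, and substituting it directly into Schur orthogonality $\oint[\dd U]\,\chi_R(U)\,\overline{\chi_{\tilde R}(U)}=\delta_{R,\tilde R}$ already gives the right answer; the further rewriting as $\chi_{\mathsf{C}(\tilde R)}(U)$, while a correct identity, led you to miscount a conjugation. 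Consequently your definitions $\phi=\tilde\varphi\circ\mathsf{C}\circ\varphi^{-1}$ and $\tilde{\mathfrak{d}}_R=\tilde c_{\mathsf{C}(\varphi^{-1}(R))}$ carry a spurious $\mathsf{C}$; the paper obtains $\phi=\tilde\varphi\circ\varphi^{-1}$ and $\tilde{\mathfrak{d}}_R=\tilde c_{\varphi^{-1}(R)}$. (Under \eqref{eq:exchangeUY} with $\tilde\varphi=\varphi$, your formula would give $\phi(R)=\overline{R}$ where the correct answer is $\phi(R)=R$.)

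A minor additional inaccuracy: the condition $\ell(R)\le N$ does not render the sum finite, since there are infinitely many Young diagrams of bounded length, so your last sentence does not justify the interchange of sum and integral. The paper treats the expansion formally and instead emphasizes that the relevant truncation to $\mathfrak{R}_L^{(N)}$ comes from the vanishing of $\chi_{\varphi(R)}(Y)$ when $\ell(\varphi(R))>L$, together with the possible vanishing of some $c_R$ or $\tilde c_R$.
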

\begin{proof}
The function $f$ in \eqref{eq:genericUMMf} is a class function for both $U$ and $U^{\dagger}$ separately, by hypothesis, and we take its character expansion \eqref{eq:fcharexp}.\par
The character $\chi_{\varphi(R)} (Y)$ in \eqref{eq:fcharexp} vanishes if $\ell (\varphi(R)) > L$. This imposes a constraint on the representations $R$ that contribute non-trivially to \eqref{eq:genericUMMf}, which intertwines the $N$- and $L$-dependence. Moreover, it may happen that the function $f$ is such that some of the coefficients $c_R$ or $\tilde{c}_R$ vanish for certain representations $R$. We will denote by $\mathfrak{R}_L ^{(N)} \subseteq  \mathfrak{R}^{U(L)}$ (or $ \subseteq  \mathfrak{R}^{SU(L+1)}$) the subset of irreducible representations that survive these selection rules.\par
Plugging \eqref{eq:fcharexp} back into \eqref{eq:genericUMMf} and using the orthogonality of characters:
\begin{equation}
    \oint_{SU(N+1)} [\dd U ] \chi_R (U) \chi_{\tilde{R}} (U^{\dagger}) = \delta_{R \tilde{R}} , 
\end{equation}
we arrive at 
\begin{equation}
	\mz_{\text{\rm UMM}} (Y,\tilde{Y})  = \sum_{R \in \varphi^{-1} \left( \mathfrak{R}_{L} ^{(N)} \right) } c_R \tilde{c}_{R} \chi_{\varphi(R)} (Y) \chi_{\tilde{\varphi} (R)} (\tilde{Y}) .
\end{equation}
The sum has been restricted to those $R$ that yield a non-trivial contribution. Changing variables $\hat{R} = \varphi(R)$, this is precisely \eqref{eq:characterexp}, after the identification 
\begin{equation}
    \mathfrak{d}_{\hat{R}} := c_{\varphi^{-1} (\hat{R})} , \qquad \tilde{\mathfrak{d}}_{\hat{R}} := \tilde{c}_{\varphi^{-1} (\hat{R})}, \qquad \phi := \tilde{\varphi} \circ \varphi^{-1}
\end{equation}
(and eventually renaming $\hat{R} \mapsto R$).
\end{proof}
Below we list side remarks concerning formula \eqref{eq:characterexp}:
\begin{itemize}
    \item In practice, we will discuss models such that $f$ is invariant under the involutions \eqref{eq:exchangeUY} and \eqref{eq:chargeconjU}. This implies $\tilde{c}_R= c_R$, and hence $\tilde{\mathfrak{d}}_{\hat{R}} = \mathfrak{d}_{\hat{R}}$.
    \item Moreover, applying \eqref{eq:exchangeUY} followed by \eqref{eq:chargeconjU}, we have that the model remains unchanged under the exchange $Y \leftrightarrow \tilde{Y}$, which also implies that 
\begin{equation}
    \tilde{\varphi} = \varphi  \qquad \text{ or } \qquad \tilde{\varphi} = \mathsf{C} \circ \varphi .
\end{equation}
The partition function is insensitive to which of the two options is actually realized.
    \item Note that, assuming invariance under the charge conjugation involution, 
\begin{equation}
	\phi (R) =R \quad \forall R \qquad \text{or} \qquad \phi (R) =\overline{R} \quad \forall R .
\end{equation}
    \item By construction, $\phi$ is the restriction to $\mathfrak{R}_L ^{(N)} $ of an isomorphism of the Grothendieck ring $\mathfrak{R}^{U(L)}$ (or $\mathfrak{R}^{SU(L+1)}$).
    \item The character expansion extends to other choices of classical gauge group. For integration over $SO(N+1)$ in \eqref{eq:genericUMMf}, only representations of $Sp(L+1)$ will enter in \eqref{eq:characterexp}, and conversely for integration over $Sp(N+1)$ one gets irreducible $SO(L+1)$ representations. The corresponding Young diagrammatic techniques were pioneered in \cite{Koike}.
\end{itemize}\par
\medskip
Before moving on, we introduce the unrefined version of the partition function \eqref{eq:genericUMMf}, 
\begin{equation}
\label{eq:ZLNUMM}
    \mz_L ^{(N)} (y) := \oint_{SU(N+1)} [\dd U ] f ( U,U^{\dagger} ; \mathrm{diag} ( \underbrace{\sqrt{y},\dots ,\sqrt{y}}_{L}) , \mathrm{diag} ( \underbrace{\sqrt{y},\dots ,\sqrt{y}}_{L}) ) .
\end{equation}
We now apply the character expansion to \eqref{eq:ZLNUMM}.
\begin{cor}\label{corol:CharExp}
Let $\mz_L ^{(N)}$ be as in \eqref{eq:ZLNUMM}. With the notation of Lemma \ref{lemma:ChExp}, it holds that 
\begin{equation}
\label{eq:ZLNCharExp}
	 \mz_L ^{(N)} (y)  =  \sum_{R \in \mathfrak{R}_{L} ^{(N)}} y^{\lvert R \rvert } ~  \mathfrak{d}_R \tilde{\mathfrak{d}}_{R} ~ (\dim R) (\dim \phi (R)) .
\end{equation}
If moreover the initial unitary matrix model is invariant under the involution \eqref{eq:chargeconjU}, then 
\begin{equation}
\label{eq:ZFCharExp}
	 \mz_L ^{(N)} (y)  = \sum_{R \in \mathfrak{R}_{L} ^{(N)}} y^{\lvert R \rvert } ~  \mathfrak{d}_R^2 (\dim R)^2 .
\end{equation}
\end{cor}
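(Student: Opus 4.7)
The plan is to directly specialize the refined expansion \eqref{eq:characterexp} from Lemma \ref{lemma:ChExp} at the isotropic point $Y=\tilde Y = \sqrt y\,\mathrm{diag}(1,\dots,1)$. Since the hypotheses on $f$ are unchanged by this choice of fugacities, Lemma \ref{lemma:ChExp} applies verbatim to $\mz_L^{(N)}(y)$, and the remaining work reduces to a clean evaluation of characters at a scalar matrix, followed by bookkeeping of powers of $y$.

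First I would use homogeneity: for any irreducible $U(L)$ (or $SU(L+1)$) representation $S$ identified with its Young diagram, $\chi_S$ is a Schur polynomial of total degree $|S|$ in the eigenvalues, so
\[
\chi_S\!\left(\sqrt{y}\,\mathrm{diag}(1,\dots,1)\right) \;=\; y^{|S|/2}\,\chi_S(\mathrm{id}) \;=\; y^{|S|/2}\,\dim S .
\]
Applied to both factors of \eqref{eq:characterexp}, this turns the expansion into
\[
\mz_L^{(N)}(y) \;=\; \sum_{R\in \mathfrak R_L^{(N)}} \mathfrak d_R\,\tilde{\mathfrak d}_R\; y^{(|R|+|\phi(R)|)/2}\,(\dim R)(\dim \phi(R)).
\]
The remaining task is the identity $|\phi(R)| = |R|$ for every $R\in\mathfrak R_L^{(N)}$, which collapses the exponent to $y^{|R|}$ and yields \eqref{eq:ZLNCharExp}.

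This last step is where I expect the only real subtlety to sit. It should follow from the construction $\phi = \tilde\varphi\circ\varphi^{-1}$ in the proof of Lemma \ref{lemma:ChExp}: both $\varphi$ and $\tilde\varphi$ come from the natural embedding/projection between $U(L)$ and $U(N+1)$ (or $SU$-variants) used to write the character expansion \eqref{eq:fcharexp} of the class function $f$, and pullback along such natural maps preserves the total Schur degree. Equivalently, the unrefined integral carries a $U(1)$-grading induced by the scaling $Y\mapsto y^{1/2}Y$ of the flavor insertion, and the decomposition \eqref{eq:fcharexp} is a graded decomposition; the bijection $\phi$ intertwines the gradings on the two sides, which forces $|\phi(R)|=|R|$. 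One can also argue a posteriori: because $\mz_L^{(N)}(y)$ is a well-defined power series in $y$ coming from a convergent matrix integral, any grading mismatch is incompatible with a term-by-term character expansion of the integrand.

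Finally, the $\mathsf C$-invariant statement \eqref{eq:ZFCharExp} drops out from the bullet-point remarks following Lemma \ref{lemma:ChExp}: invariance of $f$ under \eqref{eq:chargeconjU} forces $\tilde c_R = c_R$ and hence $\tilde{\mathfrak d}_R = \mathfrak d_R$, while combining \eqref{eq:exchangeUY} with \eqref{eq:chargeconjU} constrains $\phi$ to be either the identity or complex conjugation $\mathsf C$ on $\mathfrak R_L^{(N)}$. Since $\dim \overline R = \dim R$ for finite-dimensional representations, $(\dim R)(\dim \phi(R))$ collapses unambiguously to $(\dim R)^2$, producing \eqref{eq:ZFCharExp}.
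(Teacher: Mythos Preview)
Your proposal is correct and follows exactly the approach implicit in the paper: the corollary is stated there without proof, as it follows by specializing Lemma~\ref{lemma:ChExp} at $Y=\tilde Y=\sqrt{y}\,\mathrm{diag}(1,\dots,1)$ via the homogeneity identity $\chi_R(\sqrt{y}\,I)=y^{|R|/2}\dim R$ (which is precisely \eqref{eq:schurbringout} in the paper). Your discussion of the step $|\phi(R)|=|R|$ is in fact more careful than the paper itself, which takes this for granted from the construction $\phi=\tilde\varphi\circ\varphi^{-1}$ with $\varphi,\tilde\varphi$ the box-preserving embedding/projection maps described after \eqref{eq:fcharexp}.
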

The gauge-invariant content of a gauge theory must be assembled into representations of the global symmetries, and one can restrict to irreducible ones without loss of generality. The character expansion does the job, and provides us with an ensemble directly in terms of representations, in which the requirement of gauge invariance only appears through the constraint $R \in \mathfrak{R}_{L} ^{(N)}$.\par
Expression \eqref{eq:ZLNCharExp} is then interpreted as an ensemble of gauge-invariant operators. These are grouped into superselection sectors labelled by the pair $(R, \phi (R))$ of irreducible representations of the flavor symmetry. The elementary gauge-invariant operators of the theory are the generators of the pair $(R, \phi (R))$. We now proceed to give an explicit construction of quantum mechanical systems based on this character expansion.\par

\subsection{Quantum mechanics of flavor symmetry}
\label{sec:QMflavor}

Inspired from the general results above, we now construct quantum mechanical systems whose partition functions are given by the character expansions of the matrix models described above. We will very explicitly construct the Hilbert space of these systems as well as their Hamiltonian.

\subsubsection{Ensembles of representations as quantum systems}
\label{sec:RepsToQM}
Let us now revisit and push forward the previous analysis. We set $y=e^{-\beta}$ in \eqref{eq:ZLNCharExp}, so that $y^{\lvert R\rvert} = e^{- \beta \lvert R\rvert }$ is formally written as a Boltzmann factor. We give this hint credit.\par 
In this way, we are led to interpret \eqref{eq:ZLNCharExp} as the partition function of a quantum mechanical system involving the elementary gauge-invariant states, organized into irreducible representations, with Hamiltonian $H$ diagonalized in the representation basis, with eigenvalues $\lvert R \rvert $. The flavor symmetry imposes that all the states belonging to the same pair $(R,\phi(R))$ carry the same energy $\lvert R \rvert$. The degeneracy of all these elementary states of equal energy is already resummed and is accounted for by $(\dim R)(\dim \phi (R))$ in  \eqref{eq:ZLNCharExp}.\par
\medskip
It is important to note here that $\beta$ will be the inverse temperature of the effective quantum mechanics describing the gauge-invariant states, which may be distinct from the inverse temperature $\bgau$ of the gauge theory from which the matrix integral is derived. The two will be related, but possibly in non-trivial ways.
\begin{itemize}
\item[$\diamond$] We henceforth assume $\mathfrak{d}_R, \tilde{\mathfrak{d}}_R\in \mathbb{N}$. This allows us to think of them as additional degeneracy, due to quantum numbers that we are not taking into account. This assumption holds in a vast list of examples of character expansions (even well beyond the current scope, see e.g. \cite{Santilli:2021iks,Mironov:2022yhd,Mironov:2022fsr}), but we do not know of a deeper mathematical justification.
\end{itemize}
\par

Following this intuition, we now write down a quantum mechanical system whose Hilbert space $\mathscr{H}_L$ is graded by (isomorphism classes of) irreducible representations $R$ of the flavor symmetry, 
\begin{equation}
\label{eq:splitHLintoR}
	\mathscr{H}_L = \bigoplus_{R \in \mathfrak{R}^{U(L)} } \mathscr{H}_L (R) \otimes  \mathscr{H}_L (\phi(R)) .
\end{equation}
The vector space for given $R$ is spanned by 
\begin{equation}
\label{eq:vbasisgenR}
\begin{aligned}
	& \lvert R ,  v_{\ai} , \lambda_{\si} \rangle \otimes \lvert \phi(R) , \tilde{v}_{\dot{\ai}} ,   \tilde{\lambda}_{\dot{\si}} \rangle , \\
 & \text{for } \ \ai=1, \dots, \dim R , \ \dot{\ai}=1, \dots, \dim \phi(R), \ \si = 1, \dots, \mathfrak{d}_R , \ \dot{\si} = 1, \dots, \tilde{\mathfrak{d}}_R .
\end{aligned}
\end{equation}
Throughout, we use undotted and dotted indices, respectively, to refer to the $R$ and $\phi (R)$ part of \eqref{eq:splitHLintoR}. 
\begin{itemize}
\item The vectors $\lvert R, v_{\ai} \rangle $ are in one-to-one correspondence with the generators of $R$, and likewise for $\lvert \phi(R), \tilde{v}_{\dot{\ai}} \rangle $ with the generators of $\phi (R)$. 
\item The additional quantum numbers $\lambda_{\si}$ and $\tilde{\lambda}_{\dot{\si}}$ account for the additional degeneracy. We interpret them as associated to additional global symmetries that have remained spectators in the character expansion, but which impose additional selection rules.
\end{itemize}
The states \eqref{eq:vbasisgenR} are in the canonical normalization
\begin{equation}
\begin{aligned}
	&\langle R^{\prime} ,  v_{\mathsf{b}} , \lambda_{\mathsf{r}} \rvert \otimes \langle \phi(R^{\prime}) , \tilde{v}_{\dot{_{\mathsf{b}}}} , \lambda_{\dot{_{\mathsf{b}}}} \vert R ,  v_{\ai} , \lambda_{\si} \rangle \otimes \lvert \phi(R) , \tilde{v}_{\dot{\ai}}  \lambda_{\dot{\si}} \rangle  \\
	= & \langle\phi(R^{\prime}) , \tilde{v}_{\dot{_{\mathsf{b}}}} , \lambda_{\dot{_{\mathsf{r}}}}  \vert \phi(R) , \tilde{v}_{\dot{\ai}} , \lambda_{\dot{\si}} \rangle  ~ \langle R^{\prime} ,  v_{\mathsf{b}} , \lambda_{\mathsf{r}} \vert R ,  v_{\ai} , \lambda_{\si} \rangle = \left( \delta_{R R^{\prime}} \delta_{\ai \mathsf{b}}  \delta_{\lambda_{\mathsf{s}} \lambda_{\mathsf{r}}} \right) \left( \delta_{\phi(R) \phi(R^{\prime})} \delta_{\dot{\ai}\dot{\mathsf{b}}} \delta_{\lambda_{\dot{\si}} \lambda_{\dot{\mathsf{r}}} }  \right).
\end{aligned}
\end{equation}
To reduce clutter, we introduce the notation
\begin{equation}
	\lvert R , \ai ,\si ;\ \phi (R) , \dot{\ai},\dot{\si} \rangle := \lvert R ,  v_{\ai} , \lambda_{\si} \rangle \otimes \lvert \phi(R) , \tilde{v}_{\dot{\ai}} , \tilde{\lambda}_{\dot{\si}} \rangle
\end{equation}
and $\sum_{\ai,\si,\dot{\ai},\dot{\si}}$ to indicate the sum over the elements of the basis \eqref{eq:vbasisgenR} at given $R$.\par
The Hamiltonian acts as 
\begin{equation}
	H \lvert R , \ai, \si;\ \phi (R) , \dot{\ai},\dot{\si} \rangle  = \left\lvert R  \right\rvert ~\lvert R , \ai, \si ;\ \phi (R) , \dot{\ai},\dot{\si} \rangle   
\end{equation}
where we recall that $\lvert R \rvert = \sum_{i \ge 1} R_i$. The fact that the eigenvalues are independent of the generators of the representation $R$, producing degeneracy in the spectrum, is imposed by the assumption of the existence of global symmetries in the gauge theory.\par
\begin{defin}
    Let $\mathfrak{R}_L^{(N)} $ be as in Subsection \ref{sec:charexprep} and $\mathscr{H}_L$ as in \eqref{eq:splitHLintoR}. We define the \emph{gauge-invariant Hilbert space} to be 
\begin{equation}
\label{eq:HLNintoRLN}
	\mathscr{H}_L ^{(N)} := \bigoplus_{R \in \mathfrak{R}_L^{(N)} } \mathscr{H}_L (R) \otimes  \mathscr{H}_L (\phi(R)) .
\end{equation}
\end{defin}
The definition is motivated as follows.
\begin{itemize}
    \item Note that $\mathscr{H}_L ^{(N)}$ is not simply a vector space, but inherits a Hilbert space structure from $\mathscr{H}_L $, where the completeness is a consequence of the discreteness of the representation spectrum.
    \item The restriction to $R \in \mathfrak{R}_L^{(N)}$ stems from having a finite gauge rank $N$ in the gauge theory with start with. We loosely refer to this constraint as stemming from gauge invariance, with a slight abuse of notation.
\end{itemize}
Putting the pieces together, we have the following.
\begin{thm}
 Under the notation above, it holds that 
 \begin{equation}
\label{eq:ZisUMM}
    \mz_L ^{(N)} (e^{-\beta}) = \mathrm{Tr}_{\mathscr{H}_L ^{(N)}} \left( e^{- \beta H} \right) .
\end{equation}
\end{thm}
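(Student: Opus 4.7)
The plan is straightforward bookkeeping: set $y = e^{-\beta}$ in the character-expanded partition function and recognize the resulting sum as a trace over the basis \eqref{eq:vbasisgenR} in the eigenbasis of $H$. The two sides will match term-by-term in the representation index $R$.

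First I would invoke Corollary \ref{corol:CharExp}, specifically equation \eqref{eq:ZLNCharExp}, to rewrite the left-hand side as
\begin{equation*}
 \mz_L ^{(N)} (e^{-\beta})  =  \sum_{R \in \mathfrak{R}_{L} ^{(N)}} e^{-\beta \lvert R \rvert } \, \mathfrak{d}_R \tilde{\mathfrak{d}}_{R} \, (\dim R)(\dim \phi (R)) .
\end{equation*}
This already reorganizes the matrix integral into a sum over irreducible representations weighted by a Boltzmann factor $e^{-\beta |R|}$, which is the formal hint motivating the whole construction. Note that the assumption $\mathfrak{d}_R, \tilde{\mathfrak{d}}_R \in \mathbb{N}$, flagged with $\diamond$ in the text, is what makes the right-hand side interpretable as a non-negative integer degeneracy on the Hilbert space side.

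Next I would compute the right-hand side directly. Using the decomposition \eqref{eq:HLNintoRLN} of $\mathscr{H}_L^{(N)}$ and the orthonormal basis \eqref{eq:vbasisgenR}, the trace splits as a sum over $R \in \mathfrak{R}_L^{(N)}$ and over the four multiplicity labels $\ai, \dot{\ai}, \si, \dot{\si}$. Since $H$ is diagonal in this basis with eigenvalue $\lvert R \rvert$ independent of $(\ai, \dot{\ai}, \si, \dot{\si})$, one obtains
\begin{equation*}
\mathrm{Tr}_{\mathscr{H}_L^{(N)}} \left( e^{-\beta H}\right) = \sum_{R \in \mathfrak{R}_L^{(N)}} \sum_{\ai=1}^{\dim R} \sum_{\dot{\ai}=1}^{\dim \phi(R)} \sum_{\si=1}^{\mathfrak{d}_R} \sum_{\dot{\si}=1}^{\tilde{\mathfrak{d}}_R} e^{-\beta \lvert R \rvert} = \sum_{R \in \mathfrak{R}_L^{(N)}} e^{-\beta \lvert R \rvert} (\dim R)(\dim \phi(R)) \, \mathfrak{d}_R \tilde{\mathfrak{d}}_R .
\end{equation*}
Comparing with the expression obtained from Corollary \ref{corol:CharExp} yields \eqref{eq:ZisUMM}.

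There is essentially no obstacle here: the content of the theorem is not a nontrivial computation but rather the conceptual identification of the representation-theoretic degrees of freedom in $\mz_L^{(N)}$ with a quantum mechanical Hilbert space. All the hard work has been done in setting up $\mathscr{H}_L^{(N)}$ so that (i) its grading by $\mathfrak{R}_L^{(N)}$ absorbs the gauge-invariance selection rule, (ii) the dimensions $(\dim R)(\dim \phi(R))$ are reproduced by the $(\ai, \dot{\ai})$ multiplicities, and (iii) the additional degeneracies $\mathfrak{d}_R \tilde{\mathfrak{d}}_R$ are reproduced by the spectator quantum numbers $(\si, \dot{\si})$. The only minor point to verify carefully is the orthonormality of the basis, which is built in by construction, so the trace formula holds without convergence subtleties provided $\beta > 0$ and the sum over $R \in \mathfrak{R}_L^{(N)}$ converges—which, at fixed $N$ and $L$, it does because $\mathfrak{R}_L^{(N)}$ imposes finite-length constraints making the series a sum over partitions with bounded number of parts weighted by $e^{-\beta \lvert R \rvert}$.
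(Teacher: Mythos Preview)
Your proof is correct and follows essentially the same approach as the paper: compute the trace in the representation basis \eqref{eq:vbasisgenR} using the decomposition \eqref{eq:HLNintoRLN}, and match the result against the character expansion \eqref{eq:ZLNCharExp} from Corollary \ref{corol:CharExp}. The paper's version is simply more terse and omits the commentary on convergence and the conceptual role of the construction.
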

\begin{proof}
At any fixed $L$, we use \eqref{eq:HLNintoRLN} and take the trace in the energy eigenbasis \eqref{eq:vbasisgenR}. This gives 
\begin{equation}
\begin{aligned}
	 \mathrm{Tr}_{\mathscr{H}_L^{(N)}} \left( e^{- \beta H} \right) & = \sum_{R \in \mathfrak{R}_{L} ^{(N)}} \sum_{\ai,\si,\dot{\ai},\dot{\si}}  \langle R, \ai,\si;\ \phi (R) ,\dot{\ai},\dot{\si}  \lvert e^{- \beta H} \rvert R, \ai,\si;\ \phi (R) ,\dot{\ai},\dot{\si}  \rangle \\
	 	& = \sum_{R \in \mathfrak{R}_{L} ^{(N)}} \sum_{\ai=1} ^{\dim R} \sum_{\si=1} ^{\mathfrak{d}_R} \sum_{\dot{\ai}=1} ^{\dim \phi(R)} \sum_{\dot{\si}=1} ^{\tilde{\mathfrak{d}}_R}  e^{- \beta \lvert R \rvert } \\
   & = \sum_{R \in \mathfrak{R}_{L} ^{(N)}} \mathfrak{d}_R \dim (R) \tilde{\mathfrak{d}}_R \dim (\phi(R))  ~ e^{- \beta \lvert R \rvert }  .
\end{aligned}
\end{equation}
\end{proof}
We emphasize that, despite the apparent simplicity of the Hamiltonian $H$, these systems behave very differently from free systems. One major distinction is the degeneracy factor $\dim R$, which grows fast with the energy $\lvert R\rvert$.
\begin{defin}
Let\footnote{We adopt the notation E (east) and W (west), instead of the more customary left and right, to avoid confusion with the symbols $L,R$.} $ \mathscr{H}_L^{\mathrm{E}} ,\mathscr{H}_L^{\mathrm{W}} \cong \mathscr{H}_L$ be two copies of $\mathscr{H}_L$, and consider the tensor product space $\mathscr{H}^{\mathrm{E}}_L \otimes \mathscr{H}^{\mathrm{W}}_L$. We endow it with the canonical tensor product basis and promote the Hamiltonian $H$ on $\mathscr{H}_L$ to $H \otimes 1$ on $\mathscr{H}^{\mathrm{E}}_L \otimes \mathscr{H}^{\mathrm{W}}_L$. The \emph{thermofield double state} of the system, at inverse temperature $\beta$, is 
\begin{equation}
\label{eq:TFDL}
	\lvert \Psi_{\beta} \rangle_L := \frac{1}{\sqrt{ \mz_L ^{(N)} } }  \sum_{R \in \mathfrak{R}_{L} ^{(N)}} \sum_{\ai ,\si ,\dot{\ai},\dot{\si}} e^{- \frac{\beta}{2} \lvert R \rvert } ~ \lvert R, \ai , \si ;\ \phi (R) ,\dot{\ai},\dot{\si}  \rangle^{\mathrm{E}} \otimes \lvert R, \ai,\si;\ \phi (R) ,\dot{\ai},\dot{\si}  \rangle^{\mathrm{W}} .
\end{equation}
\end{defin}
Of course, this definition only makes sense if $\mz_L ^{(N)}$ is finite, which is expected at finite $N$ but not at large $N$. The usefulness of the thermofield double state is to compute the thermal expectation values of any operator $\hat{\mathcal{O}}$, through the identity 
\begin{equation}
	{}_L \langle \Psi_{\beta} \lvert \hat{\mathcal{O}} \rvert \Psi_{\beta} \rangle_L = \frac{1}{\mz_L^{(N)} }  \mathrm{Tr}_{\mathscr{H}_L^{(N)}} \left( e^{- \beta H}~ \hat{\mathcal{O}} \right) .
\end{equation}\par
\medskip

The construction so far is summarized as:
\begin{equation}
\begin{tabular}{r c l}
\textsc{Gauge + flavor symmetry} & \hspace{8pt} & \textsc{Quantum mechanics}\\
\hline
flavor rank $L$ & $\quad \longleftrightarrow \quad $ & choice of Hilbert space \\
flavor symmetry generators & $\quad \longleftrightarrow \quad $ & microscopic states \\
gauge rank $N$ & $\quad \longleftrightarrow \quad $ & constraint on allowed states \\
objects in representation ring & $\quad \longleftrightarrow \quad $ & superselection sectors \\
\hline
\end{tabular} 
\label{dictionaryrepQM}
\end{equation}\par

\subsubsection{Global symmetry representations and meson-like formulation}
\label{sec:QMFreeGas}
In preparation for Subsection \ref{sec:probe}, we now proceed to reformulate the quantum system in terms of more standard physical operators. This is an insightful rewriting, although not strictly necessary for the construction.\par
We start with the trivial representation, which provides us with the vacuum $\lvert \emptyset; \emptyset \rangle$. We then identify the number $\lvert R \rvert $ with the number of particles created on top of the vacuum. On the other hand, $\lvert R \rvert$ is total the number of boxes in the Young diagram $R$, thus 
\begin{equation}
	\text{ \# of particles } = \lvert R \rvert = \text{ \# of boxes}.
\end{equation}
All the one-particle states are obtained tensoring the vacuum with the representation $(\Box, \phi (\Box))$. 
In this way we identify the state 
\begin{equation}
    \lvert \Box, \ai ; \phi (\Box) , \dot{\ai} \rangle , \qquad \ai =1, \dots, \dim \Box , \ \dot{\ai} =1, \dots, \dim \phi(\Box )
\end{equation}
with a \emph{meson-like} state.
We emphasize that, when $\phi (R)= \overline{R}$, $\lvert \Box, \ai ; \overline{\Box} , \dot{\ai} \rangle$ resembles the standard meson, with one index $\ai$ in the fundamental and the other index $\dot{\ai}$ in the anti-fundamental, combined in a gauge-invariant fashion.\par
Iterating this procedure and tensoring with arbitrary symmetric powers of $(\Box, \phi (\Box))$ one ends up constructing all the necessary states, which is a consequence of the isomorphism 
\begin{equation}
    \bigoplus_{n=0} ^{\infty} \mathrm{Sym}^{n} (\Box, \phi (\Box)) \cong \bigoplus_{R \in \mathfrak{R}^{SU(L+1)}} (R, \phi (R)) ,
\end{equation}
which holds as well replacing $R \in \mathfrak{R}^{SU(L+1)}$ with $R \in \mathfrak{R}^{U(L)}$ on the right-hand side and interpreting $\Box$ on the left-hand side as the fundamental representation of $U(L)$.

\subsubsection{The partition function as a discrete matrix model}
\label{sec:RepToFermi}

In Subsection \ref{sec:charexprep}, the character expansion has been used to repackage the gauge-invariant operators into an ensemble of irreducible representations, or of Young diagrams. We now proceed to rewrite this system as a matrix model in which the eigenvalues live on the lattice $\mathbb{N}^L$, called \emph{discrete matrix models}. This presentation is more suitable for computations.\par

\begin{prop}[\cite{Rusakov:1992uf,Douglas:1993iia}]
 For every $L \in \mathbb{N}$, there exists an injective map $\iota: \ \mathfrak{R}_L ^{(N)} \to \mathbb{N}^{L+1}$ such that the equality holds:
\begin{equation}
\label{eq:genericdiscreteMM}
	\mz_L ^{(N)} (y) =  y^{-\frac{L^2}{2}}  \sum_{\vec{h} \in \mathfrak{H}_L ^{(N)} } \frac{y^{\sum_{i=1} ^{L} \left( h_i + \frac{1}{2} \right)} }{L! G(L+2)^2} ~ \mathfrak{d}_{\vec{h}} ^2 \prod_{1 \le i<j \le L+1} (h_i - h_j)^2 ,
\end{equation}
where $G(\cdot)$ is Barnes's $G$-function \cite{Barnes}, $\mathfrak{H}_L ^{(N)} := \iota \left( \mathfrak{R}_L ^{(N)} \right) \subseteq \mathbb{N}^{L}$, $\vec{h} =(h_1, \dots, h_L, -1) \in \mathbb{N}^{L+1}$, and $\mathfrak{d}_{\vec{h}} $ denotes the image of $ \mathfrak{d}_{R}$ under $\iota$.
\end{prop}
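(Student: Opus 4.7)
The plan is to start from the character expansion of $\mz_L^{(N)}(y)$ established in Corollary \ref{corol:CharExp} and rewrite each factor in terms of a shifted variable. I will assume the charge-conjugation invariant case (equation \eqref{eq:ZFCharExp}); the general case follows by the observation that $\dim \overline{R} = \dim R$, so $(\dim R)(\dim \phi(R)) = (\dim R)^2$ always. The key change of variables is the standard bijection between partitions of length $\le L$ and strictly decreasing $L$-tuples of non-negative integers, namely $\iota(R) := (h_1, \dots, h_L)$ with $h_i := R_i + L - i$, where we pad $R$ with zeros as needed. Setting $h_{L+1} := -1$ (which corresponds to the auxiliary row $R_{L+1} = 0$ shifted by $L - (L+1)$), one has the telescoping identity $h_i - h_j = (R_i - R_j) + (j - i)$ for $1 \le i < j \le L+1$.

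The second step is to invoke Weyl's dimension formula for $SU(L+1)$ irreducible representations (or the analogous formula for $U(L)$ when the character expansion produces $U(L)$ representations):
\begin{equation}
\dim R \; = \; \prod_{1 \le i < j \le L+1} \frac{R_i - R_j + j - i}{j - i} \; = \; \frac{1}{\prod_{1 \le i < j \le L+1}(j-i)} \prod_{1 \le i < j \le L+1} (h_i - h_j) .
\end{equation}
The denominator is a product of factorials that can be rewritten via Barnes's $G$-function as $\prod_{1 \le i < j \le L+1} (j-i) = \prod_{k=1}^{L} k! = G(L+2)$. Squaring gives the desired $(\dim R)^2 = G(L+2)^{-2} \prod_{i<j}(h_i - h_j)^2$ matching the Vandermonde factor in \eqref{eq:genericdiscreteMM}.

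The third step handles the Boltzmann weight. One computes
\begin{equation}
\lvert R \rvert \; = \; \sum_{i=1}^L R_i \; = \; \sum_{i=1}^L (h_i - L + i) \; = \; \sum_{i=1}^L h_i - \frac{L(L-1)}{2} \; = \; \sum_{i=1}^L \left(h_i + \tfrac{1}{2}\right) - \frac{L^2}{2},
\end{equation}
so that $y^{\lvert R \rvert} = y^{-L^2/2}\, y^{\sum_i (h_i + 1/2)}$, producing exactly the prefactor $y^{-L^2/2}$ and the summand exponent appearing in \eqref{eq:genericdiscreteMM}. The degeneracy $\mathfrak{d}_R$ transports along $\iota$ and is renamed $\mathfrak{d}_{\vec{h}}$; its square reproduces $\mathfrak{d}_{\vec{h}}^2$.

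Finally, the combinatorial factor $1/L!$ is a matter of convention: since the summand is manifestly symmetric under permutations of $(h_1, \dots, h_L)$ and the squared Vandermonde vanishes whenever two entries coincide, one may equivalently extend the sum from strictly decreasing tuples $\iota(\mathfrak{R}_L^{(N)})$ to all tuples in the symmetric orbit (defining $\mathfrak{H}_L^{(N)}$ as the $\mathfrak{S}_L$-saturation, with $\mathfrak{d}_{\vec{h}}$ extended symmetrically), compensating by the overcounting factor $L!$. Assembling these pieces yields \eqref{eq:genericdiscreteMM}. The only mild subtlety, and therefore the step I would check most carefully, is the bookkeeping of the constant shift yielding the $y^{-L^2/2}$ prefactor and the correct power of Barnes's $G$-function, both of which are sensitive to whether one treats $R$ as a $U(L)$ or $SU(L+1)$ representation; the two conventions match precisely because the determinant factor distinguishing $U(L)$ from $SU(L+1)$ can be absorbed into the overall monomial in $y$.
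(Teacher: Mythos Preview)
Your proof is correct and follows essentially the same approach as the paper: the change of variables $h_i = R_i + L - i$, the Weyl dimension formula rewritten as a Vandermonde over $G(L+2)$, the bookkeeping $\lvert R\rvert = \sum_i(h_i+\tfrac12) - \tfrac{L^2}{2}$, and the symmetrization argument producing the $1/L!$ all match the paper's derivation step by step.
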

We have restricted to the case $\phi (R) =R$ or $\phi (R) =\overline{R}$, but the argument can be easily adapted to a more general situation. In that case, the discrete matrix model is non-standard, because the squared Vandermonde factor gets modified.\par
Besides, \eqref{eq:genericdiscreteMM} is stated for $SU(L+1)$, but the adaptation to $U(L)$ is straightforward, the only difference residing in the value of $\dim R$.\par
\begin{proof}
The equivalence passes through the change of variables 
\begin{equation}
\label{eq:changeRtoH}
    h_i = R_i -i + L ,
\end{equation}
with the new variables $h_i$ satisfying 
\begin{equation}
\label{eq:orderedh}
	h_1 > h_2 > \cdots > h_L \ge 0 .
\end{equation}
The map $\iota$ is thus specified by \eqref{eq:changeRtoH}. Direct computation gives $\frac{L^2}{2} + \lvert R \rvert = \frac{L}{2} + \sum_{i=1}^{L} h_i $. It remains to express $\dim R$ as \cite{Macdonaldbook}
\begin{equation}
\label{eq:dimRformula}
    \dim R = \prod_{1 \le i < j \le L+1} \left( \frac{ R_i -R_j - i + j}{j-i} \right) .
\end{equation}
The standard substitution \eqref{eq:changeRtoH} casts $\dim R$ in terms of the Vandermonde determinant:
\begin{equation}
	\dim R = \frac{1}{G(L+2)} \prod_{1 \le i<j \le L+1} (h_i - h_j) ,
\end{equation}
where the Barnes's $G$-function at integer argument satisfies $G(L+2)= \prod_{n=1}^{L} (n!)$, and with the understanding $h_{j>L} =0$. For $U(L)$ representations we have 
\begin{equation}
\label{eq:dimRVdm}
	\dim R = \frac{1}{G(L+1)} \prod_{1 \le i<j \le L} (h_i - h_j) .
\end{equation}\par
The last step of the proof is to notice that the summand is totally symmetric in the variables $h_i$, and vanishes whenever two are equal, due to the Vandermonde determinant. Therefore, one can lift the restriction \eqref{eq:orderedh} and divide by the order of the symmetric group of $L$ elements, i.e. $L!$, to remove the over-counting.\par
The proof we have given here is a generalization of the one in \cite{Douglas:1993iia}. The result was certainly known in the matrix model community, and our only contribution it to cast it in a very general form.
\end{proof}

The discreteness of the matrix ensemble \eqref{eq:genericdiscreteMM} and the presence of additional constraints will generically induce large $N$ third order phase transitions. This is a recurrent theme, initiated in \cite{Douglas:1993iia}. This property is nothing but the mirror of the ubiquitous third order phase transitions in the unitary one-matrix models we have started with. We will see that, in several cases, passing to a larger ensemble will promote the phase transition from third to first order, with Hagedorn-like behavior. We elaborate on this comment in Subsection \ref{sec:1PTargument}, and in Appendix \ref{sec:PTMM} from the point of view of the unitary matrix model.\par
The two-step dictionary that emerges, in going from the gauge theory to the discrete matrix model passing through the ensemble of representations, is:
\begin{equation}
\begin{tabular}{r c l}
\textsc{Gauge + flavor symmetry} & \hspace{8pt} & \textsc{Discrete matrix model} \\
\hline
gauge rank $N$ & $\quad \longleftrightarrow \quad $ & constraint on allowed configurations \\
flavor rank $L$ & $\quad \longleftrightarrow \quad $ & \# of eigenvalues $L$ \\
gauge-invariant state & $\quad \longleftrightarrow \quad $ & eigenvalue configuration \\
\hline
\end{tabular} 
\label{dictionaryfermiongauge}
\end{equation}\par

\subsection{Coupling to a probe}
\label{sec:probe}

Now that the Hilbert space of our system and a Hamiltonian have been specified, we recall our main objective: constructing a large $N$ von Neumann algebra of type III$_1$. As we have proven earlier in Section \ref{sec:vNtot}, one can do so by constructing observables satisfying large $N$ factorization such that the spectral density has continuous support. This is the goal of this subsection. The strategy will be to couple a probe operator to the system and to explicitly calculate its real-time correlation functions. The correlation functions of the probe will then be described by a large $N$ von Neumann algebra of type III$_1$. Importantly, constructing this large $N$ von Neumann algebra will only require the correlation functions of the probe. The idea is that just like in $\mathcal{N}=4$ super-Yang--Mills, the correlation functions of one single trace operator allow to probe the emergence of spacetime, here the correlation functions of our probe will give information on the emergent properties of the full quantum system in the large $N$ limit. \par

\subsubsection{Probe operators}

Very schematically, the idea is to define an auxiliary object that creates a probe excitation at time $0$ and let it propagate until it is annihilated at a later time $t>0$. The probe particle is a bosonic particle in the representation $(\Box,\emptyset)$. Of course, the alternative choice $(\emptyset,\overline{\Box} )$ is equally valid and can be dealt with in exactly the same manner. The probe is coupled to the system by tensoring the corresponding Hilbert spaces.
\begin{defin}
    Let 
    \begin{equation}
    \Gamma_{\mathrm{probe}} := \bigoplus_{n=0}^{\infty} \mathrm{Sym}^{n} (\Box, \emptyset)
    \end{equation}
    denote the bosonic Fock space of a probe in the fundamental representation. The $n=0$ component is the trivial vector space, consisting only of the vacuum. The probe is said to be coupled to the system if $\mathscr{H}_L ^{(N)}$ is replaced with the total Hilbert space
    \begin{equation}
    \label{eq:coupledprobeH}
    \begin{aligned}
        \mathscr{H}_L ^{\mathrm{tot}} & := \mathscr{H}_L^{(N)} \otimes \Gamma_{\mathrm{probe}} \\
        & \cong \left[ \mathscr{H}_L^{(N)} \otimes ( \emptyset , \emptyset )_{\rm probe} \right] \oplus  \left[ \mathscr{H}_L^{(N)}  \otimes (\Box, \emptyset)_{\rm probe} \right] \oplus \cdots 
     \end{aligned}
     \end{equation}
\end{defin}
\begin{defin}\label{def:ProbeApprox1}
    The \emph{probe approximation} consists in neglecting all terms in \eqref{eq:coupledprobeH} other than the one giving the first non-trivial contribution to the correlation functions.
\end{defin}
In practice, the probe approximation can be enforced by modifying the Hamiltonian adding a large mass term on $\Gamma_{\mathrm{probe}}$. We will provide a more formal definition below, in Definition \ref{def:probeapprox2}.\par
The $n=0$ component of $\Gamma_{\mathrm{probe}} $ consists only of the vacuum 
\begin{equation}
    \lvert \emptyset; \emptyset \rangle_{\rm probe} \cong \lvert \underbrace{0, \dots , 0}_{L} ; \underbrace{0, \dots , 0}_{L} \rangle_{\rm probe} .
\end{equation}
Here and in what follows, the semicolon denotes the tensor product between the left and the right for the probe, while the commas separate the occupation numbers of the different modes in the one-particle Hilbert space.
We define the creation and annihilation operators $ c_p ^{\dagger} , c_p$ acting on the probe sector $\Gamma_{\mathrm{probe}}$. The lowest non-trivial probe sector is the $n=1$ term in $\Gamma_{\mathrm{probe}}$, consisting of 
\begin{equation}
    \left\{ \left( c_p ^{\dagger} \otimes 1 \right) \lvert 0, \dots , 0  ;  0, \dots , 0  \rangle_{\rm probe} , \ \forall p =1, \dots, L \right\} .
\end{equation}
The full $\Gamma_{\mathrm{probe}}$ is built acting with $c_p ^{\dagger} \otimes 1$ on the Fock vacuum of the probe. We emphasize that $c^{\dagger}_p, c_p$ act on the probe sector rather than $\mathscr{H}_L ^{(N)}$. We therefore indicate them with a different letter. Nevertheless, the two sets of operators behave in the same way when expressed in terms of $SU(L+1)$ representations.\par
In the composite system, in which the probe is coupled to the rest, the push-forward of the creation operator is 
\begin{equation}
    (1 \otimes 1)  \otimes \left( c_p ^{\dagger} \otimes 1 \right)_{\mathrm{probe}}
\end{equation}
where the first parenthesis is the identity operator on $\mathscr{H}_L$ (or its restriction to the subspace $\mathscr{H}_L^{(N)}$), that decomposes into $1 \otimes 1$ on the $R$-graded terms $\mathscr{H}_L (R) \otimes \mathscr{H}_L (\phi (R))$ in \eqref{eq:splitHLintoR}.

\subsubsection{Hamiltonian and interactions}
\label{sec:Hint}
The next step is to promote the Hamiltonian $H$ on $\mathscr{H}_L$ to a Hamiltonian $H^{\prime}$ on the combined Hilbert space $\mathscr{H}_L \otimes \Gamma_{\mathrm{probe}}$ of the system coupled to the probe. This Hamiltonian must necessarily contain a term $H \otimes 1_{\mathrm{probe}}$. We would also like to add the tensor product of the identity on $\mathscr{H}_L$ and a large mass term acting only on $\Gamma_{\mathrm{probe}}$, and add an interaction term between the probe and the rest of the system.
In order to be able to discuss possible interaction terms more precisely, we will make the following assumption:

\begin{itemize}
\item[$\diamond$] In what follows, we assume \eqref{eq:exchangeUY}, and moreover take $\phi(R)=R$. The situation with $\phi (R)= \overline{R}$ can easily be retrieved from the ensuing discussion, with minor variations. The generic situation for arbitrary $\phi (R)$ can also be worked out along the same lines, but is not immediate, as it requires a more careful analysis of allowed interaction terms.
\end{itemize}
Recall the Hilbert space of the composite system of the probe and our quantum theory in \eqref{eq:coupledprobeH}, given by 
\begin{align}
\label{eq:Hilbtot}
\mathscr{H}^{\mathrm{tot}}_L =\mathscr{H}_L^{(N)} \otimes\Gamma_{\mathrm{probe}},
\end{align}
where $\Gamma_{\rm probe}$ is the Fock space of the probe. Define the free Hamiltonian operator on $\Gamma_{\rm probe}$:
\begin{align}
H_{0, \mathrm{probe}}:= \mu \sum_{p=1}^{L} \left( c_p^\dagger c_p \otimes 1\right)_{\mathrm{probe}}  ,
\end{align}
where $\mu >0$ has the meaning of a mass for the probe, and we will later assume $\mu \gg 1$ (and also larger than all other relevant scales in the problem), which enforces the probe approximation.\par
Then, without coupling the probe to the gauge theory, one can define the Hamiltonian associated to the two systems as an operator on $\mathscr{H}^{\mathrm{tot}}_L$:
\begin{align}
H_{\mathrm{decoupled}}:= 1\otimes H_{0, \mathrm{probe}}+H \otimes 1,
\end{align}
where $H$ is the Hamiltonian of our quantum mechanical theory from Subsection \ref{sec:QMflavor}, which satisfies $H(R,\phi(R))=|R|$, where $H(R,\phi(R))$ denotes the restriction of $H$ to $\mathscr{H}_L (R) \otimes \mathscr{H}_L (\phi (R))$.\par 
Clearly, letting the system in $\mathscr{H}^{\mathrm{tot}}_L$ evolve with $H_{\mathrm{decoupled}}$ is a trivial operation, as the probe and the system do not talk to each other. In order to witness a non-trivial behavior and the ensuing appearance of a large $N$ type III$_1$ algebra, the probe must interact with our quantum system. Hence the full Hamiltonian reads:
\begin{equation}
\label{eq:defHprime}
H^{\prime} := 1 \otimes H_{0, \mathrm{probe}}+H \otimes 1 +H_{\mathrm{int}},
\end{equation}
where $H_{\mathrm{int}}$ is an interaction Hamiltonian that we now define directly inside the tensor product Hilbert space $\mathscr{H}^{\mathrm{tot}}_L$. If our models were derived top down in string theory, $H_{\mathrm{int}}$ would be determined by the action on the probe brane. Instead, our construction of the quantum mechanics is bottom up, and we will introduce the interaction by hand, selecting a tractable sample of all the possible interactions $H_{\mathrm{int}}$.\par
\medskip
The idea is that in the original quantum Hilbert space (without adding the probe), all pairs of representations appearing are of the form $(R,\phi(R))$. However, once one tensors these with the symmetric powers of the fundamental representation of $SU(L+1)$, which corresponds to the one-particle Hilbert space of the probe, more generic pairs of representations start appearing in the Hilbert space. This motivates the definition of $H_{\mathrm{int}}$ as an operator that is zero on pairs of the form $(R,\phi(R))$, and nonzero on other pairs $(R_1,R_2)$ (upon identifying $\mathscr{H}^{\mathrm{tot}}_L$ with a Hilbert space spanned by representations of the form $(R_1,R_2)$, where $R_2$ is some $\phi(R)$, and $R_1$ is an irreducible representation that appears in the decomposition of the tensor product of some symmetric power of the fundamental representation with $R$). It should also respect $SU(L+1)$ symmetry.\par

With this in mind, one possible interaction term is 
\begin{equation}
\label{eq:Hint}
	H_{\text{int}} (R_1, R_2) = \frac{g}{2} \left[ Q (R_1) - Q (R_2) \right] , \qquad Q(R) := C_2 (R) + (L+1) \lvert R \rvert ,
\end{equation}
where $C_2$ is the quadratic Casimir. We will work with \eqref{eq:Hint} in the rest of this part. Of course, other similar choices can be made, for example by switching $R$ with $\phi(R)$, or by switching signs. For example, to ease the comparison with \cite{Iizuka:2008eb}, we will make a slightly different choice of interaction Hamiltonian in Section \ref{sec:IOP}, see \eqref{eq:HintIOPEx}.\par
The coupling $g$ sets the strength of the interaction. Coming from a matrix model, it will be natural to consider a 't Hooft limit with $g \propto 1/L$ in the large $N$ and large $L$ limit. The choice of the factor $L\lvert R \rvert $ is merely to avoid some cumbersome shifts in the ensuing expressions. Dropping it, or changing its coefficient, would not alter the conclusions.

\begin{itemize}
\item[$\diamond$] Higher Casimir invariants may be included in $H_{\text{int}} $, as well as more sophisticated interactions. We restrict ourselves to the simplest non-trivial interaction. The choice is motivated by the sake of tractability and an analogy with \cite{Iizuka:2008eb}.
\end{itemize}\par
Now that we have introduced the Hamiltonian, we can write down a more precise version of Definition \ref{def:ProbeApprox1}.
\begin{defin}\label{def:probeapprox2}
    Consider the $\C$-valued map 
    \begin{equation}
        \hat{\mathcal{O}} \ \mapsto \ \tr_{\mathscr{H}_L ^{\mathrm{tot}} } \left( e^{- \beta H^{\prime}} ~ \hat{\mathcal{O}} \right) , \qquad \qquad \hat{\mathcal{O}} \in \mathcal{L} \left( \mathscr{H}^{\mathrm{tot}}_L \right) 
    \end{equation}
    and write the right-hand side in the form 
    \begin{equation}
    \label{eq:correlsumprobe}
        \tr_{\mathscr{H}_L ^{\mathrm{tot}} } \left( e^{- \beta H^{\prime}} ~ \hat{\mathcal{O}} \right)  =  \sum_{\vec{n} \in \N^{L+1} } e^{- \beta \mu \lvert \vec{n} \rvert } ~ \tr_{\mathscr{H}_L ^{(N)} } \left( {}_{\text{\rm probe}} \langle \vec{n} ; \emptyset \rvert e^{- \beta \left( H + H_{\text{\rm int}} \right) }  \hat{\mathcal{O}} \lvert \vec{n} ; \emptyset \rangle_{\text{\rm probe}} \right) 
    \end{equation}
    where we have used $\dim \Box =L+1$ for $SU(L+1)$ and introduced the shorthand notations 
    \begin{equation}
        \vec{n}  = (n_1, \dots, n_{L+1} ) , \qquad \lvert \vec{n} ; \emptyset \rangle_{\text{\rm probe}} := \lvert n_1, \dots , n_{L+1} ; 0, \dots , 0 \rangle_{\text{\rm probe}} , \qquad \lvert \vec{n} \rvert := \sum_{i=1} ^{L+1} n_i
    \end{equation}
    and the Hamiltonian \eqref{eq:defHprime}. The outer sum runs over all states in the probe sector. The partition functions \eqref{eq:correlsumprobe} and the ensuing correlation functions are treated as series expansions in the parameter $e^{- \beta \mu} $. The \emph{probe approximation} consists in discarding all contributions except the lowest order one.
\end{defin}
This definition based on the series expansion in the parameter $e^{- \beta \mu}$ formalizes the physical intuition that a probe is an object with a large mass $\mu \gg 1$, so that the excited states are not accessible. Note that the order of limits will be important later. We first take $\beta \mu \gg 1$ with every other parameter fixed. Then, when considering the large $N$ limit, we will take $N,L\gg 1$ but assuming $\mu$ is still large enough with respect to $L$. In the continuation we make two remarks on this definition.
\begin{itemize}
    \item Notice that the probe approximation is a different concept than a cutoff at energy scales $ O(\mu)$, and when the probe is decoupled from the system, the latter is not affected.
    \item Retaining only the lowest order contribution to the correlation functions, as per Definition \ref{def:probeapprox2}, \emph{is not} the same as setting $\vec{n}=(0, \dots, 0)$ in \eqref{eq:correlsumprobe}. While this latter prescription will often work, it may happen that one needs to compute correlation functions of the form 
    \begin{equation}
        e^{\beta \mu} ~ \tr_{\mathscr{H}_L ^{\mathrm{tot}} } \left( e^{- \beta H^{\prime}} ~ \hat{\mathcal{O}} \right) 
    \end{equation}
    for some operator $\hat{\mathcal{O}}$ that annihilates the probe vacuum. In such a scenario, the contribution from $\vec{n}=(0, \dots, 0)$ trivializes, whereas the first excited states with $\lvert \vec{n} \rvert =1$ will yield the first non-trivial contribution,
    \begin{equation}
    \begin{aligned}
        & e^{\beta \mu} ~ \tr_{\mathscr{H}_L ^{\mathrm{tot}} } \left( e^{- \beta H^{\prime}} ~ \hat{\mathcal{O}} \right) \\
        =&  e^{\beta \mu} \cdot  \left\{ 0 \ + \ e^{- \beta \mu} ~ \sum_{\substack{\vec{n} \in \N^{L+1} \\ \lvert \vec{n} \rvert =1} }  \tr_{\mathscr{H}_L ^{(N)} } \left( {}_{\text{\rm probe}} \langle   \vec{n} ; \emptyset \rvert  e^{- \beta \left( H + H_{\text{\rm int}} \right) } ~\hat{\mathcal{O}} \lvert \vec{n} ; \emptyset \rangle_{\text{\rm probe}} \right)  \ + \ O(e^{- 2\beta \mu})  \right\} \\
        = & \sum_{\substack{\vec{n} \in \N^{L+1} \\ \lvert \vec{n} \rvert =1} }  \tr_{\mathscr{H}_L ^{(N)} } \left( {}_{\text{\rm probe}} \langle   \vec{n} ; \emptyset \rvert e^{- \beta \left( H + H_{\text{\rm int}} \right) } ~ \hat{\mathcal{O}} \lvert \vec{n} ; \emptyset \rangle_{\text{\rm probe}} \right)  \ + \ O(e^{- \beta \mu}) .
    \end{aligned}
    \end{equation}
\end{itemize}\par

\subsubsection{Correlation functions}
\label{sec:probecorrel}

We wish to calculate real-time two-point correlation functions of the probe. We now introduce new operators whose correlations can be computed easily. They are related to the $c_p$'s by a $U(L)$ rotation, and consequently, by flavor symmetry, the correlations of these operators will be the same as the ones of any creation or annihilation operator of the probe in a given basis. However it will be useful for computations to write operators down in a form that treats all basis vectors democratically. With the ingredients defined insofar, we hence introduce the operator
\begin{equation}
\label{eq:defOfromcp}
    \mathcal{O}_L := \frac{1}{\sqrt{L+1}} \sum_{p=1} ^{L+1} (1 \otimes 1)  \otimes \left( c_p ^{\dagger} \otimes 1 \right)_{\mathrm{probe}} 
\end{equation}
on $\mathscr{H}^{\mathrm{tot}}_L$. Likewise, the annihilation operator is $ \mathcal{O}^{\dagger}_L = \frac{1}{\sqrt{L+1}} \sum_{p=1}^{L+1} (1 \otimes 1) \otimes \left( c_p  \otimes 1 \right)_{\mathrm{probe}}$.\par
As the Hamiltonian of the systems respects the flavor symmetry, it is clear that these operators have correlation functions equal to the ones of any $c_p^\dagger$. They also satisfy the commutation relations:
\begin{equation}
    \left[ c_p, c_q ^{\dagger} \right] = \delta_{p,q}  \quad \Longrightarrow \quad \left[ \mathcal{O}_L ^{\dagger}, \mathcal{O}_L \right] = 1 .
\end{equation}
The normalization by $1/\sqrt{L+1}$ in \eqref{eq:defOfromcp} ensures that the correlation functions are properly normalized. The reason why we introduced the $\mathcal{O}^{\dagger}_L$ instead of calculating correlation functions directly at the level of the $c_p^\dagger$ is because it will be useful to have explicit sums over flavor indices in our calculations.\par
In order to lighten the notation, we will omit the subscript on the probe part, and identify $c_p ^{\dagger}$ with $(c_p ^{\dagger} \otimes 1)_{\rm probe} $, so to write 
\begin{equation}
    \mathcal{O}_L = \frac{1}{\sqrt{L+1}} \sum_{p=1} ^{L+1} 1 \otimes  c_p ^{\dagger} .
\end{equation}
The left operator in the tensor product acts on $\mathscr{H}_L ^{(N)}$ and the right acts on $\Gamma_{\rm probe}$.\par
\medskip
Equipped with the Hamiltonian $H^{\prime}$ from \eqref{eq:defHprime}, we generate the time evolution in the Heisenberg picture of the operators in 
\begin{equation}
     \mathrm{End} \left( \mathscr{H}^{\mathrm{tot}}_L \right) \cong \mathrm{End} \left( \mathscr{H}_L ^{(N)} \otimes \Gamma_{\mathrm{probe}} \right) .
\end{equation}
What we will require is the time-evolved annihilation operator at time $t$:
\begin{equation}
    \mathcal{O}^{\dagger}_L (t) = e^{i t H^{\prime}} \left( \frac{1}{\sqrt{L+1}} \sum_{p=1} ^{L+1} 1 \otimes c_p \right) e^{-it H^{\prime}} .
\end{equation}
The correlation functions of time-ordered products of operators $\mathcal{O}_L, \mathcal{O}^{\dagger}_L$ inserted at different times provide well-posed observables of the quantum system. Observe that 
\begin{itemize}
\item Correlation functions will be non-vanishing only if an equal number of $\mathcal{O}_L$ and $\mathcal{O}^{\dagger}_L$ is taken. This is of course a consequence of the simplicity of the probe operators, and differs from the more sophisticated models of e.g. \cite{Witten:2021unn,Chandrasekaran:2022eqq}.
\item In the two-point functions, i.e. in the expectation values of operators 
\begin{equation}
    \mathcal{O}^{\dagger}_L (t)  \mathcal{O}_L (0) = \frac{1}{L+1} \sum_{p,q=1}^{L+1} e^{i t H^{\prime}} \left(  1 \otimes c_q \right) e^{-it H^{\prime}} (  1 \otimes c_p ^{\dagger} ) 
\end{equation}
only terms with $q=p$ will yield a non-vanishing contribution, by flavor symmetry. 
\item We also notice the normalization by $\dim \Box =L+1$, that keeps the correlation functions properly normalized. To work with $U(L)$ instead of $SU(L+1)$, one lets $p$ run from $1$ to $L$ and normalizes $\mathcal{O}_L$ by $1/\sqrt{L}$.
\end{itemize}
We arrive at the following statement. 
\begin{lem}\label{prop:sendprobe}
    The two-point function of a probe is computed by the finite temperature expectation value of $\mathcal{O}^{\dagger}_L (t)  \mathcal{O}_L (0) \in \mathcal{L} \left( \mathscr{H}^{\mathrm{tot}}_L \right) $, evolved with Hamiltonian \eqref{eq:defHprime}. Explicitly:
    \begin{equation}
    \label{eq:sumpapadagp}
        \mathcal{O}^{\dagger}_L (t)  \mathcal{O}_L (0) \qquad \text{ and } \qquad \frac{1}{L+1} \sum_{p=1}^{L+1} e^{i t H^{\prime}} \left(  1 \otimes c_p \right) e^{-it H^{\prime}} \left(  1 \otimes c_p ^{\dagger} \right)
    \end{equation}
    have the same correlation functions.
\end{lem}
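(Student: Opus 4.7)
The plan is to show that the off-diagonal ($p\neq q$) contributions to the flavor double sum
\begin{equation}
\mathcal{O}_L^\dagger(t)\mathcal{O}_L(0) = \frac{1}{L+1}\sum_{p,q=1}^{L+1} e^{itH'}(1\otimes c_q)e^{-itH'}(1\otimes c_p^\dagger)
\end{equation}
vanish after averaging against the thermal density matrix $e^{-\beta H'}$. Once established, this reduces the operator to the diagonal sum appearing on the right-hand side of \eqref{eq:sumpapadagp}, and the same argument will extend to any higher correlator built from the $\mathcal{O}_L,\mathcal{O}_L^\dagger$. The mechanism is a standard flavor-charge selection rule, and the only real work is identifying the correct conserved charge.

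The engine of the argument is the diagonal action of $U(L+1)$ on $\mathscr{H}_L^{(N)}\otimes\Gamma_{\mathrm{probe}}$, which simultaneously rotates representation labels in $\mathscr{H}_L^{(N)}$ and the creation operators of the probe. I would verify that each summand of $H' = H\otimes 1 + 1\otimes H_{0,\mathrm{probe}} + H_{\mathrm{int}}$ commutes with every generator $T_r^{\mathrm{joint}}$ of the joint Cartan: $H\otimes 1$ acts by the scalar $|R|$ on each sector, hence is rotationally invariant; $H_{0,\mathrm{probe}} = \mu\sum_r N_r$ is manifestly $U(L+1)$-invariant; and $H_{\mathrm{int}}$, given by \eqref{eq:Hint}, is a polynomial in the quadratic Casimir of the combined $SU(L+1)$-representation, hence commutes with all of joint $U(L+1)$. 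Therefore $e^{-\beta H'}$ is invariant under $e^{i\theta T_r^{\mathrm{joint}}}$. Since $T_r^{\mathrm{joint}}$ differs from the purely probe-side number operator $N_r$ by a generator acting on a disjoint tensor factor, one has $[T_r^{\mathrm{joint}},c_q] = -\delta_{rq}\,c_q$ and $[T_r^{\mathrm{joint}},c_p^\dagger] = \delta_{rp}\,c_p^\dagger$, and the same commutators survive Heisenberg evolution. Hence
\begin{equation}
[T_r^{\mathrm{joint}},\,c_q(t)c_p^\dagger(0)] = (\delta_{rp}-\delta_{rq})\,c_q(t)c_p^\dagger(0);
\end{equation}
for $p\neq q$, setting $r=p$ produces nonzero charge $+1$, and cyclicity of the trace combined with invariance of $e^{-\beta H'}$ yields
\begin{equation}
\tr\!\left(e^{-\beta H'}\,c_q(t)\,c_p^\dagger(0)\right) = e^{i\theta}\,\tr\!\left(e^{-\beta H'}\,c_q(t)\,c_p^\dagger(0)\right)\quad\forall\,\theta\in\mathbb{R},
\end{equation}
forcing the expectation value to vanish.

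The main subtlety, and in my view the only delicate point, is identifying the correct conserved charge. A naive attempt using only the probe number operator $N_r$ fails: the projectors onto different joint-$SU(L+1)$-irreducibles inside $H_{\mathrm{int}}$ mix states of distinct probe weight through Clebsch--Gordan coefficients, so $[N_r,H_{\mathrm{int}}]\neq 0$ in general. The remedy is to dress $N_r$ with the Cartan generator $T_r^{\mathrm{main}}$ acting on $\mathscr{H}_L^{(N)}$; the sum $T_r^{\mathrm{joint}} = T_r^{\mathrm{main}} + N_r$ is a genuine symmetry of $H'$, while the $T_r^{\mathrm{main}}$ piece commutes with all probe operators and therefore contributes nothing to the commutators with $c_q$ and $c_p^\dagger$. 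Once this replacement is in place, the selection rule is textbook, and extends immediately to any product of $\mathcal{O}_L$'s and $\mathcal{O}_L^\dagger$'s: only flavor-index configurations in which every created flavor is matched by an annihilation of the same flavor can contribute to thermal correlation functions, completing the proof.
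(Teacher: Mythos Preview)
Your argument is correct and is precisely the mechanism the paper invokes: the paper's entire justification is the single phrase ``by flavor symmetry'' in the discussion preceding the lemma, together with the remark that ``the Hamiltonian of the systems respects the flavor symmetry.'' Your proof unpacks this in detail, and your observation that one must use the \emph{joint} Cartan generator $T_r^{\mathrm{joint}} = T_r^{\mathrm{main}} + N_r$ rather than the bare probe number operator $N_r$ (since $H_{\mathrm{int}}$ mixes probe and system flavor weights via Clebsch--Gordan coefficients) is a genuine clarification that the paper leaves implicit.
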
\par
\begin{itemize}
\item[$\diamond$] In the present discussion, we are neglecting the additional quantum numbers $\lambda_s, \lambda_{\dot{s}}$ in the probe sector. They can be reinstated replacing $c_p ^{\dagger} $ and $c_p$ with the appropriate creation and annihilation operators that create a state decorated with additional quantum numbers, but we do not sum over them.
\end{itemize}\par
Finally, to make contact with the conventions used in the study of von Neumann algebras, we introduce the self-adjoint operator
\begin{equation}
\label{eq:defphifromO}
    \phi_L := \frac{1}{\sqrt{2}} \left( \mathcal{O}^{\dagger}_L + \mathcal{O}_L \right) .
\end{equation}
In terms of the probe creation and annihilation operators, $\phi_L$ reads 
\begin{equation}
    \phi_L = \sum_{p=1} ^{L+1} \frac{c_p + c^{\dagger}_p}{\sqrt{2 (L+1)}} .
\end{equation}

\subsubsection{Intermezzo: On creation and annihilation operators}

Let us clarify a subtlety about the creation and annihilation operators from the get go. In the concrete models we have constructed, there are creation and annihilation operators associated to a probe particle coupled to our quantum systems. It is important to note that these operators, although closely related, are formally \textit{not} the same as the ones appearing in Section \ref{sec:vNtot}.\par
Indeed, in the case of Section \ref{sec:vNtot}, the one-particle Hilbert space is $L^2(\mathbb{R},\rho)$ and corresponds to functions of the time variable (or equivalently, their Fourier transforms), whereas in our quantum systems the one-particle Hilbert space will be indexed by the number of degrees of freedom of the theory. As this number goes to infinity, we will see momentarily that the real-time correlation functions of the system will factorize according to Wick's theorem. What Section \ref{sec:vNtot} tells us is that then, the same correlation functions can be recovered by a quasi-free state on the CCR algebra over $L^2(\mathbb{R},\rho)$. It is in that sense that the two sets of creation and annihilation operators are related, and that one will be able to identify the von Neumann algebras constructed in Section \ref{sec:vNtot} and Appendix \ref{app:CCR} with the large $N$ von Neumann algebras of our quantum systems.

\subsection{Spectral densities}
\label{sec:spectral}

We have shown in Subsection \ref{sec:vNalgebra} that, in order to determine the type of the von Neumann algebras, one needs to calculate the real-time Wightman function at finite temperature, as well as its associated K\"all\'en--Lehmann spectral density $\rho (\omega)$. Here, we introduce appropriate Wightman functions for the class of models of Subsection \ref{sec:QMflavor}, and derive general properties of the associated spectral density. These properties will allow us to determine the type of the large $N$ von Neumann algebras constructed in Section \ref{sec:vNtot}.\par

\subsubsection{Definitions}
We begin introducing the notation, and then proceed with the main statements. Recall the quantum system of Subsection \ref{sec:QMflavor} and the correlation function of creation and annihilation operators from Subsection \ref{sec:probecorrel}. With these ingredients at hand, we introduce the Wightman functions of the systems. Our conventions follow \cite[App.B]{Festuccia:2006sa}.\par
\begin{defin}
The real-time, finite temperature Wightman function is 
\begin{equation}
\label{eq:defGplus}
    G_{L,+} (t) :=  \frac{1}{\mz_L ^{(N)} } \tr_{\mathscr{H}_L ^{\mathrm{tot}}} \left( e^{- \beta H^{\prime}} ~ \phi_L (t)  \phi_L (0)  \right) .
\end{equation}
The dependence on the parameter $N$ and $\beta$ is left implicit in the notation $G_{L,+} (t)$. 
\end{defin}
The importance of this quantity lies in its appearance in the right-hand side of \eqref{eq:rho2ptTFD}. Let $\widetilde{G}_{L,+} (\omega)$ be the Fourier transform of $ G_{L,+} (t)$. From \eqref{eq:RhoandGrel}, $\widetilde{G}_{L,+} (\omega)$ is simply related to the K\"all\'en--Lehmann spectral density $\rho (\omega) $, or \emph{spectral density} for short, through 
\begin{equation}
\label{eq:rhofromG}
    \widetilde{G}_{L,+} (\omega) = \frac{\rho (\omega) }{1-e^{- \beta \omega}} ,
\end{equation}\par
There are related types of Wightman functions that are useful in practice for certain computations \cite[App.B]{Festuccia:2006sa}. 
\begin{defin}
    With the definitions and prescription of Subsection \ref{sec:probe}, the real-time, finite temperature, retarded and advanced Wightman functions are, respectively, 
\begin{align}
    G_{L,\mathrm{R}} (t) & := \hspace{8pt} i \theta (t) \hspace{8pt} \frac{1}{\mz_L ^{(N)}}  \mathrm{Tr}_{\mathscr{H}_L ^{\mathrm{tot}}}\left( e^{- \beta H^{\prime}}  \left[ \phi_L (t) , \phi_L (0) \right] \right) , \label{eq:defGRfunc} \\
    G_{L,\mathrm{A}} (t) & := - i \theta (- t) \frac{1}{\mz_L ^{(N)}} \mathrm{Tr}_{\mathscr{H}_L ^{\mathrm{tot}}} \left( e^{- \beta H^{\prime}}  \left[ \phi_L (t) , \phi_L (0) \right] \right) , \label{eq:defGAfunc}
\end{align}
with $\theta (t)$ the Heaviside step function.
\end{defin}

\begin{defin}\label{def:notationthmrho}
    With the notation as in Subsection \ref{sec:QMflavor}, let $H_{\text{\rm int}}$ be as in \eqref{eq:Hint} and $\lvert \Psi_{\beta} \rangle_L$ be the thermofield double state \eqref{eq:TFDL}. Besides, for every $R \in \mathfrak{R}_L ^{(N)}$ let
    \begin{equation}
    \label{eq:defsetJ}
        \mathscr{J}_R := \left\{ J=1 \right\} \cup \left\{ J \in \left\{2, \dots , L +1 \right\}  \ : \ R_{J-1} > R_{J } \right\} 
    \end{equation}
    and, $\forall J \in \mathscr{J}_R$, denote by $R \sqcup \Box_J$ the Young diagram obtained adding a box at the end of the $J^{\text{th}}$ row of $R$. Let also
    \begin{align}
         E^{\mathrm{int}} _J & := H_{\text{\rm int}} (R \sqcup \Box_J, \phi (R) ) , \\
         E_J (\mu) & := E^{\mathrm{int}} _J + \mu .
    \end{align}
    For the specific $H_{\text{\rm int}}$ in \eqref{eq:Hint} one gets 
    \begin{equation}
        E^{\mathrm{int}} _J := g(R_J -J+L +1) .
    \end{equation}\par
    For every $\omega \in \R$, introduce the shifted variable $\omega_{\mathrm{r}}$ defined as 
    \begin{equation}
    \label{eq:defomegaren}
        \omega_{\mathrm{r}} := \lvert \omega \rvert  - \mu .
    \end{equation}
    Moreover, for $\phi (R) =R$ or $\phi (R) =\overline{R}$, define $\DR (\omega)$ through
    \begin{equation}
    \label{eq:defoprhsrho}
        \langle R, \ai ,\si ; \phi(R) , \dot{\ai}, \dot{\si} \lvert  \DR (\omega) \lvert  R, \ai ,\si ; \phi(R) , \dot{\ai}, \dot{\si}  \rangle := \frac{1}{L} \sum_{J \in \mathscr{J}_R} \delta \left( \omega - E_J (\mu) \right)  ~ \left( \frac{ \dim (R \sqcup \Box_J) }{ \dim R} \right) 
    \end{equation}
    and $\Omega (\omega) $ through 
    \begin{equation}
    \label{eq:defOmega}
        \langle  R, \ai ,\si ; \phi(R) , \dot{\ai}, \dot{\si}  \lvert \Omega (\omega) \rvert  R, \ai ,\si ; \phi(R) , \dot{\ai}, \dot{\si}  \rangle = \frac{1}{L} \sum_{J \in \mathscr{J}_R } \frac{1}{\omega- E_J (\mu) } ~ \frac{\dim (R \sqcup \Box_J) }{  \dim R} .
    \end{equation}
\end{defin}

For later reference, we introduce here the notion of Veneziano limit. It will have a prominent role in the study of the planar limit of the unitary matrix models \eqref{eq:ZLNUMM}.
\begin{defin}\label{def:VenezianoLim}
    Consider a gauge theory with gauge rank $N$ and flavor rank $L$. The \emph{Veneziano parameter} is the ratio 
    \begin{equation}
    \gamma= \frac{L}{N} .
\end{equation}
The \emph{Veneziano limit} is a large $N$ planar limit in which $\gamma$ is kept finite.
\end{defin}\par

\subsubsection{Main results on the spectral density}

The following results are the linchpin of the subsequent derivation. For a cleaner presentation, we have collected the proofs in Appendix \ref{app:longproof}. We encourage the reader to consult that appendix to gain familiarity with the very explicit computations in these models.\par

\begin{thm}\label{thmrhoisrho}
    With the notation as in Definition \ref{def:notationthmrho} and $\theta (\omega)$ the Heaviside step function, the identity 
    \begin{equation}\label{eq:rhoisrho}
      \rho (\omega) = \frac{1}{2} ~ {}_L\langle \Psi_{\beta} \lvert \theta (\omega) \DR (\omega) - \theta (- \omega) \DR (- \omega)  \rvert \Psi_{\beta} \rangle_L 
    \end{equation}
    holds in the probe approximation.
\end{thm}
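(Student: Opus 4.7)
The plan is to evaluate the Wightman function $G_{L,+}(t)$ explicitly in the probe approximation, Fourier transform it, and extract $\rho(\omega)$ via \eqref{eq:rhofromG}. Since $\phi_L = (\mathcal{O}_L^{\dagger}+\mathcal{O}_L)/\sqrt{2}$, expanding $\phi_L(t)\phi_L(0)$ produces four terms. The two pieces $\mathcal{O}_L \mathcal{O}_L$ and $\mathcal{O}_L^{\dagger}\mathcal{O}_L^{\dagger}$ change probe number by two and vanish under the trace. Of the remaining two, $\mathcal{O}_L(t)\mathcal{O}_L^{\dagger}(0)$ annihilates the probe vacuum, so by Definition \ref{def:probeapprox2} it contributes only at $O(e^{-\beta \mu})$. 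Hence at leading order the trace reduces to $\frac{1}{2\mz_L^{(N)}} \tr_{\text{vac sector}} [e^{-\beta H^{\prime}}\mathcal{O}_L^{\dagger}(t)\mathcal{O}_L(0)]$.

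Next I would apply Lemma \ref{prop:sendprobe} and evaluate the operator under the trace on a basis state $\lvert R,\ai,\si;\phi(R),\dot{\ai},\dot{\si};\vec{0}\rangle$. The insertion of $c_p^{\dagger}$ places the combined state in $R\otimes\Box$ tensored with $\phi(R)$, and the Clebsch--Gordan decomposition $R \otimes \Box = \bigoplus_{J \in \mathscr{J}_R} (R \sqcup \Box_J)$ expands it in $H^{\prime}$-eigenstates: the $(R \sqcup \Box_J,\phi(R))$ component carries energy $\lvert R\rvert + \mu + E_J^{\mathrm{int}}$, with the three summands coming respectively from $H\otimes 1$, from $H_{0,\mathrm{probe}}$, and from $H_{\mathrm{int}}$ evaluated on the now-generic pair of representations via \eqref{eq:Hint}. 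Conjugation by $e^{\pm i t H^{\prime}}$ produces the phase $e^{-i t E_J(\mu)}$ for each CG channel.

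The remaining algebra is an application of Schur's lemma: summing $\sum_{a,A}C^{J,A}_{R,a;\Box,p}(C^{J,A}_{R,a;\Box,q})^{\ast}$ over the internal indices yields $\delta_{p,q}\dim(R\sqcup\Box_J)/(L+1)$, after which the $p$-sum combines with the $1/(L+1)$ normalization of $\mathcal{O}_L$ and the multiplicities $\mathfrak{d}_R\tilde{\mathfrak{d}}_R\dim\phi(R)$ from the spectator indices to give
\begin{equation}
G_{L,+}(t)=\frac{1}{2\mz_L^{(N)}}\sum_{R\in\mathfrak{R}_L^{(N)}}e^{-\beta\lvert R\rvert}\mathfrak{d}_R\tilde{\mathfrak{d}}_R(\dim\phi(R))\sum_{J\in\mathscr{J}_R}\frac{\dim(R\sqcup\Box_J)}{L+1}e^{-itE_J(\mu)}.
\end{equation}
Fourier transforming produces delta functions supported at $\omega=E_J(\mu)>0$. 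Multiplying by $(1-e^{-\beta\omega})$ and discarding $e^{-\beta E_J(\mu)}=O(e^{-\beta\mu})$ (consistently with the probe approximation) yields, for $\omega>0$, an expression that matches $\tfrac{1}{2}\,{}_L\langle\Psi_\beta\lvert \DR(\omega)\rvert\Psi_\beta\rangle_L$ upon comparing with \eqref{eq:defoprhsrho} evaluated in the thermofield-double state \eqref{eq:TFDL}.

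Finally, the $\omega<0$ branch is obtained by KMS-oddness of $\rho$: cyclicity of the trace gives $\widetilde{G}_{L,+}(-\omega)=e^{-\beta\omega}\widetilde{G}_{L,+}(\omega)$, which forces $\rho(-\omega)=-\rho(\omega)$ and hence the $-\theta(-\omega)\DR(-\omega)$ contribution. The main obstacle is bookkeeping: one must carefully track all $L$ versus $L+1$ normalizations and the passage from Clebsch--Gordan squared amplitudes to representation dimensions. A subtlety worth emphasizing is that the negative-frequency support of $\rho$ arises from a formally $O(e^{-\beta\mu})$ piece of $\widetilde{G}_{L,+}$ that gets amplified to $O(1)$ by the factor $1-e^{-\beta\omega}$ at $\omega\sim-\mu$; handling this cleanly via KMS symmetry avoids an otherwise delicate direct computation of the $\mathcal{O}_L(t)\mathcal{O}_L^{\dagger}(0)$ trace on the one-probe sector.
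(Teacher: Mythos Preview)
Your argument is correct. The positive-frequency computation (expanding $\phi_L$, keeping only $\mathcal{O}_L^{\dagger}(t)\mathcal{O}_L(0)$ on the probe vacuum, Clebsch--Gordan decomposing $R\otimes\Box$, and extracting the phase $e^{-itE_J(\mu)}$ together with the multiplicity $\dim(R\sqcup\Box_J)/(L+1)$ via Schur's lemma) is essentially identical to the paper's Step~2. Where you differ is in the treatment of $\omega<0$.

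The paper obtains the negative-frequency piece by a direct computation: it evaluates $\widetilde{G}_{\mathcal{O}_L^{\dagger}}(\omega)$ on the \emph{one-probe sector} (its Step~3), finds delta support at $\omega=-E_J(\mu)$ weighted by an overall $e^{-\beta(\mu+E_J^{\mathrm{int}})}$, and then in Step~4 shows that multiplication by $(1-e^{-\beta\omega})\sim -e^{-\beta\omega}$ at $\omega\sim-\mu$ cancels this suppression exactly, leaving $-\widetilde{G}_{\mathcal{O}_L}(-\omega)$. That cancellation is of course nothing but the component-wise KMS relation $e^{-\beta\omega}\widetilde{G}_{\mathcal{O}_L^{\dagger}}(\omega)=\widetilde{G}_{\mathcal{O}_L}(-\omega)$ verified by hand. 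You bypass this entirely by invoking the exact oddness $\rho(-\omega)=-\rho(\omega)$, which follows from KMS for the self-adjoint $\phi_L$ before any approximation; since your probe-approximation formula for $\rho(\omega>0)$ is $O(1)$-accurate, oddness transports it to $\omega<0$ at the same accuracy. This is a legitimate and more economical route; the price is that it hides the physical origin of the negative-frequency support (the first excited probe state), which the paper's direct computation makes explicit and which serves as an internal consistency check of the probe approximation of Definition~\ref{def:probeapprox2}.
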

\begin{proof}
    The proof is instructive but lengthy, thus it is spelled out in detail in Appendix \ref{app:proofthmrho}. It is based on writing $\rho (\omega) = (1- e^{- \beta \omega})  \widetilde{G}_{L,+} (\omega) $, explicitly computing the right-hand side and retaining the lowest order in $e^{- \beta \mu}$. 
\end{proof}
A few remarks on the theorem are in order. 
\begin{itemize}
    \item First, $\rho (\omega)$ is manifestly odd, as it should be.
    \item Second, notice that $\rho (\omega)$ really depends on $\omega_{\mathrm{r}}$ in \eqref{eq:defomegaren}. Recall that the variable $\omega$ measure differences in the energy levels. Due to the probe mass term, the correlation functions are centered around $\mu$; that is, there is a uniform shift by $\mu$ for every $R$. The physically meaningful variable, that probes the interaction of the non-trivial quantum mechanics with the probe, is \eqref{eq:defomegaren}.
    \item Third, we are not including the additional quantum numbers $\lambda_s$ in our probe for simplicity. They can be reinstated by replacing $c_{p} ^{\dagger}$ appropriately in the definition of $\mathcal{O}_L$ in Subsection \ref{sec:probecorrel}. With an appropriate normalization, this modification inserts a ratio $\frac{\mathfrak{d}_{R \sqcup \Box_J }}{\mathfrak{d}_R}$ in the right-hand side of \eqref{eq:defoprhsrho}.
    \item Fourth, an interesting aspect of the proof of \eqref{eq:rhoisrho}, given below, is that the degeneracy factors remain spectators. As a consequence, several extensions are automatically built-in in our formula, including: the replacement $(\dim R)^2 \mapsto (\dim R)^{2-2\mathtt{g}}$ for an arbitrary integer $\mathtt{g} \in \mathbb{N}$, and refining $(\dim R)^2$ into a $q$-deformed or Macdonald measure \cite{Borodin:2011}.
\end{itemize}\par
Next, we give an equivalent characterization of \eqref{eq:rhoisrho}. In practice we will use Theorem \ref{thm:rhoandOmega} for the computations in Part \ref{part2}.
\begin{thm}\label{thm:rhoandOmega}
    With the notation as in Definition \ref{def:notationthmrho}, it holds that
    \begin{align}
    \label{eq:GtildeOmega}
        \widetilde{G}_{L,\mathrm{R}} (\omega + i \varepsilon ) &= - \frac{1}{2} ~  {}_L\langle \Psi_{\beta} \lvert \Omega (\omega + i \varepsilon)  + \Omega (-\omega - i \varepsilon) \rvert \Psi_{\beta} \rangle_L , \\
        \widetilde{G}_{L,\mathrm{A}} (\omega - i \varepsilon ) &= - \frac{1}{2} ~  {}_L\langle \Psi_{\beta} \lvert \Omega (\omega - i \varepsilon)  + \Omega (-\omega + i \varepsilon) \rvert \Psi_{\beta} \rangle_L .
    \end{align}
\end{thm}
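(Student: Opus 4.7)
The plan is to combine the standard dispersion representation of the retarded and advanced Green's functions as Cauchy transforms of the spectral density $\rho(\omega)$ with the explicit expression for $\rho$ provided by Theorem~\ref{thmrhoisrho}. Since $\Omega(\omega)$, as defined in \eqref{eq:defOmega}, is nothing but the Cauchy/Stieltjes transform of the delta-supported operator $\DR(\omega')$ of \eqref{eq:defoprhsrho}, the identity \eqref{eq:GtildeOmega} should essentially drop out after careful bookkeeping.

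First, I would write $G_{L,\mathrm{R}}(t) = i\theta(t)[G_{L,+}(t) - G_{L,-}(t)]$, where $G_{L,\pm}$ are the two orderings of the thermal Wightman function. The KMS relation $G_{L,-}(t) = G_{L,+}(t - i\beta)$, together with \eqref{eq:RhoandGrel}, yields $\widetilde{G}_{L,+}(\omega) - \widetilde{G}_{L,-}(\omega) = \rho(\omega)$, so $G_{L,+}(t) - G_{L,-}(t) = \int d\omega'\, e^{-i\omega' t}\rho(\omega')$. Fourier-transforming $i\theta(t)$ times this identity, with the standard $\omega \to \omega + i\varepsilon$ prescription to ensure convergence of the $t>0$ integral, produces the dispersion formula
\[
\widetilde{G}_{L,\mathrm{R}}(\omega + i\varepsilon) \;\propto\; \int_{\R} d\omega'\; \frac{\rho(\omega')}{\omega + i\varepsilon - \omega'},
\]
with the overall numerical prefactor fixed by the Fourier convention used in the statement.

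Next I would substitute \eqref{eq:rhoisrho} for $\rho(\omega')$ and split the integral at $\omega' = 0$. The step functions appearing in Theorem~\ref{thmrhoisrho}, combined with the change of variables $\omega' \mapsto -\omega'$ on the negative half-line, fold the two pieces onto $[0,\infty)$ and leave an integrand proportional to
\[
{}_L\langle \Psi_\beta \vert \DR(\omega') \vert \Psi_\beta \rangle_L \left[ \frac{1}{\omega + i\varepsilon - \omega'} - \frac{1}{\omega + i\varepsilon + \omega'} \right].
\]
Since \eqref{eq:defoprhsrho} writes $\DR(\omega')$ as a finite weighted sum of delta masses at the positive energies $E_J(\mu)$ with weights $\dim(R \sqcup \Box_J)/\dim R$, the two bracketed terms evaluate against \eqref{eq:defOmega} to $\Omega(\omega + i\varepsilon)$ and $-\Omega(-\omega - i\varepsilon)$ respectively, reassembling into $\Omega(\omega + i\varepsilon) + \Omega(-\omega - i\varepsilon)$ with the announced $-1/2$ prefactor. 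The advanced case is handled identically: $\theta(t) \to \theta(-t)$ restricts the time integral to $(-\infty, 0]$, and the continuation switches to $\omega - i\varepsilon$, giving the second equality.

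The main obstacle I expect is purely bookkeeping: keeping track of the signs produced by the $\theta$-function split of $\rho$, of the support of $\DR$ on the positive real axis, and of the $\pm i\varepsilon$-prescription flip in going from $\Omega(\omega \pm i\varepsilon)$ to $\Omega(-\omega \mp i\varepsilon)$. No new analytic input is required beyond the probe approximation already built into Theorem~\ref{thmrhoisrho}: in that approximation $\DR$ is, at each fixed $R$, a discrete weighted sum of point masses, so its Cauchy transform $\Omega$ is manifestly a rational function of $\omega$ with isolated poles on the positive real axis, and the dispersion manipulation is elementary.
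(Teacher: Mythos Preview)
Your proposal is correct but takes a genuinely different route from the paper. The paper's proof (Appendix~\ref{app:proofthmrho}) is a direct computation parallel to that of Theorem~\ref{thmrhoisrho}: it expands the commutator $[\phi_L(t),\phi_L(0)]$ into four pieces $\mathcal{Y}_1,\mathcal{Y}_2,\mathcal{Y}_1^\dagger,\mathcal{Y}_2^\dagger$, evaluates the thermal expectation value of each in the representation basis, Fourier transforms each resulting $e^{\pm it E_J(\mu)}$ against $\pm i\theta(\pm t)$ with the matching $\pm i\varepsilon$ prescription, and collects terms. Your approach instead bootstraps from the already-proven Theorem~\ref{thmrhoisrho} via the dispersion relation: since $\widetilde{G}_{L,\mathrm R}$ is the Cauchy transform of $\rho$ and $\Omega$ is, by inspection of \eqref{eq:defoprhsrho}--\eqref{eq:defOmega}, the Cauchy transform of $\DR$, the identity follows from \eqref{eq:rhoisrho} after splitting the integral at $\omega'=0$ and reflecting. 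Your route is more economical and makes structurally transparent the observation the paper only records afterwards (below Corollary~\ref{cor:jumprhoOmega}), namely that $\pi\Omega$ is the resolvent of $\rho$; the paper's direct computation buys independence, so that Theorems~\ref{thmrhoisrho} and~\ref{thm:rhoandOmega} stand on equal footing rather than the second being a corollary of the first plus KMS.
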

\begin{proof} The proof is done in Appendix \ref{app:proofthmrho} by direct calculation.
\end{proof}
The crucial aspect of Theorem \ref{thm:rhoandOmega} is that, thanks to \eqref{eq:rho2ptTFD}, we can equivalently compute $\rho (\omega)$ via the discontinuity equation \cite[Eq.(B.7)]{Festuccia:2006sa}
\begin{equation}
\label{eq:rhodiscG}
    \rho (\omega) = - i \lim_{\varepsilon \to 0^{+}} \left[ \widetilde{G}_{L,\mathrm{R}} (\omega + i \varepsilon ) - \widetilde{G}_{L,\mathrm{A}} (\omega - i \varepsilon ) \right] .
\end{equation}
This expression relates the support of $\rho (\omega)$ to the branch cuts of $ \widetilde{G}_{L,\mathrm{R}}$ and $ \widetilde{G}_{L,\mathrm{A}}$. 
\begin{cor}\label{cor:jumprhoOmega}
    \begin{equation}
    \begin{aligned}
        \rho (\omega) =\frac{i}{2} \lim_{\varepsilon \to 0^{+}} &  \left\{ \theta (\omega)   \left[  {}_L\langle \Psi_{\beta} \lvert \Omega (\omega + i \varepsilon) - \Omega (\omega - i \varepsilon) \rvert \Psi_{\beta} \rangle_L  \right] \right. \\
        & \left. - \theta (- \omega) \left[  {}_L\langle \Psi_{\beta} \lvert \Omega (-\omega + i \varepsilon) - \Omega (-\omega - i \varepsilon) \rvert \Psi_{\beta} \rangle_L  \right] \right\} . 
    \end{aligned}
    \end{equation}
\end{cor}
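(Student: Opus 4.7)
The strategy is to combine Theorem \ref{thm:rhoandOmega} with the standard discontinuity relation \eqref{eq:rhodiscG}, and then use the pole structure of the operator $\Omega (\omega)$ to expose the two $\theta$ functions. Concretely, I would start from
\begin{equation*}
    \rho(\omega) = - i \lim_{\varepsilon \to 0^{+}} \left[ \widetilde{G}_{L,\mathrm{R}}(\omega + i\varepsilon) - \widetilde{G}_{L,\mathrm{A}}(\omega - i\varepsilon) \right]
\end{equation*}
and plug in the two identities provided by Theorem \ref{thm:rhoandOmega}. This gives
\begin{equation*}
\begin{aligned}
\rho(\omega) = \frac{i}{2} \lim_{\varepsilon \to 0^{+}} \Big\{ & \ {}_L\langle \Psi_{\beta} \lvert \Omega(\omega + i \varepsilon) - \Omega(\omega - i\varepsilon) \rvert \Psi_{\beta} \rangle_L  \\
& + {}_L\langle \Psi_{\beta} \lvert \Omega(-\omega - i\varepsilon) - \Omega(-\omega + i\varepsilon) \rvert \Psi_{\beta} \rangle_L \Big\} .
\end{aligned}
\end{equation*}

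Next I would analyze where each discontinuity is actually supported. From \eqref{eq:defOmega}, $\Omega(z)$ is a sum of simple poles located at $z = E_J(\mu) = E_J^{\mathrm{int}} + \mu$. Since $\mu > 0$ (and in the probe regime $\mu$ is much larger than every other scale), and since the interaction energies $E_J^{\mathrm{int}} = g(R_J - J + L +1)$ from \eqref{eq:Hint} are bounded below by a constant that is negligible with respect to $\mu$, every pole of $\Omega$ sits on the positive real axis. Therefore, applying the Sokhotski--Plemelj identity
\begin{equation*}
    \frac{1}{x \pm i\varepsilon} = \mathrm{P.V.} \frac{1}{x} \mp i\pi\, \delta(x),
\end{equation*}
the combination $\Omega(\omega + i\varepsilon) - \Omega(\omega - i\varepsilon)$ yields a nonzero distributional contribution only for $\omega > 0$, while $\Omega(-\omega + i\varepsilon) - \Omega(-\omega - i\varepsilon)$ yields a nonzero contribution only for $-\omega > 0$, i.e.\ $\omega < 0$. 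This is precisely the mechanism that produces the Heaviside factors $\theta(\omega)$ and $\theta(-\omega)$ in the statement.

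Putting these two observations together, one can restrict each of the two bracketed terms to its support, and after relabelling the second line to match the orientation displayed in the corollary, the desired identity follows directly. The only step that requires care is the bookkeeping of signs when one converts the second line from $\Omega(-\omega - i\varepsilon) - \Omega(-\omega + i\varepsilon)$ to $-[\Omega(-\omega + i\varepsilon) - \Omega(-\omega - i\varepsilon)]$, but this is routine. The main (rather mild) obstacle is to justify rigorously that the $\varepsilon \to 0^{+}$ limit commutes with the thermofield-double expectation value; this is clear at finite $L$ because the sum defining $\Omega$ in \eqref{eq:defOmega} is finite and ${}_L\langle \Psi_{\beta}\lvert \cdot \rvert \Psi_{\beta}\rangle_L$ is merely a weighted sum over the discrete set $\mathfrak{R}_L^{(N)}$, so no order-of-limits issue arises.
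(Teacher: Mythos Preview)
Your proposal is correct and follows essentially the same route as the paper's own proof: combine Theorem~\ref{thm:rhoandOmega} with the discontinuity relation \eqref{eq:rhodiscG}, and then use that all poles of $\Omega$ lie on $\mathbb{R}_{>0}$ to attach the Heaviside factors. The paper states this in one sentence, while you have usefully spelled out the Sokhotski--Plemelj step and the sign bookkeeping; there is no substantive difference in method.
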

\begin{proof}
    It follows immediately from Theorem \ref{thm:rhoandOmega} and some rewriting. Noting that the singularities of $\Omega (\omega)$ are located on $\R_{>0}$, only the terms with $+\omega$ will contribute to the discontinuity if $\omega >0$ and only the terms with $-\omega$ will contribute if $\omega <0$.
\end{proof}
\begin{itemize}
    \item We will crucially resort to Corollary \ref{cor:jumprhoOmega} for the explicit calculation of the eigenvalue density in all the examples.
    \item The relation between $\rho (\omega)$ and ${}_L \langle \Psi_{\beta} \vert \Omega (\omega) \vert \Psi_{\beta} \rangle_L$ provided by Corollary \ref{cor:jumprhoOmega} for each fixed $\mathrm{sign}(\omega)$ is (up to a normalization by $\pi$) exactly the relation between any density $\rho (\omega)$ and its Stieltjes transform. That is, $\pi \Omega (\cdot ) $ is the resolvent for the spectral density $\rho (\omega)$.
\end{itemize}
We are now interested in the large $N$ limit. With the change of variables \eqref{eq:changeRtoH}, $\widetilde{G}_{L,\mathrm{R}} (\omega) $, and similarly $\widetilde{G}_{L,\mathrm{A}} (\omega) $, are in turn related via \eqref{eq:GtildeOmega} to the planar resolvent of the matrix model \eqref{eq:genericdiscreteMM} in the Veneziano limit (Definition \ref{def:VenezianoLim}). It is a standard fact that the branch cuts of the latter quantity determine the eigenvalue density of the matrix ensemble. This chain of identities intertwines the two key quantities, the spectral density $\rho (\omega)$ and the eigenvalue density, and implies the following result.\par
\begin{center}
\noindent\fbox{%
\parbox{0.98\linewidth}{%
\begin{cor}\label{c:rholargeN}
Consider the planar Veneziano limit $N\to \infty$ with $\gamma =L/N$ and $\lambda = g L$ fixed. If the discrete matrix model $\mz_L ^{(N)}$ in \eqref{eq:genericdiscreteMM} has a continuous eigenvalue density, $\lim_{N \to \infty} \supp \rho \subseteq \mathbb{R}$ is continuous. 
\end{cor}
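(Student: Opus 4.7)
The strategy is to trace the chain of identifications already laid out in the text and convert each discrete sum to an integral against the saddle-point eigenvalue density. I would start from Corollary \ref{cor:jumprhoOmega}, which expresses $\rho(\omega)$ as the discontinuity across the real axis of ${}_L\langle \Psi_{\beta}\lvert \Omega(\omega)\rvert \Psi_{\beta}\rangle_L$, and from the remark following it that $\pi\,{}_L\langle \Psi_{\beta}\lvert \Omega(\cdot)\rvert \Psi_{\beta}\rangle_L$ is literally a Stieltjes transform. The goal therefore reduces to showing that, in the planar Veneziano limit, this Stieltjes transform is the resolvent of a density with continuous support, whose continuity is inherited from the assumed continuous eigenvalue density of \eqref{eq:genericdiscreteMM}.

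First, I would unpack ${}_L\langle \Psi_{\beta}\lvert \Omega(\omega)\rvert \Psi_{\beta}\rangle_L$ explicitly using \eqref{eq:TFDL} and \eqref{eq:defOmega}. This gives a double sum: an outer sum over $R\in \mathfrak{R}_L^{(N)}$ weighted by $\frac{1}{\mz_L^{(N)}} \mathfrak{d}_R^2(\dim R)^2 e^{-\beta |R|}$, and an inner sum over $J\in \mathscr{J}_R$ of $\frac{1}{\omega- E_J(\mu)}\frac{\dim(R\sqcup \Box_J)}{\dim R}$. Next I would apply the change of variables \eqref{eq:changeRtoH}, using \eqref{eq:dimRVdm} to convert $(\dim R)^2$ into the squared Vandermonde. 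The set $\mathscr{J}_R$ translates into the positions $J$ where one can add a unit to $h_J$ while keeping the strict ordering \eqref{eq:orderedh}, and $E_J(\mu) = g(h_J+1) + \mu$. The ratio $\dim(R\sqcup\Box_J)/\dim R$ is likewise a rational function of the $h_i$'s obtained by adding $1$ to the single entry $h_J$ in the Vandermonde product.

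Now I take the planar Veneziano limit. By construction, the outer sum is exactly the partition function \eqref{eq:genericdiscreteMM}, so the normalized sum over configurations $\vec h$ localizes on the saddle-point eigenvalue density $\varrho(h)$, assumed continuous by hypothesis. The inner sum over $J$ scales as $L$ and, by standard planar arguments, becomes a single integral against $\varrho(h)$ of the form
\begin{equation*}
{}_L\langle \Psi_{\beta}\lvert \Omega(\omega)\rvert \Psi_{\beta}\rangle_L \ \xrightarrow{N\to \infty}\ \int_{\supp \varrho} \frac{\varrho(h)\,K(h)}{\omega - g(h+1) - \mu}\, dh,
\end{equation*}
where $K(h)$ is the large $N$ limit of the ratio $\dim(R\sqcup\Box_J)/\dim R$ at position $h_J=h$, itself a continuous function of $h$ on $\supp \varrho$ (a routine computation using the Vandermonde formula shows $K(h)$ is bounded on the support). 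Up to the affine change of variable $\omega \mapsto (\omega-\mu)/g - 1$, this is the Stieltjes transform of the positive measure $\varrho(h)K(h)\,dh$.

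Finally, invoking Corollary \ref{cor:jumprhoOmega}, $\rho(\omega)$ on $\omega>0$ is $\tfrac{1}{\pi}$ times the imaginary part of this Stieltjes transform, and analogously (with the opposite sign) on $\omega<0$. The Sokhotski--Plemelj formula then identifies $\supp \rho$ with the image of $\supp \varrho$ under the affine map $h\mapsto \mu+g(h+1)$ (and its reflection), which is continuous precisely when $\supp \varrho$ is. The main obstacle I anticipate is the bookkeeping for $K(h)$ and for the restriction to $J \in \mathscr{J}_R$: one must check that the discrete constraint $h_{J-1}>h_J$ does not produce a singular contribution in the continuum limit, and that the degeneracy factors $\mathfrak{d}_R$ (and any additional $\lambda_{\si}$ multiplicities) combine into an everywhere finite, continuous weight on $\supp \varrho$. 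The remarks following Theorem \ref{thmrhoisrho} suggest that these factors enter only as spectator modifications of the measure, so the argument should go through whenever the underlying matrix ensemble has a continuous saddle.
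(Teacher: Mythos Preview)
Your overall architecture is the same as the paper's: combine Corollary~\ref{cor:jumprhoOmega} with a large $N$ evaluation of ${}_L\langle \Psi_{\beta}\lvert \Omega(\omega)\rvert \Psi_{\beta}\rangle_L$ and read off $\supp\rho$ from the branch cuts. The gap is in how you handle the dimension ratio. You assert that $\dim(R\sqcup\Box_J)/\dim R$ limits to a bounded continuous weight $K(h)$, so that $\Omega$ becomes a Stieltjes transform of $\varrho(h)K(h)\,dh$. This is not what happens. From \eqref{eq:dimRVdm} one finds
\[
\frac{\dim(R\sqcup\Box_J)}{\dim R}=\prod_{j\neq J}\Bigl(1-\tfrac{1}{h_j-h_J}\Bigr),
\]
a product of $\sim L$ factors. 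Writing $h_i=Lx_i$ and expanding the logarithm, this product \emph{exponentiates}: at the saddle it becomes $\exp\bigl(-\!\int \tfrac{\varrho_\ast(x)}{x-x_J}\,dx\bigr)$, which is neither bounded nor even well defined as a function of $x_J$ on $\supp\varrho_\ast$ without a principal-value prescription. So the object you call $K(h)$ is not a tame continuous weight, and the ``routine'' claim fails.

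The paper's Lemma~\ref{lemma:OmegalargeN} is precisely the device that resolves this. It introduces the auxiliary function $\Phi_{\vec h}(\xi)=\prod_j\bigl(1-\tfrac{1}{h_j-\xi}\bigr)$ and uses Cauchy's theorem to collapse the inner sum over $J$ (together with the dimension ratio) into $\Phi_{\vec h}$ evaluated at a single point, giving $\lambda\,\Omega_{\vec h}(\omega)=\Phi_{\vec h}(\tilde\omega/g)-1$. Only then does the planar limit yield the clean formula \eqref{eq:WigthmanVeneziano}, where $\Omega$ is not a Stieltjes transform of some $\varrho K$, but the exponential of the resolvent of $\varrho_\ast$, minus one. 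The continuous support of $\rho$ then comes from the \emph{logarithmic} branch cuts of that exponential along $\supp\varrho_\ast$, rather than from a direct Sokhotski--Plemelj argument. Your conclusion would survive, but the mechanism and the intermediate object are different from what you wrote; as stated, the step ``$K(h)$ is bounded on the support'' is the point that would fail.
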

}}\end{center}\par
From the perspective of the quantum mechanics of Subsection \ref{sec:QMflavor}, and especially of the matrix model \eqref{eq:genericdiscreteMM}, it is natural to consider a planar limit in which $g$ scales as $1/L$, rather than $1/N$. Insisting on the gauge theory origin of these models, one may prefer to take $N$ as the reference scale. For the current section, our choice will result in a more convenient normalization, but the other choice would have worked equally well, simply differing by $\lambda \mapsto \gamma \lambda $.\par
\medskip
The discontinuity equation \eqref{eq:rhodiscG} establishes a direct correspondence between the support of $\rho (\omega)$ to the branch cuts of $ \widetilde{G}_{L,\mathrm{R}}$ and $ \widetilde{G}_{L,\mathrm{A}}$. The next Lemma expresses these quantities using the eigenvalue densities of the matrix models \eqref{eq:genericdiscreteMM}. Combining these two facts will lead to a proof of Corollary \ref{c:rholargeN}.
\begin{lem}\label{lemma:OmegalargeN}
    Consider the discrete matrix model \eqref{eq:genericdiscreteMM} in the Veneziano large $N$ limit. Assume it admits a non-trivial saddle point configuration and denote $\varrho_{\ast} (x)$ the corresponding saddle point eigenvalue density. Then, at leading order in the planar limit, 
    \begin{equation}
    \label{eq:WigthmanVeneziano}
          {}_L\langle \Psi_{\beta} \lvert  \Omega (\omega)  \rvert \Psi_{\beta} \rangle_L = \frac{1}{\lambda } \left[ - 1 +  \exp \left( \int \dd x \frac{ \varrho_{\ast} (x)}{\frac{\omega - \mu }{ \lambda} - x } \right) \right] .
    \end{equation}
\end{lem}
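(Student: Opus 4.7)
My plan is to first localize the expectation value on the dominant representation and then reduce $O_R(\omega):={}_L\langle R,\ldots\rvert\Omega(\omega)\lvert R,\ldots\rangle_L$ to a single contour integral that can be evaluated by deforming to infinity. Since $\Omega(\omega)$ is diagonal in the representation basis with eigenvalues independent of the internal labels, the thermofield double expectation value is
\begin{equation*}
{}_L\langle \Psi_\beta\lvert \Omega(\omega)\rvert \Psi_\beta\rangle_L \;=\; \frac{1}{\mz_L^{(N)}}\sum_{R\in\mathfrak{R}_L^{(N)}} \mathfrak{d}_R\,\tilde{\mathfrak{d}}_R\,(\dim R)(\dim\phi(R))\,e^{-\beta|R|}\;O_R(\omega) ,
\end{equation*}
which in the Veneziano limit is dominated by the saddle point configuration $R_\ast$ corresponding to the eigenvalue density $\varrho_\ast$, so it suffices to compute $O_{R_\ast}(\omega)$.

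Next I would pass to the $h$-variables. Using $\dim R=\frac{1}{G(L+2)}\prod_{i<j}(h_i-h_j)$ and the fact that adding a box at corner $J$ only shifts $h_J\mapsto h_J+1$, the ratio rewrites as
\begin{equation*}
\frac{\dim(R\sqcup\Box_J)}{\dim R}\;=\;\prod_{k\neq J}\frac{h_J+1-h_k}{h_J-h_k}\;=\;\frac{P(h_J+1)}{P'(h_J)},\qquad P(y):=\prod_{k=1}^{L+1}(y-h_k).
\end{equation*}
Since $\frac{1}{P'(h_J)}=\operatorname{Res}_{y=h_J}\frac{1}{P(y)}$, and since for $J\notin\mathscr{J}_R$ the zero of $P(y+1)$ at $h_J$ cancels the pole, the constrained sum equals the unconstrained residue sum:
\begin{equation*}
O_R(\omega)\;=\;\frac{1}{L}\oint_{C}\frac{\dd y}{2\pi i}\,\frac{P(y+1)}{P(y)}\cdot\frac{1}{\omega-\mu-g(y+1)},
\end{equation*}
where $C$ encircles only the $h_k$'s.

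Then I would deform $C$ outward past the simple pole at $y_0=(\omega-\mu)/g-1$ to a circle at infinity. The integrand behaves as $-1/(gy)$ as $y\to\infty$, producing a residue at infinity that contributes $-1/g$; the pole at $y_0$ contributes $+\frac{1}{g}\,P\!\left(\tfrac{\omega-\mu}{g}\right)/P\!\left(\tfrac{\omega-\mu}{g}-1\right)$. Combining, and using $\lambda=gL$,
\begin{equation*}
O_R(\omega)\;=\;\frac{1}{\lambda}\left[-1+\frac{P\!\left(\tfrac{\omega-\mu}{g}\right)}{P\!\left(\tfrac{\omega-\mu}{g}-1\right)}\right].
\end{equation*}
The last and most delicate step is the Veneziano continuum limit: setting $h_k=Lx_k$ with $x_k$ distributed according to $\varrho_\ast$, one has $\tfrac{\omega-\mu}{g}-h_k=L\bigl(\tfrac{\omega-\mu}{\lambda}-x_k\bigr)$, so
\begin{equation*}
\log\frac{P\!\left(\tfrac{\omega-\mu}{g}\right)}{P\!\left(\tfrac{\omega-\mu}{g}-1\right)}\;=\;\sum_{k}\log\!\left(1+\frac{1}{L\!\left(\tfrac{\omega-\mu}{\lambda}-x_k\right)-1}\right)\;\xrightarrow[L\to\infty]{}\;\int\!\dd x\,\frac{\varrho_\ast(x)}{\tfrac{\omega-\mu}{\lambda}-x},
\end{equation*}
where I invoke $\tfrac{1}{L}\sum_k\to\int\varrho_\ast(x)\,\dd x$ and the leading expansion $\log(1+u)\sim u$ valid because $\omega$ is kept off the support (the formula is then extended by analytic continuation). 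Substituting yields \eqref{eq:WigthmanVeneziano}. The main obstacle is this last step: one must justify both the suppression of the $-1$ in the denominator and the validity of the continuum replacement uniformly in the argument, which ultimately relies on $\omega-\mu$ lying outside $\lambda\cdot\supp\varrho_\ast$ --- exactly the regime where the resolvent is analytic, and from which the spectral density is recovered via the discontinuity prescription of Corollary \ref{cor:jumprhoOmega}.
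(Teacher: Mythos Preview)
Your proposal is correct and follows essentially the same route as the paper's proof: rewrite the ratio of dimensions in $h$-variables, recognize the sum over corners as a residue sum and hence a contour integral, deform the contour to infinity to pick up the pole at $(\omega-\mu)/g$ and the residue at infinity, and finally take the continuum limit by expanding the logarithm. Your characteristic polynomial $P(y)=\prod_k(y-h_k)$ is just a repackaging of the paper's auxiliary function $\Phi_{\vec h}(\xi)=\prod_j\bigl(1-\tfrac{1}{h_j-\xi}\bigr)=P(\xi+1)/P(\xi)$, and your observation that $P(y+1)$ kills the spurious poles at $J\notin\mathscr{J}_R$ is exactly the paper's remark that the extra residues vanish. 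The paper keeps the $\pm i\varepsilon$ explicit and works with $\tilde L=L+1$ rather than $L$, but these are cosmetic.
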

A different definition of planar limit in which $\lambda= gN$ is kept fixed, instead of our choice $\lambda=g L$, would simply result is a redefinition $\lambda \mapsto \gamma \lambda $ in the formula.
\begin{proof}
    The idea behind the proof of this lemma is to exploit Theorem \ref{thm:rhoandOmega} and then use Cauchy's theorem to write the Wightman function as a contour integral. In the Veneziano limit, the integral can be evaluated by a saddle point approximation, yielding \eqref{eq:WigthmanVeneziano}. The reader can consult all the details in Appendix \ref{app:proofLemmaOmega}.
\end{proof}
\begin{proof}[Proof of Corollary \ref{c:rholargeN}]
Let us start assuming that the matrix model $\mz_L ^{(N)}$ in \eqref{eq:genericdiscreteMM} admits a non-trivial saddle point, which means that $\ln \mz_L ^{(N)}$ shows the usual $O(N^2)$ growth. Lemma \ref{lemma:OmegalargeN} relates the Wightman functions $\widetilde{G}_{L,\mathrm R} (\omega) , \widetilde{G}_{L,\mathrm A} (\omega)$ to the saddle point eigenvalue density. The formula has logarithmic branch cuts if $\frac{\omega_{\mathrm{r}}}{\lambda} \in \supp \varrho_{\ast}$. Expressing $\varrho (\omega)$ through the discontinuity equation \eqref{eq:rhodiscG}, these branch cuts will contribute to $\supp \rho$ with a continuous interval in $\R$, proving the claim.\par
\end{proof}\par
If the matrix model does not possess a large $N$ scaling for the chosen values of the parameters, the large $N$ argument that led to Lemma \ref{lemma:OmegalargeN} does not apply. The lack of a large $N$ scaling prevents the coalescence of the eigenvalues of the matrix model, which do not form a continuum. It is certainly possible to take the inductive limit over $N$ of the sequence of eigenvalue densities as $N \to \infty$, but in this case the limiting eigenvalue density $\varrho_{\ast} (x)$ will be a distribution, i.e. an infinite sum of Dirac deltas. Taking the inductive $N \to \infty$ limit of Theorem \ref{thmrhoisrho} still \emph{formally} relates the eigenvalue density $\varrho_{\ast}$ to $\rho$, whenever the limit of the latter exists. This would formally imply the discreteness of $\supp \rho$ in the phase(s) in which the eigenvalues do not coalesce in the planar limit, with the caveat that we are not proving existence of the limit in this case.\par
Stated differently, we have shown that 
\begin{equation}
    \supp \varrho_{\ast} \ \text{continuous} \quad \Longrightarrow \quad \supp \rho \ \text{continuous} ,
\end{equation}
and heuristically we expect the same implication for discrete support, but there may be potential obstructions in taking the limit of Theorem \ref{thmrhoisrho} in the phase(s) that do not satisfy the assumptions of Lemma \ref{lemma:OmegalargeN}.\par
\medskip
In any case, discrete random matrix ensembles such as the ones considered throughout this section typically have a continuous density of eigenvalues in the large $N$ limit. This is an extremely widespread property of these models, so that Corollary \ref{c:rholargeN} generically applies. In Section \ref{sec:Fermi} we will have to work harder and add extra ingredients to get the situation without continuous eigenvalue density, at low temperature. In particular, the third order transitions that typically appear in the discrete ensembles, separate two phases both with continuous eigenvalue density.

\subsubsection{Intermezzo: Comparison with IOP}
We ought to stress that \eqref{eq:GtildeOmega} is closely related to \cite[Eq.(5.13)]{Iizuka:2008eb}, which in fact was a source of inspiration for the present analysis, especially in the choice of $H_{\text{int}}$. We have also chosen conventions for the probe vacuum energy that agree with \cite{Iizuka:2008eb}. Nevertheless, \eqref{eq:GtildeOmega} remains valid for a wide class of models, and also incorporates the $N$-dependent constraint, which enriches the dynamics of the toy models. A more detailed comparison is in Section \ref{sec:IOP}.\par
Another difference with the treatment in \cite{Iizuka:2008eb} is that, while the probe approximation works essentially in the same way, the shift $\omega \mapsto \omega - \mu$ due to the probe mass is removed in \cite{Iizuka:2008eb} by adding a counterterm by hand. For consistency with the probe approximation, we keep the shift by $\mu$. This has turned out to be important for $\rho (\omega)$ and the derivation of the von Neumann algebra, while this subtlety would not be appreciated in the correlation functions of $\mathcal{O}^{\dagger}_L (t) \mathcal{O}_L (0)$, which generalize the computation of \cite{Iizuka:2008eb}.

\subsection{Large \texorpdfstring{$N$}{N} algebras}
\label{sec:MMalgebra}

In order to make statements about large $N$ algebras, we need our systems to satisfy large $N$ factorization, as defined in Subsection \ref{sec:vNpivot}. In this short subsection, we argue that it is the case and deduce the types of the large $N$ algebras. 

\begin{lem}\label{lem:factor}
    Consider the discrete matrix model \eqref{eq:genericdiscreteMM} in the large $N$ Veneziano limit, and assume it admits a non-trivial saddle point configuration. At leading order, the $n$-point functions of $\phi_L (t)$ satisfy the large $N$ factorization property of Definition \ref{def:largeNfactor}.
\end{lem}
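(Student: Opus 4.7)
The plan is to combine the probe approximation of Definition \ref{def:probeapprox2} with a standard flavor-combinatorial argument in the Veneziano limit. I would first expand each insertion $\phi_L(t_j)$ via \eqref{eq:defphifromO} as $(2(L+1))^{-1/2}\sum_{p_j}(c_{p_j}(t_j) + c_{p_j}^{\dagger}(t_j))$, so that the $n$-point function becomes a sum over flavor indices $\vec{p}$ and sign patterns $\vec{\epsilon}\in\{+,-\}^{n}$ encoding creation or annihilation at each insertion, with overall prefactor $(2(L+1))^{-n/2}$. Then, applying the probe approximation, I would retain only the leading order in $e^{-\beta\mu}$, which projects the trace onto matrix elements between probe-vacuum states on the probe sector.

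The next step uses that the interaction $H_{\text{int}}$ from \eqref{eq:Hint} is diagonal on representation labels and therefore commutes with the probe number operator. A necessary condition for a nonvanishing contribution is thus that $\vec{\epsilon}$ contain equal numbers of $+$ and $-$ signs. For odd $n$ this is impossible, which establishes the vanishing of odd $n$-point functions required by Definition \ref{def:largeNfactor}. For $n = 2m$, I would reorganize the remaining contributions as a sum over ordered pairings $\varpi$ by repeated application of the canonical commutation relation $[c_p, c_q^{\dagger}] = \delta_{p,q}$ to move annihilation operators past creation operators. Each pairing generates $m$ flavor Kronecker deltas that collapse the $n$ sums over $p_j$ down to $m$ independent sums, producing a factor $(L+1)^{m}$ that cancels against the normalization $(2(L+1))^{-m}$ up to an overall $2^{-m}$. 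Combined with the sum over the two possible orderings of creation and annihilation within each pair, this reproduces exactly the two-point function $G_{L,+}(t_{\varpi(2k-1)} - t_{\varpi(2k)})$ from Subsection \ref{sec:spectral}, one for each pair in $\varpi$.

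The main obstacle is to rule out, at leading order, all configurations other than the minimal pairwise flavor contractions. Concretely, two families of corrections must be shown to be subleading: configurations in which $k > 2$ flavor indices coincide, and connected $2m$-point contributions linking several probe lines through shared matter dressings. The suppression of the former is of the usual planar type: each additional coincidence beyond the minimal pairwise one contributes a factor of $1/(L+1) = O(1/N)$ without a compensating combinatorial enhancement. The suppression of the latter follows from the assumption that \eqref{eq:genericdiscreteMM} admits a non-trivial saddle point in the Veneziano limit, which guarantees the standard planar scaling for the matter sector; in particular it allows one to apply the factorization results leading to Lemma \ref{lemma:OmegalargeN} pair by pair, so that the thermal matter trace decouples across the $m$ contractions at leading $N$. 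Under these assumptions, the leading $N\to\infty$ behavior of the $2m$-point function is exactly the Wick sum in \eqref{eq:Wick}, completing the argument.
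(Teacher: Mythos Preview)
Your flavor-combinatorial reduction is correct and matches the paper's first step: flavor symmetry forces the indices $p_j$ to pair up, and configurations with extra coincidences are down by powers of $1/(L+1)$. The gap is in what you do next. You invoke the equal-time CCR $[c_p, c_q^{\dagger}] = \delta_{p,q}$ to reorganize the correlator into Wick pairings, but the operators actually entering the $n$-point function are the \emph{time-evolved} $c_p(t) = e^{itH'} c_p e^{-itH'}$, and $H'$ contains the interaction $H_{\text{int}}$ of \eqref{eq:Hint}. The commutator $[c_p(t), c_q^{\dagger}(t')]$ is therefore not a c-number: it is an operator on $\mathscr{H}_L^{(N)}$ that depends on the representation $R$ through $H_{\text{int}}$. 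The usual Wick reduction does not go through, and the step ``each pairing $\varpi$ reproduces exactly the two-point function $G_{L,+}$'' does not follow from the CCR alone.

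The concrete obstruction is visible in the paper's $G^{(2)}$-type term, $\langle c_i(t_3)\, c_j(t_2)\, c_k^{\dagger}(t_1)\, c_l^{\dagger}(0)\rangle$: between $t_1$ and $t_2$ the probe sector carries \emph{two} excitations, and the intermediate state evolves with eigenvalue $H'(R \sqcup \Box_J \sqcup \Box_K, \phi(R))$, which is not a priori the sum of two single-probe energies $E_J(\mu)+E_K(\mu)$. Factorization is recovered only because, for a large saddle-point representation $R$, the effect of adding a second box is approximately independent of the first. This is precisely the content of the identities \eqref{eq:HintshiftJ} and \eqref{eq:approxrationdimsJ}, which the paper establishes and then uses after inserting resolutions of the identity in the representation basis and carrying out the Clebsch--Gordan combinatorics (Appendix \ref{app:LargeNproof}). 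Your appeal to Lemma \ref{lemma:OmegalargeN} does not fill this gap: that lemma computes the single-probe resolvent and says nothing about multi-probe intermediate states. What is missing from your argument is exactly the demonstration that the two-box (and, inductively, $m$-box) interaction energies and dimension ratios decouple additively to leading order in $1/N$; this is where the nontrivial saddle-point hypothesis actually enters, and it cannot be replaced by the bare CCR.
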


\begin{proof}
    The proof is done explicitly for the four-point function and then by induction on $n$. We relegate the lengthy details to Appendix \ref{app:LargeNproof}. The basic idea goes as follows.
    \begin{enumerate}[(1)]
        \item We first expand $\phi_L$ in terms of $\mathcal{O}_L, \mathcal{O}_L ^{\dagger}$, so that we reduce to study $n$-point functions of these operators. Only combinations with an equal number of $\mathcal{O}_L$ and $ \mathcal{O}_L ^{\dagger}$ give a non-vanishing contribution.
        \item We study the various non-trivial combinations by direct computation. 
        \item We approximate the thermal ensemble of representations by its saddle point approximation.
        \item We match the resulting terms with the ones predicted by the factorization property (cf. Proposition \ref{lemmaAppFactorization}), thus showing the lemma.
    \end{enumerate}
    A few remarks on this proof:
    \begin{itemize}
        \item A technical assumption in this factorization lemma is that the number $n$ of operator insertions is given from the onset and kept fixed in the large $N$ limit. This is a standard requirement, see for instance \cite{Papadodimas:2012aq} for a neat discussion of this and related matters.
        \item Theorem \ref{thmrhoisrho} shows that the two-point function ${}_L \langle \Psi_{\beta} \lvert \mathcal{O}_L^{\dagger} (t_2) \mathcal{O}_L (t_1) \rvert \Psi_{\beta} \rangle_L $ behaves as the expectation value of a single-trace operator in the ensemble of representations \eqref{eq:ZLNCharExp}. The lemma morally follows from this fact and the factorization properties of matrix models at large $N$. However, there are complications due to the time dependence, which make the proof more technical and are dealt with in Appendix \ref{app:LargeNproof}.
        \item We warn the reader of some subtleties to pass from (2) to (3), which are dealt with in Appendix \ref{app:completeAppD}. The hypotheses of the planar limit and that the saddle point corresponds to a representation of large size are both crucial.
        \item Let us mention that it should be possible to give a Feynman diagram derivation of the factorization property. The models we consider generalize \cite{Iizuka:2008eb}, and one should be able to perform a computation of the correlation functions through more standard QFT techniques, very much along the lines of \cite{Michel:2016kwn}, and show that they factorize. Such a formulation is worth to be studied in detail and is left as an open problem.
    \end{itemize}
\end{proof}\par
\medskip
The results so far are succinctly summarized in the statement:
\begin{center}\setlength{\fboxrule}{1.9pt}
\noindent\fbox{%
\parbox{0.89\linewidth}{%
\begin{stm}[Corollary \ref{cor:jumprhoOmega} and Theorem \ref{thm:Derezinski}]\label{thm:istype3}
The correlation functions of the probe operators in our quantum systems are reproduced by large $N$ von Neumann algebras of Type III$_1$.
\end{stm}\setlength{\fboxrule}{.4pt}
}}\end{center}\par
The main technical achievements of this section are \eqref{eq:rhodiscG}-\eqref{eq:WigthmanVeneziano}. Combining these two, we immediately get Corollary \ref{cor:jumprhoOmega}, whose assumptions are generically satisfied by the ensembles of representations \eqref{eq:ZLNCharExp}. Combining this fact with Theorem \ref{thm:Derezinski}, we have that, generically, the quantum systems constructed and studied in this section have an associated Type III$_1$ von Neumann algebra.\par
We thus state the more precise version of the qualitative Statement \ref{thm:istype3}.
\begin{thm}\label{thm:istype3Rigor}
    Consider the quantum systems of Subsection \ref{sec:QMflavor}, coupled to a probe as detailed in Subsection \ref{sec:probe}. The Hilbert space is $\mathscr{H}^{\mathrm{tot}} _L$ \eqref{eq:Hilbtot} and the Hamiltonian is $H^{\prime}$ \eqref{eq:defHprime}. Assume that the parameters entering the definition \eqref{eq:ZLNCharExp} are such that the eigenvalue density of the discrete matrix model \eqref{eq:genericdiscreteMM} has a saddle point approximation $\varrho_\ast(x)$ which coalesces to a continuum. Then, their large $N$ algebras of operators are von Neumann algebras of Type III$_1$.\par
    Additionally, the models based on the ensembles \eqref{eq:cIOPtableaux}, \eqref{eq:fermionMMRepsNosum}, \eqref{eq:ZEx3def}, \eqref{eq:YM2MM} satisfy the above hypothesis $\forall ~0<\beta<\infty$.
\end{thm}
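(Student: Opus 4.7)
The plan is to split the argument into a model-independent part, which essentially assembles the machinery already developed in Sections \ref{sec:vNtot}--\ref{sec:spectral}, and a model-specific part, which requires inspecting the four ensembles listed in the statement. First, I would fix the data $(\mathscr{H}^{\mathrm{tot}}_L, H', \lvert \Psi_\beta \rangle_L, \phi_L)$ of Subsection \ref{sec:probe} and single out the probe field $\phi_L(t)$ as the candidate observable whose correlation functions will define the large $N$ algebra via the Araki--Woods construction of Subsection \ref{sec:vNpivot}.

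The core of the argument, under the hypothesis that the discrete matrix model \eqref{eq:genericdiscreteMM} admits a saddle point with continuous eigenvalue density $\varrho_\ast(x)$, proceeds in three steps. Step one: Lemma \ref{lem:factor} shows that the $n$-point functions of $\phi_L(t)$ satisfy the bosonic large $N$ factorization property of Definition \ref{def:largeNfactor}; this is precisely the input required for the Araki--Woods construction to produce a well-defined large $N$ von Neumann algebra $M$ whose quasi-free state reproduces the limiting correlators. Step two: by Lemma \ref{lemma:OmegalargeN} and Corollary \ref{c:rholargeN}, the continuity of $\mathrm{supp}\,\varrho_\ast$ forces the branch cuts of ${}_L\langle \Psi_\beta \vert \Omega(\omega) \vert \Psi_\beta \rangle_L$ to be a genuine continuum in $\omega_{\mathrm{r}}$, and via the discontinuity formula \eqref{eq:rhodiscG} combined with Corollary \ref{cor:jumprhoOmega} one concludes that $\mathrm{supp}\,\rho$ contains a nontrivial continuous interval in $\R$. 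Step three: since $\rho(\omega)$ is a nonzero continuous density on that interval, the operator $\gamma_\beta$ defined in \eqref{eq:rhoandgamab} is multiplication by $e^{-\beta|\omega|}$ on $L^2(\R,\rho)$ and thus has continuous spectrum; applying the classification \eqref{eq:typefromrho} (i.e.\ Theorem \ref{thm:Derezinski}) to this $\gamma_\beta$ yields that $M$ is of type III$_1$. This chain delivers the general claim.

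For the second assertion, I would verify, ensemble by ensemble, that the saddle-point eigenvalue density of the discrete matrix model \eqref{eq:genericdiscreteMM} associated to each of \eqref{eq:cIOPtableaux}, \eqref{eq:fermionMMRepsNosum}, \eqref{eq:ZEx3def}, \eqref{eq:YM2MM} is supported on a continuous interval for every $\beta\in(0,\infty)$. The strategy in each case is to pass from the unitary matrix model \eqref{eq:ZLNUMM} to its discrete counterpart via the map $\iota$ and the change of variables \eqref{eq:changeRtoH}, write down the planar saddle-point equation for the resulting Vandermonde-times-weight ensemble, and check that the equilibrium measure is absolutely continuous with respect to Lebesgue measure on its support. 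In practice, this follows from the fact that the effective potential in each of these models is strictly convex on the admissible region of $\mathbb{R}_{\ge 0}$ and the Vandermonde repulsion spreads the eigenvalues; the four ensembles are precisely those analyzed in detail in Part \ref{part2} (Sections \ref{sec:variationsIOP}--\ref{sec:YM2Global}), so I would invoke the explicit closed-form densities derived there, each of which is manifestly supported on an interval $[a_-,a_+]\subset \R_{\ge 0}$ with $a_-<a_+$ for all finite positive $\beta$.

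The main obstacle I anticipate is a subtle one: the hypotheses behind Lemma \ref{lem:factor} and Lemma \ref{lemma:OmegalargeN} presuppose a genuine planar saddle with the standard $O(N^2)$ scaling of $\ln \mz_L^{(N)}$ and with a representation of ``large size'' in the sense flagged in Appendix \ref{app:completeAppD}. One must therefore check that, in the Veneziano limit with $\gamma=L/N$ and $\lambda=gL$ fixed, the saddle for each of the four listed ensembles is indeed of this type, rather than being dominated by atypically small Young diagrams or exhibiting a degeneration at some critical $\beta$. Once this is in hand, the conversion from continuity of $\varrho_\ast$ to continuity of $\mathrm{supp}\,\rho$ and then to type III$_1$ is automatic from Step two and Step three above, and there is no further restriction on $\beta$ beyond $0<\beta<\infty$ because the discrete ensembles under consideration have no Hagedorn-type breakdown at this stage --- that phenomenon is postponed to the extended systems of Section \ref{sec:Fermi}.
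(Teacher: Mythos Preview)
Your proposal is correct and follows essentially the same route as the paper's own proof: combine Lemma \ref{lem:factor} (large $N$ factorization, hence the Araki--Woods construction applies) with the chain Lemma \ref{lemma:OmegalargeN} $\to$ Corollary \ref{c:rholargeN}/Corollary \ref{cor:jumprhoOmega} (continuous $\varrho_\ast$ implies continuous $\mathrm{supp}\,\rho$), then invoke Theorem \ref{thm:Derezinski}/\eqref{eq:typefromrho} to conclude type III$_1$; for the four listed ensembles, defer to the explicit eigenvalue densities computed in Part \ref{part2}. The paper's proof is terser but invokes exactly the same ingredients, and your identification of the saddle-point non-triviality as the key hypothesis to verify matches the paper's caveats.
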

\begin{proof}
    Corollary \ref{cor:jumprhoOmega} holds by assumption. Moreover, the technical result of Lemma \ref{lem:factor} shows that the hypotheses of Theorem \ref{thm:Derezinski} are satisfied. Combining the two yields the first part of the result.\par
    For the ensembles listed in the theorem, it is known in the literature and we will show explicitly in Part \ref{part2} that they satisfy the hypothesis on the non-trivial saddle point in the Veneziano limit. 
\end{proof}

\section{Systems with Hagedorn transitions}
\label{sec:Fermi}

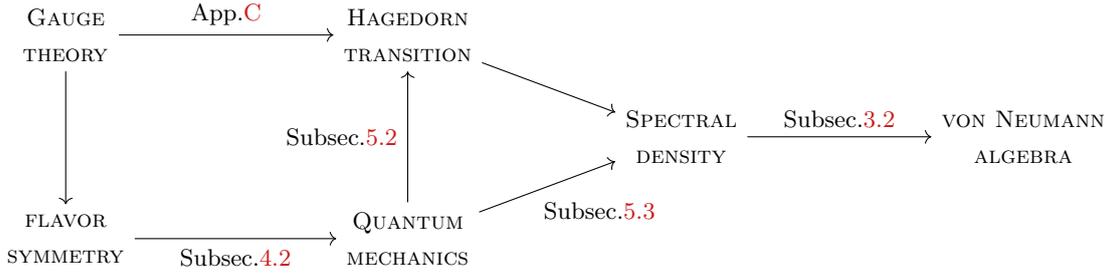
\begin{figure}[th]
\centering
\begin{tikzpicture}[scale=0.9]
\node[align=center] (gt) at (-9,3) {\textsc{\footnotesize Gauge}\\ \textsc{\footnotesize theory}};
\node[align=center] (fs) at (-9,0) {\textsc{\footnotesize flavor}\\ \textsc{\footnotesize symmetry}};
\node[align=center] (ht) at (-4,3) {\textsc{\footnotesize Hagedorn}\\ \textsc{\footnotesize transition}};
\node[align=center] (qm) at (-4,0) {\textsc{\footnotesize Quantum}\\ \textsc{\footnotesize mechanics}};
\node[align=center] (sd) at (0,1.5) {\textsc{\footnotesize Spectral}\\ \textsc{\footnotesize density}};
\node[align=center] (vn) at (5,1.5) {\textsc{\footnotesize von Neumann}\\ \textsc{\footnotesize algebra}};

\path[->] (gt) edge node[anchor=south] {\footnotesize App.\ref{sec:PTMM}} (ht);
\draw[->] (gt) -- (fs);
\path[->] (fs) edge node[anchor=north] {\footnotesize Subsec.\ref{sec:QMflavor}} (qm);
\path[->] (qm) edge node[anchor=east] {\footnotesize Subsec.\ref{sec:FermionicMM}} (ht);
\path[->] (qm) edge node[anchor=north west,pos=0.4] {\footnotesize Subsec.\ref{sec:specAvg}} (sd);
\draw[->] (ht) -- (sd);
\path[->] (sd) edge node[anchor=south] {\footnotesize Subsec.\ref{sec:vNalgebra}} (vn);

\end{tikzpicture}
\caption{Chart of the main concepts explored in this section.}
\label{fig:flowchart}
\end{figure}

The initial inputs to derive the quantum systems of the previous section were unitary matrix integrals. Such systems generically do not exhibit a sharp Hagedorn-like behavior; rather, they tend to display a third order phase transition \cite{Gross:1980he,Wadia:1980cp,Wadia:2012fr}. Mapping the models to a Hilbert space of states organized in representations of the flavor symmetry, these third order phase transitions crop up in the discrete matrix models and are due to the gauge constraint indexed by $N$. In this section, we will describe a general procedure to promote these systems to systems with a first order Hagedorn-like transition. The basic idea, presented in Subsection \ref{sec:QMLsum}, is to introduce an extended Hilbert space with sectors indexed by the rank $L$ of the flavor symmetry group.\par
\medskip
The central result of this work is to assign a quantum system to a matrix model and determine the type of its von Neumann algebra of operators. It hinges upon three main steps:
\begin{itemize}
    \item[1.] Show how the type of von Neumann algebra is encoded in the K\"all\'en--Lehmann spectral density;
    \item[2.] Obtain the planar limit of the quantum mechanical systems;
    \item[3.] Compute the K\"all\'en--Lehmann spectral densities in the quantum systems.
\end{itemize}
In Section \ref{sec:vNtot} we have already explained how the type of von Neumann algebra of operators is determined by looking at the K\"all\'en--Lehmann spectral density. The main outcome is \eqref{eq:typefromrho}. This section lies down the remaining two of the three main steps.
\begin{itemize}
    \item[2.] In Subsection \ref{sec:FermionicMM} we discuss the phase structure of our new class of models. The outcome is summarized in \eqref{eq:FermMMHagedorn}.
    \item[3.] In Subsection \ref{sec:specAvg} we compute the spectral densities for our class of quantum mechanical models, obtaining \eqref{eq:rhofromHagedorn}. 
\end{itemize}
The three steps are put together in Subsection \ref{sec:summary3steps}. The web of relations between the various concepts is sketched in Figure \ref{fig:flowchart}. The take-home message of this central section is that a type III$_1$ algebra can only emerge above the Hagedorn temperature.

\subsection{Extended quantum mechanical system}
\label{sec:QMLsum}
In this section we go one step beyond the unitary one-matrix models \eqref{eq:ZLNUMM}, and construct models whose partition function takes the form:
\begin{equation}
\label{eq:ZFissumLUMM}
    \mz (\mathfrak{q},y) := \sum_{L \ge 0 } \mathfrak{q}^{L^2} \mz_L ^{(N)} (y)  .
\end{equation}
We have introduced a weighted sum over the flavor symmetry rank $L$, with weight $\mathfrak{q}^{L^2}$ controlled by the free parameter $\mathfrak{q}$, which we will equivalently write 
\begin{equation}
\label{eq:defqexpa}
    \mathfrak{q}= \exp \left( - \frac{1}{2a} \right) , \qquad a>0 .
\end{equation}
The idea will be to interpret \eqref{eq:ZFissumLUMM} as the partition function of an extended quantum system with sectors indexed by the flavor symmetry rank, and an extra ``fugacity'' $\mathfrak{q}$ associated to these sectors.\footnote{Viewing $ \mz_L ^{(N)}$ as a statistical ensemble of representations, the importance of considering generating functions like \eqref{eq:ZFissumLUMM} has been advocated for in the mathematical literature \cite{Betea}.} We stress that the sum is over the rank of the flavor symmetry of the systems, while the integer $N$, related to the rank of the gauge group, is not summed. Thus $N$ remains as a genuine parameter of the extended systems.\par
To be precise, the convergence of \eqref{eq:ZFissumLUMM} is not guaranteed in general, and a more rigorous definition requires to truncate it to a large value $L_{\max}$. Our regularization is as follows: we choose a very large number $\gamma_{\max}\gg 1$ and let $L_{\max}(N):=\lfloor{\gamma_{\max}N}\rfloor+1$, and define 
\begin{equation}
    \mz (\mathfrak{q},y) := \sum_{L=0} ^{L_{\max}(N)} \mathfrak{q}^{L^2} \mz_L ^{(N)} (y)  .
\end{equation}
This implies that the sum has linearly many terms in $N$. We may let $\gamma_{\max}$ depend on the inverse temperature $\beta$, in a convenient way depending on the specific model of interest in each case. In practice, the idea is to wisely choose a regularization scheme $\gamma_{\max}$ so that $\mz$ will not miss the interesting physical phenomena for a vast range of temperatures.\par
\medskip
It is shown in Appendix \ref{sec:PTMM} that a common feature of the ensemble \eqref{eq:ZFissumLUMM} is to promote a unitary matrix model \eqref{eq:ZLNUMM} with third order phase transition to a model with first order, Hagedorn-like phase transition. This statement can equivalently be argued for from the perspective of a discrete matrix model, using the rewriting of $\mz_L ^{(N)}$ from Subsection \ref{sec:RepToFermi}. This approach is presented in full generality in Subsection \ref{sec:FermionicMM}, and exemplified in Part \ref{part2}.
\begin{defin}
Let $\mz (\mathfrak{q},y)$ be as in \eqref{eq:ZFissumLUMM}. The parameter space of the model is 
\begin{equation}
    \left\{ 0< \lvert \mathfrak{q}\rvert  <1  , \ 0 \le \arg (\mathfrak{q}) < 2 \pi \right\} \times \left\{ y >0 \right\}  .
\end{equation}
We will set $\arg (\mathfrak{q})=0$ throughout. The \textit{constant-$ \mathfrak{q}$ slice} of the parameter space is the region $\left\{ y>0 \right\} $ at a fixed real value $0 < \mathfrak{q} < 1$.\par
The \textit{Schur slice} of the parameter space is the region $\mathfrak{q} = \sqrt{y}$ and we call the partition function on the Schur slice the limit 
\begin{equation}
\label{eq:schurZFUMM}
	\mz (y) := \mz (\sqrt{y},y) .
\end{equation}\par
For $a$ as in \eqref{eq:defqexpa} and $y=e^{-\beta}$, we restrict the parameter space to be the positive quadrant $(a, \beta) \in \R_{>0} \times \R_{>0}$. The first kind of slice is a slice of constant $a>0$, and the Schur slice is $a=\beta^{-1}$.
\end{defin}
\begin{itemize}
    \item Note that the physical interpretation of the Schur slice is particularly natural, as it amounts to adding a \emph{vacuum energy} to each sector and considering a thermal state with no other fugacity involved.
    \item The nomenclature \emph{Schur slice} is chosen in (vague) analogy with the degeneration of the Macdonald polynomials into Schur polynomials when the two fugacities are equal.
    \item In the constant-$ \mathfrak{q}$ slice we assume that $0<\mathfrak{q}<1$ is a given number and describe the structure of the theory as a function of $y$ alone.
    \item Finally, note that $\mathrm{Arg} (\sqrt{y})$ is not a free parameter, because it can be reabsorbed by a gauge transformation, equivalently in a change of variables in \eqref{eq:ZLNUMM}.
\end{itemize}\par
Now, the quantity $\mz (\mathfrak{q},y)$ can be identified with the partition function of a quantum system living on the extended Hilbert space
\begin{align}
\label{eq:defextHN}
\mathscr{H}^{(N)} :=\bigoplus_{0\leq L\leq L_{\max}(N)}\mathscr{H}_L ^{(N)} = \bigoplus_{0\leq L\leq L_{\max}(N)} \bigoplus_{R \in \mathfrak{R}_L^{(N)} } \mathscr{H}_L (R) \otimes  \mathscr{H}_L (\phi(R)) , 
\end{align}
where the second equality recalls the definition of the gauge-invariant Hilbert space $\mathscr{H}^{(N)} _L$ from \eqref{eq:HLNintoRLN}.
More precisely, we have:
\begin{prop}
With the notation as just explained, it holds that 
\begin{equation}
\label{eq:ZthermalQM}
	\mz (\mathfrak{q}, e^{-\beta}) =\sum_{L=0}^{L_{\max}(N)} \mathfrak{q}^{L^2} ~ \mathrm{Tr}_{\mathscr{H}_L ^{(N)}} \left( e^{- \beta H} \right) .
\end{equation}
\end{prop}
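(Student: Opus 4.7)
The plan is to observe that this proposition is an essentially immediate consequence of the earlier theorem \eqref{eq:ZisUMM}, which established the identity $\mz_L ^{(N)} (e^{-\beta}) = \mathrm{Tr}_{\mathscr{H}_L ^{(N)}} ( e^{- \beta H} )$ for each individual flavor sector $L$. The only additional ingredient needed is to handle the weighted sum over $L$ properly, which is clean because our regularization prescription makes the sum finite.

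First I would start from the regularized definition $\mz(\mathfrak{q}, y) := \sum_{L=0}^{L_{\max}(N)} \mathfrak{q}^{L^2}\, \mz_L^{(N)}(y)$ and specialize to $y = e^{-\beta}$. Then, for each $L$ in the range $0 \le L \le L_{\max}(N)$, I would invoke \eqref{eq:ZisUMM} to replace $\mz_L^{(N)}(e^{-\beta})$ by $\mathrm{Tr}_{\mathscr{H}_L^{(N)}}(e^{-\beta H})$. Since the sum has only finitely many terms by construction, this substitution can be done term-by-term without any convergence subtleties, yielding \eqref{eq:ZthermalQM}.

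To give the identity a more intrinsic meaning, I would also note that the right-hand side has a natural interpretation directly on the extended Hilbert space $\mathscr{H}^{(N)}$ defined in \eqref{eq:defextHN}. Indeed, if one promotes the Hamiltonian $H$ to act blockwise on $\mathscr{H}^{(N)} = \bigoplus_{L=0}^{L_{\max}(N)} \mathscr{H}_L^{(N)}$ and introduces the auxiliary flavor-rank operator $\mathsf{L}$ that acts as multiplication by the scalar $L$ on the sector $\mathscr{H}_L^{(N)}$, then writing $\mathfrak{q} = e^{-1/(2a)}$ as in \eqref{eq:defqexpa} recasts \eqref{eq:ZthermalQM} as
\begin{equation}
\mz(\mathfrak{q}, e^{-\beta}) = \mathrm{Tr}_{\mathscr{H}^{(N)}}\!\left( e^{- \beta H - \frac{1}{2a}\mathsf{L}^2} \right),
\end{equation}
which manifestly has the form of a grand canonical partition function on the extended system, with $\mathfrak{q}^{L^2}$ playing the role of a fugacity for the conserved quantity $\mathsf{L}^2$. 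This reformulation is not strictly needed for the statement, but I would include it to make the physical interpretation transparent and to set up the discussion in Subsection \ref{sec:FermionicMM}.

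Since everything reduces to applying an already-proved theorem in each summand and the sum is finite by the regularization, there is no substantial obstacle. The only point where one must be a bit careful is to emphasize that the regularization cutoff $L_{\max}(N)$ is what ensures the manipulation is rigorous; without it, one would need to separately justify convergence, and indeed the very reason such a regularization was introduced in the discussion preceding \eqref{eq:ZFissumLUMM} was precisely that the formal series need not converge in general.
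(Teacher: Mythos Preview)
Your proposal is correct and matches the paper's approach: the paper states this proposition without an explicit proof, treating it as an immediate consequence of the definition of $\mz(\mathfrak{q},y)$ together with the earlier identity \eqref{eq:ZisUMM} applied term-by-term. Your additional grand-canonical rewriting is not in the paper but is consistent with the surrounding discussion and harmless.
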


Once again, note that in the case of the Schur slice, the new fugacity $\mathfrak{q}$ itself depends on the temperature, in such a way that \eqref{eq:ZthermalQM} can be interpreted as the partition function of a quantum system at inverse temperature $\beta$ with a vacuum energy.
\subsection{Hagedorn transition and partition function}
\label{sec:FermionicMM}

In this subsection we establish the phase structure of the extended models \eqref{eq:ZFissumLUMM} at large $N$.\par
Remember that the scope of the whole construction is to derive a tractable quantum mechanical system from a gauge theory on $\mathbb{S}^{d-1} \times \mathbb{S}^1$, whenever the partition function of the latter can be expressed in the form \eqref{eq:genericUMMf}. We now proceed to show that, passing to the extended systems introduced in Subsection \ref{sec:QMLsum}, we generically obtain models with a first order phase transition, in which the partition function stays finite at large $N$ for $\beta^{-1}<T_H$ and diverges like $O(e^{N^2})$ when $\beta^{-1}>T_H$, where $T_H$ is a critical temperature to be determined in each case.\par

\subsubsection{Physical significance}

Before diving in the analysis of the phase structure, a disclaimer is in order. In a realistic holographic model of a black hole, we expect the existence of two distinct temperatures: $T_{\mathrm{HP}}$ and $T_{\mathrm{breakdown}}$. At $T= T_{\mathrm{HP}}$, the thermal AdS solution and the black hole solution exchange their dominance, in a process known as Hawking--Page transition. At $T = T_{\mathrm{breakdown}} \ge T_{\mathrm{HP}}$, the perturbative expansion around the thermal AdS solution breaks down. Since the gauge theory interpretation of our matrix models holds only at very weak gauge coupling \cite{Sundborg:1999ue,Aharony:2003sx}, it is expected that $T_{\mathrm{breakdown}} \approx T_{\mathrm{HP}}$ \cite{Aharony:2003sx,Alvarez-Gaume:2005dvb}. In the light of this discussion we simply refer to the transitions we observe as Hagedorn transitions \cite{Aharony:2003sx,Alvarez-Gaume:2005dvb}.\par

\subsubsection{Hagedorn transitions from matrix models}
\label{sec:1PTargument}

The idea is schematically as follows. Consider the fixed-$L$ unitary matrix model \eqref{eq:ZLNUMM} in the planar large $N$ limit, as a function of the Veneziano parameter $\gamma$ (as introduced in Definition \ref{def:VenezianoLim}). Assume it has a third order phase transition at a critical curve $\gamma = \gamma_c (\beta)$ in the $(\beta,\gamma)$-plane. This phase transition is mapped to a third order phase transition in the discrete matrix model \eqref{eq:genericdiscreteMM}.\par
Passing to \eqref{eq:ZFissumLUMM} with $\mathfrak{q}= e^{-1/(2a)}$, we write 
\begin{equation}
    \mz (e^{-\frac{1}{2a}}, e^{-\beta}) = \sum_{L=0} ^{L_{\max}(N)} \exp \left[ - N^2 \left(\frac{\gamma^2 }{2a} - \mf (\gamma, \beta ) + \cdots \right) \right]
\end{equation}
where the $N^2$ scaling is the typical growth of matrix models, so that $\mf (\gamma, \beta)=O(1) $, or more precisely 
\begin{equation}
    \lim_{N \to \infty } \left. \frac{1}{N^2} \ln \mz_{L} ^{(N)} (e^{-\beta}) \right\rvert_{L= \gamma N } = \mf (\gamma, \beta ) ,
\end{equation}
is finite and independent of $N$, and the ellipses indicate sub-leading contributions. Moreover, by assumption, we have 
\begin{equation}
    \mf (\gamma,\beta) = \begin{cases} \mf_{-} (\gamma , \beta) & \gamma \le \gamma_c  \\ \mf_{+} (\gamma , \beta) & \gamma > \gamma_c , \end{cases}
\end{equation}
with $\mf_{-} $ and $\mf_{+}$ coinciding up to the second derivative at $\gamma_c$, but distinct away from the critical curve $\gamma_c (\beta)$.\footnote{Abstractly, $\mf_{\pm}$ are local functions on a two-dimensional manifold with local coordinates $(\gamma, \beta)$, whose 2-jets are equal at every point of a codimension-one submanifold parametrized by the embedding $\beta \mapsto \gamma_c (\beta)$.}\par
At large $N$, the sum over $L$ effectively enforces an average over $\gamma$. The leading planar contribution to $\mz$ comes from the saddle points $\gamma_{\ast}$ of the quantity $\gamma^2 /(2a) - \mf (\gamma, \beta )$. We can start by looking for the saddle point in the phase $\gamma>\gamma_c$, for which $\gamma_{\ast}$ is determined by 
\begin{equation}
\label{eq:genericSPEgamma}
    \left. \frac{\partial \ }{\partial \gamma} \mf_{+} (\gamma , \beta) \right\rvert_{\gamma_{\ast}} = \frac{\gamma_{\ast} }{a} .
\end{equation}
The solution is $\gamma_{\ast} = \gamma_{\ast} (\beta, a)$. In each constant-$a$ slice of the parameter space, $\gamma_{\ast}$ is moved as we dial the inverse temperature, and in the Schur slice $a = \beta^{-1} $ the saddle point $\gamma_{\ast}$ is uniquely determined as a function of $\beta$.\par
\begin{figure}[tb]
\centering
\includegraphics[width=0.5\textwidth]{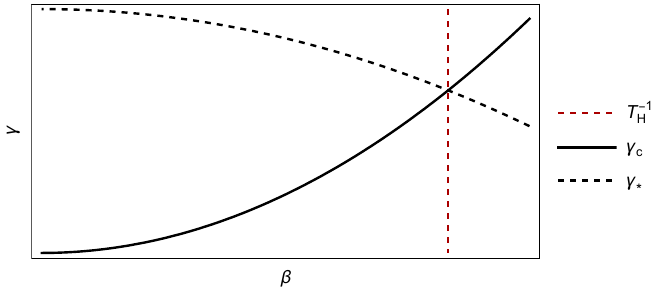}
\caption{Illustration of a constant-$a$ slice of the parameter space. The intersection of the curves $\gamma_c$ and $\gamma_{\ast}$ produces a phase transition. The solution $\gamma_{\ast}$ to \eqref{eq:genericSPEgamma} is valid only for values of $\beta$ such that $\gamma_{\ast} > \gamma_c$, on the left of the dashed vertical line.}
\label{fig:gammavsbeta}
\end{figure}\par
\medskip
However, as we dial $\beta$, $\gamma_{\ast} $ may cross $\gamma_c$. At that precise value of $\beta$, $\mf_{+}$ in \eqref{eq:genericSPEgamma} is replaced by $\mf_{-}$. The explicit form of the saddle point equation for $\gamma_{\ast}$ changes. Therefore, the value of $\gamma_{\ast}$ jumps at that critical value of $\beta$, which we denote $1/T_H$.\par
More precisely, in any fixed slice $a=a(\beta)$ (possibly constant) of the $(a,\beta)$-parameter space, there are two curves 
\begin{equation}
\gamma_c (\beta) \ \text{ and } \ \gamma_{\ast} (\beta) \ \subset \ (\beta, \gamma)\text{-plane} .
\end{equation}
The curve $\gamma_c (\beta)$ is independent of $a$ and remains the same in every slice. The curve $\gamma_{\ast} (\beta)$ is fibered in the parameter space over the $a$-direction $\R_{>0}$. Assume that, in a chosen slice of the parameter space, the curve $\gamma_{\ast} (\beta)$ obtained solving \eqref{eq:genericSPEgamma} intersects $\gamma_c (\beta)$. Then, the solution is consistent only for values of $\beta$ such that $\gamma_{\ast} > \gamma_c$, see Figure \ref{fig:gammavsbeta}. Crossed that critical inverse temperature, namely $1/T_H$, the new branch of the curve $\gamma_{\ast} (\beta)$ solves a different equation.\par

When $\beta^{-1}$ crosses $T_H$ a new value $\gamma_{\ast}$ becomes the saddle point. This discontinuity will produce a phase transition in the ensemble \eqref{eq:ZFissumLUMM} at a critical temperature $T_H$. The transition can in principle be first or second order. As we will show in examples and further support with a general prescription in Appendix \ref{sec:PTMM}, a widespread situation in this type of ensembles is that 
\begin{equation}
\label{eq:gammaastvanish}
    \gamma_{\ast} (\beta) = 0 \qquad \text{ if } \beta^{-1} < T_H .
\end{equation}
In fact, under mild assumptions, 
\begin{equation} 
 \mf_{-} (\gamma, \beta)= \gamma^2 f_{-} (\beta)
\end{equation}
for some smooth function $f_{-}$ (cf. Lemma \ref{lem:Szego0}),\footnote{Both the presence of a third order transition and the form of $\mf_\pm$ are consistent with and supported by \cite{Schnitzer:2004qt}, which studies QCD-like theories on $\cs^{d-1}\times \cs^1$ in the Veneziano limit.} whence the trivial saddle \eqref{eq:gammaastvanish} in the low temperature phase, valid for $\frac{1}{2a} > f_{-} (\beta)$.\par
The trivial saddle point kills the $O(N^2)$ growth of $\ln \mz$, leading to a first order\footnote{Strictly speaking, ``weakly'' first order, because there is no coexistence of saddles.} phase transition at $T_H$. When this is the case, Gaussian integration in the phase $1/\beta<T_H$ leads, near the critical point, to 
\begin{equation}
\mz \sim \frac{1}{\sqrt{T_H-\beta^{-1}}} , 
\end{equation}
which in turn is associated with Hagedorn behavior as the critical temperature $T_H$ is approached from below. Strictly at $N=\infty$ we have $\mz (\beta^{-1}<T_H)= O(1) $ and $\mz (\beta^{-1}>T_H)= \infty$, with $\ln \mz$ diverging proportionally to $N^2$.\par
\medskip
More precisely, we have the following result:
\begin{prop}
    Assume that if $\beta^{-1}<T_H$, there exist $c_0,K>0$ independent of $L$ such that for all $N$, 
    \begin{align}
        q^{L^2}\mz_{L} ^{(N)} (e^{-\beta})\leq Ke^{-c_0 L^2}.
    \end{align} 
    Also assume that if $\beta^{-1}>T_H$, there exist $A,B,c_1,c_2$ with $0<c_1<c_2$, $A>0,B>0$, such that for all $N$, the largest contribution to the sum is comprised in $[Ae^{c_1 N^2},Be^{c_2 N^2}]$. Then, 
    \begin{equation}
        \ln \mz (e^{-\beta})  = \begin{cases} O(1) & \beta^{-1} < T_H \\ O(N^2) & \beta^{-1}>T_H . \end{cases}
    \end{equation}
\end{prop}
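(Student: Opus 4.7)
The proposition is essentially a bookkeeping exercise once the hypotheses are taken for granted, so the plan is simply to split into the two temperature regimes and apply the bounds termwise. The regularization $L_{\max}(N) = \lfloor \gamma_{\max} N \rfloor + 1$ will play a quiet but important role: it guarantees that in the high-temperature phase, the number of summands only grows linearly in $N$, and in the low-temperature phase it is consistent with extending the sum to infinity for the purpose of an upper bound.

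For the low temperature regime $\beta^{-1} < T_H$, I would write
\begin{equation}
    \mz(e^{-\beta}) \;=\; \sum_{L=0}^{L_{\max}(N)} \mathfrak{q}^{L^2} \mz_L^{(N)}(e^{-\beta}) \;\le\; K \sum_{L=0}^{\infty} e^{-c_0 L^2} \;=:\; C ,
\end{equation}
where the bound uses the first hypothesis and extends the truncated sum to the convergent Gaussian-like series. Since each term is positive, $\mz$ is bounded below by its $L=0$ contribution $\mz_0^{(N)}(e^{-\beta})$, which is finite and non-vanishing (it is the trivial sector of the original matrix model). Hence $\mz = \Theta(1)$ uniformly in $N$ and $\ln \mz = O(1)$.

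For the high temperature regime $\beta^{-1} > T_H$, let $L_\ast = L_\ast(N)$ denote the index realizing the largest contribution to the sum, and denote this largest contribution by $M_N := \mathfrak{q}^{L_\ast^2} \mz_{L_\ast}^{(N)}(e^{-\beta}) \in [A e^{c_1 N^2}, B e^{c_2 N^2}]$. Since all summands are positive, one has the two-sided sandwich
\begin{equation}
    M_N \;\le\; \mz(e^{-\beta}) \;\le\; \bigl(L_{\max}(N)+1\bigr) M_N \;\le\; (\gamma_{\max} N + 2)\, B\, e^{c_2 N^2} ,
\end{equation}
which after taking logarithms gives
\begin{equation}
    \ln A + c_1 N^2 \;\le\; \ln \mz(e^{-\beta}) \;\le\; \ln\bigl((\gamma_{\max} N + 2)B\bigr) + c_2 N^2 .
\end{equation}
The lower bound is $\Theta(N^2)$ and the upper bound is $N^2 \cdot(c_2 + o(1))$, so $\ln \mz = O(N^2)$ (and in fact $\Theta(N^2)$).

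The only subtlety worth flagging is that the argument relies on the positivity of every summand $\mathfrak{q}^{L^2}\mz_L^{(N)}$, which holds because $\mathfrak{q}>0$ and $\mz_L^{(N)}$ is a partition function; without positivity one could not use the largest term as a lower bound for the sum. Beyond that there is no real obstacle: the content of the proposition is not a derivation but a translation of the two analytic hypotheses on the summands into the advertised scaling of $\ln\mz$. The genuine physical work — justifying that the assumed bounds hold, in particular the Gaussian decay $e^{-c_0 L^2}$ below $T_H$ coming from $\mf_-(\gamma,\beta) = \gamma^2 f_-(\beta)$ together with $\frac{1}{2a} > f_-(\beta)$, and the $e^{\Theta(N^2)}$ growth of the dominant summand above $T_H$ from the saddle-point analysis of \eqref{eq:genericSPEgamma} — is precisely what the preceding discussion in Subsection \ref{sec:1PTargument} has set up, and is not part of the proposition itself.
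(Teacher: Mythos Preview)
Your proof is correct and follows essentially the same approach as the paper: domination by a convergent series in the low-temperature phase, and the sandwich between the largest term and (number of terms) times the largest term in the high-temperature phase, with the observation that the linear-in-$N$ number of terms is negligible compared to $e^{c_2 N^2}$. You have simply written out in detail what the paper's proof states in two sentences.
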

\begin{proof}
In the low temperature phase, the result is a simple consequence of domination by a convergent series. In the high temperature phase it suffices to bound the sum by its largest term times the number of terms, and to notice that this number of terms is very small compared to the largest contribution. Taking the logarithm on both sides yields the desired result.
\end{proof}

To conclude, let us mention that it would be interesting to explore examples in which $\frac{\gamma^2}{2a} - \mf_{+}$ develops new minima at some high temperature $1/\beta > T_H$.\par

\subsubsection{Probing the Hagedorn transition with a Polyakov loop}
Phase transitions are famously diagnosed by suitable order parameters. In the present case, the transition can be probed at the level of the matrix model by looking at a Polyakov loop. Inserting a Polyakov loop, it is not hard to show that is acquires a non-trivial expectation value at $\beta^{-1}>T_H$, indicating deconfinement. We state the result here and defer the details to Appendix \ref{app:Polyakov}.\par
At the level of the matrix model, it holds that
\begin{equation}
\begin{tabular}{l c l}
$\langle $Polyakov loop$\rangle =0$ & \hspace{0.5cm} & $\text{if }T<T_H $\\
$\langle $Polyakov loop$\rangle \ne 0$ & \hspace{0.5cm} & $\text{if } T>T_H$ .
\end{tabular} 
\label{eq:PolyHagedorn}
\end{equation}\par

\subsubsection{Summary: Hagedorn transition}
\label{sec:SummaryFF}
The bottom line of our analysis is that a wide class of extended models \eqref{eq:ZFissumLUMM} undergoes a first order phase transition at a critical temperature $T_H$, with 
\begin{equation}\boxed{ \hspace{1cm}
\begin{tabular}{l c l}
\hspace{8pt} & \hspace{8pt}\vspace{-8pt} \\
$\ln \mz = O(1)$ & \hspace{0.5cm} & $\text{if }T<T_H $\\
$\ln \mz =  O(N^2) $ & \hspace{0.5cm} & $\text{if } T>T_H$ . \vspace{-8pt} \\
     \hspace{8pt} & \hspace{8pt} 
\end{tabular} 
\hspace{1cm}  }
\label{eq:FermMMHagedorn}
\end{equation}
This feature of the quantum mechanical models is demonstrated explicitly in three examples in Part \ref{part2}. The transition \eqref{eq:FermMMHagedorn} is obtained summing over the parameter $L$, which, from the point of view of the discrete ensemble, is the maximum length of the Young diagrams $R$. The index $N$, labelling the constraint, is kept fixed. The phase transition is along a critical curve $T_H (a)$ in the $(a,T)$-plane, which projects onto a unique point $T_H$ on the slice $a=T$.\par
It has been proposed that the algebras of single-trace operators are type I von Neumann algebras in the phase $T<T_H$, and become type III$_1$ factors at $T>T_H$. The main result of this work is to prove this expectation rigorously in the class of models whose microscopic description has been given in Subsection \ref{sec:QMflavor}. This is done next, in Subsection \ref{sec:specAvg}.\par

\subsection{Spectral densities}
\label{sec:specAvg}

\subsubsection{Coupling the extended quantum mechanics to a probe}
\label{sec:sumprobe}

Akin to Subsection \ref{sec:probe}, we now introduce the operators of interest in the extended quantum mechanical systems of Subsection \ref{sec:QMLsum}. Our take on \eqref{eq:ZFissumLUMM} and the associated Hilbert space \eqref{eq:defextHN} is that each fixed-$L$ sector describes a full-fledged quantum system, and the sum over $L$ renders the whole ensemble closer to a holographic interpretation. Therefore, the prescription for the correlation function in the systems of Subsection \ref{sec:QMLsum} is to 
\begin{enumerate}[(i)]
    \item couple a probe to each fixed-$L$ system as prescribed in Subsection \ref{sec:probe}, 
    \item compute the correlation function, and 
    \item take the weighted sum over $L$ of the result.
\end{enumerate}
A practical way of doing so is to promote $\mathscr{H}_L ^{(N)}$ to $\mathscr{H}_L ^{\mathrm{tot}}$ as explained in Subsection \ref{sec:probe}, \emph{before} summing over $L$ and defining \eqref{eq:defextHN}. That is, we work with:
\begin{equation}
\label{eq:Hilbsumtot}
    \mathscr{H}^{\mathrm{tot}} := \bigoplus_{L \ge 0}  \mathscr{H}_L ^{\mathrm{tot}} .
\end{equation}
Alternatively, one may first define $\mathscr{H}^{(N)}$ according to \eqref{eq:defextHN}, and then define a probe on that space. To specify such a probe, we need in addition the choice of sector $L$ to which it couples, and the choice must be fine-tuned to explore the regime of interest in each case. We refrain from this definition and stick to the former approach.\par
At this point, we promote the operators $\mathcal{O}_L, \mathcal{O}_L ^{\dagger}$ of Subsection \ref{sec:probecorrel} to block-diagonal operators on $ \mathscr{H}^{\mathrm{tot}}$, 
\begin{equation}
\label{eq:2ptopDirectSum}
    \mathcal{O},  \mathcal{O}^{\dagger}  \in  \mathcal{L} \left(  \mathscr{H}^{\mathrm{tot}} \right) \cong \mathcal{L} \left( \bigoplus_{L\ge 0} \mathscr{H}_L ^{(N)} \otimes \Gamma_{\rm probe} \right) ,
\end{equation}
that act sector-wise as $\mathcal{O}_L, \mathcal{O}_L ^{\dagger}$ for every $L$. Note that for each summand, the one-particle Hilbert space inside $\Gamma_{\rm probe}$ has a different dimension, equal to the flavor rank $L$.

\subsubsection{Spectral densities of the extended quantum mechanics}
We now sketch the adaptation of the procedure of Subsection \ref{sec:spectral} to the new case with the sum over $L$. The ingredients we need, as explained in Section \ref{sec:vNtot}, are the finite temperature Wightman functions and the associated K\"all\'en--Lehmann spectral density $\rho (\omega)$. The novelty in this section is the presence of a Hagedorn transition, highlighted in Subsection \ref{sec:FermionicMM}. We will pay special attention to how the properties of $\rho (\omega)$ change across the transition.\par
The obvious version of the Wightman function \eqref{eq:defGplus} for the ensemble \eqref{eq:ZLNUMM} is:
\begin{defin}
The real-time, finite temperature Wightman function is 
\begin{equation}
\label{eq:GplusAvg}
    G_{+} (t) :=  \frac{1}{\mz} \sum_{L=0} ^{L_{\max}(N)} \mathfrak{q}^{L^2} \tr_{\mathscr{H}_L ^{\mathrm{tot}}} \left( e^{- \beta H^{\prime}} ~ \mathcal{O}^{\dagger} (t)  \mathcal{O} (0)  \right) .
\end{equation}
The dependence on the parameters $N,a, \beta$ is left implicit in the notation.
\end{defin}
We need to prescribe the probe parameters in the same way as we did before. We choose the mass $\mu$ to be the same in all sectors, and large enough that the probe approximation can always be applied (so it is always large enough compared to $L_{\max}(N)$). The coupling of the interaction in a given sector is scaled in a 't Hooft way.

Once again, in the case of the Schur slice $a=\beta^{-1}$, \eqref{eq:GplusAvg} is interpreted simply as a thermal correlation function in an extended quantum system at inverse temperature $\beta$. The only difference between the right-hand side of \eqref{eq:GplusAvg} and its fixed-$L$ counterpart \eqref{eq:defGplus} is the weighted sum over $L$ and the overall normalization by $\mz$ instead of $\mz_L ^{(N)}$. Let $\widetilde{G}_{+} (\omega)$ be the Fourier transform of $G_+ (t)$. Once again, $\widetilde{G}_{+} (\omega)$ is simply related to the spectral density $\rho (\omega) $ through \eqref{eq:rhofromG}, 
\begin{equation}
    \widetilde{G}_{+} (\omega) = \frac{\rho (\omega) }{1-e^{- \beta \omega}} .
\end{equation}
We elaborate further on the properties of $\rho (\omega)$ and the probe approximation in the case of summing over sectors in Subsection \ref{sec:commentsSumL}.\par
\medskip
Theorems \ref{thmrhoisrho} and \ref{thm:rhoandOmega} go through, with the only modification of the weighted sum over $L$ and the normalization by $\mz$ outside the sum, instead of normalizing by $\mz_L ^{(N)}$. This fact can be easily checked directly, and it does not involve any large $N$ limit. To get an equivalent of Corollary \ref{c:rholargeN}, we will follow the extremization procedure over the Veneziano parameter $\gamma$ explained in Subsection \ref{sec:FermionicMM}. The Corollary holds using the eigenvalue density $\varrho_{\ast}$ evaluated at the saddle point $\gamma_{\ast}$.\par
In slightly more detail, for every $N$ we can schematically write 
\begin{equation}
    \mz = \sum_{L=0} ^{L_{\max}(N)} \exp \left[ - N^2 \ms_N (\gamma) \right] ,
\end{equation}
where the dependence on $\beta$ and $\mathfrak{q}$ is left implicit to reduce clutter, and the dependence on $\gamma$ means that we replace $L \mapsto N \gamma$ everywhere. The sequence $\left\{ \ms_N \right\}_{N \in \N}$ admits a finite and well-defined point-wise limit $\lim_{N\to \infty} \ms_N (\gamma) = \ms (\gamma)$. Consider the family of functions of the form $F_N (\vec{\omega}) /\mz $, where
\begin{equation}
    F_N (\vec{\omega})  := \sum_{L=0} ^{L_{\max}(N)} \exp \left[ - N^2 \ms_N (\gamma) \right] ~ f_N (\vec{\omega}, \gamma) ,
\end{equation}
and the sequence $f_N$ is subject to the constraint 
\begin{equation}
\label{eq:condseqfN}
   \lim_{N \to \infty} \frac{ \ln f_N (\vec{\omega}, \gamma) }{N^2} = 0.
\end{equation}
Here $\vec{\omega}$ generically indicates variables in the domain of $f_N$ which do not appear in the definition of $\mz$. In the case of interest to us presently, there is just one variable $\omega \in \C$ and we are looking at a correlation function given by Theorem \ref{thmrhoisrho} or Theorem \ref{thm:rhoandOmega}.\par
In the large $N$ limit, the saddle points of $F_N$ are entirely determined by $\ms_N$, and therefore are the same as for $\mz$ (since the correlation functions converge to a finite value). In computing the ratios $F_N (\vec{\omega})/\mz$ at leading order in the large $N$ limit, the first non-trivial contribution is given precisely by $f_N (\vec{\omega}, \gamma) \vert_{\text{saddle}}$, i.e. the defining function evaluated at the saddle point.\par
We therefore approximate $ \widetilde{G}_{+} (\omega) $ by its value at the saddle point $\gamma_\ast$ in the phase in which the saddle point $\gamma_{\ast}$ is non-trivial, and we deduce Corollary \ref{c:rholargeN} with the modification that we must evaluate the eigenvalue density, and in particular the endpoints of its support, at $\gamma_{\ast}$.\footnote{It is usual to approximate the large $N$ expectation value of observables with a saddle point analysis, although proving a completely rigorous result here would require a study of the rate of convergence to the saddle which is beyond the scope of this paper.}\par
\medskip
We now explicitly relate the support of the spectral density $\rho (\omega)$ to the two phases of the matrix model, as we did in Subsection \ref{sec:spectral}. We observe that the factorization Lemma \ref{lem:factor} works in the high temperature phase, in which $\ln \mz = O(N^2)$. The proof is identical, with the large $N$ limit being evaluated at the saddle point $\gamma_{\ast}$. Moreover, we showed that the spectral density there is continuous. In the low temperature phase, the assumption of a non-trivial saddle point in Lemma \ref{lem:factor} fails. The partition function remains finite in this phase. If we scale the interaction $\lambda$ between the probe and the system like $1/L$ (in a sector-dependent way), then there is no large $N$ factorization and it doesn't really make sense to talk about the spectral density. However, here, unlike in Section \ref{sec:QM}, $N$ and $L$ have fundamentally inequivalent roles: only $N$ is a free parameter, while now $L$ is summed over. Then, if we scale the interaction like $1/N$ (in a sector-independent way), given that the partition function is finite the correlation function simply converges to its limit as $N\rightarrow\infty$ with $L$ fixed, which is the limit in which the probe decouples from the rest of the system. In this limit, in the probe approximation, the spectral function of the probe reduces to delta functions at $\pm\mu$. Therefore, the large $N$ Hilbert space is that of a free oscillator, and in particular the algebra of observables for the probe has type I. Hence, for the choices of scaling of $\lambda$ described in this paragraph:
\begin{equation}\boxed{ \hspace{1cm}
\begin{tabular}{l c l}
\hspace{8pt} & \hspace{8pt}\vspace{-8pt} \\
$\supp \rho = \bigsqcup \left\{ \text{ isolated pts } \right\}$ & \hspace{0.7cm} & $\text{if } \ln \mz = O(1) $\\
$\supp \rho  \subseteq \R \text{ continuous } $ & \hspace{0.7cm} & $\text{if } \ln \mz = O(N^2)$  \vspace{-8pt} \\
     \hspace{8pt} & \hspace{8pt} 
\end{tabular} 
\hspace{1cm}  }
\label{eq:rhofromHagedorn}
\end{equation}
and in the former case $e^{- \beta\lvert \omega\rvert}$ is integrable on the support of $\rho$.\par
\medskip

\subsubsection{Comments on the probe approximation in the extended quantum mechanics}
\label{sec:commentsSumL}
 The probe approximation in the extended quantum mechanical systems can be defined in the analogous fashion to Definition \ref{def:probeapprox2}. Again we expand partition functions and correlation functions as power series in the parameter $e^{- \beta \mu}$. This time the expansion is less trivial, because the probe sectors depend on $L$ and therefore cannot be brought out of the sum over $L$. This does not change the strategy, and we simply retain the term independent of $e^{- \beta \mu}$ in the series expansion.\par
The comments about the mass of the probe in Subsection \ref{sec:spectral} apply identically to this case. In particular, one may express everything in terms of the shifted $\omega_{\mathrm{r}}$ as in \eqref{eq:defomegaren}, which corresponds to neglecting the uniform shift of the energy levels given by the probe mass. The latter is a contribution due to the probe alone, not directly relevant for the quantum mechanical systems.

\subsection{Summary}
\label{sec:summary3steps}
The three main results in the planar limit so far are:
\begin{equation*}
    \begin{tabular}{r l|c|c}
     \hspace{4pt} & \textsc{Result} & \textsc{Section}& \textsc{Crystallized in} \\
     \hline 
         1. & von Neumann algebra type from $\supp \rho$\hspace{8pt} & Subsec. \ref{sec:vNalgebra} & Eq. \eqref{eq:typefromrho} \\
         2. & growth of $\ln \mz$ from Temperature & Subsec. \ref{sec:FermionicMM} & Eq. \eqref{eq:FermMMHagedorn} \\
         3. & $\supp \rho$ from growth of $\ln \mz$ & Subsec. \ref{sec:specAvg} & Eq. \eqref{eq:rhofromHagedorn} 
    \end{tabular}
\end{equation*}
Putting them together we arrive at the central result of our work:
\begin{center}\setlength{\fboxrule}{1.9pt}
\noindent\fbox{%
\parbox{0.89\linewidth}{%
\begin{stm}[\eqref{eq:typefromrho}, \eqref{eq:FermMMHagedorn} and \eqref{eq:rhofromHagedorn}]\label{thm:ItoIII}
In the saddle point approximation, the large $N$ von Neumann algebras associated to our probe operators jump to type III$_1$ above the Hagedorn temperature, in accordance with the holographic picture of a Hawking--Page transition.
\end{stm}\setlength{\fboxrule}{.4pt}
}}\end{center}

\begin{figure}[ht]
    \centering
\begin{equation*}\boxed{ \hspace{0.3cm}
\begin{tikzpicture}
    \draw[black,thick,->] (0,-2) -- (0,2);
    \draw[blue,thick,dashed] (-3.95,0) -- (5.45,0);
    \node[anchor=south] at (0,2) {\textsc{Temperature}};
    \node[anchor=east] at (-4,0) {$\begin{color}{blue}T_H\end{color}$};
    \node[] at (0,0) {$\begin{color}{blue}\bullet\end{color}$};
    \node at (-1.75,1.25) {$\ln \mz \sim O(N^2)$};
    \node at (-1.75,-1.25) {$\ln \mz \sim O(1)$};
    \node at (1.75,1.25) {Type III$_1$};
    \node at (1.75,-1.25) {Type I};
    \path[->] (2.,-0.4) edge[bend right] node[align=left,anchor=west] {\footnotesize jump in\\ \footnotesize von Neumann algebra} (2,0.4);
\end{tikzpicture}
\hspace{0.3cm}  }
\end{equation*}
\caption{Illustration of Theorem \ref{thm:ItoIII}. At the Hagedorn temperature $T_H$, the partition function starts diverging, and the von Neumann algebra experiences a sharp transition from type I to type III$_1$.}
\label{fig:typefromT}
\end{figure}
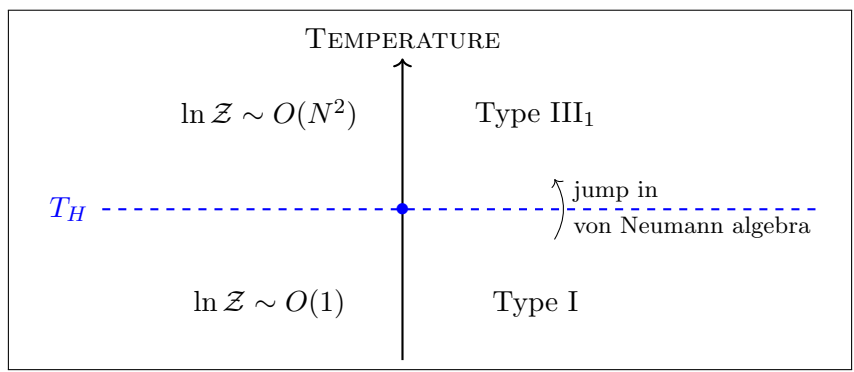\par
The content of this statement is represented pictorially in Figure \ref{fig:typefromT}. We now give a more detailed and rigorous formulation of the qualitative claim of Statement \ref{thm:ItoIII}.
\begin{thm}\label{thm:ItoIIIRigor}
    Consider a quantum mechanical system as defined in Subsection \ref{sec:QMLsum}, coupled to a probe as detailed in Subsection \ref{sec:sumprobe}, and such that the system at fixed $L$ satisfies the assumptions of Theorem \ref{thm:istype3Rigor}. The Hilbert space is $\mathscr{H}^{\mathrm{tot}}$ \eqref{eq:Hilbsumtot}. Assume that \eqref{eq:ZLNCharExp} undergoes a third order phase transition, and that there exists $T_H (a)>0$ such that \eqref{eq:ZFissumLUMM} has the Hagedorn-like behaviour \eqref{eq:FermMMHagedorn} in the planar limit. Then, the large $N$ algebra of probe operators is a Type III$_1$ von Neumann algebra if and only if $T>T_H$.\par
    Additionally, the models \eqref{eq:avgBQCD2} or \eqref{eq:defZAMM} satisfy the above hypotheses $\forall ~0<a<\infty$. 
\end{thm}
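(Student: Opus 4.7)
The plan is to chain together the three main results of this section --- namely the type characterization \eqref{eq:typefromrho}, the Hagedorn growth dichotomy \eqref{eq:FermMMHagedorn}, and the spectral density dichotomy \eqref{eq:rhofromHagedorn} --- in exactly the order advertised in Subsection \ref{sec:summary3steps}. Since Theorem \ref{thm:istype3Rigor} already provides the Araki--Woods ingredients at fixed $L$, the only work left is to track how the support of the large $N$ K\"all\'en--Lehmann spectral density $\rho(\omega)$ of the extended system varies as $T$ crosses $T_H(a)$, and then to feed the answer into Theorem \ref{thm:Derezinski}.

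First, in the low-temperature regime $T<T_H(a)$, the Hagedorn assumption provides the trivial saddle $\gamma_\ast=0$ for the ensemble \eqref{eq:ZFissumLUMM}, so $\ln\mz=O(1)$ at large $N$. Choosing the probe coupling to scale as $1/N$ (sector-independently), the discussion in Subsection \ref{sec:specAvg} shows that the $N\to\infty$ limit at each fixed $L$ decouples the probe from the system. In the probe approximation of Definition \ref{def:probeapprox2}, the two-point function of $\phi$ reduces to that of a free oscillator of mass $\mu$, so $\supp\rho=\{\pm\mu\}$ is a pair of isolated points with $e^{-\beta|\omega|}$ trivially integrable. Hence the operator $\gamma_\beta$ in Theorem \ref{thm:Derezinski} is trace class on the associated one-particle space, and the large $N$ algebra is a Type I factor. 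In the high-temperature regime $T>T_H(a)$, the saddle $\gamma_\ast(\beta,a)>0$ is non-trivial and $\ln\mz=O(N^2)$; Lemma \ref{lem:factor} yields large $N$ factorization and Lemma \ref{lemma:OmegalargeN}, applied at the saddle $\gamma_\ast$, endows $\widetilde{G}_{\mathrm{R}}$ and $\widetilde{G}_{\mathrm{A}}$ with logarithmic branch cuts along the support of the planar eigenvalue density $\varrho_\ast$ of the discrete matrix model \eqref{eq:genericdiscreteMM}. By the discontinuity formula and Corollary \ref{c:rholargeN}, this translates to a continuous interval inside $\supp\rho$, so $\gamma_\beta$ has continuous spectrum and Theorem \ref{thm:Derezinski} delivers a Type III$_1$ factor.

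The main obstacle, flagged in Subsection \ref{sec:commentsSumL}, is controlling the interchange of the large $N$ limit with the weighted sum over $L$ in \eqref{eq:GplusAvg}. I would address it by writing every probe correlation function as a ratio $F_N(\vec{\omega})/\mz$ as in Subsection \ref{sec:specAvg} and verifying condition \eqref{eq:condseqfN} for the defining sequence $f_N$, which holds because Theorem \ref{thm:rhoandOmega} expresses $f_N$ as a resolvent-type quantity that is $O(1)$ in $N$ at fixed $\gamma$. Granting this, the ratio is asymptotically the value of $f_N$ at the saddle $\gamma_\ast$, so the eigenvalue density to be plugged into Corollary \ref{c:rholargeN} is precisely $\varrho_\ast(\cdot;\gamma_\ast)$, and the two regimes of \eqref{eq:FermMMHagedorn} feed directly into the two cases of \eqref{eq:rhofromHagedorn}. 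A second, related subtlety is to select the probe mass $\mu$ uniformly in $L\le L_{\max}(N)$ so that the probe approximation is simultaneously valid in every sector; this is why $L_{\max}(N)$ is kept linear in $N$ and $\mu$ is sent to infinity first in the order of limits.

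For the additional claim concerning the ensembles \eqref{eq:avgBQCD2} and \eqref{eq:defZAMM}, I would verify the two structural hypotheses directly in Part \ref{part2}: at fixed $L$ each unrefined sector falls under Theorem \ref{thm:istype3Rigor} (which is one of the ensembles listed there, so this is immediate), and the weighted sum over $L$ produces the trivial/non-trivial saddle dichotomy of Subsection \ref{sec:FermionicMM}. Concretely, one writes the planar free energy $\mf(\gamma,\beta)$ of each $L$-sector, solves \eqref{eq:genericSPEgamma}, checks \eqref{eq:gammaastvanish} below a critical inverse temperature, and extracts an explicit $T_H(a)$ for every $a>0$. Plugging these into the general argument above concludes the proof.
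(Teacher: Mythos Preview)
Your proposal is correct and follows essentially the same approach as the paper's proof, which is a terse ``combine the results of this section'' together with a pointer to the explicit computations in Part~\ref{part2}. You have spelled out precisely the chain \eqref{eq:typefromrho}--\eqref{eq:FermMMHagedorn}--\eqref{eq:rhofromHagedorn} that the paper invokes implicitly, and your treatment of the saddle-point/interchange subtlety and the order-of-limits caveat for the probe mass matches the discussion already given in Subsection~\ref{sec:specAvg}.
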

\begin{proof}
    The first part follows from combining the results of this section. As emphasized in Subsection \ref{sec:FermionicMM}, the construction is tailored so that, if \eqref{eq:ZLNCharExp} has a third order phase transition, it is a generic feature that  \eqref{eq:ZFissumLUMM} satisfies \eqref{eq:FermMMHagedorn} in the planar limit.\par
    The second part of the theorem follows from the explicit derivations in Part \ref{part2}. 
\end{proof}\par

\clearpage
\part{Examples}
\label{part2}

In this second part of the work we move on to apply the general principles of Part \ref{part1} to selected examples. We present in detail four case studies, where all the quantities discussed in Part \ref{part1} are shown to display the desired properties.

\section{Example 1: Variations on the IOP model}
\label{sec:variationsIOP}

We begin in Subsection \ref{sec:IOP} by introducing a family of toy models that includes and generalizes a model of holographic relevance from \cite{Iizuka:2008eb}. In detail, we will give four equivalent characterizations of the system.
\begin{itemize}
    \item[$i$)] It is a constrained version of a matrix model from \cite[Sec.5]{Iizuka:2008eb}. We will refer to this matrix model (and not to the underlying quantum mechanics) as \emph{IOP model}, and it will be the simplest instance of our broader family that we dub \emph{constrained IOP model} (cIOP for short), introduced in Subsection \ref{sec:introIOP}.
    \item[$ii$)] It is the low energy theory of the chiral operators in four-dimensional supersymmetric QCD, or SQCD$_4$ for short. This perspective allows us to establish a direct relation with Calabi--Yau varieties in Subsection \ref{sec:CYIOP}. In this framework, the Calabi--Yau attached to the original IOP model is just $\C^{L^2}$, whereas the cIOP model bears a connection with non-trivial Calabi--Yau varieties.
    \item[$iii$)] It is a toy model of two-dimensional lattice QCD with bosonic quarks (Subsection \ref{sec:BQCD2IOP}).
    \item[$iv)$] It is one-dimensional QCD with bosonic quarks.
\end{itemize}
To fit in the paradigm of Section \ref{sec:Fermi}, in Subsection \ref{sec:IOPgrandcan} we consider the sum over an integer $L \in \mathbb{N}$, which has the meaning of number of flavors in all the presentations. 
In Subsections \ref{sec:IOPlargeN}-\ref{sec:cIOPlargeN} and \ref{sec:IOPFAplanar} we study the Veneziano limit of the theory without and with sum over $L$, respectively. The spectral density $\rho (\omega)$ is computed in Subsection \ref{sec:IOPrho}.\par

\subsection{Partition function of variations on IOP}
\label{sec:IOP}

\subsubsection{Variations on IOP: Definition}
\label{sec:introIOP}
\begin{defin}
Let $\beta >0$, $y= e^{-\beta}$ and $L \in \mathbb{N}$. The IOP matrix model is 
\begin{equation}
      \mz_{\text{IOP}} (L, y) = \sum_R y^{\lvert R \rvert } (\dim R )^2 , \label{eq:IOPtableaux}     
\end{equation}
with the sum running over Young diagrams of length $\ell (R) \le L$.
\end{defin}
The sum over Young diagrams \eqref{eq:IOPtableaux} was derived from a quantum mechanics in \cite[Sec.5]{Iizuka:2008eb}, whence the nomenclature IOP. It is easy to prove that \cite{Macdonaldbook}
\begin{equation}
\label{eq:IOPexact}
    \mz_{\text{IOP}}  = (1-y)^{-L^2} .
\end{equation}\par
The IOP matrix model does not show a phase transition and $\ln \mz = O(L^2)$ at all temperatures, however, it serves as inspiration for the subsequent discussion. For any $N \in \mathbb{N}$, $N \ge L$, one can embed $SU(L+1) \hookrightarrow SU(N+1)$. This allows for a generalization of \eqref{eq:IOPtableaux}.
\begin{defin}
Let $0<y<1$ and $L,N \in \mathbb{N}$. The \emph{constrained} IOP matrix model (cIOP) is 
\begin{equation}
      \mz_{\text{\rm cIOP}}^{(N)} (L, y) = \sum_R y^{\lvert R \rvert } (\dim R )^2 , \label{eq:cIOPtableaux}     
\end{equation}
with the sum running over Young diagrams of bounded length 
\begin{equation}
\label{eq:lengthminLN}
    \ell (R) \le \min \left\{L,N \right\} .
\end{equation}
\end{defin}
Of course, for every $N \ge L$, the constraint \eqref{eq:lengthminLN} is immaterial and one recovers \eqref{eq:IOPtableaux}. However, the presence of the constraint plays a role in the Veneziano limit 
\begin{equation}
    L \to \infty, \quad N \to \infty , \qquad \qquad \text{with } \quad \gamma = \frac{L}{N} \text{ fixed} .
\end{equation}
Indeed, the cIOP model undergoes a third order phase transition as a function of $\gamma$ \cite{Baik:2000,Santilli:2020ueh}.\par
It is convenient to adopt the change of variables $h_i = R_i - i + L$ introduced in \eqref{eq:changeRtoH} and then use formula \eqref{eq:dimRformula}. With these expressions at hand, we write 
\begin{equation}
\label{eq:ZIOP}
    \mz_{\text{IOP}} = \frac{ y^{- L^2/2} }{G(L+2)^2} \sum_{h_1 > h_2 > \cdots > h_{L} \ge 0} y^{\sum_{j=1}^{L} \left( h_j + \frac{1}{2} \right) }  \prod_{1 \le i < j \le L+1} (h_i - h_j)^2 ,
\end{equation}
with $G(\cdot) $ being Barnes's $G$-function \cite{Barnes} and understanding $h_{L+1} \equiv -1$.

\subsubsection{From cIOP to one-plaquette bosonic \texorpdfstring{QCD$_2$}{QCD2}}
\label{sec:BQCD2IOP}

Let us now introduce a unitary matrix model equal to the cIOP partition function. 
\begin{lem}[\cite{Gessel}]
\label{lemma:cIOPBQCD2}
For every $y\in \C$, $\lvert y \rvert \ne 1$, and $L,N \in \mathbb{N}$ it holds that 
\begin{equation}
\label{eq:ZBQCD2equalsZcIOP}
     \mz_{\text{\rm cIOP}}^{(N)} (L,y) = \oint_{SU(N+1)} [\dd U ] \left[ \det \left(1-\sqrt{y} U \right)  \det \left(1-\sqrt{y} U^{-1} \right) \right]^{- L } . 
\end{equation}
\end{lem}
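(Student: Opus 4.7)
The plan is to expand both determinants on the right-hand side via the Cauchy identity and reduce the integral to character orthogonality under the Haar measure. Writing the eigenvalues of $U \in SU(N+1)$ as $u_{1}, \ldots, u_{N+1}$, I would rewrite $\det(1-\sqrt{y}U)^{-L} = \prod_{i=1}^{N+1}\prod_{j=1}^{L}(1-\sqrt{y}u_{i})^{-1}$ and apply the Cauchy identity with variables $\{u_{i}\}_{i=1}^{N+1}$ and $L$ copies of $\sqrt{y}$:
\begin{equation*}
    \det(1-\sqrt{y}U)^{-L} \ = \ \sum_{R} s_{R}(u_{1},\ldots,u_{N+1})\, s_{R}(\underbrace{\sqrt{y},\ldots,\sqrt{y}}_{L}) .
\end{equation*}
The Schur polynomial $s_{R}$ in $L$ variables vanishes unless $\ell(R)\le L$, while in $N+1$ variables on $SU(N+1)$ it descends to the irreducible character $\chi_{R}(U)$ labelled by $\ell(R)\le N$ (Schur polynomials with $\ell=N+1$ differ from shorter ones only by a power of $\det U=1$). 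I would simultaneously evaluate $s_{R}(\sqrt{y},\ldots,\sqrt{y}) = y^{|R|/2}\dim R$, thereby producing the precise combinatorial factor appearing in \eqref{eq:cIOPtableaux}.

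Applying the same expansion to $\det(1-\sqrt{y}U^{-1})^{-L}=\sum_{\tilde R}y^{|\tilde R|/2}\dim \tilde R\, s_{\tilde R}(U^{-1})$, I would multiply the two series and integrate term by term against the normalized Haar measure on $SU(N+1)$. Orthogonality of irreducible characters,
\begin{equation*}
    \oint_{SU(N+1)}[\dd U] \, s_{R}(U)s_{\tilde R}(U^{-1}) \ = \ \delta_{R,\tilde R} \qquad \text{for } \ell(R),\ell(\tilde R)\le N ,
\end{equation*}
collapses the double sum to $\sum_{R}y^{|R|}(\dim R)^{2}$, where $R$ inherits both the length bound $\ell(R)\le L$ from the Cauchy expansion and the bound $\ell(R)\le N$ from the orthogonality, i.e.\ $\ell(R)\le\min(L,N)$. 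This is exactly $\mz_{\text{\rm cIOP}}^{(N)}(L,y)$ of \eqref{eq:cIOPtableaux}.

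The main technical point to address is the treatment of partitions with $\ell(R)=N+1$, which the dual Cauchy expansion allows when $L>N$, and which coincide on $SU(N+1)$ as class functions with partitions of strictly shorter length $\ell(R)\le N$. I would verify that the restriction $\ell(R)\le\min(L,N)$ on the left-hand side is consistent with how the $SU(N+1)$ gauge quotient absorbs these redundant labels (so that Schur polynomials differing by a power of $(\det U)$ are identified), and that no spurious contributions survive. The identity is established rigorously first for $|y|$ small, where the Cauchy series and the Haar integral converge absolutely and the interchange of sum and integral is legitimate, and then extended to all $|y|\neq 1$ by analytic continuation, since both sides are holomorphic in $y$ on that domain.
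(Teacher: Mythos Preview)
Your proposal is correct and follows essentially the same route as the paper's proof: expand each determinant factor via the Cauchy identity, specialize $s_R(\sqrt{y},\ldots,\sqrt{y})=y^{|R|/2}\dim R$, and collapse the double sum using character orthogonality on $SU(N+1)$. One small slip: in your final paragraph you write ``dual Cauchy expansion'' where you mean the ordinary Cauchy identity (the dual version involves $\prod(1+y_iz_k)$ and transposed partitions and is not what is used here); your extra care about the $\ell(R)=N+1$ redundancy and analytic continuation in $y$ is more thorough than the paper's own treatment.
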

With $y=e^{- \beta}$, the right-hand side is a toy model for one-plaquette lattice QCD$_2$ with bosonic quarks, studied in \cite{Minahan:1991pv,Santilli:2021eon}. The interpretation of $L$ as the flavor rank and $N$ as the gauge rank of the gauge theory, put forward in Section \ref{sec:QM}, finds a neat realization here.
Comparing \eqref{eq:ZBQCD2equalsZcIOP} and \cite[Eq.(3.8)]{Betzios:2017yms}, we also observe that $\mz_{\text{\rm cIOP}}^{(N)}$ is the partition function of a gauged matrix quantum mechanics describing 1d QCD with $L$ bosonic quarks, where again $y=e^{- \beta}$ (note that \cite{Betzios:2017yms} includes an additional adjoint field).
\begin{proof}[Proof of Lemma \ref{lemma:cIOPBQCD2}]
This lemma is well-known both in the mathematics and physics literature, see for instance \cite{BumpDiaconis,GGT1} for overviews and generalizations. For completeness, we provide one proof here. The so-called Cauchy identity states that \cite{Macdonaldbook}
\begin{equation}
\label{eq:Cauchyid}
    \prod_{i=1}^{L+1} \prod_{k=1} ^{N+1} \frac{1}{1-y_i z_k} = \sum_{R} \chi_R (Y) \chi_R (U) ,
\end{equation}
where on the right-hand side the sum runs over Young diagrams $R$ with bounded length 
\begin{equation}
    \ell (R) \le \min \left\{L,N \right\} ,
\end{equation}
exactly as in \eqref{eq:lengthminLN}. $Y$ and $U$ are special unitary matrices with eigenvalues $\left\{ y_i \right\} $ and $\left\{ z_k \right\}$, respectively, and $\chi_R$ is the character of the group.\par
Setting $y_i = \sqrt{y}$ and using the property \cite[Sec.3]{Macdonaldbook}\footnote{The first equality in \eqref{eq:schurbringout} can be shown using the definition of $\chi_R $ as a ratio of determinants, $\chi_R ( \mathrm{diag} (y_1, \dots , y_{L+1})) = \det_{1 \le j<k \le L+1} \left[y_k ^{R_j +L -j+1} \right] / \det_{1 \le j<k \le L+1} \left[y_k ^{L -j+1} \right] $. Particularizing to $y_k = \sqrt{y}$, one pulls $\sqrt{y}$ out of the determinants and simplifies between numerator and denominator to get \eqref{eq:schurbringout}.}
\begin{equation}
\label{eq:schurbringout}
    \chi_R ( \mathrm{diag} (\sqrt{y} , \dots \sqrt{y} )) = \sqrt{y}^{\sum_i R_i } \chi_R (\mathrm{diag} (1, \dots,1)) = \sqrt{y}^{\lvert R \rvert} \dim R ,
\end{equation}
\eqref{eq:Cauchyid} implies 
\begin{equation}
    \sum_R \sqrt{y}^{\lvert R \rvert} \dim R  \chi_R (U) = \det \left(1-\sqrt{y} U \right)^{-L} .
\end{equation}
Inserting this expression and its complex conjugate in the right-hand side of \eqref{eq:ZBQCD2equalsZcIOP}, one gets 
\begin{equation}
\begin{aligned}
   &\oint_{SU(N+1)} \hspace{-12pt}  [\dd U ] \left[ \det \left(1-\sqrt{y} U \right)  \det \left(1-\sqrt{y} U^{-1} \right) \right]^{- L }  \\
   & \hspace{3.5cm} = \sum_{R, R^{\prime}} \dim(R) \dim (R^{\prime}) \sqrt{y}^{\lvert R \rvert + \lvert R^{\prime} \rvert} \oint_{SU(N+1)}  \hspace{-12pt} [\dd U ]  \chi_R (U)  \chi_{R^{\prime}} (U^{-1}) .
\end{aligned}
\end{equation}
Taking into account the orthogonality of characters:
\begin{equation}
    \oint_{SU(N+1)} [\dd U ] \chi_R (U) \chi_{R^{\prime}} (U^{-1}) = \delta_{R R^{\prime}} ,
\end{equation}
we obtain the identity \eqref{eq:ZBQCD2equalsZcIOP}.
\end{proof}

\subsubsection{IOP: Veneziano limit}
\label{sec:IOPlargeN}
For completeness, we now rederive the large $L$ limit of the IOP and cIOP models, in the presentation \eqref{eq:ZIOP}, so to directly compute the relevant quantities in our formalism and make the paper self-contained. However, we omit the more technical details and refer to the pertinent literature. We start with IOP, and explain how the picture gets modified in the constrained models in Subsection \ref{sec:cIOPlargeN}.\par
\medskip
To begin with, the restriction $h_1 >  \cdots > h_L$ in \eqref{eq:ZIOP} can be lifted using the invariance of the summand under action of the Weyl group. This step introduces an overall factor $1/L!$.\par
Equation \eqref{eq:ZIOP} describes a discrete ensemble of $L$ variables with a hard wall at $h_i=0$. It is an easy task to solve its large $L$ limit, following for instance \cite{Douglas:1993iia}. The summand in \eqref{eq:ZIOP} is rewritten in the form $e^{-S (h_1, \dots, h_L)}$ with 
\begin{equation}
\label{eq:IOPSeff}
    S (h_1, \dots, h_L) = \beta \sum_{i=1} ^{L} h_i - \sum_{i \ne j} \ln \lvert h_i - h_j \rvert .
\end{equation}
At large $L$, we may assume a scaling of the eigenvalues 
\begin{equation}
\label{eq:scalinghtox}
    h_i = L^{\eta} x_i ,
\end{equation}
for $\left\{ x_i \right\} \sim O(1)$ and a power $\eta >0$ which we now determine. With the replacement \eqref{eq:scalinghtox}, the first term in \eqref{eq:IOPSeff} has a growth $\propto L^{1+\eta}$, while the second term in \eqref{eq:IOPSeff} yields a piece $\eta L (L-1) \ln L$, independent of the eigenvalues and thus irrelevant for the sake of the saddle point analysis, and a piece $\propto L^2$. We only need to focus on the two terms carrying a dependence on the eigenvalue. Demanding that they compete, so to find a nontrivial equilibrium, imposes $1+\eta =2$, thus fixing $\eta=1$.\par 
It is customary to introduce the density of eigenvalues 
\begin{equation}
    \varrho (x) = \frac{1}{L} \sum_{i=1} ^{L} \delta (x-x_i) , \qquad \qquad x  >0 ,
\end{equation}
which is normalized to 1 by definition. Furthermore, the discreteness of the ensemble \eqref{eq:ZIOP} imposes a minimal distance among any two eigenvalues, which translates into the condition \cite{Douglas:1993iia}
\begin{equation}\label{eq:condrhole1}
     \varrho (x) \le 1  \qquad \qquad \forall x  >0 .
\end{equation}
With these definitions, $S(h_1, \dots , h_L)$ is a functional of $\varrho$ and \eqref{eq:IOPSeff} becomes (up to $x$-independent terms)
\begin{equation}
\label{eq:SeffIOPrho}
    S [\varrho] = L^2 \int \dd x \varrho (x) \left[  \beta x - \mathrm{P}\!\!\!\int \dd x^{\prime} \varrho (x^{\prime}) \ln \lvert x-x^{\prime} \rvert \right] ,
\end{equation}
where $ \mathrm{P}\!\!\!\int $ stands for the principal value integral. We have thus expressed the summand in \eqref{eq:ZIOP} as $e^{-L^2 (\cdots )}$, with the dots indicating a positive $O(1)$ term. At large $L$, the leading contributions come from the saddle points of \eqref{eq:SeffIOPrho}. The saddle point equation, also known as equilibrium equation, is 
\begin{equation}
\label{eq:IOPSPE}
    2 \mathrm{P}\!\!\!\int \dd x^{\prime} \frac{ \varrho_{\ast} (x^{\prime}) }{x-x^{\prime}} = \beta .
\end{equation}
Here $\varrho_{\ast}$ is the eigenvalue density that extremizes the functional \eqref{eq:SeffIOPrho}. It must be looked for in the functional space subject to the constraints:
\begin{equation}
\label{eq:rhoIOPspacesol}
    \int_0 ^{\infty} \dd x \varrho (x) =1 , \qquad 0 \le  \varrho (x) \le 1 , \qquad \supp \varrho \subseteq [0, \infty) .
\end{equation}
While the derivation was standard so far, the IOP matrix model has two peculiar features, encapsulated in \eqref{eq:rhoIOPspacesol}: 
\begin{itemize}
    \item the discreteness of the ensemble, and
    \item the hard wall at $x=0$.
\end{itemize}\par
Matrix models with a hard wall typically have an inverse square root behavior near the edge \cite{Forrester:1993,Cunden:2018}, which in our case would produce 
\begin{equation}
    \varrho (x) \sim \frac{1}{\sqrt{x}} , \qquad \qquad x \to 0 .
\end{equation}
Clearly, this is incompatible with the restriction \eqref{eq:condrhole1}. Following the seminal work \cite{Douglas:1993iia}, we thus look for a ``capped'' eigenvalue density of the form
\begin{equation}
\label{eq:DKansatz}
    \varrho (x) = \begin{cases} 1 & \hspace{8pt} 0 \le x < x_{-} \\ \hat{\varrho} (x) \  & x_{-} \le x \le x_{+} \\ 0 &  \hspace{8pt} x > x_{+} \end{cases}
\end{equation}
with $\hat{\varrho}$ a nontrivial function subject to the continuity conditions 
\begin{equation}
    \hat{\varrho} (x_{-}) =1 , \qquad \hat{\varrho} (x_{+}) =0 .
\end{equation} 
The values of $x_{\pm}$ will be determined by normalization. See Figure \ref{fig:cartoonhw1} for a sketch.\par
\begin{figure}[tb]
    \centering
    \includegraphics[width=0.4\textwidth]{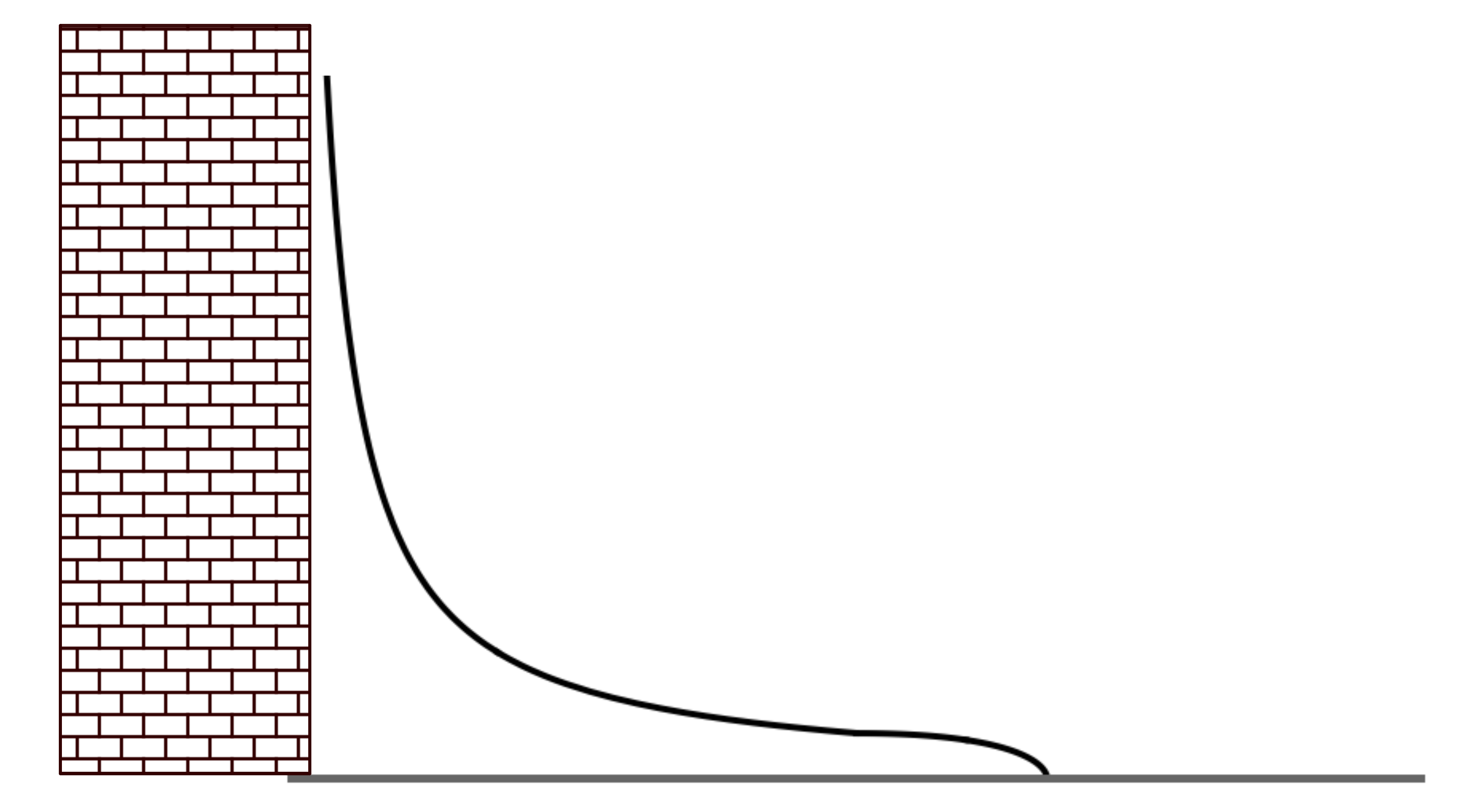}\hspace{0.05\textwidth}
    \includegraphics[width=0.4\textwidth]{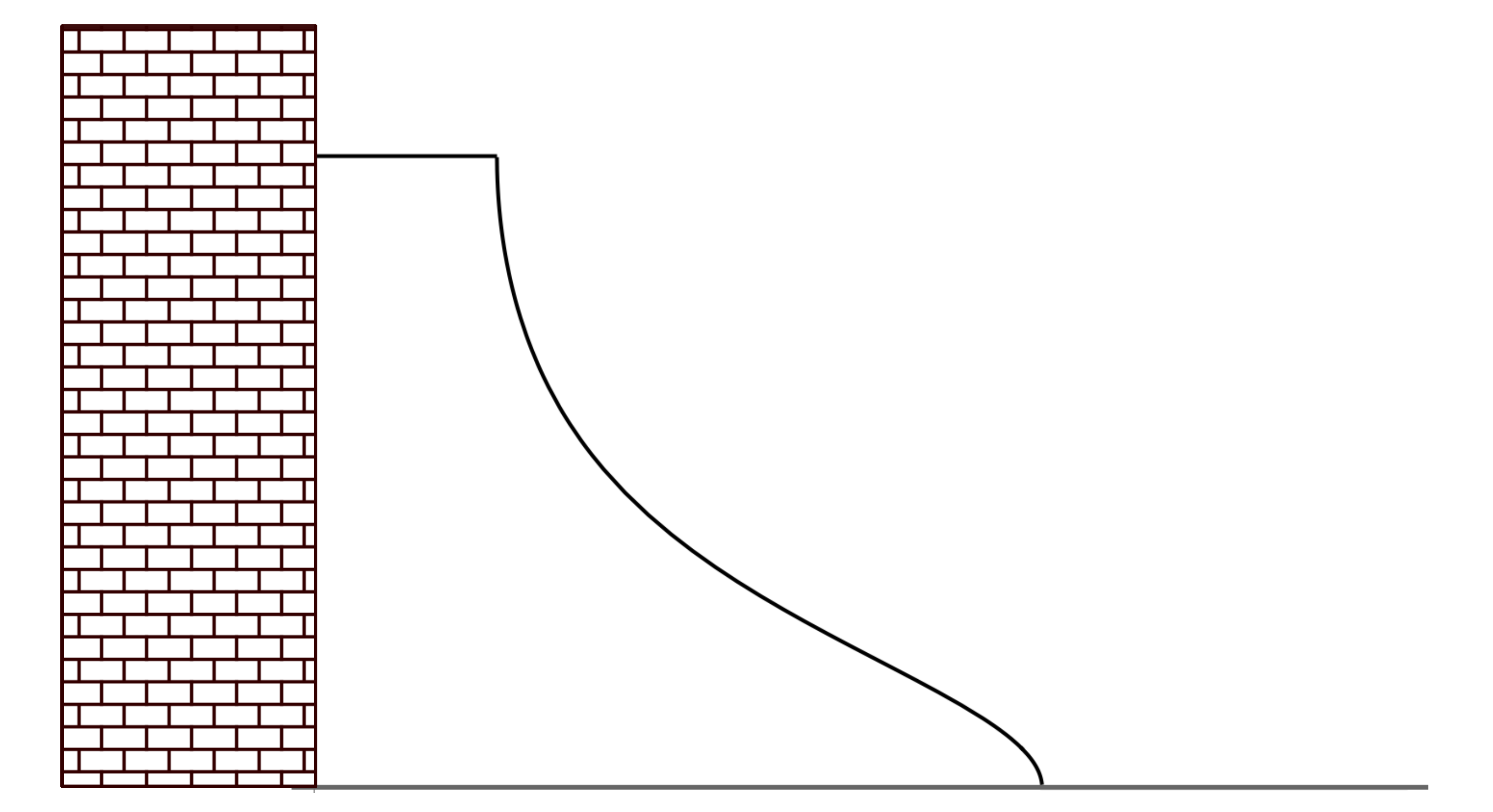}
    \caption{Schematic representation of the eigenvalue density in presence of a hard wall for the eigenvalues. Left: Continuous matrix models typically have a square root singularity near the hard wall. Right: Discrete matrix models have eigenvalue densities bounded above.}
    \label{fig:cartoonhw1}
\end{figure}\par

To solve the saddle point equation \eqref{eq:IOPSPE} with the ansatz \eqref{eq:DKansatz} is a standard procedure, see e.g. \cite{Douglas:1993iia}. The non-trivial part to be determined lies on the domain $x_{-} \le x \le x_{+}$. In this region, \eqref{eq:DKansatz} leads to 
\begin{equation}
\label{eq:rhohatSPEIOP}
    \mathrm{P}\!\!\!\int_{x_-}^{x_+} \dd x^{\prime} \frac{ \hat{\varrho}_{\ast} (x^{\prime}) }{x-x^{\prime}} = \frac{\beta}{2} + \ln \left( \frac{x}{x-x_-} \right) .
\end{equation}
This is the standard equilibrium equation, now for the equilibrium measure $\hat{\varrho}_{\ast} (x) \dd x$ supported on $[x_-, x_+]$ and with a modified effective potential
\begin{equation}
    \beta x \mapsto \beta x + 2 x \ln (x) - (x-x_-)\ln (x-x_-) ,
\end{equation}
to account for the term $ \ln \left( \frac{x}{x-x_-} \right)$ on the right-hand side of \eqref{eq:rhohatSPEIOP}. Therefore, we have effectively reduced the problem of finding an eigenvalue density with a hard wall which is bounded above, to the problem of finding the eigenvalue density $\hat{\varrho}_{\ast}$, at the price of trading the linear potential for a more complicated one. The latter problem admits a solution via a standard procedure. We skip the more technical details, and refer in particular to \cite[Sec.5]{CP:2013}, where a very similar calculation is performed.\par 
In a nutshell, the textbook prescription consists in introducing a complex function, called resolvent, supported on $\C \setminus [x_-, x_+]$. This function uniquely determines $\hat{\varrho}_{\ast}$ and $x_{\pm}$. Then, \eqref{eq:rhohatSPEIOP} is interpreted as a discontinuity equation for the resolvent, which can be solved by complex analytic methods. Once the resolvent is found, one extracts the eigenvalue density $\hat{\varrho}_{\ast}$ and its support.\par
Omitting the more technical part and jumping to the last step, we find (see also \cite{CP:2013}) 
\begin{equation}
    \hat{\varrho}_{\ast} (x) = \frac{2}{\pi} \atan \left( \sqrt{\frac{x_{-}}{x_{+}}}  \sqrt{\frac{x_{+} - x}{x - x_{-}}}  \right) 
\end{equation}
with $\supp \hat{\varrho}_{\ast} = [x_{-} , x_{+}]$ given by 
\begin{equation}
    x_{-} = \tanh \left( \frac{\beta}{4} \right) , \qquad  x_{+} =  \coth \left( \frac{\beta}{4} \right) .
\end{equation}
The full density of eigenvalues $\varrho_{\ast} (x)$ is shown in Figure \ref{eq:varrhoIOP}.\par
\begin{figure}[tb]
    \centering
    \includegraphics[width=0.4\textwidth]{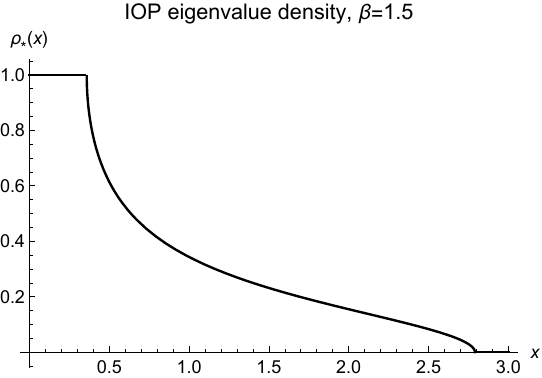}
    \caption{Eigenvalue density of the IOP matrix model, shown at $\beta = 1.5$.}
    \label{eq:varrhoIOP}
\end{figure}\par
\medskip
For completeness, we compare with the original derivation of \cite[Sec.5]{Iizuka:2008eb}. There, the large $L$ limit of $\mz_{\text{IOP}}$ was addressed directly in the sum over Young diagrams form \eqref{eq:IOPtableaux}. The result was encoded in a function $f (x)$ capturing the shape of the Young diagram $R$ that dominates in the limit. Notice that our conventions are such that the flavor rank, denoted $L$ here, corresponds to $N$ of \cite{Iizuka:2008eb}.\par
Comparing the definitions of $\varrho (x)$ with that of $f(x)$ using \eqref{eq:changeRtoH} to compare $h_j$ and $R_j$, we get the relation
\begin{equation}
\label{eq:varrhotofIOP}
    \hat{\varrho}_{\ast} (x) = \frac{1- f_{\ast} ^{\prime} (x) }{2}  \qquad \forall x \in \R 
\end{equation}
and our formulas agree with the existing literature using this substitution.\par
\medskip
The remaining step is to evaluate the effective action \eqref{eq:SeffIOPrho} onto the solution $\varrho_{\ast}$ that extremizes it. A direct computation gives 
\begin{equation}
    \ln \mz_{\text{\rm IOP}} = - L^2 \log \left( 1-e^{- \beta /2} \right) ,
\end{equation}
reproducing the closed form evaluation \eqref{eq:IOPexact}.

\subsubsection{Variations on IOP: Veneziano limit}
\label{sec:cIOPlargeN}
The IOP model is recovered from the constrained IOP in the region $N \ge L$, i.e. $\gamma \le 1$ in the planar limit. In turn, the cIOP model has a third order phase transition at $\gamma=\gamma_c$, with 
\begin{equation}
\label{eq:gammaccIOP}
    \gamma_c= \frac{1+ \sqrt{y}}{2 \sqrt{y}} .
\end{equation}
This can be proven along the lines of \cite{CP:2013}. We therefore only sketch the main ideas here.\par
In a nutshell, the difference between IOP and cIOP is not in the summand in the discrete matrix models, but in the domain. This means that the effective matrix model action for cIOP is again \eqref{eq:IOPSeff}, thus leading to the exact same saddle point equation \eqref{eq:IOPSPE} as in IOP. We can thus use the same solution for $\varrho_{\ast}$ as long as it belongs to the constrained functional space. In cIOP, the additional constraint, yielding the additional parameter $\gamma$, implies that the IOP solution is only valid for certain values of $\gamma$ (which in particular include $\gamma \le 1$), but fails to satisfy the constraint when $\gamma$ is large, leading to a phase transition.\par
\begin{lem}
    In the cIOP model, the saddle point eigenvalue density $\varrho_{\ast}$ is the same found in Subsection \ref{sec:IOPlargeN} for $\gamma \le \left( e^{\beta/2}+ 1\right)/2$.
\end{lem}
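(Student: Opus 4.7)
The plan is to compare the discrete ensemble defining $\mz_{\text{cIOP}}^{(N)}$ with that defining $\mz_{\text{IOP}}$ at the level of the $h$-variables, and to argue that the cIOP constraint is automatically satisfied by the IOP saddle as long as $\gamma$ stays below the stated threshold.

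First I would translate the cIOP constraint into the $h$-language. Recall $h_i = R_i - i + L$ with $h_1 > h_2 > \cdots > h_L \ge 0$. The condition $\ell(R) \le \min\{L,N\}$ is vacuous when $L \le N$ (i.e.\ $\gamma \le 1$), and when $L > N$ it forces $R_{N+1} = \cdots = R_L = 0$, which translates to the frozen values $h_{N+j} = L - N - j$ for $j=1,\dots,L-N$. In other words, the bottom $L-N$ eigenvalues occupy the smallest allowed integer slots $\{0,1,\dots,L-N-1\}$, while the top $N$ are free to occupy any integers greater than $L-N-1$. Passing to the continuum scaling $h_i = L x_i$, the frozen block produces a mandatory saturated region
\begin{equation*}
    \varrho(x) = 1 \qquad \text{for } x \in \bigl[0,\,1-\tfrac{1}{\gamma}\bigr],
\end{equation*}
with the free eigenvalues contributing a density supported on $[1-1/\gamma,\infty)$ of total mass $1/\gamma$.

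Next I would observe that the effective action is still the one in \eqref{eq:IOPSeff}: both the linear potential and the logarithmic repulsion only depend on the full collection $\{h_i\}$, and the fact that some of them are frozen does not change the functional form of $S[\varrho]$. Consequently the saddle point equation is unchanged from \eqref{eq:IOPSPE}, and the only difference between IOP and cIOP at the level of the variational problem resides in the admissible functional space: to the constraints \eqref{eq:rhoIOPspacesol} one must add the extra requirement $\varrho(x)=1$ on $[0,1-1/\gamma]$. Therefore, the IOP extremum $\varrho_{\ast}$ remains an extremum of the cIOP problem if and only if it belongs to this smaller functional space. Because the IOP solution already saturates to $1$ on $[0,x_{-}]$ with $x_{-} = \tanh(\beta/4)$, this compatibility is equivalent to $1-1/\gamma \le x_{-}$, i.e.
\begin{equation*}
    \gamma \le \frac{1}{1-\tanh(\beta/4)} = \frac{e^{\beta/2}+1}{2},
\end{equation*}
which is precisely the bound in the statement.

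The final step would be to invoke uniqueness: the equilibrium problem \eqref{eq:IOPSPE} together with the shape ansatz \eqref{eq:DKansatz} has, for each value of the frozen interval, a unique solution, so whenever the IOP solution lies in the admissible space it must coincide with the cIOP saddle. The main potential obstacle I anticipate is making the replacement ``frozen eigenvalues can be absorbed into $\varrho$'' fully rigorous in the planar limit: one has to show that the sub-leading corrections from the discreteness (spacing $1/L$) of the frozen block do not spoil the $L^2$ scaling of $S[\varrho]$ and do not modify the saddle point equation, and that the matching conditions $\hat{\varrho}(x_{-})=1$, $\hat{\varrho}(x_{+})=0$ persist. These are by now standard arguments in the Douglas--Kazakov tradition, and the close analogue in \cite{CP:2013} already carried out in the excerpt can be invoked to complete the argument, leaving the determination of the critical curve $\gamma_c = (e^{\beta/2}+1)/2$ as a direct consequence of the compatibility inequality above.
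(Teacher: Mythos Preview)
Your proposal is correct and follows essentially the same approach as the paper: both translate the cIOP constraint $\ell(R)\le N$ into the requirement that $\varrho(x)=1$ on $[0,1-\gamma^{-1}]$ (you via the frozen block of $h$-variables, the paper via the parametrization $x(s)=1-s$ for $s>\gamma^{-1}$), and then check compatibility with the IOP saddle's saturated region $[0,x_-]$, yielding $1-\gamma^{-1}\le x_- = \tanh(\beta/4)$. Your added remarks on uniqueness and on why the effective action is unchanged are useful clarifications but do not depart from the paper's line of argument.
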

\begin{proof}
    The constraint in the cIOP model is formulated in the large $N$ limit as follows.\par
    Let $0 \le s \le 1$ be the limiting value of the index $j/L$ and denote $x(s)$ the limit of $h_j /L$. Then, when $N \le L$, we have that $R_j = 0$ if $j >N$, which, by the change of variables, implies $h_j =L-j$ if $j>N$. Hence, dividing both sides of the latter equation by $L$, we find the constraint 
\begin{equation}
    x(s) = 1-s \quad \text{ if } \quad s > \gamma^{-1} .
\end{equation}
Inverting the relation using the inverse function theorem, we have $\varrho (x) = - \frac{\dd s (x) }{ \dd x} $. Therefore, the eigenvalue density of the cIOP model must be consistent with
\begin{equation}
\label{eq:rhocIOPgammaxminus}
   \varrho (x)=1 \quad \text{ if } \quad 0 \le x <1- \gamma^{-1} .
\end{equation}
Recalling \eqref{eq:DKansatz}, we see that the solution derived in the unconstrained case has $\varrho (x)=1 $ if and only if $0 \le x \le x_-$. Thus, it is inconsistent with the requirement \eqref{eq:rhocIOPgammaxminus} if $x_{-} < 1-\gamma^{-1}$.\par
We conclude that, if $x_{-} < 1-\gamma^{-1}$, we cannot simultaneously satisfy the defining constraint on cIOP and \eqref{eq:rhoIOPspacesol}. The solution breaks down at the value of $\gamma=\gamma_c$ for which $x_- = 1-\gamma_c^{-1}$. Substituting $x_- = \tanh (\beta/4)$ we obtain \eqref{eq:gammaccIOP}.\par
\end{proof}\par
Beyond the critical point, we ought to determine a new density $\varrho_{\ast}$ that belongs to the restricted functional space \eqref{eq:rhoIOPspacesol} and moreover satisfies the additional constraint \eqref{eq:rhocIOPgammaxminus} in the region $\gamma > \gamma_c$. The computation is technical, but can be achieved adapting from \cite[Sec.5]{CP:2013}. However, we omit it, because the only ingredients we need to know are (i) whether $\supp \varrho_{\ast}$ is continuous, and (ii) the large $N$ expression of the partition function.\par 
The answer to (i) is affirmative, since the solution is a variation of the ansatz \eqref{eq:DKansatz} where we additionally impose \eqref{eq:rhocIOPgammaxminus}. For later reference, we note that imposing \eqref{eq:rhocIOPgammaxminus} together with the normalization of $\varrho (x)$, implies that the non-constant part of $\varrho_{\ast} (x)$ bounds an area of $\gamma^{-1}$. Thus it always have finite support, which shrinks as $\gamma$ is increased at fixed temperature.\par
The answer to (ii) was computed in \cite{Santilli:2020ueh} by other methods, so it suffices to quote the result therein. Defining 
\begin{equation}
    \mf _{\text{\rm cIOP}} := \lim_{N \to \infty}\frac{1}{N^2} \ln \mz_{\text{\rm cIOP}}^{(N)} ,
\end{equation}
where the limit is understood to be the Veneziano limit, we have:
\begin{equation}
\label{eq:FcIOP}
    \mf _{\text{\rm cIOP}} = \begin{cases} - \gamma^2 \ln (1-y) & \gamma < \frac{1+\sqrt{y}}{2 \sqrt{y}} \\ -(2\gamma -1) \ln (1-\sqrt{y}) - \frac{1}{4} \ln y + C (\gamma) & \gamma > \frac{1+\sqrt{y}}{2 \sqrt{y}}  \end{cases}
\end{equation}
where, in the second phase, the piece $C(\gamma)$ is independent of the temperature and is explicitly given by 
\begin{equation}
    C(\gamma)= -\gamma^2  \ln \left(\frac{4 \gamma (\gamma -1) }{(2\gamma -1)^2}\right)  +(2\gamma -1) \ln \left(\frac{2(\gamma-1) }{2\gamma-1}\right)-\frac{1}{2} \ln \left(2\gamma -1\right) .
\end{equation}
This expression presents a discontinuity in the third derivative of $\mf_{\text{\rm cIOP}}$ at $\gamma_c$. Obviously, for $\gamma < \gamma_c$, the constraint is not active and \eqref{eq:FcIOP} agrees with the partition function of the IOP model.\par

\subsection{Variations on IOP: Sum over flavor symmetries}
\label{sec:IOPgrandcan}

The finite temperature partition function of the IOP matrix model can be evaluated exactly. We are interested in summing over the integer $L$, which has the meaning of the rank of the flavor symmetry. We let $0<\mathfrak{q}= e^{-1/(2a)} <1$ and consider 
\begin{equation}
    \sum_{L=0} ^{\infty} \mathfrak{q}^{L^2} \mz_{\text{\rm IOP}} = \sum_{L=0}^{\infty} \exp \left[ - L^2 \left( \frac{1}{2a} + \ln (1-e^{- \beta}) \right) \right] ,
\end{equation}
where the right-hand side follows from \eqref{eq:IOPexact}. This expression is only well-defined and convergent for low temperatures, that is, for $\beta$ such that 
\begin{equation}
\label{eq:IOPbetaineq}
     \frac{1}{2a} + \ln (1-e^{- \beta}) >0 .
\end{equation}
As it was discussed in Part \ref{part1}, Section \ref{sec:Fermi}, to render the sum defined at every temperature, we must therefore truncate it at a large value $L_{\max}$ which, in the case of cIOP, may depend on $N$.\par 
While this procedure is not especially enlightening for the IOP model, it already highlights that the behavior of the partition function summed over the rank of the flavor symmetry drastically changes at the critical temperature at which the inequality \eqref{eq:IOPbetaineq} is saturated. We now move on and consider the richer cIOP model. We will show how the interplay between the third order phase transition present in cIOP and the sum over $L$ leads to an interesting Hagedorn-like behaviour.

\subsubsection{Variations on IOP: Flavor sum and planar limit}
\label{sec:IOPFAplanar}

The cIOP model has a third order phase transition with $\ln \mz_{\text{\rm cIOP}}^{(N)} =O(N^2)$ on both sides. We now proceed to promote it to a first order transition with our ``sum over flavors'' prescription.
\begin{thm}
\label{thm:avgCIOP}
With the notation as in \eqref{eq:cIOPtableaux} and $y=e^{- \beta}, \mathfrak{q}= e^{-1/(2a)} $, let 
\begin{equation}
    \mz_{\text{\rm Ex1}} (\mathfrak{q}, y) = \sum_{L=0} ^{L_{\max}(N)} \mathfrak{q}^{L^2} \mz_{\text{\rm cIOP}}^{(N)} (L,y) . \label{eq:avgBQCD2}
\end{equation}
$\forall \ a >0$ there exists $T_H>0$ such that, in the large $N$ limit, \eqref{eq:avgBQCD2} has a first order phase transition at $1/\beta= T_H$. Moreover, 
\begin{equation}
    \ln \mz_{\text{\rm Ex1}} = \begin{cases} O(1) & \frac{1}{\beta} < T_H \\ O(N^2) & \frac{1}{\beta} > T_H .\end{cases}
\end{equation}
\end{thm}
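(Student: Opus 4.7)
The plan is to specialize the general argument of Subsection \ref{sec:FermionicMM} to the explicit cIOP free energy \eqref{eq:FcIOP}. Setting $L = \gamma N$ and $\mathfrak{q} = e^{-1/(2a)}$, I rewrite
\begin{equation*}
    \mz_{\text{\rm Ex1}}(\mathfrak{q},e^{-\beta}) \ = \ \sum_{L=0}^{L_{\max}(N)} \exp\!\left[\,-\,N^2\,\mathcal{S}_N(\gamma,\beta)\,\right], \qquad \mathcal{S}_N(\gamma,\beta) := \frac{\gamma^2}{2a} - \frac{1}{N^2}\ln \mz_{\text{\rm cIOP}}^{(N)}(L,e^{-\beta}),
\end{equation*}
so that $\mathcal{S}_N(\gamma,\beta) \to \mathcal{S}(\gamma,\beta):= \gamma^2/(2a) - \mf_{\text{\rm cIOP}}(\gamma,\beta)$ with $\mf_{\text{\rm cIOP}}$ read off from \eqref{eq:FcIOP}. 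In the ungapped regime $\gamma < \gamma_c(\beta) = (1+e^{-\beta/2})/(2 e^{-\beta/2})$ the action is purely quadratic, $\mathcal{S}_{-}(\gamma,\beta) = \gamma^2\bigl[\tfrac{1}{2a} + \ln(1 - e^{-\beta})\bigr]$, so the critical temperature is fixed by the vanishing of this coefficient: define $\beta_H := -\ln(1-\mathfrak{q})$ and $T_H := 1/\beta_H >0$.

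For $T<T_H$ the coefficient above is strictly positive, hence $\mathcal{S}_{-}$ is non-negative on $[0,\gamma_c]$ with minimum $\mathcal{S}(0,\beta)=0$. On the gapped branch $\mathcal{S}_{+}$, continuity of the first two derivatives at the third-order point $\gamma_c$ gives $\partial_\gamma \mathcal{S}_{+}(\gamma_c^+,\beta) = 2\gamma_c[\tfrac{1}{2a}+\ln(1-e^{-\beta})]>0$, while $\mathcal{S}_{+}(\gamma,\beta)\to+\infty$ as $\gamma\to\infty$ due to the $\gamma^2/(2a)$ term; so $\mathcal{S}_{+}\ge\mathcal{S}_{+}(\gamma_c,\beta)>0$ on $(\gamma_c,\infty)$ and the global minimum sits at $\gamma_\ast=0$. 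Using the pointwise bound $\mz_{\text{\rm cIOP}}^{(N)}(L,e^{-\beta})\le\mz_{\text{\rm IOP}}(L,e^{-\beta})=(1-e^{-\beta})^{-L^2}$, valid for every $N$, one obtains uniformly in $N$
\begin{equation*}
    \mathfrak{q}^{L^2}\mz_{\text{\rm cIOP}}^{(N)}(L,e^{-\beta}) \ \le \ \exp\!\left[-L^2\!\left(\tfrac{1}{2a}+\ln(1-e^{-\beta})\right)\right] = e^{-c_0 L^2},\qquad c_0 >0,
\end{equation*}
whence $\sum_{L\ge 0} \mathfrak{q}^{L^2}\mz_{\text{\rm cIOP}}^{(N)} < \infty$ and $\ln \mz_{\text{\rm Ex1}} = O(1)$.

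For $T>T_H$ the quadratic coefficient is negative, $\mathcal{S}_{-}$ is concave and strictly negative on $(0,\gamma_c]$, and by the same matching $\partial_\gamma \mathcal{S}_{+}(\gamma_c^+,\beta)<0$; combined with $\mathcal{S}_{+}(\gamma,\beta)\to+\infty$, the intermediate value theorem furnishes at least one interior critical point $\gamma_\ast\in(\gamma_c,\infty)$ at which $\mathcal{S}(\gamma_\ast,\beta)<\mathcal{S}(\gamma_c,\beta)<0$. Choosing the regulator $\gamma_{\max}$ larger than $\gamma_\ast$ (which can always be arranged on a compact temperature window above $T_H$), the standard Laplace/saddle-point estimate combined with the Gaussian tail bound on the summand yields $\ln\mz_{\text{\rm Ex1}}=-N^2\mathcal{S}(\gamma_\ast,\beta)+o(N^2)=O(N^2)$. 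As $T$ crosses $T_H$, the dominant saddle therefore jumps discontinuously from $\gamma_\ast=0$ to $\gamma_\ast>\gamma_c>0$, while $\mathcal{S}(\gamma_\ast,\beta)$ drops from $0$ to a strictly negative value with a discontinuous $\beta$-derivative at $\beta_H$. This is the advertised (weakly) first-order Hagedorn transition.

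The main obstacle is the rigorous control of the Laplace approximation in the high-temperature phase, together with the matching of the two branches of $\mathcal{S}$ across the third-order point $\gamma_c$. In particular, one must verify that $\mathcal{S}_{+}$ admits a unique interior minimum (which reduces to checking $C''(\gamma)<1/a$ on the relevant interval, or at worst that competing critical points carry strictly larger action), establish that the $N$-dependent corrections $\mathcal{S}_N-\mathcal{S}$ are $o(1)$ uniformly on a neighborhood of the saddle, and ensure that the regularization $L_{\max}(N)$ does not clip the dominant saddle in either phase. Once these technicalities are dispatched, the conclusion follows directly from the specialization of Subsection \ref{sec:FermionicMM} to the explicit expression \eqref{eq:FcIOP}.
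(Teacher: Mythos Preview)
Your proof is correct and follows essentially the same two-step strategy as the paper: extract the Veneziano-limit free energy $\mf_{\text{\rm cIOP}}$ from \eqref{eq:FcIOP}, then extremize the effective action $\gamma^2/(2a)-\mf_{\text{\rm cIOP}}$ over $\gamma$, identifying $T_H$ from the sign change of the quadratic coefficient in the ungapped branch. Your use of the explicit bound $\mz_{\text{\rm cIOP}}^{(N)}\le\mz_{\text{\rm IOP}}=(1-e^{-\beta})^{-L^2}$ to control the low-temperature phase and the IVT argument for the high-temperature saddle are cleaner than the paper's reliance on direct inspection and plots, though note that your inference ``$\mathcal{S}_{+}\ge\mathcal{S}_{+}(\gamma_c,\beta)$ on $(\gamma_c,\infty)$'' does not follow from $\partial_\gamma\mathcal{S}_{+}(\gamma_c^+)>0$ alone---fortunately your IOP bound renders that step superfluous.
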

The free energy $ \ln \mz_{\text{\rm Ex1}} $ as a function of the temperature $T=1/\beta$ at fixed $N$ is shown in Figure \ref{fig:ZcIOPplotN4}. Likewise, plotting a numerical evaluation of $ \ln \mz_{\text{\rm Ex1}} $ at various $N$, one may check that the trend is consistent with the analytic prediction that it remains small and approximately constant for $T<T_H$ and it shows a polynomial growth in $N$ if $T>T_H$.

\begin{figure}[th]
    \centering
    \begin{tikzpicture}
    \node (p) at (0,0) {\includegraphics[width=0.5\textwidth]{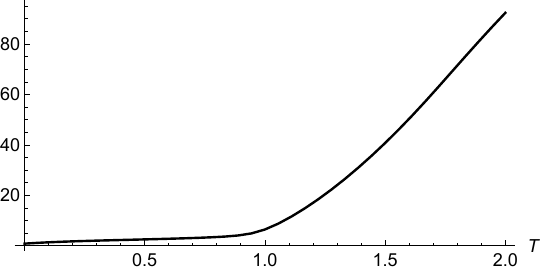}};
    \node[anchor=east] at ($(p.north west)+(0,-0.5)$) {$\scriptstyle \ln \mz_{\text{\rm Ex1}}$};
    \node at ($(p.north)+(-1,-1)$) {$\scriptstyle N=4$};
    \end{tikzpicture}
    \caption{Plot of $ \ln \mz_{\text{\rm Ex1}} (y)$ as a function of the temperature $T=-1/\ln (y)$ at $N=4$. The sum over $L$ in \eqref{eq:avgBQCD2} has been truncated at $L \le 20$. A change in behavior around $T \approx 1$ is already visible at such low value of $N$, although the sharp phase transition is smoothed by finite $N$ effects. The plot is taken in the Schur slice $a=T$, but the qualitative behavior is the same in other slices.}
    \label{fig:ZcIOPplotN4}
\end{figure}\par

\begin{proof}
The details of the proof of Theorem \ref{thm:avgCIOP} are essentially identical to the calculations in Subsection \ref{sec:AvgQCD2FF} below, thus we omit them. The basic idea is that, introducing the weighted sum over $L$, the large $N$ limit is taken in two steps:
\begin{itemize}
    \item[$(i)$] Extremize $\mz_{\text{\rm cIOP}}^{(N)}$ in the Veneziano limit as a function of $\gamma$ and $y$;
    \item[$(ii)$] Evaluate the resulting effective action at the saddle point $\gamma = \gamma_{\ast}$.
\end{itemize}
Step $(i)$ has been sketched in Subsection \ref{sec:cIOPlargeN}, and the result we need is expression \eqref{eq:FcIOP}.\par
To set up step $(ii)$, we define 
\begin{equation}
    \mathcal{S}_{\beta,a} ^{\text{\rm cIOP}} (\gamma) := \frac{1}{N^2} \ln \mz_{\text{\rm cIOP}}^{(N)} (L,e^{-\beta}) - \frac{\gamma^2}{2a}
\end{equation}
and write 
\begin{equation}
    \mz_{\text{\rm Ex1}} (\mathfrak{q}, e^{-\beta}) = \sum_{L=0} ^{L_{\max}(N)}\exp \left\{ N^2 \mathcal{S}_{\beta,a} ^{\text{\rm cIOP}} \left( \gamma= \frac{L}{N} \right) \right\} .
\end{equation}
At large $N$, $\mathcal{S}_{\beta,a} ^{\text{\rm cIOP}} (\gamma)$ approaches a function of continuous variable $\gamma \ge 0$, and the leading contribution comes from its maximum, located at $\gamma_{\ast}$. The function to extremize is 
\begin{equation}
\label{eq:ScIOPform}
    \mathcal{S}_{\beta,a} ^{\text{\rm cIOP}} (\gamma) =  \mf _{\text{\rm cIOP}}  - \frac{\gamma^2}{2a} ,
\end{equation}
with the first summand written explicitly in \eqref{eq:FcIOP}. We see that, if $\gamma < \gamma_c:= \frac{1+e^{- \beta/2}}{2 e^{- \beta/2}}$, then $ \mathcal{S}_{\beta,a} ^{\text{\rm cIOP}} (\gamma)$ is quadratic, and moreover if
\begin{equation}
\label{eq:defTHIOP}
    \frac{1}{2a} \ge - \ln (1-e^{- \beta}) ,
\end{equation}
the partition function $ \mz_{\text{\rm Ex1}}$ is Gaussian and damped at $\gamma > 0$. We denote $1/T_H$ the value of $\beta$ for which the inequality \eqref{eq:defTHIOP} is saturated and $ \mathcal{S}_{\beta,a} ^{\text{\rm cIOP}} (\gamma)$ flips sign. For example, if $a=1/\beta$, $T_H \approx 1.039$.\par
Step $(ii)$ is akin to Subsection \ref{sec:AvgQCD2FF} --- especially Figure \ref{fig:plotSbetaZoom}, and the discussion around \eqref{eq:calStwophases}. Here we discuss explicitly only the case $a=1/\beta$, but the case of constant $a$ independent of $\beta$ can be dealt with in exactly the same manner, similarly to Subsection \ref{sec:AvgQCD2FF}.
Studying the maximum of \eqref{eq:ScIOPform} as a function of $\gamma$, for different ranges of values of $\beta$, we find the following:
\begin{itemize}
    \item $\mathcal{S}_{\beta,a} ^{\text{\rm cIOP}}$ is non-negative definite with a global maximum at $\gamma_{\ast} =0$ if $\beta>1/T_H$;
    \item $\mathcal{S}_{\beta>T_H^{-1},a} ^{\text{\rm cIOP}} (0) =0$.
    \item At $\beta < 1/T_H$ we ought to look for a different saddle point;
    \item By direct inspection of $\mathcal{S}_{\beta,a} ^{\text{\rm cIOP}}$, we observe that it has a global maximum at a certain $\gamma_{\ast} >0$ if $\beta<1/T_H$; examples at fixed $\beta$ are shown in Figure \ref{fig:plotSbetacIOP}.
    \item At this value, $\mathcal{S}_{\beta<T_H^{-1},a} ^{\text{\rm cIOP}} (\gamma_{\ast}) >0$, with strict inequality.
    \item The two saddle points do not coexist and exchange dominance exactly at $\beta=1/T_H$.
\end{itemize}
\begin{figure}[tb]
	\centering
	\includegraphics[width=0.32\textwidth]{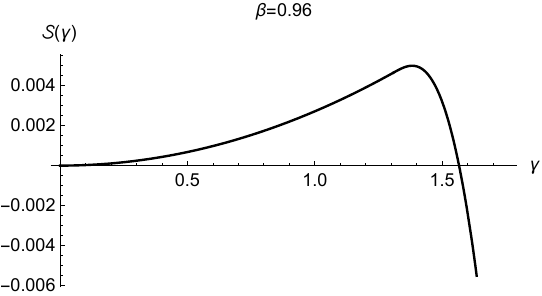}%
    \hspace{0.01\textwidth}%
    \includegraphics[width=0.32\textwidth]{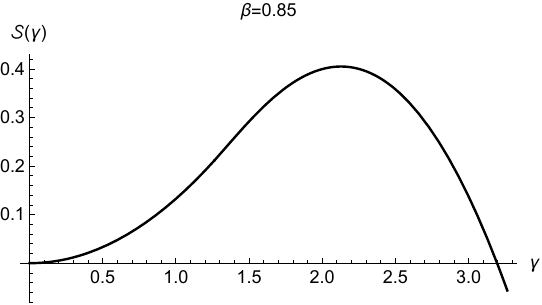}
    \hspace{0.01\textwidth}%
    \includegraphics[width=0.32\textwidth]{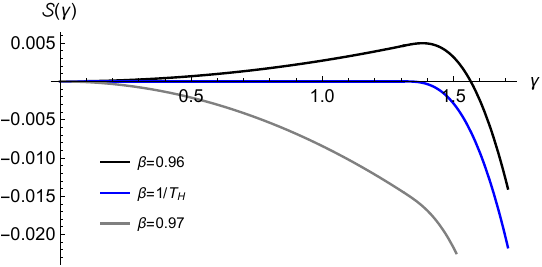}
    	\caption{Plot of $\mathcal{S}_{\beta,\beta^{-1}} ^{\text{\rm cIOP}} (\gamma)$ as a function of $\gamma$ for different values of $\beta$. Left: $\beta=0.96$. Center: $\beta=0.85$. Right: a comparison for $\beta$ slightly above (gray), at the critical point (blue), and slightly below (black). The absolute maximum is $\gamma_{\ast}=0$ if $\beta >1/T_H$, while $\mathcal{S}_{\beta,\beta^{-1}} ^{\text{\rm cIOP}} (\gamma)$ attains a positive absolute maximum at $\gamma_{\ast}>0$ if $\beta <1/T_H$.}
	\label{fig:plotSbetacIOP}
\end{figure}\par
Next, we have to evaluate \eqref{eq:ScIOPform} on the saddle point, to obtain the leading order of $\ln \mz_{\text{\rm Ex1}}$. In the phase $\beta^{-1} <T_H$, we have seen that the leading term trivializes, leaving behind a $O(1)$ correction. In the high temperature phase $\beta^{-1} >T_H$, on the other hand, we have a strictly positive leading order, accompanied by the overall factor $N^2$. Accordingly, $\ln \mz_{\text{\rm Ex1}}$ has a jump from $O(1)$ to $O(N^2)$ at $T_H$.\par
While we have performed the computation explicitly for $a=\beta^{-1}$, one can plot \eqref{eq:ScIOPform} for other choices, in particular on constant-$a$ slices, and check that the same behaviour holds. We stress that 
\begin{itemize}
\item The precise value of $T_H$ depends on $a$, and in the Schur slice $a= 1/\beta$ we find $T_H \approx 1$;
\item Nevertheless, the \emph{existence} of a jump in the saddle point value $\gamma_{\ast}$ holds for all $a>0$. 
\end{itemize}\par
Moreover, the Gaussian integration over fluctuations around the saddle point, in the low temperature phase, produces a universal factor 
\begin{equation}
    \mz_{\text{\rm Ex1}} \propto \left( y - e^{-1/T_H} \right)^{-1/2} ,
\end{equation}
which gives rise to Hagedorn behavior of the Laplace transform of $ \mz_{\text{\rm Ex1}}$ as $\beta^{-1} \to T_H$ from below. The behavior of this model is analogous to the one observed in the matrix model for $\mathcal{N}=4$ super-Yang--Mills \cite{Liu:2004vy,Dutta:2007ws}.
\end{proof}

\subsection{Spectral density of variations on IOP}
\label{sec:IOPrho}
We now analyze the spectral density $\rho(\omega)$ in the variations of the IOP model. As above, we construct from IOP, and eventually pass to the model summed over the number of flavors, which fits in the overarching formalism of Section \ref{sec:Fermi}. The definition of Wightman functions and their relation with $\rho (\omega)$ is taken from Subsection \ref{sec:spectral}. We do not discuss explicitly the microscopic operators $\mathcal{O}$, but take the general expressions for the Wightman functions in Subsection \ref{sec:spectral} as our starting point.

\subsubsection{Intermezzo: Comparison with IOP}
For the sake of comparison, let us notice that the space of states generated by the adjoint fields $A_{ij}$ in the notation of \cite{Iizuka:2008eb} agrees with our Hilbert space. The fundamental field in \cite{Iizuka:2008eb} is the analogue of our probe, and in the SQCD$_4$ language (cf. Subsection \ref{sec:SQCD4andCY}) it is an auxiliary, massive chiral field without anti-chiral counterpart. Both our cIOP generalization and the map to SQCD$_4$ yield an embedding of the IOP model in broader setups.\par 
To compare the Wightman functions, one should be aware of the following differences: our conventions for the Fourier transform are such that $\omega \mapsto - \omega$ with respect to \cite{Iizuka:2008eb}; moreover, the interaction Hamiltonian differs by an overall sign, and our representations are allowed to be $(R, R)$ or $(R, \overline{R})$, which differ by a renaming from the basis $(\overline{R},R)$ considered in \cite{Iizuka:2008eb}. To more easily relate our expressions to the existing literature, throughout this example we fix $\phi (R)= \overline{R}$ and deform the definition of the Hamiltonian $H^{\prime}$ according to 
\begin{align}
    H (R_1, \overline{R_2}) & = \lvert\overline{R_2} \rvert ; \label{eq:HnewbarIOP} \\
    H_{\text{\rm int}} (R_1, \overline{R_2}) &= \frac{g}{2} \left[ Q ( \overline{R_2} ) -  Q (\overline{R_1} ) \right] ; \label{eq:HintIOPEx} \\
     Q(R) &= C_2 (R) + (L+1) \lvert R \rvert . 
\end{align}
\begin{itemize}
    \item The choice of letting \eqref{eq:HnewbarIOP}-\eqref{eq:HintIOPEx} depend on $\overline{R}$ instead of $R$ is adapted so that \eqref{eq:OmegachangedIOP} below matches the conventions of \cite[Sec.5]{Iizuka:2008eb}. 
    \item The redefinition of $H$ does not affect the partition function, which is invariant under charge conjugation $R \mapsto \overline{R}$.
    \item In the definition of $H_{\text{\rm int}}$, 
        \begin{equation}
            C_2(R) = \sum_{i=1}^{L} R_i (R_i -2i +L)
        \end{equation}
        is the quadratic Casimir of $R$, and in $Q(R)$ we use $(L+1) \lvert R \rvert$ because $R$ are representations of $SU(L+1)$.
\end{itemize}
In this way, our explicit computations of the Wightman functions differ by \cite[Sec.5]{Iizuka:2008eb} only by the shift by $\mu$, which was removed ``by hand'' with a counterterm in \cite[Eq.(2.2)]{Iizuka:2008eb} (be aware that $L$ here was denoted $N$ in \cite{Iizuka:2008eb}).

\subsubsection{The IOP spectral density}
Recall the general formula 
 \begin{equation}
    \begin{aligned}
        \rho (\omega) =\lim_{\varepsilon \to 0^{+}} &  \left\{ \frac{i}{2}  \left[  {}_L\langle \Psi_{\beta} \lvert \Omega (\omega + i \varepsilon) - \Omega (\omega - i \varepsilon) \rvert \Psi_{\beta} \rangle_L  \right] \right. \\
        & \left. - \frac{i}{2}  \left[  {}_L\langle \Psi_{\beta} \lvert \Omega (-\omega + i \varepsilon) - \Omega (-\omega - i \varepsilon) \rvert \Psi_{\beta} \rangle_L  \right] \right\} . 
    \end{aligned}
    \end{equation}
derived in Corollary \ref{cor:jumprhoOmega}, where 
\begin{equation}
\label{eq:OmegachangedIOP}
        \langle  R, \ai ; \phi(R) , \dot{\ai} \lvert \Omega (\omega) \rvert  R, \ai ; \phi(R) , \dot{\ai}  \rangle = \frac{1}{(L+1)} \sum_{J \in \mathscr{J}_R } \frac{1}{\omega- \mu - E_{J}^{\text{\rm int}} } ~ \frac{\dim (R \sqcup \Box_J) }{  \dim R} .
\end{equation}
Here 
\begin{equation}
    E_{J}^{\text{\rm int}} = \frac{\lambda}{2(L+1)} \left[ C_2 (\overline{R}) - C_2 ( \overline{R \sqcup \Box_J} ) + L +1 \right]
\end{equation}
with $\lambda= g (L+1)$ the 't Hooft coupling for the interaction Hamiltonian \eqref{eq:HintIOPEx}. We henceforth set $\lambda=1$ without loss of generality, and it can be reinserted at any point by multiplying the Wightman functions by $1/\lambda$ and replacing $\omega - \mu \mapsto (\omega - \mu)/\lambda$.\par
\medskip
\begin{figure}[th]
    \centering
    \begin{equation*}
    \begin{tikzpicture}
         \node[anchor=north] (R0) at (-6,0) {\begin{ytableau} \ & \ & \ & \  \\ \ & \ & \  \\ \ \\ \ \end{ytableau}};
         \node[anchor=north] (R1) at (-2,0) {\begin{ytableau} \ & \ & \ & \  \\ \ & \ & \ & *(gray) \  \\ \ & *(gray) \ & *(gray) \ & *(gray) \  \\ \ & *(gray) \ & *(gray) \ & *(gray) \  \\ *(gray) \ & *(gray) \ & *(gray) \ & *(gray) \ \\ *(gray) \ & *(gray) \ & *(gray) \ & *(gray) \  \end{ytableau}};
        \node[anchor=north] (R2) at (2,0) {\begin{ytableau} \ & \ & \ & \  \\ \ & \ & \ & *(gray) \  \\ \ & *(Cerulean) \ & *(gray) \ & *(gray) \  \\ \ & *(gray) \ & *(gray) \ & *(gray) \  \\ *(gray) \ & *(gray) \ & *(gray) \ & *(gray) \ \\ *(gray) \ & *(gray) \ & *(gray) \ & *(gray) \  \end{ytableau}};
        \node[anchor=north] (R3) at (6,0) { \begin{ytableau}*(gray) \ & *(gray) \ & *(gray) \ & *(gray) \ \\ *(gray) \ & *(gray) \ & *(gray) \ & *(gray) \ \\ *(gray) \ & *(gray) \ & *(gray) \  \\ *(gray) \ & *(gray) \ & *(orange) \ \\ *(gray) \ \end{ytableau}};

        \node[anchor=south] at (-6,0) {$\scriptstyle \overline{R}=(4,3,1,1,0,0)$};
        \node[anchor=south] at (-2,0) {$\scriptstyle R=(4,4,3,3,1,0)$};
        \node[anchor=south] at (2,0) {$\scriptstyle \overline{R}\sqcup \Box_3=(4,3,2,1,0,0)$};
        \node[anchor=south] at (6,0) {$\scriptstyle R\setminus \Box_{4}=(4,4,3,2,1,0)$};

         \node[anchor=north] (P0) at (-6,-5) {\begin{ytableau} \ & \ & \ & \  \\ \ & \ & \  \\ \ \\ \ \end{ytableau}};
         \node[anchor=north] (P1) at (-2,-5) {\begin{ytableau} \ & \ & \ & \  \\ \ & \ & \ & *(gray) \  \\ \ & *(gray) \ & *(gray) \ & *(gray) \  \\ \ & *(gray) \ & *(gray) \ & *(gray) \  \\ *(gray) \ & *(gray) \ & *(gray) \ & *(gray) \ \\ *(gray) \ & *(gray) \ & *(gray) \ & *(gray) \  \end{ytableau}};
        \node[anchor=north] (P2) at (2,-5) {\begin{ytableau} \ & \ & \ & \ & *(Cerulean) \ \\ \ & \ & \ & *(gray) \ & *(CadetBlue) \ \\ \ & *(gray) \ & *(gray) \ & *(gray) \ & *(CadetBlue) \ \\ \ & *(gray) \ & *(gray) \ & *(gray) \ & *(CadetBlue) \ \\ *(gray) \ & *(gray) \ & *(gray) \ & *(gray) \ & *(CadetBlue) \ \\ *(gray) \ & *(gray) \ & *(gray) \ & *(gray) \ & *(CadetBlue) \ \end{ytableau}};
        \node[anchor=north] (P3) at (6,-5) { \begin{ytableau} *(CadetBlue) \ & *(gray) \ & *(gray) \ & *(gray) \ & *(gray) \ \\ *(CadetBlue) \ & *(gray) \ & *(gray) \ & *(gray) \ & *(gray) \ \\ *(CadetBlue) \ & *(gray) \ & *(gray) \ & *(gray) \  \\ *(CadetBlue) \ & *(gray) \ & *(gray) \ & *(gray) \\ *(CadetBlue) \ & *(gray) \ \\ *(orange) \ \end{ytableau}};

        \node[anchor=south] at (-6,-5) {$\scriptstyle \overline{R}=(4,3,1,1,0,0)$};
        \node[anchor=south] at (-2,-5) {$\scriptstyle R=(4,4,3,3,1,0)$};
        \node[anchor=south] at (2,-5) {$\scriptstyle \overline{R}\sqcup \Box_1=(5,3,1,1,0,0)$};
        \node[anchor=south] at (6,-5) {$\scriptstyle R\setminus \Box_{6}=(5,5,4,4,2,0)$};
    \end{tikzpicture}
    \end{equation*}
    \caption{Appending a box at row $J$ and taking the conjugate, exemplified for $SU(6)$ representations, $L=5$. The Young diagram $\overline{R}=(4,3,1,1,0,0)$ is shown in white, $R=(4,4,3,3,1,0)$ in gray, the added box $\Box_J$ in cyan and the removed box $\Box_{L-J+2}$ in orange. Above: $J=3$. Below: $J=1$. In this case, $R$ gains a whole new column at the beginning, shown is darker blue in the two rightmost diagrams.}
    \label{fig:RcupJconj}
\end{figure}
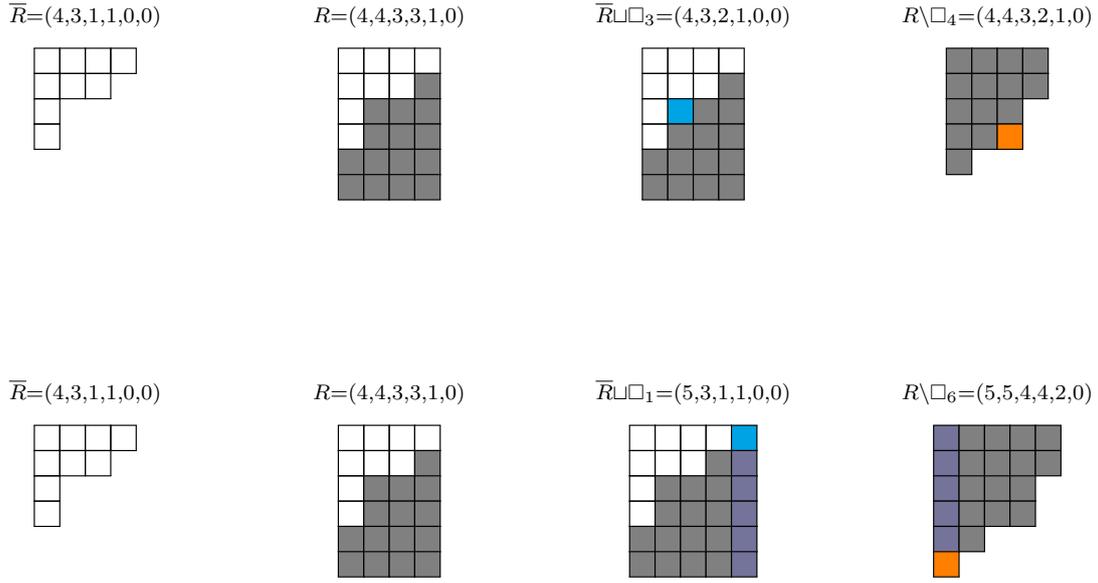\par

We first change variables $R^{\prime} = \overline{R}$ and use the property $\dim \overline{R} = \dim R$, and also 
\begin{equation}
\label{eq:tableauxsubtract}
    \overline{\left( \overline{R^{\prime}} \sqcup \Box_J \right) } = R^{\prime} \setminus \Box_{L-J+2} , 
\end{equation}
where the right-hand side is the conjugate to the Young diagram with the last box of the $(L-J+2)^{\text{th}}$ row removed. Expression \eqref{eq:tableauxsubtract} is valid for $J=2, \dots, L+1 $. To address the case $J=1$ we observe that 
\begin{equation}
\label{Rprimeshifted}
    R^{\prime} = (R^{\prime}_1  , R^{\prime}_2 , \dots, R^{\prime}_L , 0) \cong (R^{\prime}_1 + 1 , R^{\prime}_2 + 1, \dots, R^{\prime}_L + 1, 1) ,
\end{equation}
where the two expressions yield isomorphic $SU(L+1)$ representation with same dimensions, although their $\lvert R^{\prime} \rvert$ is shifted by $L$. The representation $\overline{R^{\prime}} \sqcup \Box_1$ is obtained deleting the last box from the right-hand side of \eqref{Rprimeshifted}. We can therefore let $R^{\prime}$ run over non-empty Young diagrams with $R^{\prime}_{L+1} \in \left\{ 0,1 \right\}$, which is equivalent to declare that the conjugate of $R=\emptyset$ is the determinant representation of $SU(L+1)$, which is trivially true representation theoretically and is most suited for dealing with Young diagrams. For instance:
\begin{equation}
    \emptyset \sqcup \Box_1 = \Box = \overline{ \left( \overline{ \Box } \right) } = \overline{ (\underbrace{1,\dots, 1}_{L},0)} = \overline{ (\underbrace{1,\dots, 1}_{L+1}) \setminus \Box_{L+1} } .
\end{equation}
These observations imply that we are allowed to simply use \eqref{eq:tableauxsubtract} with the understanding that $R^{\prime}$ cannot be empty but can have at most one box in its $(L+1)^{\text{th}}$ row.\footnote{Strictly speaking, this would only be true if we compensate the shift in $ \lvert R^{\prime} \rvert$ with a factor $e^{\beta L \delta_{J,1}} $ in the right-hand side of \eqref{eq:OmegachangedIOP}. This factor does not change in any way the ensuing discussion thus we omit it.}\par
Finally, we rename the summation index $I=L+2-J$ and use the fact that 
\begin{equation}
     \dim (R^{\prime} \setminus \Box_I ) =0 \qquad \text{ if } - I + L + 2 \notin \mathscr{J}_{\overline{R^{\prime}}}
\end{equation}
to let the sum run over $I \in \left\{ 1, \dots, L+1 \right\}$.\par
We stress that there is nothing wrong with $J=1$, and this digression is just to explain what it means by deleting a box from the row $I=L+1$: it means to first shift the Young diagram by one column using \eqref{Rprimeshifted}, and then delete the unique box in the $I^{\text{th}}=(L+1)^{\text{th}}$ row so to get a valid $SU(L+1)$ representation. This is manifest in Figure \ref{fig:RcupJconj}.\par
In this way we get
\begin{equation}
\label{eq:OmegaIOPfullformula}
     {}_L\langle \Psi_{\beta} \lvert \Omega (\omega )  \rvert \Psi_{\beta} \rangle_L  = \frac{1}{\mz_{\text{\rm cIOP}}} \sum_{R^{\prime}} e^{- \beta \lvert R^{\prime}\rvert } (\dim R^{\prime})^2 \frac{1}{(L+1)} \sum_{I=1}^{L+1} \frac{1}{\omega- \mu - E_{I}^{\prime} } ~ \frac{\dim (R^{\prime} \setminus \Box_I) }{  \dim R ^{\prime} }
\end{equation}
where now 
\begin{equation}
    E_{I}^{\prime}  = \frac{\lambda}{(L+1)} (R_I^{\prime} - I + L)    
\end{equation}
descends from $E_J ^{\text{\rm int}}$ after the various rearrangements. With the change of variables \eqref{eq:changeRtoH}, 
\begin{equation}
\label{eq:EprimeIishI}
    E_I^{\prime} = \frac{\lambda}{(L+1)} h_I .
\end{equation}
We drop the $^{\prime}$ from now on. We have thus recovered the formula of \cite[Eq.(5.13)]{Iizuka:2008eb} as a particular case of the framework developed in Part \ref{part1}.\par
The proof of the spectral density of the IOP model relies on the following technical result, which is a particular case of Lemma \ref{lemma:OmegalargeN}. 
\begin{lem}[\cite{Iizuka:2008eb}]\label{lem:GRIOPfromint}
The asymptotic behavior of ${}_L\langle \Psi_{\beta} \lvert \Omega (\omega \pm  i \varepsilon) \rvert \Psi_{\beta} \rangle_L$ of the IOP model in the planar limit is
\begin{equation}
\label{eq:OmegaIOPlargeN}
    {}_L\langle \Psi_{\beta} \lvert  \Omega (\omega \pm i \varepsilon) \rvert \Psi_{\beta} \rangle_L \approx 1- \exp \left(   \int_{0} ^{x_{+}} \dd x  \frac{ \varrho_{\ast} (x) }{x+\mu -\omega \mp i \varepsilon } \right) .
\end{equation}
\end{lem}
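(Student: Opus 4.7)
The plan is to specialize the general argument of Lemma \ref{lemma:OmegalargeN} to the IOP setting, exploiting that here the sum over $I$ inside \eqref{eq:OmegaIOPfullformula} can be performed in closed form \emph{before} taking the large-$N$ limit. First I would rewrite the dimension ratio using the Vandermonde form \eqref{eq:dimRVdm} of $\dim R$: since removing a box from row $I$ is implemented in the $h$-variables by $h_I \mapsto h_I - 1$, an easy product manipulation gives
\begin{equation}
\frac{\dim (R \setminus \Box_I)}{\dim R} \;=\; \prod_{j \ne I} \frac{h_I - 1 - h_j}{h_I - h_j}.
\end{equation}
Combined with \eqref{eq:EprimeIishI}, the inner sum over $I$ in \eqref{eq:OmegaIOPfullformula} takes the shape $\sum_I (z' - h_I)^{-1}\!\prod_{j\ne I}\!\frac{h_I-1-h_j}{h_I-h_j}$, with $z':=(L+1)(\omega - \mu)/\lambda$.

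The key trick is a partial-fraction identity. The rational function $f(x) := \prod_{j=1}^{L+1} \frac{x-1-h_j}{x-h_j}$ satisfies $f(\infty) = 1$ and has residue $-\prod_{j\ne I}\frac{h_I-1-h_j}{h_I-h_j}$ at each simple pole $x = h_I$. Decomposing $f$ into partial fractions and evaluating at $x = z'$ immediately yields
\begin{equation}
\frac{1}{L+1}\sum_{I=1}^{L+1} \frac{1}{\omega - \mu - E_I'}\,\frac{\dim (R \setminus \Box_I)}{\dim R} \;=\; \frac{1}{\lambda}\!\left[1 \;-\; \prod_{j=1}^{L+1} \frac{z'-1-h_j}{z'-h_j}\right],
\end{equation}
which converts the matrix element of $\Omega(\omega)$ into the thermal average of the single product on the right.

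Next I would take the Veneziano limit with $h_j = L\, x_j$ and $z'/L \to (\omega-\mu)/\lambda$. Since $1/(z'-h_j)$ is $O(1/L)$, expanding each logarithm as $\log(1 - (z'-h_j)^{-1}) = -(z'-h_j)^{-1} + O(L^{-2})$ and summing the $L+1$ terms produces an $O(1)$ contribution. The factorization property at large $N$ (Lemma \ref{lem:factor}) allows the thermal average of the exponential to be replaced, to leading order, by the exponential of the thermal average, at which point the empirical measure $\frac{1}{L}\sum_j \delta(x-x_j)$ is substituted by the saddle-point eigenvalue density $\varrho_\ast$ derived in Subsections \ref{sec:IOPlargeN}-\ref{sec:cIOPlargeN}. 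The result is
\begin{equation}
\prod_j \frac{z'-1-h_j}{z'-h_j} \;\longrightarrow\; \exp\!\left(-\!\int \frac{\varrho_\ast(x)\,\mathrm{d}x}{(\omega-\mu)/\lambda - x}\right),
\end{equation}
which for $\lambda = 1$ matches \eqref{eq:OmegaIOPlargeN} after the trivial rewriting $-(\omega-\mu-x)^{-1} = (x+\mu-\omega)^{-1}$ and the upper endpoint $x_+$ being the right edge of $\supp\varrho_\ast$. The $\pm i\varepsilon$ prescription is inherited unchanged from the retarded/advanced choice of Wightman function.

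The main obstacle I anticipate is the interchange of the large-$L$ limit with the thermal (saddle-point) average in the last step: one needs uniformity estimates for $(z'-h_j)^{-1}$ near the endpoints of $\supp\varrho_\ast$, and a controlled argument that $\log\prod_j\!\frac{z'-1-h_j}{z'-h_j}$ self-averages to its saddle-point value (rather than just converging in distribution). Both issues hinge on the planar factorization already discussed in Appendix~\ref{app:LargeNproof} and on the gapped, bounded nature of $\varrho_\ast$, which keep $\mathrm{Re}\,(z' - h_j)$ bounded away from zero for generic $\omega$ and isolate the singular behavior into the expected branch cut along $\lambda\,\supp\varrho_\ast + \mu$.
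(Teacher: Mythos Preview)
Your proof is correct and reaches the same identity as the paper, but via a slightly more direct algebraic route. The paper introduces $\Phi_{\vec h}(\xi)=\prod_j\bigl(1-\tfrac{1}{\xi-h_j}\bigr)$ (identical to your $f$), writes $\Omega_{\vec h}$ as a Cauchy integral $\oint_{\mathcal C}\frac{\Phi_{\vec h}(\xi)}{\tilde\omega/g-\xi}\,\frac{\dd\xi}{2\pi i}$, and then deforms the contour to infinity, picking up the pole at $\xi=\tilde\omega/g$ and the constant from the circle at infinity. Your partial-fraction computation $f(z')=1-\sum_I\frac{\mathrm{Res}_{h_I}f}{z'-h_I}$ is precisely the residue bookkeeping that the contour deformation performs, so the two arguments are equivalent; yours is arguably more elementary since it avoids introducing the auxiliary contour. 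The subsequent large-$L$ expansion of $\log\Phi_{\vec h}(z')$ and the replacement by $\varrho_\ast$ are done identically in both.

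One minor conceptual slip: invoking Lemma~\ref{lem:factor} to justify replacing the thermal average of the product by the exponential of the saddle-point integral is not quite the right citation. Lemma~\ref{lem:factor} concerns Wick factorization of the probe $n$-point functions, whereas what you actually need here is the concentration of the empirical eigenvalue measure onto $\varrho_\ast$ in the discrete ensemble~\eqref{eq:genericdiscreteMM}, which is the standard saddle-point statement for the matrix model (this is what the paper means by ``the usual asymptotics''). The uniformity concerns you raise at the end are legitimate but are not addressed rigorously in the paper either; both treatments operate at the physics level of saddle-point approximation.
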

\begin{proof}
    The proof is essentially identical to Appendix \ref{app:proofLemmaOmega} up to certain signs, and we simply sketch the main ideas here. By direct computation, the ratio of dimensions in \eqref{eq:OmegaIOPfullformula} yields 
\begin{equation}
\frac{\dim (R \setminus \Box_I) }{  \dim R} = \prod_{j \ne I}  \left(  1 - \frac{1}{h_I-h_j} \right) .
\end{equation}
For every $(L+1)$-tuple $\vec{h}=(h_1, \dots, h_L,-1)$ and $\xi \in \mathbb{C}$, we define the function (it slightly differs from Appendix \ref{app:proofLemmaOmega} to make contact with the sign conventions of \cite{{Iizuka:2008eb}} in this example)
\begin{equation}
    \Phi_{\vec{h}} (\xi) := \prod_{j=1}^{L+1} \left( 1 - \frac{1}{\xi -h_j} \right) ,
\end{equation}
and also denote $\Omega_{\vec{h}}$ the right-hand side of \eqref{eq:OmegachangedIOP} after substituting \eqref{eq:EprimeIishI}. Thanks to the residue theorem we equate 
\begin{equation}
     \Omega_{\vec{h}} (\omega \pm i \varepsilon ) = - \frac{1}{(L+1) g} \oint_{\mathcal{C}} \frac{ \dd \xi}{2 \pi i }  \frac{\Phi_{\vec{h}} (\xi) }{\frac{\omega \pm i \varepsilon - \mu}{g} - \xi } ,
\end{equation}
with the integration contour $\mathcal{C} = \mathcal{C} (\vec{h})$ encircling the points $h_{I}$ and leaving outside the point $(\omega \pm i \varepsilon- \mu)/g$ (compared to Appendix \ref{app:proofLemmaOmega}, here the residues pick a minus sign due to the slight redefinition of $ \Phi_{\vec{h}} (\xi)$, and we will see that our present computation will reproduce \cite[Eq.(5.38)]{Iizuka:2008eb}). From now on we set $Lg = \lambda=1$ without loss of generality. By contour deformation we get 
\begin{equation}
    \Omega_{\vec{h}} (\omega \pm i \varepsilon ) =  1 - \Phi_{\vec{h}} \left( (L+1) (\omega \pm i \varepsilon - \mu)\right) ,
\end{equation}
see Appendix \ref{app:proofLemmaOmega} for the details. Writing 
\begin{equation}
    \Phi_{\vec{h}} \left( (L+1) (\omega \pm i \varepsilon - \mu)\right) = \exp \left[ \sum_{j=1}^{L+1} \ln  \left( 1 - \frac{1}{(L+1) \left( \omega \pm i \varepsilon - \mu -\frac{h_j}{L+1} \right) } \right) \right]
\end{equation}
and approximating at leading order in the planar limit $\ln \left( 1- \frac{c}{L+1} \right) \approx - \frac{c}{L+1}$, we get 
\begin{equation}
     \left. \Omega_{\vec{h}} (\omega \pm i \varepsilon ) \right\rvert_{\text{saddle point}} =  1 -  \exp \left( \int \dd x \frac{ \varrho_{\ast} (x)}{x + \mu - (\omega \pm i \varepsilon) } \right) .
\end{equation}
The claim follows from the usual asymptotics
\begin{equation}
{}_L\langle \Psi_{\beta} \lvert  \Omega (\omega \pm i \varepsilon) \rvert \Psi_{\beta} \rangle_L \approx \left. \Omega_{\vec{h}} (\omega \pm i \varepsilon ) \right\rvert_{\text{saddle point}} .
\end{equation}
\end{proof}

\begin{center}
\noindent\fbox{%
\parbox{0.98\linewidth}{%
\begin{thm}\label{thmIOPspec}
The spectral density of the IOP model is
\begin{equation}
\label{eq:specrhoIOP}
    \rho (\omega) =  \frac{(1-y)}{2\omega_{\mathrm{r}}} \left[ \theta (\omega-\mu) \sqrt{ ( \omega_{\mathrm{r}} - x_{-}) (x_{+} -  \omega_{\mathrm{r}} ) } - \theta (-\omega-\mu) \sqrt{ ( \omega_{\mathrm{r}} - x_{-}) (x_{+} -  \omega_{\mathrm{r}} ) }  \right] 
\end{equation}
with compact support 
\begin{equation}
    \supp \rho = \left[ -x_{+} -\mu , -x_{-} -\mu \right]  \cup \left[  x_{-}+\mu , x_{+} +\mu \right]  \subset \mathbb{R}, \qquad \qquad x_{\pm} = \left[ \coth (\beta/4) \right]^{\pm 1} .
\end{equation}
\end{thm}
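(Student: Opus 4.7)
The plan is to compute $\rho(\omega)$ for $\omega>\mu$ by plugging the planar asymptotic of Lemma \ref{lem:GRIOPfromint} into the discontinuity formula of Corollary \ref{cor:jumprhoOmega}, and then to recover the $\omega<-\mu$ branch using the manifest oddness of $\rho$. Writing $F(\omega\pm i\varepsilon)=\int_0^{x_+}\varrho_\ast(x)/(x-\omega_{\mathrm r}\mp i\varepsilon)\,\mathrm dx$, the Plemelj formula gives
\begin{equation*}
   \lim_{\varepsilon\to 0^+}F(\omega\pm i\varepsilon)=F_{\mathrm{pv}}(\omega_{\mathrm r})\pm i\pi\varrho_\ast(\omega_{\mathrm r}),
\end{equation*}
so the discontinuity of $\Omega=1-e^{F}$ across the real axis is $-2i e^{F_{\mathrm{pv}}(\omega_{\mathrm r})}\sin(\pi\varrho_\ast(\omega_{\mathrm r}))$. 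Corollary \ref{cor:jumprhoOmega} then yields
\begin{equation*}
    \rho(\omega)=\mathrm{sgn}(\omega)\, e^{F_{\mathrm{pv}}(\omega_{\mathrm r})}\sin\!\bigl(\pi\varrho_\ast(\omega_{\mathrm r})\bigr),\qquad \omega_{\mathrm r}=|\omega|-\mu,
\end{equation*}
and the rest of the proof is the explicit evaluation of the two factors on the right-hand side using the capped density \eqref{eq:DKansatz} recalled in Subsection \ref{sec:IOPlargeN}.

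\textbf{Support and the $\sin$ factor.} Because $\varrho_\ast\equiv 1$ on $[0,x_-]$ and $\varrho_\ast\equiv 0$ outside $[0,x_+]$, one has $\sin(\pi\varrho_\ast(\omega_{\mathrm r}))=0$ whenever $\omega_{\mathrm r}\notin[x_-,x_+]$, which immediately identifies the claimed support $\{\pm[\mu+x_-,\mu+x_+]\}$ of $\rho$. On the nontrivial interval $\omega_{\mathrm r}\in[x_-,x_+]$ the formula $\hat\varrho_\ast=\tfrac{2}{\pi}\arctan\alpha$ with $\alpha=\sqrt{x_-/x_+}\sqrt{(x_+-\omega_{\mathrm r})/(\omega_{\mathrm r}-x_-)}$ combined with $\sin(2\arctan\alpha)=2\alpha/(1+\alpha^2)$ yields, after simplification using $x_-x_+=\tanh(\beta/4)\coth(\beta/4)=1$ and $x_+-x_-=2/\sinh(\beta/2)$,
\begin{equation*}
    \sin\!\bigl(\pi\varrho_\ast(\omega_{\mathrm r})\bigr)=\frac{\sinh(\beta/2)}{\omega_{\mathrm r}}\,\sqrt{(\omega_{\mathrm r}-x_-)(x_+-\omega_{\mathrm r})}.
\end{equation*}

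\textbf{The principal-value integral and conclusion.} The principal-value integral splits as $F_{\mathrm{pv}}(\omega_{\mathrm r})=\int_0^{x_-}\mathrm dx/(x-\omega_{\mathrm r})+\mathrm{P}\!\!\!\int_{x_-}^{x_+}\hat\varrho_\ast(x)/(x-\omega_{\mathrm r})\,\mathrm dx$. The first integral evaluates elementarily to $\ln[(\omega_{\mathrm r}-x_-)/\omega_{\mathrm r}]$, while the second is fixed by the modified equilibrium equation \eqref{eq:rhohatSPEIOP}, which (with the sign appropriate to the convention $x-x'$ in \eqref{eq:IOPSPE}) reads $\mathrm{P}\!\!\!\int_{x_-}^{x_+}\hat\varrho_\ast(x')/(x-x')\,\mathrm dx'=\beta/2-\ln[x/(x-x_-)]$ for $x\in(x_-,x_+)$; flipping $x-x'\to x'-x$ and specializing $x=\omega_{\mathrm r}$ gives $-\beta/2+\ln[\omega_{\mathrm r}/(\omega_{\mathrm r}-x_-)]$. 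The two logarithms cancel and leave $F_{\mathrm{pv}}(\omega_{\mathrm r})=-\beta/2$, so $e^{F_{\mathrm{pv}}}=\sqrt{y}$. Multiplying the two factors produces $e^{-\beta/2}\sinh(\beta/2)=(1-y)/2$ and hence
\begin{equation*}
    \rho(\omega)=\mathrm{sgn}(\omega)\,\frac{1-y}{2\omega_{\mathrm r}}\sqrt{(\omega_{\mathrm r}-x_-)(x_+-\omega_{\mathrm r})},
\end{equation*}
exactly \eqref{eq:specrhoIOP}. The main delicate step is the cancellation of the logarithmic terms in $F_{\mathrm{pv}}$: the clean answer $-\beta/2$ depends on being careful about which form of the equilibrium equation (with which sign) is being used and on matching conventions between \eqref{eq:IOPSPE} and \eqref{eq:rhohatSPEIOP}. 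Once that bookkeeping is done, the proof reduces to high-school trigonometry plus Plemelj.
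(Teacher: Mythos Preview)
Your proof is correct and reaches the same result as the paper, but by a genuinely cleaner route. The paper computes $\Delta\Omega(\omega,i\varepsilon)$ by first integrating the constant part of $\varrho_\ast$ on $[0,x_-]$, then writing $\hat\varrho_\ast=\tfrac12(1-f'_\ast)$, pulling out the $\tfrac12$ piece, and finally quoting the remaining $f'_\ast$-integral from \cite[Eq.(5.38)]{Iizuka:2008eb}; the branch-cut structure of the resulting square roots is then tracked by hand to extract the discontinuity. You instead apply Plemelj once to obtain the closed form $\rho(\omega)=\mathrm{sgn}(\omega)\,e^{F_{\mathrm{pv}}(\omega_{\mathrm r})}\sin\!\bigl(\pi\varrho_\ast(\omega_{\mathrm r})\bigr)$ and evaluate the two factors separately: the sine factor via $\sin(2\arctan\alpha)=2\alpha/(1+\alpha^2)$ and the identities $x_-x_+=1$, $x_+-x_-=2/\sinh(\beta/2)$, and the exponential factor by recognizing that the principal-value integral on $[x_-,x_+]$ is precisely the (sign-flipped) equilibrium equation \eqref{eq:rhohatSPEIOP}, so that the logarithms from the $[0,x_-]$ piece cancel and leave $F_{\mathrm{pv}}=-\beta/2$. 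Your approach is more self-contained (no external integral needs to be quoted) and makes the support of $\rho$ transparent from the outset, since $\sin(\pi\varrho_\ast)$ vanishes automatically where $\varrho_\ast\in\{0,1\}$. The paper's route, on the other hand, makes the analytic structure of $\widetilde G_{\mathrm R}$ itself more explicit, which is useful for the subsequent discussion of cIOP where the equilibrium equation changes form. Your caveat about sign conventions between \eqref{eq:IOPSPE} and \eqref{eq:rhohatSPEIOP} is well taken: indeed the version of \eqref{eq:rhohatSPEIOP} consistent with \eqref{eq:IOPSPE} has a minus sign in front of the logarithm, exactly as you use it.
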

\begin{cor}
    The large $N$ correlation functions of the IOP model are described by a von Neumann algebra of type III$_1$.
\end{cor}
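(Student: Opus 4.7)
The plan is to apply Lemma \ref{lem:GRIOPfromint}, which expresses the planar Wightman function as $\Omega(\omega) \approx 1-\exp F(\omega)$ with $F(\omega)=\int_{0}^{x_{+}} \varrho_{\ast}(x)/(x+\mu-\omega)\,\dd x$, and then extract $\rho(\omega)$ from the jump of $\Omega$ across the real axis by means of Corollary \ref{cor:jumprhoOmega}. The statement about Type III$_1$ will follow at once from Theorem \ref{thm:istype3Rigor} once the continuous (and non-degenerate) support of $\rho$ is exhibited, since Lemma \ref{lem:factor} already ensures the requisite large $N$ factorization for the IOP ensemble.

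First I would analyze $F(\omega\pm i\varepsilon)$ on the real axis. Setting $z=\omega-\mu\equiv\omega_{\mathrm{r}}$ and splitting $\varrho_{\ast}$ into its plateau piece equal to $1$ on $[0,x_{-}]$ and its bulk piece $\hat{\varrho}_{\ast}$ on $[x_{-},x_{+}]$, the elementary Cauchy integral over the plateau produces a term $\ln\bigl(z/(z-x_{-})\bigr)$, while the modified equilibrium equation \eqref{eq:rhohatSPEIOP} provides the principal value integral of the bulk piece. The two logarithmic contributions combine so that $\mathrm{Re}\,F(\omega\pm i\varepsilon)=-\beta/2$ is simply a constant on the bulk cut; the Sokhotski--Plemelj relations yield $\mathrm{Im}\,F(\omega\pm i\varepsilon)=\pm\pi\varrho_{\ast}(z)$.

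Next I would convert this into a formula for $\rho$. On the plateau $z\in(0,x_{-})$ where $\varrho_{\ast}\equiv 1$, the ratio $\exp F(\omega+i\varepsilon)/\exp F(\omega-i\varepsilon)=e^{2\pi i}=1$ is trivial, so this segment drops out of $\supp\rho$. On the bulk $z\in(x_{-},x_{+})$, Corollary \ref{cor:jumprhoOmega} yields, for $\omega>0$,
\begin{equation*}
\rho(\omega) = \sqrt{y}\,\sin\!\bigl[\pi\hat{\varrho}_{\ast}(\omega_{\mathrm{r}})\bigr].
\end{equation*}
Using the explicit arctan representation of $\hat{\varrho}_{\ast}$ from Subsection \ref{sec:IOPlargeN}, the double-angle formula applied to $\tan(\theta/2)=\sqrt{x_{-}/x_{+}}\sqrt{(x_{+}-z)/(z-x_{-})}$, together with the identity $x_{+}x_{-}=1$, turns $\sin[\pi\hat{\varrho}_{\ast}(\omega_{\mathrm{r}})]$ into the algebraic expression $2\sqrt{(\omega_{\mathrm{r}}-x_{-})(x_{+}-\omega_{\mathrm{r}})}/[\omega_{\mathrm{r}}(x_{+}-x_{-})]$. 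Invoking $x_{+}-x_{-}=2/\sinh(\beta/2)$ and $\sqrt{y}\sinh(\beta/2)=(1-y)/2$ collapses the prefactors into $(1-y)/(2\omega_{\mathrm{r}})$, producing \eqref{eq:specrhoIOP} on $[x_{-}+\mu,x_{+}+\mu]$. The mirror support on the negative axis and the overall sign follow from the oddness of $\rho$ built into Corollary \ref{cor:jumprhoOmega}.

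The main obstacle is the cancellation between the Cauchy integral of the plateau and the logarithmic term in \eqref{eq:rhohatSPEIOP}: this is what makes $\rho$ algebraic rather than a transcendental function of $\omega_{\mathrm{r}}$, and it is sensitive to the precise form of the capped ansatz \eqref{eq:DKansatz}. After that, the remaining work is the half-angle simplification, which although routine requires careful bookkeeping of the constraint $x_{+}x_{-}=1$. With $\supp\rho$ thus identified as the disjoint union of two compact intervals with non-empty interior, the corollary follows immediately from \eqref{eq:typefromrho} and Theorem \ref{thm:istype3Rigor}: the von Neumann algebra constructed from the Araki--Woods recipe of Subsection \ref{sec:vNpivot} applied to this spectral density is a Type III$_1$ factor.
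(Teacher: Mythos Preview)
Your proposal is correct and takes a genuinely different route to the spectral density than the paper. The paper's proof of Theorem~\ref{thmIOPspec} integrates the plateau of $\varrho_\ast$ explicitly to produce a rational prefactor, passes to the shape function via $\hat\varrho_\ast=(1-f'_\ast)/2$, integrates the constant piece, and then quotes the remaining $f'_\ast$-integral from \cite[Eq.(5.38)]{Iizuka:2008eb}, finally extracting $\rho$ by tracking the square-root branch cuts of \eqref{eq:GtildeIOPexplicit}. You instead use Sokhotski--Plemelj to read off $\mathrm{Im}\,F=\pm\pi\varrho_\ast$ in one stroke, and invoke the equilibrium equation to collapse the principal-value part of $F$ to the constant $-\beta/2$ on the bulk cut, landing on the compact intermediate formula $\rho=e^{-\beta/2}\sin[\pi\hat\varrho_\ast(\omega_{\mathrm r})]$ before the half-angle reduction with $x_+x_-=1$. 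This is more self-contained (no external integral is imported) and makes the mechanism---the equilibrium condition turning a transcendental jump into an algebraic one---fully transparent; it also explains neatly why the plateau $[0,x_-]$ drops out of $\supp\rho$ (the jump $e^{2\pi i}=1$ is trivial there). One caution on bookkeeping: the displayed sign in \eqref{eq:rhohatSPEIOP} is a typo---moving the plateau contribution of \eqref{eq:IOPSPE} to the right-hand side actually gives $\beta/2-\ln\bigl(x/(x-x_-)\bigr)$---so your claimed cancellation of the two logarithms holds with the corrected equation rather than with the printed one.
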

}}\end{center}\par
The positive branch of the square roots is taken in \eqref{eq:specrhoIOP}, and the signs in front are adjusted accordingly, and we have used the definition $\omega_{\mathrm{r}}=\lvert \omega \rvert - \mu$ from \eqref{eq:defomegaren} to shorten the expressions.
\begin{proof}[Proof of Theorem \ref{thmIOPspec}]
To reduce clutter, we define the shorthand 
\begin{equation}
    \Delta \Omega (\pm \omega, i \varepsilon) := {}_L\langle \Psi_{\beta} \lvert  \Omega (\pm \omega + i \varepsilon) - \Omega (\pm \omega - i \varepsilon)\rvert \Psi_{\beta} \rangle_L 
\end{equation}
and hence our ultimate goal is to compute 
\begin{equation}
\label{eq:rhoIOPdiscG}
    \rho (\omega) =\lim_{\varepsilon \to 0^{+}}  \left[ \frac{i}{2} \Delta \Omega (\omega, i \varepsilon) - \frac{i}{2} \Delta \Omega (-\omega, i \varepsilon) \right] .
\end{equation}
We use Lemma \ref{lem:GRIOPfromint} to write 
\begin{align}
    \Delta \Omega (\omega, i \varepsilon) \approx & \exp \left(   \int_{0} ^{x_{+}} \dd x  \frac{ \varrho_{\ast} (x) }{x+\mu -\omega + i \varepsilon } \right)  - \exp \left(   \int_{0} ^{x_{+}} \dd x  \frac{ \varrho_{\ast} (x) }{x+\mu -\omega - i \varepsilon } \right) , 
\end{align}
and then plug the eigenvalue density \eqref{eq:DKansatz} and integrate on the interval $[0, x_{-}]$, where $\varrho_{\ast}$ is constant. We get 
\begin{equation}
\begin{aligned}
     \Delta \Omega (\omega, i \varepsilon) \approx  &  \frac{ \left( \sqrt{ \omega - \mu - x_{-} - i \varepsilon } \right)^2 }{ \left( \sqrt{ \omega-\mu -i \varepsilon  } \right)^2 }   \exp \left[ \int_{x_{-}} ^{x_{+}} \dd x  \frac{ \hat{\varrho}_{\ast} (x) }{x-\omega+\mu + i \varepsilon } \right]  \  - \ ( i \varepsilon \mapsto - i \varepsilon ) .
\end{aligned}
\end{equation}
We use the substitution \eqref{eq:varrhotofIOP}, which we recall is $\left. \varrho_{\ast} (x)\right\rvert_{[x_-,x_+]} = \frac{1}{2} - \frac{f_{\ast} ^{\prime} (x) }{2}$, and integrate the constant $\frac{1}{2}$ on $[x_{-},x_{+}]$: 
\begin{align}
\label{eq:DeltaOmegaIOPfprime}
    \Delta \Omega (\omega, i \varepsilon) \approx  &  \frac{\sqrt{ (\omega-\mu - x_{-} - i \varepsilon)} \sqrt{ (\omega-\mu - x_{+} - i \varepsilon) }}{\left( \sqrt{ \omega-\mu -i \varepsilon  } \right)^2 }  \exp \left[ \frac{1}{2} \int_{x_{-}} ^{x_{+}} \dd x  \frac{ f^{\prime}_{\ast} (x) }{\omega-\mu -x -i \varepsilon} \right] \\ & \  - \ ( i \varepsilon \mapsto - i \varepsilon ) ,  \notag 
\end{align}
which is analogous to \cite[Eq.(5.38)]{Iizuka:2008eb}. Reading the integral off from there, and denoting momentarily $\tilde{x}_{\pm} := x_{\pm} + \mu$ to lighten the expressions, one finds 
\begin{equation}
\label{eq:GtildeIOPexplicit}
     \Delta \Omega (\omega, i \varepsilon) \approx  \frac{(1-y) }{2(\omega - \mu -i \varepsilon)}  \left[  -1+ \sqrt{ \omega-i \varepsilon -\tilde{x}_{-}} \sqrt{\omega-i \varepsilon -\tilde{x}_{+}} \right]  \ - \ ( i \varepsilon \mapsto - i \varepsilon ) .
\end{equation}
The expression for $\Delta \Omega (-\omega, i \varepsilon)$ is obtained with the obvious replacement $\omega \mapsto - \omega$.\par
To evaluate the spectral density of the IOP model, we use the discontinuity equation \eqref{eq:rhoIOPdiscG}. The pieces $ \frac{(1-y) }{2( \omega - \mu  \pm i \varepsilon)}$ are smooth in the limit $\varepsilon \to 0^{+}$, therefore they cancel in \eqref{eq:rhoIOPdiscG} and do not contribute to $\rho (\omega)$. The pieces 
\begin{equation}
    \sqrt{ \omega \pm i \varepsilon - \tilde{x}_{-} } \sqrt{\omega \pm i \varepsilon - \tilde{x}_{+} }
\end{equation}
in $\Delta \Omega (\omega, i \varepsilon)$ have branch cuts on $[\tilde{x}_-, \tilde{x}_{+} ]$ in the limit $\varepsilon \to 0 ^{+}$, thus contribute non-trivially to \eqref{eq:rhoIOPdiscG} only if $\omega \in [\tilde{x}_-, \tilde{x}_{+} ]$. Likewise, the corresponding pieces in $\Delta \Omega (-\omega, i \varepsilon)$ have branch cuts along $[-\tilde{x}_+, - \tilde{x}_{-} ]$.\par
Formula \eqref{eq:rhoIOPdiscG} instructs us to assemble these building blocks into $\frac{i}{2} \Delta \Omega (\omega, i \varepsilon) - \frac{i}{2} \Delta \Omega (-\omega, i \varepsilon)$. Let us omit the smooth prefactor for a moment. 
\begin{enumerate}[(i)]
    \item Note that the signs combine to give, schematically, 
\begin{align}
     & \left[ - \frac{i}{2} \left( \sqrt{ - (\omega + i \varepsilon - \tilde{x}_{-})( \tilde{x}_{+} - \omega  - i \varepsilon)} - \sqrt{ - (\omega - i \varepsilon - \tilde{x}_{-})( \tilde{x}_{+} - \omega  + i \varepsilon)} \right) \right. \\
     & \ \left. + \frac{i}{2} \left( \sqrt{ - (\omega + i \varepsilon - (-\tilde{x}_{+}) )( - \tilde{x}_{-} - (\omega  + i \varepsilon))} - \sqrt{ - (\omega - i \varepsilon - (-\tilde{x}_{+}) )( - \tilde{x}_{-} - (\omega  - i \varepsilon))} \right) \right] , \notag 
\end{align}
where the first (respectively second) line account for the contribution from $\Delta \Omega (\omega, i \varepsilon)$ (respectively $\Delta \Omega (-\omega, i \varepsilon)$).
\item Note also that the first (respectively second) square root in each parenthesis approaches the branch cut from above (respectively below) as $\varepsilon \to 0^+$, see Figure \ref{fig:branchIOP}.
\item Hence the first (respectively second) square root in each parenthesis produces a factor $e^{i \pi/2}$ (respectively $e^{-i\pi/2}$).
\end{enumerate}\par
\begin{figure}[th]
    \centering
    \includegraphics[width=0.35\textwidth]{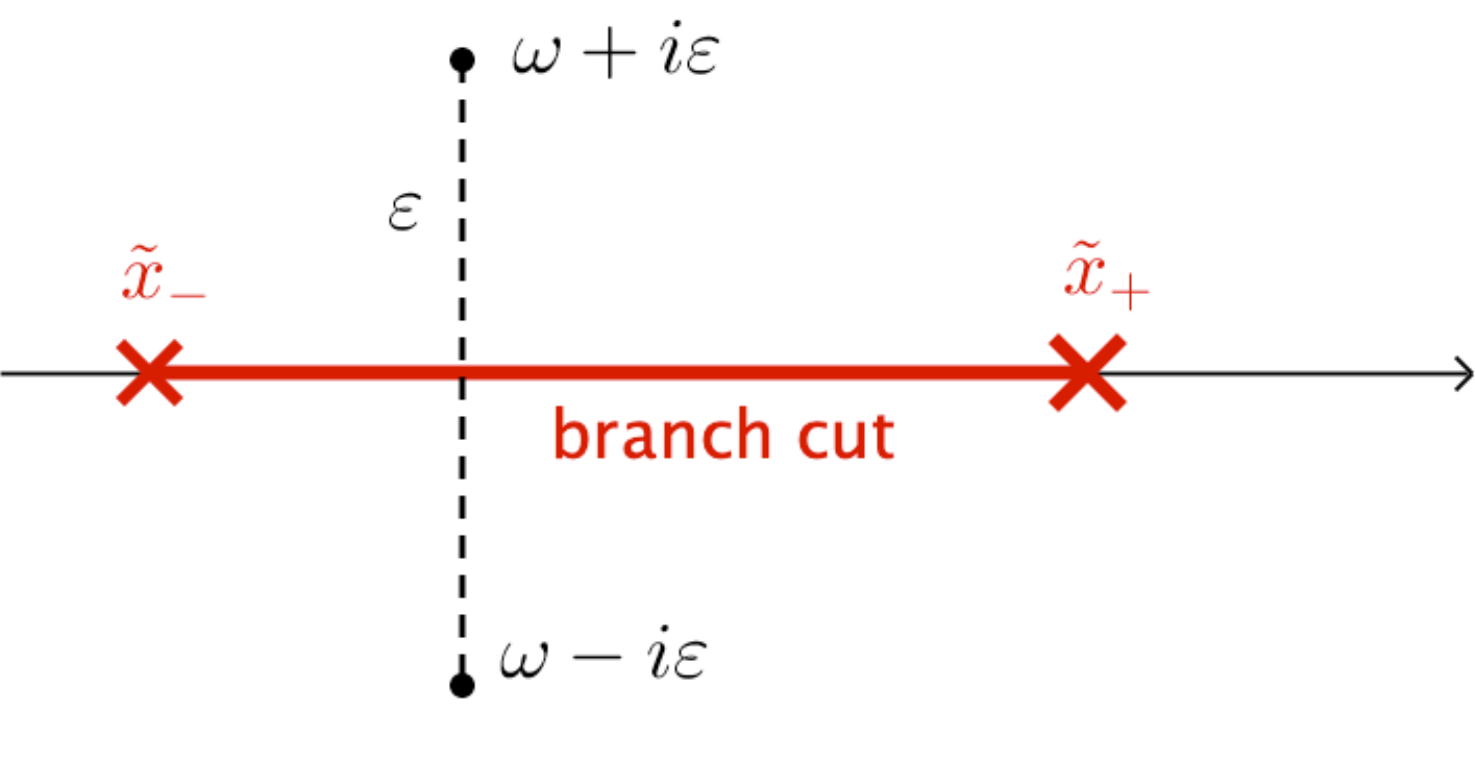}
    \caption{The branch cut that contributes to the spectral density of IOP. There is a second branch with $\omega$ replaced by $-\omega$.}
    \label{fig:branchIOP}
\end{figure}\par
Combining these observations we get 
\begin{equation}
   \begin{cases}  - \frac{i}{2} \left( e^{i \pi/2} - e^{-i \pi/2} \right)  \sqrt{(\omega  - \tilde{x}_{-})( \tilde{x}_{+} - \omega )} & \text {if } \hspace{13pt} \omega \in [\tilde{x}_-, \tilde{x}_{+} ] \\ + \frac{i}{2} \left( e^{i \pi/2} - e^{-i \pi/2} \right)  \sqrt{( (- \omega) -\tilde{x}_{-}  )(\tilde{x}_{+} - (- \omega ))}  & \text {if } - \omega \in [\tilde{x}_-, \tilde{x}_{+} ] \end{cases} 
\end{equation}
which, after simplifications, leaves a plus sign in front when $\omega \in [\tilde{x}_-, \tilde{x}_{+} ]$ and a minus sign in front when $- \omega \in [\tilde{x}_-, \tilde{x}_{+} ]$. We have observed in Subsection \ref{sec:spectral} that $\Omega$ is a resolvent for the density $\rho (\omega)$, and this analysis of the overall sign reproduces the standard sign conventions, with $\lim_{\varepsilon \to 0^{+}} i \Delta (\omega, i\varepsilon)$ giving a positive $\rho (\omega)$ if $\omega \in [\tilde{x}_-, \tilde{x}_{+} ]$ and vanishing otherwise.\par
Altogether we obtain \eqref{eq:specrhoIOP}. In particular, $ \rho (\omega)$ has compact support 
\begin{equation}
   \omega_{\mathrm{r}} \in \left[ x_{-} , x_{+} \right] ,
\end{equation}
where recall from Subsection \ref{sec:IOPlargeN} that $x_{\pm} = \left[ \coth (\beta/4) \right]^{\pm 1} $. The support thus extends to the whole real axis in the infinite temperature limit $\beta \to 0$.
\end{proof}

\subsubsection{Spectral density and von Neumann algebra of cIOP}
\begin{center}
\noindent\fbox{%
\parbox{0.98\linewidth}{%
\begin{thm}
    The large $N$ von Neumann algebra of the quantum mechanical system associated to the cIOP model is of type III$_1$.
\end{thm}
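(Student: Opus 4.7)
The plan is to apply Theorem \ref{thm:istype3Rigor} to the ensemble \eqref{eq:cIOPtableaux}, which reduces the claim to verifying that the saddle-point eigenvalue density $\varrho_\ast$ of the cIOP discrete matrix model has continuous support throughout the Veneziano parameter range, for every $\beta \in (0,\infty)$. I would then separate the two phases delimited by the third order transition at $\gamma_c = (1+\sqrt{y})/(2\sqrt{y})$, established in Subsection \ref{sec:cIOPlargeN}.

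In the phase $\gamma \le \gamma_c$, the length constraint $\ell(R) \le N$ is inactive, so $\mz_{\text{\rm cIOP}}^{(N)}$ coincides with $\mz_{\text{\rm IOP}}$ and the entire IOP analysis transfers. The eigenvalue density $\varrho_\ast$ given in \eqref{eq:DKansatz} is supported on the connected interval $[0,x_+]$ with $x_\pm = [\coth(\beta/4)]^{\pm 1}$, and Theorem \ref{thmIOPspec} yields the spectral density with continuous symmetric two-interval support $[-x_+-\mu,-x_--\mu] \cup [x_-+\mu,x_++\mu]$. The classification \eqref{eq:typefromrho} together with the factorization Lemma \ref{lem:factor} then gives type III$_1$.

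In the phase $\gamma > \gamma_c$ the upper length constraint becomes active, enforcing \eqref{eq:rhocIOPgammaxminus}, namely $\varrho_\ast(x)=1$ on $[0,1-\gamma^{-1}]$. Following the resolvent technique of \cite{CP:2013}, I would make a capped ansatz in which $\varrho_\ast$ saturates at one on $[0,1-\gamma^{-1}]$ and equals a nontrivial function $\hat\varrho_\ast$ on a connected interval $[1-\gamma^{-1},x_+(\gamma,\beta)]$, matching continuously at both endpoints. The function $\hat\varrho_\ast$ solves the singular integral equation descending from \eqref{eq:IOPSPE} with an effective potential modified to account for the saturated region, and is recovered by complex-analytic inversion of the Stieltjes transform. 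The resulting total support remains a single connected interval, verifying the hypothesis of Theorem \ref{thm:istype3Rigor}. Invoking Lemma \ref{lemma:OmegalargeN} and Corollary \ref{cor:jumprhoOmega} just as in the proof of Theorem \ref{thmIOPspec}, $\rho(\omega)$ inherits continuous two-interval support symmetric about the origin, and \eqref{eq:typefromrho} concludes that the large $N$ algebra is again type III$_1$.

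The main obstacle is ensuring that the constrained saddle in the regime $\gamma > \gamma_c$ does not develop a delta-function spike at the matching point $x = 1 - \gamma^{-1}$, which would break continuity of $\supp\varrho_\ast$ and potentially contribute a type I summand. The Douglas--Kazakov capping mechanism, together with the smoothness of the matching between $\hat\varrho_\ast$ and the saturated segment at $x=1-\gamma^{-1}$ and at $x=x_+(\gamma,\beta)$, precisely rules this out and preserves continuity of the support uniformly across the third order transition, so that the type III$_1$ conclusion holds in both phases.
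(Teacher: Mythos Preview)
Your proposal is correct and follows essentially the same route as the paper: split into the two phases at $\gamma_c$, identify the first with IOP via Theorem \ref{thmIOPspec}, and in the constrained phase invoke the continuous support of $\varrho_\ast$ established in Subsection \ref{sec:cIOPlargeN} together with Lemma \ref{lemma:OmegalargeN} and Corollary \ref{cor:jumprhoOmega} to obtain branch cuts and hence continuous $\supp\rho$. The paper's own proof is terser and does not raise your delta-spike worry, which is in any event moot since any continuous component of $\supp\rho$ already forces type III$_1$ by \eqref{eq:typefromrho}.
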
}}\end{center}
\begin{proof}
    According to the prescription from Subsection \ref{sec:vNalgebra}, we need to show that $\supp \rho$ is continuous. On one side of the third order phase transition, namely $\gamma < \gamma_c$, the $\gamma$-dependent constraint is inactive and the proof is identical to that of Theorem \ref{thmIOPspec}. We remark that inverting the relation $\gamma_c (\beta)$ tells us that, for fixed $\gamma$, at low enough temperature $\beta>2 \ln (2\gamma -1)$ we recover the IOP solution, for which we know the spectral density in closed form from Theorem \ref{thmIOPspec}.\par
    At $\gamma > \gamma_c$, the eigenvalue density $\varrho_{\ast}$ is determined as highlighted in Subsection \ref{sec:cIOPlargeN}. Even without having $\supp \varrho_{\ast}$ in closed form, we know from Subsection \ref{sec:cIOPlargeN} that it is compact and continuous on $\R_{\ge 0}$. To prove this, it suffices to recall that the constraint imposes $\varrho_{\ast} (x)=1$ for all $0 \le x \le 1-\gamma^{-1}$. In particular, that part of the support extends as $\gamma$ is increased for given $\beta$, while $\supp \varrho_{\ast}$ also stretches along $\R_{\ge 0}$ as the temperature is increased, for fixed $\gamma$.\par
    Lemma \ref{lem:GRIOPfromint} still holds but now we need to use the form of $\varrho_{\ast}$ in the new phase. We arrive at expression \eqref{eq:DeltaOmegaIOPfprime}, except that now $f^{\prime}_{\ast}$ will be a different function, namely the shape of the (different) representation that dominates the planar limit in the phase $\gamma > \gamma_c$. The non-trivial contribution stems from the non-constant part of $\varrho_{\ast}$, which, as discussed in Subsection \ref{sec:cIOPlargeN}, always exists and has compact and continuous support. The integrals are more involved, but the upshot is that they have branch cuts along $\pm \supp \varrho_{\ast}$ shifted by $\mu$. By \eqref{eq:rhoIOPdiscG}, this is sufficient to conclude that $\rho (\omega)$ has compact support on $\R$.
\end{proof}

\subsubsection{Spectral density of cIOP with sum over flavor symmetries}
\label{sec:speccIOP}

\begin{thm}
    The spectral density $\rho(\omega)$ associated to the matrix model $\mz_{\text{\rm Ex1}}$ has compact, continuous support at $1/\beta >T_H$.
\end{thm}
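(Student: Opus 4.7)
The plan is to chain together the three main results highlighted in Subsection \ref{sec:summary3steps} and reduce the claim to properties of the fixed-$L$ cIOP model that have already been established. First I will use the prescription of Subsection \ref{sec:sumprobe} to write the Wightman function $\widetilde{G}_{+}(\omega)$ of $\mz_{\text{\rm Ex1}}$ as a $\mathfrak{q}^{L^2}$-weighted sum over sectors of the fixed-$L$ Wightman functions given by Theorem \ref{thm:rhoandOmega}. Because both $\widetilde{G}_{+}$ and $\mz_{\text{\rm Ex1}}$ are sums whose summands have the form $\exp[-N^2 \ms_N(\gamma)]\, f_N(\omega,\gamma)$ satisfying the growth condition \eqref{eq:condseqfN}, the large $N$ ratio localizes on the saddle $\gamma_\ast$ of $\ms$, as explained in Subsection \ref{sec:specAvg}. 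This reduces the computation of $\rho(\omega)$ in the extended model to the computation of $\rho(\omega)$ for the fixed-$L$ cIOP model, evaluated at $\gamma=\gamma_\ast$.

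Second, I will invoke Theorem \ref{thm:avgCIOP}, which guarantees that above the Hagedorn temperature, $\gamma_\ast(\beta,a)>0$ and $\ln \mz_{\text{\rm Ex1}}=O(N^2)$. At this value of the Veneziano parameter the cIOP discrete matrix model admits a genuine planar saddle with a non-trivial continuous eigenvalue density $\varrho_\ast(x)$, as recalled in Subsection \ref{sec:cIOPlargeN}: either $\gamma_\ast<\gamma_c$, in which case $\varrho_\ast$ is of the capped form \eqref{eq:DKansatz} with support $[0,x_+]$, or $\gamma_\ast>\gamma_c$, in which case $\varrho_\ast$ is the modification of that solution enforcing $\varrho_\ast(x)=1$ on $[0,1-\gamma_\ast^{-1}]$ together with a continuous non-constant part bounding an area $\gamma_\ast^{-1}$ on a compact interval. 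In either case $\supp\varrho_\ast$ is a finite union of compact intervals in $\R_{\ge 0}$.

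Third, I will apply Lemma \ref{lemma:OmegalargeN}, which holds as soon as the discrete matrix model admits a non-trivial saddle. This yields
\begin{equation*}
{}_L\langle \Psi_{\beta} \lvert  \Omega (\omega)  \rvert \Psi_{\beta} \rangle_L
\;=\; \frac{1}{\lambda } \left[ - 1 +  \exp \left( \int \dd x \frac{ \varrho_{\ast} (x)}{\frac{\omega - \mu }{ \lambda} - x } \right) \right]
\end{equation*}
at $\gamma=\gamma_\ast$. Combining this with the discontinuity formula of Corollary \ref{cor:jumprhoOmega}, the branch cuts of the exponent are located precisely on $\pm\lambda\,\supp\varrho_\ast + \mu$, so that
\begin{equation*}
\supp\rho \;=\; \bigl(\mu + \lambda\,\supp\varrho_\ast\bigr)\;\cup\;\bigl(-\mu - \lambda\,\supp\varrho_\ast\bigr),
\end{equation*}
which is compact and continuous. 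This will conclude the proof.

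The main obstacle is the saddle point argument in the first step: strictly speaking one needs to control the rate of convergence in the large $N$ Laplace asymptotics uniformly in $\omega$ on compact subsets of $\R\setminus\supp\rho$'s complement, in order to rigorously promote the pointwise saddle approximation for $\widetilde{G}_{\mathrm R}(\omega\pm i\varepsilon)$ to a statement about the discontinuity across $\R$. This is the standard caveat already flagged after Lemma \ref{lem:factor} and in Subsection \ref{sec:specAvg}, and we accept it in the same form here.
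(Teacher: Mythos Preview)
Your proposal is correct and follows essentially the same route as the paper: localize the weighted sum on the saddle $\gamma_\ast$, invoke Theorem \ref{thm:avgCIOP} to ensure $\gamma_\ast>0$ in the high temperature phase, feed the resulting cIOP eigenvalue density $\varrho_\ast$ into the resolvent formula (Lemma \ref{lemma:OmegalargeN}/\ref{lem:GRIOPfromint}), and read off $\supp\rho$ from the branch cuts via Corollary \ref{cor:jumprhoOmega}. The only cosmetic difference is that you allow for both $\gamma_\ast<\gamma_c$ and $\gamma_\ast>\gamma_c$, whereas the paper observes (from the proof of Theorem \ref{thm:avgCIOP}) that in the high temperature phase one always has $\gamma_\ast>\gamma_c>1$, so only the second case is actually realized; your conclusion is unaffected since both cases yield a compact continuous $\supp\varrho_\ast$.
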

\begin{proof}
    In the high temperature phase, $1/\beta >T_H$, $\ln \mz_{\text{\rm Ex1}}$ has a large $N$ growth and the saddle point argument applies. Lemma \ref{lem:GRIOPfromint} goes through, except that now $\widetilde{G}_{\mathrm{R}} (\omega) $ is still given by \eqref{eq:OmegaIOPlargeN}, but with $\varrho_{\ast}$ the eigenvalue density of the cIOP model in the planar limit, and further evaluated at the saddle point $\gamma = \gamma_{\ast}$.\par
    Due to the lack of knowledge of $\gamma_{\ast}$ as an explicit function of the inverse temperature $\beta$ and the control parameter $a$, we did not manage to write down $\supp \varrho_{\ast}$ in closed form in the phase $\beta^{-1} > T_H$. Nevertheless, for our purposes it suffices to observe that $\supp \varrho_{\ast} \subset \R_{\ge 0}$ is compact and non-trivial. This claim stems from the combination of the facts proven in previous subsections:
    \begin{itemize}
        \item The analysis of Subsection \ref{sec:cIOPlargeN} shows that the eigenvalue density of cIOP in the phase $\gamma > \gamma_c > 1$ has compact and continuous support; moreover, there exists a finite compact interval on which $\varrho_{\ast}$ is non-constant and subtends a finite area $\gamma^{-1}$;
        \item In the high temperature phase of $\mz_{\text{\rm Ex1}}$, the saddle point is an absolute maximum located at $\gamma_{\ast}>1$, as shown in the proof of Theorem \ref{thm:avgCIOP}.
    \end{itemize}
    Therefore, $\supp \varrho_{\ast}$ at $\gamma_{\ast}$ is a finite interval.\par
    From the branch cut of $\widetilde{G}_{\mathrm{R}} (\omega)$ and $\widetilde{G}_{\mathrm{A}} (\omega)$ we still obtain 
\begin{equation}
    \supp \rho \subset \R_{>0} \text{ compact and continuous} .
\end{equation}
This result follows from the knowledge of the branch cuts, although without a general expression in closed form.\par
In the low temperature phase, however, Lemma \ref{lem:GRIOPfromint} fails. The $\delta$ functions in the definition of the spectral density do not coalesce in this phase, and $\rho (\omega)$ becomes trivial as the probe decouples from the rest of the system.
\end{proof}
\begin{equation*}
\begin{tabular}{c|c|c}
\hspace{8pt} \textsc{Temperature} \hspace{8pt} & \hspace{8pt} $ \ln \mz_{\text{\rm Ex1}} $ \hspace{8pt} & \hspace{8pt} \textsc{Algebra type} \hspace{8pt} \\
\hline 
$\beta^{-1}<T_H$ & $O(1)$ & I\ \\
$\beta^{-1}>T_H$ & $O(N^2)$ & III$_1$ \\
\hline
\end{tabular}
\end{equation*}

\subsection{cIOP, \texorpdfstring{SQCD$_4$}{QCD4}, and Calabi--Yau}
\label{sec:SQCD4andCY}

Before moving on to the next example, we make a few comments in this independent subsection on the link between the different variants of the IOP model and the Hilbert series of some Calabi--Yau varieties appearing in the analysis of SQCD$_4$.

\subsubsection{From bosonic \texorpdfstring{QCD$_2$}{QCD2} to \texorpdfstring{SQCD$_4$}{QCD4}}
Unitary matrix integrals such as \eqref{eq:ZBQCD2equalsZcIOP} appear in the computation of Hilbert series of algebraic varieties \cite{Stanley}, and play a prominent role in determining the moduli spaces of vacua of supersymmetric gauge theories \cite{Benvenuti:2006qr,Gray:2008yu}.\par
The basic idea is that supersymmetric gauge theories admit moduli spaces of vacua. These are complex algebraic varieties carved out by the vacuum equations modulo gauge transformations, and typically carry additional structure imposed by supersymmetry. As usual, the algebraic varieties can be specified in terms of their generators, which are certain gauge-invariant operators, and relations among them. Physical information is encoded in these data of the moduli space of vacua, and in particular in their Hilbert series, defined below. 
\begin{lem}[\cite{Gray:2008yu}]
    The Hilbert series of the moduli space of vacua of four-dimensional supersymmetric QCD (SQCD$_4$) with 
    \begin{equation}
        N_{\mathrm{colors}} = N+1 , \qquad N_{\mathrm{flavors}} =L
    \end{equation}
    equals \eqref{eq:ZBQCD2equalsZcIOP}.
\end{lem}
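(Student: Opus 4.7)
The standard way to compute Hilbert series for moduli spaces of supersymmetric gauge theories without superpotential is Molien--Weyl projection: holomorphic functions on the moduli space are the gauge-invariant polynomials in the chiral matter, and these are extracted by integrating the plethystic exponential of the matter characters against the Haar measure on the gauge group. I plan to apply this machinery to SQCD$_4$ and show it produces exactly the right-hand side of \eqref{eq:ZBQCD2equalsZcIOP}.

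First I would identify the chiral matter of SQCD$_4$ with $N_{\mathrm{colors}}=N+1$ and $N_{\mathrm{flavors}}=L$: there are $L$ quark superfields $Q^{a}{}_{i}$ in the fundamental of $SU(N+1)$ and $L$ anti-quark superfields $\widetilde{Q}^{i}{}_{a}$ in the anti-fundamental, with $i=1,\dots,N+1$ a color index and $a=1,\dots,L$ a flavor index. In the absence of a superpotential, the classical moduli space is the affine GIT quotient of $\mathrm{Mat}_{L,\,N+1}\oplus \mathrm{Mat}_{N+1,\,L}$ by the complexified gauge group $SL(N+1,\C)$. Assigning a common fugacity $\sqrt{y}$ to each matter field, the generating function of all polynomials in $Q$ and $\widetilde{Q}$ equals the plethystic exponential of their characters, which for free chiral fields reduces to a ratio of characteristic polynomials:
\begin{equation*}
\mathrm{PE}\!\left[\sqrt{y}\,L\,\chi_{\Box}(U)+\sqrt{y}\,L\,\chi_{\overline{\Box}}(U)\right]=\det(1-\sqrt{y}\,U)^{-L}\det(1-\sqrt{y}\,U^{-1})^{-L}.
\end{equation*}

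Next I would project onto gauge invariants by integrating against the normalized Haar measure on $SU(N+1)$, obtaining
\begin{equation*}
\mathrm{HS}(y)=\oint_{SU(N+1)}[\dd U]\,\det(1-\sqrt{y}\,U)^{-L}\det(1-\sqrt{y}\,U^{-1})^{-L},
\end{equation*}
which is precisely the right-hand side of \eqref{eq:ZBQCD2equalsZcIOP}. As a sanity check, expanding the integrand via the Cauchy identity and exploiting character orthogonality recasts this series as the cIOP sum \eqref{eq:cIOPtableaux}, exactly as in Lemma \ref{lemma:cIOPBQCD2}; the lowest order coefficients in $y$ then correctly reproduce the count of meson and (when $L\ge N+1$) baryon generators.

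The main obstacle is to justify that Molien--Weyl truly computes the Hilbert series of the \emph{moduli space} and not merely that of the classical invariant ring of the matter. These coincide here because, firstly, the absence of a superpotential means no F-term relations cut out a subvariety of the matter space, and secondly, by Kempf--Ness the physical Kähler quotient and the affine GIT quotient by the complexified gauge group agree as complex varieties. The choice of $SU$ rather than $U$ as the gauge group is what makes baryonic operators survive when $L\ge N+1$; these are automatically included by integrating over $SU(N+1)$, since the Haar measure enforces $\det U=1$ rather than a separate $U(1)$ constraint. With these points checked, the identification with \cite{Gray:2008yu} is immediate.
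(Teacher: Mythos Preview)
Your argument is correct and is precisely the standard Molien--Weyl/plethystic-exponential derivation used in the Hilbert series literature. The paper does not give its own proof of this lemma; it merely records the statement with a citation to \cite{Gray:2008yu}, where essentially the computation you outline is carried out.
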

In Section \ref{sec:QM} we have built quantum mechanical systems out of gauge-invariant operators. Here we are seeing this feature emerging cleanly in SQCD$_4$. Furthermore:
\begin{itemize}
    \item The working assumption \eqref{eq:exchangeUY} here corresponds to demand that SQCD$_4$ is free from chiral anomalies.
    \item The Hilbert space $\mathscr{H}_L ^{(N)}$ is constructed by the meson operators in SQCD$_4$, with the $R$-part and $\overline{R}$-part arising, respectively, from chiral and anti-chiral indices.
    \item From the SQCD$_4$ perspective, the partition function enumerates generators of the ring of gauge-invariant operators (this statement is a rephrasing of Lemma \ref{lem:CYHS} below). To compare with IOP, it suffices to note that, when $L<N$, the spectrum of this ring is simply $\C^{L^2}$. Besides, the collection of gauge-invariant operators is identified with the span of the operators denoted $A_{ij}$ in the quantum mechanical model of IOP \cite{Iizuka:2008hg,Iizuka:2008eb}. 
\end{itemize}

\subsubsection{Calabi--Yau variations on IOP}
\label{sec:CYIOP}

\begin{defin}
Let $\mathfrak{R} = \bigoplus_{n \ge 0} \mathfrak{R}_n$ be a Noetherian graded commutative ring over $\mathbb{C}$, and denote $X= \mathrm{Spec}(\mathfrak{R})$. The \emph{Hilbert series} of $X$ in the indeterminate $y$ is 
\begin{equation}
    \mathrm{HS}_y (X) = \sum_{n=0} ^{\infty} \dim (\mathfrak{R}_n) y^n \ \in \Z [\![ y ]\!] .
\end{equation}
\end{defin}
The Noetherian assumption guarantees that $\mathfrak{R}$ is finitely generated and \cite{Stanley}
\begin{equation}
    \dim (\mathfrak{R}_n) < \infty \qquad \qquad \forall n \ge 0 .
\end{equation} 
\begin{lem}[\cite{Gray:2008yu}]
\label{lem:CYHS}
There exists a toric Calabi--Yau manifold $X_{L}^{(N)}$, with 
\begin{equation}
    \dim  \left( X_{L}^{(N)} \right)  = \begin{cases} L^2 & N \ge L \\ 2L(N+1) -N(N+2) & N < L , \end{cases}
\end{equation}
such that the identity 
\begin{equation}
    \mz_{\text{\rm cIOP}}^{(N)} (L,y) = \mathrm{HS}_{y} \left( X_{L}^{(N)} \right) 
\end{equation}
holds. Moreover, $X_{L\le N} ^{(N)} = \C^{L^2}$ and $X_{N+1,N}$ is a hypersurface in $\C^{N^2+2N +3}$.
\end{lem}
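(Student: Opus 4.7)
The plan is to identify $X_L^{(N)}$ with the (spectrum of the coordinate ring of the) moduli space of vacua of $\mathcal{N}=1$ SQCD$_4$ with $N_c=N+1$ colors and $N_f=L$ flavors, and then derive all three claims---the Hilbert series identity, the concrete descriptions in the two boundary cases, and the Calabi--Yau property---from the classical algebraic-geometric description of that moduli space.

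\emph{Step 1: Molien--Weyl and the Hilbert series identity.} Let $Q_{i,a}$ and $\tilde Q_{\tilde i, a}$ denote the chiral and antichiral quarks, where $a=1,\dots,N+1$ is a color index and $i,\tilde i=1,\dots,L$ are flavor indices, all graded by $y^{1/2}$. The Hilbert series of the free chiral ring of quarks is the character of $\mathrm{Sym}^{\bullet}\bigl[(\Box\otimes \mathbf{1}_{\rm fl})\oplus (\overline{\Box}\otimes \mathbf{1}_{\rm fl})\bigr]$, which evaluates to $\det(1-\sqrt y U)^{-L}\det(1-\sqrt y U^{-1})^{-L}$. Projecting onto gauge invariants by Molien--Weyl integration over $SU(N+1)$ gives exactly the right-hand side of \eqref{eq:ZBQCD2equalsZcIOP}, so by Lemma~\ref{lemma:cIOPBQCD2} this integral equals $\mz^{(N)}_{\text{cIOP}}(L,y)$. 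Declaring $X_L^{(N)}:=\mathrm{Spec}\,\mathfrak R$, with $\mathfrak R$ the gauge-invariant chiral ring, establishes the identity $\mz^{(N)}_{\text{cIOP}}(L,y)=\mathrm{HS}_y(X_L^{(N)})$ tautologically.

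\emph{Step 2: Generators, relations, and the two explicit cases.} By the classical analysis of SQCD chiral rings (first-fundamental-theorem-of-invariant theory for $SU(N_c)$), $\mathfrak R$ is generated by the mesons $M_{i\tilde i}=Q_{i,a}\tilde Q_{\tilde i}{}^{a}$ together with, when $L\ge N+1$, the baryons $B_{i_1\cdots i_{N+1}}$ and antibaryons $\tilde B^{\tilde i_1\cdots \tilde i_{N+1}}$, modulo the standard meson--baryon relations. In the regime $L\le N$ there are no baryons and no nontrivial relations among the $L^2$ mesons, so $X_L^{(N)}\cong\mathbb C^{L^2}$; its Hilbert series is $(1-y)^{-L^2}$, matching \eqref{eq:IOPexact}, as required. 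In the critical case $L=N+1$, a single baryon $B$ and antibaryon $\tilde B$ appear and satisfy the unique relation $\det M=B\tilde B$; consequently $X^{(N)}_{N+1}$ is a hypersurface in $\mathbb C^{(N+1)^2+2}=\mathbb C^{N^2+2N+3}$, as claimed.

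\emph{Step 3: Dimension formula and toric structure.} For $L\le N$ the dimension is $L^2$ by inspection. For $L>N$ the generic orbit on the Higgs branch has trivial stabilizer, so the quotient dimension equals the dimension of the configuration space $2L(N+1)$ minus the dimension $N^2+2N=(N+1)^2-1$ of $SU(N+1)$, giving $2L(N+1)-N(N+2)$. Toricity follows because the flavor symmetry contains a $(\mathbb C^\times)^{2L}$-torus that acts on $\mathfrak R$ with a basis of weight vectors (mesons are weight vectors under the $U(1)^L\times U(1)^L$ inside $U(L)\times U(L)$, and the baryonic generators are weight vectors for the same torus up to the baryonic $U(1)$), so $\mathfrak R$ is multigraded by a lattice of full rank equal to $\dim X_L^{(N)}$.

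\emph{Step 4: Calabi--Yau property.} For an affine toric variety this is the statement that the toric fan lies on a common affine hyperplane, equivalently, that $\mathfrak R$ is Gorenstein with trivial canonical module. For moduli spaces of vacua of anomaly-free $\mathcal{N}=1$ theories the standard argument is that the superpotential is absent, so there is an unbroken $U(1)_R$ under which the top holomorphic form has weight zero; vanishing of the gauge anomaly $\mathrm{Tr}(R\,T^a T^a)=0$ for $SU(N+1)$ together with \eqref{eq:exchangeUY} ensures this. I will make this concrete by exhibiting an invariant holomorphic volume form on the smooth locus (the Haar-type form induced from $\bigwedge^{2L(N+1)}dQ\wedge d\tilde Q$ descended through the Molien--Weyl quotient), whose $U(1)_R$ weight is precisely the anomaly coefficient and thus vanishes. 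I expect this last step---verifying triviality of the canonical class uniformly in $(L,N)$, rather than case by case---to be the main obstacle, since for $L>N+1$ the defining ideal of relations is nontrivial and one must check Gorensteinness directly (e.g.\ via the symmetry of the Hilbert series under $y\mapsto y^{-1}$ up to the expected shift, which is a classical palindromicity criterion for Gorenstein graded rings). With the Calabi--Yau property in hand, the lemma follows.
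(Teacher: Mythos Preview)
Your approach coincides with the paper's---identify $X_L^{(N)}$ with the moduli space of vacua of SQCD$_4$ at $N_c=N+1$, $N_f=L$---the only difference being that the paper's proof is a one-line citation of \cite[Sec.~4.3]{Gray:2008yu} under the dictionary $(N_c,N_f)\leftrightarrow(N+1,L)$, whereas you actually unpack the argument. Steps~1 and~2, the dimension count in Step~3, and the palindromicity/Gorenstein strategy you flag in Step~4 all match the standard SQCD analysis recorded in that reference.

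The genuine gap is the toricity argument in Step~3. Having $\mathfrak R$ multigraded by a lattice of rank equal to $\dim X$ is \emph{not} what toric means; one needs a dense torus orbit, equivalently every nonzero graded piece of $\mathfrak R$ must be one-dimensional. Your flavor torus has rank at most $2L$ (plus one for $U(1)_B$), which for $L>N$ is strictly smaller than $\dim X_L^{(N)}=2L(N+1)-N(N+2)$, so it cannot furnish a dense orbit. Concretely, already at $N_c=N_f=2$ the coordinate ring is $\mathbb C[M_{11},M_{12},M_{21},M_{22},B,\tilde B]/(M_{11}M_{22}-M_{12}M_{21}-B\tilde B)$, and under the full flavor-plus-baryonic torus the weight space containing $M_{11}M_{22}$ also contains $M_{12}M_{21}$ and $B\tilde B$; modulo the single relation it remains two-dimensional. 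Thus the SQCD moduli spaces are not toric in the standard sense once $L>N$, and this part of the statement cannot be obtained by the route you propose---the paper itself supplies nothing beyond the citation on this point.
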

\begin{proof}The statement follows from comparison with \cite[Sec.4.3]{Gray:2008yu}, with $(N_c, N_f)$ therein replaced by $(N+1,L)$. The second part follows essentially by the definition of the Hilbert series for a freely generated complex algebraic variety, cf. also \cite[Eq.(3.6)]{Gray:2008yu}.
\end{proof}\par
This lemma explains geometrically that the constrained IOP model carries a richer structure than the original IOP model.
\begin{center}
\noindent\fbox{%
\parbox{0.98\linewidth}{%
\begin{thm}\label{thm:CYtoIOP}
The states of the IOP model are in one-to-one correspondence with elements in the ring of functions on $\C^{L^2}$. The states in the cIOP model are in one-to-one correspondence with elements in the ring of functions on a non-trivial Calabi--Yau variety $X_L ^{(N)}$.
\end{thm}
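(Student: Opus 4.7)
The plan is to identify both the ring of functions on the relevant Calabi--Yau variety and the Hilbert space of the quantum mechanical model with the same graded $GL(L) \times GL(L)$ representation, and to match them grading-by-grading via the Cauchy identity that was already instrumental in proving Lemma~\ref{lemma:cIOPBQCD2}. The key input is Lemma~\ref{lem:CYHS}, which tells us that the partition function computes the Hilbert series of $X_L^{(N)}$; since the Hilbert series encodes the dimensions of the graded pieces of the coordinate ring $\mathcal{O}(X_L^{(N)}) = \bigoplus_{n \ge 0} \mathcal{O}(X_L^{(N)})_n$, equality of generating functions gives $\dim \mathcal{O}(X_L^{(N)})_n = \sum_{|R|=n, R \in \mathfrak{R}_L^{(N)}} (\dim R)^2$, which already matches the dimension count of the $n$-th energy eigenspace in $\mathscr{H}_L^{(N)}$.

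For the IOP case, where $X_L^{(N)} = \C^{L^2}$, I would first invoke the classical Peter--Weyl / Cauchy decomposition of the polynomial ring. Realize $\C^{L^2}$ as the space of $L \times L$ matrices, so that $\mathcal{O}(\C^{L^2}) \cong \C[x_{ij}]_{i,j=1}^L$, and note that the natural $GL(L) \times GL(L)$ action (by left and right multiplication) decomposes this polynomial ring as
\begin{equation}
\C[x_{ij}]_{i,j=1}^L \ \cong \ \bigoplus_{R : \ell(R) \le L} V_R \otimes V_R^{\ast},
\end{equation}
where the grading by total polynomial degree agrees with $|R|$. This is exactly the Cauchy identity \eqref{eq:Cauchyid} read as an isomorphism of graded representations. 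The canonical basis vectors $|R, v_{\ai}; \overline{R}, \tilde{v}_{\dot{\ai}}\rangle$ of $\mathscr{H}_L (R) \otimes \mathscr{H}_L (\overline{R})$ are then in bijection with a weight basis of $V_R \otimes V_R^{\ast}$, and the Hamiltonian $H$ with eigenvalues $|R|$ is mapped to the Euler (degree) operator on $\C[x_{ij}]$.

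For the cIOP case the same strategy applies, but the variety $X_L^{(N)}$ is the SQCD$_4$ moduli space of vacua and $\mathcal{O}(X_L^{(N)})$ is no longer a free polynomial ring: the gauge-invariant meson generators $M^i{}_j = Q^i \widetilde{Q}_j$ satisfy rank relations $\mathrm{rk}(M) \le N$ (together with the appropriate baryonic extensions when $L > N$), and these relations are precisely what truncates the sum over representations to $\ell(R) \le \min\{L,N\}$, i.e.\ to $R \in \mathfrak{R}_L^{(N)}$. Concretely, I would write down the $GL(L) \times GL(L)$-equivariant decomposition of $\mathcal{O}(X_L^{(N)})$ by using that the coordinate ring of the determinantal variety cutting out matrices of rank at most $N$ has the decomposition $\bigoplus_{\ell(R) \le \min\{L,N\}} V_R \otimes V_R^{\ast}$ (with appropriate modifications by baryonic factors when $L>N$ that do not alter the count); this is a classical result in invariant theory, and the required character computation matches the $\mathfrak{d}_R^2(\dim R)^2$ weighting in \eqref{eq:ZFCharExp}. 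Equivariance and the coincidence of gradings then upgrade the equality of dimensions to the claimed bijection.

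The step most likely to require care is the cIOP half: making the $GL(L) \times GL(L)$-equivariant identification of $\mathcal{O}(X_L^{(N)})$ with $\bigoplus_{R \in \mathfrak{R}_L^{(N)}} V_R \otimes V_{\phi(R)}^{\ast}$ rigorously, and in particular correctly handling the regime $L > N$ where baryonic operators enter and $X_L^{(N)}$ is not simply a determinantal variety but a hypersurface (as already visible in the $X_{N+1,N}$ case of Lemma~\ref{lem:CYHS}). In that regime one must ensure that $\phi(R) = \overline{R}$ is compatible with the isomorphism class of the representation contributed by the baryons, and that the grading given by the $R$-charge on the SQCD$_4$ side coincides with $|R|$. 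Modulo this, however, the theorem is essentially a repackaging of Lemma~\ref{lem:CYHS} together with the classical $GL \times GL$ decomposition of coordinate rings of determinantal varieties.
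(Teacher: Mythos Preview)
The paper does not give a separate proof of this theorem; it is presented as a boxed summary immediately following Lemma~\ref{lem:CYHS}, with the implicit argument being simply that the partition function \eqref{eq:cIOPtableaux} equals the Hilbert series $\mathrm{HS}_y(X_L^{(N)})$, so the graded dimension of $\mathcal{O}(X_L^{(N)})_n$ agrees with the number of states at energy $n$. In other words, the paper's ``proof'' is purely a dimension count inherited from Lemma~\ref{lem:CYHS}, and the word ``bijection'' is used in that numerical sense.

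Your proposal is correct and genuinely more informative: you upgrade the dimension match to an equivariant isomorphism of $GL(L)\times GL(L)$ representations, invoking the Cauchy decomposition of $\C[x_{ij}]$ in the IOP case and the classical invariant-theoretic decomposition of the coordinate ring of the determinantal variety in the cIOP case. This explains \emph{why} the Hilbert series match, rather than merely citing that they do, and identifies the Hamiltonian with the Euler degree operator. The price is the caveat you already flag: for $L>N$ the SQCD moduli space $X_L^{(N)}$ includes baryonic generators, so it is not literally the determinantal variety, and the equivariant decomposition requires care there. The paper sidesteps this entirely by working only at the level of the Hilbert series, where Lemma~\ref{lem:CYHS} (imported from \cite{Gray:2008yu}) already incorporates the baryons. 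Your route buys a structural explanation; the paper's route buys brevity at the cost of treating the bijection as a black box.
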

}}\end{center}\par
\par
\medskip
The Hilbert series approach to the cIOP model allows to introduce yet another characterization, corresponding to a simple model of bosons.\par
Consider a square lattice of $L^2$ sites, labelled by $(i,j) \in \mathbb{Z}_L \times \mathbb{Z}_L$, with bosonic particles at its lattice sites. Generate a random configuration of energy levels $\left\{ e_{ij} + \frac{1}{2}\right\}_{i,j} \subset(\frac{1}{2} +\mathbb{N })^{L^2}$. By taking bosons, we allow two or more particles to occupy the same energy level. See Figure \ref{fig:latticeboson} for an illustration.\par
\begin{figure}[th]
\centering
\begin{tikzpicture}

\draw[] (-3,0) -- (0,0);
\draw[] (1,0.3) -- (-2,0.3);
\draw[] (2,0.6) -- (-1,0.6);

\draw[] (0,0) -- (2,0.6);
\draw[] (-1.5,0) -- (0.5,0.6);
\draw[] (-3,0) -- (-1,0.6);
\draw[->] (-3,0)--(-3,2);
\node[anchor=east] (e) at (-3,1.5) {$e_{ij}$};
\node[anchor=north east] (i) at (-0.5,0) {$\scriptstyle i$};
\draw[->] (i) -- ($(i)+(-0.75,0)$);
\node[anchor=west] (j) at (0.5,0) {$\scriptstyle j$};
\draw[->] (j) -- ($(j)+(0.75,0.225)$);

\node (b1) at (-0.75,0.95) {$\bullet$};
\draw[dashed,thin] (b1) -- (-0.75,0.45);
\node (b2) at (0.75,1.95) {$\bullet$};
\draw[dashed,thin] (b2) -- (0.75,0.45);
\node (b3) at (-1.75,1.15) {$\bullet$};
\draw[dashed,thin] (b3) -- (-1.75,0.15);
\node (b4) at (-0.25,0.15) {$\bullet$};
\end{tikzpicture}
\caption{Free bosons on a square lattice. $L=2$ in this example.}
\label{fig:latticeboson}
\end{figure}
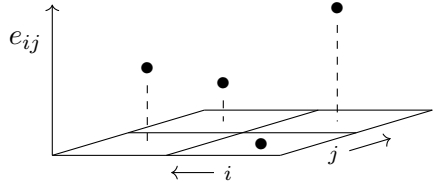\par

The partition function of the free system simply corresponds to summing over all such configurations, weighted by the Boltzmann factor:
\begin{equation}
    \mz_B ^{\text{(free)}} = \prod_{(i,j) \in \mathbb{Z}_L \times \mathbb{Z}_L} \sum_{e_{ij} =0}^{\infty} e^{-\beta \left( e_{ij}+ \frac{1}{2} \right) } =  \left( 2 \sinh (\beta/2) \right)^{- L^2 } ,
\end{equation}\par
It is convenient to introduce $\alpha=(i,j)$ and define a lexicographic order on the set of pairs, so that $\alpha \in \mathbb{Z}_{L^2}$. To enrich the model, we impose a constraint on this system, forbidding certain energy levels. We thus define
\begin{equation}
\label{eq:defBosonicGCPF}
    \mz_B^{(N)} =  \prod_{\alpha=0}^{L^2-1} \sum_{ e_{\alpha}=0 }^{\infty} d^{(N)} _{\left\{ e_{\alpha}\right\} } ~ e^{-\beta \sum_{\alpha} \left( e_{\alpha} + \frac{1}{2} \right)} , \qquad \qquad d^{(N)} _{\left\{ e_{\alpha}\right\} } \in \left\{ 0 ,1 \right\} .
\end{equation}
Whenever a coefficient $d^{(N)} _{\left\{ e_{\alpha}\right\} }$ is taken to vanish, the corresponding configuration $\left\{ e_{\alpha}\right\}$ of $L^2$ particles is forbidden. As in Section \ref{sec:QM}, we are imposing a constraint indexed by $N \in \mathbb{N}$. 

\begin{prop}
There exists a choice of collection 
\begin{equation}
    \left\{ d^{(N)} _{\left\{ e_{\alpha}\right\} } , \ e_{\alpha} \in \mathbb{N} , \ \alpha=0,1,\dots, L^2-1 \right\}
\end{equation}
such that 
\begin{equation}
    \mz_B^{(N)} =  e^{- \frac{\beta}{2} L^2} \mz_{\text{\rm cIOP}}^{(N)} .
\end{equation}
\end{prop}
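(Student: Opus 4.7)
The plan is to reduce the claim to a purely combinatorial matching of coefficients of formal power series in $y=e^{-\beta}$. First, I would make sense of the right-hand side of \eqref{eq:defBosonicGCPF} by interpreting the symbol $\prod_\alpha\sum_{e_\alpha}$ as a single sum over the multi-index $(e_0,\dots,e_{L^2-1})\in\mathbb{N}^{L^2}$, since the coefficient $d^{(N)}_{\{e_\alpha\}}$ depends on the whole tuple; factoring out the zero-point energy, the equality to be proved becomes
\begin{equation*}
\sum_{(e_0,\dots,e_{L^2-1})\in\mathbb{N}^{L^2}} d^{(N)}_{\{e_\alpha\}}\,y^{e_0+\dots+e_{L^2-1}} \;=\; \mz_{\text{\rm cIOP}}^{(N)}(L,y).
\end{equation*}

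Next I would record that the right-hand side admits a power series expansion $\mz_{\text{\rm cIOP}}^{(N)}(L,y)=\sum_{n\ge 0} c_n^{(N)}\,y^n$ with $c_n^{(N)}\in\mathbb{N}$. This follows from the definition \eqref{eq:cIOPtableaux}, where all summands $y^{|R|}(\dim R)^2$ carry non-negative integer coefficients in $y$, but more conceptually from Lemma \ref{lem:CYHS}: $c_n^{(N)}=\dim(\mathfrak{R}_n)$ is the dimension of the $n$-th graded piece of the coordinate ring of the Calabi--Yau $X_L^{(N)}$. Likewise, the unconstrained case gives $\mz_{\text{\rm IOP}}=(1-y)^{-L^2}=\sum_{n\ge 0}\binom{n+L^2-1}{L^2-1}y^n$ by \eqref{eq:IOPexact}.

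The key inequality is then
\begin{equation*}
c_n^{(N)} \;\le\; \binom{n+L^2-1}{L^2-1} \qquad \forall\, n\ge 0,
\end{equation*}
which holds termwise because $\mz_{\text{\rm cIOP}}^{(N)}$ is obtained from $\mz_{\text{\rm IOP}}$ by restricting the sum to the subset of Young diagrams satisfying $\ell(R)\le\min\{L,N\}$, while all omitted contributions $y^{|R|}(\dim R)^2$ are non-negative (this inequality is saturated whenever $N\ge L$, reproducing the freely-generated case $X_L^{(N)}=\mathbb{C}^{L^2}$ of Theorem \ref{thm:CYtoIOP}). Since $\binom{n+L^2-1}{L^2-1}$ equals exactly the cardinality of the set $S_n:=\{(e_0,\dots,e_{L^2-1})\in\mathbb{N}^{L^2}:\ e_0+\dots+e_{L^2-1}=n\}$, we can choose, for each $n$, an arbitrary subset $S_n^{(N)}\subseteq S_n$ of cardinality $c_n^{(N)}$ and define $d^{(N)}_{\{e_\alpha\}}:=1$ if $(e_0,\dots,e_{L^2-1})\in\bigcup_n S_n^{(N)}$ and $0$ otherwise. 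Plugging this choice back in reproduces the desired equality grade by grade in $y$.

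The only step that is not entirely automatic is to confirm that $c_n^{(N)}\in\mathbb{N}$ rather than merely $\mathbb{R}_{\ge 0}$, which is why I would invoke Lemma \ref{lem:CYHS} for a clean argument; the rest is a pigeon-hole/existence statement. I would close by remarking that the construction of $d^{(N)}$ is of course highly non-canonical, and that finding a natural bijection between occupation configurations and a basis of the coordinate ring of $X_L^{(N)}$ (compatible with the toric structure of Lemma \ref{lem:CYHS}) would produce a preferred choice; this, however, is not needed for the existence statement as formulated.
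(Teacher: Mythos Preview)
Your argument is correct. You reduce the existence statement to the purely combinatorial inequality $c_n^{(N)}\le\binom{n+L^2-1}{L^2-1}$ between the $y^n$-coefficient of $\mz_{\text{cIOP}}^{(N)}$ and the number of weak compositions of $n$ into $L^2$ parts, and then choose the subsets $S_n^{(N)}$ arbitrarily. The integrality of $c_n^{(N)}$ already follows from the definition \eqref{eq:cIOPtableaux} as a sum of $(\dim R)^2$, so invoking Lemma~\ref{lem:CYHS} is optional rather than necessary.

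The paper takes a different, more structural route: it refines the Hilbert series to a \emph{multigraded} one in $L^2$ fugacities $y_\alpha$, and then uses that for a toric Calabi--Yau the multigraded coefficients $d_{\{e_\alpha\}}(X_L^{(N)})$ automatically lie in $\{0,1\}$ (citing \cite{Benvenuti:2006qr}). Unrefining then reproduces $\mz_{\text{cIOP}}^{(N)}$ and simultaneously singles out a \emph{canonical} collection $d^{(N)}_{\{e_\alpha\}}$, namely the indicator function of the monomial basis of the coordinate ring of $X_L^{(N)}$. Your proof is more elementary and self-contained, requiring no input from toric geometry; the paper's proof supplies precisely the preferred choice you flag as desirable in your closing remark, and is what makes the bosonic-lattice/Calabi--Yau dictionary in Subsection~\ref{sec:CYIOP} genuinely geometric rather than just numerical.
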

\begin{proof}
We may introduce a refined Hilbert series as follows. Let $\vec{y}=(y_1, \dots, y_L)$ and  define $y_{\alpha} := \sqrt{y_i y_j}$, with $\alpha =(i,j)$, $\alpha \in \mathbb{Z}_{L^2}$. We have 
\begin{equation}
    \mathrm{HS}_y (X) = \sum_{\left\{e_{\alpha} \ge 0 \right\} } d_{\left\{e_{\alpha}\right\}} \left( X_{L}^{(N)} \right)  \prod_{\alpha=0} ^{L^2-1} y_{\alpha} ^{e_{\alpha}} ,
\end{equation}
where $d_{\left\{e_{\alpha}\right\}}$ are non-negative integers such that 
\begin{equation}
     \dim (\mathfrak{R}_n) = \sum_{\left\{e_{\alpha} \ge 0 \ \vert \ \sum_{\alpha} e_{\alpha} = n \right\} } d_{\left\{e_{\alpha}\right\}} \left( X_{L}^{(N)} \right)  .
\end{equation}
Lemma \ref{lem:CYHS} still holds, upon replacement
\begin{equation}
    \left[ \det \left(1-\sqrt{y}U \right) \left(1-\sqrt{y}U^{-1} \right)\right]^{-L} \quad \longrightarrow \quad \prod_{i=1} ^{L}  \det \left(1-\sqrt{y_i}U \right) \left(1-\sqrt{y_i}U^{-1} \right)
\end{equation} 
in \eqref{eq:ZBQCD2equalsZcIOP}. Unrefining the series, $y_i \to y \ \forall i=1, \dots, L$, we get the combinatorial identity 
\begin{equation}
\label{eq:cIOPCYHSfermions}
    \mz_{\text{\rm cIOP}}^{(N)} (L,y) =  \prod_{\alpha=0}^{L^2-1} \sum_{e_{\alpha}=0 }^{\infty}  d_{\left\{e_{\alpha}\right\}} \left( X_{L}^{(N)} \right) ~ y^{\sum_{\alpha} e_{\alpha}} .
\end{equation}
We now use the fact that $d_{\left\{e_{\alpha}\right\}} \left( X_{L}^{(N)} \right) \in \left\{ 0, 1 \right\}$ for $X_{L}^{(N)}$ a toric Calabi--Yau variety \cite{Benvenuti:2006qr}. In particular, $d_{\left\{e_{\alpha}\right\}}  \left( \C^{L^2} \right) = 1$ for all sets $\left\{ e_{\alpha} \right\}_{\alpha=0, \dots , L^2-1}$, while some of the coefficients $d_{\left\{e_{\alpha}\right\}} \left(X_{L}^{(N)} \right)$ will vanish if $L>N$, corresponding to imposing constraints, called syzygys, on the generators of the ring $\mathfrak{R}$. With this fact, multiplying \eqref{eq:cIOPCYHSfermions} by $e^{- \beta \sum_{\alpha=0}^{L^2-1}\frac{1}{2}} = e^{- \beta L^2/2} $ proves the claim.
\end{proof}

\subsubsection{Flavor sum: Calabi--Yau ensembles}
We finish this investigation of the relationship between the IOP matrix model and Hilbert series by looking at the sum over $L$ through the lens of our Calabi--Yau varieties.
\begin{prop}
Let $N \in \mathbb{N}$, $0<y<1$ and $\mz_{\text{\rm Ex1}}$ as in \eqref{eq:avgBQCD2}. It holds that 
\begin{equation}
\label{eq:avgCIOPHS}
    \mz_{\text{\rm Ex1}} (\mathfrak{q}=1,y) = 1 + \mathrm{HS}_{y} \left( \bigsqcup_{L \ge 1} X_{L}^{(N)} \right) .
\end{equation}
\end{prop}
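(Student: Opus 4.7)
The plan is a direct application of Lemma \ref{lem:CYHS} combined with the additivity of Hilbert series under disjoint unions. Setting $\mathfrak{q}=1$ in the definition \eqref{eq:avgBQCD2}, I would first write
\begin{equation*}
\mz_{\text{\rm Ex1}}(1, y) = \sum_{L=0}^{L_{\max}(N)} \mz_{\text{\rm cIOP}}^{(N)}(L, y) ,
\end{equation*}
and then separate the $L=0$ contribution. At $L=0$ the only Young diagram contributing to \eqref{eq:cIOPtableaux} is the empty one, so $\mz_{\text{\rm cIOP}}^{(N)}(0, y) = 1$. Equivalently, $X_0^{(N)}$ reduces to a point and $\mathrm{HS}_y(\text{pt}) = 1$; this accounts for the explicit summand ``$1$'' on the right-hand side of \eqref{eq:avgCIOPHS}.

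For $L \geq 1$, Lemma \ref{lem:CYHS} identifies $\mz_{\text{\rm cIOP}}^{(N)}(L, y) = \mathrm{HS}_y\bigl(X_L^{(N)}\bigr)$. The second ingredient is the additivity of Hilbert series under disjoint unions of graded algebraic varieties: if $X_1 = \mathrm{Spec}\,\mathfrak{R}^{(1)}$ and $X_2 = \mathrm{Spec}\,\mathfrak{R}^{(2)}$, then $X_1 \sqcup X_2 = \mathrm{Spec}(\mathfrak{R}^{(1)} \oplus \mathfrak{R}^{(2)})$ and dimensions of graded pieces add, yielding $\mathrm{HS}_y(X_1 \sqcup X_2) = \mathrm{HS}_y(X_1) + \mathrm{HS}_y(X_2)$. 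Iterating this over $L$, I obtain
\begin{equation*}
\sum_{L=1}^{L_{\max}(N)} \mathrm{HS}_y\bigl(X_L^{(N)}\bigr) = \mathrm{HS}_y\!\left(\bigsqcup_{L=1}^{L_{\max}(N)} X_L^{(N)}\right) ,
\end{equation*}
which, combined with the previous step, gives the claim.

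The main subtlety (and essentially the only nontrivial point of the argument) is the mismatch between the infinite disjoint union $\bigsqcup_{L \geq 1}$ appearing in the statement and the regulator $L_{\max}(N)$ in the definition of $\mz_{\text{\rm Ex1}}$. Since every $\mathrm{HS}_y(X_L^{(N)})$ has a nonvanishing constant term (equal to $1$), the infinite sum does not converge coefficient-wise at $y^0$, and one must either interpret \eqref{eq:avgCIOPHS} degree-wise for $n \geq 1$ (with the ``$+1$'' absorbing the $L=0$ point) or simply declare that $\bigsqcup_{L \geq 1}$ is shorthand for the truncated union up to $L_{\max}(N)$, in harmony with the regularization already imposed on the left-hand side. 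Clarifying this convention is the only delicate part; the remainder of the argument is a direct substitution.
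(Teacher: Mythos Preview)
Your proposal is correct and follows essentially the same route as the paper: both invoke Lemma \ref{lem:CYHS} to identify each summand with a Hilbert series, then use additivity of Hilbert series under disjoint union and iterate. The paper phrases the additivity step via the short exact sequence of structure sheaves $0 \to \mathcal{O}_{X_{L'}^{(N)}} \to \mathcal{O}_{X_L^{(N)} \sqcup X_{L'}^{(N)}} \to \mathcal{O}_{X_L^{(N)}} \to 0$, whereas you argue it more directly from $\mathrm{Spec}(\mathfrak{R}^{(1)} \oplus \mathfrak{R}^{(2)})$ and dimension additivity of graded pieces; these are equivalent. Your handling of the $L=0$ term and of the $L_{\max}(N)$ truncation is in fact more explicit than the paper's, which simply declares the formula to be ``understood as an inductive limit on $L_{\text{max}}$''.
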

\begin{proof}
Let $\mathcal{O}_{X_{L}^{(N)}}$ be the sheaf of functions over $X_{L}^{(N)}$. For any $L,L^{\prime} \in \mathbb{N}$ we have the short exact sequence 
\begin{equation}
    0 \longrightarrow \mathcal{O}_{X_{L^{\prime}}^{(N)}} \longrightarrow \mathcal{O}_{X_{L}^{(N)} \sqcup X_{L^{\prime}}^{(N)} } \longrightarrow \mathcal{O}_{X_{L}^{(N)}} \longrightarrow 0 .
\end{equation}
Additivity of the Hilbert series states that 
\begin{equation}
    \mathrm{HS}_{y} \left( X_{L}^{(N)} \right) + \mathrm{HS}_{y} \left( X_{L^{\prime}}^{(N)}\right) =  \mathrm{HS}_{y} \left( X_{L}^{(N)} \sqcup X_{L^{\prime}}^{(N)} \right) .
\end{equation}
Starting with $L=1, L^{\prime}=2$ and iterating this identity in combination with Lemma \ref{lem:CYHS} gives formula \eqref{eq:avgCIOPHS}. The latter is understood as an inductive limit on $L_{\text{max}}$, with the sum over $L$ in $\mz_{\text{\rm Ex1}}$ truncated at $L \le L_{\text{max}}$.
\end{proof}
In summary, the cIOP model with Gaussian sum over flavors becomes a generating function of Hilbert series of Calabi--Yau varieties. It points toward a categorification, in which the structure rings of Calabi--Yau of different dimensions are assembled in a direct sum weighted by $\mathfrak{q}^{L^2}$.\par
As a side remark, notice that summing over moduli spaces of different dimensions is not unfamiliar in string theory. In fact, in the path integral of 2d gravity, one integrates over the moduli space of genus $\mathtt{g}$ surfaces, and eventually sums over $\mathtt{g} \in \mathbb{N}$.\par

\subsubsection{Summary: Calabi--Yau variations on IOP}
In Theorem \ref{thm:CYtoIOP} we have pointed out a correspondence between the space of states in the cIOP model, which inherits the ring structure from the ring of representations of the flavor symmetry, and the ring of functions on a certain Calabi--Yau variety $X_L ^{(N)}$. The partial dictionary so far is:
\begin{equation*}
\begin{tabular}{r c l}
\textsc{Quantum mechanics} & \hspace{8pt} & \textsc{Calabi--Yau}\\
\hline
IOP & $\quad \longleftrightarrow \quad $ & trivial, $\C^{L^2}$ \\
cIOP & $\quad \longleftrightarrow \quad $ & non-trivial, $X_L ^{(N)}$ \\
constraint $R\in \mathfrak{R}_L ^{(N)}$ & $\quad \longleftrightarrow \quad $ & syzygys \\
(unweighted) flavor sum & $\quad \longleftrightarrow \quad $ & disjoint union $\bigsqcup_{L } X_{L}^{(N)}$ \\
\hline
\end{tabular} 
\end{equation*}\par
The role of Calabi--Yau varieties, especially when summing over $L$, remains to be elucidated. We hope to report on this topic in the future.

\section{Example 2: Matrix model of \texorpdfstring{QCD$_2$}{QCD2}
}
\label{sec:ExQCD2}

Motivated by the properties evidenced in the previous example, we introduce another similar model that illustrates the paradigm of Part \ref{part1}. The corresponding matrix integral is an extremely streamlined low energy toy model of QCD$_2$. In Subsection \ref{sec:AvgQCD2FF} we extend the model according to Section \ref{sec:Fermi}, by including a sum over the number of flavors with Gaussian weight. Once again, the sum over flavors will promote a third order phase transition to a first order one. The corresponding spectral density is analyzed in Subsection \ref{sec:rhoQCD2}.

\subsection{Partition function of \texorpdfstring{QCD$_2$}{QCD2} matrix model}
\label{sec:noFAQCD2FF}

We introduce a matrix model that captures the behaviour of thermal QCD$_2$ with $N+1$ colors and $L$ flavors \cite{Hallin:1998km} (further explored and generalized in \cite{Santilli:2020ueh,Santilli:2021eon}). Its partition function is 
\begin{align}
    \mz_{\text{\rm QCD$_2$}}^{(N)} (L, y) &= \oint_{SU(N+1)} [\dd U ] \left[ \det \left(1+\sqrt{y} U \right)  \det \left(1+\sqrt{y} U^{-1} \right) \right]^{L} \label{eq:QCD2MM} 
\end{align}
with parameter $0 < y <1$. In the QCD interpretation, $L \in \mathbb{N}$ has the meaning of the number of quarks.\footnote{For simplicity, the matrix model corresponds to the low energy limit (equivalently, it takes the approximation of large quark mass $m$) of the matrix model in \cite{Hallin:1998km}, since the more general expression would not add anything new to our discussion.} We have the identification $ y = e^{- \beta m} $, with $m$ the quark mass. To avoid clutter, we simply redefine $m \beta \mapsto \beta$.\par
The matrix model \eqref{eq:QCD2MM} admits other interpretations:
\begin{enumerate}[(i)]
    \item Comparing with \cite{Minahan:1991pv}, it is a simplified model of one-plaquette lattice QCD$_2$ \cite{Santilli:2021eon}.
    \item A unitary matrix model that generalizes \eqref{eq:QCD2MM} was examined in \cite{Betzios:2017yms}, in the context of a black hole phase in AdS$_2$/CFT$_1$. The authors of \cite{Betzios:2017yms} also considered correlation functions of operators similar to but distinct from our $\mathcal{O}_L (t)$.
\end{enumerate}\par
\begin{lem}[\cite{Gessel}]
 For every $L, N \in \mathbb{N}$, let $\mz_{\text{\rm QCD$_2$}}^{(N)}$ be as in \eqref{eq:QCD2MM} with $y=e^{-\beta}$. Let also $\mathfrak{h}_L ^{(N)} \subset \N^{L}$ denote the set 
 \begin{equation}
    \mathfrak{h}_L ^{(N)} = \left\{  (h_1, \dots, h_{L}) \in \N^{L} \ : \ N+L-1 > h_1 > h_2 > \cdots > h_{L} \ge 0 \right\} .
 \end{equation}
 It holds that 
\begin{equation}
\label{eq:fermMMequalsQCD2MM}
    \mz_{\text{\rm QCD$_2$}}^{(N)} (L, e^{-\beta}) = \frac{e^{\beta L^2/2}}{G(L+1)^2} \sum_{(h_1, \dots, h_{L}) \in \mathfrak{h}_L ^{(N)} }  \prod_{1 \le i < j \le L} (h_i - h_j)^2 ~ e^{- \beta\sum_{j=1}^{L} \left( h_j + \frac{1}{2} \right) } .
\end{equation}
\end{lem}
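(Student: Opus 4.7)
The proof would follow the template of Lemma \ref{lemma:cIOPBQCD2} in spirit, but invoke the \emph{dual} Cauchy identity
\[
\prod_{i=1}^{L}\prod_{k=1}^{N+1}(1+x_i z_k) \;=\; \sum_{R} s_R(x_1,\dots,x_L)\,\chi_{R^{\prime}}(U)
\]
in place of the standard one \eqref{eq:Cauchyid}, since the integrand in \eqref{eq:QCD2MM} is a positive power $\det(1+\sqrt{y}U)^L$ rather than the inverse determinant appearing in cIOP. Here $U=\mathrm{diag}(z_1,\dots,z_{N+1})$, $R^{\prime}$ is the conjugate Young diagram to $R$, and the sum is restricted to $\ell(R)\le L$ and $R_1\le N+1$. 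I would apply this expansion to both $\det(1+\sqrt{y}U)^L$ and $\det(1+\sqrt{y}U^{-1})^L$ with $x_i=\sqrt{y}$ for all $i$, and then use \eqref{eq:schurbringout} in the guise $s_R(\sqrt{y},\dots,\sqrt{y}) = y^{|R|/2}\dim R$ to expose $\chi_{R^{\prime}}(U)$ and $\chi_{\tilde R^{\prime}}(U^{-1})$ as the only pieces depending on $U$.

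The next step is character orthogonality on the unitary group, which collapses the double sum into
\[
\mz_{\text{\rm QCD$_2$}}^{(N)}(L,e^{-\beta}) \;=\; \sum_{R}\,y^{|R|}(\dim R)^{2},
\]
the range being partitions of length at most $L$ with the column bound dictated by the dual Cauchy constraint. The final step is the standard change of variables $h_i = R_i-i+L$ of \eqref{eq:changeRtoH}. Using the Vandermonde expression \eqref{eq:dimRVdm} for $\dim R$ and the elementary identity $|R|=\sum_j h_j - L(L-1)/2$, one produces the squared Vandermonde $\prod_{i<j}(h_i-h_j)^2$ and the overall prefactor
\[
y^{-L(L-1)/2}\cdot y^{-L/2} \;=\; e^{\beta L^2/2},
\]
where the extra $y^{-L/2}$ is pulled out to match the $\tfrac12$-shift inside the Boltzmann factor $e^{-\beta(h_j+\tfrac12)}$ on the right-hand side of \eqref{eq:fermMMequalsQCD2MM}. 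Together with the $G(L+1)^{-2}$ arising from the two copies of $\dim R$, this reproduces the asserted identity, with the strict inequalities $h_1>h_2>\cdots>h_L\ge 0$ coming directly from the weakly decreasing condition $R_1\ge R_2\ge\cdots\ge R_L\ge 0$.

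The main delicate point will be pinning down the precise upper bound on $h_1$ defining $\mathfrak{h}_L^{(N)}$: one must carefully treat the $U(N+1)$ versus $SU(N+1)$ distinction (partitions identified modulo a shift by $(1,\dots,1)$), since the integrand in \eqref{eq:QCD2MM} is \emph{not} invariant under the central $U(1)\subset U(N+1)$, so the two integrals give genuinely different answers. A sanity check on $L=N=1$ is instructive: the $U(2)$ integral evaluates to $1+y+y^2$ while the $SU(2)$ integral evaluates to $1+3y+y^2$, and the two are distinguished by whether one simply restricts $R_1\le N+1$ or instead groups together partitions differing by columns of height $N+1$. Modulo this bookkeeping of equivalence classes, every step is a standard Schur-function manipulation once dual Cauchy is in hand, and the argument is a direct analogue of the proof of Lemma \ref{lemma:cIOPBQCD2}.
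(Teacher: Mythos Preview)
Your proposal is correct and follows exactly the paper's approach: apply the dual Cauchy identity \eqref{eq:Cauchydual}, specialize to $x_i=\sqrt{y}$, use character orthogonality over $SU(N+1)$ to collapse to $\sum_{R\subseteq (N)^L} y^{|R|}(\dim R)^2$, and then pass to the $h$-variables via \eqref{eq:changeRtoH} and \eqref{eq:dimRVdm}. Your caution about the precise upper bound on $h_1$ and the $SU(N+1)$ versus $U(N+1)$ bookkeeping is well-placed---the paper simply records the constraint as $R\subseteq (N)^L$ without lingering on this point.
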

\begin{proof}The proof is very similar to the one of Lemma \ref{lemma:cIOPBQCD2}. The change of variables \eqref{eq:changeRtoH} rewrites the right-hand side of \eqref{eq:fermMMequalsQCD2MM} as a sum over Young diagrams:
\begin{equation}
\label{eq:fermionMMRepsNosum}
     \sum_{R \subseteq (N)^L} y^{\lvert R \rvert} ~ (\dim R)^2.
\end{equation}
Here we have used formula \eqref{eq:dimRformula} for $\dim (R)$. The sum is restricted to Young diagrams that fit inside a rectangular tableau of $L$ rows and $N$ columns. For instance:
\begin{equation*}\ytableausetup{centertableaux}
			\begin{matrix} L=4 \text{  and  } N=4, \quad \\ \begin{color}{orange}R= (3,2)\end{color} \subset (4,4,4,4) \end{matrix} \qquad \qquad  
			\begin{ytableau}
		 	*(orange) \ & *(orange) \ & *(orange) \ & \  \\
		 	*(orange) \ & *(orange) \ & \ & \ \\
		  	\ & \ & \ & \ \\
		   	\ & \ & \ & \ \\
			\end{ytableau}
	\end{equation*}
On the other hand, we apply the so-called dual Cauchy identity \cite{Macdonaldbook} 
\begin{equation}
\label{eq:Cauchydual}
    \prod_{i=1}^{L} \prod_{j=1} ^{N+1} (1+y_i z_j) = \sum_{R \subseteq (N)^L} \chi_R (Y) \chi_{R^{\top}} (U)
\end{equation}
and its complex conjugate to \eqref{eq:QCD2MM}, where $R^{\top}$ is the transpose partition to $R$. We set $Y$ to have $L$ eigenvalues all equal to $\sqrt{y}$. Using the orthogonality of characters, as done to get \eqref{eq:IOPtableaux}, one finds \eqref{eq:fermionMMRepsNosum}. This proves the relation \eqref{eq:fermMMequalsQCD2MM}.
\end{proof}\par

\subsubsection{Comparison with cIOP}
The matrix model \eqref{eq:QCD2MM} is closely related to the cIOP matrix model \eqref{eq:ZBQCD2equalsZcIOP}, the difference residing in the plus signs in front of the fugacity $\sqrt{y}$ and in front of the power $L$. Because the sign in front of $L$ appears in the exponent, it changes the geometric \emph{series} $(1-\sqrt{y} z_a)^{-L}$ into \emph{polynomials} in $z_a$ of degree $L$, $\left\{ z_a \right\}_{a=1, \dots , N+1}$ being the eigenvalues of the random matrix $U$. This is reflected in the different constraints appearing in the two character expansions, where in particular $\mz_{\text{\rm QCD$_2$}}^{(N)}$ includes a sum over only a finite number of representations.\par
The distinction between the two models becomes starker if one compares their massless limits, which send $\sqrt{y} \to 1$. The limit is ill-posed in the cIOP model, due to the integrand developing a singularity. On the other hand, \eqref{eq:QCD2MM} is finite and well-defined in the limit $\sqrt{y} \to 1$. As observed in \cite{Santilli:2021eon}, these behaviors reflect the expected singularity and lack thereof in massless bosonic versus fermionic QCD$_2$. Furthermore, the latter model admits a closed form solution in the limit \cite{BS:1985}, expressed entirely with Barnes's $G$-functions:
\begin{equation}
    \lim_{y \to 1}  \mz_{\text{\rm QCD$_2$}}^{(N)} (L, y) = \frac{G(L+1)^2 G(N+1) G(2L+N+1)}{G(2L+1) G(L+N+1)^2} .
\end{equation}

\subsubsection{Schubert cells and quantum mechanics}
From \eqref{eq:fermionMMRepsNosum}, the states of the quantum mechanical model read off from $\mz_{\text{\rm QCD$_2$}}^{(N)}$ are associated with Young diagrams that fit into the $N \times L$ rectangle $(N)^L$. In turn, these Young diagrams are in one-to-one correspondence with the Schubert cells of the decomposition of the Grassmannian $\mathrm{Gr}(L,N+L)$. Thus, while we have seen Calabi--Yau varieties emerging from the cIOP model in Subsection \ref{sec:SQCD4andCY}, the relevant algebraic geometry in the current example is that of Schubert varieties.\par
The appearance of the Grassmannian and its Schubert cells is of course consistent with a QCD-type interpretation of the model.

\subsection{Partition function of \texorpdfstring{QCD$_2$}{QCD2} with sum over flavor symmetries}
\label{sec:AvgQCD2FF}

According to our paradigm, we want to enrich the model by introducing a Gaussian weight and summing over the flavor rank in QCD$_2$.
\begin{defin}
Let $0<y,\mathfrak{q}<1$, $L, N \in \mathbb{N}$ and $\mz_{\text{\rm QCD$_2$}}^{(N)}$ as in \eqref{eq:QCD2MM}. The QCD$_2$ matrix model with sum over flavors is 
\begin{equation}
\label{eq:defZAMM}
    \mz_{\text{\rm Ex2}} (\mathfrak{q},y) = \sum_{L=0} ^{L_{\max}(N)} \mathfrak{q}^{L^2} \mz_{\text{\rm QCD$_2$}}^{(N)} (L, y)  .
\end{equation}
\end{defin}
Note that in the Schur slice $\mathfrak{q}=\sqrt{y}$, the factor $\mathfrak{q}^{L^2}$ in \eqref{eq:defZAMM} cancels the $y^{-L^2/2}$ from \eqref{eq:fermMMequalsQCD2MM}.

\subsubsection{Planar limit of \texorpdfstring{QCD$_2$}{QCD2} with sum over flavor symmetries}
\begin{thm}\label{thm:Amodel}
Let $\mz_{\text{\rm Ex2}}$ be as in \eqref{eq:defZAMM}, with $y=e^{- \beta}$ and $\mathfrak{q}=e^{-1/(2a)}$. $\forall a >0$, there exists $T_H >0$ such that $\mz_{\text{\rm Ex2}}$ undergoes a first order phase transition in the planar limit at $\frac{1}{\beta} = T_H$, with 
\begin{equation}
    \ln \mz_{\text{\rm Ex2}} = \begin{cases} O(1) & \frac{1}{\beta} < T_H \\ O(N^2) & \frac{1}{\beta} > T_H . \end{cases}
\end{equation}
Moreover, in the Schur slice $a=\beta^{-1}$, $T_H \approx 1.039 $.
\end{thm}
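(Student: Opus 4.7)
The plan is to follow the template laid out in the proof of Theorem \ref{thm:avgCIOP} for the cIOP example, adapted to the QCD$_2$ matrix model. The two-step Veneziano--then--saddle procedure carries over verbatim; only the concrete form of the planar free energy changes, and in particular the low-temperature ``trivial'' phase again turns out to be governed by a quadratic form in $\gamma$ whose sign flip defines $T_H$.

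First I would establish the Veneziano limit of $\mz_{\text{\rm QCD$_2$}}^{(N)}(L,y)$ at fixed $\gamma=L/N$. Using the discrete matrix model \eqref{eq:fermMMequalsQCD2MM} and the usual scaling $h_i = L x_i$, one obtains an eigenvalue density $\varrho(x)$ satisfying $0\le \varrho\le 1$ supported in $[0, \gamma^{-1}]$ (the two hard walls coming from $h_i\ge 0$ and $h_i \le N+L-1$). The analysis of \cite{Santilli:2020ueh} shows that this ensemble admits a third order phase transition at a critical curve $\gamma=\gamma_c(\beta)$, and that the planar free energy
\[
\mf_{\text{\rm QCD$_2$}}(\gamma,\beta) := \lim_{N\to\infty,\, L/N=\gamma} \frac{1}{N^2}\ln \mz_{\text{\rm QCD$_2$}}^{(N)}(L,e^{-\beta})
\]
is well defined and splits as $\mf_{-}$ (low-$\gamma$ phase) and $\mf_{+}$ (high-$\gamma$ phase), coinciding up to second derivative at $\gamma_c$. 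By the Szegő-type argument of Lemma \ref{lem:Szego0} applied to the Toeplitz symbol $|{\det}(1+\sqrt{y}U)|^{2L}$ with Fourier coefficients $L(-1)^{k+1} y^{k/2}/k$, the low-$\gamma$ free energy is pinned down to
\[
\mf_{-}(\gamma,\beta) = -\gamma^2 \ln(1-e^{-\beta}), \qquad \text{i.e. } f_{-}(\beta)=-\ln(1-e^{-\beta}).
\]

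Next I would repackage \eqref{eq:defZAMM} in Laplace form,
\[
\mz_{\text{\rm Ex2}}(\mathfrak{q},e^{-\beta}) = \sum_{L=0}^{L_{\max}(N)} \exp\!\Big\{ N^2 \,\mathcal{S}_{\beta,a}^{\text{\rm QCD$_2$}}\!\big(\tfrac{L}{N}\big) \Big\}, \qquad \mathcal{S}_{\beta,a}^{\text{\rm QCD$_2$}}(\gamma) := \mf_{\text{\rm QCD$_2$}}(\gamma,\beta) - \frac{\gamma^2}{2a},
\]
and carry out the saddle-point argument exactly as in Subsection \ref{sec:FermionicMM}. In the low-$\gamma$ phase the effective action is purely quadratic,
\[
\mathcal{S}_{\beta,a}^{\text{\rm QCD$_2$}}(\gamma) = \gamma^2\left[ -\ln(1-e^{-\beta}) - \frac{1}{2a} \right],
\]
so whenever the bracket is negative the absolute maximum lies at the trivial saddle $\gamma_\ast=0$, giving $\mathcal{S}_{\beta,a}^{\text{\rm QCD$_2$}}(\gamma_\ast)=0$ and hence $\ln \mz_{\text{\rm Ex2}}=O(1)$ by Gaussian integration over fluctuations. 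Define $T_H=T_H(a)$ as the temperature at which the bracket vanishes; in the Schur slice $a=1/\beta$ this condition reads $1-e^{-\beta_H} = e^{-\beta_H/2}$, which after setting $u=e^{-\beta_H/2}$ gives $u^2+u-1=0$, whence $u=1/\phi$ with $\phi=(1+\sqrt{5})/2$ and
\[
T_H = \frac{1}{2\ln\phi} \approx 1.039.
\]

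For $\beta^{-1}>T_H$ the quadratic coefficient flips sign and the trivial saddle ceases to be a maximum. Since $\mf_{+}-\frac{\gamma^2}{2a}$ is bounded above in $\gamma$ (the sum in \eqref{eq:defZAMM} is truncated, and $\mf_{+}$ is known to have sub-quadratic growth in $\gamma$ in the high-$\gamma$ phase by \cite{Santilli:2020ueh}), a nontrivial interior maximum $\gamma_\ast(\beta,a)>\gamma_c(\beta)$ must exist, located in the high-$\gamma$ phase, and at it $\mathcal{S}_{\beta,a}^{\text{\rm QCD$_2$}}(\gamma_\ast)>0$ strictly, yielding $\ln \mz_{\text{\rm Ex2}}=N^2\mathcal{S}_{\beta,a}^{\text{\rm QCD$_2$}}(\gamma_\ast)+o(N^2)$. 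The non-coexistence of the two saddles at $\beta^{-1}=T_H$ makes the transition first order, with the same Hagedorn Gaussian prefactor $(y-e^{-1/T_H})^{-1/2}$ as in Theorem \ref{thm:avgCIOP}. The verification extends to constant-$a$ slices by plotting $\mathcal{S}_{\beta,a}^{\text{\rm QCD$_2$}}(\gamma)$ and noting that the same qualitative behaviour persists for all $a>0$, only shifting the value of $T_H(a)$.

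The main technical obstacle is controlling $\mf_{+}(\gamma,\beta)$ explicitly enough to certify that the high-$\gamma$ saddle $\gamma_\ast$ exists, lies strictly above $\gamma_c$, and gives a strictly positive $\mathcal{S}_{\beta,a}^{\text{\rm QCD$_2$}}(\gamma_\ast)$ for every $\beta^{-1}>T_H$. The low-$\gamma$ Szegő computation is classical and automatic; what requires real work is importing from \cite{Santilli:2020ueh} (or redoing) the characterization of the high-$\gamma$ saddle of the discrete ensemble with two hard walls, together with the monotonicity and sub-quadratic growth estimates on $\mf_{+}$ needed to rule out runaway behaviour at large $\gamma$ within the truncation $L\le L_{\max}(N)$.
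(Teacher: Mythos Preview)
Your proposal is correct and follows essentially the same two-step (Veneziano-then-saddle) approach as the paper, including the correct Szeg\H{o} computation of $\mf_{-}$ and the exact evaluation $T_H=1/(2\ln\phi)\approx 1.039$ in the Schur slice. Two small remarks: the second hard wall for the scaled eigenvalues sits at $x=1+\gamma^{-1}$ (since $(N+L-1)/L\to 1+\gamma^{-1}$), not $\gamma^{-1}$; and the obstacle you flag --- controlling $\mf_{+}$ explicitly enough to certify the nontrivial saddle --- is resolved in the paper not by growth estimates but by importing from \cite{CP:2013} the closed-form doubly-capped eigenvalue density \eqref{eq:CPrhohat}--\eqref{eq:xpmCP} and the resulting explicit expression for $\mf_{\text{\rm QCD$_2$}}$ in the phase $\gamma>\gamma_c$, after which the existence of $\gamma_\ast>\gamma_c$ with $\mathcal{S}_{\beta,a}(\gamma_\ast)>0$ is verified by direct inspection of $\mathcal{S}_{\beta,a}(\gamma)$ (Figures \ref{fig:plotgammamaxA}--\ref{fig:plotSbetaZoom}).
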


\begin{figure}[th]
    \centering
    \begin{tikzpicture}
    \node (p) at (0,0) {\includegraphics[width=0.5\textwidth]{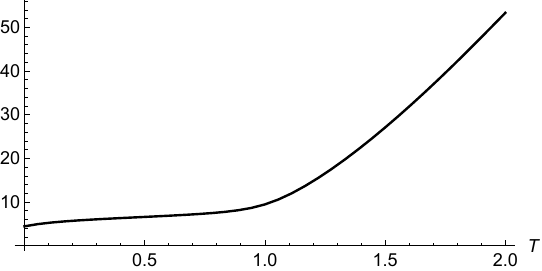}};
    \node[anchor=east] at ($(p.north west)+(0,-0.5)$) {$\scriptstyle \ln \mz_{\text{\rm Ex2}}$};
    \node at ($(p.north)+(-1,-1)$) {$\scriptstyle N=10$};
    \end{tikzpicture}
    \caption{Plot of $\ln \mz_{\text{\rm Ex2}} (\mathfrak{q},y)$ as a function of the temperature $T=-1/\ln (y)$ at $N=10$. The sum over $L$ in \eqref{eq:defZAMM} has been truncated at $L \le 20$. A change in behavior around $T \approx 1$ is already visible at small $N$, although the sharp phase transition is smoothed by finite $N$ effects. The plot is taken in the Schur slice $a=T$, but the qualitative behavior is the same in all $a$-slices of the parameter space.}
    \label{fig:ZAplotN10}
\end{figure}\par

Theorem \ref{thm:Amodel} argues that, passing to the ensemble \eqref{eq:defZAMM}, the third order phase transition experienced by $\mz_{\mathrm{QCD}_2}$ in the Veneziano limit is promoted to a first order one, at a critical temperature $T_H$. The function $\ln \mz_{\text{\rm Ex2}}$ at fixed $N$ is plotted in Figure \ref{fig:ZAplotN10}. It is also possible to evaluate numerically $\ln \mz_{\text{\rm Ex2}}$ at fixed $T$ for various $N$, and check that the plot is consistent with an approximately constant behavior if $T<T_H$ and a polynomial shape if $T>T_H$. In Appendix \ref{sec:AvgQCD2UMM} we re-derive the first order phase transition from the unitary matrix model expression \eqref{eq:QCD2MM} for $\mz_{\text{\rm QCD$_2$}}^{(N)}$, and argue that it is accompanied by Hagedorn-like behavior.\par
In conclusion, Theorem \ref{thm:Amodel} derives a first order transition with the desired holographic properties from the ingredients: 
\begin{enumerate}[$i)$]
    \item A simple quantum system built out of representations of the flavor symmetry;
    \item A constraint on the states descending from gauge invariance;
    \item A Gaussian sum over the number of quarks;
\end{enumerate}
in accordance with the general prescription of Part \ref{part1}.

\begin{proof}[Proof of Theorem \ref{thm:Amodel}]
The proof is done in two steps. 
\begin{itemize}
    \item[(1)] First we solve the planar Veneziano limit of $\mz_{\text{\rm QCD$_2$}}^{(N)}$, with 
        \begin{equation}
            L \to \infty , \quad N \to \infty , \qquad \gamma = \frac{L}{N} \text{ fixed}.
        \end{equation}
    \item[(2)] Second, we plug the result in $\mz_{\text{\rm Ex2}}$ and extremize over $\gamma$.
\end{itemize}\par
\underline{Step (1).} Most of the procedure for the large $L$ limit of \eqref{eq:fermMMequalsQCD2MM} goes through exactly as in the cIOP matrix model, see Subsection \ref{sec:cIOPlargeN}. In particular, the discrete matrix models have same summand. Writing it in the form $e^{-L^2 S (\cdots)}$ and extremizing the effective action, we we get the same saddle point equation \eqref{eq:IOPSPE}. However, the main difference here is that the discrete matrix model we deal with has two hard walls: at $h_j=0$ and $h_j = L+N-1$. In the large $N$ limit, the second hard wall for the scaled eigenvalues $x$ is placed at 
\begin{equation}
    \frac{L+N-1}{L} \approx 1+\gamma^{-1} . 
\end{equation}
In the planar limit, the eigenvalue density $\varrho (x)$ is subject to the constraints 
\begin{equation}
    \int_0 ^{\infty} \dd x \varrho (x) =1 , \qquad  \varrho (x) \le 1 , \qquad \supp \varrho \subseteq [0, 1 + \gamma^{-1} ] .
\end{equation}
That is, compared to the cIOP model, there is an additional hard wall at $x= 1+\gamma^{-1}$. This will play a role later. This discrete matrix model was first solved in \cite{CP:2013}. For the explicit comparison with \cite{CP:2013}, the dictionary is: 
\begin{equation*}\begin{tabular}{r|c|c|c|c|c}
    \cite{CP:2013} & $r$ & $s$ & $R$ & $\alpha$ & $I_{r,s}$ \\
    \hline
    here & $N+L$ & $L$ & $1+\gamma^{-1}$ & $e^{-\beta}$ & $e^{-\beta L^2/2} \mz_{\text{\rm QCD$_2$}}^{(N)}$ 
\end{tabular}
\end{equation*}\par
We thus find again a solution \eqref{eq:DKansatz}, with the same $\varrho (x)$ as in the IOP model, as long as $x_{+} < 1 + \gamma^{-1}$. When $x_{+}$ hits the hard wall, we ought to find a new solution. This takes place at 
\begin{equation}
    x_{+} (\gamma_c) = 1 + \frac{1}{\gamma_c} \quad \Longrightarrow \quad \gamma_c = \frac{1-\sqrt{y}}{2 \sqrt{y}} .
\end{equation}
As a function of the inverse temperature, $2 \gamma_c = e^{\beta/2} -1$.\par
\begin{figure}[tb]
    \centering
    \includegraphics[width=0.4\textwidth]{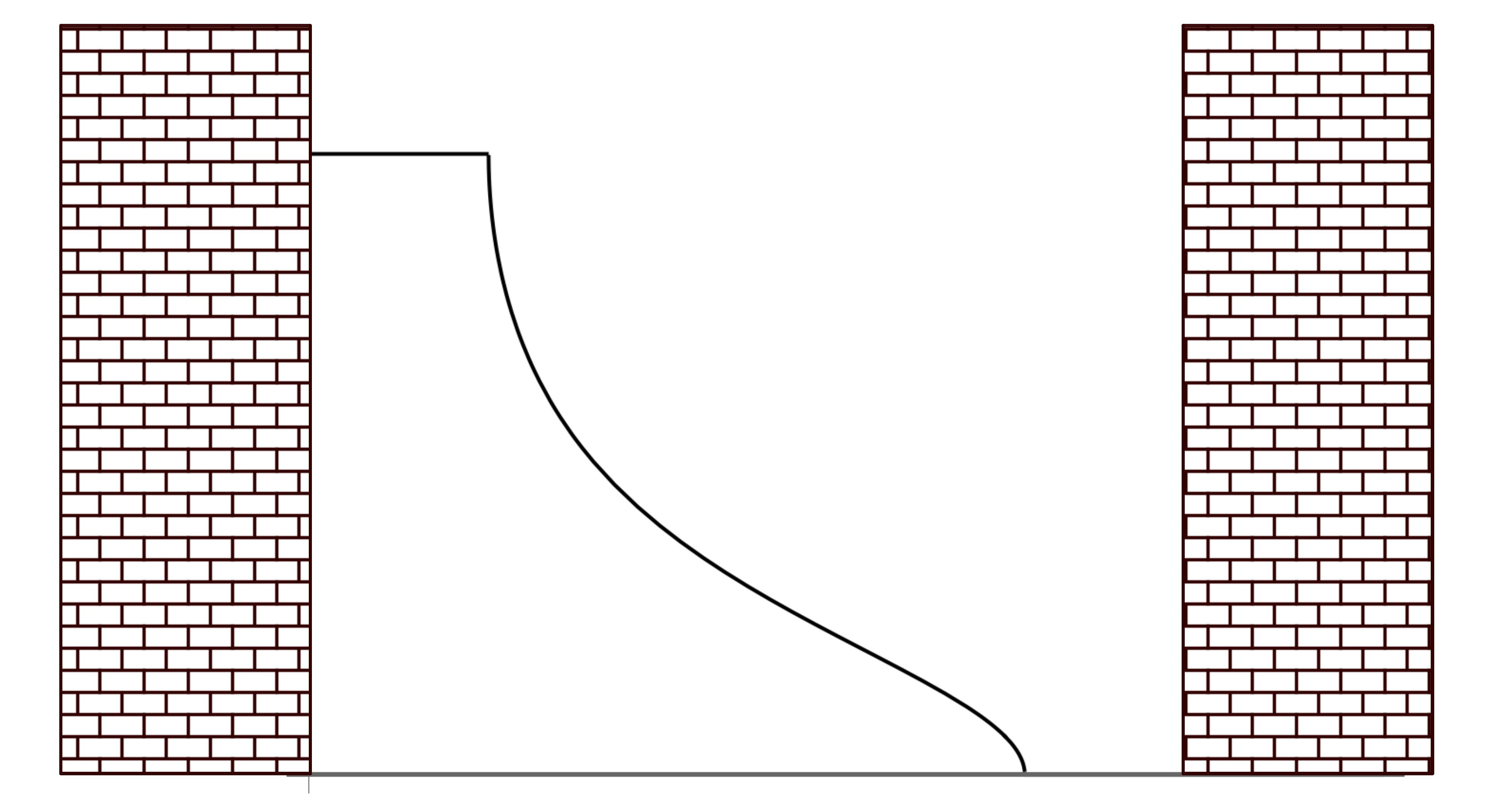}\hspace{0.05\textwidth}
    \includegraphics[width=0.4\textwidth]{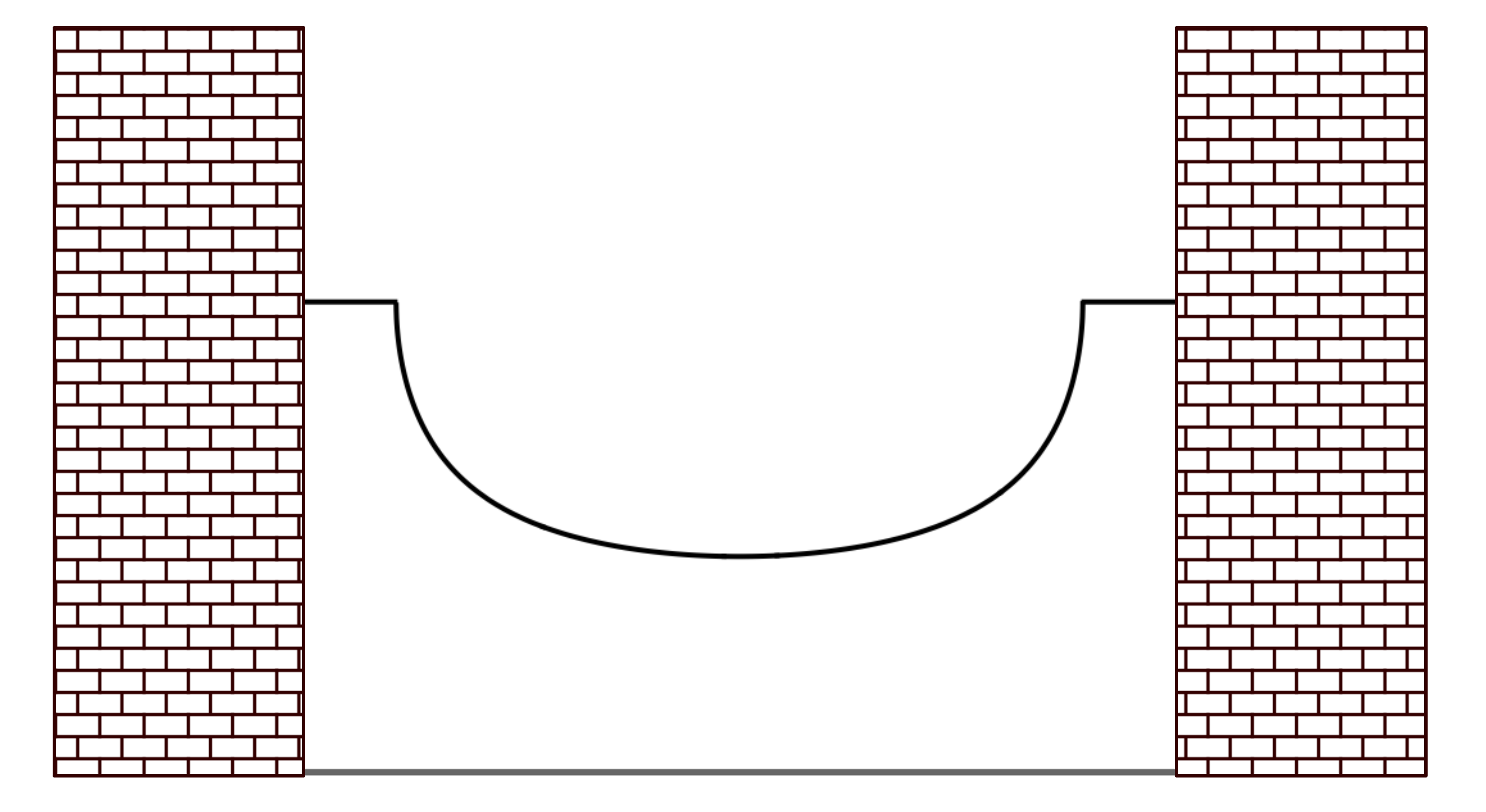}
    \caption{Schematic representation of the eigenvalue density in presence of two hard walls for the eigenvalues. Left: The hard wall on the right is not active, the eigenvalue density is capped only at the left edge. Right: Both hard walls are active, the eigenvalue density is capped at left and right edges.}
    \label{fig:cartoonhw2}
\end{figure}\par

Because of the square root singularity at the second hard wall, we replace the ansatz \eqref{eq:DKansatz} with an eigenvalue density ``capped'' both on the left and on the right edges:
\begin{equation}
\label{eq:CPansatz}
    \varrho (x) = \begin{cases} 1 & \hspace{8pt} 0 \le x < x_{-} \\ \hat{\varrho} (x) & x_{-} \le x \le x_{+} \\ 1 &  x_{+} < x \le 1+\gamma^{-1} . \end{cases}
\end{equation}
See Figure \ref{fig:cartoonhw2} for a schematic view. In this new scenario, the nontrivial part of the eigenvalue density is \cite{CP:2013}
\begin{equation}
\label{eq:CPrhohat}
    \hat{\varrho}_{\ast} (x) = 1 + \frac{2}{\pi} \left[ \atan \left( \sqrt{\frac{x_{-}}{x_{+}}}  \sqrt{\frac{x_{+} - x}{x - x_{-}}}  \right)  - \atan \left( \sqrt{\frac{1+\gamma^{-1} - x_{-}}{1+\gamma^{-1} - x_{+}}}  \sqrt{\frac{x_{+} - x}{x - x_{-}}}  \right) \right]
\end{equation}
with 
\begin{equation}
\label{eq:xpmCP}
    x_{\pm} = \frac{1}{2(1+\sqrt{y})} \left[ \left( 2 + \gamma^{-1} \right)^{1/2} \pm y^{1/4} \gamma^{-1/2} \right]^2  .
\end{equation}\par

Computing 
\begin{equation}
    \mf_{\text{\rm QCD$_2$}} = \left. \frac{1}{N^2} \ln \mz_{\text{\rm QCD$_2$}}^{(N)} \right\rvert_{\text{saddle point}} 
\end{equation}
with this eigenvalue density we get 
\begin{equation}
	 \mf_{\text{\rm QCD$_2$}}  = \begin{cases} - \gamma^2 \ln (1-y) & \gamma < \frac{1-\sqrt{y}}{2\sqrt{y}} \\ (1+2 \gamma) \ln \left(1 + \sqrt{y}\right) -  \frac{1}{4} \ln y + C(\gamma)  & \gamma > \frac{1-\sqrt{y}}{2\sqrt{y}}  \end{cases}
\end{equation}
where, in the second phase, $C(\gamma)$ is the $y$-independent term
\begin{equation}
    C(\gamma)= -\gamma^2  \ln \left(\frac{4 \gamma }{\gamma +1}\right)  +2\gamma  (\gamma +1) \ln \left(\frac{1+2\gamma }{\gamma+1}\right)-\frac{1}{2} \ln \left(\frac{(\gamma +1)^2}{1+2 \gamma}\right) - \gamma \ln 4 - \ln 2.
\end{equation}\par
\medskip
\underline{Step (2).} We express the summands in $\mz_{\text{\rm Ex2}}$ as functions of the Veneziano parameter $\gamma=L/N$ and approximate them for $N \gg 1$:
\begin{align}
    \mz_{\text{\rm Ex2}} (e^{- \beta}) & \approx \sum_{L=0}^{L_{\max}(N)} \exp \left\{  N^2 \mathcal{S}_{\beta,a} (\gamma) \right\} , \label{eq:ZEx2gammaEff} \\
   \mathcal{S}_{\beta,a} (\gamma) & := \mf_{\text{\rm QCD$_2$}}  - \frac{\gamma^2}{2a}  . \label{eq:expZEx2phase2}
\end{align}
In the phase $\gamma < \gamma_c$, the partition function takes the Gaussian form
\begin{equation}
    \mz_{\text{\rm Ex2}} (e^{- \beta}) \approx \sum_{L=0}^{L_{\max}(N)} \exp \left\{  - N^2 \gamma^2 \left[\frac{1}{2a} + \ln (1-e^{- \beta}) \right]\right\} .
\end{equation}
The saddle point is $\gamma_{\ast} =0$, valid in the low temperature region (see Figure \ref{fig:plot1})
\begin{equation}
    \frac{1}{2a} + \ln (1-e^{- \beta}) >0 .
\end{equation}
\begin{itemize}
    \item[a)] In every constant-$a$ slice of the parameter space $\left\{ (a,\beta) \in \R_{> 0} \times \R_{> 0} \right\}$, the trivial saddle $\gamma_{\ast}=0$ holds if 
    \begin{equation}
        \beta > \beta_c = - \ln \left( 1- e^{-1/(2a)} \right) ,
    \end{equation}
    equivalently $0<y<1-\mathfrak{q}$. For instance:
    \begin{equation}
        \beta_c \big\rvert_{a= 1/2} \approx 0.459, \qquad \qquad \beta_c \big\rvert_{a= 2} \approx 1.509.
    \end{equation}
    \item[b)] In the Schur slice $a=\beta^{-1}$, i.e. $\mathfrak{q}=\sqrt{y}$, the trivial saddle $\gamma_{\ast}=0$ holds if (Figure \ref{fig:plot1})
    \begin{equation}
        \frac{\beta}{2} + \ln (1-e^{- \beta}) >0 \quad \Longrightarrow \quad \beta > \beta_c \approx 0.962 .
    \end{equation}
\end{itemize}
Notice that $0<\beta_c<\infty$ exists and is finite $\forall a >0$, so that the features we describe are valid for every choice of the parameter $a$.
\begin{figure}[t]
	\centering
	\includegraphics[width=0.35\textwidth]{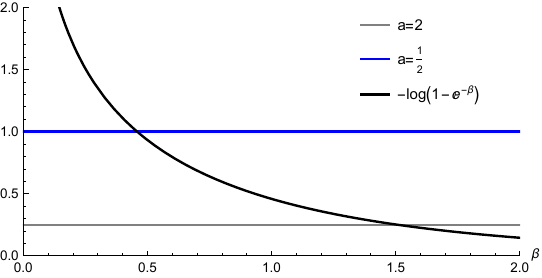}%
    \hspace{0.05\textwidth}%
    \includegraphics[width=0.35\textwidth]{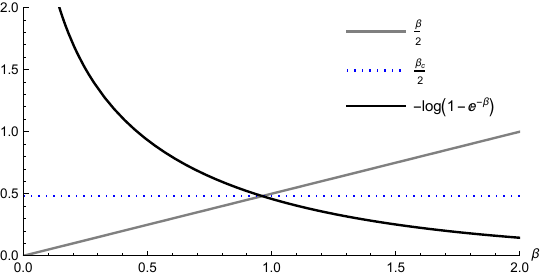}
	\caption{The saddle point $\gamma_{\ast}=0$ is valid at low temperature, $\beta > \beta_{c}$. Left: slice of constant $a$. Right: Schur slice $a=\beta^{-1}$.}
	\label{fig:plot1}
\end{figure}\par
\medskip
\begin{figure}[tb]
	\centering
	\includegraphics[width=0.32\textwidth]{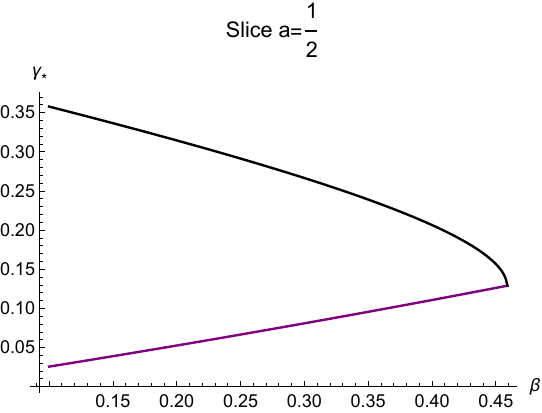}%
    \hspace{0.01\textwidth}%
    \includegraphics[width=0.32\textwidth]{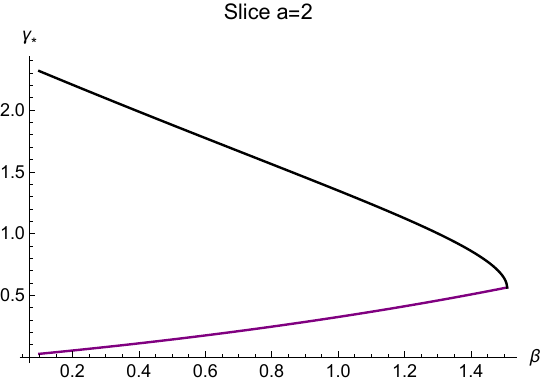}
    \hspace{0.01\textwidth}%
    \includegraphics[width=0.32\textwidth]{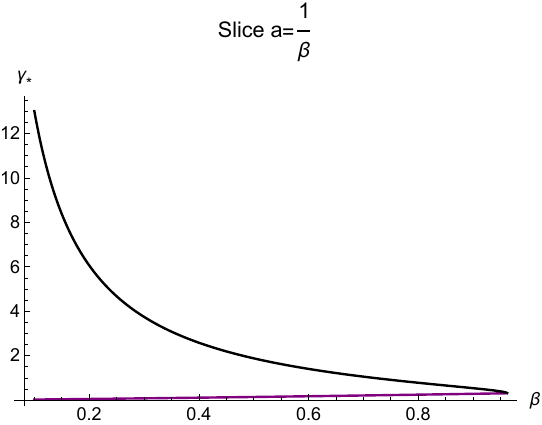}
	\caption{The absolute maximum $\gamma_{\text{\rm peak}}$ (black) and $\gamma_c$ (purple) in the region $\beta \le \beta_c$. Left: Slice $a=1/2$. Center: Slice $a=2$. Right: Schur slice $a=\beta^{-1}$.}
	\label{fig:plotgammamaxA}
\end{figure}\par

Beyond this value, $\beta < \beta_c$, the exponent in \eqref{eq:ZEx2gammaEff} changes sign and we have to look for the maximum of $ \mathcal{S}_{\beta,a} (\gamma)$.
Let us start with the assumption that $\gamma > \gamma_c$. By direct inspection, we find that 
\begin{itemize}
    \item $\mathcal{S}_{\beta,a}$ has an absolute maximum at a value $\gamma_{\text{\rm peak}}>0$;
    \item The saddle $\gamma_{\text{\rm peak}}$ exists for all $\beta \le \beta_c$;
    \item $\gamma_{\text{\rm peak}} > \gamma_c$ for all $\beta < \beta_c$ and $\gamma_{\text{\rm peak}} \big\rvert_{\beta=\beta_c} = \gamma_c \big\rvert_{\beta=\beta_c}$.
\end{itemize}
This situation is plotted in Figure \ref{fig:plotgammamaxA}. From the last point it follows that 
\begin{equation}
\label{eq:calStwophases}
    \mathcal{S}_{\beta=\beta_c,a} (\gamma \le \gamma_c) =0 , \qquad \mathcal{S}_{\beta=\beta_c,a} (\gamma > \gamma_c) <0.
\end{equation}
We thus find a positive saddle point value which is compatible with $\gamma > \gamma_c$ at all temperature above $\beta_c ^{-1}$. This is thus the saddle point $\gamma_{\ast}$ that we are looking for, and we set $\gamma_{\ast}= \gamma_{\text{\rm peak}}$.\par

We also plot $\mathcal{S}_{\beta,a} (\gamma)$ as a function of $\gamma$ at different values of $\beta$ in Figure \ref{fig:plotSbetagamma}, and a zoom-in close to the transition point is in shown Figure \ref{fig:plotSbetaZoom}. The behavior shown is agreement with the analytic calculation: the maximal contribution to $\mathcal{S}_{\beta,a} $ is from $\gamma_{\ast}=0$ if $\beta > \beta_c$ (low temperature phase), whilst $\mathcal{S}_{\beta,a} $ has a global maximum at which is is strictly positive if $\beta < \beta_c$ (low temperature phase).\par

\begin{figure}[tb]
	\centering
	\includegraphics[width=0.32\textwidth]{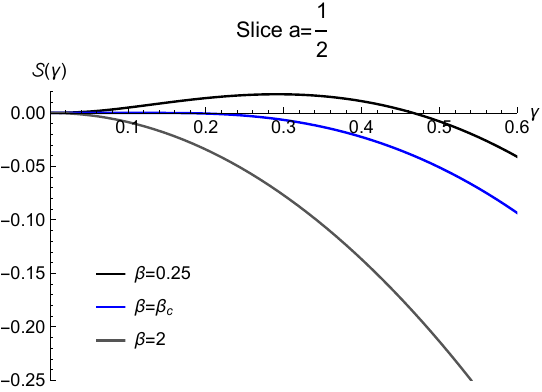}%
    \hspace{0.01\textwidth}%
    \includegraphics[width=0.32\textwidth]{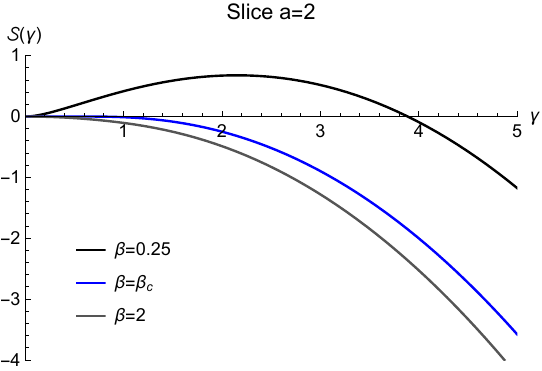}
    \hspace{0.01\textwidth}%
    \includegraphics[width=0.32\textwidth]{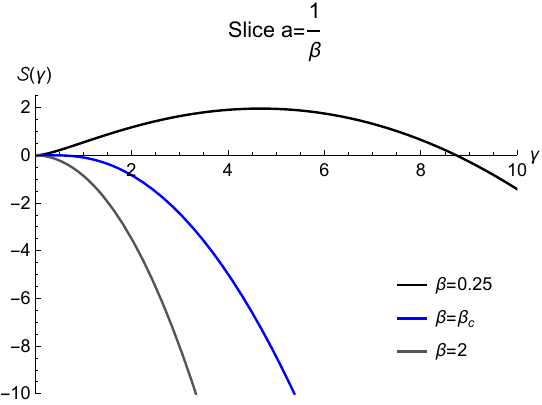}
	\caption{Plot of $\mathcal{S}_{\beta,a} (\gamma)$ as a function of $\gamma$ for $\beta > \beta_c$ (gray), $\beta =\beta_c$ (blue) and $\beta < \beta_c$ (black). Left: Slice $a=1/2$. Center: Slice $a=2$. Right: Schur slice $a=\beta^{-1}$. At high temperature, $\mathcal{S}_{\beta,a}$ has a global maximum at $\gamma_{\ast}>0$ with $\mathcal{S}_{\beta,a} (\gamma_{\ast})>0$ (black curve). Decreasing the temperature until the critical value (blue curve), $\mathcal{S}_{\beta,a} (\gamma_{\ast})=0$. Below that value, $\mathcal{S}_{\beta,a}$ is non-positive definite and vanishes at $\gamma=0$, which is the new global maximum (gray curve).}
	\label{fig:plotSbetagamma}
\end{figure}\par

\begin{figure}[tb]
	\centering
	\includegraphics[width=0.32\textwidth]{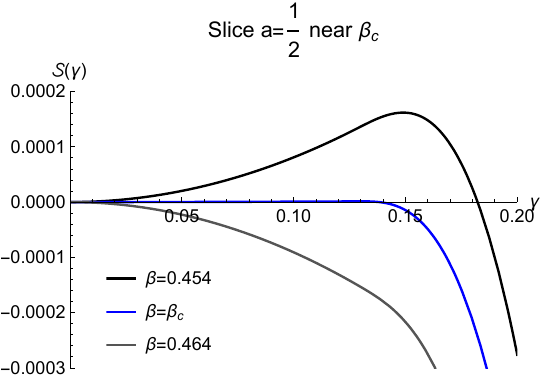}%
    \hspace{0.01\textwidth}%
    \includegraphics[width=0.32\textwidth]{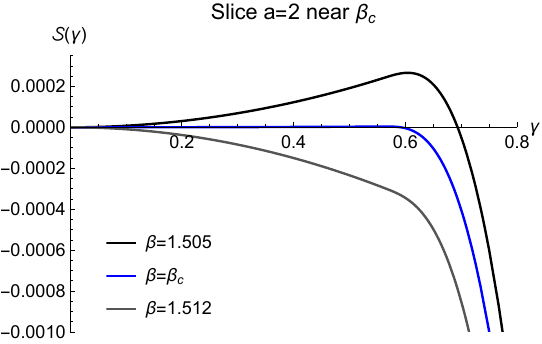}
    \hspace{0.01\textwidth}%
    \includegraphics[width=0.32\textwidth]{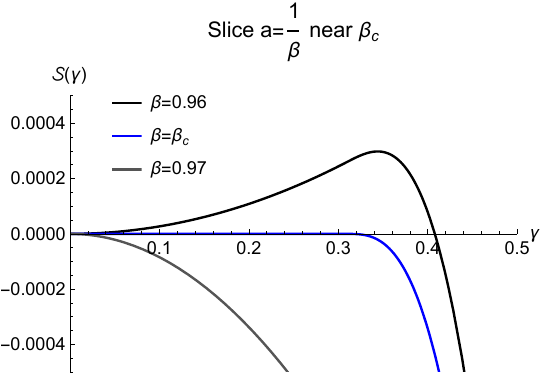}
    	\caption{Plot of $\mathcal{S}_{\beta,a} (\gamma)$ as a function of $\gamma$ for $\beta > \beta_c$ (gray), $\beta =\beta_c$ (blue) and $\beta < \beta_c$ (black). Left: Slice $a=1/2$. Center: Slice $a=2$. Right: Schur slice $a=\beta^{-1}$. These plots are analogous to Figure \ref{fig:plotSbetagamma}, but showing a narrow window around the critical temperature.}
	\label{fig:plotSbetaZoom}
\end{figure}\par
\medskip
To conclude the proof it suffices to notice that, when $\beta< \beta_c$,  $\gamma_{\ast} >0$ and the saddle point value of $\ln \mz_{\text{\rm Ex2}}$ grows with coefficient $N^2$, whilst at $\beta > \beta_c$ we have $\gamma_{\ast}=0$, the $O(N^2)$ growth is cancelled and one is left with the remnant $O(1)$ contributions. The exchange of dominance of saddle points at $\beta=\beta_c$ yields a first order transition at temperature 
\begin{equation}
    T_H = \beta_c ^{-1} .
\end{equation}
This is analogous to what happens in $\mathcal{N}=4$ super-Yang--Mills in four dimensions at finite temperature \cite{Liu:2004vy}.\par

\end{proof}

\subsection{Spectral density of \texorpdfstring{QCD$_2$}{QCD2} with sum over flavor symmetries}
\label{sec:rhoQCD2}

We now apply the technology of Subsection \ref{sec:specAvg} to QCD$_2$ with the sum over flavors. The computation is essentially the same as in the model of Subsection \ref{sec:IOPrho}, and the structure of the branch cuts of the Fourier transformed Wightman functions is very similar. Let us first consider the simpler case \emph{without} summing over $L$, and return to the sum over flavor below.\par

\subsubsection{Spectral density of \texorpdfstring{QCD$_2$}{QCD2}: no sum}
\begin{prop}
    Consider the quantum mechanical system with partition function given by the matrix model $\mz_{\text{\rm QCD}_2}^{(N)}$ in \eqref{eq:QCD2MM}. The associated spectral density has compact, continuous support on $\R$.
\end{prop}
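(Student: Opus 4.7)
The plan is to mimic the strategy used for the IOP and cIOP models in Subsection~\ref{sec:IOPrho}, invoking the general machinery of Section~\ref{sec:spectral}. The essential ingredients are already in place: in Subsection~\ref{sec:noFAQCD2FF} we identified $\mz_{\text{\rm QCD}_2}^{(N)}$ with the discrete matrix ensemble \eqref{eq:fermMMequalsQCD2MM}, and we found an explicit saddle-point eigenvalue density $\varrho_{\ast}(x)$ of the form \eqref{eq:DKansatz} below the third order transition at $\gamma_c=(1-\sqrt{y})/(2\sqrt{y})$, and of the doubly-capped form \eqref{eq:CPansatz}--\eqref{eq:CPrhohat} above it. In both phases $\supp\varrho_{\ast}$ is a finite union of intervals contained in $[0,1+\gamma^{-1}]$ and $\varrho_{\ast}$ is continuous (in fact analytic) on the interior of its support, so the hypotheses of Corollary~\ref{c:rholargeN} are satisfied for every $\beta>0$ and every $\gamma>0$.

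The first step is therefore to invoke Lemma~\ref{lemma:OmegalargeN} to write the leading large $N$ behaviour of $ {}_L\langle \Psi_{\beta}\lvert \Omega(\omega)\rvert \Psi_{\beta}\rangle_L $ as
\begin{equation*}
    \frac{1}{\lambda}\left[-1+\exp\!\left(\int\dd x\,\frac{\varrho_{\ast}(x)}{\frac{\omega-\mu}{\lambda}-x}\right)\right],
\end{equation*}
and then to feed this into the discontinuity formula of Corollary~\ref{cor:jumprhoOmega}. The second step is to analyse the branch cut structure of the right-hand side. In the sub-critical phase $\gamma<\gamma_c$ the integrand has exactly the same form as in Subsection~\ref{sec:IOPrho} (the only change is the value of the endpoints $x_{\pm}$), so the computation of Theorem~\ref{thmIOPspec} goes through verbatim and produces a $\rho(\omega)$ supported on $\pm(\lambda[x_{-},x_{+}]+\mu)$. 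In the super-critical phase $\gamma>\gamma_c$ one uses \eqref{eq:CPrhohat}: integrating the flat pieces $\varrho_{\ast}=1$ on $[0,x_{-}]$ and $[x_{+},1+\gamma^{-1}]$ produces smooth prefactors that do not contribute to the discontinuity, and the non-trivial branch cuts come from integrating $\hat{\varrho}_{\ast}$ on $[x_{-},x_{+}]$, giving $\supp\rho=\pm(\lambda[x_{-},x_{+}]+\mu)$ with $x_{\pm}$ as in \eqref{eq:xpmCP}.

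Thus in both phases $\supp\rho$ is the union of two bounded intervals in $\R$, symmetric about the origin and separated from it by the probe mass $\mu$; continuity of $\varrho_{\ast}$ on its support translates directly to continuity of $\rho$ on its support, by the same branch-cut argument organised around \eqref{eq:DeltaOmegaIOPfprime}. The main technical obstacle, as in the cIOP case, is that the doubly-capped density \eqref{eq:CPrhohat} leads to slightly more involved integrals than in IOP; however, only the location of the branch points of the resulting logarithms and square roots matters for the support of $\rho$, and these are controlled explicitly by $x_{\pm}$ in \eqref{eq:xpmCP}. Compactness of the support is then immediate from the compactness of $\supp\varrho_{\ast}\subseteq[0,1+\gamma^{-1}]$, which in turn follows from the two hard walls of the discrete matrix model.
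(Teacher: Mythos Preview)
Your proposal is correct and follows essentially the same approach as the paper's proof: split into the two phases, observe that for $\gamma<\gamma_c$ the eigenvalue density coincides with IOP so Theorem~\ref{thmIOPspec} applies directly, and for $\gamma>\gamma_c$ use the doubly-capped density \eqref{eq:CPansatz}, noting that the flat pieces exponentiate to rational prefactors that carry no branch cut, while the non-constant part $\hat{\varrho}_\ast$ on $[x_-,x_+]$ produces the cut and hence $\supp\rho=[-x_+-\mu,-x_--\mu]\cup[x_-+\mu,x_++\mu]$ with $x_\pm$ as in \eqref{eq:xpmCP}. One small slip: in the sub-critical phase the eigenvalue density is not merely of the same form as IOP with different endpoints, it is \emph{identical} to the IOP density (the second hard wall is inactive), so the endpoints are literally $x_\pm=[\coth(\beta/4)]^{\pm 1}$ there.
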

\begin{proof}
On both sides of the third order phase transition we have an explicit formula for $\varrho_{\ast}$ and its support. In the phase in which the constraint is not active, the model behaves exactly as cIOP and IOP. In particular has the same eigenvalue density and, by Lemma \ref{lemma:OmegalargeN}, has the same Fourier transformed Wightman functions $\widetilde{G}_{L,\mathrm{R}} (\omega), \widetilde{G}_{L,\mathrm{R}} (\omega)$. It follows immediately that the spectral density $\rho (\omega)$ is the same as cIOP in that phase.\par
Beyond the third order transition, i.e. $\gamma > \gamma_c$, $\varrho_{\ast}$ is supported on $[0, 1+\gamma^{-1}]$ and is non-constant on $[x_{-}, x_{+}]$, as given in \eqref{eq:xpmCP}. It is clear that the open interval $x_{-}<x< x_{+}$ is non-empty if $\gamma_c <\gamma< \infty$. Applying again Lemma \ref{lemma:OmegalargeN}, we obtain the branch cuts of $\widetilde{G}_{L,\mathrm{R}} (\omega), \widetilde{G}_{L,\mathrm{R}} (\omega)$ along $[x_{-}+\mu , x_{+}+\mu ]$ if $\omega >\mu$, and along $[-x_{+}-\mu, -x_{-}-\mu]$ if $\omega <-\mu$. In contrast, the regions $0<x<x_{-}$ and $x_{+} < x < 1+\gamma^{-1}$ where $\varrho_{\ast}$ is constant contribute to the smooth part of $\widetilde{G}_{L,\mathrm{R}} (\omega), \widetilde{G}_{L,\mathrm{R}} (\omega)$. Using the relation \eqref{eq:rhodiscG} between the spectral density $\rho (\omega) $ and the Fourier transformed Wightman functions, we conclude 
\begin{equation}
    \supp \rho = [-x_{+}-\mu, -x_{-}-\mu] \cup [x_{-}+\mu , x_{+}+\mu ] ,
\end{equation}
explicitly known as a function of $y=e^{-\beta} $ and of the Veneziano parameter $\gamma$, cf. \eqref{eq:xpmCP}.
\end{proof}\par

\subsubsection{Spectral density of \texorpdfstring{QCD$_2$}{QCD2}: sum over flavor symmetries}
Let us now reintroduce the sum over flavors. The argument in the high temperature phase is analogous to the one just shown. We again use 
\begin{equation}
    \rho (\omega) = \lim_{\varepsilon \to 0^{+} } \left[ \widetilde{G}_{\mathrm{R}} (\omega + i \varepsilon ) - \widetilde{G}_{\mathrm{A}} (\omega -  i \varepsilon) \right]
\end{equation}
and the fact that, for $\beta^{-1} > T_H$, $\widetilde{G}_{\mathrm{R}} (\omega), \widetilde{G}_{\mathrm{A}} (\omega)$ have branch cuts along $\pm \supp \varrho_{\ast}$, which is the saddle point eigenvalue density obtained in \eqref{eq:CPansatz} and further evaluated at $\gamma=\gamma_{\ast}$. Lacking an explicit formula for $\gamma_{\ast}$ as a function of $\beta$, we are unable to provide $\rho (\omega)$ in closed form. However, computing $\gamma_{\ast}$ at a given temperature, e.g. numerically, one simply needs to plug that number in \eqref{eq:xpmCP}.\par
The knowledge of the branch cuts in the high temperature phase, as well as the knowledge that the eigenvalues do not condensate in a continuum spectrum in the low temperature phase (both facts shown in Theorem \ref{thm:Amodel}) are sufficient to argue for the structure of $\supp \rho$.
\begin{thm}
    The spectral density $\rho(\omega)$ associated to the matrix model $\mz_{\text{\rm Ex2}}$ has compact, continuous support at $1/\beta >T_H$.
\end{thm}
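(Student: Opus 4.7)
The plan is to mirror the strategy used in Subsection \ref{sec:speccIOP} for the cIOP case, exploiting the fact that the two models have essentially the same structure of saddle-point eigenvalue densities once $\gamma_{\ast}$ is determined. First, I would invoke Theorem \ref{thm:Amodel}, which guarantees that in the high temperature phase $\beta^{-1}>T_H$ the free energy grows as $\ln \mz_{\text{\rm Ex2}}=O(N^2)$, driven by a non-trivial saddle point $\gamma_{\ast}>\gamma_c$. This is precisely the situation in which the saddle-point argument of Subsection \ref{sec:specAvg} applies, so that the Wightman functions associated to the probe in the extended ensemble are well approximated by those of $\mz_{\text{\rm QCD}_2}^{(N)}$ at fixed Veneziano parameter $\gamma=\gamma_{\ast}$.

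Second, I would use Lemma \ref{lemma:OmegalargeN} to express ${}_L\langle \Psi_{\beta}\vert \Omega(\omega)\vert \Psi_{\beta}\rangle_L$ as an integral against the eigenvalue density $\varrho_{\ast}$ of the QCD$_2$ matrix model, now evaluated at $\gamma=\gamma_{\ast}>\gamma_c$. In this phase $\varrho_{\ast}$ takes the doubly-capped form \eqref{eq:CPansatz}, with the non-constant part $\hat{\varrho}_{\ast}$ in \eqref{eq:CPrhohat} supported on $[x_-,x_+]$ with $x_{\pm}$ given by \eqref{eq:xpmCP} (evaluated at $\gamma_{\ast}$). The key structural point, which does not require a closed-form expression for $\gamma_{\ast}(\beta)$, is that for every finite $\gamma_c<\gamma_{\ast}<\infty$ the open interval $(x_-,x_+)$ is non-empty.

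Third, I would read off the analytic structure of $\widetilde{G}_{\mathrm R}(\omega)$ and $\widetilde{G}_{\mathrm A}(\omega)$: the constant pieces of $\varrho_{\ast}$ on $[0,x_-]$ and on $[x_+,1+\gamma_{\ast}^{-1}]$ contribute only smooth factors (analogous to the smooth prefactor in \eqref{eq:GtildeIOPexplicit}), while $\hat{\varrho}_{\ast}$ produces logarithmic branch cuts along $[x_-+\mu,x_++\mu]$ and, by the $\omega\mapsto -\omega$ symmetry of Corollary \ref{cor:jumprhoOmega}, along $[-x_+-\mu,-x_--\mu]$. Applying the discontinuity equation \eqref{eq:rhodiscG} then yields
\begin{equation*}
\supp \rho \;=\; [-x_+-\mu,\,-x_--\mu]\;\cup\;[x_-+\mu,\,x_++\mu],
\end{equation*}
which is compact and continuous in $\R$, proving the claim. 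Combined with Theorem \ref{thm:Derezinski} and Lemma \ref{lem:factor} (whose hypotheses are satisfied at the saddle, as argued in Subsection \ref{sec:specAvg}), this further implies that the associated large $N$ von Neumann algebra is of type III$_1$ above $T_H$, consistently with Theorem \ref{thm:ItoIIIRigor}.

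The only mild obstacle is that $\gamma_{\ast}$ is not known in closed form as a function of $\beta$ and $a$, so the endpoints $x_{\pm}$ appearing in $\supp \rho$ are only implicit. This is however immaterial for the theorem as stated, which concerns only compactness and continuity. One should however be careful to verify that $\gamma_{\ast}$ indeed lies strictly above $\gamma_c$ throughout $\beta^{-1}>T_H$ (so that the doubly-capped branch of $\varrho_{\ast}$ is the relevant one), which follows from the analysis in the proof of Theorem \ref{thm:Amodel}, in particular from the observation that at the critical temperature $\gamma_{\text{peak}}\rvert_{\beta=\beta_c}=\gamma_c\rvert_{\beta=\beta_c}$ and $\gamma_{\text{peak}}>\gamma_c$ for all $\beta<\beta_c$.
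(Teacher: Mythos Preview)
Your proposal is correct and follows essentially the same approach as the paper's proof: invoke Theorem \ref{thm:Amodel} to locate the saddle at $\gamma_{\ast}>\gamma_c$, apply Lemma \ref{lemma:OmegalargeN} with the doubly-capped eigenvalue density \eqref{eq:CPansatz}-\eqref{eq:xpmCP} evaluated at $\gamma_{\ast}$, and read off $\supp\rho$ from the branch cuts via the discontinuity equation \eqref{eq:rhodiscG}. The paper's own argument is actually terser than yours---it simply observes that the high-temperature argument is analogous to the fixed-$L$ case with $\varrho_{\ast}$ evaluated at $\gamma_{\ast}$, and notes the same caveat about $\gamma_{\ast}$ not being known in closed form.
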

Combining this statement with the general recipe of Section \ref{sec:vNtot}, we conclude that the QCD$_2$ model summed over the number of flavors with Gaussian weight carries a type III$_1$ factor above $T_H$, and trivially, a type I factor below $T_H$.
\begin{equation*}
\begin{tabular}{c|c|c}
\hspace{8pt} \textsc{Temperature} \hspace{8pt} & \hspace{8pt} $ \ln \mz_{\text{\rm Ex1}} $ \hspace{8pt} & \hspace{8pt} \textsc{Algebra type} \hspace{8pt} \\
\hline 
$\beta^{-1}<T_H$ & $O(1)$ & I\ \\
$\beta^{-1}>T_H$ & $O(N^2)$ & III$_1$ \\
\hline
\end{tabular}
\end{equation*}

\section{Example 3: Conifold Donaldson--Thomas partition function}
\label{sec:conifold}

The next example we consider does not fully fit in our general discussion because it neither undergoes a phase transition, nor has a flavor symmetry indexed by $L$ which we can sum over. It nevertheless fits in the prescription to derive a quantum mechanical system from a matrix model. It turns out that the resulting quantum system has probe correlation functions characterized by a type III$_1$ von Neumann algebra at every temperature in the large $N$ limit.\par
\medskip
Consider the family of matrix models 
\begin{equation}
    \mz^{(N)} (Q, Y) = \oint_{SU(N+1)} [\dd U] ~ \exp \left\{ \tr \ln \left( 1 + Y \otimes U \right) -  \tr \ln \left( 1 + Q Y \otimes U^{\dagger} \right)  \right\} ,
\end{equation}
where  $Q \in \C $ is a scalar parameter, $Y$ is a square matrix, possibly of infinite size, with real eigenvalues $(y_1, y_2, \dots )$, and the trace is over both $U$- and $Y$-indices (i.e. $\tr $ in this formula means the trace in the tensor product).\par
Setting 
\begin{equation}
    Q=1, \qquad y_i = \begin{cases} e^{- \beta/2} & 1 \le i \le L \\ 0 & i>L \end{cases}
\end{equation}
we get a hybrid model between the two previous examples. It can be studied by the methods above and we omit this analysis.\footnote{The phase diagram of matrix models interpolating between the three cases was obtained in \cite{Santilli:2021eon}.}\par 
We consider instead the specialization 
\begin{equation}
\label{eq:Ex3conifoldY}
    0<\lvert Q \rvert<1 , \qquad y_i = q^{i - \frac{1}{2} } \ \forall i \ge 1 
\end{equation}
for a fugacity $q$ with $\lvert q \rvert <1$. The constraint $\lvert Q \rvert \ne 1$ is to avoid singularities and we restrict for concreteness to the interior of the punctured unit disk without substantial loss of generality. We then define 
\begin{equation}
    \mz^{(N)} _{\text{\rm conifold}} (-Q, q)  := \left. \mz^{(N)} (Q, Y) \right\rvert_{(Q, Y)\text{ as in \eqref{eq:Ex3conifoldY}}} .
\end{equation}
The minus sign convention in the argument $-Q$ on the left-hand side is just to reduce clutter in later expressions. The matrix model with these parameters reads 
\begin{equation}
\label{eq:conifoldUMM}
    \mz^{(N)} _{\text{\rm conifold}} (-Q, q) =  \oint_{SU(N+1)} [\dd U] ~\prod_{j =1}^{\infty} \frac{ \det \left( 1 + q^{j- \frac{1}{2} } U \right) }{ \det \left( 1 +Q  q^{j- \frac{1}{2} } U^{\dagger} \right) }  ,
\end{equation}
with the determinant taken over $SU(N+1)$. Up to an overall factor given by $\mathcal{M} (-q)^2$, where $\mathcal{M}$ is the MacMahon function
\begin{equation}
\label{eq:Macmahon}
    \mathcal{M} (-q) := \prod_{j=1}^{\infty} \left( 1-(-q)^{j} \right)^{-j} ,
\end{equation}
the matrix model \eqref{eq:conifoldUMM} is the generating function of Donaldson--Thomas (DT) invariants of the resolved conifold \cite{Ooguri:2010yk,Szabo:2010sd} (here we follow \cite{Szabo:2010sd}). More precisely, \eqref{eq:conifoldUMM} is a truncation of the full generating function of DT invariants, which is recovered at $N \to \infty$. This is the limit we are interested in. For definition and generalities on DT invariants and their generating functions we refer to \cite{Maulik:2003rzb}.\par 
The parameter $Q \in \C$ is such that $-\ln Q $ is the (complexified) K\"ahler parameter of the exceptional rational curve in the resolution of the conifold, and $q=e^{-g_{\text{\tiny string}}}$. Therefore:
\begin{itemize}
    \item The point $Q \to 0$ is usually referred to as large volume point, and it is possible to extract DT invariants from a Taylor expansion of \eqref{eq:conifoldUMM} around this point;
    \item The singularity of the matrix model at $Q =1=q$ signals the singularity at the conifold point of the K\"ahler moduli space, where the rational curve is blown down.
\end{itemize}\par
Explicitly, let $\mz^{\text{\rm DT}} _{\text{\rm CY3}} (Q, q)$ denote the generating function of DT invariants of any Calabi--Yau threefold. The contribution from degree 0 sub-schemes (D0-branes in the string theory parlance) of a Calabi--Yau with topological Euler characteristic $\chi (\text{CY3})$ is \cite{Maulik:2003rzb,Li:2006}
\begin{equation}
    \left. \mz^{\text{\rm DT}} _{\text{\rm CY3}} (Q, q) \right\rvert_{\text{degree 0}} = \mathcal{M} (-q)^{\chi (\text{CY3})} ,
\end{equation}
independent of $Q$. A reduced generating function of DT invariants was introduced and studied by Maulik--Nekrasov--Okounkov--Pandharipande \cite{Maulik:2003rzb}.
\begin{defin}
    The \emph{reduced} generating function of DT invariants of a Calabi--Yau threefold is $\mz^{\text{\rm DT}} _{\text{\rm CY3}}$ with the generating function of degree 0 sub-schemes stripped off, 
    \begin{equation}
        \mz^{\text{\rm DT}} _{\text{\rm CY3}} (Q, q) / \left. \mz^{\text{\rm DT}} _{\text{\rm CY3}} (Q, q) \right\rvert_{\text{degree 0}} .
    \end{equation}
\end{defin}
Moreover we have \cite{Szabo:2010sd}
\begin{equation}
    \mz^{\text{\rm DT}} _{\text{\rm conifold}} (Q, q) = \mathcal{M} (-q)^2 \lim_{N \to \infty}  \mz^{(N)} _{\text{\rm conifold}} (-Q, q) .
\end{equation}
Comparing the expressions and using $\chi (\text{conifold})=2$, we conclude that 
\begin{lem}[\cite{Ooguri:2010yk,Szabo:2010sd}]
    $\mz^{(N)} _{\text{\rm conifold}} (-Q, q)$ is a truncation of the reduced generating function of DT invariants of the resolved conifold.
\end{lem}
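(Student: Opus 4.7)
The plan is to assemble the three formulas already recalled in the excerpt into a single identification, without any new calculation. First, I would invoke the identity
\begin{equation*}
\mz^{\text{DT}}_{\text{conifold}}(Q,q) = \mathcal{M}(-q)^{2}\,\lim_{N\to\infty}\mz^{(N)}_{\text{conifold}}(-Q,q) ,
\end{equation*}
quoted from \cite{Ooguri:2010yk,Szabo:2010sd}, which expresses the full DT generating function of the resolved conifold as the $N\to\infty$ limit of the unitary matrix integral \eqref{eq:conifoldUMM} dressed by two copies of the MacMahon function \eqref{eq:Macmahon}. Next, I would specialise the Maulik--Nekrasov--Okounkov--Pandharipande evaluation of the degree-zero locus, which reads $\mz^{\text{DT}}_{\text{CY3}}|_{\text{degree }0}=\mathcal{M}(-q)^{\chi(\text{CY3})}$, to the resolved conifold; its topological Euler characteristic is $\chi(\text{conifold})=2$, and so the degree-zero contribution is precisely $\mathcal{M}(-q)^{2}$.

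Dividing the first relation by the second and using the definition of the reduced DT generating function yields
\begin{equation*}
\frac{\mz^{\text{DT}}_{\text{conifold}}(Q,q)}{\mz^{\text{DT}}_{\text{conifold}}(Q,q)|_{\text{degree }0}} \;=\; \lim_{N\to\infty}\mz^{(N)}_{\text{conifold}}(-Q,q) .
\end{equation*}
Hence the finite-$N$ matrix integral $\mz^{(N)}_{\text{conifold}}(-Q,q)$ coincides with the reduced DT generating function \emph{up to} the restriction to a finite gauge rank $N+1$, which is the truncation claim we want. To promote this into a fully combinatorial statement, I would then apply the character expansion of Corollary \ref{corol:CharExp} (or equivalently the Cauchy-type identity already exploited in Sections \ref{sec:variationsIOP}--\ref{sec:ExQCD2}) to rewrite $\mz^{(N)}_{\text{conifold}}$ as a weighted sum over Young diagrams $R$ with the explicit length restriction $\ell(R)\le N$. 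Sending $N\to\infty$ lifts the length restriction and recovers the unrestricted Young-diagram sum that gives the reduced DT series, which makes the word "truncation" sharp: it is the restriction of the Young-diagram sum to partitions of bounded length $\ell(R)\le N$.

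The only delicate point I anticipate is not conceptual but technical: one has to justify that the limit $N\to\infty$ commutes with the Taylor expansion of $\mz^{(N)}_{\text{conifold}}$ in the K\"ahler parameter $Q$ around the large-volume point $Q=0$, used implicitly when one speaks of matching generating series coefficient by coefficient. This is routine, because the integrand depends on $Q$ only through the factors $Qq^{j-\frac{1}{2}}$ in the denominator of \eqref{eq:conifoldUMM}; at any fixed order $Q^{k}$ only Young diagrams with $|R|\le k$ contribute, and these are stable for $N\ge k$ since the length constraint $\ell(R)\le N$ is then inactive. Thus each coefficient of the $Q$-expansion stabilises at finite $N$, and the interchange is legitimate, completing the identification of $\mz^{(N)}_{\text{conifold}}(-Q,q)$ as a length-bounded truncation of the reduced DT generating function of the resolved conifold.
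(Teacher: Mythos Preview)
Your proposal is correct and follows essentially the same route as the paper: the paper simply says ``comparing the expressions and using $\chi(\text{conifold})=2$, we conclude that'' the lemma holds, which is exactly the division of the cited identity $\mz^{\text{DT}}_{\text{conifold}}=\mathcal{M}(-q)^{2}\lim_{N\to\infty}\mz^{(N)}_{\text{conifold}}$ by the degree-zero part $\mathcal{M}(-q)^{\chi}=\mathcal{M}(-q)^{2}$ that you spell out. Your additional remarks on the character expansion sharpening the meaning of ``truncation'' via the bound $\ell(R)\le N$, and on the coefficient-wise stabilisation in $Q$, go beyond what the paper writes here (the character expansion is only proved in the \emph{next} lemma), but they are correct and harmless embellishments rather than a different strategy.
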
\par
\medskip
To derive a quantum mechanical interpretation along the lines of Part \ref{part1}, we write down the character expansion of \eqref{eq:conifoldUMM} and set $-Q=e^{-\beta}$. That is, the K\"ahler parameter in the resolution of the conifold and the inverse temperature are related through
\begin{equation*}
    \text{inverse temperature $\beta$ } \ \longleftrightarrow \ \text{ K\"ahler parameter $\beta - i \pi $}.
\end{equation*}
\begin{lem}
    With the notation above, the equality 
    \begin{equation}
    \label{eq:Zconifold}
    \mz^{(N)} _{\text{\rm conifold}} (e^{-\beta}, q) = \sum_{R \ : \ \ell (R) \le N } \left( \frac{e^{- \beta}}{q} \right)^{ \lvert R \rvert } ~ \chi_R \left( \mathrm{diag} (q,q^2, q^3, \dots ) \right) \chi_{R^{\top}} \left( \mathrm{diag} (q,q^2, q^3, \dots ) \right)  
    \end{equation}
    holds, where $R^{\top}$ denotes the transpose partition to $R$.
\end{lem}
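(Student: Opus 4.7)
The plan is to derive \eqref{eq:Zconifold} by applying the two Cauchy identities to the infinite products in \eqref{eq:conifoldUMM} and then invoking orthogonality of $SU(N+1)$ characters, following the same pattern that produced the character expansions in Lemmas \ref{lemma:cIOPBQCD2} and \ref{lemma:OmegalargeN}'s precursor in Section \ref{sec:ExQCD2}. The first step is to rewrite the numerator, an infinite product of $\det(1+q^{j-1/2}U)$, via the dual Cauchy identity (treating $\{q^{j-1/2}\}_{j\ge 1}$ as one set of variables and the eigenvalues $\{z_a\}_{a=1}^{N+1}$ of $U$ as the other), obtaining
\begin{equation*}
\prod_{j=1}^{\infty} \det\!\left(1+q^{j-1/2}U\right) \;=\; \sum_{R} s_R\!\left(q^{1/2},q^{3/2},\dots\right)\,\chi_{R^{\top}}(U).
\end{equation*}
The denominator is dual: each factor $\det(1+Qq^{j-1/2}U^{\dagger})^{-1}$ pairs $\{-Qq^{j-1/2}\}$ with the inverse eigenvalues of $U^{\dagger}$, and the standard Cauchy identity gives
\begin{equation*}
\prod_{j=1}^{\infty}\det\!\left(1+Qq^{j-1/2}U^{\dagger}\right)^{-1} \;=\; \sum_{R'} (-Q)^{|R'|} s_{R'}\!\left(q^{1/2},q^{3/2},\dots\right)\chi_{R'}(U^{\dagger}).
\end{equation*}

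Next I will use the homogeneity of Schur functions, $s_R(\alpha x_1,\alpha x_2,\dots) = \alpha^{|R|} s_R(x_1,x_2,\dots)$, in the form
\begin{equation*}
s_R\!\left(q^{1/2},q^{3/2},\dots\right) \;=\; q^{-|R|/2}\,\chi_R\!\left(\mathrm{diag}(q,q^2,q^3,\dots)\right),
\end{equation*}
which converts the arguments of the Schur functions into the form appearing in \eqref{eq:Zconifold}. Inserting both expansions into the integral, swapping sum and integration, and applying the orthogonality
\begin{equation*}
\oint_{SU(N+1)}[\dd U]\,\chi_{R^{\top}}(U)\,\chi_{R'}(U^{\dagger}) = \delta_{R^{\top},R'}
\end{equation*}
collapses the double sum to a single one with the identification $R^{\top}=R'$. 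Reparametrising by $S:=R'$ (so that the partition conjugate $R=S^{\top}$ satisfies $|S^{\top}|=|S|$), substituting $-Q=e^{-\beta}$, and combining the two factors of $q^{-|S|/2}$, yields exactly the right-hand side of \eqref{eq:Zconifold}.

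The delicate point in this strategy, and the one where care is required, is the translation between $U(N+1)$ Schur functions in $N+1$ variables and genuine $SU(N+1)$ characters: $\chi_R(U)$ on $SU(N+1)$ identifies representations that differ by the determinant representation, so a partition with $\ell(R^{\top})=N+1$ must be shifted by a column of $N+1$ boxes to a partition of length $\le N$ before the orthogonality applies. I will handle this by organising the Cauchy and dual Cauchy expansions so that after the identification $R^{\top}=R'$ and after passing to $S$, the surviving partitions are precisely those with $\ell(S)\le N$, which is exactly the constraint stated in \eqref{eq:Zconifold}. Beyond this bookkeeping the rest is a direct computation, and the absolute convergence of the Cauchy expansions in the parameter range $|q|<1$ and $|Q|<1$ justifies the interchange of sum and integral.
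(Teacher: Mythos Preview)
Your proposal is correct and follows essentially the same route as the paper: expand the numerator via the dual Cauchy identity and the denominator via the ordinary Cauchy identity, pull out the homogeneity factor $q^{-|R|/2}$ to shift the Schur arguments from $(q^{1/2},q^{3/2},\dots)$ to $(q,q^2,\dots)$, integrate using orthogonality of $SU(N+1)$ characters to enforce $R^{\top}=R'$, and finally relabel by the transposed partition and set $-Q=e^{-\beta}$. The only cosmetic difference is that the paper reads the length constraint $\ell(R)\le N$ directly off the Cauchy identity on the $U^{\dagger}$ side rather than framing it as a determinant-representation bookkeeping issue, but the content is the same.
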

\begin{proof}
    Applying the Cauchy identity \eqref{eq:Cauchyid} to the denominator of \eqref{eq:conifoldUMM} gives:
    \begin{align}
        \left[ \prod_{j=1}^{\infty}\det \left( 1 +Q  q^{j- \frac{1}{2} } U^{\dagger} \right) \right]^{-1} &= \sum_{\tilde{R} \ : \ \ell (\tilde{R}) \le N } \chi_{\tilde{R}} \left( \mathrm{diag} (-Q q^{\frac{1}{2}},-Q q^{\frac{3}{2}}, -Q q^{\frac{5}{2}}, \dots ) \right) \chi_{\tilde{R}} (U^{\dagger}) \notag \\
        &=  \sum_{\tilde{R} \ : \ \ell (\tilde{R}) \le N } \left( -Q q^{-\frac{1}{2}} \right)^{\lvert \tilde{R} \rvert} \chi_{\tilde{R}} \left( \mathrm{diag} (q, q^2, q^3, \dots ) \right) \chi_{\tilde{R}} (U^{\dagger}) .
    \end{align}
    Applying the dual Cauchy identity \eqref{eq:Cauchydual} to the numerator of \eqref{eq:conifoldUMM} gives:
    \begin{align}
        \prod_{j =1}^{\infty} \det \left( 1 + q^{j- \frac{1}{2} } U \right) &= \sum_{R \ : \ \ell (R^{\top}) \le N } \chi_R \left( \mathrm{diag} (q^{\frac{1}{2}},q^{\frac{3}{2}}, q^{\frac{5}{2}}, \dots ) \right) \chi_{R^{\top}} (U) \notag \\
        &= \sum_{R \ : \ R_1 \le N } \left( q^{- \frac{1}{2}} \right)^{\lvert R \rvert }  \chi_R \left( \mathrm{diag} (q,q^2, q^3, \dots ) \right) \chi_{R^{\top}} (U) .
    \end{align}
    Combining the two expressions, the integration over $SU(N+1)$ yields $\delta_{R^{\top}, \tilde{R}}$. We thus obtain the character expansion 
\begin{equation}
    \mz^{(N)} _{\text{\rm conifold}} (-Q, q) = \sum_{R \ : \ R_1 \le N } \left( - \frac{Q}{q} \right)^{ \lvert R \rvert } ~ \chi_R \left( \mathrm{diag} (q,q^2, q^3, \dots ) \right) \chi_{R^{\top}} \left( \mathrm{diag} (q,q^2, q^3, \dots ) \right)  .
\end{equation}
We relabel $R^{\prime} = R^{\top}$ (and drop the prime) to rewrite the sum over representations restricted to those of length at most $N$. Finally we set $-Q=e^{-\beta}$.
\end{proof}

The model \eqref{eq:Zconifold} belongs to the family of $q$-ensembles and it can be shown that it possesses a single phase in the large $N$ limit. We do not study it explicitly here, but discuss a related, albeit simpler, model in Appendix \ref{app:qCSMM}. The large $N$ limit of $\mz^{(N)} _{\text{\rm conifold}}$ can be analyzed along the same lines.\par
Here, instead, we are interested in reading off a quantum mechanical system from the character expansion \eqref{eq:Zconifold}. The presence of the fugacity $q$ prevents us from interpreting the terms $\chi_R$ as accounting for degeneracy. To overcome this difficulty, we work in the limit $q \to 1$, which in the string theory picture means $g_{\text{\tiny string}} \to 0^{+}$.\par
A few technical comments:
\begin{itemize}
    \item In this simplified regime the MacMahon function is singular, but the \emph{reduced} generating function is well-defined. 
    \item It is an intriguing fact that, despite our character expansion and ensuing construction need not know anything about a string theory origin of the matrix models, to have a quantum mechanical interpretation requires us to send $g_{\text{\tiny string}} \to 0^{+}$.
    \item Being interested in the quantum mechanical interpretation, we stick to the limit $q \to 1 $, which unrefines the DT partition function. This should not be confused with the ``decategorification'' limit $q \to -1$ sometimes considered in enumerative geometry.
\end{itemize}\par
We therefore consider 
\begin{equation}
\label{eq:ZEx3def}
    \mz_{\text{\rm Ex3}} (e^{-\beta}) := \lim_{q \to 1} \mz^{(N)} _{\text{\rm conifold}} (e^{-\beta}, q) = \sum_{R \ : \ \ell (R) \le N } e^{- \beta \lvert R \rvert } ~ \dim (R) \dim (R^{\top}) .
\end{equation}
Let us summarize the enumerative meaning of $\mz_{\text{\rm Ex3}}$.
\begin{center}
\noindent\fbox{%
\parbox{0.98\linewidth}{%
\begin{prop}
    Let $\mz_{\text{\rm Ex3}}$ be as in \eqref{eq:ZEx3def}. The large $N$ limit $\lim_{N \to \infty} \mz_{\text{\rm Ex3}} (e^{- \beta})$ yields the reduced, unrefined generating function of Donaldson--Thomas invariants of the resolved conifold, as a function of the K\"ahler parameter $\beta - i \pi$.
\end{prop}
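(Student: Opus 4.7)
The plan is to combine the Ooguri--Yamazaki and Szabo--Tierz lemma recalled above with an interchange of the two limits $N \to \infty$ and $q \to 1$. I would begin from the rewriting of the cited Lemma as
$$\lim_{N \to \infty} \mz^{(N)} _{\text{\rm conifold}}(-Q, q) \;=\; \mz^{\text{\rm DT}} _{\text{\rm conifold}}(Q,q) \,/\, \mathcal{M}(-q)^2,$$
which is the reduced DT generating function. Setting $-Q = e^{-\beta}$ identifies the K\"ahler parameter $-\ln Q$ with $\beta - i \pi$. Passing to the unrefined limit $q\to 1$ on the right-hand side yields, by definition, the reduced unrefined DT generating function of the resolved conifold at K\"ahler parameter $\beta-i\pi$.

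By definition of $\mz_{\text{\rm Ex3}}$, on the other hand, $\lim_{N\to\infty}\mz_{\text{\rm Ex3}}(e^{-\beta}) = \lim_{N\to\infty}\lim_{q\to 1}\mz^{(N)} _{\text{\rm conifold}}(e^{-\beta},q)$. The whole claim is thus reduced to justifying the exchange
$$\lim_{N \to \infty} \lim_{q \to 1} \mz^{(N)} _{\text{\rm conifold}}(e^{-\beta},q) \;=\; \lim_{q \to 1} \lim_{N \to \infty} \mz^{(N)} _{\text{\rm conifold}}(e^{-\beta},q).$$
For this I would work term-by-term in the character expansion \eqref{eq:Zconifold}. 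At each fixed $N$, treating $\chi_R$ as the $SU(N+1)$ character, so that the principal specialization effectively involves $N+1$ variables, Stanley's hook-content formula yields
$$\lim_{q\to 1}\chi_R\!\left(\mathrm{diag}(q,q^2,\dots,q^{N+1})\right) \;=\; \prod_{\square \in R}\frac{N+1+c(\square)}{h(\square)} \;=\; \dim R,$$
and analogously for $\chi_{R^\top}$. Paired with the convergent prefactor $(e^{-\beta}/q)^{\lvert R \rvert}\to e^{-\beta \lvert R \rvert}$, this reproduces termwise the sum $\sum_R e^{-\beta \lvert R\rvert}\dim R\,\dim R^\top$ asserted in the definition of $\mz_{\text{\rm Ex3}}$.

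The main technical obstacle is upgrading this termwise convergence to the interchange of limits. For $\beta>0$, the exponential damping $e^{-\beta \lvert R \rvert}$ dominates the polynomial growth of $\dim R\,\dim R^\top$ uniformly in $N$ and in $q$ in a neighborhood of $1$, so Lebesgue's dominated convergence theorem applies to the sum over $R$ and licenses commuting the two limits. Once this is in place, the chain
$$\lim_{N\to \infty} \mz_{\text{\rm Ex3}}(e^{-\beta}) \;=\; \lim_{q \to 1}\lim_{N\to \infty}\mz^{(N)} _{\text{\rm conifold}}(e^{-\beta},q)$$
closes and yields the proposition. The delicate point is the uniform control of the tail: even though the principal specialization $s_R(q,q^2,q^3,\dots)$ with infinitely many variables is singular as $q\to 1$, the effective truncation to $N+1$ variables coming from the $SU(N+1)$ interpretation renders each term finite, and the Boltzmann weight suppresses large-$\lvert R \rvert$ contributions uniformly in the relevant range of parameters.
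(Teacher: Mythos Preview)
The paper does not prove this proposition separately: it is presented as a boxed summary of the two preceding ingredients, namely the Ooguri--Yamazaki/Szabo--Tierz lemma identifying $\lim_{N\to\infty}\mz^{(N)}_{\text{conifold}}$ with the reduced DT generating function, and the definition \eqref{eq:ZEx3def} of $\mz_{\text{Ex3}}$ as the $q\to 1$ (``unrefined'') specialization together with the identification $-Q=e^{-\beta}$. The statement functions as a label for the object $\mz_{\text{Ex3}}$ rather than as an analytical convergence claim.

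Your attempt to upgrade this to a precise convergence statement runs into two genuine problems. First, the dominated-convergence step cannot work: immediately after the proposition the paper establishes that $\ln\mz_{\text{Ex3}}=O(N^2)$, so $\mz_{\text{Ex3}}$ diverges as $N\to\infty$ and there is no summable majorant for the double limit. Second, your reading of $\chi_R(\mathrm{diag}(q,q^2,\dots))$ as an $SU(N{+}1)$ principal specialization in finitely many variables is not what \eqref{eq:Zconifold} encodes; there $Y$ carries infinitely many eigenvalues $q^{i-1/2}$, and the infinite-variable principal specialization of a Schur function blows up like $(1-q)^{-\lvert R\rvert}$ as $q\to 1$, so the termwise limit you invoke does not produce $\dim R$ without additional input. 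Neither inner limit nor the interchange is clean analytically; the proposition is meant to be read at the level of formal generating series --- equivalently, as \emph{declaring} what ``reduced, unrefined'' means in this context --- which is exactly how the paper packages it.
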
}}\end{center}\par
To emphasize the difference with the previous models, let us remind the reader that the representation $R^{\top}$ is (generically) not isomorphic to $R$ nor to $\overline{R}$. For instance, assume $N=4$ and $R=(3,3,1,0)$, which gives $R^{\top}=(3,2,2,0)$:
\begin{equation}
\label{eq:R331}
 \ytableausetup{centertableaux,smalltableaux}
 R=(3,3,1,0) \quad \begin{ytableau} \ & \ & \  \\ \ & \ & \  \\  \ \end{ytableau}  \qquad \Longrightarrow \qquad R^{\top}=(3,2,2,0) \quad \begin{ytableau} \ & \ & \  \\ \ & \  \\  \ & \ \end{ytableau} 
\end{equation}
By formula \eqref{eq:dimRformula}, the two have different dimensions as $SU(5)$ representations,
\begin{equation}
    \dim (3,3,1,0) = 60 , \qquad \dim (3,2,2,0) = 36 .
\end{equation}
Another well-known example is when $R$ is the rank-$k$ antisymmetric representation, then $R^{\top}$ is the rank-$k$ symmetric representation, 
\begin{equation}
    R=(\underbrace{1, \dots, 1}_k , 0, \dots ,0) \qquad \Longrightarrow \qquad R^{\top}=(k,0, \dots, 0) ,
\end{equation}
and the two are not isomorphic unless $k=1$.\par
\medskip
Let us focus on the quantum mechanical interpretation of \eqref{eq:ZEx3def}, where now all the pieces are consistent with our general analysis in Section \ref{sec:QM}. In this example we have $\phi (R) = R^{\top}$, which requires minor edits to the interaction Hamiltonian $H_{\text{\rm int}}$ compared to the case $\phi (R) = \overline{R}$. We suitably adjust $H_{\text{\rm int}}$ so that the interaction energy $E_J ^{\text{\rm int}}$ is as in the general discussion in Subsection \ref{sec:spectral}, possibly up to $1/N$ terms.\par
\medskip
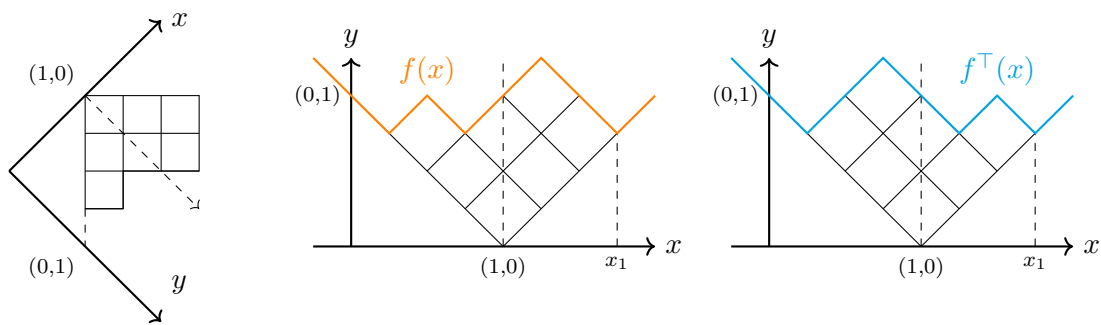
\begin{figure}[ht]
    \centering
    \begin{tikzpicture}
        \draw[-] (-4.5,0) -- (-3,0) -- (-3,-1) -- (-4,-1) -- (-4,-1.5) -- (-4.5,-1.5) -- (-4.5,0);
        \draw (-4.5,-0.5) -- (-3,-0.5);
        \draw (-4.5,-1) -- (-3,-1);
        \draw (-4.5,-1.5) -- (-4,-1.5);
        \draw (-4,0) -- (-4,-1.5);
        \draw (-3.5,0) -- (-3.5,-1);
        \draw[thick,->] (-5.5,-1) -- (-3.5,1);
        \draw[thick,->] (-5.5,-1) -- (-3.5,-3);
        \draw[dashed,->] (-4.5,0) -- (-3,-1.5);
        \draw[dashed] (-4.5,-1.5) -- (-4.5,-2);
        \node[anchor=west] at (-3.5,1) {$x$};
        \node[anchor=west] at (-3.5,-2.5) {$y$};
        \node[anchor=south east] at (-4.5,0) {$\scriptstyle (1,0)$};
        \node[anchor=north east] at (-4.5,-2) {$\scriptstyle (0,1)$};

        \draw[thick,->] (-1.5,-2) -- (3,-2);
        \draw[thick,->] (-1,-2) -- (-1,0.5);
        \node[anchor=west] at (3,-2) {$x$};
        \node[anchor=south] at (-1,0.5) {$y$};
        \draw[] (1,-2) -- (2.5,-0.5);
        \draw[] (1,-2) -- (-1,0);
        \draw[] (2,0) -- (0.5,-1.5);
        \draw[] (0.5,-0.5) -- (0,-1);
        \draw[] (.5,-0.5) -- (1.5,-1.5);
        \draw[] (1,0) -- (2,-1);
        \draw[dashed] (2.5,-2) -- (2.5,-0.5);
        \draw[dashed] (1,-2) -- (1,0.5);
        \node[anchor=north] at (1,-2) {$\scriptstyle (1,0)$};
        \node[anchor=east] at (-1,0) {$\scriptstyle (0,1)$};
        \node[anchor=north] at (2.5,-2) {$\scriptstyle x_1$};
        \node[anchor=south] at (0,0) {$\begin{color}{orange}f(x)\end{color}$};
        \draw[orange,thick] (3,0) -- (2.5,-0.5) -- (1.5,0.5) -- (.5,-0.5) -- (0,0) -- (-.5,-.5) -- (-1.5,0.5);

        \draw[thick,->] (4,-2) -- (8.5,-2);
        \draw[thick,->] (4.5,-2) -- (4.5,0.5);
        \node[anchor=west] at (8.5,-2) {$x$};
        \node[anchor=south] at (4.5,0.5) {$y$};
        \draw[] (6.5,-2) -- (8,-0.5);
        \draw[] (6.5,-2) -- (4.5,0);
        \draw[] (7.5,0) -- (6,-1.5);
        \draw[] (6.5,0) -- (5.5,-1);
        \draw[] (5.5,0) -- (7,-1.5);
        \draw[] (6.5,0) -- (7.5,-1);
        \draw[dashed] (8,-2) -- (8,-0.5);
        \draw[dashed] (6.5,-2) -- (6.5,0.5);
        \node[anchor=north] at (6.5,-2) {$\scriptstyle (1,0)$};
        \node[anchor=east] at (4.5,0) {$\scriptstyle (0,1)$};
        \node[anchor=north] at (8,-2) {$\scriptstyle x_1$};
        \node[anchor=south] at (7.5,0) {$\begin{color}{Cerulean}f^{\top}(x)\end{color}$};
        \draw[Cerulean,thick] (8.5,0) -- (8,-0.5) -- (7.5,0) -- (7,-0.5) -- (6,0.5) -- (5,-.5) -- (4,0.5);

    \end{tikzpicture}
    \caption{Coordinate system for Young diagrams, exemplified for $N=4$ and $R, R^{\top}$ as in \eqref{eq:R331}. Left: Orientation of the $(x,y)$-axes with respect to $R$. Center: The shape function $f(x)$ is shown in orange. Right: The shape function $f^{\top} (x)$ of the transposed diagram $R^{\top}$ is shown in cyan.}
    \label{fig:coordR}
\end{figure}
To study the large $N$ limit of $ \mz_{\text{\rm Ex3}} $ we adopt the method of IOP \cite{Iizuka:2008eb}. After appropriate rescaling, the Young diagram $R$ can be represented by a piece-wise linear function as follows. One sets the $(x,y)$-axes rotated by $-135^{\circ}$ with respect to the diagram $R$ and with the $x$-axis flipped, as shown in Figure \ref{fig:coordR} (left). The origin of the $(x,y)$-plane is chosen so that the top-left corner of $R$ has coordinates $(1,0)$, and the length of the axis is scaled by $1/N$, so that the constraint $\ell (R) \le N$ becomes the requirement $R$ is entirely contained in the positive quadrant. Then the shape of $R$ is the graph of a piece-wise linear function $f(x)$ with 
\begin{equation}
    f(x) \ge \lvert x-1 \rvert , \qquad f(x) = -x+1 \text{ if } x \le 0
\end{equation}
and there exists $x_{1} \ge 1$ such that $f(x) = x-1$ if $x>x_1$. The conventions are chosen to match with \cite[Sec.5]{Iizuka:2008eb}.\par
In this coordinate system, transposition $R \mapsto R^{\top}$ sends $f \mapsto f^{\top}$ and it acts as a reflection about the vertical axis $x=1$. Therefore 
\begin{equation}
    f^{\top} (x)= f(2-x) 
\end{equation}
as shown in Figure \ref{fig:coordR} (right).\par
Writing \eqref{eq:ZEx3def} as 
\begin{equation}
    \mz_{\text{\rm Ex3}} (e^{-\beta}) = \sum_{R \ : \ \ell (R) \le N } \exp \left\{ - \frac{1}{2}  \left[\beta \lvert R \rvert - 2 \ln \dim (R) \right]  - \frac{1}{2}  \left[\beta \lvert R^{\top} \rvert - 2 \ln \dim (R^{\top}) \right] \right\} ,
\end{equation}
expressing the right-hand side in terms of the shape function $f$ and comparing with the setup of \cite[Sec.5]{Iizuka:2008eb}, we have
\begin{equation}
    \mz_{\text{\rm Ex3}} (e^{-\beta}) = \sum_{R \ : \ \ell (R) \le N } \exp \left\{ - \frac{N^2}{2} \left( \mathcal{S}_{\text{\rm IOP}} [f] + \mathcal{S}_{\text{\rm IOP}} [f^{\top}]  \right) \right\} 
\end{equation}
where $\mathcal{S}_{\text{\rm IOP}} [f]$ is the effective action that appears in the IOP model. The saddle point equation (derived analogously to \cite[Sec.5.2]{Iizuka:2008eb}) is:
\begin{equation}
    \beta - 2 \ln (x) + \int_{-\infty}^{\infty} \dd \xi \ln \lvert x-\xi \rvert \left( \frac{ f^{\prime \prime} (\xi) + f^{\prime \prime} (2-\xi) }{2} \right) =0
\end{equation}
for $0 <x<x_1$. While we do not solve this problem explicitly, one checks that the terms with positive and negative signs compete, exactly as in the IOP model, yielding a non-trivial saddle point. We conclude that $\ln \mz_{\text{\rm Ex3}} = O(N^2)$ in the large $N$ limit.\par
An alternative way to argue for the same result without inspecting the details of the saddle point equation is that the typical $SU(N+1)$ representation at large $N$ has all rows of length $O(N)$ and $\ln \dim (R) = O(N^2)$. The only difference between this example and the IOP model of Section \ref{sec:IOP} is the appearance of $\dim (R)\dim (R^{\top})$ instead of $\dim (R)^2$. For typical $R$, $\ln \dim (R^{\top}) = O(N^2)$ as well, and moreover there are no other negative signs that may produce cancellations. The saddle point shape $f_{\ast}$ is thus expected to be similar to the IOP solution.\par
Concretely, in the low temperature limit we approximate 
\begin{equation}
    \tanh \left( \frac{\beta}{4} \right) \approx 1-2 e^{- \beta/2} , \qquad \coth \left( \frac{\beta}{4} \right) \approx 1+2 e^{- \beta/2} ,
\end{equation}
and the IOP limit shape is supported on $[ 1-2 e^{- \beta/2},  1+2 e^{- \beta/2}]$ and symmetric about the vertical axis $x=1$. Therefore at low temperature, $f_{\ast}$ in this example agrees with the IOP limit shape, and it is deformed away from the IOP shape as the temperature is increased ($\beta$ is decreased), but in a continuous way, for which the property $\ln \mz_{\text{\rm Ex3}} = O(N^2)$ persists.\par
Yet another way to show that the saddle point eigenvalue density $\varrho_{\ast}$ of this model is a continuous function, would be to study directly the unitary matrix model. In that case it is easier to work in a 't Hooft limit 
\begin{equation}
    g_{\text{\tiny string}} \to 0^{+}, \qquad N \to \infty , \qquad \text{ with } \lambda_{\text{\tiny string}} := N g_{\text{\tiny string}} \text{ fixed},
\end{equation}
and take the unrefined limit $\lambda_{\text{\tiny string}} \to 0$ at the end of the computation. The procedure is rather cumbersome and we do not present it here. However, one observes that it admits a solution consistent with the large $N$ growth $\ln \mz_{\text{\rm Ex3}} = O(N^2)$ and moreover we have not found any phase transition as a function of $\beta$, supporting the picture advocated for using typical representations.\par
\bigskip
Constructing the Wightman functions as explained in Subsection \ref{sec:spectral}, this observation is enough to deduce that $\rho (\omega)$, the spectral density of the quantum system coupled to a probe, has continuous support. We conclude that these correlation functions are those of a von Neumann algebra of type III$_1$. 
\begin{equation*}
\begin{tabular}{c|c|c}
\hspace{8pt} \textsc{Temperature} \hspace{8pt} & \hspace{8pt} $ \ln \mz_{\text{\rm Ex3}} $ \hspace{8pt} & \hspace{8pt} \textsc{Algebra type} \hspace{8pt} \\
\hline 
$\forall \beta$ & $O(N^2)$ & III$_1$ \\
\hline
\end{tabular}
\end{equation*}

\section{Example 4: Systems with a Casimir Hamiltonian}
\label{sec:YM2Global}

Consider a system with $U(L)$ global symmetry, and assume its finite temperature partition function takes the form 
\begin{equation}
\label{eq:YM2MM}
    \mz_{\text{\rm YM$_2$}} (L, e^{- \beta}) = \sum_{R \in \mathfrak{R}^{U(L)}} (\dim R)^2 ~e^{- \frac{\beta}{b} C_2 (R)} ,
\end{equation}
where the sum is over all isomorphism classes of irreducible $U(L)$ representations, and 
\begin{equation}
    C_2 (R) = \sum_{i=1}^{L} R_i (R_i -2i + L +1)
\end{equation}
is the quadratic Casimir invariant. The number $b^{-1}>0$ is interpreted as a coupling constant.\par
The inspiration to write down this system is twofold:
\begin{itemize}
    \item Take a two-dimensional gravity system with gauge group $U(L)$, dual to an ensemble of boundary theories with global $U(L)$ symmetry. It was shown in \cite{Kapec:2019ecr} that the partition function of the random matrix ensemble with global symmetry decomposes according to \eqref{eq:YM2MM}.
    \item Take any unitary CFT in $d \ge 2$ spacetime dimensions, with $U(L)$ global symmetry, at finite temperature $T$. In \cite{Kang:2022orq} it was found that the probability of a state to be in the irreducible representation $R$, in the high temperature limit, is given by 
    \begin{equation}
    \label{eq:coefProbRKLO}
        \mathrm{Prob} (R) = \left( \frac{ 4 \pi}{b T^{d-1}} \right)^{\frac{L^2}{2}} ~ (\dim R)^2 \exp \left( - \frac{1}{b T^{d-1}} C_2 (R) \right) .
    \end{equation}
    The $R$-independent coefficient is just a normalization. Up to the overall coefficient, summing over all the states we get \eqref{eq:YM2MM}, with identification 
    \begin{equation}
        \beta = \frac{1}{T^{d-1}} .
    \end{equation}
    In particular, for a two-dimensional CFT, we have the identification between the quantum mechanical $\beta$ and the inverse temperature $T^{-1}$ of \cite{Kang:2022orq}.
\end{itemize}
When \eqref{eq:YM2MM} stems from a large $N$ gauge theory, the parameter $b$ encodes the dependence on the gauge rank $N$. Here we need not necessarily assume any concrete $N$-dependence. In order to obtain a nontrivial scaling limit, we will instead scale $b$ with $L$ in the following way:
\begin{defin}\label{def:VenezianoYM2}
    For every $L\in \mathbb{N},\beta>0, b>0$, the \emph{planar limit} of the ensemble \eqref{eq:YM2MM} is the limit $L \to \infty$ with 
    \begin{equation}
        \tilde{\gamma} = \frac{L}{b} \quad \text{fixed} .
    \end{equation}
\end{defin}
Regardless of the origin of \eqref{eq:YM2MM}, the scaling limit with $\frac{L}{b}$ fixed is necessary at the level of the matrix model, to have a non-trivial large $L$ limit.\par

\subsection{Quantum mechanics from ensembles with Casimir Hamiltonian}

Example \eqref{eq:YM2MM} slightly differs from our general discussion in Section \ref{sec:QM}, because we do not start from a unitary matrix model. Besides, and related, the Hamiltonian of the quantum mechanical system is quadratic in $R$, as opposed to the Hamiltonian linear in $R$ considered in Section \ref{sec:QM}. This modification does not spoil the argument, which can be run without changes.\par
We henceforth take \eqref{eq:YM2MM} as our starting point, and interpret it as the partition function of a system at inverse temperature $\beta$. This defines our toy model. The Hilbert space of the system decomposes into 
\begin{equation}
    \mathscr{H}_L = \bigoplus_{R \in \mathfrak{R}^{U(L)}} \mathscr{H}_L (R) \otimes \mathscr{H}_L (R) ,
\end{equation}
and the Hamiltonian $H$ acts diagonally on the representation basis with eigenvalues 
\begin{equation}
    H ~ \lvert R, \ai \rangle \otimes \lvert R, \dot{\ai} \rangle = (C_2 (R)/b) ~ \lvert R, \ai \rangle \otimes \lvert R, \dot{\ai} \rangle \qquad \forall \ai, \dot{\ai} = 1, \dots , \dim R .
\end{equation}
We will assume the system is coupled to a probe as explained in Subsection \ref{sec:probe}. 
\begin{center}
\noindent\fbox{%
\parbox{0.98\linewidth}{%
\begin{thm}\label{thm:YM2VN3}
    Let $L \in \N$ and $\beta >0$, and consider $\mz_{\text{\rm YM$_2$}} (L, e^{- \beta})$ as given in \eqref{eq:YM2MM}, interpreted as the partition function of a quantum mechanical system with global symmetry $U(L)$. Consider the planar limit of Definition \ref{def:VenezianoYM2}. The von Neumann algebra associated to the system is of type III$_1$.
\end{thm}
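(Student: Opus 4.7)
The strategy is to follow the template of Theorem \ref{thm:istype3Rigor}, adapted to the fact that the Hamiltonian $H = C_2(R)/b$ is quadratic in $R$ rather than linear. First, I would apply the change of variables $h_i = R_i - i + L$ from \eqref{eq:changeRtoH} to \eqref{eq:YM2MM}. A direct computation gives
\begin{equation*}
C_2(R) = \sum_{i=1}^{L}\bigl[h_i^2 - (L-1)h_i\bigr] + C(L),
\end{equation*}
where $C(L)$ is independent of $R$. Combined with \eqref{eq:dimRVdm}, this recasts $\mz_{\text{\rm YM$_2$}}$ as a discrete Gaussian ensemble, essentially the Migdal--Rusakov--Witten formulation of two-dimensional Yang--Mills on the sphere.

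In the planar limit of Definition \ref{def:VenezianoYM2}, the resulting saddle point equation is the classical Douglas--Kazakov problem. It admits a non-trivial saddle yielding a continuous eigenvalue density $\varrho_{\ast}(x)$: in one regime it is a (rescaled) Wigner semicircle, and in the other the same density capped by the fermionic constraint $\varrho_{\ast}\leq 1$ from \eqref{eq:condrhole1}, with a third-order phase transition separating the two. Crucially, in both phases $\supp \varrho_{\ast}$ is a bounded interval on which $\varrho_{\ast}$ is non-constant on an open sub-interval.

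The next step is to couple the system to a probe as in Subsection \ref{sec:probe}, with the obvious analog of \eqref{eq:Hint} obtained by replacing $Q$ with $C_2$. Although $H$ is quadratic in $R$, the \emph{shift} in energy produced by appending a box to row $J$ is linear in $h_J$:
\begin{equation*}
C_2(R \sqcup \Box_J) - C_2(R) = 2h_J - L + 2.
\end{equation*}
Consequently, Theorems \ref{thmrhoisrho} and \ref{thm:rhoandOmega}, together with the proof of Lemma \ref{lemma:OmegalargeN}, would carry through up to a rescaling of the reduced frequency $\omega_{\mathrm{r}}$, expressing ${}_L\langle \Psi_{\beta}\lvert\Omega(\omega)\rvert\Psi_{\beta}\rangle_L$ at leading planar order as an explicit functional of $\varrho_{\ast}$. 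The discontinuity equation \eqref{eq:rhodiscG} then transfers the continuity of $\supp \varrho_{\ast}$ to continuity of $\supp \rho$. The factorization Lemma \ref{lem:factor} applies as well, since its proof only uses the existence of a non-trivial planar saddle point of the ensemble, not the linearity of the Hamiltonian. The conclusion follows from \eqref{eq:typefromrho} and Theorem \ref{thm:Derezinski}.

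The main obstacle is purely technical bookkeeping: rewriting the probe interaction so that the energy shifts take exactly the form assumed in the proofs of Appendix \ref{app:longproof} after the rescaling $h_J \mapsto (2h_J - L + 2)/b$, and confirming that the capped-density phase of Douglas--Kazakov still leaves a non-degenerate open sub-interval on which $\varrho_{\ast} < 1$, so that $\rho$ is truly supported on a continuum rather than on isolated points. Both are standard consequences of the known solution to the Douglas--Kazakov problem and do not present any conceptual difficulty.
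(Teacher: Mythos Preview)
Your proposal is correct and follows the same route as the paper: recast \eqref{eq:YM2MM} as a discrete matrix model, invoke the Douglas--Kazakov saddle for $\varrho_\ast$, feed it into the machinery of Subsection~\ref{sec:spectral}, and conclude via Lemma~\ref{lem:factor} and \eqref{eq:typefromrho}.

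The paper's proof makes one simplification that removes the ``technical bookkeeping'' you flag at the end: it keeps $H_{\text{int}}$ \emph{exactly} as in \eqref{eq:Hint}, without modification. The point the paper stresses is that $H$ and $H_{\text{int}}$ play independent roles --- $H$ fixes the Boltzmann weight and hence $\varrho_\ast$, while $H_{\text{int}}$ fixes the energy shifts $E_J^{\text{int}}$ entering the Wightman functions. Since the original $H_{\text{int}}$ of \eqref{eq:Hint} is already built from $C_2$, the shift $E_J^{\text{int}} = g(h_J+1)$ is already linear in $h_J$, and the derivations in Subsection~\ref{sec:spectral} and Appendix~\ref{app:longproof} go through verbatim; the only change is which $\varrho_\ast$ is plugged into \eqref{eq:WigthmanVeneziano}. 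Your proposed replacement of $Q$ by $C_2$ is harmless (it shifts every $E_J^{\text{int}}$ by the constant $g(L{+}1)/2$, absorbed into $\mu$), but it is not needed.
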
}}\end{center}\par
Crucially, \eqref{eq:YM2MM} equals the partition function of two-dimensional Yang--Mills theory on the sphere \cite{Migdal:1975zg,Rusakov:1990rs}, with $U(L)$ interpreted as a gauge group in that context. The planar limit of Definition \ref{def:VenezianoYM2} is nothing but the standard 't Hooft planar limit of 2d pure Yang--Mills. The large $L$ planar limit of \eqref{eq:YM2MM} has been addressed in \cite{Douglas:1993iia}, which lays the groundwork for Theorem \ref{thm:YM2VN3}.
\begin{lem}[\cite{Douglas:1993iia}]
    Let $\mz_{\text{\rm YM$_2$}}$ be the matrix model \eqref{eq:YM2MM}, at arbitrary $\beta >0$. In the planar limit, $\ln \mz_{\text{\rm YM$_2$}} = O(L^2)$, and it undergoes a third order phase transition at $\tilde{\gamma} = \frac{\pi^2 }{2\beta}$. Moreover, the density of eigenvalues has compact and continuous support on the real axis, which enhances as $\beta \to 0$. 
\end{lem}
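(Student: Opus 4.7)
The strategy is to recast \eqref{eq:YM2MM} as a discrete matrix model and solve its planar limit by a constrained saddle-point analysis, along the lines of Subsection \ref{sec:RepToFermi} and exactly in the spirit of \cite{Douglas:1993iia}. First, I would apply the change of variables $h_i = R_i - i + L$ of \eqref{eq:changeRtoH}, so that strict ordering $h_1>\cdots>h_L\ge 0$ is automatic, and use \eqref{eq:dimRVdm} to rewrite $(\dim R)^2$ as a squared Vandermonde divided by $G(L+1)^2$. A direct expansion of the quadratic Casimir yields $C_2(R)=\sum_i h_i^2 -(L-1)\sum_i h_i + c(L)$ with $c(L)$ an $R$-independent shift, so that the summand becomes $\exp\bigl[-\tfrac{\beta}{b}\bigl(\sum h_i^2+ \text{linear}\bigr)\bigr]\prod_{i<j}(h_i-h_j)^2$. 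Combined with the substitution $b=L/\tilde\gamma$ dictated by Definition \ref{def:VenezianoYM2}, this is exactly the Douglas--Kazakov discrete ensemble.

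Next, I would introduce the rescaled eigenvalues $x_i=h_i/L$ and the density $\varrho(x)=\tfrac{1}{L}\sum_i\delta(x-x_i)$. Balancing the quadratic Casimir ($\sim L^2$) against the Vandermonde ($\sim L^2$) shows this scaling is the correct planar limit, producing an effective action
\begin{equation*}
S[\varrho]=L^2\left[\beta\tilde\gamma\int dx\,\varrho(x)\,x^2 - \mathrm{P}\!\!\!\iint dx\,dy\,\varrho(x)\varrho(y)\log|x-y|\right]+O(L^2\log L),
\end{equation*}
where the log-$L$ pieces are $\varrho$-independent. The strict ordering of the $h_i$ imposes, exactly as in \eqref{eq:condrhole1}, the constraint $0\le\varrho(x)\le 1$ together with normalization $\int\varrho=1$. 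This already gives the growth $\ln\mz_{\text{YM}_2}=O(L^2)$, provided the constrained variational problem admits a bona fide minimizer (which it does by standard lower semicontinuity on probability measures of bounded density).

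The saddle-point equation is the linear-potential version $2\beta\tilde\gamma\, x=2\,\mathrm{P}\!\!\!\int dy\,\varrho_\ast(y)/(x-y)$. Ignoring the cap constraint, the unique symmetric normalized solution is the Wigner semicircle
\begin{equation*}
\varrho_\ast^{\mathrm{w}}(x)=\frac{\beta\tilde\gamma}{\pi}\sqrt{R^2-x^2},\qquad R^2=\frac{2}{\beta\tilde\gamma},
\end{equation*}
supported on the compact continuous interval $[-R,R]$. Its peak $\varrho_\ast^{\mathrm{w}}(0)=\sqrt{2\beta\tilde\gamma/\pi^2}$ equals $1$ precisely at $\tilde\gamma_c=\pi^2/(2\beta)$, so the semicircle is admissible for all $\tilde\gamma\le\tilde\gamma_c$. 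This pins the critical curve. Since the support $[-R,R]$ broadens as $\beta\to 0$, both the compactness and the ``enhancement'' claims are established in this phase.

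For $\tilde\gamma>\tilde\gamma_c$, the cap is saturated and I would adopt the ``capped'' ansatz analogous to \eqref{eq:CPansatz},
\begin{equation*}
\varrho_\ast(x)=\begin{cases} 1 & |x|\le L_0 \\ \hat\varrho_\ast(x) & L_0\le|x|\le L_1 \\ 0 & |x|>L_1,\end{cases}
\end{equation*}
with $\hat\varrho_\ast<1$ on the outer bands. As in \eqref{eq:rhohatSPEIOP}, absorbing the flat region into an effective potential reduces the problem to a standard one-cut equilibrium problem on two symmetric intervals, solvable explicitly via the resolvent and elliptic integrals (this is the classical Douglas--Kazakov computation, which I would import rather than reproduce). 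The result is a density that is continuous and compactly supported on $[-L_1,-L_0]\cup[L_0,L_1]$, and at $\tilde\gamma=\tilde\gamma_c^+$ it smoothly degenerates to the capped semicircle; the main obstacle, and the only place where genuine work is required, is verifying that the two free energies $\mf_\pm(\tilde\gamma,\beta)$ so obtained agree up to their second derivatives at $\tilde\gamma_c$ but disagree in the third. This matching is done by expanding the elliptic-integral expression for $\mf_+$ in $(\tilde\gamma-\tilde\gamma_c)$ through third order around the degenerate limit where the elliptic modulus vanishes, which yields a nonzero cubic coefficient, and is the step that diagnoses the transition as precisely third order. Combined with the bounded-density a priori control of $S[\varrho]$, this furnishes $\ln\mz_{\text{YM}_2}=O(L^2)$, the transition at $\tilde\gamma_c=\pi^2/(2\beta)$, and the compact continuous support at every $\beta>0$.
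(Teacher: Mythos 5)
Your strategy mirrors the paper's: pass to a discrete ensemble, scale the variables, solve the constrained variational problem with $0\le\varrho\le1$, identify the Wigner semicircle in the low-$\tilde\gamma$ phase, and locate the transition where the cap saturates. However, there is a genuine error in the setup. You apply the $SU(L+1)$ change of variables $h_i=R_i-i+L$ from \eqref{eq:changeRtoH} and assert that the constraint $h_1>\cdots>h_L\ge 0$ is automatic. It is not: the sum in \eqref{eq:YM2MM} runs over all irreducible representations of $U(L)$, whose highest weights $(R_1,\dots,R_L)$ are arbitrary decreasing tuples of \emph{integers}, not required to be non-negative. There is therefore \emph{no} hard wall at $h=0$. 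The spurious wall you impose would force $\mathrm{supp}\,\varrho\subseteq[0,\infty)$, which directly contradicts the symmetric Wigner semicircle on $[-R,R]$ that you correctly write down a few lines later — the two claims cannot coexist. The paper avoids this by using the centered substitution $h_i=R_i-i+\tfrac{L+1}{2}$, which recasts the ensemble over $\vec h\in\Z^L$ (with strict ordering but no wall) and has the additional benefit of making the quadratic Casimir exactly $\sum_i h_i^2$ plus an $R$-independent constant; the linear term $-(L-1)\sum_i h_i$ that you carry along and must implicitly absorb disappears from the start.

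Once you drop the wall and adopt the centered variables, the rest of your argument tracks the paper faithfully: the $O(L^2)$ growth of the effective action with $\lambda=\beta\tilde\gamma$ fixed, the saddle-point equation $\mathrm{P}\!\!\!\int dy\,\varrho_\ast(y)/(x-y)=\beta\tilde\gamma\,x$, the Wigner semicircle $\varrho_\ast^{\mathrm{w}}(x)=\tfrac{\beta\tilde\gamma}{\pi}\sqrt{2/(\beta\tilde\gamma)-x^2}$ with support of width $\propto(\beta\tilde\gamma)^{-1/2}$, the threshold $\varrho_\ast^{\mathrm{w}}(0)=1$ at $\tilde\gamma_c=\pi^2/(2\beta)$, and the capped ansatz for $\tilde\gamma>\tilde\gamma_c$. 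Your closing paragraph on the elliptic-integral expansion of $\mf_+$ that establishes the transition as precisely third order goes beyond what the paper spells out (the paper simply cites \cite{Douglas:1993iia} for this step), but it is the correct, classical argument.
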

\begin{proof}
    This lemma is part of the classical result of \cite{Douglas:1993iia}. We briefly sketch the main ideas for completeness, and refer to \cite{Douglas:1993iia} (and subsequent work) for the details.\par
    The starting point is to rewrite \eqref{eq:YM2MM} in a form akin to \eqref{eq:genericdiscreteMM}. The change of variables 
    \begin{equation}
        h_i = R_i - i + \frac{L+1}{2} 
    \end{equation}
    recasts the ensemble of representations into 
    \begin{equation}
         \mz_{\text{\rm YM$_2$}} (L, e^{- \beta}) = \frac{e^{\frac{\beta}{12b} L(L^2-1)} }{G(L+1)^2 } \sum_{\substack{(h_1, \dots , h_L) \in \Z^{L} \\ h_1 > h_2 > \dots > h_L}} ~ e^{-\frac{\beta}{b} \sum_{i=1}^{L} h_i ^2} ~ \prod_{1 \le i < j \le L} (h_i - h_j)^2 
    \end{equation}
    where we have used the properties of the $U(L)$ representations. Similar to the previous examples, the matrix model effective action is 
    \begin{equation}
        S (h_1, \dots , h_L) = \frac{\beta}{b} \sum_{i=1}^{L} h_i ^2 - \sum_{i \ne j} \ln \lvert h_i - h_j \rvert .
    \end{equation}
    Inserting the scaled variable $x$, defined through $h_i = L^{\eta} x_i$ for some $\eta >0$ to be determined momentarily, we define the density of eigenvalues $\varrho (x)$ exactly as above. Using the definition of the parameter $\tilde{\gamma}$, we arrive at 
     \begin{equation}
        S (h_1, \dots , h_L) = L^2 \int \dd x \varrho (x) \left[  L^{2\eta -2} \beta \tilde{\gamma} x^2 - \mathrm{P}\!\!\!\int \dd x^{\prime} \varrho (x^{\prime}) \ln \lvert x-x^{\prime} \rvert \right]  .
    \end{equation}
    In the large $L$ limit, this action admits a non-trivial saddle point if $\eta =1$. We arrive at the saddle point equation 
    \begin{equation}
    \label{eq:DKSPE}
        \mathrm{P}\!\!\!\int \dd x^{\prime} \frac{\varrho_{\ast} (x^{\prime}) }{x-x^{\prime}} = \beta \tilde{\gamma} x ,
    \end{equation}
    where the saddle point eigenvalue density $\varrho_{\ast}$ is to be looked for in the restricted functional space subject to the conditions \cite{Douglas:1993iia}
    \begin{equation}
        \int_0 ^{\infty} \dd x \varrho (x) =1 , \qquad 0 \le  \varrho (x) \le 1 .
    \end{equation}
    So far, the derivation is analogous to Subsection \ref{sec:IOPlargeN}, except for the quadratic dependence on $x$ in the action. The solution to \eqref{eq:DKSPE} is given by the Wigner semicircle law 
    \begin{equation}
        \varrho_{\ast} (x)= \frac{ \beta \tilde{\gamma}}{\pi} \sqrt{\frac{2}{\beta \tilde{\gamma}} - x^2} , \qquad \qquad \supp \varrho_{\ast} = \left[ - \sqrt{\frac{2}{\beta \tilde{\gamma}}} , \sqrt{\frac{2}{\beta \tilde{\gamma}}} \right] .
    \end{equation}
    The solution satisfies $\varrho_{\ast} (x) \le 1$ on the entire support if $2 \beta \tilde{\gamma} < \pi^2$. We interpret this bound as a critical value for $\tilde{\gamma}$ at arbitrary $\beta >0$. Raising $\tilde{\gamma}$ above the threshold, one must replace the Wigner semicircle with a ``capped'' solution of the form \cite{Douglas:1993iia}
    \begin{equation}
        \varrho (x) = \begin{cases} 1 & \hspace{8pt} 0 \le x < x_{-} \\ \hat{\varrho} (x) \  & x_{-} \le x \le x_{+} \\ 0 &  \hspace{8pt} x_{+} < x \end{cases}
    \end{equation}
    with $\hat{\varrho}$ satisfying the continuity conditions $\hat{\varrho} (x_{-}) =1 , \hat{\varrho} (x_{+}) =0 $. The solution in this phase is more involved, and can be found in \cite{Douglas:1993iia}.\par
    Focusing on the high temperature regime, i.e. the Cardy limit $\beta \to 0$, the critical value for $\tilde{\gamma}$ is moved to large positive values. The system remains in the first phase for $\tilde{\gamma}$ fixed and $\beta \to 0$, and $\supp \varrho_{\ast}$ has width proportional to $\beta^{-1/2}$, thus spreads on the whole real axis in the high temperature limit.
\end{proof}
\begin{proof}[Proof of Theorem \ref{thm:YM2VN3}]
    Given the saddle point eigenvalue density of \cite{Douglas:1993iia}, the result follows from it and the general result of Subsection \ref{sec:spectral}.\par
    It is important to note that, although the Hamiltonian $H$ has changed, and the eigenvalues are now $C_2 (R)$, we are still using the interaction Hamiltonian \eqref{eq:Hint}. The eigenvalues of $H$ determine the eigenvalue density $\varrho_{\ast}$, whereas the Hamiltonian $H_{\mathrm{int}}$ enters in the Fourier transform of the correlation functions. For this reason, the expressions for the Wightman functions --- and hence for the spectral density $\rho (\omega)$ --- derived in Subsection \ref{sec:spectral} remain valid, as can be immediately checked walking through the same steps. What changes is the form of $\varrho_{\ast}$ used for the evaluation of $\rho (\omega)$ at large $N$.
\end{proof}
As a concluding remark, we stress once again that we are not making claims about the nature of the von Neumann algebra of the holographic CFTs considered in \cite{Kang:2022orq}. We find that, given the toy quantum mechanical model built out of \eqref{eq:YM2MM}, its correlation functions are captured by a type III$_1$ von Neumann algebra. We \emph{do not} claim any implication for the holographic systems of \cite{Kapec:2019ecr,Kang:2022orq}. While the ideas developed here might prove useful, to rigorously establish the type of von Neumann algebra for those holographic systems is beyond the scope of the present work.

\subsection{Ensembles with a sum over flavor symmetries}

We now consider the extension of the matrix model \eqref{eq:YM2MM}, and of the corresponding quantum system, by introducing the sum over the rank $L$ of the global symmetry:
\begin{equation}
\label{eq:YM2sumL}
    \mz_{\text{\rm Ex4}} (\mathfrak{q}, e^{- \beta}) = \sum_{L=0} ^{\infty} \mathfrak{q}^{L^2} \sum_{R \in \mathfrak{R}^{U(L)}} (\dim R)^2 ~e^{- \frac{\beta}{b} C_2 (R)} .
\end{equation}
In this situation, the dependence on the parameter $b$ replaces the dependence on the gauge rank $N$, simply because it is the only other parameter we have at hand besides $\beta$. 
\begin{prop}
    In the limit $b \to \infty$, the matrix model \eqref{eq:YM2sumL} behaves as 
    \begin{equation}
        \ln \mz_{\text{\rm Ex4}}  = O(b^2) 
    \end{equation}
    for every $\beta >0$.
\end{prop}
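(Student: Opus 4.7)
The plan is to apply Laplace's method to the sum \eqref{eq:YM2sumL} in the limit $b\to\infty$, viewing $L/b$ as the natural continuous variable that replaces the Veneziano parameter $\gamma$ used in the $L$-sum arguments of Section \ref{sec:Fermi}. First I would recall from the Douglas--Kazakov analysis (invoked in the preceding subsection) that in the planar limit of Definition \ref{def:VenezianoYM2} one has, at leading order,
\begin{equation}
\ln \mz_{\text{\rm YM$_2$}}(L,e^{-\beta}) \;=\; L^{2}\,F(\tilde{\gamma},\beta) \,+\, O(\ln L), \qquad \tilde{\gamma}=L/b,
\end{equation}
with $F(\cdot,\beta)$ a continuous function on $\R_{\ge 0}$ (only the third derivative at $\tilde{\gamma}=\pi^{2}/(2\beta)$ jumps). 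Rewritten in terms of $b$, this is $\ln \mz_{\text{\rm YM$_2$}}=b^{2}\,\tilde{\gamma}^{2}F(\tilde{\gamma},\beta)+O(\ln b)$.

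Next I would substitute this into \eqref{eq:YM2sumL} and, using $\mathfrak{q}^{L^{2}}=e^{-L^{2}/(2a)}$, cast the summand as
\begin{equation}
\mathfrak{q}^{L^{2}}\mz_{\text{\rm YM$_2$}}(L,e^{-\beta}) \;=\; \exp\!\Bigl[b^{2}\,g(\tilde{\gamma})+O(\ln b)\Bigr], \qquad g(\tilde{\gamma}) \;:=\; \tilde{\gamma}^{2}\!\left(F(\tilde{\gamma},\beta)-\tfrac{1}{2a}\right).
\end{equation}
Since the spacing between successive values $\tilde{\gamma}=L/b$ is $1/b$, the sum is a Riemann sum for an integral with large parameter $b^{2}$:
\begin{equation}
\mz_{\text{\rm Ex4}} \;\approx\; b\int_{0}^{\infty} d\tilde{\gamma}\; \exp\!\bigl[b^{2}g(\tilde{\gamma})\bigr].
\end{equation}

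Finally, I would apply Laplace's method. Provided $g$ attains its absolute maximum at some finite point $\tilde{\gamma}_{\ast}\in[0,\infty)$, standard saddle-point asymptotics yield
\begin{equation}
\ln \mz_{\text{\rm Ex4}} \;=\; b^{2}\,g(\tilde{\gamma}_{\ast}) \,+\, O(\ln b),
\end{equation}
which proves $\ln\mz_{\text{\rm Ex4}}=O(b^{2})$. The case $\tilde{\gamma}_{\ast}=0$ is consistent with this bound (one then gets $O(\ln b)$, still dominated by $b^{2}$).

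The main obstacle is the existence and finiteness of the maximizer $\tilde{\gamma}_{\ast}$ for all $\beta>0$. This requires controlling the strong-coupling asymptotics of $F(\tilde{\gamma},\beta)$ as $\tilde{\gamma}\to\infty$ and showing that the Gaussian damping $-\tilde{\gamma}^{2}/(2a)$ eventually dominates, so that $g(\tilde{\gamma})\to-\infty$. From the Douglas--Kazakov solution in the strong-coupling phase, $F$ grows only logarithmically in $\tilde{\gamma}$, so $\tilde{\gamma}^{2}F$ is dominated by $\tilde{\gamma}^{2}/(2a)$ for any $a>0$, and a compact maximizer exists. If one wishes to avoid this asymptotic analysis, one may invoke the regularization of Subsection \ref{sec:QMLsum}: truncate the sum at $L\le L_{\max}(b)\propto b$, so that $\tilde{\gamma}$ ranges over a compact interval on which $g$ is continuous and thus attains its maximum. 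Either way, the conclusion $\ln\mz_{\text{\rm Ex4}}=O(b^{2})$ follows.
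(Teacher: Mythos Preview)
Your approach is the same as the paper's --- rewrite the $L$-sum as an integral over $\tilde{\gamma}=L/b$ with action $b^{2}g(\tilde{\gamma})$ and apply Laplace --- but you stop short of the one point the paper actually works for. In this paper, ``$\ln\mz=O(b^{2})$'' is used in the sense of genuine $b^{2}$ growth (the dichotomy $O(1)$ vs.\ $O(N^{2})$ of Section~\ref{sec:Fermi}), and it is precisely this growth that feeds into the subsequent Corollary that the algebra is type III$_{1}$ at every $\beta$. Your parenthetical ``the case $\tilde{\gamma}_{\ast}=0$ is consistent with this bound'' therefore gives away the game: if the maximizer were trivial you would get $\ln\mz=O(1)$ and the Corollary would fail. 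The paper closes this gap by computing the saddle explicitly in the weak-coupling Douglas--Kazakov phase, obtaining $\beta\tilde{\gamma}_{\ast}=-4\,\mathsf{ProductLog}\!\left(-e^{-1/2-1/a}/8\right)$, and then checking (i) that $0<\beta\tilde{\gamma}_{\ast}<\pi^{2}/2$ for all $a>0$, so the weak-coupling assumption is self-consistent, and (ii) that there is no competing saddle in the strong-coupling phase.

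Two smaller remarks. First, your assertion that in the strong-coupling phase ``$F$ grows only logarithmically in $\tilde{\gamma}$'' is not accurate: in fact $\mf_{\text{YM}_2}(\beta\tilde{\gamma})\to 0$ as $\beta\tilde{\gamma}\to\infty$ (the trivial representation dominates), which of course still gives $g\to-\infty$, so your conclusion survives. Second, the positivity of the saddle can be obtained without the explicit ProductLog formula: from the weak-coupling expression $\mf_{\text{YM}_2}=\tfrac{\beta\tilde{\gamma}}{12}-\tfrac{1}{2}\ln(2\beta\tilde{\gamma})$ one sees immediately that $g(\tilde{\gamma})\approx -\tfrac{\tilde{\gamma}^{2}}{2}\ln(2\beta\tilde{\gamma})>0$ for small $\tilde{\gamma}>0$, so the global maximum cannot sit at $\tilde{\gamma}=0$. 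Adding this one-line observation to your argument would fill the gap.
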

\begin{proof}
    As before, we write $\mathfrak{q}= e^{-1/(2a)}$. Besides, we consider two choices of slice in the parameter space: (i) the fixed-$a$ slice, in which $a$ is a given number independent on the other parameters; and (ii) the Schur slice $a= \beta^{-1}$.\par
    To begin with, we rewrite 
    \begin{equation}
    \label{eq:largeNZYMsumL}
        \ln \mz_{\text{\rm Ex4}} \approx \ln  \int_0 ^{\infty} \dd \tilde{\gamma} ~\exp \left[ - b^2 \left( \frac{\tilde{\gamma}^2}{2a}  - \tilde{\gamma}^2 \mf_{\text{YM$_2$}} (\beta \tilde{\gamma}) \right) \right]
    \end{equation}
    where 
    \begin{equation}
        \mf_{\text{YM$_2$}} = \lim_{L \to \infty} \frac{1}{L^2} \ln  \mz_{\text{\rm YM$_2$}} 
    \end{equation}
    which, by definition, only depends on the product $\beta \tilde{\gamma}$. In the limit $b \to \infty$ we ought to look for the saddle points $\tilde{\gamma}_{\ast}$ of \eqref{eq:largeNZYMsumL}. Assuming that the system is in the first phase, we have \cite{Douglas:1993iia}
    \begin{equation}
        \mf_{\text{YM$_2$}} \big\rvert_{\tilde{\gamma}< \frac{\pi^2}{2 \beta } } = \frac{\beta \tilde{\gamma}}{12} - \frac{1}{2} \ln (2 \beta \tilde{\gamma}) .
    \end{equation}
    In the constant-$a$ slice, in which $a$ is taken independent of $\beta$, the saddle point $\tilde{\gamma}_{\ast}$of the integrand in \eqref{eq:largeNZYMsumL} is 
    \begin{equation}
    \label{eq:gammaastYM2}
        \beta \tilde{\gamma}_{\ast} = - 4 \mathsf{ProductLog} \left( - \frac{e^{- \frac{1}{2} - \frac{1}{a} }}{8} \right) ,
    \end{equation}
    where $\mathsf{ProductLog} (z)$ denotes the function that gives the principal solution for $w$ to the equation $z=w e^{w}$. The saddle point is plotted in Figure \ref{fig:ProdLog}. The most important point is that it is a positive, monotone function of $a>0$ which satisfies 
    \begin{equation}
        \lim_{a \to 0} \beta \tilde{\gamma}_{\ast}  =0 , \qquad \lim_{a \to \infty} \beta \tilde{\gamma}_{\ast}  \approx 0.3293 .
    \end{equation}
    That is to say, 
    \begin{equation}
        0 <\beta \tilde{\gamma}_{\ast}  < \frac{\pi^2}{2} \qquad \forall 0<a<\infty .
    \end{equation}
    On the other hand, expanding $\mf_{\text{YM$_2$}}$ in the phase $ \tilde{\gamma} > \frac{\pi^2}{2 \beta}$ near the critical point, there is no saddle point. Evaluating \eqref{eq:largeNZYMsumL} at $\tilde{\gamma}_{\ast}$ we obtain $\ln \mz_{\text{\rm Ex4}} =O(b^2)$ at all temperatures.\par
    \begin{figure}
        \centering
        \includegraphics[width=0.5\textwidth]{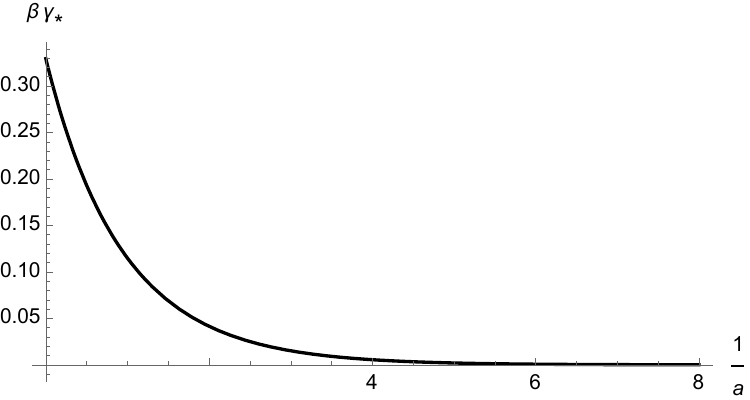}
        \caption{Plot of the saddle point value $\beta \tilde{\gamma}_{\ast}$ in the constant-$a$ slice, shown as a function of $a^{-1}$.}
        \label{fig:ProdLog}
    \end{figure}\par
    Let us now consider the slice $a= \beta^{-1}$. Due to the simple way in which the dependence on $a$ appears, it turns out that the saddle point is simply given by \eqref{eq:gammaastYM2} with $a^{-1} = \beta$. Once again, this saddle point is positive and valid at all temperatures, thus yielding $\ln \mz_{\text{\rm Ex4}} =O(b^2)$ at all temperatures.
\end{proof}
\begin{cor}
    The von Neumann algebra associated to quantum system constructed from the ensemble \eqref{eq:YM2sumL} with a sum over the rank of the global symmetry is of type III$_1$ for every $\beta >0$.
\end{cor}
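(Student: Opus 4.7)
The strategy is to combine the preceding proposition, which identifies a non-trivial saddle $\tilde{\gamma}_\ast$ satisfying $\ln \mz_{\text{\rm Ex4}} = O(b^2)$ at every temperature, with the machinery of Subsection \ref{sec:specAvg} and the reasoning already deployed in Theorem \ref{thm:YM2VN3}. I would first couple a probe to each fixed-$L$ sector as prescribed in Subsection \ref{sec:sumprobe}, promoting the operators of Subsection \ref{sec:probecorrel} to block-diagonal operators $\mathcal{O}, \mathcal{O}^\dagger$ on $\bigoplus_L \mathscr{H}_L^{\text{\rm tot}}$, with $b$ playing the role of the planar-scaling parameter previously denoted $N$. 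The Wightman function is then \eqref{eq:GplusAvg}, and, exactly as in Subsection \ref{sec:specAvg}, the saddle-point reduction of $\widetilde{G}_+(\omega)/\mz_{\text{\rm Ex4}}$ evaluates the fixed-$L$ expressions from Theorems \ref{thmrhoisrho} and \ref{thm:rhoandOmega} at $L = b\, \tilde{\gamma}_\ast$.

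Next, the key observation is that the saddle $\tilde{\gamma}_\ast$ identified in the preceding proposition sits strictly inside the first (uncapped) phase of the Douglas--Kazakov transition: the analysis following \eqref{eq:gammaastYM2} yields $0 < \beta\, \tilde{\gamma}_\ast < \pi^2/2$ for all admissible $(a,\beta)$, both in the constant-$a$ and the Schur slices. In that phase the saddle-point eigenvalue density is the Wigner semicircle, which is continuous and has compact support on $\R$. Applying Lemma \ref{lemma:OmegalargeN} with this density, the Fourier-transformed retarded and advanced Wightman functions acquire logarithmic branch cuts along $\pm(\mu + \lambda\,\supp \varrho_\ast)$, and the discontinuity equation \eqref{eq:rhodiscG} then forces $\supp \rho$ to be continuous and compact in $\R$.

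The final step is to upgrade this spectral statement to a statement about the von Neumann algebra. Large $b$ factorization of the probe correlation functions follows by an adaptation of Lemma \ref{lem:factor}: the proof hinges only on the existence of a large-representation saddle driving the planar scaling, which here is guaranteed by $\ln \mz_{\text{\rm Ex4}} = O(b^2)$. Once factorization is in place, continuity of $\supp \rho$ together with Theorem \ref{thm:Derezinski}, equivalently the dichotomy \eqref{eq:typefromrho}, produces the desired type III$_1$ conclusion for every $\beta > 0$.

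The main obstacle is a bookkeeping one rather than a conceptual one: the Hamiltonian here is $H = C_2(R)/b$, quadratic in the partition $R$, so the ensemble weight is Gaussian rather than exponential in $|R|$. This affects the form of $E^{\text{\rm int}}_J$ appearing in the matrix elements of $\Omega(\omega)$ and requires rechecking that the contour-deformation step in the proof of Lemma \ref{lemma:OmegalargeN} still reduces the logarithm of the ratio of dimensions $\dim(R \sqcup \Box_J)/\dim R$ to the Stieltjes transform of the Wigner semicircle. Since the ratio of dimensions depends on $R$ only through the $h_i$-variables, and since the planar estimate $\ln(1 - c/L) \approx -c/L$ depends only on the fact that the coalescing $h_i$'s form a continuous density at the saddle, the derivation goes through with the Casimir weight replacing the linear weight, exactly as in Theorem \ref{thm:YM2VN3}.
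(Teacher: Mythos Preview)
Your proposal is correct and follows the same implicit route as the paper, which states the corollary without proof: the proposition supplies a non-trivial saddle $\tilde{\gamma}_\ast$ with $\ln\mz_{\text{Ex4}}=O(b^2)$ at all temperatures, and one then invokes the machinery of Subsection~\ref{sec:specAvg} together with the reasoning already spelled out in Theorem~\ref{thm:YM2VN3}.

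One point of confusion worth correcting: your ``main obstacle'' paragraph misidentifies what changes. The Casimir Hamiltonian replaces $H$, not $H_{\text{int}}$, and therefore does \emph{not} affect $E^{\text{int}}_J$ at all. In the computation of the Wightman function, the time-dependence enters through the difference $H'(R\sqcup\Box_J,\phi(R))-H'(R,\phi(R))$; since $H\otimes 1$ sees only the system state $(R,\phi(R))$ and the extra box lives in the probe Fock space, the $H$-contribution cancels in this difference regardless of whether $H=|R|$ or $H=C_2(R)/b$. Only $\mu$ and $H_{\text{int}}$ survive, so $E^{\text{int}}_J=g(R_J-J+\tilde{L})$ is unchanged. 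The Casimir weight enters solely through the Boltzmann factor $e^{-\beta C_2(R)/b}$, i.e.\ it determines $\varrho_\ast$ (here the Wigner semicircle) but leaves $\Omega(\omega)$ and the contour argument of Lemma~\ref{lemma:OmegalargeN} untouched. This is exactly what the paper already noted in the proof of Theorem~\ref{thm:YM2VN3}, so no ``rechecking'' is needed.
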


\section{Conclusions and outlook}
\label{sec:discussion}

\begin{figure}[ht]
    \centering
    \begin{tikzpicture}
        \node[anchor=south] at (-5,0) {\textsc{Keyword}};
        \node[anchor=south] at (0,0) {\textsc{Result}};
        \node[anchor=south] at (5,0) {\textsc{Section}};
        \draw (-6.5,0) --  (6.5,0);
        \draw (-3.5,0.5) --  (-3.5,-5.5);
        \draw (3.5,0.5) --  (3.5,-5.5);

        \node[align=center, anchor=north] (al) at (-5,-0.5) {von Neumann\\ algebra};
        \node[align=center, anchor=north] (bl) at (-5,-2.25) {Quantum\\ mechanics};
        \node[align=center, anchor=north] (cl) at (-5,-4) {Hagedorn\\ transition};

        \node[align=center, anchor=north] (ac) at (0,-0.5) {Construction and type of\\ large $N$ von Neumann algebra\\ from $\mathrm{supp} \rho $};
        \node[align=center, anchor=north] (bc) at (0,-2.25) {Quantum mechanical system\\ with type III$_1$ algebra at large $N$};
        \node[align=center, anchor=north] (cc) at (0,-4) {Extended systems with first order\\ type I to type III$_1$ transition};

        \node[align=center, anchor=north] (ar) at (5,-0.5) {Section \ref{sec:vNtot}\\ \small (Eq. \eqref{eq:typefromrho})};
        \node[align=center, anchor=north] (br) at (5,-2.25) {Section \ref{sec:QM}\\ \small (Thm. \ref{thm:istype3Rigor})};
        \node[align=center, anchor=north] (cr) at (5,-4) {Section \ref{sec:Fermi}\\ \small (Thm. \ref{thm:ItoIII})};
    \end{tikzpicture}
    \caption{Main results in a nutshell}
    \label{fig:mainresults}
\end{figure}\par
In this work we obtained a general construction of large $N$ von Neumann algebras applicable to observables satisfying large $N$ factorization, and constructed quantum systems with an emergent type III$_1$ von Neumann algebra at large $N$. These quantum systems, which vastly generalize the IOP model introduced in \cite{Iizuka:2008eb}, are able to implement some kind of gauge constraint, and have a partition function that is expressible in terms of a matrix model. Their Hilbert space can be explicitly constructed from the character expansion of the matrix integral.\par 
We showed that when a heavy probe is coupled to our systems, the K\"all\'en--Lehmann spectral density at finite temperature becomes continuously supported, which is a hallmark of type III$_1$ structure. Furthermore, upon introducing an extended Hilbert space with sectors carrying different flavor symmetries, we showed that our systems are generically promoted to systems with a Hagedorn transition. Upon definition of an appropriate notion of probe, we showed that below the Hagedorn temperature, the partition function is convergent, while above the Hagedorn transition, the large $N$ algebras once again satisfy large $N$ factorization and have type III$_1$.\par
Our main findings in Part \ref{part1} are summarized in Figure \ref{fig:mainresults}.\par
\medskip
Our construction can be applied to a large class of examples, as was demonstrated very explicitly in the second part of this work. In particular, we introduced examples inspired from the IOP model \cite{Iizuka:2008eb}, as well as from a toy model of QCD$_2$ \cite{Hallin:1998km}. We also analyzed a system constructed from the generating function of DT invariants of the conifold, even though a complete understanding of the possible geometric meaning of the type III$_1$ von Neumann algebra remains an open question. We have then analyzed a matrix model describing holographic systems with global symmetries \cite{Kang:2022orq}.\par
\medskip
The aim of this work was to lay the groundwork to explicitly set up a quantum system given any gauge theory with continuous flavor symmetry, and to assign a large $N$ von Neumann algebra to it. There are several interesting questions that future work could address.\par
The most obvious and most ambitious avenue for future research is to embrace more realistic models of holography, possibly applying the techniques and results herein to the Hilbert series of these models. Maybe without studying full-fledged holographic examples, a desirable feature that the IOP model does not possess \cite{Michel:2016kwn}, nor do our models, would be that they exhibit maximal chaos. 
It would be interesting to see whether one can find maximally chaotic models and study their large $N$ von Neumann algebra by extending the methods developed here to more complicated systems. 
Large $N$ chaos is likely to have a nice interpretation in the language of modular flow \cite{DeBoer:2019kdj,Faulkner:2022ada}, and more generally, the algebraic properties of a local bulk spacetime are closely related to mathematical definitions of chaos in von Neumann algebras \cite{Gesteau:2023rrx,Ouseph:2023juq}. \par
A complementary question to ask is whether the examples presented here, which are only known to possess some of the weaker chaotic properties of \cite{Gesteau:2023rrx,Ouseph:2023juq}, can still have a bulk description that is geometric in some sense. In particular, the compactness of the spectral density of the models of this paper seems to make their putative dual description very stringy. It would be interesting to understand whether for such models, one can still make sense of a notion of dual geometry that, in particular, displays some kind of connectedness between the two sides of the thermofield double.
Another feature that we would like to eventually remove is the introduction of an external probe, which would be absent in a fully holographic system, where the relevant correlation functions should correspond to operators directly within the gauge theory.\par
Another question that remains open is a first principles understanding of the sum over flavors. Posing this question from the bulk side of the holographic correspondence, it would be very interesting to establish a connection with other recent proposals entailing averaging procedures in black hole physics.\par
One collateral observation (cf. Appendix \ref{sec:appCardyCS}) is that, for the very special case of $\mN=4$ super-Yang--Mills in the Cardy limit, our sum over $L$ descends to a sum over the Riemann sheets of \cite{Cassani:2021fyv}. It remains to be seen whether there is a general lesson to be learnt from this comment.\par
\medskip
It may also be interesting to study these systems in other states that are not thermal, like, for instance, some microcanonical versions of the thermofield double akin to the ones considered in \cite{Chandrasekaran:2022eqq}. Other types of von Neumann algebras are supposed to appear in that case, and it would be interesting to check the proposal in our matrix model context.\par
A related issue would be to study the crossed product of our large $N$ algebras with their modular automorphism group, as well as perturbative $1/N$ corrections to our calculations. Moreover, including both perturbative and non-perturbative corrections to the large $N$ correlation functions computed here is expected to give a von Neumann algebra of type I. It would be interesting to see explicitly how this happens in examples, deploying the non-perturbative techniques of e.g. \cite{Marino:2008ya,Marino:2012zq,Marino:2007te,Marino:2022rpz}.\par
\medskip
Finally, it is also worthwhile to ask to what extent our criterion sheds light on partial deconfinement. The partial deconfinement proposal \cite{Hanada:2016pwv,Berenstein:2018lrm,Hanada:2019kue} argues for an intermediate coexistence phase in the Hagedorn transition. Passing to the microcanonical ensemble at energies $1 \ll E \ll N^2$, the first order deconfinement transition in the gauge theory is smoothed into a phase in which only a $U(N_{\text{eff}}) \subset U(N)$ is deconfined, with $N_{\text{eff}} ^2 \propto E$. On the bulk side, this supposes the identification of the small black hole phase with the long string phase \cite{Berenstein:2018lrm}. A direct application of our formula would give a type III$_1$ von Neumann algebra also in this intermediate region, in agreement with the proposed picture. A mathematically rigorous treatment of the partially deconfined phase is a problem that we leave for future work.

\vspace{0.6cm}
\subsubsection*{Acknowledgements}
We thank Adam Artymowicz, Kasia Budzik, Carlos Florentino, Shota Komatsu, Ji Hoon Lee, Hong Liu, Matilde Marcolli, Sridip Pal, Kyriakos Papadodimas, Giulio Ruzza, Miguel Tierz, Fengjun Xu and Yixin Xu for discussions, and Hong Liu and Matilde Marcolli for comments on the draft. LS also thanks the Departamento de An\'alisis Matem\'atico y Matem\'atica Aplicada, Universidad Complutense de Madrid for hospitality during the completion of this project.
The work of LS is supported by the Shuimu Scholars program of Tsinghua University, by the Beijing Natural Science Foundation project IS23008 ``Exact results in algebraic geometry from supersymmetric field theory'', and in part by the FCT project PTDC/MAT-PUR/30234/2017 during the early stages of the work. This work was initiated at the Aspen Center for Physics, which is supported by National Science Foundation grant PHY-1607611.

\vspace{0.6cm}
\begin{appendix}

\section{General construction of von Neumann algebras for factorizing systems}
\label{sec:ccrcar}

In this slightly more formal appendix, we introduce a general procedure that allows to construct von Neumann algebras associated to factorizing systems. This procedure heavily relies on the rigorous mathematical results of \cite{Derezinski}. We comment on how this construction can be generically applied to study large $N$ algebras in AdS/CFT.

\subsection{Bosonic case: Canonical commutation relations}
\label{app:CCR}

In order to study bosonic factorizing systems in terms of operator algebras, the right object to introduce is a \textit{representation of the canonical commutation relations} (CCR).

\begin{defin}
Let $Y$ be a real vector space and $\omega$ be an antisymmetric form on $Y$. Let $\mathscr{H}$ be a Hilbert space. A map $y\mapsto W(y)$ is said to be a representation of the CCR over $Y$ in $\mathscr{H}$ if for $y_1,y_2\in Y$, it satisfies the relation \begin{align}W(y_1)W(y_2)=e^{-\frac{i}{2}\omega(y_1,y_2)}W(y_1+y_2).\end{align} 
\end{defin}

In this paper, we are interested in states that factorize in the large $N$ limit. These states are known as \textit{quasi-free states} in the operator-algebraic language. We now define this notion.

\begin{defin}
Let $y\mapsto W(y)$ be a representation of the CCR on a Hilbert space $\mathscr{H}$. A vector $\ket{\Psi}\in\mathscr{H}$ is said to be quasi-free if it is cyclic, and there exists a quadratic form $\eta$ in $\mathscr{H}$ such that for all $y$, \begin{align}\bra{\Psi}W(y)\ket{\Psi}=e^{-\frac{1}{4}\eta(y,y)}.\end{align}
\end{defin}
As the form $\eta$ is quadratic, the fields can be treated as Gaussian, which implies that the correlation functions satisfy Wick's theorem.

An alternative definition of quasi-free states can be formulated thanks to the factorization property of correlation functions. More precisely, we have the following result.

\begin{prop}[\cite{Derezinski}]
Let $\ket{\Psi}$ be a vector in a strongly continuous representation of the CCR $W(y)=e^{i\phi(y)}$. $\ket{\Psi}$ is quasi-free if and only if for all $y_1,...y_n$, $\ket{\Psi}\in \mathrm{Dom}(\phi(y_1)\dots\phi(y_n))$, and
\begin{align}
\bra{\Psi}\phi(y_1)\dots\phi(y_{2m-1})\ket{\Psi}=0,
\end{align}
\begin{align}
\bra{\Psi}\phi(y_1)\dots\phi(y_{2m})\ket{\Psi}=\sum_{\varpi\text{\rm Wick pairing}}\prod_{j=1}^m\bra{\Psi}\phi(y_{\varpi(2j-1)})\phi(y_{\varpi(2j)})\ket{\Psi}.
\end{align}
\end{prop}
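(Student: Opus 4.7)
This proposition is the Weyl-operator incarnation of the classical Isserlis/Wick theorem: the quasi-free condition $\bra{\Psi}W(y)\ket{\Psi}=e^{-\eta(y,y)/4}$ asserts that the Weyl generating function is Gaussian, while Wick factorization asserts that all moments factorize as Gaussians do. My strategy is to pass between these two formulations by differentiating, respectively summing, the Weyl generating function at the origin. The algebraic backbone is the Weyl relation $W(y_1)W(y_2) = e^{-i\omega(y_1,y_2)/2}W(y_1+y_2)$, which turns products of Weyl operators into a single Weyl operator with an explicit phase, and Stone's theorem, which, thanks to strong continuity, identifies $\phi(y)$ as the self-adjoint generator of $t\mapsto W(ty)$.

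\textbf{Forward direction ($\Rightarrow$).} Assuming $\ket{\Psi}$ is quasi-free, I would iterate the Weyl relation to obtain
\begin{equation}
\bra{\Psi}W(t_1 y_1)\cdots W(t_n y_n)\ket{\Psi} = \exp\!\left(-\tfrac{i}{2}\sum_{i<j}t_i t_j\,\omega(y_i,y_j)-\tfrac{1}{4}\sum_{k,l}t_k t_l\,\eta(y_k,y_l)\right).
\end{equation}
This is a Gaussian in $(t_1,\dots,t_n)$ with analytic continuation to a complex neighborhood of the origin, which provides the analytic-vector bound needed to conclude that $\ket{\Psi}\in\mathrm{Dom}(\phi(y_1)\cdots\phi(y_n))$ and that the $n$-point function equals $(-i)^n\partial_{t_1}\cdots\partial_{t_n}$ of the left-hand side at $t=0$. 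The $n=2$ case yields $\bra{\Psi}\phi(y_i)\phi(y_j)\ket{\Psi}=\tfrac{1}{2}\eta(y_i,y_j)+\tfrac{i}{2}\omega(y_i,y_j)$, and the $n$-th mixed derivative of a Gaussian exponential is, by the elementary Isserlis/Wick combinatorics, the sum over pairings of $\{1,\dots,n\}$ in which each pair contributes precisely this second-derivative factor. Odd moments vanish because the Gaussian exponent is a homogeneous quadratic form, so its Taylor series contains only even monomials.

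\textbf{Converse direction ($\Leftarrow$).} Fixing a single $y$ and specializing the Wick formula to $y_1=\cdots=y_{2m}=y$ gives $\bra{\Psi}\phi(y)^{2m}\ket{\Psi}=(2m-1)!!\,\bra{\Psi}\phi(y)^2\ket{\Psi}^m$, where $(2m-1)!!$ counts the pairings of $2m$ labels, while odd moments vanish. Using strong continuity and Stone's theorem to write $W(ty)=e^{it\phi(y)}$, and using the standing assumption that $\ket{\Psi}$ lies in the domain of arbitrary polynomials in $\phi(y)$, I would Taylor-expand to obtain
\begin{equation}
\bra{\Psi}W(ty)\ket{\Psi} = \sum_{m\ge 0}\frac{(it)^{2m}}{(2m)!}(2m-1)!!\,\bra{\Psi}\phi(y)^2\ket{\Psi}^m = e^{-\frac{t^2}{2}\bra{\Psi}\phi(y)^2\ket{\Psi}}.
\end{equation}
Setting $t=1$ and defining $\eta(y,y):=2\bra{\Psi}\phi(y)^2\ket{\Psi}$, then extending by polarization to a quadratic form on $Y$, recovers the quasi-free form.

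\textbf{The main obstacle.} The genuinely delicate point in both directions is the handling of unbounded operators. Strong continuity is required to invoke Stone's theorem; the interchange of strong limits with infinite series in the converse direction requires $\ket{\Psi}$ to be an analytic vector for each $\phi(y)$; and the passage from the Gaussian generating function to its moment hierarchy in the forward direction requires the same analyticity to justify differentiation at the origin. In the forward direction the Gaussian growth of moments makes this automatic, while in the converse direction it has to be extracted from the Wick factorization together with the standing domain hypothesis. Once these analytic-vector considerations are in place, the algebraic manipulations above run through cleanly and give the stated equivalence.
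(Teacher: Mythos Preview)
The paper does not actually prove this proposition: it is stated with a citation to \cite{Derezinski} and used as an input, so there is no ``paper's own proof'' to compare against. Your argument is the standard one and is essentially correct --- iterating the Weyl relation to reduce products of Weyl operators to a single Gaussian, then invoking Isserlis/Wick combinatorics for the moments in one direction, and resumming the exponential series in the other --- and your identification of the analytic-vector issue as the only nontrivial technical point is accurate.
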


The theory of quasi-free states of the CCR is well-studied, and there is a generic procedure that allows for the construction of most quasi-free representations. It is formalized by the notion of Araki--Woods representation, which we now introduce.

Let $Z$ be a Hilbert space, and let $\Gamma$ be the bosonic Fock space over $Z\oplus\bar{Z}$. We equip the space $\mathrm{Re}(Z\oplus\bar{Z}\oplus(\overline{Z\oplus\bar{Z}}))$ with the symplectic form 
\begin{align}
\omega((x,\bar{x}),(y,\bar{y})) :=2\mathrm{Im}(x,y).
\end{align}
Then, there is a canonical representation of the CCR given by \begin{align}W(z_1,\bar{z}_2):=e^{i\phi(z_1,\bar{z}_2)},\end{align} where \begin{align}\phi(z_1,\bar{z}_2)=\frac{1}{\sqrt{2}}(a^\dagger(z_1,\bar{z}_2)+a(z_1,\bar{z}_2)),\end{align}$a$ and $a^\dagger$ being the usual raising and lowering operators. 

The Araki--Woods representations \cite{doi:10.1063/1.1704002} of the CCR are parameterized by an operator $\rho$, defined by \begin{align}\rho:=\gamma(1-\gamma)^{-1},\end{align} where $\gamma$ is a self-adjoint operator satisfying $0\leq\gamma\leq 1$. Then, for $z\in\mathrm{Dom}(\rho^{\frac{1}{2}})$, we can define two unitary operators $W_{\gamma,l}$ and $W_{\gamma,r}$ on $\Gamma$ by 
\begin{equation}
\begin{aligned}
W_{\gamma,l}(z) &:=W((\rho+1)^{\frac{1}{2}}z,\bar{\rho}^{\frac{1}{2}}\bar{z}), \\
W_{\gamma,r}(\bar{z})& :=W(\rho^{\frac{1}{2}}z,(\bar{\rho}+1)^{\frac{1}{2}}\bar{z}).
\end{aligned}
\end{equation}
The von Neumann algebras generated by the $W_{\gamma,l}(z)$ (resp. $W_{\gamma,r}(\bar{z})$) are called the left (resp. right) Araki--Woods algebras associated to the operator $\rho$. Note, in particular, that one can recover the operator $\rho$ entirely from the two-point functions of fields and creation and annihilation operators, for example we have the identity \begin{align}(z_2,\rho z_1)=\bra{\Omega}a^\dagger_{\gamma,l}(z_1)a_{\gamma,l}(z_2)\ket{\Omega},\end{align}
where the creation and annihilation operators $a^\dagger_{\gamma,l}$ and $a_{\gamma,l}$ are defined by 
\begin{equation}
\begin{aligned}
a^\dagger_{\gamma,l}(z) & :=a^\dagger((\rho+1)^{\frac{1}{2}}z,0)+a(0,\bar{\rho}^{\frac{1}{2}}\bar{z}), \\
a_{\gamma,l}(z) & :=a((\rho+1)^{\frac{1}{2}}z,0)+a^\dagger(0,\bar{\rho}^{\frac{1}{2}}\bar{z}),
\end{aligned}
\end{equation}
and $\ket{\Omega}$ is the vacuum of the Fock space of the Araki--Woods representation.

The power of Araki--Woods representations comes from the fact that, under mild assumptions, any quasi-free representation of the CCR is isomorphic to an Araki--Woods representation. In particular, the following result, due to Derezi\'{n}ski, holds:

\begin{thm}[\cite{Derezinski}]
Let $y\mapsto W(y)$, $y\in Y_0$, be a quasi-free representation of the CCR in a Hilbert space $\mathscr{H}$, with a cyclic quasi-free vector $\ket{\Psi}$ satisfying $\bra{\Psi}W(y)\ket{\Psi}=e^{-\frac{1}{4}\eta(y,y)}$, where $\eta$ is a nondegenerate inner product. Let $Y$ be the real Hilbert space completion of $Y_0$ for $\eta$, and $\omega$ be the bounded extension of the antisymmetric form associated to $Y_0$ to $Y$. Assume $\omega$ is nondegenerate on $Y$. Then, $W$ is unitarily equivalent to an Araki--Woods representation of the CCR.
\end{thm}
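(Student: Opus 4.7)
My plan would be to proceed by an explicit reconstruction: extract from the abstract data $(Y,\eta,\omega)$ the ingredients needed to build a concrete Araki--Woods representation, and then transport the abstract representation into it via a unitary intertwiner that sends $\ket{\Psi}$ to the Fock vacuum $\ket{\Omega}$.

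First, I would turn $Y$ into a complex Hilbert space. By hypothesis $\eta$ is a real inner product on $Y$ and $\omega$ is a bounded, antisymmetric, nondegenerate bilinear form. The Riesz representation theorem applied to $\omega$ relative to $\eta$ produces a bounded antisymmetric operator $A$ on $Y$ with $\omega(y_1,y_2)=\eta(y_1,A y_2)$. The polar decomposition of $A$ yields $A=j|A|$, where $j$ is a bounded anti-involution ($j^\ast=-j$, $j^2=-1$ after quotienting by any kernel — excluded here by nondegeneracy of $\omega$). Declaring multiplication by $i$ to act as $-j$ promotes $Y$ to a complex pre-Hilbert space, whose completion I will call $Z$, with sesquilinear inner product $(y_1,y_2):=\eta(y_1,y_2)+i\,\eta(y_1,j y_2)$. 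Inside $Z$, the positive operator $|A|$ satisfies $|A|\le 1$ by the general CCR positivity inequality $|\omega(y_1,y_2)|^2\le \eta(y_1,y_1)\,\eta(y_2,y_2)$ (which itself follows from positivity of the two-point function), so I may define $\gamma:=\tfrac12(1-|A|)\in[0,1]$ and $\rho:=\gamma(1-\gamma)^{-1}$ (possibly unbounded), exactly the datum needed to define an Araki--Woods representation as recalled earlier in the excerpt.

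Second, with $\rho$ in hand I would write down the Araki--Woods representation $W_{\gamma,l}$ on $\Gamma$ over $Z\oplus\bar Z$, and compute its vacuum two-point function. A direct computation using the formulas for $W_{\gamma,l}(z)$ shows
\begin{equation*}
\bra{\Omega}W_{\gamma,l}(y)\ket{\Omega}=\exp\bigl(-\tfrac14\,\eta_{\mathrm{AW}}(y,y)\bigr),
\end{equation*}
and by construction $\eta_{\mathrm{AW}}=\eta$ on $Y$, because both are recovered from $(|A|,j)$ via the same formula. Hence the generating functionals of $W$ on $\ket{\Psi}$ and of $W_{\gamma,l}$ on $\ket{\Omega}$ coincide on $Y_0$, and therefore (by boundedness of the $W(y)$ and density of $Y_0$ in $Y$) on all of $Y$.

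Third, I would build the unitary intertwiner. Define $U$ on the dense subspace spanned by $\{W(y)\ket{\Psi}:y\in Y_0\}$ by $U\,W(y)\ket{\Psi}:=W_{\gamma,l}(y)\ket{\Omega}$. Isometry is immediate because the Weyl relations reduce the inner product $\langle W(y_1)\Psi,W(y_2)\Psi\rangle$ to a Gaussian in $\eta$ and $\omega$ alone; the identical reduction holds on the Araki--Woods side. Cyclicity of $\ket{\Psi}$ gives that the domain is dense in $\mathscr{H}$; cyclicity of $\ket{\Omega}$ for $W_{\gamma,l}$, which is standard for Araki--Woods representations with $0\le\gamma\le 1$ nondegenerate, gives density of the range. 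Thus $U$ extends to a unitary, and it intertwines $W(y)$ with $W_{\gamma,l}(y)$ by construction.

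The main obstacle, I expect, is the step of genuinely extracting $(j,|A|,\gamma,\rho)$ from the abstract data and checking that the operators are well defined and self-adjoint on the completion $Z$ — in particular, checking $|A|\le 1$ and handling the possibility that $\gamma$ has eigenvalue $1$ (where $\rho$ becomes unbounded) so that $W_{\gamma,l}$ is defined only on $\mathrm{Dom}(\rho^{1/2})$. This is where the nondegeneracy of $\omega$ on $Y$ is used: it excludes the kernel of $A$ and allows $j$ to square to $-1$ on all of $Y$; without it one would land in a degenerate Araki--Woods representation tensored with a trivial classical factor. Once that functional-analytic bookkeeping is done, the rest is the standard Gaussian/quasi-free reconstruction, and one can invoke the criteria of Subsection \ref{sec:vNalgebra} to read off the type of the resulting von Neumann algebra directly from the spectrum of $\gamma$.
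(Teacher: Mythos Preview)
The paper does not give a proof of this theorem: it is stated in Appendix~\ref{sec:ccrcar} as a result imported from \cite{Derezinski}, with no argument supplied in the text. There is thus no in-paper proof to compare your proposal against.

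Your strategy is nonetheless the standard one and is sound in outline: polar-decompose the skew-adjoint operator $A$ representing $\omega$ relative to $\eta$, use the phase $j$ as a complex structure (nondegeneracy of $\omega$ is precisely what forces $j^2=-1$ on all of $Y$), extract the Araki--Woods density parameter from the modulus $|A|$, and then build the intertwiner by matching Weyl generating functionals on the two cyclic vectors. The one concrete slip is the identification $\gamma=\tfrac12(1-|A|)$. A single-mode thermal check with the paper's conventions gives $|A|=(2\rho+1)^{-1}=(1-\gamma)/(1+\gamma)$, so the correct inversion is $\gamma=(1-|A|)/(1+|A|)$; your formula confines $\gamma$ to $[0,\tfrac12]$ and makes $\rho$ bounded, so the verification ``by construction $\eta_{\mathrm{AW}}=\eta$'' in your second step would actually fail as written. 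With that bookkeeping fix the rest of the argument goes through.
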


The interest of knowing that most quasi-free representations of the CCR are isomorphic to Araki--Woods representations is that there exist simple sufficient conditions to determine the type of their bicommutant.

\begin{thm}[\cite{Derezinski}]
Let $M$ be the bicommutant of an Araki--Woods representation of the CCR, and $\gamma$ the associated operator defined as above. If $\gamma$ is trace-class, then $M$ has type I. If $\gamma$ has some continuous spectrum, then $M$ has type III$_1$. 
\end{thm}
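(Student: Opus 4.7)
The overall plan is to use Tomita--Takesaki theory applied to the Araki--Woods state to compute the modular operator $\Delta$ explicitly, and to read off the type of $M$ from its spectrum. The key structural fact that makes this tractable is that the Araki--Woods modular automorphism group is a Bogoliubov transformation, so $\Delta$ second-quantizes a one-particle modular operator built out of $\gamma$. Concretely, I would first verify that the Fock vacuum $\ket{\Omega}$ is cyclic and separating for $M$ (the separating property being equivalent to $\gamma$ having no kernel, which is implicit in the trace-class and continuous-spectrum assumptions once degenerate modes are dropped by quotienting), and then compute $\Delta$ from the two-point functions of the $W_{\gamma,l}(z)$. The outcome is that $\Delta = \Gamma(\delta)$, the second quantization of the one-particle operator $\delta := \gamma (1-\gamma)^{-1} \cdot (1-\gamma) \gamma^{-1}$-style expression — more cleanly, $\ln \delta = \ln \gamma - \ln(1-\gamma)$ — acting on $Z \oplus \bar Z$.

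For the type~I part, the plan is to exploit the trace-class hypothesis to diagonalize $\gamma = \sum_n \gamma_n |e_n\rangle\langle e_n|$ with $\sum_n \gamma_n < \infty$. In each one-dimensional eigenspace the Araki--Woods construction reduces to a thermal representation of a single harmonic oscillator, whose bicommutant is the type~$\mathrm I_\infty$ factor $B(\mathcal{F}_n)$ on the corresponding Fock space. The trace-class condition is precisely the Shale criterion guaranteeing that the Bogoliubov transformation implementing the Araki--Woods state is unitarily implementable on the Fock representation; consequently $\ket{\Omega}$ lies in the Fock space (up to a global unitary), and $M$ is spatially isomorphic to the standard type~I representation of the CCR on $\Gamma(Z)$. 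This gives the type~I conclusion; the degeneracies among the $\gamma_n$ determine whether one lands in $\mathrm I_n$ or $\mathrm I_\infty$.

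For the type~III$_1$ part, the key calculation is that the Connes spectrum $S(M)$ of the factor generated by a quasi-free state equals the closure in $[0,\infty)$ of the subgroup of $\mathbb R_{>0}$ generated by $\mathrm{Sp}(\delta)$. If $\gamma$ has some continuous spectral component, then so does $\delta = \gamma(1-\gamma)^{-1}$ (away from $0$ and $1$), and in particular $\mathrm{Sp}(\delta)$ contains a nontrivial interval. The multiplicative subgroup generated by an interval is dense in $\mathbb R_{>0}$, so $S(M) = [0,\infty)$, which by Connes' classification is exactly the defining property of a type~III$_1$ factor. To close the argument one also checks that $M$ is a factor (center triviality), which follows from the fact that the one-particle symplectic space is nondegenerate and the representation is irreducible in the appropriate Bogoliubov sense.

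The main obstacle I expect is the careful bookkeeping of the unbounded operators $\delta$ and $\ln\delta$, and rigorously establishing the passage from one-particle to Fock-level spectra of the modular operator, especially ensuring that the continuous part of $\mathrm{Sp}(\gamma)$ really does survive in $\mathrm{Sp}(\Delta)$ after second quantization. A secondary subtlety is the justification that the Fock vacuum is separating: this requires $\gamma < 1$ in the appropriate sense, and one typically handles the possible kernel of $\gamma$ or $1-\gamma$ by quotienting by the corresponding null directions of the symplectic form before invoking Tomita--Takesaki. Once these technical points are settled, the classification follows from Derezi\'nski's explicit formulas \cite{Derezinski} combined with the Araki--Woods spectral analysis.
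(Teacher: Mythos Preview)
The paper does not supply its own proof of this theorem: it is stated with attribution to \cite{Derezinski} and used as a black box, with only a pointer to \cite{Golodets_Neshveyev_1998,Furuya:2023fei} for ``a more detailed justification of the type III$_1$ part of this result based on the triviality of the centralizer.'' There is therefore nothing in the paper to compare your proposal against.

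Your sketch is broadly the standard route (second-quantized modular operator, Shale criterion for type~I, Connes spectrum for type~III$_1$), and is consistent with how the result is proved in the cited literature. One minor point: your formula for the one-particle modular operator is garbled --- the clean statement is that on $Z\oplus\bar Z$ the modular operator is the second quantization of $\gamma^{-1}(1-\gamma)\oplus \gamma(1-\gamma)^{-1}$ (or equivalently $e^{-\beta|\omega|}$ in the concrete realization of Section~\ref{sec:vNpivot}), so $\ln\delta$ has spectrum $\pm\ln\big(\gamma(1-\gamma)^{-1}\big)$ on the one-particle space. The argument that a continuous piece of $\mathrm{Sp}(\gamma)$ forces $S(M)=[0,\infty)$ then goes through as you say. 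The separating/cyclic issue you flag is genuine but is handled in the Araki--Woods setup by the standing assumption $0\le\gamma\le 1$ with the degenerate directions quotiented out, exactly as the paper does when defining $L^2(\mathbb R,\rho)$.
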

It is this statement that, applied to our context in Section \ref{sec:vNtot}, allows us to conclude about the type of the von Neumann algebras on both sides of the Hagedorn phase transitions.

\subsection{Fermionic case: Canonical anticommutation relations}

We can perform an entirely analogous analysis for fermionic oscillators. The starting point is now the algebra of the \textit{canonical anticommutation relations} (CAR).

\begin{defin}
Let $Y$ be a real vector space and $\alpha$ be an positive inner product on $Y$. Let $\mathscr{H}$ be a Hilbert space. A map $y\mapsto \phi(y)$ is said to be a representation of the CAR over $Y$ in $\mathscr{H}$ if its range only contains bounded self-adjoint operators, and for $y_1,y_2\in Y$, it satisfies the relation \begin{align}\{\phi(y_1),\phi(y_2)\}=2\alpha(y_1,y_2).\end{align}
\end{defin}

In the same way as before, we can define the notion of quasi-free state of the CAR. 

\begin{defin}
\label{def:appquasifree}
Let $y\mapsto W(y)$ be a representation of the CAR on a Hilbert space $\mathscr{H}$. A vector $\ket{\Psi}\in\mathscr{H}$ is said to be quasi-free if it is cyclic, and for all $y_i$, \begin{align}\bra{\Psi}\phi(y_1)\dots \phi(y_{2m-1})\ket{\Psi}=0,\end{align}and
\begin{align}\bra{\Psi}\phi(y_1)\dots \phi(y_{2m})\ket{\Psi}=(-1)^{\frac{m(m-1)}{2}}\sum_{\varpi\text{\rm Wick pairing}}\mathrm{sgn} (\varpi) \prod_{j=1}^m\bra{\Psi}\phi(y_{\varpi(j)})\phi(y_{\varpi(j+m)})\ket{\Psi}.\end{align}
\end{defin}

Now, closely following the bosonic case, we introduce a generic procedure to construct a large class of quasi-free representations of the CAR and classify them. These representations are called Araki--Wyss representations \cite{ArakiWyss}.

Let $Z$ be a Hilbert space, and let $\Gamma$ be the fermionic Fock space on  $Z\oplus\bar{Z}$. We see the space $\mathrm{Re}(Z\oplus\bar{Z}\oplus(\overline{Z\oplus\bar{Z}}))$ as a real Hilbert space.
Then, there is a canonical representation of the CAR given by \begin{align}\phi(z_1,\bar{z}_2)=\frac{1}{\sqrt{2}}(a^\dagger(z_1,\bar{z}_2)+a(z_1,\bar{z}_2)),\end{align}$a$ and $a^\dagger$ being the usual fermionic raising and lowering operators. 

Similarly to the Araki--Woods representations of the CCR, the Araki--Wyss representations of the CAR are parameterized by an operator $\rho$, defined by \begin{align}\rho:=\gamma(1-\gamma)^{-1},\end{align} where $\gamma$ is a self-adjoint operator satisfying $0\leq\gamma\leq 1$. Then, for $z\in\mathrm{Dom}(\rho^{\frac{1}{2}})$, we can define the fields $\phi_{\gamma,l}$ and $\phi_{\gamma,r}$ on $\Gamma$ by 
\begin{equation}
\begin{aligned}
\phi_{\gamma,l}(z) &:=\phi((1-\rho)^{\frac{1}{2}}\,z,\bar{\rho}^{\frac{1}{2}}\,\bar{z}),\\
\phi_{\gamma,r}(z)&:=\phi(\rho^\frac{1}{2}\,z,(1-\bar{\rho})^\frac{1}{2}\bar{z}).
\end{aligned}
\end{equation}
The von Neumann algebras generated by the $\phi_{\gamma,l}(z)$ (resp. $\phi_{\gamma,r}(z)$) are called the left (resp. right) Araki--Wyss algebras associated to the operator $\rho$. Note, in particular, that one can recover the operator $\rho$ entirely from the two-point functions of fields and creation and annihilation operators, for example we have the identity \begin{align}(z_2,\rho z_1)=\bra{\Omega}a^\dagger_{\gamma,l}(z_1)a_{\gamma,l}(z_2)\ket{\Omega},\end{align}
where the creation and annihilation operators $a^\dagger_{\gamma,l}$ and $a_{\gamma,l}$ are defined by 
\begin{equation}
\begin{aligned}
a^\dagger_{\gamma,l}(z) &:=e^{\frac{i \pi}{2} \mathsf{N}(\mathsf{N}-1)} ~ \left[ a^\dagger(0,(1-\bar{\rho})^{\frac{1}{2}}\bar{z})+a(\rho^{\frac{1}{2}}z,0) \right] ~ e^{\frac{i \pi}{2} \mathsf{N}(\mathsf{N}-1)},\\
a_{\gamma,l}(z) &:=e^{\frac{i \pi}{2} \mathsf{N}(\mathsf{N}-1)} ~ \left[ a(0,(1-\bar{\rho})^{\frac{1}{2}}\bar{z})+a^\dagger(\rho^{\frac{1}{2}}z,0) \right] ~ e^{\frac{i \pi}{2} \mathsf{N}(\mathsf{N}-1)},
\end{aligned}
\end{equation}
where $\mathsf{N}$ is the number operator. Also in the fermionic case, any reasonable quasi-free representation of the CAR is isomorphic to an Araki--Wyss representation.
\begin{thm}[\cite{Derezinski}]
Let $y\in Z_0\mapsto \phi(y)$ be a quasi-free representation of the CAR on a Hilbert space $\mathscr{H}$, with a cyclic quasi-free vector $\ket{\Psi}$. Let $\omega$ be the antisymmetric form defined by \begin{align}\omega(y_1,y_2):=\frac{1}{i}\bra{\Psi}[\phi(y_1),\phi(y_2)]\ket{\Psi},\end{align} and suppose that the kernel of $\omega$ is even or infinite-dimensional. Then, $W$ is unitarily equivalent to an Araki--Wyss representation of the CAR.
\end{thm}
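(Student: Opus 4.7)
The strategy is modeled on the bosonic Araki--Woods case sketched earlier in the appendix, with the role of the symmetric two-point form and the commutator swapped with those of the anticommutator and the antisymmetric two-point form. Because $\ket{\Psi}$ is quasi-free, all $n$-point functions are determined by the two-point function, so it suffices to realize the two-point function as that of a standard Araki--Wyss representation and then extend by cyclicity. The first step is therefore to write
\begin{equation}
	\bra{\Psi}\phi(y_1)\phi(y_2)\ket{\Psi} = \alpha(y_1,y_2) + \tfrac{i}{2}\omega(y_1,y_2),
\end{equation}
where the symmetric part is fixed by the CAR $\{\phi(y_1),\phi(y_2)\}=2\alpha(y_1,y_2)$ and $\omega$ is the antisymmetric form given in the statement. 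Completing $Z_0$ in the inner product $\alpha$ yields a real Hilbert space $Y$, on which $\omega$ extends to a bounded antisymmetric form with $\lvert\omega(y_1,y_2)\rvert\le 2\alpha(y_1,y_1)^{1/2}\alpha(y_2,y_2)^{1/2}$ (by Cauchy--Schwarz applied to $\bra{\Psi}\phi(y_i)\phi(y_j)\ket{\Psi}$).

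The second step is to manufacture a complex structure on $Y$ from these data. Since $\omega$ is bounded and antisymmetric relative to the inner product $\alpha$, there exists a bounded skew-adjoint operator $J$ on $Y$ with $\omega(y_1,y_2)=2\alpha(y_1,Jy_2)$ and $\lVert J\rVert\le 1$. The polar decomposition of $J$ provides a partial isometry $j$ with $j^2=-1$ on $(\ker\omega)^{\perp}$. To extend $j$ to all of $Y$ we need $\ker\omega$ to carry its own complex structure, and this is precisely where the hypothesis ``$\ker\omega$ is even or infinite-dimensional'' is used: such a real Hilbert space admits an isometric operator squaring to $-\mathbf 1$, obtained by pairing an orthonormal basis. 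In this way $(Y,j)$ becomes a complex Hilbert space which we relabel $Z$, and $\alpha$ upgrades to a sesquilinear inner product $\braket{\cdot,\cdot}$ on $Z$.

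The third step is to read off the operator $\gamma$ (and hence $\rho=\gamma(\mathbf{1}-\gamma)^{-1}$) that parametrizes the candidate Araki--Wyss representation. Since the two-point function is a bounded positive sesquilinear form on $Z$ dominated by $\braket{\cdot,\cdot}$ (again by the CAR inequality $0\le \phi(y)^2\le \alpha(y,y)$), there is a unique bounded self-adjoint operator $\gamma$ on $Z$ with $0\le\gamma\le\mathbf{1}$ and
\begin{equation}
	\bra{\Psi}a^{\dagger}(z_1)a(z_2)\ket{\Psi} = \braket{z_2,\gamma z_1},
\end{equation}
where the would-be raising and lowering operators $a^{\dagger},a$ are extracted from $\phi$ via the complex structure. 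Direct computation then shows that the operators $\phi_{\gamma,l}$ defined on the Araki--Wyss Fock space $\Gamma$ built over $Z\oplus\bar Z$ have exactly the same two-point function in the Araki--Wyss vacuum $\ket{\Omega}$ as the original $\phi$'s in $\ket{\Psi}$.

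The final step is the unitary equivalence itself. Define $U:\mathscr{H}\to\Gamma$ on the dense set $\{\phi(y_1)\cdots\phi(y_n)\ket{\Psi}\}$ by
\begin{equation}
	U\,\phi(y_1)\cdots\phi(y_n)\ket{\Psi} := \phi_{\gamma,l}(y_1)\cdots\phi_{\gamma,l}(y_n)\ket{\Omega}.
\end{equation}
Using the quasi-free expansion of Definition~\ref{def:appquasifree} on both sides, together with the matching two-point functions, the inner products on corresponding vectors agree term by term, so $U$ is isometric; cyclicity of $\ket{\Psi}$ makes it densely defined, and cyclicity of $\ket{\Omega}$ (a standard property of the Araki--Wyss construction) makes its range dense, hence $U$ extends uniquely to a unitary intertwining $\phi$ and $\phi_{\gamma,l}$. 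The technically delicate point, where I expect the main work to lie, is the construction in the second step: showing that the operator $J$ arising from $\omega$ admits a genuine complex structure on all of $Y$ is easy on $(\ker\omega)^{\perp}$ but requires the evenness/infinite-dimensionality hypothesis on $\ker\omega$ to avoid a parity obstruction; this is the exact analogue of the nondegeneracy assumption on $\omega$ in the bosonic Araki--Woods theorem, and is also the reason a counterexample must exist when $\ker\omega$ is finite and odd-dimensional.
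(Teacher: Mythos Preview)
The paper does not give a proof of this theorem: it is stated with the citation \cite{Derezinski} and used as a black box, just like the bosonic counterpart earlier in the appendix. So there is nothing to compare against in the paper itself.

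That said, your outline is the standard route to this result and is essentially what one finds in Derezi\'{n}ski's notes: extract the symmetric and antisymmetric parts of the two-point function, use the polar decomposition of the skew operator representing $\omega$ to build a complex structure on $(\ker\omega)^\perp$, invoke the parity hypothesis to extend it over $\ker\omega$, read off $\gamma$ from the two-point function, and then use the quasi-free (Pfaffian) expansion to check that the GNS map to the Araki--Wyss Fock space is isometric. Your identification of the role of the evenness/infinite-dimensionality hypothesis is exactly right; a real Hilbert space of odd finite dimension admits no isometry squaring to $-\mathbf{1}$, so no K\"ahler structure and hence no Araki--Wyss model. One small caution: in the fermionic case the two-point function is bounded automatically (the $\phi(y)$ are bounded), so the completion step is less delicate than in the CCR case, but you should be explicit that the operator $|J|$ has norm at most $1$ and that $\gamma$ is recovered as $\tfrac{1}{2}(\mathbf{1}-|J|)$ (or the appropriate variant), since this is where positivity of the state enters.
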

Sufficient results are also available to determine the type of the bicommutant of an Araki--Wyss representation of the CAR.
\begin{thm}[\cite{Derezinski}]
Let $M$ be the bicommutant of an Araki--Wyss representation of the CAR, and $\gamma$ the associated defined as above. If $\gamma$ is trace-class, then $M$ has type I. If $\gamma$ has some continuous spectrum, then $M$ has type III$_1$. 
\end{thm}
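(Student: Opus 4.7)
The strategy is to reduce the type classification to a spectral analysis of a one-particle operator, via the Tomita--Takesaki modular theory applied to the vacuum state $\ket{\Omega}\in\Gamma$. The starting observation is that $\ket{\Omega}$ is cyclic and separating for the left Araki--Wyss algebra $M$, so it has an associated modular operator $\Delta_\Omega$. For quasi-free states, this modular operator can be computed explicitly: introducing the self-adjoint one-particle modular Hamiltonian $h$ defined by $\gamma=(1+e^{h})^{-1}$ (equivalently $\rho=e^{-h}$), one expects $\Delta_\Omega$ to be the fermionic second quantization of a one-particle operator whose logarithm has spectrum equal to $\operatorname{Sp}(-h)$. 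The plan is to establish this identification by verifying the KMS condition at inverse temperature one for the automorphism group generated by second quantizing $h$, using the CAR and the definition of $\phi_{\gamma,l}$; uniqueness of the modular flow then pins down $\Delta_\Omega$.

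With this in hand, the type I statement follows by a standard GNS argument. If $\gamma$ is trace class, then $e^{-h}$ is trace class on an appropriate subspace, and the quasi-free state can be realized as the vector state for the Gibbs density $e^{-h}/\operatorname{Tr}(e^{-h})$ on $\mathcal B(\mathscr K)$, where $\mathscr K$ is the one-particle Hilbert space in which the modes live. The Fock construction then becomes the standard GNS representation of this type I algebra with respect to a normal faithful state, and the left Araki--Wyss algebra is spatially isomorphic to $\mathcal B(\mathscr K)$, hence type I.

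For the type III$_1$ statement, the plan is to compute the Connes invariant $S(M)=\bigcap_\varphi \operatorname{Sp}(\Delta_\varphi)$ and show it equals $[0,\infty)$. If $\gamma$ has continuous spectrum, so do $h$ and $e^{-h}$; then the spectrum of $\log\Delta_\Omega$, being the closure of the additive semigroup generated by $\operatorname{Sp}(-h)$ via second quantization, is all of $\mathbb R$, so $\operatorname{Sp}(\Delta_\Omega)=[0,\infty)$. To upgrade this from a single-state statement to the Connes invariant, one invokes the Araki--Connes--St\o rmer result that on a quasi-free CAR representation, the modular spectra of the vacuum state and of any sufficiently regular perturbation thereof coincide on the continuous part, so the intersection over all faithful normal states remains $[0,\infty)$. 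Equivalently, one checks that the centralizer of $\ket\Omega$ in $M$ is trivial (a factor of type II$_1$ or smaller is excluded by the continuous spectrum of $\gamma$), which by Connes's characterization is equivalent to $M$ being type III$_1$.

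The main obstacle is the last step: deducing the value of the Connes invariant from the spectrum of a single modular operator. It is easy to see $S(M)\subseteq\operatorname{Sp}(\Delta_\Omega)$, but the reverse inclusion requires controlling the modular spectra of all faithful normal states on $M$. I would handle this by the classical Araki--Wyss perturbation argument: any faithful normal state on $M$ is Connes--cocycle equivalent to $\omega_\Omega$ via a unitary in $M$, and the resulting modular operator differs from $\Delta_\Omega$ by a bounded perturbation that does not remove the continuous part of the spectrum. Together with the hyperfiniteness of Araki--Wyss algebras (which guarantees that the classification is indeed determined by $S(M)$), this yields the type III$_1$ conclusion.
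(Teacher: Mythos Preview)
The paper does not prove this theorem. It is stated in Appendix~\ref{sec:ccrcar} with the attribution \cite{Derezinski} and no proof is given; the appendix simply quotes the result as part of a summary of known operator-algebraic facts about quasi-free states. There is therefore nothing in the paper to compare your argument against.

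As a separate remark on your sketch: the overall architecture (identify the one-particle modular Hamiltonian $h$ via the KMS condition, second-quantize to get $\Delta_\Omega$, then classify via trace-class versus continuous spectrum) is the standard route and is essentially what is carried out in the reference. The weakest link is exactly the one you flag: passing from $\operatorname{Sp}(\Delta_\Omega)=[0,\infty)$ for the vacuum state to the Connes invariant $S(M)=[0,\infty)$. Your ``bounded perturbation does not remove continuous spectrum'' heuristic is not the right mechanism---cocycle perturbations can drastically alter modular spectra. The cleaner argument, which you also mention, is to show directly that the centralizer $M^{\omega_\Omega}$ is trivial (i.e.\ equal to $\mathbb{C}1$); by Connes's theorem this forces type~III$_1$. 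For Araki--Wyss/Araki--Woods algebras this triviality follows from the continuity of the one-particle spectrum, and is the route taken in, e.g., Golodets--Neshveyev and in \cite{Furuya:2023fei} as cited in the main text. You should also be careful that cyclicity and separating property of $\ket{\Omega}$ require $\gamma$ to have no eigenvalues at $0$ or $1$; in the general statement one restricts to the appropriate subspace.
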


\section{Exotic example: Effective \texorpdfstring{$\mN=4$}{N=4} super-Yang--Mills}
\label{sec:effN4}

The model discussed in this appendix is an effective description of four-dimensional $\mN=4$ super-Yang--Mills theory \cite{Sundborg:1999ue,Aharony:2003sx}. We exploit a result of \cite{Dutta:2007ws} to cast this example in the formalism of discrete matrix models, akin to Subsection \ref{sec:RepToFermi}. $\mN=4$ super-Yang--Mills does not possess an integer $L$, analogous to the number of flavors, to sum over. However, a certain summation is built-in in the formulation of \cite{Dutta:2007ws}. This model will therefore be somewhat exotic, not fully of the type introduced in Section \ref{sec:Fermi}, but it is nonetheless instructive to explore this example. 
We thus mostly reviews old results and rephrases them in our overarching framework.\par
We stress that we content ourselves with discussion on the toy quantum mechanics, and \emph{do not} claim implications for the spectral density of full-fledged $\mN=4$ super-Yang--Mills.\par
\bigskip

Consider four-dimensional $SU(N+1)$ $\mN=4$ super-Yang--Mills theory placed on the compact Euclidean space $\mathbb{S}^3 \times \mathbb{S}^1 _{\bsym}$, with radius of the thermal circle the inverse temperature $\bsym$. Deep in the weak 't Hooft coupling regime, the partition function reduces to \cite{Sundborg:1999ue,Aharony:2003sx}
\begin{equation}
\label{eq:N4SYMMM1}
	\mz_{\mN =4}  (a) = \oint_{SU(N+1)} \dd U ~ \exp \left\{ \frac{a}{2} \tr \left( U \right) \tr \left( U^{-1} \right)   \right\} , 
\end{equation}
with $a=a(\bsym)$ a function of the inverse temperature. Here we have discarded contributions that become irrelevant near the transition point, see Appendix \ref{sec:N4SYMMM} for more details. This matrix model undergoes a first order phase transition at $a=2$ \cite{Liu:2004vy}, reviewed in Appendix \ref{sec:N4SYMMM}.\par
\medskip
An alternative derivation of the first order phase transition in \eqref{eq:N4SYMMM1} was given in \cite{Dutta:2007ws}. The authors of \cite{Dutta:2007ws} started by uncovering the equivalent description of \eqref{eq:N4SYMMM1} in a free fermion formalism. This latter approach rewrites \eqref{eq:N4SYMMM1} as a discrete matrix ensemble, of the type we have considered in Subsection \ref{sec:RepToFermi}.
\begin{lem}[\cite{Dutta:2007ws}]
For every $a>0$, let $\mz_{\mN=4} (a)$ be as in \eqref{eq:N4SYMMM1}. Besides, let $\mathfrak{S}_L$ denote the symmetric group of $L$ elements and $ d_R (\mathfrak{S}_L)$ denote the dimension of the representation $R$ of $\mathfrak{S}_L$. It holds that
\begin{equation}
\label{eq:ZN4SYMfermion}
    \mz_{\mN =4}  (a) = \sum_{L=0} ^{\infty} \frac{a^L}{2^L L!} \sum_{\substack{R \ : \ \lvert R \rvert =L \\ \ell (R) \le N }} d_R (\mathfrak{S}_L) ^2 .
\end{equation}
The inner sum runs over irreducible representations $R$ of $\mathfrak{S}_L$, which are in one-to-one correspondence with Young diagrams of $L$ boxes, restricted to have length at most $N$.
\end{lem}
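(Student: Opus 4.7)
The plan is to expand the exponential in \eqref{eq:N4SYMMM1} as a power series and then convert the resulting traces into $SU(N+1)$ characters using the Frobenius (or equivalently Schur--Weyl) formula, so that the unitary integral collapses by orthogonality of characters. Concretely, I would first write
\begin{equation*}
	\exp\!\left\{ \tfrac{a}{2} \tr (U) \tr (U^{-1}) \right\} = \sum_{L,M \ge 0} \frac{(a/2)^{L+M}}{L! \, M!} \, [\tr (U)]^{L} [\tr (U^{-1})]^{M} ,
\end{equation*}
and note that, alternatively, since $\tr(U)\tr(U^{-1})$ appears as a product, the exponential is $\sum_L \frac{(a/2)^L}{L!}[\tr(U)]^L [\tr(U^{-1})]^L$; either grouping works, as mismatched powers will be killed by the character orthogonality in the next step.

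The second step is to use the classical identity that the $L$-th power sum $p_1(U)^L = [\tr(U)]^L$ admits the character expansion
\begin{equation*}
    [\tr (U)]^L = \sum_{R \, : \, \lvert R \rvert = L } \mathfrak{X}_R\!\big((1^L)\big)\, \chi_R (U) ,
\end{equation*}
where $\mathfrak{X}_R\!\big((1^L)\big)$ is the character of the $\mathfrak{S}_L$-representation $R$ evaluated on the identity conjugacy class, which by definition equals $d_R(\mathfrak{S}_L)$. For $U \in SU(N+1)$, the Schur polynomial $\chi_R(U)$ vanishes whenever $\ell(R) > N+1$, and diagrams with $\ell(R) = N+1$ reduce to length-$\le N$ diagrams via $\det U = 1$; thus the non-trivial contributions come precisely from $\ell(R) \le N$. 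The complex-conjugate identity expresses $[\tr(U^{-1})]^{L'}$ likewise in terms of $\chi_{R'}(U^{-1})$.

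The third step is to integrate term by term, using
\begin{equation*}
    \oint_{SU(N+1)} [\dd U] \, \chi_R (U) \, \chi_{R'} (U^{-1}) = \delta_{R R'} ,
\end{equation*}
which in particular forces $\lvert R \rvert = \lvert R' \rvert$ and therefore $L = M$ in the double sum. Assembling the pieces yields
\begin{equation*}
    \mz_{\mN =4}(a) = \sum_{L \ge 0} \frac{a^L}{2^L L!} \sum_{\substack{ R \, : \, \lvert R\rvert = L \\ \ell (R) \le N}} d_R(\mathfrak{S}_L)^2 ,
\end{equation*}
which is \eqref{eq:ZN4SYMfermion}.

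The calculation is essentially combinatorial bookkeeping, and no analytic subtlety is expected to arise: the power series of the exponential converges absolutely on any fixed finite-dimensional unitary group, so termwise integration is justified. The only point requiring a bit of care is the correct treatment of the $SU(N+1)$ length restriction versus $U(N+1)$ (including the trivialization of $\det U$), and the identification of the Frobenius coefficient with $d_R(\mathfrak{S}_L)$; these are standard and constitute the main, though mild, obstacle in making the proof fully rigorous.
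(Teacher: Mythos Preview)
Your approach---expand the exponential, use the Frobenius identity $[\tr U]^L=\sum_{|R|=L} d_R(\mathfrak{S}_L)\,\chi_R(U)$, and apply character orthogonality---is exactly what the paper does (it simply refers to \cite{Dutta:2007ws} and describes the argument as ``character expansion of the integrand $+$ orthogonality of characters''). One correction: your first displayed expansion $\sum_{L,M}\frac{(a/2)^{L+M}}{L!\,M!}[\tr U]^L[\tr U^{-1}]^M$ is not a valid expansion of $\exp\{\tfrac{a}{2}\tr U\,\tr U^{-1}\}$ (it is the expansion of $\exp\{\tfrac{a}{2}\tr U\}\exp\{\tfrac{a}{2}\tr U^{-1}\}$), and the two do \emph{not} agree after setting $L=M$, since the surviving prefactor would be $(a/2)^{2L}/(L!)^2$ rather than $(a/2)^L/L!$. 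The single-sum expansion you write immediately after is correct and is the one that produces \eqref{eq:ZN4SYMfermion}, so just delete the first version and the remark that ``either grouping works''.
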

\begin{proof}The derivation of this identity is in \cite[Sec.3]{Dutta:2007ws}, to which we refer for the details. The proof is conceptually very similar to the character expansion of the other examples, although slightly more involved. It is based on the character expansion of the integrand in \eqref{eq:N4SYMMM1}, and applying the orthogonality of characters to remove $\oint_{SU(N+1)}$.
\end{proof}
The inner sum in \eqref{eq:ZN4SYMfermion} can be rephrased as running over irreducible representations $R$ of $SU(N+1)$, which are in one-to-one correspondence with Young diagrams of length at most $N$. Namely 
\begin{equation}
    R=(R_1, R_2, \dots , R_N) \qquad \qquad \text{with } \quad R_1 \ge R_2 \ge \cdots \ge R_N \ge 0 .
\end{equation}
Importantly, the sum is restricted to diagrams consisting of $\lvert R \rvert = \sum_{i=1} ^{N} R_i = L$ boxes.\par
The quantity $d_R (\mathfrak{S}_L)$ stands for the dimension of $R$ as a representation of $\mathfrak{S}_L$. It differs from the dimension of $R$ viewed as a $SU(N+1)$ representation, customarily denoted by $\dim R$: 
\begin{align}
    \hspace{-1.5cm} R \ : \ \lvert R \rvert = L \text{ and } \ell (R) \le N \quad \Longrightarrow \quad  R \in \left\{ \ SU(N+1) \text{ reps} \ \right\} & \cap  \left\{ \ \mathfrak{S}_L \text{ reps} \ \right\}   \notag \\ 
    \underbrace{\dim R}_{R \text{ is $SU(N+1)$ rep}} & \ne \underbrace{d_R (\mathfrak{S}_L)}_{R \text{ is $\mathfrak{S}_L$ rep}} ~.
\end{align}
Finally, the outer sum in \eqref{eq:ZN4SYMfermion} runs over all the sizes of the symmetric group.\par
With the customary change of variables \eqref{eq:changeRtoH}, and using the Frobenius--Weyl formula for $d_R (\mathfrak{S}_L)$ \cite{Macdonaldbook}, \eqref{eq:ZN4SYMfermion} becomes 
\begin{align}
\label{eq:ZN4SYMGopakumar}
    \mz_{\mN =4}  (a) = \sum_{L=0} ^{\infty} \left(\frac{a}{2 }\right)^L \sum_{h_1 >  \cdots > h_{N} \ge 0} & ~ \frac{ L!}{ \prod_{j=1} ^{N} (h_j !)^2 } \prod_{1 \le i < j \le N} (h_i - h_j)^2  \delta \left( \lvert \vec{h} \rvert - L - \frac{N(N-1)}{2} \right) , 
\end{align}
where we are using a shorthand notation $ \lvert \vec{h} \rvert := \sum_{j=1} ^{L} h_j $.\par
It is also possible to remove the ordering of the eigenvalues $h_j$ and sum over unordered $N$-tuples $\vec{h}\in \mathbb{N}^N$, cancelling the $L!$ in the summand.\par
To complete the analogy with the formalism of Sections \ref{sec:QM}-\ref{sec:Fermi}, we may write
\begin{equation}\label{eq:atobetaN4SYM}
    a = 2 \exp \left( - \beta \right) ,
\end{equation}
where $\beta$ is the inverse temperature of the microscopic system. This \emph{should not} be confused with the inverse temperature $\bsym$ of $\mN=4$ super-Yang--Mills. $\beta = - \ln \frac{a (\bsym)}{2}$ is a monotonically increasing function of $\bsym >0$, as shown in Figure \ref{fig:betavsbsym}.\par
The dependence on $a$ is then interpreted as the Boltzmann factor of the quantum system:
\begin{equation}
    \left( \frac{a}{2} \right)^L = e^{- L \beta} = e^{- \beta \left(\lvert \vec{h} \rvert  -\frac{N(N-1)}{2} \right)} .
\end{equation} 
The partition function in this formalism reads 
\begin{align}
\label{eq:ZN4DGfermion}
     \mz_{\mN =4}  (2e^{- \beta} ) = e^{\frac{\beta}{2} N(N-1)} \sum_{L=0} ^{\infty}  \sum_{h_1 > \cdots > h_{N} \ge 0}  e^{- \beta \lvert \vec{h} \rvert } ~ &  \frac{L !}{ \prod_{j=1} ^{N} (h_j !)^2 } \prod_{1 \le i < j \le N} (h_i - h_j)^2 \\
     & \times \ \delta \left( \lvert \vec{h} \rvert - L - \frac{N(N-1)}{2} \right) . \notag
\end{align}\par
\begin{figure}[t!]
    \centering
    \includegraphics[width=0.4\textwidth]{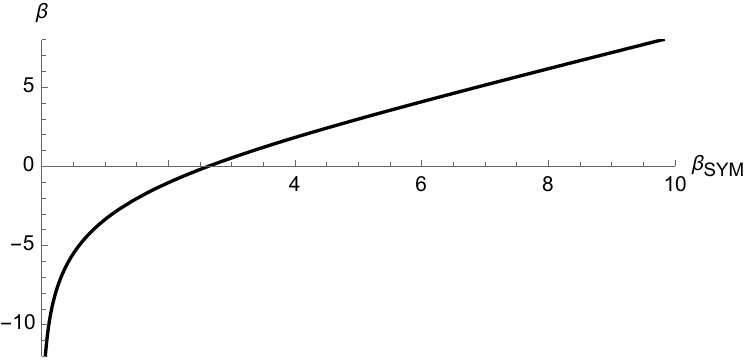}
    \caption{The inverse temperature $\beta$ of the toy model quantum mechanics is a function of the inverse temperature $\bsym$ of $\mN=4$ super-Yang--Mills.}
    \label{fig:betavsbsym}
\end{figure}

We emphasize that the discrete matrix model \eqref{eq:ZN4DGfermion} is closely related to but falls outside of the framework of Section \ref{sec:Fermi}. 
\begin{itemize}
\item The identification \eqref{eq:atobetaN4SYM} is really $a/2=y$, with $y$ a fugacity for the conserved charge $H$ in the simplified quantum system. Let us mention that \eqref{eq:ZN4SYMGopakumar} is derived for $a>0$, and in particular it is continuous. The resulting system is thus defined for arbitrary real inverse temperature $\beta \in \R$, as a consequence of the constraint on the total energy.
\item Here the roles of $N$ and $L$ are partly exchanged: $L$ constrains the allowed configurations $\vec{h}$, and the total number of eigenvalues is $N$. Moreover, the constraint $ \lvert \vec{h} \rvert  - \frac{N(N-1)}{2} =L$ does not directly bound the allowed states, but rather fixes the total energy of the quantum mechanical system. Schematically:
\begin{equation*}
    \begin{tabular}{r|c|c}
       $ \ $ & \textsc{Section \ref{sec:QM}} & \textsc{Appendix \ref{sec:effN4}} \\
       \hline 
        $N$ & constraint on configurations & \# of eigenvalues \\
        $L$ & \# of eigenvalues & constraint on configurations \\
         fixed-$L$ & quantum system & microcanonical ensemble \\
         weight & $\mathfrak{q}^{L^2}$ & $1/L!$ \\
        \hline
    \end{tabular}
\end{equation*}
\end{itemize}
\par
\medskip

\begin{prop}[\cite{Dutta:2007ws}]
\label{prop:N4PT}
For every $a>0$, let $\mz_{\mN =4} (a)$ be as in \eqref{eq:ZN4DGfermion}. In the large $N$ limit, it undergoes a first order phase transition at $a=2$, with
\begin{equation}
    \ln \mz_{\mN =4} (a) = \begin{cases} O(1) & a<2 \\ O(N^2) & a>2 . \end{cases}
\end{equation}
\end{prop}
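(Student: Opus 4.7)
The plan is to analyze \eqref{eq:N4SYMMM1} directly via Hubbard--Stratonovich decoupling, following the strategy of \cite{Liu:2004vy}, rather than working with \eqref{eq:ZN4DGfermion}. The point is that the quadratic ``single trace $\times$ single trace'' coupling can be traded for an auxiliary complex scalar $\phi$:
\begin{equation*}
	\exp\!\Bigl(\tfrac{a}{2}|\mathrm{tr}\,U|^2\Bigr) \;=\; \frac{1}{2\pi a}\int_{\C} d^2\phi\; \exp\!\Bigl(-\tfrac{|\phi|^2}{2a} + \bar{\phi}\,\mathrm{tr}\,U + \phi\,\mathrm{tr}\,U^{-1}\Bigr) ,
\end{equation*}
after which
\begin{equation*}
	\mz_{\mN=4}(a) \;=\; \frac{1}{2\pi a}\int_{\C} d^2\phi\; e^{-|\phi|^2/(2a)}\; \mz_{\mathrm{GWW}}(\phi,\bar\phi) ,
\end{equation*}
where $\mz_{\mathrm{GWW}}(\phi,\bar\phi)= \oint_{SU(N+1)}[dU]\,e^{\bar\phi\,\mathrm{tr}\,U + \phi\,\mathrm{tr}\,U^{-1}}$ is the Gross--Witten--Wadia partition function with source $\phi$. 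By the residual $U(1)$ symmetry $\phi\mapsto e^{i\alpha}\phi$, the latter depends only on $|\phi|$, and upon setting $t := |\phi|/N$ the well-known planar GWW result yields $\ln \mz_{\mathrm{GWW}} = N^2 \mathfrak{f}_{\mathrm{GWW}}(t)+O(1)$, with $\mathfrak{f}_{\mathrm{GWW}}$ a piecewise-analytic function exhibiting a third-order ungapped-to-gapped transition at $t=\tfrac{1}{2}$.

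The next step is to evaluate the $\phi$-integral by saddle point in the same scaling. Changing variables to $\phi = Nt\,e^{i\alpha}$, the $d\alpha$-integral is trivial, and the radial integrand takes the form $t\,e^{-N^2 \mathcal{S}_a(t)}$ with
\begin{equation*}
	\mathcal{S}_a(t) \;=\; \frac{t^2}{2a} - \mathfrak{f}_{\mathrm{GWW}}(t) .
\end{equation*}
The proposition then reduces to a one-dimensional extremization problem over $t\geq 0$. One checks that $t=0$ is always a stationary point with $\mathcal{S}_a(0)=0$, while the non-trivial saddles are governed by the source-response equation $t/a = \mathfrak{f}_{\mathrm{GWW}}'(t)$; because $\mathfrak{f}_{\mathrm{GWW}}'$ is not monotone across $t=\tfrac{1}{2}$, this equation exhibits a swallowtail that supports a second local minimum $t_*(a)>0$ once $a$ exceeds a critical value. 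Tracking the global minimum as $a$ is varied, one finds a Maxwell-type exchange of dominance: the two saddles coexist as local minima in a neighbourhood of the transition, and they trade places at the value $a=a_c$ determined by $\mathcal{S}_{a_c}(t_*(a_c))=0$. A direct computation using the explicit GWW free energy yields $a_c=2$. Above the transition, $-\mathcal{S}_a(t_*(a))>0$ and $\ln\mz_{\mN=4}\sim -N^2\mathcal{S}_a(t_*(a))=O(N^2)$; below it, the $t=0$ saddle dominates, the leading $N^2$ contribution vanishes, and Gaussian fluctuations of $\phi$ produce an $O(1)$ remainder.

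The main obstacle is to handle the analytic behavior of $\mathfrak{f}_{\mathrm{GWW}}'$ in a neighbourhood of the GWW transition point $t=\tfrac{1}{2}$, which is precisely where the non-trivial saddle $t_*(a_c)$ is located as $a\to a_c^+$. In practice the algebra is completely controlled by the closed-form expression for $\mathfrak{f}_{\mathrm{GWW}}$ in the gapped phase, and the value $a_c=2$ comes out of matching $\mathcal{S}_a(t_*(a))=0$ with $t_*(a)$ in the gapped branch, as carried out in \cite{Liu:2004vy,Dutta:2007ws}. The fact that $t_*(a_c)\neq 0$ is what makes the transition first order. An alternative, complementary route would be to work in the discrete ensemble \eqref{eq:ZN4DGfermion}, rescale $h_i=Nx_i$, and solve the resulting constrained saddle-point problem for the density $\varrho(x)$ subject to $0\leq\varrho\leq 1$ and to the total-energy constraint induced by the $\delta$-function; this reproduces the same transition, but the fermionic bookkeeping (in particular the $\ln L!$ weight) makes the identification of the trivial low-temperature phase less transparent than via the Hubbard--Stratonovich route.
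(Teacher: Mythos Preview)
Your Hubbard--Stratonovich route is correct and is in fact the argument the paper gives in Appendix~\ref{sec:N4SYMMM}; however, the proof attached to Proposition~\ref{prop:N4PT} itself works directly in the discrete ensemble \eqref{eq:ZN4DGfermion}, introducing the rescaled density $\varrho(x)$ subject to $0\le\varrho\le1$ and the constraint \eqref{eq:DGelllargeN}, and finding the capped density \eqref{eq:varrhoN4DG}. So your approach is the alternative one the paper mentions, not the one it uses here.

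There is, however, a genuine error in your description of the mechanism. You assert that $\mathfrak{f}_{\mathrm{GWW}}'$ is non-monotone across the GWW point and that this produces a swallowtail with coexisting local minima and a Maxwell exchange. This is not what happens: $\mathfrak{f}_{\mathrm{GWW}}'$ is strictly increasing (it equals $t$ in the ungapped phase and $1-\tfrac{1}{4t}$ in the gapped phase, both monotone, matching at the transition), so the source--response equation $t/a=\mathfrak{f}_{\mathrm{GWW}}'(t)$ has \emph{no} nontrivial solution for $a<2$, and the nontrivial saddle $t_*(a)$ appears only at $a=2$, sitting exactly at the GWW point. Simultaneously, the second derivative $\mathcal{S}_a''(0)=\tfrac{1}{a}-\mathfrak{f}_{\mathrm{GWW}}''(0)$ changes sign at $a=2$, so $t=0$ ceases to be a local minimum precisely when $t_*$ is born. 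There is therefore no window of coexistence: this is the ``weakly first order'' scenario~\eqref{scen2} of Appendix~\ref{sec:UnitaryMM}, not a Maxwell construction. Your identification $a_c=2$ and the scaling of $\ln\mz_{\mN=4}$ on either side are nonetheless correct; only the picture of how the saddles trade dominance needs to be fixed.
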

\begin{proof}
We only review here the salient features of \cite[Sec.4]{Dutta:2007ws} and reformulate some steps in uniformity with the rest of the work. One begins by rewriting the summand in \eqref{eq:ZN4SYMGopakumar} in the form $e^{-S (\vec{h}; L)}$, where 
\begin{equation}
    S (\vec{h}; L) = 2 \sum_{i=1}^{N} \ln (h_i !) - \sum_{i \ne j} \ln \left\lvert h_i - h_j \right\rvert - L \ln \left( \frac{a}{2}\right) - \ln (L!) .
\end{equation}
In the large $N$ limit, it is convenient to interpret the $\delta$ function as constraining the value of $L$, for any given configuration $(h_1 , \dots , h_N)$. Defining 
\begin{equation}
    \ell := \frac{L}{N^2} ,
\end{equation}
we have the constraint 
\begin{equation}
\label{eq:defell}
    \ell = \frac{1}{N^2}\sum_{i=1} ^{N} h_i + \frac{N-1}{2N} .
\end{equation}
One advantage of this approach is that, now, the configurations $\vec{h}$ are unconstrained, and hence the numbers $h_j$ grow linearly in $N$.\par
We can in fact fix the precise scaling with $N$ by requiring the existence of an equilibrium solution at large $N$. Define the scaled variables $x_i$ through 
\begin{equation}
    h_i = N^{\eta} x_i ,
\end{equation}
for some $\eta >0$ and $x_i$ being $O(1)$ at large $N$. Introduce the eigenvalue density 
\begin{equation}
    \varrho (x) = \frac{1}{N} \sum_{i=1} ^{N} \delta (x-x_i) .
\end{equation}
We must look for $\varrho (x)$ satisfying 
\begin{equation}
\label{eq:DGconstraint}
    \int \dd x \varrho (x) =1 \qquad \qquad \varrho (x) \le 1 \qquad \qquad \supp\varrho \subset [0,\infty).
\end{equation}
At large $N$, we arrive at 
\begin{equation}
\label{eq:DGSefflargeN}
    \frac{S}{N^2} = 2 N^{\eta-1} \int x \ln (x) \varrho (x) \dd x  - \int \varrho (x) \dd x ~ \mathrm{P}\!\!\!\int \dd y \varrho (y) \ln \lvert x-y\rvert - \ell \ln \left( \frac{a}{2}\right) - \ell \ln (\ell) + \cdots 
\end{equation}
where $\mathrm{P}\!\!\!\int$ stands for the principal value integral and the dots include constant terms as well as terms that are sub-leading in $N$. We do not need them. We see that, in order to obtain a non-trivial equilibrium configuration, we must have $\eta=1$, so that positive and negative contributions to $S$ are of the same order in $N$ and compete.\par
From here, we must find $\varrho_{\ast} (x)$, subject to \eqref{eq:DGconstraint}, that extremizes $S$. The saddle point equation (i.e. equilibrium equation) reads 
\begin{equation}
    2  \mathrm{P}\!\!\!\int \dd y \frac{ \varrho_{\ast} (y) }{x-y} = 2 \ln(x) - \ln \left( \frac{a}{2} \ell \right) ,
\end{equation}
and the equality \eqref{eq:defell} fixing $\ell$ is 
\begin{equation}
\label{eq:DGelllargeN}
    \ell = \int y \varrho_{\ast} (y) \dd y - \frac{1}{2} .
\end{equation}
The solution was found in \cite{Dutta:2007ws} and reads 
\begin{align}
    \varrho_{\ast} (x) &= \frac{2}{\pi} \aco \left( \frac{x+\xi - \frac{1}{2} }{2 \sqrt{x \xi}} \right) , \label{eq:varrhoN4DG} \\
    \supp \varrho_{\ast} &= [x_{-}, x_{+}] , \qquad \qquad \qquad x_{\pm} := \sqrt{\xi} \pm \frac{1}{\sqrt{2}} . \notag
\end{align}
The parameter $\xi$ is a function of $a$ defined through 
\begin{equation}
    \xi^2 = \frac{a}{2} \ell .
\end{equation}
Because $\ell$ depends on $\varrho $, and hence on $\xi$ and $a$, this condition only fixes $\xi$ implicitly.\par
The solution \eqref{eq:varrhoN4DG} ceases to satisfy $\varrho (x) \le 1$ $\forall x$ at $\xi = \frac{1}{2}$. At this value, the left edge $x_{-} $ hits the hard wall at $x=0$. We then ought to look for a different solution. It was found in \cite{Dutta:2007ws} that the solution in this case reduces to the uniform density on the interval $[0,1]$, leading to a trivial saddle point configuration. Translated in the Young diagram language, the large $N$ limit in this phase is dominated by the trivial representation $R=\emptyset$, and $\ln \mz_{\mN=4}$ remains $O (1)$.
\end{proof}

\subsubsection{Double scaling of the effective \texorpdfstring{$\mN=4$}{N=4} super-Yang--Mills}
A double scaling regime for the matrix model effective description of $\mN=4$ super-Yang--Mills on $\mathbb{S}^3 \times \mathbb{S}^1 _{\bsym}$ was envisioned by Liu \cite{Liu:2004vy}. It consists in 
\begin{equation}
\label{eq:doublescalingLiu}
    N \to \infty , \qquad a \to 2 , \qquad \qquad N^{\frac{4}{3}} \left(\frac{a}{2} -1 \right) \text{ fixed} .
\end{equation}
For the sake of completeness, we sketch here how to derive it from the discrete ensemble. Consider the high temperature phase $a>2$. $\varrho $ is supported on 
\begin{equation}
    x_{-} \le x \le x_{+} \qquad x_{\pm} := \sqrt{\xi} \pm \frac{1}{\sqrt{2}} 
\end{equation}
with the parameter $\xi$ given in \eqref{eq:varrhoN4DG}. We know from above that the critical regime corresponds to $a \to 2$ and $\xi \to \frac{1}{2}$. We thus define the double scaling parameters $a_s$ and $\xi_s$ according to 
\begin{equation}
    \frac{a}{2} -1  = \frac{a_s}{N^{4/3}} , \qquad \qquad \xi - \frac{1}{2} = \frac{\xi_s}{N^{\nu}} ,
\end{equation}
where the exponent of $N$ in $a$ was determined in \eqref{eq:doublescalingLiu}, while the exponent $\nu$ will be fixed momentarily by consistency. Recall that $\xi$ is an implicit function of $a$. Here we will invert the relation and dial $\xi_s$ to go into the double scaling regime, with $a_s$ implicitly fixed by the inverse function theorem (applied locally near $a_s \approx 0$) as a function of $\xi_s$.\par
The edges of the eigenvalue distribution become 
\begin{equation}
    x_{-} \approx \frac{\xi_s}{\sqrt{2} N^{\nu}} , \qquad x_{+} \approx \sqrt{2} .
\end{equation}
In order to explore the critical regime, we zoom in close to the left edge. Stated more formally, we introduce the doubly scaled variable $x_s$ as 
\begin{equation}
    x-x_{-} = \frac{x_{s}}{N^{\nu_2}} ,  
\end{equation}
for some power $\nu_2$ uniquely fixed by $\nu$. With all these substitutions into the eigenvalue density \eqref{eq:varrhoN4DG}, we determine the scaling exponents $2 \nu_2 =  \nu = \frac{2}{3}$. Using this into the condition $\xi^2 = \ell \frac{a}{2}$ and approximating $\ell = \frac{1}{4} + \frac{\ell_s}{N^{\nu_3}} $ we find 
\begin{equation}
    \xi_s = \sqrt{a_s} = \ell_s
\end{equation}
and $\nu_3=\nu$. Altogether we introduce the doubly scaled eigenvalue density 
\begin{equation}
\begin{aligned}
    \varrho_s (x_s)& := \varrho_{\ast} \left(  \sqrt{\frac{a_s}{2}} N^{-2/3} + x_s N^{-1/3} \right) \approx \frac{2}{\pi} \aco \left[ \left( \frac{\xi_s}{2} \right)^{-1/4} x_s \right]  \\
    \supp \varrho_s &= \left[ \frac{a_s}{2} N^{-1/3} , \sqrt{2} N^{1/3} \right] .
\end{aligned}
\end{equation}
$\supp \varrho_{\ast}$ tends to the positive real axis in the double scaling limit.\par
The double scaling limit \eqref{eq:doublescalingLiu} has been translated in the discrete matrix model to a limit probing the left edge of the distribution, where the first order transition is taking place. This is not a genuinely new result, but a consequence of mapping the result in \cite{Liu:2004vy} to the language of \cite{Dutta:2007ws}. Here we have limited ourselves to perform the computation explicitly.

\section{Hagedorn transitions in holographic matrix models}
\label{sec:PTMM}

\subsection{From third to first order phase transitions in unitary matrix models}
\label{sec:UnitaryMM}

We consider a generic unitary one-matrix model, with integration domain $U(N)$:
\begin{equation}
\label{eq:defgenericUMM}
    \mz_{\mathrm{UMM}} (\sigma , \vec{g}) = \oint_{U(N)} [\dd U ] \exp \left\{  \sigma \sum_{n =1} ^{\infty} \frac{g_n}{n} \mathrm{Tr} \left( U^{n} + U^{-n} \right) \right\}
\end{equation}
where $[\dd U] $ is the Haar measure and $\vec{g} = (g_n)_{n \ge 1} $ is an arbitrary collection of interaction coefficients, which satisfy the conditions \cite{Szegoth}
\begin{equation}
    \sum_{n=1} ^{\infty} \frac{g_n}{n} < \infty , \qquad \qquad \sum_{n=1} ^{\infty} \frac{g_n^2 }{n}  < \infty .
\end{equation}
\begin{itemize}
    \item For later convenience, we have factored out an overall coefficient $\sigma \in \R$. Several cases of interest in fact possess a $\mathbb{Z}_2$ symmetry $\sigma \mapsto - \sigma$, which reduces the parameter space to $\sigma \ge 0$. 
    \item Besides, we have restricted our attention to systems that are symmetric under the involution $\mathsf{C} : \ U \mapsto U^{-1}$, which corresponds to charge conjugation in a $U(N)$ QFT. This assumption may be dropped, and the coefficients of $\mathrm{Tr} U^{-n}$ could be generically different from those of $\mathrm{Tr} U^{n}$. This generalization does not add to the central theme of this work, thus we consider models with $\mathsf{C}$-symmetry for clarity.
    \item We have moreover considered integration over $U(N)$, instead of $SU(N+1)$ considered in the main text. This change is to lighten the expressions and, as we are only interested in $N \to \infty$, the difference is negligible.
\end{itemize}
The unitary matrix model \eqref{eq:defgenericUMM} belongs to the family \eqref{eq:genericUMMf} investigated in the main text. The exponential in the integrand is a class function, thus it admits an expansion in characters of $U(N)$. This establishes the relation with the quantum mechanics discussed in Section \ref{sec:QM}.\par
\medskip
Diagonalizing the unitary matrix $U \in U(N)$, \eqref{eq:defgenericUMM} becomes an integral over eigenvalues:
\begin{equation}
\label{eq:eigenvaluegenMM}
     \mz_{\mathrm{UMM}} (\sigma, \vec{g}) = \frac{1}{N!}\int_{[- \pi, \pi]^N}  \prod_{1 \le a < b \le N} \left( 2 \sin \left(\frac{\theta_a - \theta_b}{2} \right) \right)^2 \prod_{a=1} ^{N} e^{2\sigma \sum_{n =1} ^{\infty} \frac{g_n}{n} \cos (n \theta_a) } ~ \frac{\dd \theta_a}{2 \pi} .
\end{equation}
In the \emph{planar} large $N$ limit, i.e. with 't Hooft scaling 
\begin{equation}
    \sigma = N \gamma , \qquad \gamma \text{ fixed},
\end{equation}
the model \eqref{eq:defgenericUMM} has an intricate phase structure as a function of the parameters $\gamma, \vec{g}$. This is more directly visible in the form \eqref{eq:eigenvaluegenMM}. Typically, these phase transitions are third order. We refer to the pertinent literature, see e.g. \cite{Rossi:1996hs} for a review and \cite{Gross:1980he,Wadia:1980cp,Wadia:2012fr,Jurkiewicz:1982iz,Mandal:1989ry,Jain:2013py,Chen:2013vya,Santilli:2019wvq,Russo:2020eif,Sempe:2021qug} for a partial list of works that address phase transitions fitting in our paradigm.\par
The next lemma is an extension of Szeg\H{o}'s theorem to the planar limit.
\begin{lem}[\cite{Szegoth,Santilli:2019wvq}]
\label{lem:Szego0}
Let $ \mz_{\mathrm{UMM}}$ be as in \eqref{eq:eigenvaluegenMM} and 
\begin{equation}
     \mf_{\mathrm{UMM}} (\gamma, \vec{g}) := \lim_{N \to \infty} \frac{1}{N^2} \ln  \mz_{\mathrm{UMM}} \left( N \gamma,  \vec{g} \right) .
\end{equation}
Assume that $\left\{ g_n \right\}_{n \ge 1} $ satisfy 
\begin{equation}
\label{eq:condgrational}
    \sum_{n =1} ^{\infty} g_{n+1} z^{n} = \frac{P_{\vec{g}}(z)}{Q_{\vec{g}}(z)} 
\end{equation}
for some polynomials $P_{\vec{g}}, Q_{\vec{g}}$. Then, there exists $r=r(\vec{g})>0$ such that 
\begin{equation}
    \mf_{\mathrm{UMM}} (\gamma, \vec{g}) = \gamma^2 \sum_{n=1} ^{\infty} \frac{1}{n} g_n ^2 , \qquad \qquad \forall 0 \le \lvert \gamma \lvert < r(\vec{g}) .
\end{equation}
\end{lem}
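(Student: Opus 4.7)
The plan is to proceed by a direct saddle-point analysis of the eigenvalue integral \eqref{eq:eigenvaluegenMM}, working in the ``ungapped'' phase where the eigenvalue density has full support on $[-\pi,\pi]$. This is morally equivalent to, but more transparent than, invoking Szeg\H{o}'s strong limit theorem for Toeplitz determinants via the Heine--Szeg\H{o} identity, which would be the other natural route. The symbol in the Toeplitz presentation is $f(\theta)=\exp\{2\sigma \sum_n (g_n/n)\cos(n\theta)\}$, whose Fourier coefficients $[\ln f]_n = N\gamma g_n/n$ scale with $N$; the cleanest way to extract the leading $N^2$ behavior in the planar limit is therefore the variational route.

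First I would introduce the planar eigenvalue density $\rho(\theta)$ and write the 't~Hooft scaled action
\begin{equation*}
-\frac{1}{N^2}\ln\mz_{\mathrm{UMM}} \;\longrightarrow\; -2\gamma \int_{-\pi}^{\pi} \rho(\theta)\sum_{n\ge 1}\frac{g_n}{n}\cos(n\theta)\,\dd\theta \;-\; \iint \rho(\theta)\rho(\theta')\ln\!\left|2\sin\tfrac{\theta-\theta'}{2}\right|\dd\theta\,\dd\theta',
\end{equation*}
subject to $\int\rho\,\dd\theta = 1$ and $\rho \ge 0$. Fourier-expanding $\rho(\theta)= \tfrac{1}{2\pi}(1 + 2\sum_{n\ge 1}\rho_n \cos(n\theta))$ (no sine components by the $\theta\mapsto -\theta$ symmetry of the potential) and using the classical identity $\ln|2\sin(\theta/2)| = -\sum_{n\ge 1}\cos(n\theta)/n$, the saddle-point equation collapses, mode by mode, to the one-line identity $\rho_n = \gamma\, g_n$ with vanishing Lagrange multiplier.

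Next I would evaluate the effective action on this saddle. A direct Parseval-style computation yields a potential contribution of $2\gamma^2 \sum_n g_n^2/n$ and a Vandermonde contribution of $-\gamma^2 \sum_n g_n^2/n$; they combine to $\gamma^2 \sum_n g_n^2/n$, which is the claimed expression for $\mf_{\mathrm{UMM}}$. The rationality hypothesis \eqref{eq:condgrational} enters here in ensuring that $\sum_n g_n z^n$ has a positive radius of convergence, so that $\sum_n g_n \cos(n\theta)$ is a bona fide smooth function on the circle and the candidate density $\rho_{\ast}$ is well-defined; it also guarantees that the series defining $\mf_{\mathrm{UMM}}$ converges.

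The main obstacle, and the origin of the threshold $r(\vec{g})$, is the positivity constraint $\rho_{\ast}(\theta)\ge 0$. One defines
\begin{equation*}
r(\vec{g}) := \sup\Bigl\{\,|\gamma|\ge 0 \ :\ 1 + 2\gamma \sum_{n\ge 1} g_n \cos(n\theta) \ge 0 \ \forall\,\theta\in[-\pi,\pi]\Bigr\};
\end{equation*}
for $|\gamma|<r(\vec{g})$ the ungapped ansatz is self-consistent and gives the stated free energy, while at $|\gamma|=r(\vec{g})$ the density first touches zero, signalling the onset of the usual Gross--Witten--Wadia type third-order transition and the exit from the regime of validity. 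Standard convexity and uniqueness properties of the equilibrium measure for logarithmic interactions (see e.g.\ \cite{Santilli:2019wvq}) imply that whenever positivity holds, the one-cut solution is the unique planar saddle and dominates the partition function, completing the argument.
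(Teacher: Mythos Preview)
Your proposal is correct and is precisely the ``direct computation'' route the paper defers to \cite{Santilli:2019wvq}: the ungapped density $\rho_n=\gamma g_n$, the on-shell free energy $\gamma^2\sum_n g_n^2/n$, and $r(\vec g)$ identified with the positivity threshold of $\rho_\ast$. The one point of divergence is your reading of the rationality hypothesis \eqref{eq:condgrational}: you use it to ensure $\sum_n g_n\cos(n\theta)$ is smooth on the circle (so $r(\vec g)>0$), whereas the paper invokes it so that the saddle-point equation has at worst pole singularities---a distinction that only becomes operative in the gapped phase and in the counterexample of Appendix~\ref{app:qCSMM}, where $\sum g_{n+1}z^n\sim\ln(1+z)$ is non-rational and the ungapped phase disappears entirely.
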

\begin{proof}This lemma asserts that, for $\gamma$ small enough, \eqref{eq:defgenericUMM} always admits a phase in which $\mf_{\mathrm{UMM}}$ is quadratic in $\gamma$. This was proven by direct computation in \cite{Santilli:2019wvq}, but can also be argued for by analytic continuation of Szeg\H{o}'s theorem \cite{Szegoth}.\par
The condition \eqref{eq:condgrational} guarantees that the functions appearing in the saddle point equation have at most poles. If $z_p$ is a double zero of $Q_{\vec{g}}(z)$, it can be regularized by splitting it into simple poles at $z_p \pm i \varepsilon$, sending $\varepsilon \to 0$ at the end. We can thus assume without loss of generality that at most simple poles appear in the saddle point equation, possibly with the prescribed regularization.
\end{proof}
In the notation of Lemma \ref{lem:Szego0}, $r (\vec{g})$ is the smallest value in the set of critical points. Assume that \eqref{eq:defgenericUMM} has a phase transition at a critical curve 
\begin{equation}
    \gamma = \gamma_c (\vec{g}) .
\end{equation}
For concreteness, we assume the transition is third order, as is typical for unitary matrix models, but the argument works also for second order transitions.\par
Inspired by \cite{Liu:2004vy}, we argue for a way to promote such third order transition to a first order one, which moreover shows Hagedorn behavior. We enforce an average over the coupling $\sigma$, with Gaussian weight of standard deviation $a>0$. The model we consider is:
\begin{align}
    \mz_{\text{holo}} (a, \vec{g}) & = \int_{0} ^{\infty} \dd \sigma  e^{- \frac{\sigma^2 }{2a}} ~  \mz_{\mathrm{UMM}} (\sigma, \vec{g}) \label{eq:defHagMM}  \\
    &= N \int_0 ^{\infty} \dd \gamma ~ \exp \left\{ - N^2 \left[ \frac{\gamma^2}{2a} -   \mf_{\mathrm{UMM}} (\gamma, \vec{g})  \right] \right\} .  \label{eq:HagMMgamma}
\end{align}
If \eqref{eq:defgenericUMM} is not invariant under $\sigma \mapsto - \sigma $, the integration range is from $- \infty$ to $\infty$. The holographic interpretation of unitary matrix models with average over the couplings was advocated in \cite{Murthy:2022ien}, although it was implemented in a different way.\par 
In passing, let us notice that we can refine the additional integration in \eqref{eq:defHagMM} into 
\begin{equation}
    \mz_{\text{holo}} (a, \vec{g}) = \int_{0} ^{\infty} \dd \sigma ~ f(\sigma) \ e^{- \frac{\sigma^2 }{2a}} ~ \mz_{\mathrm{UMM}} (\sigma, \vec{g})
\end{equation}
for an arbitrary function $f(\sigma) $ such that 
\begin{equation}
    \lim_{N \to \infty} \frac{1}{N^2} \ln f(N \gamma) = 0 .
\end{equation}
This condition ensures that the refinement will not affect the phase structure of the model, and hence does not alter the ensuing discussion.\par
The integrand in \eqref{eq:HagMMgamma} at large $N$ is dominated by the saddle points in the variable $\gamma$. Consider first the region $\gamma < \gamma_c$, so that the integrand is Gaussian:
\begin{equation}
\label{eq:expintegrandPI}
   \exp \left\{ - N^2 \gamma^2 \left[ \frac{1}{2a} -  \sum_{n=1} ^{\infty} \frac{1}{n} g_n ^2 \right] \right\} .
\end{equation}
Its maximum is located at $\gamma=0$ if 
\begin{equation}
\label{eq:defacgeneric}
    a< a_c ,\qquad \qquad  \qquad a_c := \frac{1}{2} \left( \sum_{n=1} ^{\infty} \frac{g_n ^2 }{n} \right)^{-1} .
\end{equation}
In this case, the overall growth of \eqref{eq:HagMMgamma} with $N$ is cancelled, and we ought to seek an $O(1)$ solution.\par
However, when $a$ is larger than the threshold $a_c$, the integrand \eqref{eq:expintegrandPI} becomes exponentially large for large values of $\gamma$. We thus need to look for a new saddle point, for strictly positive $\gamma$. In the region $\gamma > \gamma_c$, the saddle point equation reads 
\begin{equation}
\label{eq:genSPEgammaast}
    \frac{\gamma}{a} = \frac{\partial \ }{\partial \gamma } \mf_{\mathrm{UMM}} (\gamma, \vec{g})  .
\end{equation}
If there exists a value $\gamma_{\ast} > \gamma_c$ (with $\gamma_{\ast} = \gamma_{\ast} (a, \vec{g})$ given in terms of the external parameters) that solves the equilibrium condition \eqref{eq:genSPEgammaast}, then 
\begin{equation}
    \ln \mz_{\text{holo}} (a, \vec{g}) \approx  N^2 \left[ \mf_{\mathrm{UMM}} (\gamma_{\ast}, \vec{g})  - \frac{\gamma_{\ast}^2}{2a} \right]  .
\end{equation}
For consistency, one should check that 
\begin{equation}
    \gamma_{\ast} (a, \vec{g})> \gamma_c (\vec{g}) .
\end{equation}
This inequality is understood as defining a region in the $a>0$ parameter space, because $\gamma_{\ast}$ depends (in general) explicitly on $a$ and $\vec{g}$, and $\gamma_c$ only depends on $\vec{g}$. Let us assume the inequality is of the form $a \ge a_{\star}$ for some function $a_{\star} (\vec{g})$. Depending on the values of $a_c$ and $a_{\star}$, three scenarios disclose.
\begin{enumerate}[(i)]
    \item\label{scen1} $a_{\star} < a_c$. The two phases coexists for $a_{\star} < a < a_c$ and a first order phase transition takes place when the dominance of the saddles is exchanged. We think of this phase transition as a function of the control parameter $a$ at fixed couplings $\vec{g}$.
    \item\label{scen2} $a_{\star} = a_c$. There is no coexistence of phases, and a first order transition takes place at $a= a_c( =a_{\star} )$. This situation is sometimes referred to as a \emph{weakly} first order transition.
    \item\label{scen3} $a_{\star} > a_c$. There is a region $a_c < a < a_{\star}$ which admits no saddle point. One should look for the local maxima \eqref{eq:HagMMgamma} and check how the behavior in this intermediate region matches at the junction points $a=a_c$ and $a=a_{\star}$. Typically, the solution to $\ln \mz_{\text{holo}}$ in the intermediate region is $O(N^2)$, thus yielding a first order transition at $a=a_c$. There may or may not be an additional (second or higher order) phase transition at $a=a_{\star}$. The details should be checked on a case-by-case basis.
\end{enumerate}
The dependence may in principle be more general than $a \ge a_{\star}$, for instance, restricting $a$ to a union of intervals. The upshot of the forthcoming analysis is unchanged, although the phase structure would be more involved.

\subsection{Hagedorn phase transition}
In all the scenarios discussed above, the modification of the one-matrix model \eqref{eq:defgenericUMM} into \eqref{eq:defHagMM} led to a first order phase transition, in which the free energy jumps as 
\begin{equation}
\label{eq:Hagedornbehavior}
    \ln \mz_{\text{holo}} = \begin{cases} O(1) & a<a_{H} \\ O(N^2) & a > a_H . \end{cases}
\end{equation}
Here we are generically denoting $a_H$ the value of $a$ at which the transition takes place. It will be a function of the coupling $\vec{g}$ that characterize the model, and oftentimes it is $a_H = a_c$ as defined in \eqref{eq:defacgeneric}.\par

\subsubsection{Other gauge groups}
While we focus on $U(N)$ matrix models, our analysis extends straightforwardly to $SO(N)$ and $Sp(N)$ gauge groups, due to the universality of the large $N$ limit. The extension of Szeg\H{o}'s theorem to these groups was given by Johansson \cite{Johansson:1997C}, from which a version of Lemma \ref{lem:Szego0} is directly worked out. Our algorithm goes through unchanged, except for numerical coefficients in $ \ln \mz_{\text{holo}} $. The result \eqref{eq:Hagedornbehavior} holds. See e.g. \cite[App.S2]{Perez-Garcia:2022geq} for explicit calculations of the large $N$ limit of $\mz_{\mathrm{UMM}}$ with gauge group $SO(N)$ or $Sp(N)$.

\subsubsection{Polyakov loop expectation value}
\label{app:Polyakov}
We have claimed in \eqref{eq:PolyHagedorn} that the expectation value of the Polyakov loop is an order parameter detecting the first order transition. We now prove the claim.\par 
The starting point is that, in the fixed-$L$ matrix model \eqref{eq:ZLNUMM}, the Polyakov loop corresponds to insert $\frac{1}{N}\mathrm{Tr} U$ in the matrix ensemble $\mz_{L}^{(N)}$. Our derivation based on the unitary matrix model is an extension of \cite{Aharony:2003sx,Liu:2004vy}.
\begin{thm}
    Let $\mathcal{P}$ denote the expectation value of a Polyakov loop in the matrix model \eqref{eq:ZFissumLUMM}. Assume the matrix model has the Hagedorn transition described above. Then 
    \begin{equation}
        \begin{cases}  \mathcal{P}= 0 & \text{ if } \beta^{-1}<T_H , \\ \mathcal{P} \ne 0 & \text{ if } \beta^{-1} >T_H , \end{cases}
    \end{equation}
    signalling a first order phase transition at $T_H$.
\end{thm}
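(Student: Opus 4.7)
The plan is to reduce the computation of $\mathcal{P}$ to the saddle point analysis already developed in Subsection~\ref{sec:FermionicMM}, and then to translate the statement about the saddle $\gamma_{\ast}$ into a statement about the eigenvalue distribution of the unitary matrix $U$ on the circle. Concretely, one writes
\begin{equation*}
    \mathcal{P} \;=\; \frac{1}{\mz} \sum_{L=0}^{L_{\max}(N)} \mathfrak{q}^{L^{2}} \oint_{SU(N+1)} [\dd U]\; \frac{\tr U}{N}\; f\bigl(U,U^{\dagger};Y,\tilde{Y}\bigr)\Big\rvert_{Y=\tilde{Y}=\sqrt{y}\,\mathbb{1}_{L}}\; ,
\end{equation*}
so that, as in \eqref{eq:ZEx2gammaEff}, the outer sum becomes, at large $N$, an integral weighted by $e^{N^{2}\mathcal{S}_{\beta,a}(\gamma)}$ multiplied by $\gamma$-dependent fluctuation data that is sub-leading in $N$. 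Because the Polyakov loop insertion does not alter the $O(N^{2})$ part of the exponent, the saddle point $\gamma_{\ast}(\beta,a)$ obtained in Subsection~\ref{sec:FermionicMM} still controls the asymptotics of both numerator and denominator.

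First I would handle the low temperature phase. By \eqref{eq:gammaastvanish} (and the explicit checks in Subsections~\ref{sec:IOPFAplanar}, \ref{sec:AvgQCD2FF}), for $\beta^{-1}<T_{H}$ one has $\gamma_{\ast}=0$, so that the leading contribution comes from the $L=0$ sector. In that sector the flavor factor $f$ reduces to a constant and the matrix integral is the pure Haar average $\oint_{SU(N+1)}[\dd U]\,\tr U/N$, which vanishes identically by invariance of $[\dd U]$ under the $U(1)$ center. The Gaussian fluctuations around $\gamma_{\ast}=0$ at fixed insertion also vanish by the same symmetry argument, so the first non-trivial contribution is exponentially suppressed by the gap between $\gamma_{\ast}=0$ and the sub-leading saddles. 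This yields $\mathcal{P}=0$ at $N=\infty$.

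For the high temperature phase I would invoke the unitary matrix model picture of Appendix~\ref{sec:PTMM}. When $\beta^{-1}>T_{H}$ the saddle point $\gamma_{\ast}>0$ activates the effective single-trace potential $\sigma\sum_{n}g_{n}\tr(U^{n}+U^{-n})/n$ with $\sigma=N\gamma_{\ast}$, and Lemma~\ref{lem:Szego0} together with the standard analysis of \cite{Aharony:2003sx,Liu:2004vy} implies that the eigenvalue distribution of $U$ on the unit circle ceases to be uniform; in fact it is gapped in the third-order-transition phase of the UMM. Writing $\mathcal{P}=\int_{-\pi}^{\pi}\dd\theta\, e^{i\theta}\,\rho_{U}(\theta)\rvert_{\gamma_{\ast}}$, one reads off $\mathcal{P}\neq 0$ from the non-uniformity of $\rho_{U}$, which, under the $\mathbb{Z}_{2}$ reflection symmetry $\theta\mapsto -\theta$ of the potential, one can moreover take to be real and positive.

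The main obstacle is purely technical: exchanging the saddle point in $\gamma$ with the saddle point in the eigenvalue density $\rho_{U}(\theta)$, and in particular showing that the Polyakov insertion does not induce a back-reaction that would shift $\gamma_{\ast}$ at leading order. This is controlled by the fact that $\tr U/N = O(1)$ against the $e^{N^{2}\mathcal{S}_{\beta,a}(\gamma)}$ weight, so the back-reaction is $O(1/N^{2})$ and drops out at $N=\infty$; a careful bookkeeping of this together with the $1/\sqrt{T_{H}-\beta^{-1}}$ divergence observed in Subsection~\ref{sec:FermionicMM} near the transition would complete the proof and make \eqref{eq:PolyHagedorn} rigorous within our framework.
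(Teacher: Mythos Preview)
Your overall strategy coincides with the paper's: localize the sum over $L$ at the saddle $\gamma_{\ast}$, then evaluate $\mathcal{P}=\int e^{i\theta}\,d\sigma_{\ast}(\theta)$ using the eigenvalue density of the unitary matrix model at that value of $\gamma$. The low-temperature argument is fine; the paper phrases it more directly by noting that at $\gamma_{\ast}=0$ the equilibrium measure on the circle is uniform, so the integral vanishes, but your detour through the $L=0$ sector reaches the same conclusion.

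There is, however, a genuine gap in the high-temperature argument. You implicitly assume that the gapped eigenvalue density at $\gamma_{\ast}>\gamma_c$ singles out a unique saddle on the circle (centered at $\theta=0$), and from there conclude $\mathcal{P}\neq 0$. But the integrand of the $SU(N+1)$ matrix model is invariant under the $\mathbb{Z}_{N+1}$ center, $U\mapsto e^{2\pi i k/(N+1)}U$. In the gapped phase this one-form symmetry is spontaneously broken: there are $N+1$ degenerate saddle configurations related by center rotations, and summing over all of them gives $\mathcal{P}=0$ even above $T_H$. (The $\mathbb{Z}_2$ reflection $\theta\to-\theta$ you invoke is a different symmetry and does not address this.) The resolution, which the paper spells out, is the standard spontaneous-symmetry-breaking prescription familiar from the Ising magnetization: introduce a small center-breaking perturbation of strength $\varepsilon$ that favours one saddle, compute $\mathcal{P}$ at finite $\varepsilon>0$, and send $\varepsilon\to 0$ \emph{after} $N\to\infty$. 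Without this step the conclusion $\mathcal{P}\neq 0$ for $\beta^{-1}>T_H$ does not follow from your argument.
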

\begin{proof}
    Let $U\in SU(N+1)$ and $e^{i \theta_a}$ be its eigenvalues.  Denote by $\dd \sigma_{\ast} (\theta) $ the saddle point measure, normalized to 1. That is, $\dd \sigma_{\ast} (\theta) $ is the equilibrium measure of the matrix model $\mz_{L}^{(N)}$ (that depends on the Veneziano parameter $\gamma$) evaluated at the saddle point $\gamma_{\ast}$. The expectation value of the Polyakov loop is 
    \begin{equation}
        \mathcal{P} = \langle \frac{1}{N}\mathrm{Tr} U \rangle ,
    \end{equation}
    with $\langle \cdot \rangle $ meaning expectation value in the matrix model. A standard large $N$ computation shows that 
    \begin{equation}
        \mathcal{P} = \int_{- \pi} ^{\pi} e^{i \theta } \dd \sigma_{\ast} (\theta) .
    \end{equation}
    It automatically vanishes if the equilibrium measure is uniform on the circle, $ \dd \sigma_{\ast} (\theta) = \frac{\dd \theta }{2 \pi}$ for all $- \pi < \theta \le \pi$, which is true in the phase $\beta^{-1} <T_H$.\par 
    In the phase $\beta^{-1}>T_H$ the unitary matrix model is, by construction, in a phase in which the eigenvalues $e^{i \theta_a}$ are not spread on the whole circle. The non-triviality of $\gamma_{\ast}$ in the phase $\beta^{-1}>T_H$ guarantees that this remains true after the integration over the Veneziano parameter. More precisely, there exist $\theta_{\pm}$, with $- \pi < \theta_{-} < \theta_{+} < \pi$ such that $ \dd \sigma_{\ast} (\theta) \ne  0$ for $\theta_{-} < \theta < \theta_{+}$ and vanishes otherwise. Measures of this type give a non-vanishing expectation value of $\mathrm{Tr} U$, thus $\mathcal{P} \ne 0$.\par
    \medskip 
    Before concluding the proof, we ought to comment on a subtlety in our computation. This remark is meant for experts, and is a generalization of a remark in \cite{Aharony:2003sx}.\par
    In a deconfined phase, these models should possess a $\Z_{N+1}$ one-form symmetry from the center of $SU(N+1)$. We thus should obtain a collection of saddle point configurations permuted under the $\Z_{N+1}$ symmetry. We have only accounted for one of them, namely the one with eigenvalues centered around $0$. Accounting for all the saddle configurations, their contributions sum up to zero. Naively, the Polyakov loop would have trivial expectation value at $T>T_H$ as well. However, this is exactly the same technical issue that one would encounter when trying to compute the order parameter in the Ising model, with vanishing external field. It is well-known that one should instead do the computation of the order parameter in an external field of modulus $\varepsilon$, and send $\varepsilon \to 0$ at the end. In this way, one finds two different limiting values, depending on the temperature. The same resolution applies here. We implicitly assume a small perturbation of the action that breaks the $\Z_{N+1}$ symmetry explicitly and favours one saddle (without loss of generality, we select the one centered at 0). It is straightforward to repeat the computation at finite $\varepsilon >0$ and, turning off the perturbation at the end, we find $\mathcal{P} \ne 0$ if $T>T_H$.
\end{proof}

\subsection{Example 1: Variations on the IOP model}
\label{sec:BQCD2MM}
Recall the unitary matrix model cIOP partition function in \eqref{eq:ZBQCD2equalsZcIOP}, which we rewrite here:
\begin{equation}
    \mz_{\text{\rm cIOP}}^{(N)} (\sigma, y) = \oint_{U(N)} [\dd U ] \left[ \det \left(1-\sqrt{y} U \right)  \det \left(1-\sqrt{y} U^{-1} \right) \right]^{- \sigma } . \label{eq:BosQCD2MMapp} 
\end{equation}
For consistency with the rest of the appendix, we have replaced the discrete number of flavors $L \in \mathbb{N}$ with a continuous parameter $\sigma >0$. Likewise, we introduce its averaged version by integrating over $\sigma$ with Gaussian weight:
\begin{equation}
\label{eq:BosQCD2sum}
    \mz_{\overline{\text{cIOP}}} (y) = \int_0 ^{\infty} \dd \sigma ~ y^{\frac{\sigma^2}{2}}  \mz_{\text{\rm cIOP}}^{(N)} (\sigma, y) .
\end{equation}
For simplicity, we work in what we have denoted as Schur slice, in which $1/\beta = -1/\ln (y)$ plays the role of $a$.\par
\begin{thm}
Let $y=e^{- \beta}$. There exists $T_H >0$ such that, in the large $N$ limit, 
\begin{equation}
    \ln \mz_{\overline{\text{\rm cIOP}}}  =  \begin{cases} O(1) & \frac{1}{\beta}< T_{H} \\ O(N^2) & \frac{1}{\beta}>T_H . \end{cases}
\end{equation}
\end{thm}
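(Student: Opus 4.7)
The plan is to mirror the proof of Theorem \ref{thm:avgCIOP}, with the discrete sum over the flavor rank replaced by the continuous integral over $\sigma$. In the Schur slice we have $\mathfrak{q}^{L^2} = y^{L^2/2}$, so \eqref{eq:BosQCD2sum} is the natural continuous avatar of \eqref{eq:avgBQCD2}, and the strategy is to perform a Laplace-type analysis of the $\sigma$-integral after inserting the known planar free energy of $\mz_{\text{\rm cIOP}}^{(N)}$.

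As a first step I would rescale $\sigma = N\gamma$ and use Szeg\H{o}-type uniform asymptotics together with the planar result \eqref{eq:FcIOP} (valid at large $N$ for $L = N\gamma$, with the obvious continuous extension to real $\gamma > 0$) to write, at leading order in $N$,
\begin{equation*}
    \mz_{\overline{\text{\rm cIOP}}}(y) \;\approx\; N \int_0^{\gamma_{\max}} \dd\gamma \; \exp\!\left\{N^2 \, \mathcal{S}_\beta(\gamma)\right\}, \qquad
    \mathcal{S}_\beta(\gamma) \;:=\; \mf_{\text{\rm cIOP}}(\gamma,\beta) \;-\; \tfrac{\beta}{2}\gamma^2,
\end{equation*}
where the upper cutoff $\gamma_{\max}$ plays the same regularizing role as in Subsection \ref{sec:QMLsum} and will be immaterial for the conclusion. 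In the phase $\gamma<\gamma_c = \tfrac{1+\sqrt{y}}{2\sqrt{y}}$ the first summand from \eqref{eq:FcIOP} gives $\mathcal{S}_\beta(\gamma) = -\gamma^2\bigl[\tfrac{\beta}{2}+\ln(1-e^{-\beta})\bigr]$, which is Gaussian in $\gamma$.

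Next I would identify the critical temperature $T_H$ as the unique solution of $\tfrac{\beta}{2}+\ln(1-e^{-\beta})=0$ (numerically $T_H\approx 1.039$, in agreement with the Schur-slice value of Theorem \ref{thm:Amodel}). For $\beta^{-1}<T_H$ the coefficient of $\gamma^2$ is strictly negative, so the unique saddle inside the Gaussian phase is $\gamma_\ast=0$ with $\mathcal{S}_\beta(0)=0$; the remaining contribution is computed by ordinary Gaussian integration around $\gamma=0$ and gives $\ln\mz_{\overline{\text{\rm cIOP}}} = O(1)$, with a square-root-type pre-factor $\propto (T_H^{-1}-\beta)^{-1/2}$ as $\beta\uparrow T_H^{-1}$, signalling the Hagedorn behavior. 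For $\beta^{-1}>T_H$ the Gaussian phase is upside down, so any finite saddle must lie in the branch $\gamma>\gamma_c$ of \eqref{eq:FcIOP}; one checks by direct differentiation, exactly as in Theorem \ref{thm:avgCIOP} and illustrated by Figure \ref{fig:plotSbetacIOP}, that $\mathcal{S}_\beta$ has a unique global maximum $\gamma_\ast>\gamma_c$ with $\mathcal{S}_\beta(\gamma_\ast)>0$. Laplace's method then yields $\ln\mz_{\overline{\text{\rm cIOP}}} = N^2\,\mathcal{S}_\beta(\gamma_\ast) + O(\ln N)$, which is the claimed $O(N^2)$ behavior, and the non-coexistence of the two saddles makes the transition at $T_H$ first order.

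The main technical obstacle is the passage from $\mz_{\text{\rm cIOP}}^{(N)}(\sigma,y)$ to its Veneziano asymptotic form \emph{uniformly} in $\sigma$, since the Laplace estimate needs control of the planar free energy both near $\gamma=0$ (small-$\sigma$ regime, where $\sigma$ is not of order $N$ and one should check that these contributions are subdominant) and near the third-order critical point $\gamma=\gamma_c$ (where $\mf_{\text{\rm cIOP}}$ is only $C^2$). Away from the transition this is standard \cite{Santilli:2020ueh,CP:2013}; at the transition one can handle the window $|\gamma-\gamma_c|\lesssim N^{-1/3}$ by the usual double-scaling analysis, which contributes only sub-leading corrections and therefore does not alter the jump in the leading-order behavior of $\ln\mz_{\overline{\text{\rm cIOP}}}$. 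Combining these pieces completes the proof.
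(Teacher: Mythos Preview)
Your proposal is correct and follows essentially the same route as the paper: rescale $\sigma=N\gamma$, insert the planar free energy \eqref{eq:FcIOP}, and carry out the saddle-point analysis of $\mathcal{S}_\beta(\gamma)=\mf_{\text{cIOP}}-\tfrac{\beta}{2}\gamma^2$ exactly as in Theorem~\ref{thm:avgCIOP} (Schur slice $a=\beta^{-1}$). The paper's own proof is in fact terser than yours --- it simply points to steps $(i)$ and $(ii)$ and defers the details to the analogous QCD$_2$ computation in Appendix~\ref{sec:AvgQCD2UMM} --- so your identification of $T_H$ via $\tfrac{\beta}{2}+\ln(1-e^{-\beta})=0$, the Gaussian saddle at $\gamma_\ast=0$ below $T_H$, and the strictly positive maximum at $\gamma_\ast>\gamma_c$ above $T_H$ are precisely what the paper has in mind; your additional remarks on uniformity of the Veneziano asymptotics and the double-scaling window near $\gamma_c$ go beyond the level of rigor the paper aims for.
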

\begin{proof}The proof is done in two steps: 
\begin{itemize}
    \item[($i$)] take the large $N$ planar limit of $\mz_{\text{\rm cIOP}}^{(N)} $; 
    \item[($ii$)] insert the result in \eqref{eq:BosQCD2sum} and extremize.
\end{itemize}\par
The Veneziano limit of the unitary matrix model \eqref{eq:BosQCD2MMapp} was addressed in \cite{Baik:2000,Santilli:2020ueh} (see also \cite{Chen:2013vya}). The computation of step ($ii$) is almost identical to the one performed in the next subsection, thus we omit it. Suffice it here to note that \eqref{eq:BosQCD2MMapp} is related to \eqref{eq:QCD2MMapp} through the map
\begin{equation}
    (\sqrt{y},\sigma) \ \mapsto \ (- \sqrt{y}, -\sigma) .
\end{equation}
In particular, it also shows a third order phase transition that can be promoted to a first order one by integrating over $\sigma$. See Appendix \ref{sec:AvgQCD2UMM}.\par
\end{proof}

\subsection{Example 2: Matrix model of \texorpdfstring{QCD$_2$}{QCD2}}
\label{sec:AvgQCD2UMM}
The take-home lesson of this appendix is that, by introducing an extra integration with Gaussian weight, unitary one-matrix models with a third order phase transition are promoted to have a first order one, with Hagedorn behavior near the transition point.\par
We now exemplify this in the toy model for QCD$_2$ \cite{Hallin:1998km} discussed in Section \ref{sec:ExQCD2}. In the notation of Appendix \ref{sec:UnitaryMM} it reads
\begin{align}
    \mz_{\text{\rm QCD$_2$}} (\sigma, y) &= \oint_{U(N)} [\dd U ] \left[ \det \left(1+\sqrt{y} U \right)  \det \left(1+\sqrt{y} U^{-1} \right) \right]^{\sigma} \label{eq:QCD2MMapp} \\
    &= \oint_{U(N)} [\dd U ] \exp \left\{  \sigma \mathrm{Tr} \left[ \ln\left(1+ \sqrt{y} U^{-1} \right) + \ln\left(1+ \sqrt{y} U^{-1} \right) \right] \right\} \notag \\
    &= \oint_{U(N)} [\dd U ] \exp \left\{  \sigma \sum_{n=1}^{\infty} \frac{y^{n/2}}{n} \mathrm{Tr} \left( U^{n} + U^{-n} \right) \right\} . \label{eq:expanndexpQCD2}
\end{align}
From \eqref{eq:expanndexpQCD2}, the model manifestly belongs to the family \eqref{eq:defgenericUMM} with $g_n = \sqrt{y}^{n}$. Here we have relaxed the QCD interpretation, replacing the number of quarks $L \in \mathbb{N}$ with $\sigma >0$.\par
\begin{lem}[\cite{Hallin:1998km}]
\label{lemma:HP}
The quantity
\begin{equation}
    \mathcal{F}_{\text{\rm QCD$_2$}} ( \gamma ,y ) := \lim_{N \to \infty} \frac{1}{N^2} \ln  \mz_{\text{\rm QCD$_2$}} (N \gamma, y) 
\end{equation}
shows a third order phase transition at the critical curve 
\begin{equation}
    \gamma_c (y) = \frac{1-\sqrt{y}}{2\sqrt{y}}  ,
\end{equation}
with
\begin{equation}
	 \mathcal{F}_{\text{\rm QCD$_2$}} ( \gamma , y ) = \begin{cases} - \gamma^2 \ln (1-y) & \gamma < \frac{1-\sqrt{y}}{2\sqrt{y}} \\ - (2 \gamma +1) \ln \left(1 + \sqrt{y}\right) +  \frac{1}{4} \ln y + C(\gamma)  & \gamma > \frac{1-\sqrt{y}}{2\sqrt{y}} . \end{cases}
\end{equation}
In the second phase, $C(\gamma)$ is the $y$-independent term
\begin{equation}
    C(\gamma)= - \gamma ^2 \ln \left( \frac{4 \gamma (\gamma +1)}{(2 \gamma +1)^2}\right)+ \frac{1}{2}\ln \left(1+2 \gamma \right)+(1+2 \gamma) \ln \left(\frac{2 (\gamma +1)}{1+2 \gamma}\right) .
\end{equation}
\end{lem}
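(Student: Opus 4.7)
The plan is to work directly with the unitary matrix model \eqref{eq:QCD2MMapp}, diagonalize $U$ and pass to an eigenvalue density $\rho(\theta)$ on $[-\pi,\pi]$. In the planar limit $\sigma = N\gamma$, the effective action reads
\[
\mathcal{S}[\rho] = -\gamma\!\int \dd\theta\, \rho(\theta)\ln\bigl|1+\sqrt{y}e^{\ii\theta}\bigr|^2 \,-\, \tfrac{1}{2}\!\iint \dd\theta\,\dd\theta'\, \rho(\theta)\rho(\theta')\ln\bigl|2\sin\tfrac{\theta-\theta'}{2}\bigr|,
\]
subject to $\int\rho = 1$ and $\rho \ge 0$. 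I would then study the saddle in two regimes: an ungapped phase where $\supp\rho = [-\pi,\pi]$, and a symmetric one-arc phase with $\supp\rho = [-\theta_0,\theta_0]$ bounded away from $\pm\pi$.

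In the ungapped phase, I would apply Lemma \ref{lem:Szego0} directly. Expanding $\ln|1+\sqrt{y}e^{\ii\theta}|^2 = -2\sum_{n\ge 1}\frac{(-\sqrt{y})^n}{n}\cos(n\theta)$ identifies $g_n^2 = y^n$, so that
\[
\mf_{\text{QCD}_2}(\gamma,y) \,=\, \gamma^2 \sum_{n\ge 1}\frac{y^n}{n} \,=\, -\gamma^2\ln(1-y).
\]
Inserting this back into the saddle-point equation yields the density
\[
\rho_\ast(\theta) \,=\, \frac{1}{2\pi}\left[1 + \gamma\,\frac{\sqrt{y}\,e^{\ii\theta}}{1+\sqrt{y}\,e^{\ii\theta}} + \gamma\,\frac{\sqrt{y}\,e^{-\ii\theta}}{1+\sqrt{y}\,e^{-\ii\theta}}\right],
\]
whose minimum sits at $\theta = \pi$. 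Imposing $\rho_\ast(\pi)\ge 0$ gives exactly $\gamma \le (1-\sqrt{y})/(2\sqrt{y}) = \gamma_c(y)$, which pinpoints the critical curve.

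Beyond the critical point, the density must vanish on a neighbourhood of $\theta=\pi$, and I would look for a symmetric one-arc saddle. The standard tool is the planar resolvent $\omega(z)$ on $\C\setminus e^{\ii[-\theta_0,\theta_0]}$; the saddle-point equation becomes a scalar Riemann--Hilbert problem which I would solve with the Ansatz $\omega(z) = P(z)\sqrt{(z-e^{\ii\theta_0})(z-e^{-\ii\theta_0})}$ for a rational $P(z)$ having poles only where the potential is singular (namely $z=-1/\sqrt{y}$). The asymptotics $\omega(z)\sim 1/z$ as $z\to\infty$, regularity at the pole, and normalization fix $P$ and $\cos\theta_0$ uniquely as functions of $(\gamma,y)$. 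Evaluating $\mathcal{S}[\rho_\ast]$ on this gapped saddle then produces the second branch $-(2\gamma+1)\ln(1+\sqrt{y}) + \tfrac14\ln y + C(\gamma)$, with the $y$-independent remainder $C(\gamma)$ arising from the part of the action that only involves the endpoint $\theta_0$ and the Vandermonde.

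The third-order nature of the transition would then follow by matching the two branches at $\gamma=\gamma_c$: continuity of $\mf_{\text{QCD}_2}$ and its first two $\gamma$-derivatives, with a jump in the third. The main technical obstacle will be extracting $C(\gamma)$ in closed form from the gapped resolvent, since this requires tracking all the boundary contributions in the one-arc Coulomb gas computation. This obstacle can be largely bypassed by cross-checking against the discrete-ensemble derivation sketched in Subsection \ref{sec:AvgQCD2FF}, where the transposed-Cauchy identity \eqref{eq:Cauchydual} maps \eqref{eq:QCD2MMapp} to the fermionic discrete matrix model \eqref{eq:fermionMMRepsNosum}; the gapped eigenvalue density is then explicitly \eqref{eq:CPrhohat}--\eqref{eq:xpmCP}, and substituting these into the discrete effective action yields $C(\gamma)$ directly, thereby completing the proof.
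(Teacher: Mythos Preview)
Your sketch is correct and follows the standard Coulomb-gas/resolvent route for unitary one-matrix models. The paper, however, does not actually prove this lemma: its ``proof'' consists solely of the citation \emph{See \cite{Hallin:1998km} for the original proof, or \cite{Santilli:2020ueh} for a uniform treatment}. So the comparison is that the paper defers entirely to the literature, whereas you reconstruct the argument.

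Your ungapped-phase computation is exactly the content of Lemma~\ref{lem:Szego0} applied to $g_n^2 = y^n$, and your identification of $\gamma_c$ from $\rho_\ast(\pi)=0$ is the standard mechanism. Your one-arc resolvent ansatz with a single pole at $z=-1/\sqrt{y}$ is the right structure for the gapped phase and is precisely what the cited references solve. The cross-check you propose against the discrete ensemble \eqref{eq:fermionMMRepsNosum} via \eqref{eq:CPrhohat}--\eqref{eq:xpmCP} is in fact how the main text (Subsection~\ref{sec:AvgQCD2FF}, following \cite{CP:2013}) independently obtains the same free energy, so invoking it is consistent with the paper's internal logic. The only thing to flag is that extracting $C(\gamma)$ in closed form from the gapped resolvent is genuinely tedious---this is why the paper simply cites rather than reproduces---but there is no conceptual obstacle, and your plan to short-circuit it through the discrete side is sound.
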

\begin{proof}See \cite{Hallin:1998km} for the original proof, or \cite{Santilli:2020ueh} for a uniform treatment of this case and the model appearing in Appendix \ref{sec:BQCD2MM}.
\end{proof}\par
According to the general discussion, we want to integrate over the external field $\sigma$, and show that this produces the Hagedorn behavior \eqref{eq:Hagedornbehavior}. For consistency with the rest of the appendix, here we consider:
\begin{equation}
\begin{aligned}
    \mz_{\overline{\text{\rm QCD$_2$}}} (y) & = \int_0 ^{\infty} \dd \sigma  ~ y^{\frac{\sigma^2}{2}} \mz_{\text{\rm QCD$_2$}} (\sigma, y) \\
    & =  \int_0 ^{\infty} \dd \sigma ~ y^{\frac{\sigma^2}{2}} \oint_{U(N)} [\dd U ] \exp \left\{  \sigma \mathrm{Tr} \left[ \ln\left(1+ \sqrt{y} U^{-1} \right) + \ln\left(1+ \sqrt{y} U^{-1} \right) \right] \right\} . \label{eq:QCD2withsum}
\end{aligned}
\end{equation}
Again we work in the analogue of the Schur slice, in which the Gaussian weight in the measure for $\sigma$ is $y^{\sigma^2/2}$. We might have taken a different Gaussian weight $e^{-\sigma^2/(2a)}$, but the reader can check that the conclusions are unchanged in the more general case. Besides, in the planar limit, one can check numerically that, in this and the previous example, the difference between summing over $L$ or integrating over $\sigma$ is sub-leading in $N$. Hence, $\mz_{\overline{\text{\rm QCD$_2$}}}$ and $\mz_{\text{\rm Ex2}}$ of Section \ref{sec:AvgQCD2FF} will have the same properties.
\begin{thm}
Let $\mz_{\overline{\text{\rm QCD$_2$}}}$ be as in \eqref{eq:QCD2withsum}, with $y=e^{- \beta}$. In the large $N$ limit, it undergoes a first order phase transition at $\frac{1}{\beta} =T_H$, where $ T_H\approx 1.039$. Moreover, 
\begin{equation}
\label{eq:ZAHagedornapp}
    \ln \mz_{\overline{\text{\rm QCD$_2$}}} =  \begin{cases} O(1) & \frac{1}{\beta}< T_{H} \\ O(N^2) & \frac{1}{\beta}>T_H . \end{cases}
\end{equation}
\end{thm}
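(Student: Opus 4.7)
The plan is to mimic closely the two-step strategy used in Theorem \ref{thm:Amodel}, with the only substantive change being that the discrete weighted sum over $L$ is replaced by a Gaussian integral over the continuous parameter $\sigma$. In the planar limit this replacement is immaterial: both prescriptions pick up the same saddle point, and the integral is simpler to analyse by Laplace's method than the sum. So concretely, I would first invoke Lemma \ref{lemma:HP} to substitute the planar free energy $\mathcal{F}_{\text{QCD}_2}(\gamma,y)$ for $N^{-2}\ln \mz_{\text{QCD}_2}(N\gamma,y)$ inside the integral, writing
\begin{equation}
\mz_{\overline{\text{QCD}_2}}(e^{-\beta}) \approx N \int_0^\infty \dd \gamma \, \exp\!\left\{ N^2 \, \mathcal{S}_\beta(\gamma) \right\}, \qquad \mathcal{S}_\beta(\gamma) := \mathcal{F}_{\text{QCD}_2}(\gamma,e^{-\beta}) - \tfrac{\beta}{2}\gamma^2,
\end{equation}
after the rescaling $\sigma = N\gamma$. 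Here the factor $y^{\sigma^2/2}=e^{-\beta N^2 \gamma^2/2}$ produces the Gaussian damping term.

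Next I would analyse the two branches of $\mathcal{S}_\beta(\gamma)$ separately. For $\gamma < \gamma_c(y) = (1-\sqrt{y})/(2\sqrt{y})$, Lemma \ref{lemma:HP} gives $\mathcal{F}_{\text{QCD}_2} = -\gamma^2 \ln(1-y)$, so $\mathcal{S}_\beta(\gamma)$ is purely quadratic with coefficient $-[\tfrac{\beta}{2}+\ln(1-e^{-\beta})]$. The quadratic is convex with maximum at $\gamma_*=0$ precisely when $\tfrac{\beta}{2}+\ln(1-e^{-\beta})>0$, i.e.\ when $\beta > \beta_c$, where $\beta_c$ is the unique positive root of $\beta_c + 2\ln(1-e^{-\beta_c})=0$; numerically $\beta_c \approx 0.962$, giving $T_H = \beta_c^{-1}\approx 1.039$. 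In this low-temperature regime the leading $N^2$ term cancels and a Gaussian integration around the saddle contributes only $O(1)$ to $\ln \mz_{\overline{\text{QCD}_2}}$.

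For $\beta<\beta_c$ the trivial saddle has become a minimum of $-\mathcal{S}_\beta$, and I would look for a non-trivial saddle in the high-$\gamma$ branch. Using the explicit $C(\gamma)$ in Lemma \ref{lemma:HP}, the saddle point equation $\partial_\gamma \mathcal{F}_{\text{QCD}_2} = \beta\gamma$ is solvable implicitly and, by direct inspection (numerically or via the same analysis illustrated in Figures \ref{fig:plotgammamaxA}--\ref{fig:plotSbetaZoom} for the discrete case), gives a unique $\gamma_*=\gamma_*(\beta)>\gamma_c$ for all $\beta < \beta_c$. At this saddle one checks that $\mathcal{S}_\beta(\gamma_*)>0$ strictly, and moreover $\mathcal{S}_\beta(\gamma_*)\to 0$ as $\beta \uparrow \beta_c$ with $\gamma_*\to \gamma_c$. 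Substituting back gives $\ln \mz_{\overline{\text{QCD}_2}} = N^2\,\mathcal{S}_\beta(\gamma_*) + O(\ln N)$ in the high-temperature phase.

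Finally, the jump in the dominating saddle (from $\gamma_*=0$ to $\gamma_*>0$) at $\beta=\beta_c$ combined with $\mathcal{S}_{\beta_c}(0)=\mathcal{S}_{\beta_c}(\gamma_c)=0$ and non-coexistence of the two saddles away from $\beta_c$ yields the advertised first order transition and the piecewise growth \eqref{eq:ZAHagedornapp}. The one mildly delicate point in the argument is the uniformity of Laplace's method near $\beta=\beta_c$: because the two candidate saddles coalesce at the critical point where $\mathcal{S}_\beta$ has a degenerate Hessian on the matching locus $\gamma=\gamma_c$ (this is a reflection of the original third-order transition in $\mz_{\text{QCD}_2}$), the Gaussian expansion must be replaced by a more careful computation that treats the subleading contribution, responsible for the Hagedorn-like divergence $\mz_{\overline{\text{QCD}_2}} \propto (e^{-\beta}-e^{-1/T_H})^{-1/2}$ as $\beta\to\beta_c^-$. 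This last step requires nothing beyond a standard cubic/quartic steepest-descent analysis at the critical point, so I would relegate it to a short computation rather than treat it as the main body of the proof.
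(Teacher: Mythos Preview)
Your proposal is correct and follows essentially the same two-step strategy as the paper's proof: invoke Lemma \ref{lemma:HP} for the planar free energy of the inner matrix model, then carry out the saddle-point analysis in $\gamma$ exactly as in Theorem \ref{thm:Amodel}, identifying the trivial saddle $\gamma_\ast=0$ for $\beta>\beta_c$ and the non-trivial one for $\beta<\beta_c$ with no coexistence. One small slip: the Hagedorn-like divergence $\mz_{\overline{\text{QCD}_2}}\propto (\tfrac{\beta}{2}+\ln(1-e^{-\beta}))^{-1/2}$ arises from the Gaussian fluctuations in the \emph{low}-temperature phase and is approached as $\beta\to\beta_c^{+}$ (equivalently $T\to T_H^{-}$), not $\beta\to\beta_c^{-}$.
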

\begin{proof}
As in the proof of Theorem \ref{thm:Amodel} in Subsection \ref{sec:AvgQCD2FF}, we divide the computation in two steps. Step ($i$) consists in maximizing the inner matrix model $\mz_{\text{\rm QCD$_2$}}$ in the planar large $N$ limit. The solution is in Lemma \ref{lemma:HP}.\par
The discussion in step ($ii$) in the proof of Theorem \ref{thm:Amodel} goes through identically at this stage. Let us recall the situation we have found in Subsection \ref{sec:AvgQCD2FF}: 
\begin{itemize}
    \item There exists a trivial saddle point $\gamma=0$, valid for $\beta > \beta_c$;
    \item There exists a nontrivial saddle point $\gamma_{\ast} >0$, valid for $\beta < \beta_{c}$;
    \item There is no coexistence phase and the two saddles exchange dominance precisely at $\beta=\beta_c$.
\end{itemize} \par
Therefore, this case corresponds to scenario \eqref{scen2}. The direct inspection in the proof of Theorem \ref{thm:Amodel} can be repeated here and tells us that a phase transition kicks in at $T_H = 1/\beta_c$. Putting everything together, we finally arrive at the first order, Hagedorn-like phase transition \eqref{eq:ZAHagedornapp} with Hagedorn temperature
\begin{equation}
    T_{H} = \frac{1}{\beta_c} \approx 1.039 .
\end{equation}\par
\end{proof}

\subsection{Example 3: Effective \texorpdfstring{$\mN=4$}{N=4} super-Yang--Mills}
\label{sec:N4SYMMM}
The reasoning of Appendix \ref{sec:UnitaryMM} was inspired by and extended a result of Liu \cite{Liu:2004vy} on the effective description of $\mN=4$ super-Yang--Mills theory on $\mathbb{S}^3 \times \mathbb{S}^1 _{\bsym}$, with radius of the thermal circle the inverse temperature $\bsym=1/T$. It was shown in \cite{Sundborg:1999ue,Aharony:2003sx} that the partition function in the weak 't Hooft coupling limit reduces to:
\begin{equation}
\label{eq:N4SYMMMfull}
	\hat{\mz}_{\mN =4} (\vec{a}) = \oint_{U(N)} \dd U ~ \exp \left\{ \sum_{n \ge 1} \frac{a_n}{n} \tr \left( U^{n} \right) \tr \left( U^{-n} \right)   \right\} .
\end{equation}
The coefficients $(a_n)_{n \ge 1} \subset \R$ are functions of the inverse temperature $\bsym$. The leading contribution at large $N$ comes from the reduced matrix model  \cite{Aharony:2003sx,Alvarez-Gaume:2005dvb}
\begin{equation}
\label{eq:N4SYMMMapp}
	\mz_{\mN =4}  (a) = \oint_{U(N)} \dd U ~ \exp \left\{ \frac{a}{2} \tr \left( U \right) \tr \left( U^{-1} \right)   \right\} . 
\end{equation}
We have denoted
\begin{equation}
\label{eq:aSYMTemp}
    a:= 2 a_1 = \frac{4e^{- \bsym} (3-e^{- \bsym/2} )}{(1-e^{- \bsym/2})^3} .
\end{equation}
\begin{prop}[\cite{Liu:2004vy}]
The matrix model \eqref{eq:N4SYMMMapp} undergoes a first order phase transition at $\frac{1}{\bsym} = T_H$, where 
\begin{equation}
    T_{H} = \frac{1}{2 \ln \left( 2 + \sqrt{3} \right) }.
\end{equation}
Besides, it has the behavior \eqref{eq:Hagedornbehavior}.
\end{prop}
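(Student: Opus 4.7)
The plan is to reduce \eqref{eq:N4SYMMMapp} to the form \eqref{eq:defHagMM} treated in Appendix \ref{sec:UnitaryMM}, so that the first order phase transition will follow from the general analysis already developed. Because $\tr(U)\tr(U^{-1}) = \lvert \tr U \rvert^2$ is real and invariant under the $U(1)$ action $U \mapsto e^{i\varphi} U$, a Hubbard--Stratonovich transformation decouples the double trace into a single trace. Concretely, I would apply
\begin{equation*}
    e^{\frac{a}{2} \lvert \tr U \rvert^2} = \int_{\mathbb{C}} \frac{\dd^2 \sigma}{\pi}  e^{- \lvert \sigma \rvert^2 + \sqrt{a/2}\left( \sigma \tr(U^{-1}) + \bar\sigma \tr(U) \right) } ,
\end{equation*}
and use the $U(1)$ invariance of the Haar measure to eliminate the phase of $\sigma$. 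Writing $\sigma = \rho e^{i\varphi}$ and changing variables $U \mapsto e^{i\varphi} U$ inside the integral, the $\varphi$-dependence drops out. After a further rescaling of the radial variable to $\sigma_{\text{new}} = \sqrt{2a}\rho$ one arrives at the representation
\begin{equation*}
    \mz_{\mN=4}(a) \;=\; \frac{1}{a} \int_0^\infty \sigma \dd \sigma  e^{-\sigma^2/(2a)} \mz_{\text{GWW}}(\sigma) , \qquad \mz_{\text{GWW}}(\sigma) := \oint_{U(N)} [\dd U]  e^{\frac{\sigma}{2}\left( \tr U + \tr U^{-1} \right)}.
\end{equation*}

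This is precisely of the form \eqref{eq:defHagMM}, since the prefactor $\sigma/a$ grows only polynomially and therefore does not affect the $O(N^2)$ scaling of $\ln \mz_{\mN=4}$. The inner integral $\mz_{\text{GWW}}(\sigma)$ is the Gross--Witten--Wadia model, whose character data in the notation of Appendix \ref{sec:UnitaryMM} is $g_1 = 1/2$ and $g_n = 0$ for $n \ge 2$. Szeg\H{o}'s Lemma \ref{lem:Szego0} then gives, in the planar 't Hooft scaling $\sigma = N\gamma$, the quadratic behavior $\mf_{\mathrm{GWW}}(\gamma) = \gamma^2/4$ for $\gamma < 2$, followed by the well-known one-cut/no-cut third order transition of \cite{Gross:1980he,Wadia:1980cp} at $\gamma_c = 2$. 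Substituting into \eqref{eq:defacgeneric} one reads off
\begin{equation*}
    a_c = \tfrac{1}{2}\left( \tfrac{g_1^2}{1}\right)^{-1} = 2.
\end{equation*}

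With this in hand, I would proceed as in scenario \eqref{scen2} of Appendix \ref{sec:UnitaryMM}. For $a<2$ the Gaussian in \eqref{eq:HagMMgamma} is damping at $\gamma \to \infty$ and the maximum of the exponent sits at $\gamma_\ast = 0$, killing the $O(N^2)$ growth and leaving only an $O(1)$ contribution. For $a>2$ the saddle point equation \eqref{eq:genSPEgammaast} admits a solution $\gamma_\ast > \gamma_c = 2$ where $\mf_{\mathrm{GWW}}$ is linear in $\gamma$, and evaluating the exponent at this saddle gives a strictly positive leading term $O(N^2)$. The two saddles do not coexist and exchange dominance exactly at $a=2$, which by the general argument reviewed in Appendix \ref{sec:UnitaryMM} is a weakly first order transition exhibiting Hagedorn behavior in the approach $a \nearrow 2$. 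The final step is to translate the critical value $a = 2$ into the temperature variable via \eqref{eq:aSYMTemp}: setting $x = e^{-\bsym/2}$ and solving $2(1-x)^3 = 4 x^2 (3-x)$ factors as $(x+1)(x^2 - 4x + 1) = 0$, whose only root in $(0,1)$ is $x = 2 - \sqrt{3}$, yielding $T_H = 1/[2 \ln(2+\sqrt{3})]$, as claimed.

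The main obstacle I expect is not conceptual but technical: one must confirm that the $U(1)$ gauge fixing of the phase of $\sigma$ leaves behind the correct smooth Jacobian $\sigma \dd\sigma$, and that this, together with any fluctuation determinant around the saddle, produces only sub-leading $O(\ln N)$ contributions which cannot spoil the clean dichotomy \eqref{eq:Hagedornbehavior}. Checking this carefully (in particular the Gaussian fluctuations around $\gamma_\ast$ just above $a_c$, which give the characteristic $(a-2)^{-1/2}$ Hagedorn-like singularity) is what closes the argument.
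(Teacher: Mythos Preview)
Your overall strategy coincides with the paper's: Hubbard--Stratonovich reduces \eqref{eq:N4SYMMMapp} to the form \eqref{eq:defHagMM} with the Gross--Witten--Wadia model inside, and then the general machinery of Appendix \ref{sec:UnitaryMM} applies to locate the first order transition at $a=2$, which translates via \eqref{eq:aSYMTemp} to the stated $T_H$.

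However, there is a concrete factual error in your treatment of the GWW model that would derail the verification of scenario \eqref{scen2}. With the action $\tfrac{\sigma}{2}\tr(U+U^{-1})$ and $\sigma=N\gamma$, the third order GWW transition sits at $\gamma_c=1$, not $\gamma_c=2$; correspondingly the quadratic free energy $\mf_{\mathrm{GWW}}(\gamma)=\gamma^2/4$ holds only for $\gamma\le 1$, and for $\gamma>1$ one has $\mf_{\mathrm{GWW}}(\gamma)=\gamma-\tfrac{1}{2}\ln\gamma-\tfrac{3}{4}$, which is not linear. This matters because the saddle of \eqref{eq:genSPEgammaast} in the strong-coupling branch is $\gamma_\ast=\tfrac{1}{2}\bigl[a+\sqrt{a(a-2)}\,\bigr]$, which at $a=2$ equals $1$. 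With the correct $\gamma_c=1$ this gives $\gamma_\ast=\gamma_c$ at $a=2$, so $a_\star=a_c=2$ and scenario \eqref{scen2} holds; with your $\gamma_c=2$ you would instead find $\gamma_\ast=1<\gamma_c$ at $a=2^+$, and your assertion that ``the saddle point equation admits a solution $\gamma_\ast>\gamma_c=2$'' would be false. The value $a_c=2$ you extract from Szeg\H{o} is correct, but the check that the two saddles meet exactly there relies on getting $\gamma_c$ right.
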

\begin{proof}Using a Hubbard--Stratonovich transformation \cite{Liu:2004vy}, \eqref{eq:N4SYMMMapp} can be recast in the form \eqref{eq:defHagMM}:
\begin{equation}
\label{eq:N4SYMGWW}
	\mz_{\mN =4}  (a) = \int_0 ^{\infty} \sigma \dd \sigma ~ e^{- \frac{\sigma^2}{2a}} \oint_{U(N)} \dd U ~ \exp \left\{ \frac{\sigma}{2} \tr \left( U+ U^{-1} \right)   \right\} .
\end{equation}
The inner integral 
\begin{equation}
    e^{N^2 \mf_{\mathrm{GWW}} (\sigma/N)} = \oint_{U(N)} \dd U ~ \exp \left\{ \frac{\sigma}{2} \tr \left( U+ U^{-1} \right)   \right\}
\end{equation}
is the famous Gross--Witten--Wadia (GWW) model \cite{Gross:1980he,Wadia:1980cp,Wadia:2012fr}, and belongs to the family \eqref{eq:defgenericUMM} with $g_1 = \frac{1}{2}$, $g_{n >1} =0$. In the planar limit, with scaling $\sigma =N \gamma$, it undergoes a third order phase transition at $\gamma =1$:
\begin{equation}
    \mf_{\mathrm{GWW}} (\gamma) = \begin{cases} \frac{\gamma^2}{4} & \gamma \le 1 \\ \gamma - \frac{1}{2} \ln \gamma - \frac{3}{4} & \gamma >1 . \end{cases}
\end{equation}
Plugging this result back into \eqref{eq:N4SYMGWW}, the large $N$ limit is dominated by the saddle point $\gamma_{\ast}$ that solves 
\begin{equation}
\label{eq:SPEN4GWW}
    \frac{\gamma_{\ast}}{a} =  \mf_{\mathrm{GWW}} ^{\prime} (\gamma_{\ast} ) .
\end{equation}
In the region $\gamma \le 1$, the solution is $\gamma_{\ast} =0$, which is the absolute minimum if 
\begin{equation}
    \frac{1}{2a} - \frac{1}{4} >0 \qquad \Longrightarrow \qquad a >2 .
\end{equation}
Inspecting the region $\gamma >1$, the solution to \eqref{eq:SPEN4GWW} is the saddle point 
\begin{equation}
    \gamma_{\ast} = \frac{1}{2} \left[ a + \sqrt{a(a - 2)} \right] .
\end{equation}
This solution is valid in the region $a \ge 2$. Computing the free energy, one finds a first order phase transition at $a_H=2$.\par
We are in the scenario \eqref{scen2} described above, and the transition manifests the Hagedorn behavior \eqref{eq:Hagedornbehavior}, expected in this effective description of $\mN=4$ super-Yang--Mills \cite{Liu:2004vy}. Using the relation \eqref{eq:aSYMTemp} and $\bsym=1/T$, one in facts finds out the expected behavior
\begin{equation}
    \ln \mz_{\mN =4}  (a(T)) =  \begin{cases} O(1) & T < T_{H} \\ O(N^2) & T>T_H , \end{cases}
\end{equation}
with Hagedorn temperature
\begin{equation}
    T_{H} = \frac{1}{2 \ln \left( 2 + \sqrt{3} \right) } \approx 0.380 .
\end{equation}
\end{proof}

\subsection{Counterexample: \texorpdfstring{$q$}{q}-ensemble and effective Chern--Simons theory}
\label{app:qCSMM}
Our procedure to upgrade a third order phase transition to a first order one relies on Lemma \ref{lem:Szego0}. In practice, it boils down to the existence of a phase in which the eigenvalues $\theta_a$ of the random matrix in $\mz_{\text{UMM}}$ fill the whole interval $[-\pi, \pi]$. For the sake of completeness, here we provide a counterexample to our procedure, based on violating the assumption \eqref{eq:condgrational} in Lemma \ref{lem:Szego0}. We will also explain that the lack of a first order phase transition in this case is expected on physical grounds.\par
\medskip
Let $\mathcal{P} (x) = \prod_{m \ge 1} (1-x^{m})$ be the generating function of partitions and $\vartheta (z;q)$ denote the Jacobi theta function. Consider the following $q$-ensemble \cite{Okuda:2004mb,Romo:2011qp,Szabo:2010sd}:
\begin{equation}
\label{eq:CSthetaMM}
    \mz_{\mathrm{CS}} (q) = \frac{1}{ \mathcal{P} (q^2) N!} \int_{[- \pi, \pi]^N}  \prod_{1 \le a < b \le N} \left( 2 \sin \left(\frac{\theta_a - \theta_b}{2} \right) \right)^2 \prod_{a=1} ^{N} \vartheta (e^{i \theta_a};q)  ~ \frac{\dd \theta_a}{2 \pi} .
\end{equation}
At $q =e^{i 2 \pi /k}$, $k \in \mathbb{Z}$, this unitary one-matrix model computes the partition function of topological $U(N)_k$ Chern--Simons theory on $\mathbb{S}^3$ \cite{Romo:2011qp}, and is understood as the analytic continuation of the Chern--Simons level for $0 < \lvert q \rvert < 1$. Throughout, we will set 
\begin{equation}
\label{eq:defqsigma}
    q= e^{-1/2 \sigma} , \qquad \sigma >0 ,
\end{equation}
which is the usual $q$-parameter in passing from Chern--Simons theory to the associated topological string theory. The Jacobi triple product identity implies that 
\begin{align}
    \frac{\vartheta (z;q) }{\mathcal{P} (q^2) }  & = \prod_{m =0}^{\infty} (1+z q^{2m+1})(1+z^{-1} q^{2m+1}) \notag \\
        &= \exp \left[ \sum_{m =0 }^{\infty} \left( \ln (1+z q^{2m+1}) + \ln (1+z^{-1} q^{2m+1}) \right)\right] \notag \\
        &= \exp \left[ \sum_{n=1} ^{\infty} \frac{(z^{n} + z^{-n})}{n} (-1)^{n+1}\sum_{m =0 }^{\infty}  q^{n(2m+1)} \right] \notag  \\
        &= \exp \left[ \sum_{n=1} ^{\infty} \frac{(z^{n} + z^{-n})}{n} \left( (-1)^{n+1} \frac{q^{n}}{1-q^{2n} } \right) \right] .
\end{align}
Using \eqref{eq:defqsigma} we immediately identify the coefficients 
\begin{equation}
    \hat{g}_n = \frac{(-1)^{n+1}}{2 \sinh \left( \frac{n}{2 \sigma} \right)} , \qquad \forall n \ge 1 .
\end{equation}
Here we are using $\hat{g}_n$ to stress that they differ from $g_n$ in \eqref{eq:defgenericUMM} in that they are not normalized by an overall $\sigma$. We consider the 't Hooft limit $N \to \infty$ with $\sigma = \gamma N$, $\gamma$ fixed. In this regime, in which \eqref{eq:CSthetaMM} matches with the topological string partition function on $T^{\ast} \mathbb{S}^3$ \cite{Marino:2004eq}, we have 
\begin{equation}
    \hat{g}_n \approx  (-1)^{n+1} \frac{ \sigma}{n} ,
\end{equation}
which reduces to the form \eqref{eq:defgenericUMM}. Moreover, 
\begin{equation}
    \sum_{n=1} ^{\infty} \frac{\hat{g}_n}{n} z^n \approx - \sigma \mathrm{Li}_2 (-z) , \qquad  \sum_{n=1} ^{\infty} \frac{\hat{g}_n}{n} \approx \frac{\pi^2}{12} \sigma,  \qquad  \sum_{n=1} ^{\infty} \frac{\hat{g}_n ^2 }{n} \approx \zeta (3)  \sigma ,
\end{equation}
where $\mathrm{Li}_2$ is the polylogarithm of order 2. Matrix models involving polylogarithmic potentials are relevant in holography and have appeared before in \cite{Amado:2016pgy}.\par
The relation 
\begin{equation}
    \frac{\dd \ }{\dd z } \left(  - \sigma \mathrm{Li}_2 (-z) \right) = - \sigma \ln (1+z) \quad \Longrightarrow \quad \sum_{n=1} ^{\infty} \hat{g}_{n+1} z^n \approx - \sigma \ln (1+z) 
\end{equation}
implies the violation of condition \eqref{eq:condgrational}. In practice, $\theta = \pi$ is a branch point for the current model, bringing a branch cut into the saddle point equation. Denoting $\varrho (\theta)$ the density of eigenvalues, a phase in which $\supp \varrho= [- \pi, \pi]$ is ruled out by the presence of the branch cut, invalidating the argument used though the rest of this appendix. The appearance of the logarithm in the saddle point equation is indeed a trademark of Chern--Simons theory on $\mathbb{S}^3$ \cite{Marino:2004eq}.\par
The solution of \eqref{eq:CSthetaMM} in the 't Hooft large $N$ limit is a unitary matrix model version of the computation in \cite{Marino:2004eq}. The upshot is that only one phase exists, with eigenvalue density $\varrho (\theta) $ supported on an arc, $\supp \varrho \subset [- \pi, \pi]$, which shrinks as $\gamma$ is increased. Averaging over $\sigma$ with Gaussian weight, 
\begin{equation}
\label{eq:ZCSavg}
    \mz_{\overline{\mathrm{CS}}} (a) = \int_0 ^{\infty} \dd \sigma ~ e^{- \frac{\sigma^2}{2a}} \mz_{\mathrm{CS}} (e^{-1/2\sigma}) ,
\end{equation}
we always find a non-trivial saddle point $\gamma_{\ast} >0$, which is moved to the right as $a>0$ is increased. This behavior is due to the higher than second order dependence of $\mz_{\text{CS}} := \lim_{N \to \infty} \frac{1}{N^2}\ln \mz_{\text{CS}}$ on $\gamma$, obtained by solving the saddle point equation explicitly adapting \cite{Marino:2004eq}. In conclusion, \eqref{eq:ZCSavg} is always in a phase in which $\ln  \mz_{\overline{\mathrm{CS}}}  = O(N^2)$.\par

\subsubsection{Chern--Simons theory from the Cardy limit of \texorpdfstring{$\mN=4$}{N=4} super-Yang--Mills}
\label{sec:appCardyCS}
The outcome of our analysis agrees with the physics of the problem, as we now explain.\par
The partition function of Chern--Simons theory on $\mathbb{S}^3$ appears in the Cardy limit of the superconformal index of $\mN=4$ super-Yang--Mills \cite{ArabiArdehali:2021nsx}. In particular, for $q$ near a root of unity, one gets $U(N)_{\ell N}$ Chern--Simons theory, with $\ell$ specifying the root of unity \cite{ArabiArdehali:2021nsx}. Upon analytic continuation of the coupling, the factor of $N$ matches precisely with our choice of scaling for $\sigma$, with $\gamma$ playing the role of the Wick-rotated $\ell$. The Cardy limit drives $\mN=4$ super-Yang--Mills deep in the regime in which the black hole entropy is large, thus 
\begin{itemize}
    \item no phase transition is expected, and
    \item the partition function should grow as $e^{N^2}$,
\end{itemize}
consistent with our findings.\par
The dictionary between our computation and \cite{ArabiArdehali:2021nsx} implies that our integral over $\gamma$ should correspond to a sum over $\ell \in \N$ in the $U(N)_{\ell N}$ Chern--Simons theory. This observation hints at an interpretation of the averaging procedure \eqref{eq:ZCSavg} in terms of summing over the Riemann sheets of the superconformal index of $\mN=4$ super-Yang--Mills, in the Cardy limit.\par 
It would be interesting to see whether a neat holographic interpretation emerges, in connection with the work \cite{Cassani:2021fyv}.

\section{Proofs}
\label{app:longproof}

This appendix contains the derivation of the statements in Subsections \ref{sec:spectral}-\ref{sec:MMalgebra}, which build up the spine of our main result about von Neumann algebras: Theorem \ref{thm:istype3Rigor}.\par
Throughout the main text we have worked with representations $R$ whose Young diagram has $\ell (R) \le L$, and interpreted them as $SU(L+1)$ representations. This has the unfortunate drawback of having all the expressions normalized by $L+1$, rather than $L$. To reduce clutter throughout this (already long and technical) appendix, we adopt the shorthand notation 
\begin{equation}
\label{eq:Ltilde}
    \tilde{L} := L+1 .
\end{equation}

\subsection{Explicit form of the Wightman functions}
\label{app:proofthmrho}
This appendix contains the proofs of Theorems \ref{thmrhoisrho} and \ref{thm:rhoandOmega}.\par
A caveat is that we stick to the conventions of \cite{Festuccia:2006sa} for the $\pm i \varepsilon $. In particular, this entails defining the Fourier transform 
\begin{equation} 
\widetilde{f} (\omega) = \int_{-\infty}^{\infty} \dd t e^{i t \omega} f(t) .
\end{equation}
To compare with other references, one might need to redefine $\omega \mapsto - \omega$ and some signs in intermediate expressions will be opposite. The final answers for physical quantities such as $\rho (\omega)$ are of course the same.

\begin{proof}[Proof of Theorem \ref{thmrhoisrho}]
We divide the proof of \eqref{eq:rhoisrho} in four steps:
\begin{enumerate}[(1)]
    \item We decompose the problem into the evaluation of two simpler pieces;
    \item We evaluate the first piece and show that it contributes only if $\omega >0$;
    \item We evaluate the second piece and show that it contributes only if $\omega <0$;
    \item We sum the two pieces, being careful with the probe approximation and the factors of $e^{- \beta \mu}$.
\end{enumerate}
\underline{Step (1).} We plug the definition of $\phi_L$ into \eqref{eq:defGplus}: 
\begin{equation}
    G_{L,+} (t)  = \frac{1}{\mz_L ^{(N)} }  \tr_{\mathscr{H}_L^{(N)} \otimes \Gamma_{\mathrm{probe}}} \left[ e^{- \beta H^{\prime}} ~ \frac{1}{2} \left( e^{i t H^{\prime}}  \mathcal{O}_L ^{\dagger} e^{-it H^{\prime}} \mathcal{O}_L + e^{i t H^{\prime}}  \mathcal{O}_L e^{-it H^{\prime}} \mathcal{O}_L ^{\dagger}  \right) \right] ,
\end{equation}
where we have used the fact that $\mathcal{O}_L (t) \mathcal{O}_L (0) $ has vanishing thermal expectation value. We define 
\begin{equation}
    G_{\mathcal{O}_L } (t)  := \frac{1}{\mz_L ^{(N)} }  \tr_{\mathscr{H}_L^{(N)} \otimes \Gamma_{\mathrm{probe}}} \left[ e^{- \beta H^{\prime}} ~ e^{i t H^{\prime}}  \mathcal{O}_L ^{\dagger} e^{-it H^{\prime}} \mathcal{O}_L  \right] ,
\end{equation}
so that $  G_{\mathcal{O}_L ^{\dagger} } (t) $ is the same but with $\mathcal{O}_L ^{\dagger}$ and $ \mathcal{O}_L$ exchanged, and 
\begin{equation}
    G_{L,+} (t)  = \frac{  G_{\mathcal{O}_L } (t)  +  G_{\mathcal{O}_L ^{\dagger} } (t) }{2} .
\end{equation}
For later convenience we also introduce the Fourier transform of this quantity, 
\begin{equation}
    \widetilde{G}_{L,+} (\omega) = \frac{\widetilde{G}_{\mathcal{O}_L } (\omega) + \widetilde{G}_{\mathcal{O}_L ^{\dagger} } (\omega) }{2} .
\end{equation}
Our goal is to compute 
\begin{equation}
\label{appeq:omegatwoparts}
    \rho (\omega) = \frac{(1- e^{- \beta \omega} ) }{2} \left[ \widetilde{G}_{\mathcal{O}_L } (\omega) + \widetilde{G}_{\mathcal{O}_L ^{\dagger} } (\omega) \right] .
\end{equation}
We thus need to evaluate the two terms on the right-hand side.\par
\medskip
\underline{Step (2).} We now focus on the evaluation of $\widetilde{G}_{\mathcal{O}_L } (\omega) $.\par

We let the probe operators act on the system in the probe approximation $\mu \gg 1$, with $\mu$ the mass term in the probe Hamiltonian, to neglect excited states of the probe. Besides, we omit the subscript from the probe sectors to reduce clutter. We now analyze the right-hand side of \eqref{eq:defGplus}. By direct computation we get 
\begin{align}
      & \sum_{\ai ,\si ,\dot{\ai},\dot{\si}}   \langle  R, \ai ,\si ; \phi(R) , \dot{\ai}, \dot{\si} \rvert \otimes \langle 0,\dots, 0 \rvert ~ e^{- \beta H^{\prime} } \notag \\
      & \times \frac{1}{\tilde{L}}\sum_{p=1} ^{\tilde{L}} \left[ e^{it H^{\prime}} ( 1 \otimes  c_p ) e^{-i t H^{\prime}} ( 1 \otimes c^{\dagger}_{p} ) \right] \lvert R, \ai ,\si ; \phi(R) , \dot{\ai}, \dot{\si} \rangle \otimes \lvert 0, \dots , 0 \rangle \notag \\&=\frac{e^{(- \beta+it) H^{\prime} (R,\phi(R))}}{\tilde{L}}\mathrm{Tr}_{\bigoplus_{J} \mathscr{H}(R\sqcup\Box_J)\otimes \mathscr{H}(\phi(R))}(e^{-it H^\prime})\\
    & = \frac{1}{\tilde{L}} \sum_{\hat{R}= R \sqcup \Box} e^{- \beta H^{\prime} (R,\phi(R)) - it \left[ H^{\prime} (\hat{R},\phi(R)) - H^{\prime} (R,\phi(R)) \right] } ~ \mathfrak{d}_R ^2 ~ \dim (\hat{R} ) \dim (\phi(R)) . \label{eq:probarhosumA}
\end{align}
The sum is over the representations $\hat{R}$ obtained in the Clebsch--Gordan expansion of the tensor product $R \otimes \Box$, whose Young diagrams are obtained by adding a box to that of $R$ in all possible consistent ways. Here and in the following we denote $H^{\prime} (R_1, R_2)$ the eigenvalue of the Hamiltonian acting on a state in $\mathscr{H}_L (R_1) \otimes \mathscr{H}_L (R_2)$, and likewise for $H_{\text{\rm int}}  (R_1, R_2)$.\par
Recall from the construction in Subsection \ref{sec:probe} that the Hamiltonian $H^{\prime}$ acts on the composite system $\mathscr{H}_L ^{\mathrm{tot}}$, i.e. on the quantum mechanics coupled to the external probe. The time evolved operator is $e^{it H^{\prime}}  ( 1 \otimes c_p ) e^{-i t H^{\prime}}$. We have then used that the number operator on the probe Fock space vanishes in the state $\lvert 0, \dots , 0 \rangle $. The probe approximation is important here in discarding the contributions from the excited states of the probe.\par
Let us take a closer look at the sum in the last line of \eqref{eq:probarhosumA}. Due to the ordering restriction $\hat{R}_j \ge \hat{R}_{j+1}$, not all the ways to add a box to $R$ give an allowed Young diagram $\hat{R}$. The set of rows of $R$ to which a box can be appended is specified by the set $\mathscr{J}_R$ in \eqref{eq:defsetJ}. Then, the sum over $\hat{R}=R \sqcup \Box$ is more rigorously expressed as (the symbols $\delta (\cdot )$ are Kronecker delta $\delta_{\cdot , 0 }$)
\begin{equation}
    \sum_{\hat{R}= R \sqcup \Box} = \sum_{J\in \mathscr{J}_R} \sum_{\hat{R} \in \mathfrak{R}^{SU(\tilde{L})}} \delta \left( \hat{R}_J - R_J -1 \right) ~ \prod_{\underset{j \ne J}{j=1}}^{\tilde{L}} \delta \left( \hat{R}_j - R_j \right) .
\end{equation}
We then sum \eqref{eq:probarhosumA} over $R \in \mathfrak{R}_L ^{(N)} $ and, denoting $R \sqcup \Box_J$ the Young diagram obtained by adding a box at the end of the $J^{\text{th}}$ row of $R$, we expand the resulting expression as 
\begin{equation}
    \sum_{R \in \mathfrak{R}_L ^{(N)}} \frac{1}{\tilde{L}} \sum_{J\in \mathscr{J}_R}  e^{- \beta H^{\prime} (R,\phi(R)) - it \left[ H^{\prime}(R \sqcup \Box_J, \phi(R)) - H^{\prime}(R, \phi(R) ) \right] } ~ \mathfrak{d}_R ^2 ~\dim (R \sqcup \Box_J)  \dim (\phi(R)) .
\end{equation}
The $j^{\text{th}}$ row of the diagram $R \sqcup \Box_J$ has length $R_j$ if $j \ne J$ and $R_{J}+1$ if $j=J$, which only makes sense for $J \in \mathscr{J}_R$.\par
Recall from the definition \eqref{eq:defHprime} that we have
\begin{equation}
    H^{\prime} =H \otimes 1 + 1 \otimes H_{0, \mathrm{probe}} + H_{\text{int}} ,
\end{equation}
with, by construction, $e^{- \beta H^{\prime} } \lvert \mathsf{v} \rangle =  e^{-\beta H } ~\lvert \mathsf{v} \rangle$, for every $\lvert \mathsf{v} \rangle \in \mathscr{H}_L (R) \otimes \mathscr{H}_L (\phi(R)) \otimes \lvert 0, \dots, 0 \rangle_{\text{\rm probe}}$. The term $1 \otimes H_{0, \mathrm{probe}}$ is simply a large mass enforcing the probe approximation. On the other hand, we restrict our attention to the interaction \eqref{eq:Hint}, which gives 
\begin{equation}
    H_{\text{int}} (R\sqcup \Box_J,\phi(R)) - H_{\text{int}}(R , \phi(R)) = g (R_J +\tilde{L} - J ) =: E^{\text{int}} _J
\end{equation}
for $\phi (R)=R$ or $\overline{R}$. More general choices of $\phi (R)$ are treated analogously and yield to similar definitions of $E^{\text{int}} _J$. The time dependence only appears through 
\begin{equation}
    e^{- it \left[H^{\prime}(R \sqcup \Box_J, \phi(R)) - H^{\prime}(R, \phi (R) ) \right]} = e^{-it E_J^{\text{\rm int}} -it \mu } , 
\end{equation}
and hence, taking the Fourier transform, the dependence on $\omega$ is through 
\begin{equation}
    \delta \left( \omega - E^{\text{int}} _J - \mu \right) 
\end{equation}
in each summand. Notice that the uniform shift $E^{\text{int}} _J \mapsto E^{\text{int}} _J + \mu $ induced by the probe mass is expected, because the correlation functions we are considering are tailored to probe energies of the scale of the first excitation.\par
Putting all together, we obtain 
\begin{equation}
\label{eq:FourierGOdeltaomega}
    \sum_{R \in \mathfrak{R}_L ^{(N)}} \frac{1}{\tilde{L}} \sum_{J \in \mathscr{J}_R}  e^{- \beta \lvert R \rvert } ~\delta \left( \omega - \mu  - E^{\text{int}} _J \right)  ~\mathfrak{d}_R ^2  \dim (R \sqcup \Box_J) \dim (\phi (R)) .
\end{equation}
The additional quantum numbers $\lambda_s$ are not involved in our definition of probe, and this is the reason we get $\mathfrak{d}_R ^2$ in the right-hand side, instead of, say, $\mathfrak{d}_R \mathfrak{d}_{R\sqcup \Box_J} $.\par 
We arrive at an expression for $\widetilde{G}_{\mathcal{O}_L} (\omega)$ which is precisely \eqref{eq:defoprhsrho}. We also observe that 
\begin{itemize}
    \item \eqref{eq:FourierGOdeltaomega} vanishes if $\omega <0$, because $\mu>0$ and $E^{\text{int}} _J >0$ $\forall J$, and so the $\delta$ is never satisfied.
    \item For $\omega >0$, expression \eqref{eq:FourierGOdeltaomega} depends on $\omega_{\mathrm{r}}$.
\end{itemize}\par
\underline{Step (3).} We repeat the procedure to evaluate $\widetilde{G}_{\mathcal{O}_L ^{\dagger} } (\omega) $.\par
Now the operator $\mathcal{O}_L ^{\dagger}$ acts first, thus annihilating the probe vacuum. The first non-trivial term comes from $\mathcal{O}_L ^{\dagger}$ acting on the first probe excited state. Namely, the lowest order in $e^{- \beta \mu}$ contribution comes from the terms schematically of the form 
\begin{equation}
    \langle \Box; \emptyset \rvert \otimes {}_L\langle R ; \phi (R) \vert e^{- \beta H^{\prime}} e^{i t H^{\prime}}  \mathcal{O}_L e^{-it H^{\prime}} \mathcal{O}_L ^{\dagger} \lvert  R ; \phi (R) \rangle_L \otimes \lvert \Box; \emptyset \rangle .
\end{equation}
A computation analogous to Step (2) gives 
\begin{equation}
    \sum_{R \in \mathfrak{R}_L ^{(N)}} \frac{1}{\tilde{L}} \sum_{J\in \mathscr{J}_R}  e^{- \beta H^{\prime}(R \sqcup \Box_J, \phi(R)) + it \left[ H^{\prime}(R \sqcup \Box_J, \phi(R)) - H^{\prime}(R, \phi(R) ) \right] } ~ \mathfrak{d}_R ^2 ~ \dim (R \sqcup \Box_J) \dim (\phi (R)).
\label{eq:secondtermAppWF}
\end{equation}
The difference with respect to Step (2) lies in the reflection $t \mapsto - t$ and in the fact that $\exp \left( - \beta H^{\prime} \right)$ acting on its left produces $ \exp \left( - \beta H^{\prime}(R \sqcup \Box_J, \phi(R)) \right)$. Compared to the function in Step (2), the action on a probe state different from the vacuum introduces the additional factor 
\begin{equation}
\label{eq:expmupieceGOdag}
    \exp \left( - \beta \mu - \beta E^{\text{int}} _J \right) 
\end{equation}
in each summand. Higher powers of $e^{- \beta \mu}$ are neglected by the definition of probe. Taking the Fourier transform, we are therefore led to 
\begin{equation}
\label{eq:Gtildeomeganeg}
    \sum_{R \in \mathfrak{R}_L ^{(N)}} \frac{1}{\tilde{L}} \sum_{J \in \mathscr{J}_R} e^{- \beta \left( \mu+ E^{\text{int}} _J \right) } ~  e^{- \beta \lvert R \rvert } ~\delta \left( - \omega - \mu - E^{\text{int}} _J \right)  ~\mathfrak{d}_R ^2 \dim (R \sqcup \Box_J) \dim (\phi (R)).
\end{equation}
This is analogous to \eqref{eq:FourierGOdeltaomega}, except for the additional weight \eqref{eq:expmupieceGOdag} in front of each summand and for the reflection $\omega \mapsto - \omega$. For the same reason as above, i.e. that the energy spectrum is positive, the latter expression vanishes if $\omega >0$. Once again, it can be written as a function of $\omega_{\mathrm{r}}$.\par
\underline{Step (4).} We now wish to put the terms together. Let us take a closer look at \eqref{appeq:omegatwoparts}. 
\begin{itemize}
    \item[(4.i)] From \eqref{eq:FourierGOdeltaomega} we know that $\widetilde{G}_{\mathcal{O}_L } (\omega) =0$ unless $\omega - \mu >0 $. By the probe approximation, we have 
\begin{equation}
    (1-e^{- \beta \omega}) \widetilde{G}_{\mathcal{O}_L } (\omega) = \theta (\omega - \mu) \widetilde{G}_{\mathcal{O}_L } (\omega) + O(e^{- \beta \mu}).
\end{equation}
    \item[(4.ii)] Likewise, $\widetilde{G}_{\mathcal{O}_L^{\dagger} } (\omega) =0$ unless $\omega + \mu <0 $, which in the probe approximation gives 
\begin{equation}
    (1-e^{- \beta \omega}) \widetilde{G}_{\mathcal{O}_L^{\dagger} } (\omega) = - \theta (- \omega - \mu) e^{- \beta\omega }\widetilde{G}_{\mathcal{O}_L^{\dagger} } (\omega) + O(e^{- \beta \mu}) ,
\end{equation}
where we have used $e^{- \beta \omega} \gg 1 $ if $\omega <-\mu \ll 0$. By the computation in Step (3), cf. \eqref{eq:Gtildeomeganeg}, we have that the latter expression is non-vanishing precisely when $e^{-\beta\omega }$ cancels the piece \eqref{eq:expmupieceGOdag} in $\widetilde{G}_{\mathcal{O}_L^{\dagger} } (\omega)$. Simplifying, we get 
\begin{equation}
    - e^{-\beta\omega }\widetilde{G}_{\mathcal{O}_L^{\dagger} } (\omega) = - \widetilde{G}_{\mathcal{O}_L } (- \omega) , \qquad \text{ if } \omega \le -\mu .
\end{equation}
    \item[(4.iii)] Writing $\widetilde{G}_{\mathcal{O}_L } (\pm \omega) $ in the form \eqref{eq:defoprhsrho} and summing the two contributions, we prove the statement.
\end{itemize}
\end{proof}

\begin{proof}[Proof of Theorem \ref{thm:rhoandOmega}]
Most of the details are identical to the proof of Theorem \ref{thmrhoisrho}, so we will be sketchy. We divide the proof in various steps:
\begin{enumerate}[(1)]
    \item We decompose the problem into the evaluation of four simpler pieces;
    \item We evaluate each piece;
    \item We take the Fourier transform;
    \item We sum the pieces, being careful with the probe approximation and the factors of $\pm i \varepsilon $.
\end{enumerate}\par
\underline{Step (1).} As in the proof of Theorem \ref{thmrhoisrho}, we start by decomposing $\phi_L(t)$ into the operators $\mathcal{O}_L, \mathcal{O}_L^{\dagger}$. We have 
\begin{align}
    \left[ \phi (t) , \phi (0) \right]  =  \frac{1}{2} \left( \mathcal{Y}_1 (t) + \mathcal{Y}_2 (t) - \mathcal{Y}^\dagger_1 (t) - \mathcal{Y}^\dagger_2 (t) \right) \ + \cdots ,
\end{align}
where the dots include all the terms that vanish when taken inside the correlation functions, and the four pieces are:
\begin{align}
     \mathcal{Y}_1 (t) & := \frac{1}{\tilde{L}} \sum_{p=1}^{\tilde{L}}  e^{i t H^{\prime}} ~ c_p ~ e^{-i t H^{\prime}} ~ c_p^{\dagger} ; \\
     \mathcal{Y}_2 (t) & := \frac{1}{\tilde{L}} \sum_{p=1}^{\tilde{L}}  e^{i t H^{\prime}} ~ c_p^{\dagger} ~ e^{-i t H^{\prime}} ~ c_p ; \\
     \mathcal{Y}^\dagger_1 (t) & := \frac{1}{\tilde{L}} \sum_{p=1}^{\tilde{L}} c_p ~ e^{i t H^{\prime}} ~ c_p^{\dagger} ~ e^{-i t H^{\prime}} ; \\
     \mathcal{Y}^\dagger_2 (t) & := \frac{1}{\tilde{L}} \sum_{p=1}^{\tilde{L}}  c_p^{\dagger} ~ e^{i t H^{\prime}} ~ c_p ~ e^{-i t H^{\prime}} .
\end{align}
(recall that $\tilde{L} = L+1 = \dim \Box$).\par
\underline{Step (2).} We compute the thermal expectation value of the four terms above. Each of them is essentially identical to Steps (2)-(3) in the proof of Theorem \ref{thmrhoisrho}. For $\mathcal{Y}_1$ the answer was given in \eqref{eq:probarhosumA}, and for $\mathcal{Y}_2$ in \eqref{eq:secondtermAppWF}:
\begin{align}
    \langle \Psi_{\beta} \lvert \mathcal{Y}_1 (t) \rvert \Psi_{\beta} \rangle & = \frac{1}{\mz_L ^{(N)}} \sum_{R \in \mathfrak{R}_L ^{(N)}} e^{- \beta \lvert R \rvert }  \mathfrak{d}_R ^2 \dim (R )  \dim (\phi(R)) \left[\sum_{J\in \mathscr{J}_R}  e^{ - it E_J (\mu)  } \frac{\dim (R \sqcup \Box_J)}{ \tilde{L} \dim (R) } \right]  \\
    \langle \Psi_{\beta} \lvert \mathcal{Y}_2 (t) \rvert \Psi_{\beta} \rangle & = \frac{1}{\mz_L ^{(N)}} \sum_{R \in \mathfrak{R}_L ^{(N)}} e^{- \beta \lvert R \rvert }  \mathfrak{d}_R ^2 \dim (R )  \dim (\phi(R)) \left[ \sum_{J\in \mathscr{J}_R}  e^{ (- \beta + it)  E_J (\mu) } \frac{\dim (R \sqcup \Box_J)}{ \tilde{L} \dim (R) } \right]  . 
\end{align}
where $E_J (\mu):= E^{\text{\rm int}}_J + \mu $ is a shorthand notation. The analogous computation for the other two terms gives:
\begin{align}
    \langle \Psi_{\beta} \lvert \mathcal{Y}^\dagger_1 (t) \rvert \Psi_{\beta} \rangle & = \frac{1}{\mz_L ^{(N)}} \sum_{R \in \mathfrak{R}_L ^{(N)}} e^{- \beta \lvert R \rvert }  \mathfrak{d}_R ^2 \dim (R )  \dim (\phi(R)) \left[\sum_{J\in \mathscr{J}_R}  e^{ it E_J (\mu)  } \frac{\dim (R \sqcup \Box_J)}{ \tilde{L} \dim (R) } \right]  \\
    \langle \Psi_{\beta} \lvert \mathcal{Y}^\dagger_2 (t) \rvert \Psi_{\beta} \rangle & = \frac{1}{\mz_L ^{(N)}} \sum_{R \in \mathfrak{R}_L ^{(N)}} e^{- \beta \lvert R \rvert }  \mathfrak{d}_R ^2 \dim (R )  \dim (\phi(R)) \left[ \sum_{J\in \mathscr{J}_R}  e^{ (- \beta - it) E_J (\mu) } \frac{\dim (R \sqcup \Box_J)}{ \tilde{L} \dim (R) } \right]  . 
\end{align}\par

\underline{Step (3).} We now take the Fourier transform of each term. From the definitions \eqref{eq:defGRfunc}-\eqref{eq:defGAfunc} we have
    \begin{align}
        2 G_{L, \mathrm{R}} (t) & = i \theta (t) \langle \Psi_{\beta} \lvert \mathcal{Y}_1 (t) + \mathcal{Y}_2 (t) + \mathcal{Y}^\dagger_1 (t) + \mathcal{Y}^\dagger_2 (t) \rvert \Psi_{\beta} \rangle , \\
        2 G_{L, \mathrm{A}} (t) & = - i \theta (- t) \langle \Psi_{\beta} \lvert \mathcal{Y}_1 (t) + \mathcal{Y}_2 (t) + \mathcal{Y}^\dagger_1 (t) + \mathcal{Y}^\dagger_2 (t) \rvert \Psi_{\beta} \rangle ,
    \end{align}
    so that we have to compute the Fourier transform of $\pm i \theta (\pm t) \mathcal{Y} (t)$ for $\mathcal{Y}\in \left\{ \mathcal{Y}_1, \mathcal{Y}_2 , \mathcal{Y}^\dagger_1 , \mathcal{Y}^\dagger_2 \right\}$, evaluated at $\omega \pm i \varepsilon$, with the sign $\pm$ in front of $i \varepsilon$ being the same as in $\theta (\pm t)$. The only time-dependent part in $\mathcal{Y} (t)$ is the exponential $e^{\mp i t E_J (\mu)} $. We thus get the contributions of $\mathcal{Y}_1, \mathcal{Y}^\dagger_1$ to $2 \widetilde{G}_{L, \mathrm{R}} (\omega + i \varepsilon)$:
    \begin{align}
        i \theta (t) e^{ - i t E_J (\mu)} & \ \mapsto \ - \frac{1}{\omega + i \varepsilon - E_J (\mu) } \\
        - i \theta (t) e^{ i t E_J (\mu)} & \ \mapsto \ \frac{1}{ \omega + i \varepsilon + E_J (\mu) } 
    \end{align}
    and likewise for the contributions of $\mathcal{Y}_1, \mathcal{Y}^\dagger_1$ to $2 \widetilde{G}_{L, \mathrm{A}} (\omega - i \varepsilon)$:
    \begin{align}
        - i \theta (-t) e^{ - i t E_J (\mu)} & \ \mapsto \ - \frac{1}{\omega - i \varepsilon - E_J (\mu) } \\
        + i \theta (-t) e^{  i t E_J (\mu)} & \ \mapsto \ \frac{1}{ \omega - i \varepsilon + E_J (\mu) } . 
    \end{align}
    The same expressions coming from $\mathcal{Y}_1$ (respectively $\mathcal{Y}^\dagger_1$) appear in the Fourier transform of $\mathcal{Y}^\dagger_2$ (respectively $\mathcal{Y}_2$).\par

\underline{Step (4).} Denoting $\Omega (\omega)$ the operator that is diagonal in the representation basis with eigenvalues 
\begin{equation}
    \sum_{J\in \mathscr{J}_R}  \frac{\dim (R \sqcup \Box_J)}{ \tilde{L} \dim (R) } \frac{1}{\omega - E_J (\mu) } 
\end{equation}
as defined in \eqref{eq:defOmega}, we see that $2 \widetilde{G}_{L, \mathrm{R}} (\omega + i \varepsilon)$ receives contributions from 
\begin{equation}
     - \Omega (\omega + i \varepsilon ) - \Omega (-\omega - i \varepsilon ) ,
\end{equation}
while $2 \widetilde{G}_{L, \mathrm{A}} (\omega - i \varepsilon)$ receives contributions from 
\begin{equation}
     - \Omega (\omega - i \varepsilon ) - \Omega (-\omega + i \varepsilon ) .
\end{equation}
Putting all the pieces together we arrive at:
    \begin{align}
         \widetilde{G}_{L, \mathrm{R}} (\omega+i \varepsilon)  & = - \frac{1}{2}\langle \Psi_{\beta} \lvert \Omega (\omega + i \varepsilon ) + \Omega (-\omega - i \varepsilon ) \rvert  \Psi_{\beta} \rangle + O(e^{- \beta \mu} ) \\
         \widetilde{G}_{L, \mathrm{A}} (\omega-i \varepsilon)  & = -  \frac{1}{2}\langle \Psi_{\beta} \lvert \Omega (\omega - i \varepsilon ) + \Omega (-\omega + i \varepsilon ) \rvert  \Psi_{\beta} \rangle + O(e^{- \beta \mu} ) .
    \end{align}
\end{proof}

\subsection{Wightman functions in the Veneziano limit}
\label{app:proofLemmaOmega}
This appendix contains the proof of Lemma \ref{lemma:OmegalargeN}, which evaluates the Wightman functions in the Veneziano limit using a saddle point approximation. We still use the shorthand $\tilde{L}=L+1$ from \eqref{eq:Ltilde}.

\begin{proof}[Proof of Lemma \ref{lemma:OmegalargeN}]
Consider $\widetilde{G}_{L,\mathrm{R}} (\omega)$ as given by Theorem \ref{thm:rhoandOmega}. The change of variables \eqref{eq:changeRtoH} recasts the sum over $R$ into a sum over $\tilde{L}$-tuples $\vec{h}=(h_1, \dots, h_{\tilde{L}-1}, -1)$. The ordering $R_{j} \ge R_{j+1}$ becomes $h_j > h_{j+1}$, but the appearance of the Vandermonde squared factor and the total symmetry of the Hamiltonian allow us to remove this restriction. The set of indices $\mathscr{J}_R$ becomes 
\begin{equation}
    \mathscr{J}_{\vec{h}} = \left\{ J \in \left\{ 1, \dots , \tilde{L} \right\} \ : \ h_J +1 \ne h_j , \ \forall j \ne J \right\} .
\end{equation}
Notice that the presence of a sum over a restricted set of indices does not spoil the total symmetry of the summand in the variables $h_j$, because, for those indices $j \notin \mathscr{J}_{\vec{h}}$, the Vandermonde determinant after adding a box to $R$ vanishes. One can thus safely include them (add finitely many zeros) and sum over all indices.\par
With the change of variables \eqref{eq:changeRtoH}, 
\begin{equation}
    E_J^{\text{\rm int}} = g (h_J +1) 
\end{equation}
and, subdividing 
\begin{equation}
    \prod_{1 \le i < j \le \tilde{L}} = \prod_{\underset{i \ne J, \ j \ne J}{1 \le i < j \le \tilde{L}}} \cdot \prod_{\underset{j=J}{1\le i \le J-1}} \cdot \prod_{\underset{i=J}{J+1\le j \le \tilde{L}}}
\end{equation}
in formula \eqref{eq:dimRVdm}, we express the ratio 
\begin{equation}
\label{eq:ratiodimsexact}
\frac{\dim (R \sqcup \Box_J) }{  \dim R} = \prod_{i=1}^{J-1} \left(  1 - \frac{1}{h_i-h_J} \right) \cdot \prod_{j=J+1}^{\tilde{L}} \left(  1 - \frac{1}{h_j-h_J} \right) = \prod_{j \ne J}  \left(  1 - \frac{1}{h_j-h_J} \right) .
\end{equation}
The right-hand side of \eqref{eq:defOmega} evaluated at $\omega \pm i \varepsilon$ reads
\begin{equation}
\label{eq:rhsOmegainH}
     \frac{1}{\tilde{L}} \sum_{J \in \mathscr{J}_R } \frac{1}{\omega  \pm i \varepsilon - E_J^{\text{\rm int}} - \mu } ~ \frac{\dim (R \sqcup \Box_J) }{  \dim R} = \frac{1}{\tilde{L}} \sum_{J=1}^{\tilde{L}} \frac{1}{\omega  \pm i \varepsilon - g (h_J +1) - \mu } \prod_{j \ne J}  \left(  1 - \frac{1}{h_j-h_J} \right) .
\end{equation}
Let us repeat that, although the sum should be restricted to $J\in \mathscr{J}_{\vec{h}}$, the right-hand side of \eqref{eq:rhsOmegainH} vanishes for $j$ such that $h_j = h_J+1$, so that the extra contributions that we include in the sum are trivial.\par
Let us introduce some notation. First, we redefine 
\begin{equation}
    \tilde{\omega}_{\pm} = \omega \pm i \varepsilon - \mu - g .
\end{equation}
We are eventually interested in the planar limit $N \to \infty$ with $g= \lambda/N$ and $\lambda$ fixed, so that the shift by $g$ drops out, whereas $\Im \tilde{\omega}_{\pm} = \pm \varepsilon$. We denote the expression in the right-hand side of \eqref{eq:rhsOmegainH} $\Omega_{\vec{h}} (\omega \pm \varepsilon)$ for short.\par
For every $\tilde{L}$-tuple $\vec{h} \in \mathbb{N}^{\tilde{L}}$ (the ones of interest to us have $h_{\tilde{L}} \equiv -1$), we define the auxiliary function 
\begin{equation}
    \Phi_{\vec{h}} (\xi) := \prod_{j=1}^{\tilde{L}} \left( 1 - \frac{1}{h_j-\xi } \right) , \qquad \qquad \xi \in \mathbb{C} .
\end{equation}
The residue theorem instructs us that 
\begin{equation}
\label{eq:OmegaCauchy}
     \Omega_{\vec{h}} (\omega\pm i\varepsilon) = \frac{1}{\tilde{L}} \oint_{\mathcal{C}} \frac{ \dd \xi}{2 \pi i }  \frac{\Phi_{\vec{h}} (\xi) }{\tilde{\omega}_{\pm} - g \xi } ,
\end{equation}
with the integration contour $\mathcal{C} = \mathcal{C} (\vec{h})$ encircling the points $h_{J}$ and leaving outside the point $\tilde{\omega}_{\pm}/g$. This is exemplified in Figure \ref{fig:contourC}. Let us remark that, had we restricted the sum on $J \in \mathscr{J}_{\vec{h}}$, we could take a contour that leaves outside the points $h_J$ for $J \notin \mathscr{J}_{\vec{h}}$. Then, deforming such contour into $\mathcal{C}$, we pick the poles with vanishing residues, obtaining \eqref{eq:OmegaCauchy}.\par
\begin{figure}[th]
    \centering
    \includegraphics[width=0.67\textwidth]{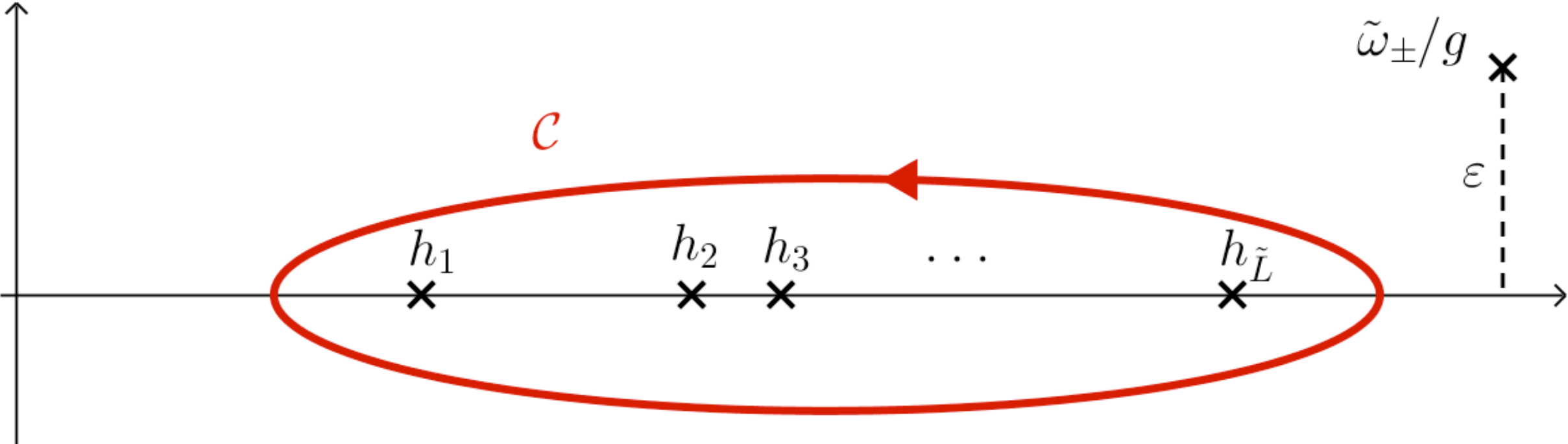}
    \caption{Integration contour $\mathcal{C}$.}
    \label{fig:contourC}
\end{figure}\par
Deforming the contour $\mathcal{C}$ to infinity, the Cauchy integral \eqref{eq:OmegaCauchy} picks up the pole at $\xi= \tilde{\omega}_{\pm}/g$, and one is left with a residual integration over the circle at infinity. Using that $\Phi_{\vec{h}} (\xi) \to 1 $ as $\xi \to \infty$, the residual integration evaluates to 
\begin{equation}
    \frac{1}{\tilde{L}}\oint_{\lvert \xi \rvert \gg 1 } \frac{\dd \xi}{2 \pi i} \frac{\Phi_{\vec{h}} (\xi)}{(-g \xi)} = - \frac{1}{\tilde{L} g} .
\end{equation}
Conveniently expressing \eqref{eq:OmegaCauchy} as 
\begin{equation}
    \tilde{L} g ~ \Omega_{\vec{h}} (\omega\pm i \varepsilon ) = \oint_{\mathcal{C}} \frac{ \dd \xi}{2 \pi i }  \frac{\Phi_{\vec{h}} (\xi) }{\frac{\tilde{\omega}_{\pm}}{g} - \xi } ,
\end{equation}
we arrive at 
\begin{equation}
\label{eq:OmomPhifinite}
     \tilde{L} g ~ \Omega_{\vec{h}} (\omega) = \Phi_{\vec{h}} (\tilde{\omega}_{\pm}/g) -1 .
\end{equation}\par
We write\footnote{In this appendix and Section \ref{sec:QM} we define the 't Hooft coupling $\lambda = g L$. The procedure for 't Hooft coupling $g N$ is identical up to rescaling $\tilde{\omega}_{\pm}$ appropriately.} $\frac{\tilde{\omega}_{\pm}}{g} = \frac{\tilde{L}\tilde{\omega}_{\pm}}{ \lambda} $ and 
\begin{align}
    \Phi_{\vec{h}} (\tilde{\omega}_{\pm}/g) & = \exp \left[ \sum_{j=1}^{\tilde{L}} \ln \left( 1- \frac{1}{h_j - \frac{\tilde{L}\tilde{\omega}_{\pm}}{\lambda}} \right) \right] \notag \\
    & = \exp \left[ \frac{1}{\tilde{L}}\sum_{j=1}^{\tilde{L}}  \tilde{L} \ln \left( 1+ \frac{1}{\tilde{L}} \cdot \frac{1}{\frac{\tilde{\omega}_{\pm}}{ \lambda} - \frac{h_j}{\tilde{L}}} \right) \right] \notag \\
    & = \exp \left[ \frac{1}{\tilde{L}}\sum_{j=1}^{\tilde{L}} \frac{1}{\frac{\tilde{\omega}_{\pm}}{ \lambda} - \frac{h_j}{\tilde{L}}} + O (1/\tilde{L})\right] \notag \\
    \approx \exp \left[ \int \dd x \varrho (x) \frac{1}{\frac{\tilde{\omega}}{ \lambda} - x } \right]
\end{align}
with the last step holding in the planar limit. In the second-to-last step, we have expanded the logarithm in power series in $\tilde{L}^{-1}$, and in the last step we have discarded the sub-leading terms. At the same time, we have plugged the eigenvalue density $\varrho (x)$ for the matrix model \eqref{eq:genericdiscreteMM}, defined in the standard way 
\begin{equation}
        \varrho (x) := \frac{1}{\tilde{L}} \sum_{i=1}^{\tilde{L}} \delta \left( x - \frac{h_i}{\tilde{L}} \right) , \qquad x \in \mathbb{R} .
    \end{equation}
No assumption on $\varrho (x)$ is made at this stage, and its definition uses the fact that $h_j/\tilde{L}$ attains a finite value in the planar limit, $\forall j$. The linear growth of $h_j$ with $L$ is a standard result on discrete matrix models, or equivalently on the combinatorics of the distribution of partitions of length $L$; it is a known result that transcends the models of interest to us. The fact that this is the leading contribution in the Veneziano limit, and does not trivialize, is encoded in the assumption of the existence of a non-trivial saddle point. Therefore, in the planar limit we find that \eqref{eq:OmomPhifinite} becomes 
\begin{equation}
     \lambda \Omega_{\vec{h}} (\omega \pm i \varepsilon) \approx -1+ \exp \left[ \int \dd x \varrho (x) \frac{1}{\frac{\tilde{\omega}_{\pm}}{ \lambda} - x } \right] .
\end{equation}\par
At leading order, we perform a saddle point approximation and evaluate this expression at the saddle point eigenvalue density $\varrho_{\ast} (x)$. We thus arrive at 
\begin{equation}
     \left. \Omega_{\vec{h}} (\omega \pm i \varepsilon ) \right\rvert_{\text{saddle point}} =\frac{1}{\lambda } \left[ - 1 +  \exp \left( \int \dd x \frac{ \varrho_{\ast} (x)}{\frac{\omega \pm i \varepsilon - \mu}{ \lambda} - x } \right) \right] ,
\end{equation}
where we have reinserted the appropriate variable $\tilde{\omega}_{\pm} = \omega \pm i \varepsilon - \mu + O(1/N)$ in the planar limit. Notice that this expression only holds under the assumption that the saddle point is non-trivial. This hypothesis is crucial, but no further assumptions are made. By the standard large $N$ argument, under the mentioned hypothesis, the eigenvalues coalesce and form a continuum, and we get
\begin{equation}
      {}_L\langle \Psi_{\beta} \lvert  \Omega (\omega)  \rvert \Psi_{\beta} \rangle_L \approx  \left. \Omega_{\vec{h}} (\omega) \right\rvert_{\text{saddle point}} ,
\end{equation}
which concludes the proof.
\end{proof}

\subsection{Large \texorpdfstring{$N$}{N} factorization lemma}
\label{app:LargeNproof}
This appendix contains a justification for the factorization Lemma \ref{lem:factor}. We will very explicitly show that the four-point function factorizes, and then give a general argument that extends more easily to the general case.\footnote{An elegant proof of large $N$ factorization for permutation invariant observables in Hermitian matrix models with a symmetric group symmetry was given in \cite{Barnes:2021tjp}, based on combinatorial techniques. While our method is based on a direct evaluation of the terms, it seems plausible that the techniques of \cite{Barnes:2021tjp,Barnes:2022qli} can be adapted and applied to the present context to establish a proof that only relies on the symmetries of the operators, and not their explicit form.} The two key hypotheses are:
\begin{itemize}
    \item[(H1)] The partition function admits a non-trivial saddle point for large $N$;
    \item[(H2)] $H_{\rm int}$ respects the flavor symmetry.
\end{itemize}
We will work explicitly with $H_{\rm int}$ given in \eqref{eq:Hint}, but a direct generalization of the argument holds as long as the second assumption holds.

\subsubsection{Step 0: Lighten the notation}
The approach in this appendix is based on explicit direct calculation, which will produce cumbersome equations. To lighten the expressions, we introduce shorthand notations used throughout this subsection.\par
Recall from Subsection \ref{sec:QMflavor} that the representation basis is given by the vectors 
\begin{equation}
    \lvert R , \ai ,\si ;\ \phi (R) , \dot{\ai},\dot{\si} \rangle , \quad \ai =1, \dots, \dim R , ~  \dot{\ai}  =1, \dots, \dim \phi (R) 
\end{equation}
and $\si, \dot{\si}$ running over possible additional degeneracies. The indices $\si, \dot{\si}$ do not play any role in the ensuing discussion, thus we will mute them. Likewise, by construction the probe interacts with the $R$-sector, and $\phi (R)$ only appears through the interaction Hamiltonian. Since the state $\lvert \phi (R), \dot{\ai} \rangle $ is not affected by the action of the probe operators, we will omit it from the notation as well. Every sum over $\ai$ will implicitly be accompanied by a sum over $\dot{\ai}$.\par
Furthermore, we will further contract the notation for tensor product with a probe state and write:
\begin{equation}
\begin{aligned}
    \lvert R, \ai \rrrang & \text{ for } \lvert R , \ai ,\si ;\ \phi (R) , \dot{\ai},\dot{\si} \rangle \otimes \lvert 0, \dots, 0 \rangle_{\text{\rm probe}} \\
    \lvert R, \ai; j \rrrang & \text{ for } \lvert R , \ai ,\si ;\ \phi (R) , \dot{\ai},\dot{\si} \rangle \otimes \lvert 0, \dots, \underbrace{1}_{j^{\text{th}}}, \dots,  0 \rangle_{\text{\rm probe}} 
\end{aligned}
\end{equation}
and so on.\par
We recall that $H^{\prime}$ is the total Hamiltonian, $H^{\prime} = H + H_{\rm probe} + H_{\rm int}$. As in the main text, we will denote by $R \sqcup \Box_J$ the Young diagram obtained by appending a box at the end of the $J^{\text{th}}$ row, and 
\begin{equation}
    E_J (\mu) := H^{\prime} (R \sqcup \Box _J, \phi (R)) - H^{\prime} (R , \phi (R)) .    
\end{equation}\par
We use indices $\ai,\mathsf{b}$ to run over the generators of $R$, hatted indices $\hat{\ai},\hat{\mathsf{b}}$ to run over the generators of $R \sqcup \Box$ ($R$ with one box added), and checked indices $\check{\mathsf{c}}$ to run over the generators of $R \sqcup \Box\sqcup \Box$ ($R$ with two boxes added).\par
We will use the symbol $\approx$ to indicate that the two sides are equal up to terms that vanish in the planar limit, and also sometimes abbreviate $t_{jk} := t_j - t_k$. Finally, we continue using the abbreviation $\tilde{L}=L+1$ from \eqref{eq:Ltilde}.

\subsubsection{Step 1: Identify the non-trivial contributions}
Recall that the goal of this appendix is to compute four-point functions 
\begin{equation}
\label{eq:phi4ptapp}
    {}_L\langle \Psi_{\beta} \lvert \phi_L (t_3) \phi_L (t_2) \phi_L (t_1) \phi_L (t_0) \rvert \Psi_{\beta}\rangle_L 
\end{equation}
in the planar limit, and show that \eqref{eq:phi4ptapp} equals 
\begin{equation}
\begin{aligned}
     & {}_L\langle \Psi_{\beta} \lvert \phi_L (t_3) \phi_L (t_2) \rvert \Psi_{\beta}\rangle_L  \cdot {}_L\langle \Psi_{\beta} \lvert \phi_L (t_1) \phi_L (t_0) \rvert \Psi_{\beta}\rangle_L  \\
     + & {}_L\langle \Psi_{\beta} \lvert \phi_L (t_3) \phi_L (t_1) \rvert \Psi_{\beta}\rangle_L  \cdot {}_L\langle \Psi_{\beta} \lvert \phi_L (t_2) \phi_L (t_0) \rvert \Psi_{\beta}\rangle_L \\
     + & {}_L\langle \Psi_{\beta} \lvert \phi_L (t_3) \phi_L (t_0) \rvert \Psi_{\beta}\rangle_L  \cdot {}_L\langle \Psi_{\beta} \lvert \phi_L (t_2) \phi_L (t_1) \rvert \Psi_{\beta}\rangle_L .
\end{aligned}
\label{eq:phi4ptfactin2}
\end{equation}
We expand \eqref{eq:phi4ptapp} using the definition of $\phi_L (t)$ in terms of the operators $\mathcal{O}_L (t), \mathcal{O}_L^{\dagger} (t)$. After elementary calculations, we 
\begin{itemize}
    \item[(i)] discard all terms except those with two $\mathcal{O}_L$ and two $\mathcal{O}_L^{\dagger}$; and then 
    \item[(ii)] discard all the terms in which the rightmost operator in an annihilation operator or the leftmost operator is a creation operator.
\end{itemize}
The contributions discarded in (i) vanish exactly, while those discarded in (ii) are $O(e^{- \beta \mu})$, thus negligible in the probe approximation. We are therefore left with 
\begin{align}
{}_L\langle \Psi_{\beta} \lvert \phi_L (t_3) \phi_L (t_2) \phi_L (t_1) \phi_L (t_0) \rvert \Psi_{\beta}\rangle_L & = \frac{1}{4} {}_L\langle \Psi_{\beta} \lvert \mathcal{O}_L^{\dagger}(t_3) \mathcal{O}_L(t_2) \mathcal{O}_L^{\dagger}(t_1) \mathcal{O}_L (t_0) \rvert \Psi_{\beta}\rangle_L \label{eq:phi4ptintoO4pt}  \\ & + \frac{1}{4} {}_L\langle \Psi_{\beta} \lvert \mathcal{O}_L^{\dagger}(t_3) \mathcal{O}_L^{\dagger}(t_2) \mathcal{O}_L(t_1) \mathcal{O}_L(t_0) \rvert \Psi_{\beta}\rangle_L .  \notag 
\end{align}
We claim
\begin{prop}\label{lemmaAppFactorization}
    The first term on the right-hand side of \eqref{eq:phi4ptintoO4pt} yields the first line in \eqref{eq:phi4ptfactin2}, and the second term on the right-hand side of \eqref{eq:phi4ptintoO4pt} yields the second and third lines in \eqref{eq:phi4ptfactin2}.
\end{prop}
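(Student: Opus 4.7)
The plan is to prove both identities in Proposition \ref{lemmaAppFactorization} by direct computation in the representation basis, viewing each probe operator as a shift of the representation label by one box, exactly as in Step~2 of the proof of Theorem \ref{thmrhoisrho}. Step~0 of \eqref{eq:phi4ptintoO4pt} has already reduced the task to the two surviving correlators; all other orderings are either identically zero (unequal numbers of $\mathcal{O}_L$ and $\mathcal{O}_L^\dagger$) or discarded as $O(e^{-\beta\mu})$ in the probe approximation of Definition \ref{def:probeapprox2}. It remains to evaluate the two surviving correlators and recognize them as the three products in \eqref{eq:phi4ptfactin2}.

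\textbf{Unpacking the two correlators.} For the first correlator ${}_L\langle \Psi_\beta\lvert\mathcal{O}_L^\dagger(t_3)\mathcal{O}_L(t_2)\mathcal{O}_L^\dagger(t_1)\mathcal{O}_L(t_0)\rvert\Psi_\beta\rangle_L$ I insert complete sets of states between consecutive operators and follow the chain of representations $R\to R\sqcup\Box_{J_1}\to R\to R\sqcup\Box_{J_2}\to R$; the index structure is enforced by the flavor symmetry (hypothesis (H2)), which forces the system to return to $R$ between $t_1$ and $t_2$ but leaves $J_1$ and $J_2$ independent. The resulting expression is a double sum over $(J_1,J_2)\in\mathscr{J}_R\times\mathscr{J}_R$ of phases $e^{it_{01}E_{J_1}(\mu)+it_{23}E_{J_2}(\mu)}$ times a \emph{product} of dimension ratios $[\dim(R\sqcup\Box_{J_1})/(\tilde L\,\dim R)]\cdot[\dim(R\sqcup\Box_{J_2})/(\tilde L\,\dim R)]$. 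For the second correlator the intermediate representation between $t_1$ and $t_2$ has two extra boxes, $R\sqcup\Box_{J_1}\sqcup\Box_{J_2}$, and flavor symmetry allows the outgoing operators at $t_2$ and $t_3$ to remove these two boxes in either order, producing exactly two Wick pairings: $(t_0,t_2)(t_1,t_3)$ and $(t_0,t_3)(t_1,t_2)$. Each contribution is a product of time-ordered phases and a \emph{nested} ratio of dimensions
\begin{equation*}
    \frac{\dim(R\sqcup\Box_{J_1})}{\tilde L\,\dim R}\cdot\frac{\dim(R\sqcup\Box_{J_1}\sqcup\Box_{J_2})}{\tilde L\,\dim(R\sqcup\Box_{J_1})}.
\end{equation*}

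\textbf{Planar factorization.} The crucial analytic input is that in the Veneziano limit under hypothesis (H1) typical configurations have $h_j$ of order $N$, so nested ratios and shifted energies become asymptotically multiplicative. From \eqref{eq:ratiodimsexact} one obtains
\begin{equation*}
    \frac{\dim(R\sqcup\Box_{J_1}\sqcup\Box_{J_2})}{\dim(R\sqcup\Box_{J_1})}=\frac{\dim(R\sqcup\Box_{J_2})}{\dim R}\bigl(1+O(N^{-1})\bigr),
\end{equation*}
and, in parallel, $E_{J_2}(\mu)$ evaluated on $R\sqcup\Box_{J_1}$ differs from its value on $R$ by $O(1/N)$, since the interaction Hamiltonian \eqref{eq:Hint} only shifts $C_2$ by one unit of a quantity already of order $N$. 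Substituting these asymptotics into the expressions of the previous step, the sum over $(J_1,J_2)$ becomes a product of two independent sums over $J_1$ and $J_2$. Recognizing each such sum as \eqref{eq:rhsOmegainH} evaluated at the appropriate time difference, and noting that the statistical average over $R$ is dominated by the same saddle point for both factors, I obtain: for the first correlator, the unique product $\tfrac{1}{4}{}_L\langle\Psi_\beta\vert\phi_L(t_3)\phi_L(t_2)\vert\Psi_\beta\rangle_L\cdot{}_L\langle\Psi_\beta\vert\phi_L(t_1)\phi_L(t_0)\vert\Psi_\beta\rangle_L$ (after identifying the cross terms in the expansion of $\phi_L$ that produce the second non-trivial two-point contribution); for the second correlator, the sum of the two products appearing in the last two lines of \eqref{eq:phi4ptfactin2}.

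\textbf{Main obstacle and extension.} The principal technical point is to justify that the pointwise $O(N^{-1})$ corrections do not accumulate into a finite error once summed against $O(L)$ values of $J_1,J_2$ and averaged against the thermal ensemble over $R$. This is controlled by the saddle point argument of Appendix \ref{app:completeAppD}: passing to the eigenvalues $h_j=R_j-j+L$ and replacing the sum over $R$ by its planar saddle, the ratio $\dim(R\sqcup\Box_J)/\dim R$ converges to a smooth function of the continuous variable $h_J/\tilde L$, and the $O(N^{-1})$ deviations are integrable against $\varrho_\ast$. This is where hypotheses (H1) and the scaling $h_j=O(N)$ are essential; without them the multiplicativity of the dimensional ratios would fail. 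Once the $n=2$ case (four-point function) is established, the general $2n$-point factorization of Definition \ref{def:largeNfactor} follows by induction on $n$: an identical insertion of resolutions of identity produces nested ratios indexed by sequences of added and removed boxes, whose leading planar asymptotics is multiplicative in the same way, producing the full Wick expansion over pairings.
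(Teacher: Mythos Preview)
Your strategy is sound, and for the first correlator it is actually cleaner than the paper's route. You observe that after $\mathcal{O}_L^\dagger(t_1)\mathcal{O}_L(t_0)$ the probe is back in its Fock vacuum, so a resolution of the identity on $\mathscr{H}_L(R)\otimes\lvert 0\rangle_{\rm probe}$ together with Schur's lemma (the flavor-singlet operator $\mathcal{O}_L^\dagger\mathcal{O}_L$ is proportional to the identity on each $\mathscr{H}_L(R)$) gives exact factorization at fixed $R$; the only remaining large-$N$ input is the saddle-point factorization of the thermal average. The paper instead decomposes into individual flavor indices $c_i c_j^\dagger c_k c_l^\dagger$, obtains two Kronecker-$\delta$ structures, and must argue separately that the $\delta_{il}\delta_{jk}$ contraction is $1/\tilde L$-suppressed --- a detour your argument avoids.

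For the second correlator, however, your sketch jumps over precisely what constitutes the bulk of the paper's proof. You \emph{assert} that each contribution is a nested ratio
\[
\frac{\dim(R\sqcup\Box_{J_1})}{\tilde L\,\dim R}\cdot\frac{\dim(R\sqcup\Box_{J_1}\sqcup\Box_{J_2})}{\tilde L\,\dim(R\sqcup\Box_{J_1})},
\]
but this is the desired \emph{conclusion}, not a starting point. The second probe creation does not simply tensor $R\sqcup\Box_{J_1}$ with $\Box$: because the probe is bosonic, the two-particle state lives in $R\otimes\symbox$, and the intermediate state $\lvert R\sqcup\symbox_{(J_1J_2)},\check{\mathsf{c}}\rangle$ is a specific linear combination of the Young-projector images $(R\sqcup\Box_{J_1})\sqcup\Box_{J_2}$ and $(R\sqcup\Box_{J_2})\sqcup\Box_{J_1}$. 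Showing that the Clebsch--Gordan coefficients, the $\sqrt{2}$ symmetrization factors, and the two flavor contractions $\delta_{ik}\delta_{jl}$ versus $\delta_{il}\delta_{jk}$ conspire to give exactly the nested ratio with unit coefficient for each of the two Wick pairings is the content of the paper's Intermezzo on bosonic statistics, the $\gamma_1=\gamma_2$ lemma, and the projector manipulations around \eqref{eq:leap}--\eqref{eq:leapbis}. Your invocation of ``flavor symmetry allows the outgoing operators to remove these two boxes in either order'' does not supply this: flavor symmetry constrains the tensor structure but not the numerical weight, and without the symmetrization analysis one cannot rule out an extra factor of $2$ or $1/2$. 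The obstacle you flag at the end --- non-accumulation of $O(N^{-1})$ corrections under the sum over $J_1,J_2$ --- is real but comparatively minor; the gap is in establishing the correct leading-order coefficient in the two-box sector before any large-$N$ approximation is taken.
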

The rest of this appendix is devoted to show this statement.\par
We plug the definition of $\mathcal{O}_L$ as a sum of probe creation operators in the right-hand side of \eqref{eq:phi4ptintoO4pt}. We also set $t_0=0$ without loss of generality. We are then interested in four-point functions of the form 
\begin{align}
G^{(1)}_{ijkl}(t_1,t_2,t_3):={}_L\langle \Psi_{\beta} \lvert c_i(t_3)c^\dagger_j(t_2)c_k(t_1)c^\dagger_l(0) \rvert \Psi_{\beta}\rangle_L, \\
G^{(2)}_{ijkl}(t_1,t_2,t_3):={}_L\langle \Psi_{\beta} \lvert c_i(t_3)c_j(t_2)c^\dagger_k(t_1)c^\dagger_l(0) \rvert \Psi_{\beta}\rangle_L,
\end{align}
and then sum over the flavor indices $i,j,k,l \in \left\{ 1, \dots, \tilde{L} \right\}$ with the overall normalization by $1/\tilde{L}^2$. The idea is to evaluate these correlation functions in a particular sector $(R,\phi(R))$ for a fixed irreducible representation $R$ appearing in the thermal ensemble, and then to apply a saddle point argument.

\subsubsection{Intermezzo: Useful formulas}
Throughout we will make extensive use of the Clebsch--Gordan isomorphism 
\begin{equation}
    \mathscr{H} (R \otimes \Box) \cong \bigoplus_{J \in \mathscr{J}_R} \mathscr{H} (R \sqcup \Box_J) .
\end{equation}
The normalization of the two bases are related as follows. Consider the identities
\begin{equation}
    \sum_{j=1}^{\tilde{L}} \sum_{\ai=1}^{\dim R} \lllang R, \ai ; j \vert R, \ai ; j \rrrang = \tilde{L} \dim R = \sum_{J \in \mathscr{J}} \sum_{\hat{\ai}=1}^{\dim (R \sqcup \Box_J)} \langle R \sqcup \Box_J, \hat{\ai} \vert R \sqcup \Box_J, \hat{\ai} \rangle
\end{equation}
and insert the resolution of the identity on the left-most part, expressed in the representation basis. We have the normalization:
\begin{align}
   \sum_{j=1}^{\tilde{L}} \sum_{\ai=1}^{\dim R} \lvert \lllang R \sqcup \Box_J, \hat{\ai} \vert  R, \ai ; j \rrrang \rvert^2 &= 1 = \sum_{J \in \mathscr{J}} \sum_{\hat{\ai}=1}^{\dim (R \sqcup \Box_J)}   \lvert\langle R \sqcup \Box_J, \hat{\ai} \vert  R, \ai ; j \rrrang \rvert^2. \label{eq:normcoefsrequired}
\end{align}\par
To write the result of the calculations in factorized form, we will use the properties:
\begin{align}
    H_{\rm int} (R \sqcup \Box_J \sqcup \Box_{K}) - H_{\rm int} (R\sqcup \Box_J) & \approx H_{\rm int} (R  \sqcup \Box_{K}) - H_{\rm int} (R ) \label{eq:HintshiftJ} \\
    \frac{ \dim   (R \sqcup \Box_J \sqcup \Box_{K}) }{ \dim   (R \sqcup \Box_J) } & \approx \frac{ \dim   (R \sqcup \Box_{K}) }{ \dim R } .  \label{eq:approxrationdimsJ}
\end{align}

\subsubsection{Step 2: First correlation function}

Let us start with $G^{(1)}_{ijkl}(t_1,t_2,t_3)$, and consider the contribution to the thermal expectation value for a given representation $R$. By the flavor symmetry of the interaction between the probe and the system (and ignoring potential antisymmetric terms that vanish when summed over) we have 
\begin{align}
\sum_{\ai }\lllang R, \ai \lvert c_i(t_3)c^\dagger_j(t_2)c_k(t_1)c^\dagger_l(0) \rvert R, \ai \rrrang & = \delta_{ij}\delta_{kl}\sum_{\ai }\lllang R, \ai \lvert c_i(t_3)c^\dagger_i(t_2)c_k(t_1)c^\dagger_k(0) \rvert R, \ai \rrrang \\
& +\delta_{il}\delta_{jk}\sum_{\ai }\lllang R, \ai \lvert c_i(t_3)c^\dagger_j(t_2)c_j(t_1)c^\dagger_i(0) \rvert R, \ai \rrrang  \notag \\
& -\delta_{ijkl}\sum_{\ai }\lllang R, \ai \lvert c_i(t_3)c^\dagger_i(t_2)c_i(t_1)c^\dagger_i(0) \rvert  R, \ai \rrrang.  \notag 
\end{align}
Averaging over $i,j,k,l$, in the large $L$ limit only the third term is suppressed and we get 
\begin{align}
\frac{1}{\tilde{L}^2}\sum_{ijkl}\sum_{\ai }\lllang R, \ai \lvert c_i(t_3)c_j^\dagger(t_2)c_k(t_1)c^\dagger_l(0) \rvert R, \ai \rrrang & \approx \frac{1}{\tilde{L}^2}\sum_{ik \ai}\lllang R, \ai \lvert c_i(t_3)c^\dagger_i(t_2)c_k(t_1)c^\dagger_k(0) \rvert R, \ai \rrrang \label{eq:appDG1split} \\
& +\frac{1}{\tilde{L}^2}\sum_{ij \ai}\lllang R, \ai \lvert c_i(t_3)c^\dagger_j(t_2)c_j(t_1)c^\dagger_i(0) \rvert R, \ai \rrrang . \notag
\end{align}
We now proceed to show the following:
\begin{lem}
    The first term in the right-hand side of \eqref{eq:appDG1split} factorizes into the product of two-point functions, and the second term is sub-leading.
\end{lem}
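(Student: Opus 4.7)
The plan is to dispatch the two terms separately, exploiting in both cases the Clebsch--Gordan decomposition $\mathscr{H}(R\otimes \Box)\cong \bigoplus_{J\in \mathscr{J}_R} \mathscr{H}(R\sqcup\Box_J)$ together with Schur's lemma. The first term should yield the product of two two-point functions at leading order, while the second term will carry an extra combinatorial suppression of order $1/\tilde L$.

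For the first term, my idea is to compute the action of $\sum_k c_k(t_1)c_k^\dagger(0)$ on the state $\lvert R,\ai\rrrang$ directly. Writing $c_k^\dagger(0)\lvert R,\ai\rrrang=\lvert R,\ai;k\rrrang$ and expanding in the one-probe basis $\lvert R\sqcup\Box_J,\hat{\ai}\rrrang$, the Hamiltonian acts diagonally with energy $E_J(\mu)$ relative to $H'(R,\phi(R))$. The outer $c_k$ then projects back to the zero-probe sector, and the sum over $k$ is the partial trace over the $\Box$ factor of the projector $P_J$ onto the $R\sqcup \Box_J$ summand. Schur's lemma forces this partial trace to be a scalar multiple of the identity on $\mathscr{H}_L(R)$, with scalar $\dim(R\sqcup\Box_J)/\dim R$. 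Hence
\begin{equation*}
\sum_k c_k(t_1)c_k^\dagger(0)\lvert R,\ai\rrrang = \sum_J \frac{\dim(R\sqcup\Box_J)}{\dim R}\,e^{-i t_1 E_J(\mu)}\,\lvert R,\ai\rrrang,
\end{equation*}
and the same identity holds for $\sum_i c_i(t_3)c_i^\dagger(t_2)$ with $t_1\mapsto t_3-t_2$. Thus the first term of \eqref{eq:appDG1split}, summed over $\ai$, equals $\tfrac{\dim R}{\tilde L^2}$ times the product of two such sums, which after inserting the Boltzmann weight $e^{-\beta\lvert R\rvert}\mathfrak{d}_R^2\dim R\,\dim\phi(R)$ and normalizing by $\mz_L^{(N)}$ reproduces exactly ${}_L\langle \Psi_\beta\rvert \mathcal O_L^\dagger(t_3)\mathcal O_L(t_2)\rvert\Psi_\beta\rangle_L\cdot {}_L\langle \Psi_\beta\rvert \mathcal O_L^\dagger(t_1)\mathcal O_L(0)\rvert\Psi_\beta\rangle_L$, which is the first line of \eqref{eq:phi4ptfactin2} after the approximations \eqref{eq:HintshiftJ}--\eqref{eq:approxrationdimsJ}.

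For the second term, the operator structure $c_i(t_3)c_j^\dagger(t_2)c_j(t_1)c_i^\dagger(0)$ forces the ``inner'' pair $(j,t_1,t_2)$ and the ``outer'' pair $(i,0,t_3)$ to pass through the same intermediate one-probe state. Propagating $c_i^\dagger(0)\lvert R,\ai\rrrang$ through the system via the Clebsch--Gordan expansion, then acting with $c_j(t_1)$, $c_j^\dagger(t_2)$ and $c_i(t_3)$ in order, and using the orthogonality relation \eqref{eq:normcoefsrequired} twice (once for the $i$-sum and once for the $j$-sum), the two intermediate representation labels $R\sqcup\Box_J$ (between $t_1$ and $t_2$) and $R\sqcup\Box_K$ (outer, between $t_3$ and $0$) collapse to a single label $J=K$. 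The resulting expression is
\begin{equation*}
\frac{1}{\tilde L^{2}}\sum_{J\in\mathscr{J}_R} e^{-i(t_1+t_3-t_2)E_J(\mu)}\,\dim(R\sqcup\Box_J),
\end{equation*}
which using $\sum_J \dim(R\sqcup\Box_J)=\tilde L\,\dim R$ is of order $\dim R/\tilde L$. Compared with the first term of order $\dim R$, this carries the advertised $1/\tilde L$ suppression, so it drops out in the planar limit.

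The main obstacle I anticipate is the control of the Schur--lemma step under time evolution. In the static case ($t_1=0$, or with all times equal), completeness of Clebsch--Gordan coefficients collapses the sums immediately, but with distinct time arguments the phases $e^{-it E_J(\mu)}$ weight each $J$-channel differently, so the collapse only holds after summing over $i$ (or $j$) with the outer $\ai$-index held fixed. I expect this to go through because the interaction Hamiltonian \eqref{eq:Hint} is a class function (depending only on $R$ via $C_2(R)$), hence respects the flavor symmetry and is a scalar on each irreducible block; this is exactly hypothesis (H2) of this subsection. Finally, to match the right-hand side of \eqref{eq:phi4ptfactin2} one must replace the sums over $J,K$ of the first term by their saddle-point values, which is where (H1)---existence of a non-trivial saddle with $\lvert R_{\ast}\rvert = O(N)$---enters to justify the approximations \eqref{eq:HintshiftJ}--\eqref{eq:approxrationdimsJ}; this is the standard planar limit step described in Appendix \ref{app:completeAppD} and needs only to be invoked.
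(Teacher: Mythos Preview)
Your argument is correct and follows the paper's approach closely: the Schur-lemma phrasing for the first term is exactly what the paper means by ``flavor symmetry implies that [the matrix element] is independent of $\ai$'', and for the second term both you and the paper insert resolutions of the identity in the one-probe Clebsch--Gordan basis, collapse the two intermediate labels to a single $J=K$ via completeness, and read off the $1/\tilde L$ suppression. One small correction: the approximations \eqref{eq:HintshiftJ}--\eqref{eq:approxrationdimsJ} are not needed for this lemma (no two-box state appears in $G^{(1)}$); what is needed to pass from the fixed-$R$ factorization to the first line of \eqref{eq:phi4ptfactin2} is only the saddle-point localization of the sum over $R$, as you correctly note at the end.
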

\begin{proof}
Inserting a resolution of the identity in the representation basis, the first term is rewritten as 
\begin{align}
&\frac{1}{\tilde{L}^2}\sum_{ik \ai } \lllang R, \ai \lvert c_i(t_3)c^\dagger_i(t_2)c_k(t_1)c^\dagger_k(0) \rvert R, \ai \rrrang 
= \frac{1}{\tilde{L}^2}\sum_{ik  \ai \mathsf{b} } \lllang R, \ai \lvert c_i(t_3)c^\dagger_i(t_2)\rvert R,\mathsf{b} \rrrang \lllang R,\mathsf{b} \lvert c_k(t_1)c^\dagger_k(0) \rvert R, \ai \rrrang \notag \\ 
& \qquad = \sum_{\ai \mathsf{b} } \left( \lllang R, \ai \lvert \frac{1}{\tilde{L}} \sum_i   c_i(t_3)c^\dagger_i(t_2)\rvert R,\mathsf{b} \rrrang \right) \left(\lllang R,\mathsf{b} \lvert  \frac{1}{\tilde{L}} \sum_k  c_k(t_1)c^\dagger_k(0) \rvert R, \ai \rrrang \right) \label{eq:AppDintermediateG1long}
\end{align}
The inner products in the last line eventually produce $\delta_{\ai \mathsf{b}}$, and the flavor symmetry further implies that $ \lllang R, \ai \lvert \frac{1}{\tilde{L}} \sum_i   c_i(t+ \delta t)c^\dagger_i(t)\rvert R,\ai \rrrang$ is actually independent of $\ai$. The last line of \eqref{eq:AppDintermediateG1long} is thus equal to
\begin{align}
& (\dim R) \left( \frac{1}{\tilde{L} \dim R} \tr_{R \otimes \Box} \left( e^{- i t_{32} H^{\prime}} \right) \right) \left( \frac{1}{\tilde{L} \dim R} \tr_{R \otimes \Box} \left( e^{- i t_{1} H^{\prime}} \right) \right) \notag \\
= & (\dim R) \left( \frac{1}{\tilde{L} \dim R} \sum_{J\in \mathscr{J}_R} \tr_{R \sqcup \Box_J} \left( e^{- i t_{32} H^{\prime}} \right) \right) \left( \frac{1}{\tilde{L} \dim R} \sum_{K\in \mathscr{J}_R} \tr_{R \sqcup \Box_K}  \left( e^{- i t_{1} H^{\prime}} \right) \right) \notag \\
= & (\dim R) \left(\sum_{J\in \mathscr{J}_R}  e^{ - i(t_3-t_2) E_J (\mu)  } \frac{\dim (R \sqcup \Box_J)}{ \tilde{L} \dim (R) }\right)\left(\sum_{K\in \mathscr{J}_R} e^{ - it_1 E_K (\mu)  } \frac{\dim (R \sqcup \Box_K)}{ L \dim (R) }\right) .  \label{eq:G1appcorrel}
\end{align}
Additionally, when summing over the spectator index $\dot{\ai}$ the expression will acquire a factor $\dim \phi (R)$. Altogether it manifestly factorizes into the desired product of two-point functions.\par

The second term in \eqref{eq:appDG1split} is suppressed by $1/L$. Indeed, inserting successive resolutions of the identity, we obtain:
\begin{align}
& \frac{1}{\tilde{L}^2}\sum_{ij \ai}\lllang R, \ai \lvert c_i(t_3)c^\dagger_j(t_2)c_j(t_1)c^\dagger_i(0) \rvert R, \ai \rrrang \notag \\
= & \frac{1}{\tilde{L}^2}\sum_{ij \ai}\sum_{J,K\in\mathscr{J}_R} \sum_{\hat{\ai} \hat{\mathsf{b}} } \exp \left\{  it_3H^{\prime}(R,\phi(R)) - i t_{32} H^{\prime}(R\sqcup\Box_K,\phi(R))-it_1H^{\prime}(R\sqcup\Box_J,\phi(R) \right\} \notag   \\
\times & \lllang R \sqcup \Box_J , \hat{\ai}\vert R, \ai ; i \rrrang \lllang R, \ai;i\vert R\sqcup\Box_K,\hat{\mathsf{b}} \rrrang \lllang R\sqcup\Box_K,\hat{\mathsf{b}} \lvert c_j^\dagger e^{-it_{21} H^{\prime}}c_j\rvert R\sqcup\Box_J,\hat{\ai} \rrrang. 
\end{align}
In this expression $\hat{\ai} =1, \dots , \dim (R \sqcup \Box_J)$ and $\hat{\mathsf{b}} =1, \dots , \dim (R \sqcup \Box_K)$. In the last line, the first two inner products together impose $\delta_{JK}\delta_{\hat{\ai} \hat{\mathsf{b}}}$. We can then use the normalization formula \eqref{eq:normcoefsrequired} and simplifying, we obtain the expression
\begin{align}
\frac{1}{\tilde{L}^2}\sum_{K} \sum_{j \hat{\mathsf{b}}} e^{i(t_3-t_2+t_1) \left[ H^{\prime}(R,\phi(R)) - H^{\prime}(R\sqcup\Box_K,\phi(R)) \right] } \lllang R\sqcup\Box_K,\hat{\mathsf{b}} \lvert c_j^\dagger c_j\rvert R\sqcup\Box_K,\hat{\mathsf{b}} \rrrang .  
\end{align}
The innermost sum evaluates to $\dim (R\sqcup \Box_K)$, and the overall factor is $1/\tilde{L}^2$, which means that the contribution is suppressed compared to a two-point function that has an overall factor $1/\tilde{L}$. This is consistent with the fact that the two-point function $\langle c^\dagger_j(t_2)c_j(t_1)\rangle$ is zero in the probe approximation.
\end{proof}
\begin{cor}\label{corol:firstpieceAppD}
   The first term on the right-hand side of \eqref{eq:phi4ptintoO4pt} equals in the planar limit the first line in \eqref{eq:phi4ptfactin2}.
\end{cor}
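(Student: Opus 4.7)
The plan is to combine the already-established expression \eqref{eq:G1appcorrel} with the thermal average over representations and a standard saddle-point factorization argument, and then recognize each resulting factor as a two-point function. First I would reinstate the thermal average and the spectator index $\dot{\ai}$ in \eqref{eq:G1appcorrel}: the preceding lemma leaves us with
\begin{equation*}
\frac{1}{4\mz_L^{(N)}} \sum_{R \in \mathfrak{R}_L^{(N)}} e^{-\beta |R|}\, \mathfrak{d}_R^2 (\dim R)(\dim \phi(R))\, F_{t_{32}}(R)\, F_{t_1}(R),
\end{equation*}
where I have introduced the shorthand
\begin{equation*}
F_{s}(R) := \sum_{J\in\mathscr{J}_R} e^{-is E_J(\mu)}\,\frac{\dim(R\sqcup \Box_J)}{\tilde L\,\dim R}.
\end{equation*}
The factor $1/4$ comes from \eqref{eq:phi4ptintoO4pt}. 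This is the product of two $R$-dependent ``single-trace'' quantities averaged against the ensemble \eqref{eq:ZLNCharExp}.

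Second, I would invoke the planar factorization of expectation values in the discrete matrix model \eqref{eq:genericdiscreteMM}. Each $F_s(R)$ is a permutation-invariant function of the eigenvalues $h_j$ encoded in \eqref{eq:changeRtoH} whose expectation value converges, under the assumption of Lemma \ref{lem:factor} that a non-trivial saddle point $\varrho_\ast$ exists, to the value at the saddle --- this is exactly the mechanism already used in the proof of Lemma \ref{lemma:OmegalargeN} and gives
\begin{equation*}
\langle F_s \rangle_\beta \;\approx\; F_s\big|_{\vec h = \vec h_\ast} .
\end{equation*}
By the classical large-$N$ argument (a product of single-trace observables equals the product of expectation values up to $1/N^2$),
\begin{equation*}
\langle F_{t_{32}}\,F_{t_1}\rangle_\beta \;\approx\; \langle F_{t_{32}}\rangle_\beta\,\langle F_{t_1}\rangle_\beta .
\end{equation*}
Substituted into the above, this yields a clean product structure.

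Third, I would identify each factor with a genuine two-point function. From the proof of Theorem \ref{thmrhoisrho} (Step~2, equation \eqref{eq:FourierGOdeltaomega} and its inverse Fourier transform),
\begin{equation*}
{}_L\langle \Psi_\beta |\,\mathcal{O}_L^\dagger(s)\mathcal{O}_L(0)\,|\Psi_\beta\rangle_L \;=\; \frac{1}{\mz_L^{(N)}}\sum_R e^{-\beta|R|}\mathfrak{d}_R^2(\dim R)(\dim\phi R)\, F_s(R),
\end{equation*}
so the factorized expression becomes $\frac14 \langle \mathcal{O}_L^\dagger(t_3)\mathcal{O}_L(t_2)\rangle \langle \mathcal{O}_L^\dagger(t_1)\mathcal{O}_L(0)\rangle$. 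Finally, I would observe that in the probe approximation the ``wrong ordered'' two-point function $\langle \mathcal{O}_L(t)\mathcal{O}_L^\dagger(t')\rangle$ is $O(e^{-\beta\mu})$ (it requires the rightmost operator to annihilate a probe-vacuum state), and consequently
\begin{equation*}
\langle \phi_L(t_a)\phi_L(t_b)\rangle \;\approx\; \tfrac12\langle \mathcal{O}_L^\dagger(t_a)\mathcal{O}_L(t_b)\rangle ,
\end{equation*}
which exactly matches the $1/4$ prefactor and completes the identification with the first line of \eqref{eq:phi4ptfactin2}.

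The main obstacle in carrying this out rigorously is step two: justifying that the expectation value of the product $F_{t_{32}}(R)F_{t_1}(R)$ equals, to leading order in the planar limit, the product of expectation values, uniformly in the times $t_{32}, t_1$. The time-dependent exponentials prevent a direct appeal to standard matrix-model factorization theorems. A careful treatment would require expanding $F_s(R)$ in an appropriate basis of symmetric functions of the $h_j$, controlling the connected contractions at order $1/N^2$, and using the saddle-point concentration to bound the correction terms. The hypothesis that the saddle point eigenvalue density is continuous (so that $F_s|_{\vec h_\ast}$ is well-defined) and that the number of insertions is fixed is essential here, as already emphasized after Lemma \ref{lem:factor}.
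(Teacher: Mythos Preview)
Your proposal is correct and follows essentially the same route as the paper: both arguments use that \eqref{eq:G1appcorrel} is already a product $F_{t_{32}}(R)F_{t_1}(R)$ at fixed $R$, then invoke saddle-point localization of the thermal ensemble to split the average of the product into the product of averages, and finally identify each factor with a two-point function of $\phi_L$. Your write-up is more explicit than the paper's three-line proof (in particular you track the $\tfrac14$ prefactor via the probe-approximation suppression of $\langle\mathcal{O}_L\mathcal{O}_L^\dagger\rangle$), and you correctly flag the one genuinely delicate point --- uniform-in-time control of the connected piece --- which the paper also leaves at the level of the standard large-$N$ heuristic.
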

\begin{proof}
    By hypothesis, the correlator $G^{(1)}_{ijkl}(t_1,t_2,t_3)$ localizes on a large saddle in the large $L$ limit, so we can use the previous computation to conclude that it factorizes. After summing terms that are identically zero, we get that the first term on the right-hand side of \eqref{eq:phi4ptintoO4pt} reduces to $G^{(1)}_{ijkl}(t_1,t_2,t_3)$ which, by formula \eqref{eq:G1appcorrel} in the previous lemma, yields in the planar limit the first line in \eqref{eq:phi4ptfactin2}, as claimed.
\end{proof}

\subsubsection{Intermezzo: Bosonic statistic and symmetric representation}
\label{app:explainsymD3}
By definition, the two-particle probe state lives in the symmetric representation, 
\begin{equation}
    c_i^{\dagger} c_j^{\dagger} \lvert 0 \rangle_{\text{\rm probe}} = \frac{\lvert i,j \rangle_{\text{\rm probe}} + \lvert j,i \rangle_{\text{\rm probe}} }{\sqrt{2}} \in \begin{ytableau} \ & \ \end{ytableau}   .
\end{equation}
For simplicity of notation, here we assume $i \ne j$, and the neglected term is suppressed by $1/L$, thus safely discarded at later steps. The state $ \rvert R, \ai \rrrang$ is tensored with the latter probe state, not with $\lvert j,i \rangle_{\text{\rm probe}} \in \Box \otimes \Box$.\par
Hence, we will denote 
\begin{equation}
    R \sqcup \symbox_{(JK)} \subset (R \sqcup \Box_J ) \sqcup \Box_K \oplus (R \sqcup \Box_K ) \sqcup \Box_J 
\end{equation}
the symmetrization of the ways of appending two boxes, one at the end of the $J^{\text{th}}$ row and the other at the end of the $K^{\text{th}}$ row, to the Young diagram for $R$. This notation is to insist on the fact that the bosonic statistics of the probe forces the resulting state to live in the Hilbert space sector $\mathscr{H}(R \otimes \symbox)$, rather than generically in $\mathscr{H} (R \otimes \Box \otimes \Box )$. At the level of Hilbert spaces we have a generic state 
\begin{align}
        \lvert R\sqcup \symbox_{(JK)},\check{\mathsf{c}} \rrrang &= \frac{\xi}{\sqrt{2}} \left[\lvert  (R \sqcup \Box_J) \sqcup \Box_K ,\check{\mathsf{c}} \rrrang + \lvert\left(  (R \sqcup \Box_K ) \sqcup \Box_J \right),\check{\mathsf{c}} \rrrang\right] , \label{eq:defsymrepH} 
\end{align}
for all $\check{\mathsf{c}} =1, \dots, \dim (R \sqcup \Box_J \sqcup \Box_K)$, and $\xi$ is a phase, see around Equation \eqref{eq:isotropy} for a more detailed justification. Note that the dimension is manifestly insensitive to the order in which the boxes are appended to the rows, thus the definition is well posed. Likewise, the Hamiltonian is insensitive to the ordering, so we can simply write $H^{\prime} (R \sqcup \Box_J \sqcup \Box_K)$.\par
In the calculation of the second correlation function, we will repeatedly use the notation on the left-hand side of \eqref{eq:defsymrepH} to signify the result of tensoring with the symmetric probe state.

\subsubsection{Step 3: Second correlation function}

We now turn to the second expression $G^{(2)}_{ijkl}(t_1,t_2,t_3)$. Similarly, for any given representation $R$ (ignoring potential antisymmetric terms), 
\begin{align}
\sum_{\ai }\lllang R, \ai \lvert c_i(t_3)c_j(t_2)c^\dagger_k(t_1)c^\dagger_l(0) \rvert R, \ai \rrrang &=\delta_{ik}\delta_{jl}\sum_{\ai }\lllang R, \ai \lvert c_i(t_3)c_j(t_2)c^\dagger_i(t_1)c^\dagger_j(0) \rvert R, \ai \rrrang\\
& +\delta_{il}\delta_{jk}\sum_{\ai }\lllang R, \ai \lvert c_i(t_3)c_j(t_2)c^\dagger_j(t_1)c^\dagger_i(0) \rvert R, \ai \rrrang \notag \\
& -\delta_{ijkl}\sum_{\ai }\lllang R, \ai \lvert c_i(t_3)c_i(t_2)c^\dagger_i(t_1)c^\dagger_i(0) \rvert  R, \ai \rrrang.  \notag
\end{align}
Just like before, averaging over $i,j,k,l$ in the large $L$ limit only the first two terms are not suppressed and we obtain 
\begin{align}
\frac{1}{L^2}\sum_{ijkl}\sum_{\ai }\lllang R, \ai \lvert c_i(t_3)c_j(t_2)c^\dagger_k(t_1)c^\dagger_l(0) \rvert R, \ai \rrrang & \approx\frac{1}{\tilde{L}^2}\sum_{ij \ai}\lllang R, \ai \lvert c_i(t_3)c_j(t_2)c^\dagger_i(t_1)c^\dagger_j(0) \rvert R, \ai \rrrang  \label{eq:AppDG2split}\\
& +\frac{1}{\tilde{L}^2}\sum_{ij \ai}\lllang R, \ai \lvert c_i(t_3)c_j(t_2)c^\dagger_j(t_1)c^\dagger_i(0) \rvert R, \ai \rrrang.  \notag
\end{align}
Unlike the previous case, both terms give a nonzero contribution in the probe approximation.\par
\begin{lem}
    Both terms in the right-hand side of \eqref{eq:AppDG2split} factorize into the product two two-point functions.
\end{lem}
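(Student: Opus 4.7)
The strategy mirrors the computation behind Corollary \ref{corol:firstpieceAppD}, but with intermediate states carrying \emph{two} probe excitations rather than one. First, I would pass to the Heisenberg picture, $c_p(t)=e^{itH'}c_pe^{-itH'}$, and insert resolutions of the identity between every pair of adjacent operators. Because the probe obeys Bose statistics (cf.\ the Intermezzo in Subsection \ref{app:explainsymD3}), the two creation operators acting on the probe Fock vacuum produce a state in the symmetric sector $\mathscr{H}(R\otimes\symbox)$, which I decompose in the basis $\{\lvert R\sqcup\symbox_{(JK)},\check{\mathsf{c}}\rrrang\}$ of \eqref{eq:defsymrepH}, labelled by pairs $(J,K)\in\mathscr{J}_R\times\mathscr{J}_{R\sqcup\Box_J}$. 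The relevant Clebsch--Gordan overlaps $\lllang R,\mathsf{b};j,i\vert R\sqcup\symbox_{(JK)},\check{\mathsf{c}}\rrrang$ are normalized by the obvious two-box analogue of \eqref{eq:normcoefsrequired}.

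Second, I would evaluate phases and multiplicities at leading order in the planar limit. Time evolution through the two-particle sector produces the phase $\exp\{-it\,[H'(R\sqcup\Box_J\sqcup\Box_K,\phi(R))-H'(R,\phi(R))]\}$, which by \eqref{eq:HintshiftJ} splits into $\exp\{-itE_J(\mu)\}\exp\{-itE_K(\mu)\}$; the evolutions through the intermediate single-particle sectors yield phases of the same form already appearing in \eqref{eq:G1appcorrel}. The traces over the symmetric two-probe sector generate the dimension $\dim(R\sqcup\Box_J\sqcup\Box_K)$, which by \eqref{eq:approxrationdimsJ} factorizes as $\dim R\cdot[\dim(R\sqcup\Box_J)/\dim R]\cdot[\dim(R\sqcup\Box_K)/\dim R]$. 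Collapsing the Clebsch--Gordan sums so that $J$ and $K$ become independent summation variables, each summand becomes a product of two copies of the two-point expression \eqref{eq:G1appcorrel}.

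Third, I would match the two terms in \eqref{eq:AppDG2split} with the remaining Wick pairings in \eqref{eq:phi4ptfactin2}. The first term pairs $c_i(t_3)$ with $c_i^{\dagger}(t_1)$ and $c_j(t_2)$ with $c_j^{\dagger}(0)$, so that the factorization above produces ${}_L\langle\Psi_\beta\lvert\phi_L(t_3)\phi_L(t_1)\rvert\Psi_\beta\rangle_L\cdot{}_L\langle\Psi_\beta\lvert\phi_L(t_2)\phi_L(0)\rvert\Psi_\beta\rangle_L$; the second pairs $c_i(t_3)$ with $c_i^{\dagger}(0)$ and $c_j(t_2)$ with $c_j^{\dagger}(t_1)$, producing ${}_L\langle\Psi_\beta\lvert\phi_L(t_3)\phi_L(0)\rvert\Psi_\beta\rangle_L\cdot{}_L\langle\Psi_\beta\lvert\phi_L(t_2)\phi_L(t_1)\rvert\Psi_\beta\rangle_L$. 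Together with Corollary \ref{corol:firstpieceAppD}, this recovers all three Wick pairings in \eqref{eq:phi4ptfactin2}, proving Proposition \ref{lemmaAppFactorization}. The generalization to $n$-point functions should then follow by induction, the $k$-th intermediate layer being a trace over the fully symmetric $k$-probe sector, through which the same energy and dimension splittings propagate.

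The main technical obstacle is the careful bookkeeping of the bosonic symmetrization \eqref{eq:defsymrepH}, which introduces factors $1/\sqrt{2}$ and convention-dependent phases $\xi$ at each Clebsch--Gordan insertion, together with uniform control of the ``$\approx$'' corrections in \eqref{eq:HintshiftJ}--\eqref{eq:approxrationdimsJ}. Both are controlled precisely because the saddle-point representation $R_\ast$ has row lengths $R_i=O(N)$, so that each $E_J(\mu)$ is an $O(1)$ quantity and the three-box-to-two-box dimension ratio is smooth in the scaled variables of Appendix \ref{app:proofLemmaOmega}; this is the point at which hypotheses (H1) and (H2), stated at the beginning of this subsection, are indispensable.
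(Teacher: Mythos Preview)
Your overall strategy---insert resolutions of the identity, decompose the two-particle intermediate state in the symmetric Clebsch--Gordan basis, then invoke \eqref{eq:HintshiftJ}--\eqref{eq:approxrationdimsJ} to split phases and dimensions---matches the paper's, and your identification of which term produces which Wick pairing is correct. However, there is a genuine gap in the ``collapsing the Clebsch--Gordan sums'' step.

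The issue is that after inserting resolutions of the identity you have \emph{four} representation labels, not two: the two intermediate single-particle sectors carry independent labels (call them $J$ and $K$, as in \eqref{eq:apDLongG2} and \eqref{eq:leapbis}) while the two-particle sector carries a symmetrized pair $(M,P)$. Your proposal conflates the two-particle labels with the single-particle ones from the outset. The whole content of the lemma is in showing how the Clebsch--Gordan overlaps force these four labels down to two, and the mechanism is \emph{different} for the two terms:
\begin{itemize}
\item First term ($c_i c_j c_i^\dagger c_j^\dagger$): acting with $c_j$ and $c_i^\dagger$ on the two-particle state constrains $\{M,P\}=\{J,K\}$; one must further argue that the diagonal $J=K$ contribution is suppressed in $1/\tilde L$, leaving an independent double sum over $J\neq K$. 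The paper does this via projectors $\Pi_J,\Pi_K$ and a commutator identity.
\item Second term ($c_i c_j c_j^\dagger c_i^\dagger$): here the two single-particle sectors are both sandwiched between $c_i$ and $c_i^\dagger$, which \emph{forces} $J=K$ exactly. The two-particle overlap then only survives when $M=J$ (say), leaving $P$ as the free second label. So the independent double sum is over $(J,P)$, not over two single-particle labels.
\end{itemize}
Without distinguishing these mechanisms you cannot justify which pairs of labels survive, and hence cannot see why the phases reassemble into $e^{-i(t_3-t_1)E_K}e^{-it_2 E_J}$ versus $e^{-it_3 E_J}e^{-i(t_2-t_1)E_P}$, i.e.\ the two distinct pairings. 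The paper also needs the isotropy lemma $\gamma_1=\gamma_2$ for the symmetrized state \eqref{eq:defsymrepH} (proved separately in Appendix~\ref{app:completeAppD}) to turn the Clebsch--Gordan overlaps in \eqref{eq:leap} into the clean factor $\dim(R\sqcup\Box_J\sqcup\Box_K)$; you flag symmetrization as the obstacle but do not indicate how it is resolved.
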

\begin{proof}[Proof of the first term in \eqref{eq:AppDG2split}]
Inserting successive resolutions of the identity, the first term on the right-hand side of \eqref{eq:AppDG2split} gives
\begin{align}
& \frac{1}{\tilde{L}^2}\sum_{ij \ai}\lllang R, \ai \lvert c_i(t_3)c_j(t_2)c^\dagger_i(t_1)c^\dagger_j(0) \rvert R, \ai \rrrang \notag \\
=& \sum_{J,K\in\mathscr{J}_R} \sum_{(MP)\in\mathscr{J}^{(2)}_R} e^{ it_3H^{\prime}(R,\phi(R))-it_{32} H^{\prime}(R\sqcup \Box_K,\phi(R))-it_1 H^{\prime}(R\sqcup\Box_J)- it_{21} H^{\prime}(R\sqcup\Box_M\sqcup\Box_P) } \notag \\
\times & \frac{1}{\tilde{L}^2}\sum_{ij \ai} \sum_{ \hat{\ai } \hat{\mathsf{b}} }\lllang R, \ai \lvert c_i \rvert R\sqcup\Box_K,\hat{\mathsf{b}} \rrrang \lllang R \sqcup\Box_J , \hat{\ai} \lvert c_j^\dagger \rvert R, \ai \rrrang \notag \\
& \qquad \times \sum_{ \check{\mathsf{c}}} \lllang R\sqcup\Box_K,\hat{\mathsf{b}}  \lvert c_j \rvert R\sqcup  \symbox_{(MP)},\check{\mathsf{c}} \rrrang \lllang R\sqcup \symbox_{(MP)}, \check{\mathsf{c}} \lvert c_i^\dagger \rvert  R\sqcup\Box_J,\hat{\ai} \rrrang ,  \label{eq:apDLongG2}
\end{align}
where the indices are $\hat{\ai } =1, \dots , \dim (R \sqcup \Box_J) $, $ \hat{\mathsf{b} } =1, \dots , \dim (R \sqcup \Box_K) $ and $\check{\mathsf{c}} =1, \dots, \dim (R \sqcup \Box_M \sqcup \Box_P)$. In this expression, $(MP)\in\mathscr{J}^{(2)}_R$ stands for the symmetrization of the ways of appending two boxes at the end of the $M^{\text{th}}$ and $P^{\text{th}}$ rows, as explained above in Subsection \ref{app:explainsymD3} --- cf. the prescription \eqref{eq:defsymrepH}.\par

Letting $c_j$ act on the right and $c_i ^{\dagger}$ act on the left, we note that the last line of \eqref{eq:apDLongG2} vanishes unless $K \in \left\{ M,P \right\}$ and $J \in \left\{M, P \right\} $. Moreover, as we now show in more detail, the leading contribution comes from the case in which $K=M$ and $J=P$ or $J=M$ and $K=P$, while the other cases, which require $J=K$, are sub-leading in the planar limit.\par
To formalize the previous sentence, let us fix $J,K,M,P$ and introduce the projections onto $R\sqcup\Box_J$ and $R\sqcup\Box_K$, denoted $\Pi_J$ and $\Pi_K$ respectively. The last two lines of \eqref{eq:apDLongG2} can be recombined in such a way to remove the sums over $\ai, \hat{\ai},\hat{\mathsf{b}}$, and we get
\begin{align}
\frac{1}{\tilde{L}^2}\sum_{ij }  \sum_{ \check{\mathsf{c}}} & \lllang R\sqcup  \symbox_{(MP)},\check{\mathsf{c}} \lvert c_i^\dagger\Pi_J c_j^\dagger c_i \Pi_K c_j\rvert R\sqcup \symbox_{(MP)},\check{\mathsf{c}} \rrrang \notag \\
=\frac{1}{\tilde{L}^2}\sum_{ij}  \sum_{ \check{\mathsf{c}}} & \left[ \lllang R\sqcup  \symbox_{(MP)},\check{\mathsf{c}} \lvert c_i^\dagger\Pi_J c_i c_j^\dagger\Pi_K c_j\rvert R\sqcup \symbox_{(MP)},\check{\mathsf{c}} \rrrang \right. \\
-& \left. \delta_{ij} \lllang R\sqcup \symbox_{(MP)},\check{\mathsf{c}} \lvert c_i^\dagger\Pi_J \Pi_K c_j\rvert R\sqcup  \symbox_{(MP)},\check{\mathsf{c}} \rrrang \right] .  \notag
\end{align}
If $J=K$, these two terms cancel at leading order in the large $L$ limit and the contribution is suppressed. If $J\neq K$, then the second term vanishes, and the first term is nonzero if and only if the sets $\left\{ M,P\right\} $ and $\left\{ J,K \right\}$ are equal. In that case, we insert a resolution of the identity for the basis of $R \otimes\begin{ytableau} \ & \ \end{ytableau}$ between $c_i$ and $c_j ^{\dagger}$, obtaining 
\begin{align}
& \sum_{ \check{\mathsf{c}}} \frac{1}{\tilde{L}^2}\sum_{ij}  \lllang R\sqcup \symbox_{(JK)},\check{\mathsf{c}} \lvert c_i^\dagger\Pi_J c_i c_j^\dagger\Pi_K c_j\rvert R\sqcup \symbox_{(JK)},\check{\mathsf{c}} \rrrang \notag \\
& = \sum_{ \check{\mathsf{c}}} \sum_{(M^{\prime} P^{\prime})\in\mathscr{J}^{(2)}_R}  \sum_{ \check{\mathsf{d}}}  \lllang R\sqcup \symbox_{(JK)},\check{\mathsf{c}} \lvert \frac{1}{\tilde{L}} \sum_i c_i^\dagger\Pi_J c_i \rvert R\sqcup \symbox_{(M^{\prime} P^{\prime})},\check{\mathsf{d}} \rrrang \\
& \qquad \times \lllang R\sqcup \symbox_{(M^{\prime}P^{\prime})},\check{\mathsf{d}}  \vert  \frac{1}{\tilde{L}} \sum_j c_j^\dagger\Pi_K c_j\rvert R\sqcup  \symbox_{(JK)},\check{\mathsf{c}} \rrrang , \notag 
\end{align}
with symmetrization over the added boxes, as explained around \eqref{eq:defsymrepH}.\par
Similar to the computation in step 2, the non-vanishing contributions come from $\left\{ M^{\prime}, P^{\prime} \right\} =\left\{ J,K \right\} $ and $\check{\mathsf{d}}=\check{\mathsf{c}}$, and all the summands have the same value. 
We can therefore rewrite the expression as 
\begin{align}
\frac{1}{\dim(R\sqcup\Box_J\sqcup\Box_K)} \sum_{ \check{\mathsf{c}}} &  \lllang R\sqcup \symbox_{(JK)},\check{\mathsf{c}} \lvert \frac{1}{\tilde{L}} \sum_i c_i^\dagger\Pi_J c_i \rvert R\sqcup \symbox_{(JK)},\check{\mathsf{c}} \rrrang \notag  \\
\times \sum_{ \check{\mathsf{d}}} & \lllang R\sqcup \symbox_{(JK)},\check{\mathsf{d}}  \vert  \frac{1}{\tilde{L}} \sum_j c_j^\dagger\Pi_K c_j\rvert R\sqcup \symbox_{(JK)},\check{\mathsf{d}} \rrrang \notag \\ 
= \frac{1}{\tilde{L}^2\dim(R\sqcup\Box_J\sqcup\Box_K)}& \left(\sum_{i,\hat{\ai},\check{\mathsf{c}}}\vert\lllang R\sqcup \symbox_{(JK)},\check{\mathsf{c}} \lvert c_i^\dagger \rvert  R\sqcup\Box_J,\hat{\ai} \rrrang\vert^2\right)  \notag  \\ 
\times & \left(\sum_{j,\hat{\mathsf{b}},\check{\mathsf{d}}}\vert\lllang R\sqcup \symbox_{(JK)},\check{\mathsf{d}} \lvert c_j^\dagger \rvert  R\sqcup\Box_K,\hat{\mathsf{b}} \rrrang\vert^2\right).  \label{eq:leap}
\end{align}
Acting with the creation operator $c_i^{\dagger}$ in the first bracket we get the state $\lvert  R\sqcup\Box_J,\hat{\ai} ; i \rrrang$, to be contracted with $\lllang R\sqcup \symbox_{(JK)},\check{\mathsf{c}} \lvert$. Could we forget about the symmetrization and replace $R \sqcup \symbox_{(JK)}$ with $R \sqcup \Box_J \sqcup \Box_K$, then we would simply apply the left-hand side of \eqref{eq:normcoefsrequired}, with $R$ there replaced by $R \sqcup \Box_J$ and labels $(j,J)$ there replaced by $(i,K)$. Doing the residual sum over $\check{\mathsf{c}}$, we would conclude that the first bracket contributes $\dim(R\sqcup\Box_J\sqcup\Box_K)$. The same applies to $c_j^\dagger \rvert  R\sqcup\Box_K,\hat{\mathsf{b}} \rrrang $.\par
Let us now track more carefully the effect of the symmetrization due to the bosonic nature of the probe.
\begin{enumerate}[(i)]
    \item Throughout this analysis, we neglect the probe states in which the two particles are created in the same index, i.e. states $\lvert k,l \rangle$ are assumed to have $k \ne l$. We are thus neglecting $O(1/L)$ contributions, and our claims are valid at large $L$.
    \item The state in the bra (both in the first and second line of \eqref{eq:leap}) comes from the Clebsch--Gordan expansion of states of the form $ \rvert R, \ai \rrrang \otimes \frac{\lvert k,l \rangle_{\text{\rm probe}} + \lvert l,k \rangle_{\text{\rm probe}} }{\sqrt{2}}$. 
    \item We can now expand the state by tensoring with each of the two summands. From \eqref{eq:defsymrepH}, we write schematically 
    \begin{align}
        \lvert R\sqcup \symbox_{(JK)},\check{\mathsf{c}} \rrrang &= \frac{\xi}{\sqrt{2}} \left[\lvert  (R \sqcup \Box_J) \sqcup \Box_K ,\check{\mathsf{c}} \rrrang + \lvert\left(  (R \sqcup \Box_K ) \sqcup \Box_J \right),\check{\mathsf{c}} \rrrang\right] , \label{eq:RboxboxtoBoxJK}
\end{align}
    with $\xi$ a phase. Here we have traded the symmetrization over the probe indices to an exchange of the order of the projectors on the boxes --- see around Equation \eqref{eq:isotropy} for a rigorous justification.\par
    We observe that the projection onto the symmetric representation $\symbox$ that appears in the Clebsch--Gordan decomposition of $\Box \otimes \Box$ induces a projector onto $\mathscr{H}(R \otimes \symbox )$ from $\mathscr{H}(R \otimes \Box \otimes \Box)$, for all $R$. In the following, we denote by $\Pi_{\text{\rm sym}}$ this projection acting on the Hilbert space. 
    \item The $1/\sqrt{2}$ becomes a weight $1/2$ in the sum due to the absolute value square. After contraction, it cancels against a factor of $2$ produced from the ket, yielding $1$. 
    \item The state in the ket of the first line of \eqref{eq:leap} is in  $(R \sqcup \Box_{J} ) \otimes \Box$, indexed by the pair $(\hat{\ai},i)$; likewise, the state in the ket of the second line of \eqref{eq:leap} is in  $(R \sqcup \Box_{K} ) \otimes \Box$, indexed by the pair $(\hat{\mathsf{b}},j)$.
    \item Contracting with the bra, using the projectors on the ket, and the orthogonality relations for the basis of $R \otimes \Box \otimes \Box$, we conclude that only one of the states labelled by $(\hat{\ai},i)$ (respectively $(\hat{\mathsf{b}},j)$) in the first (respectively second) line of \eqref{eq:leap} contributes non-trivially, with weight 1.
\end{enumerate}
The conclusion of this argument is reliable in the saddle point approximation, in which the sum over $R$ localizes onto a large Young diagram.\par
In this way, we obtain a total contribution of 
\begin{align}
\frac{1}{\tilde{L}^2}\dim(R\sqcup\Box_J\sqcup\Box_K) &= \frac{1}{\tilde{L}^2} \cdot \frac{\dim(R\sqcup\Box_J\sqcup\Box_K)}{\dim (R\sqcup \Box_K )} \cdot \frac{\dim(R\sqcup\Box_K)}{\dim (R)} \cdot \dim (R) \notag \\
& \approx  \dim R \left( \frac{\dim (R\sqcup \Box_J )}{\tilde{L} \dim (R) } \right)  \left( \frac{\dim (R\sqcup \Box_K )}{\tilde{L} \dim (R) } \right).  \label{eq:appleapresult}
\end{align}
In the second line we have used the planar approximation \eqref{eq:approxrationdimsJ} for the ratio of dimensions. The latter formula implies that \eqref{eq:apDLongG2} asymptotes to
\begin{align}
    \sum_{J,K\in\mathscr{J}_R} & e^{ it_3H^{\prime}(R,\phi(R))-it_{32} H^{\prime}(R\sqcup \Box_K,\phi(R))-it_1 H^{\prime}(R\sqcup\Box_J)- it_{21} H^{\prime}(R\sqcup\Box_J\sqcup\Box_K) } \notag \\
    & \times \dim R \left( \frac{\dim (R\sqcup \Box_J )}{\tilde{L} \dim (R) } \right)  \left( \frac{\dim (R\sqcup \Box_K )}{\tilde{L} \dim (R) } \right) .
\end{align}\par
Using \eqref{eq:HintshiftJ} and reinserting the $\phi (R)$-dependence, that produces $\dim \phi (R)$, we obtain a term that factorizes into the product of two-point functions in the large $L$ limit (under the assumption of a non-trivial saddle point for the ensemble of representations).
\end{proof}\par
\medskip
\begin{proof}[Proof of the second term in \eqref{eq:AppDG2split}]
Now we treat the second term in a similar way. In this case, we obtain an analogous formula by inserting successive resolutions of the identity:
\begin{align}
& \frac{1}{\tilde{L}^2}\sum_{ij \ai}\lllang R, \ai \lvert c_i(t_3)c_j(t_2)c^\dagger_j(t_1)c^\dagger_i(0) \rvert R, \ai \rrrang  \notag \\
=& \sum_{J,K\in\mathscr{J}_R} \sum_{(MP)\in\mathscr{J}^{(2)}_R} e^{it_3H^{\prime}(R,\phi(R))-it_{32} H^{\prime}(R\sqcup \Box_K,\phi(R))-it_1 H^{\prime}(R\sqcup\Box_J)-it_{21}H^{\prime} (R\sqcup\Box_M\sqcup\Box_P)} \notag \\
\times & \frac{1}{\tilde{L}^2}\sum_{ij \ai} \sum_{ \hat{\ai } \hat{\mathsf{b}} } \lllang R, \ai \lvert c_i \rvert R\sqcup\Box_K,\hat{\mathsf{b}} \rrrang \lllang R \sqcup \Box_J, \hat{\ai} \lvert c_i^\dagger \rvert R, \ai \rrrang \notag \\
& \qquad \times \sum_{ \check{\mathsf{c}}}  \lllang R\sqcup\Box_K,\hat{\mathsf{b}}  \lvert c_j \rvert R\sqcup \symbox_{(MP)},\check{\mathsf{c}} \rrrang \lllang R\sqcup \symbox_{(MP)}\check{\mathsf{c}} \lvert c_j^\dagger \rvert R\sqcup\Box_J,\hat{\ai} \rrrang.  \label{eq:leapbis}
\end{align}
This is only nonzero if $K=J$. In this case, the inner sum further simplifies as 
\begin{align}
\label{eq:twoboxes}
    \sum_{j,\hat{\ai},\check{\mathsf{c}}}\vert\lllang R\sqcup\symbox_{(MP)},\check{\mathsf{c}} \lvert c_j^\dagger \rvert  R\sqcup\Box_J,\hat{\ai} \rrrang\vert^2,
\end{align}
which is nonzero only if $M=J$ or $P=J$ (without loss of generality one can suppose $M=J$). By the same argument used after \eqref{eq:leap}, the piece \eqref{eq:twoboxes} can be approximated by $\dim(R\sqcup\Box_J\sqcup\Box_P$). From here together with \eqref{eq:approxrationdimsJ} we readily find that expression \eqref{eq:leapbis} asymptotes to
\begin{align}
    \frac{1}{\tilde{L}^2} \dim (R\sqcup \Box_J ) \left( \frac{\dim (R\sqcup \Box_J\sqcup\Box_P) }{\dim (R\sqcup \Box_J )}  \right)\approx \dim R \left( \frac{\dim (R\sqcup \Box_J) }{\tilde{L} \dim (R)}  \right) \left( \frac{\dim (R\sqcup \Box_P) }{\tilde{L} \dim (R)}  \right) .
\end{align}
Similarly, reinserting into the sum over $J,K$ with the various Kronecker $\delta_{MJ}$ and so on, we obtain a term that factorizes into the product of two-point functions in the planar limit.
\end{proof}\par
\begin{cor}\label{corol:secondpieceAppD}
    The second line on the right-hand side of \eqref{eq:phi4ptintoO4pt} reproduce in the planar limit the second and third line on the right-hand side of \eqref{eq:phi4ptfactin2}. 
\end{cor}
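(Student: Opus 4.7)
The plan is to simply package the two factorization identities established in the preceding lemma and match them against the two Wick-pairing contributions that form the second and third lines of \eqref{eq:phi4ptfactin2}. The heavy lifting has already been done in the proof of the lemma: each of the two terms in \eqref{eq:AppDG2split} was reduced, at leading order in the planar saddle point, to a product of the form
\begin{equation*}
    \dim(R) \dim \phi(R) \left( \sum_{J \in \mathscr{J}_R} e^{-i \tau_a E_J(\mu)} \frac{\dim(R \sqcup \Box_J)}{\tilde{L} \dim R} \right) \left( \sum_{K \in \mathscr{J}_R} e^{-i \tau_b E_K(\mu)} \frac{\dim(R \sqcup \Box_K)}{\tilde{L} \dim R} \right)
\end{equation*}
for some pair $(\tau_a, \tau_b)$ of time differences. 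Comparing with \eqref{eq:G1appcorrel}, each such factor is precisely a two-point function ${}_L\langle \Psi_\beta \lvert \mathcal{O}_L^\dagger(\tau) \mathcal{O}_L(0) \rvert \Psi_\beta \rangle_L$ (up to the normalization by $\mz_L^{(N)}$), once one plugs the thermal ensemble sum over $R$ and uses the hypothesis of a non-trivial saddle point.

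First, I would read off the time differences $(\tau_a, \tau_b)$ associated with each of the two pieces of \eqref{eq:AppDG2split}. Going back to the exponentials in \eqref{eq:apDLongG2} and in \eqref{eq:leapbis}, and applying the approximations \eqref{eq:HintshiftJ} to replace $H^\prime(R \sqcup \Box_J \sqcup \Box_K)$-shifts by sums of single-box shifts, one finds that the first term in \eqref{eq:AppDG2split} gives the pair $(\tau_a, \tau_b) = (t_3-t_1, t_2-t_0)$ while the second term gives $(\tau_a, \tau_b) = (t_3-t_0, t_2-t_1)$. Second, I would restate these factorized pieces as products of ${}_L\langle \Psi_\beta \lvert \mathcal{O}_L^\dagger(t_i) \mathcal{O}_L(t_j) \rvert \Psi_\beta \rangle_L $, noting that the other orderings $\langle \mathcal{O}_L \mathcal{O}_L \rangle_L$ and $\langle \mathcal{O}_L^\dagger \mathcal{O}_L^\dagger \rangle_L$ vanish exactly by flavor symmetry (equivalently, by the probe number conservation used already at the start of Step 1 of the proof of Theorem \ref{thmrhoisrho}).

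Third, I would use the decomposition $\phi_L = \frac{1}{\sqrt{2}}(\mathcal{O}_L^\dagger + \mathcal{O}_L)$ from \eqref{eq:defphifromO} to rewrite each single-index two-point function of $\mathcal{O}_L, \mathcal{O}_L^\dagger$ as a two-point function of $\phi_L$'s: in the probe approximation the only non-vanishing contribution of $\frac{1}{2}\langle (\mathcal{O}_L^\dagger+\mathcal{O}_L)(t_i)(\mathcal{O}_L^\dagger+\mathcal{O}_L)(t_j) \rangle_L$ is $\frac{1}{2}\langle \mathcal{O}_L^\dagger(t_i) \mathcal{O}_L(t_j)\rangle_L$ up to subleading pieces in $e^{-\beta \mu}$, exactly matching the block isolated from the factorization lemma. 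The overall factor of $\tfrac{1}{4}$ in \eqref{eq:phi4ptintoO4pt} then combines with two factors of $\tfrac{1}{2}$ from the two $\phi_L \phi_L$ pairs to reproduce the bare products ${}_L\langle \Psi_\beta \lvert \phi_L(t_a)\phi_L(t_b) \rvert \Psi_\beta \rangle_L$ appearing in lines two and three of \eqref{eq:phi4ptfactin2}.

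The main obstacle I anticipate is bookkeeping rather than analysis: one must verify carefully that the two distinct combinatorial patterns (the ``crossing'' pairing $(t_3,t_1)(t_2,t_0)$ and the ``nested'' pairing $(t_3,t_0)(t_2,t_1)$) correspond respectively to the first and second terms of \eqref{eq:AppDG2split}, and not the other way around or with duplication. This is sensitive to the conventions used in \eqref{eq:defsymrepH} for the symmetric projection $\Pi_{\text{sym}}$ and to the order in which the projectors $\Pi_J, \Pi_K$ are inserted in \eqref{eq:apDLongG2} and \eqref{eq:leapbis}, but the bookkeeping is forced once the time exponentials are tracked honestly. Finally, one should observe that these two pairings are precisely the ones absent from Corollary \ref{corol:firstpieceAppD}, which accounted only for the pairing $(t_3,t_2)(t_1,t_0)$ arising from the ordered $\mathcal{O}_L^\dagger \mathcal{O}_L \mathcal{O}_L^\dagger \mathcal{O}_L$ structure; together the two corollaries therefore exhaust all three Wick pairings in \eqref{eq:phi4ptfactin2}, completing the induction base for the factorization lemma.
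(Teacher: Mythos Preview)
Your proposal is correct and follows the same approach as the paper's proof, which simply notes that adding vanishing terms to $G^{(2)}_{ijkl}$ recovers the second line of \eqref{eq:phi4ptintoO4pt} and then invokes the preceding lemma to identify the two pieces of \eqref{eq:AppDG2split} with the two remaining Wick pairings. You have spelled out more of the bookkeeping than the paper does (the explicit time differences, the $\tfrac{1}{4}$ versus $\tfrac{1}{2}\cdot\tfrac{1}{2}$ normalization, and the pairing-to-term dictionary), but the logic is the same.
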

\begin{proof}
    Upon adding to $G^{(2)}_{ijkl}(t_1,t_2,t_3)$ terms that are identically zero (correlation functions of mismatching number of creation and annihilation operators), we get the second piece on the right-hand side of \eqref{eq:phi4ptintoO4pt}. On the other hand, the explicit computation in the previous lemma shows that $G^{(2)}_{ijkl}(t_1,t_2,t_3)$ reduces in the planar limit to \eqref{eq:AppDG2split}.\par
    Comparing the expressions we see that the two lines on the right-hand side of \eqref{eq:AppDG2split} reproduce in the planar limit the second and third line on the right-hand side of \eqref{eq:phi4ptfactin2}. 
\end{proof}\par

\subsubsection{Final step: Summing the contributions}

Corollary \ref{corol:firstpieceAppD} equates the first term on the right-hand side of \eqref{eq:phi4ptintoO4pt} with the first line in \eqref{eq:phi4ptfactin2}, and Corollary \ref{corol:secondpieceAppD} equates the second term on the right-hand side of \eqref{eq:phi4ptintoO4pt} with the second and third line on the right-hand side of \eqref{eq:phi4ptfactin2}.\par 
Summing the two concludes the derivation of Proposition \ref{lemmaAppFactorization}.

\subsubsection{Complete argument for the large \texorpdfstring{$N$}{N} factorization}
\label{app:completeAppD}

We now show that our models have factorizing correlation functions at large $N$ for any $2n$-point function, using a different method than extends to higher point functions more directly. For concreteness we study the four-point function here, and it should be clear how this method straightforwardly generalizes to higher point functions.\par
Let us focus on the term
\begin{align}
& \frac{1}{L^2}\sum_{ijkl \ai}\lllang R, \ai \lvert c_i(t_3)c_j(t_2)c^\dagger_k(t_1)c^\dagger_l(0) \rvert R, \ai \rrrang  \notag \\
=& \frac{1}{L^2}\sum_{\substack{J\in\mathscr{J}_R\\K\in\mathscr{J}_R }} \sum_{(MP)\in\mathscr{J}^{(2)}_R} e^{it_3H(R,\phi(R))+(it_2-it_3)H(R\sqcup \Box_K,\phi(R)) -it_1 H(R\sqcup\Box_J)+(it_1-it_2)H(R\sqcup\Box_M\sqcup\Box_P)} \notag \\
\times&\sum_{\substack{  \ai ijkl}}\lllang R, \ai \lvert c_i \Pi_K c_j \Pi_{(MP)}c_k^\dagger \Pi_J c_l^\dagger \rvert R, \ai \rrrang.  \label{eq:appD4genarg}
\end{align}

Here, just like above, the various $\Pi_X$ denote the Young projections onto the representation $R$ with the boxes $X$ adjoined to it.
By flavor invariance, each of the inner products in the innermost sum can be nonzero only if the representations in the projectors differ by exactly one box. This is only possible if either $K=J=M$, or if $J=M$ and $K=P$. In the first case, the inner sum on the third line of \eqref{eq:appD4genarg} can be rewritten as 
\begin{align}
\label{eq:D4geninnersum}
\sum_{\substack{  \ai ijkl}}\lllang R, \ai \lvert c_i \Pi_K c_j \Pi_{(KP)} c_k^\dagger \Pi_{K} c_l^\dagger \rvert R, \ai \rrrang.
\end{align}
For $i$ a fundamental index, let $\alpha_i^\dagger$ be the operator that tensors a representation with a copy of the fundamental in the state $i$. On the $n$-particle bosonic Hilbert space, $c_i^\dagger$ acts as $\sqrt{n}\Pi_{\mathrm{sym}}\alpha_i^\dagger$, where $\Pi_{\mathrm{sym}}$ is the symmetrization projector, explicitly realized in accordance with \eqref{eq:RboxboxtoBoxJK}. With this notation, \eqref{eq:D4geninnersum} can be rewritten as
\begin{align}
2\sum_{\substack{  \ai ijkl}}\lllang R, \ai \lvert \alpha_i \Pi_K \alpha_j \Pi_{\mathrm{sym}}\Pi_{(KP)}\Pi_{\mathrm{sym}} \alpha_k^\dagger \Pi_{K} \alpha_l^\dagger \rvert R, \ai \rrrang.
\end{align}
Now, $\Pi_{\mathrm{sym}}\Pi_{(KP)}\Pi_{\mathrm{sym}}=\Pi_{(KP)}$, so we can rewrite the expression as 
\begin{align}
2\sum_{\substack{  \ai ijkl}}\lllang R, \ai \lvert \alpha_i \Pi_K \alpha_j\Pi_{(KP)}\alpha_k^\dagger \Pi_{K} \alpha_l^\dagger \rvert R, \ai \rrrang.
\label{eq:projsym}
\end{align}\par
Here, it is useful to pause and understand the meaning of $\Pi_K$ and $\Pi_{(KP)}$ in more detail. They are projectors on the irreducible representations $R\sqcup\Box_K$ and $R\sqcup\symbox_{(KP)}$, respectively. The representation $R\sqcup\symbox_{(KP)}$ is to be understood as the (unique) copy of the representation $R\sqcup\symbox_{(KP)}$ in the tensor product $R\otimes(\symbox)$. It is a linear combination of the form \begin{align}
R\sqcup\symbox_{(KP)}= \gamma_1 (R\sqcup\Box_K\sqcup \Box_P) \oplus \gamma_2 (R\sqcup\Box_P\sqcup\Box_K),
\end{align}
with $\lvert\gamma_1\rvert^2+ \lvert\gamma_2\rvert^2=1$. $(R\sqcup\Box_K\sqcup \Box_P)$ and $(R\sqcup\Box_P\sqcup \Box_K)$ are the respective images of $R\otimes\Box\otimes\Box$ by the Young projectors with the two last indices at position $K$ then $P$ or $P$ then $K$. Note that in general, Young projectors are not mutually orthogonal when the tableau shape is the same \cite{Stembridge}, however here we have a small number of added boxes compared to the size of the tableau, so most of them actually are (for example here they always are when only two boxes are added at different nonadjacent rows to a given Young diagram), and we can neglect this subtlety.\par

\begin{lem}
    $\gamma_1=\gamma_2$, hence $\lvert \gamma_i \rvert^2= \frac{1}{2}$ for $i=1,2$.
\end{lem}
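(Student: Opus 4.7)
The plan is to exploit the symmetry of the ambient Hilbert space $\mathscr{H}(R\otimes\Box\otimes\Box)$ under the transposition $\tau$ of the two fundamental factors. First I would observe that $\symbox$ is the image of $\Box\otimes\Box$ under the symmetrizer $\Pi_{\mathrm{sym}}=\frac{1}{2}(1+\tau)$, so the tensor product $R\otimes\symbox$ sits naturally inside $R\otimes\Box\otimes\Box$ as the $(+1)$-eigenspace of $1\otimes\tau$. Consequently, the unique copy of the irreducible $R\sqcup\symbox_{(KP)}$ inside $R\otimes\symbox$ is, when viewed as a subspace of $R\otimes\Box\otimes\Box$, also $\tau$-invariant.

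Next I would analyze how the irreducible $R\sqcup\Box_K\sqcup\Box_P$ sits inside $R\otimes\Box\otimes\Box$. For $K\ne P$ (which is the generic case, and the case we care about, the coincident rows being higher-order corrections) the Littlewood--Richardson/Pieri rule gives multiplicity two: one copy obtained by first tensoring with the $K$-box and then with the $P$-box, the image of the Young projector $\Pi_{(K)(KP)}$, and one copy obtained in the reverse order, the image of $\Pi_{(P)(PK)}$. These are the two summands appearing on the right-hand side of the decomposition, with basis vectors $|R\sqcup\Box_K\sqcup\Box_P,\check{\mathsf c}\rangle$ and $|R\sqcup\Box_P\sqcup\Box_K,\check{\mathsf c}\rangle$. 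The swap $\tau$ interchanges the order in which the two fundamental factors appear, so it maps one Young-projection copy onto the other; more precisely, after a consistent choice of phases in the isomorphism $R\sqcup\Box_K\sqcup\Box_P\cong R\sqcup\Box_P\sqcup\Box_K$ (which exists because both carry the same abstract irrep with multiplicity one inside their respective tensor products), one has $\tau|R\sqcup\Box_K\sqcup\Box_P,\check{\mathsf c}\rangle=|R\sqcup\Box_P\sqcup\Box_K,\check{\mathsf c}\rangle$.

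Now $\tau$-invariance of $|R\sqcup\symbox_{(KP)},\check{\mathsf c}\rangle=\gamma_1|R\sqcup\Box_K\sqcup\Box_P,\check{\mathsf c}\rangle+\gamma_2|R\sqcup\Box_P\sqcup\Box_K,\check{\mathsf c}\rangle$ immediately gives $\gamma_1=\gamma_2$. Combined with the normalization $|\gamma_1|^2+|\gamma_2|^2=1$ from the unit-norm choice of $|R\sqcup\symbox_{(KP)},\check{\mathsf c}\rangle$ (and using orthogonality of the two Young-projected subspaces, which holds here since the two Young projectors onto copies of the same irrep in $R\otimes\Box\otimes\Box$ with $K\ne P$ project onto orthogonal subspaces), this yields $|\gamma_i|^2=\tfrac{1}{2}$ for $i=1,2$, as claimed.

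The main obstacle is the phase convention in the third step: the isomorphism between the two multiplicity-one copies of $R\sqcup\Box_K\sqcup\Box_P$ inside $R\otimes\Box\otimes\Box$ is only canonical up to a phase, and the equality $\gamma_1=\gamma_2$ (rather than just $|\gamma_1|=|\gamma_2|$) is a statement about choosing these phases coherently via the action of $\tau$. Once one adopts the convention that the basis of the second copy is obtained from the first by applying $\tau$, the conclusion is immediate; without this convention one only gets $|\gamma_1|=|\gamma_2|$, but this suffices for the subsequent arguments in the proof of factorization, since only $|\gamma_i|^2$ enters the counting of states in \eqref{eq:appleapresult}.
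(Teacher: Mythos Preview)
Your proof is correct and is essentially the same argument as the paper's, just phrased more abstractly. The paper works with explicit overlaps: it takes a generic symmetric probe state $\tfrac{1}{\sqrt{2}}(|R,\ai;ij\rrrang+|R,\ai;ji\rrrang)$ and shows that its overlaps with $(R\sqcup\Box_K)\sqcup\Box_P$ and $(R\sqcup\Box_P)\sqcup\Box_K$ coincide, using that $\lllang (R\sqcup\Box_K)\sqcup\Box_P,\check{\mathsf c}|R,\ai;ij\rrrang=\lllang (R\sqcup\Box_P)\sqcup\Box_K,\check{\mathsf c}|R,\ai;ji\rrrang$ ``by definition of the representations''. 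This identity is exactly your statement that $\tau$ exchanges the two Young-projected copies, since $|R,\ai;ji\rrrang=\tau|R,\ai;ij\rrrang$; your formulation via the $(+1)$-eigenspace of $\tau$ is simply the operator-theoretic way of saying the same thing. Your remark on the phase convention is a useful clarification the paper leaves implicit.
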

\begin{proof}
We now show that the $\gamma_i$ are equal. Let us consider a vector of the form 
\begin{equation}
    \frac{1}{\sqrt{2}} \left( \lvert R, \ai ; ij \rrrang + \lvert R, \ai ; ji \rrrang \right) .
\end{equation}
We can safely assume that the boxes are added to different rows, as in this way the neglected contributions are suppressed in the large $N$ limit. Then, the representations $(R\sqcup \Box_K)\sqcup \Box_P$ and $(R\sqcup \Box_P)\sqcup \Box_K$ with $P \ne K$ are orthogonal to each other, and we can write:
\begin{align}
\label{eq:isotropy}
\lllang R\sqcup \symbox_{(KP)},\check{\mathsf{c}} \vert \frac{1}{\sqrt{2}} \left( \lvert R, \ai ; ij \rrrang + \lvert R, \ai ; ji \rrrang \right) & = \frac{1}{\sqrt{2}} \lllang (R\sqcup \Box_K)\sqcup \Box_P,\check{\mathsf{c}}\vert \left( \lvert R, \ai ; ij \rrrang + \lvert R, \ai ; ji \rrrang \right) \\
&+\frac{1}{\sqrt{2}}\lllang (R\sqcup \Box_P)\sqcup \Box_K,\check{\mathsf{c}}\vert \left( \lvert R, \ai ; ij \rrrang + \lvert R, \ai ; ji \rrrang \right). \notag 
\end{align}
We want to show the two terms on the right-hand side are equal. It is therefore useful to compute their difference. Multiplying by $\sqrt{2}$ and expanding it we obtain:
\begin{align}
 & \lllang (R\sqcup \Box_K)\sqcup \Box_P,\check{\mathsf{c}}\vert \left( \lvert R, \ai ; ij \rrrang + \lvert R, \ai ; ji \rrrang \right) -  \lllang (R\sqcup \Box_P)\sqcup \Box_K,\check{\mathsf{c}}\vert \left( \lvert R, \ai ; ij \rrrang + \lvert R, \ai ; ji \rrrang \right) \notag \\
 =& \lllang (R\sqcup \Box_K)\sqcup \Box_P,\check{\mathsf{c}} \lvert R, \ai ; ij \rrrang - \lllang (R\sqcup \Box_P)\sqcup \Box_K,\check{\mathsf{c}}\lvert R, \ai ; ji \rrrang \notag \\
 +& \lllang (R\sqcup \Box_K)\sqcup \Box_P,\check{\mathsf{c}} \lvert R, \ai ; ji \rrrang -\lllang (R\sqcup \Box_P)\sqcup \Box_K,\check{\mathsf{c}}\lvert R, \ai ; ij \rrrang.  
\end{align}\par
Going back to \eqref{eq:appD4genarg}, the two terms in the second line are equal by definition of the representations $(R\sqcup \Box_P)\sqcup \Box_K$ and $(R\sqcup \Box_K)\sqcup \Box_P$ respectively, and the same goes for the two terms in the third line. Therefore, we obtain that any element of $R\otimes(\symbox)$ has equal overlap with  $(R\sqcup \Box_P)\sqcup \Box_K$ and $(R\sqcup \Box_K)\sqcup \Box_P$, up to corrections negligible in the large $N$ limit. In particular $\gamma_1=\gamma_2$, and $\lvert \gamma_i\rvert^2$ is constant equal to $1/2$.
\end{proof}\par
When $\alpha_j^\dagger$ acts on $R\sqcup\Box_K$ in \eqref{eq:projsym}, the only nonzero term after acting with $\Pi_{(KP)}$ comes from its $R\sqcup\Box_K\sqcup \Box_P$ component (where $K$ and $P$ are not symmetrized). 
By the above, the previous expression then becomes
\begin{align}
\frac{2}{2}\sum_{\substack{  \ai ijkl}}\lllang R, \ai \lvert \alpha_i \Pi_K \alpha_j\Pi_{KP}\alpha_k^\dagger \Pi_{K} \alpha_l^\dagger \rvert R, \ai \rrrang.
\end{align}
One can then further remove the intermediate projections (since we are using Young projectors \cite{Stembridge}) and find the large $N$ contribution
\begin{align}
\sum_{\substack{  \ai ijkl}}\lllang R, \ai \lvert \alpha_i \alpha_j\Pi_{KP}\alpha_k^\dagger \alpha_l^\dagger \rvert R, \ai \rrrang\,{\approx}\,\mathrm{Tr}_{R\otimes\Box\otimes\Box}(\Pi_{KP})=\dim(R\sqcup\Box_K\sqcup\Box_P).
\end{align}
This is consistent with the factorization shown above. The second term ($J=M,K=P$) works in a similar way: we obtain
\begin{align}
\sum_{\substack{  \ai ijkl\check{\mathsf{c}}}}\lllang R, \ai \lvert \alpha_i \alpha_j\vert R\sqcup\Box_J\sqcup\Box_K,\,\check{\mathsf{c}}\rrrang\lllang R\sqcup\Box_K\sqcup\Box_J,\,\check{\mathsf{c}}\vert\alpha_k^\dagger \alpha_l^\dagger \rvert R, \ai \rrrang\\=\sum_{\substack{  \ai ijkl\check{\mathsf{c}}}}\lllang R, \ai \lvert \alpha_i \alpha_j\vert R\sqcup\Box_J\sqcup\Box_K,\,\check{\mathsf{c}}\rrrang\lllang R\sqcup\Box_J\sqcup\Box_K,\,\check{\mathsf{c}}\vert\alpha_l^\dagger \alpha_k^\dagger \rvert R, \ai \rrrang.
\end{align}
We then find the contribution at large $N$
\begin{align}
\sum_{\substack{  \ai ijkl}}\lllang R, \ai \lvert \alpha_i \alpha_j\Pi_{KP}\alpha_l^\dagger \alpha_k^\dagger \rvert R, \ai \rrrang{\,\approx}\,\mathrm{Tr}_{R\otimes\Box\otimes\Box}(\Pi_{KP})=\dim(R\sqcup\Box_K\sqcup\Box_P).
\end{align}

It is a tedious but straightforward exercise to check that this method extends to higher point functions and different orders of $c_i$ and $c_i^\dagger$, yielding large $N$ factorization.

\end{appendix}

\clearpage 
{\small
\bibliography{tfdmm}
}
\end{document}